\renewenvironment{abstract}{%
  \if@twocolumn
    \section*{\abstractname}
  \else
    \small
    \begin{center}%
      {\bfseries \abstractname\vspace{-.5em}\vspace{\z@}}%
    \end{center}%
    \quotation
  \fi}
  {\if@twocolumn\else\endquotation\fi}
\newenvironment{acknowledgements}[1][Acknowledgments]{
\begin{center}\textbf{#1}\end{center} 
}
\title{External Memory Algorithms For Path Traversal in Graphs}
\author{Craig Dillabaugh}
\begin{document}

\beforepreface
\checkoddpage
  \ifoddpage
  \else
      \newpage\mbox{}
  \fi
\begin{abstract}
  This thesis will present a number of results related to path traversal
  in trees and graphs.
  In particular, we focus on data structures which allow such traversals
  to be performed efficiently in the external memory setting. 
  In addition, for trees and planar graphs the data structures
  we present are succinct. 
  Our tree structures permit efficient bottom-up path traversal in 
  rooted trees of arbitrary degree and efficient top-down path traversal
  in binary trees.
  In the graph setting, we permit efficient traversal of an arbitrary path
  in bounded degree planar graphs.
  Our data structures for both trees and graphs match or slightly improve
  current best results for external memory path traversal in these 
  settings while at the same time improving space bounds due to the 
  succinct nature of our data structures.
  Employing our path traversal structure for bounded degree
  planar graphs, we describe a number of useful applications of this technique
  for triangular meshes in $\plane$.
  As an extension of the $\plane$ representation for triangular meshes we
  also present an efficient external memory representation for well-shaped
  tetrahedral meshes in $\threeD$. 
  The external memory representation we present is based on a partitioning 
  scheme that matches the current best-known results for well-shaped tetrahedral
  meshes.
  We describe applications of path traversal in tetrahedral 
  meshes which are made efficient in the external memory setting using our
  structure.
  Finally, we present a result on using jump-and-walk point location in 
  well-shaped meshes in both $\plane$ and $\threeD$. 
  We demonstrate that, given an approximate nearest neighbour from among the 
  vertices of a mesh, locating the simplex (triangle or tetrahedron) 
  containing
  the query point involves a constant length walk (path traversal) in the
  mesh.
  This final result does not rely on the data structures described above, but
  has applications to the mesh representations.
\end{abstract}
\begin{acknowledgements}

I would like to start by thanking my supervisor Anil Maheshwari. 
Anil has shown great patience as he guided me through my research.
He must have worried at times that I might be his student
forever, yet he never made a discouraging comment.
Also thanks to all the professors in the Computational Geometry group
including Jit Bose, Pat Morin, and Michiel Smid.
I was lucky enough to take at least one course with each of them, and
I have enjoyed our lunch-time meetings and our open-problem sessions.

I would like to thank the members of my thesis committee;
Anil, Vida Dujmovic, Paola Flocchini, J{\"o}rg-R{\"u}diger Sack, and
Jan Vahrenhold.

Thank you to my lab mates in the Computational Geometry
Lab at Carleton.
There are too many people to name, but I have very much enjoyed the 
associations with my fellow students.
I would especially like to thank Rosen Atanassov, Mathieu Couture, 
and Stefanie Wuhrer, who helped me find my feet after I arrived in the 
School of Computer Science with my Geography degree, with little
experience in theoretical Computer Science.

Thanks to the staff of the Carleton School of Computer Science 
including Linda Pfeiffer, Sharmila Antonipillai, Anna Riethman, 
Edina Storfer, and Claire Ryan.
I appreciate your helpfulness and all the work you do to make life
easier for the students.

Thank you to my co-authors Anil, Jean-Lou De Carufel, Meng He, and 
Norbert Zeh.
I have enjoyed working with, and learned much from, each of you. 
I was very fortunate to work with such excellent researchers.

For financial support I would like to express appreciation to Anil, 
The National 
Sciences and Research Council of Canada, to Carleton University, 
and to the former Nortel Networks.  

Finally, and most importantly, I would like to thank my family.
Thank you to my father Derlin, and mother Sondra.
Congratulations to Mom in particular for living to see the day when
all her sons had finished school!
Thank you to my wonderful children Eden and Vaughn.  
Eden and Vaughn likely did more to slow down
work on my thesis than to speed it up, but I don't regret a single 
second of time
spent putting together legos are telling stories - when I could have 
been working on my thesis.
Thank you to my wife Kristie, for your patience and support over the 
many years it has taken me to complete this work.

\end{acknowledgements}

\afterpreface


\chapter{Introduction}

Many problems in computing can be posed as those of finding or following a path. 
For example, most people will be familiar with web applications that let a person 
find the fastest route between two destinations. 
This problem involves a shortest path search in a graph, which represents the road 
network. 
There are also problems that involve path traversal, which are less obvious, 
like searching for a word or phrase in a text database. 
One solution to this problem involves following a path through a text-indexing 
structure such as a suffix tree.

There are numerous application domains in which processing such large datasets 
is a significant challenge.
The examples of finding a route on a map, and text searching, are important 
problems in the realms of Geographic Information Systems (GIS) and text indexing 
respectively. 
In both of these application domains, dealing with huge datasets containing 
gigabytes or even terabytes of data is not uncommon. 

Two important strategies that have emerged in Computer Science for dealing with 
large data volumes are external memory algorithms and succinct data structures.  
Each strategy approaches the challenge of handling data in very different ways.  
External memory algorithms and data structures present a model of the computer in 
which the computer has limited internal memory (RAM), and unlimited external 
memory (usually a magnetic disk). 
The problem of dealing with large datasets is handled by using the unlimited 
external memory to ensure that any size of problem can be processed. 
In practice, however, the speed of operations on data in RAM is orders of magnitude 
faster than operations on data on disk. 
Thus, external memory techniques revolve around keeping a working subset of the data 
in RAM, where it can be efficiently processed, while attempting to minimize the 
expensive transfer of data between disk and RAM. 

Succinct data structures try to solve the problem by using a different strategy. 
Here the goal is to reduce the size of the problem by reducing the space that 
the structural component of the data structures uses, and make the problem 'smaller' 
in this manner. 
It is also important that, while the data structures are stored using less memory, 
the operations on the data structures need to remain efficient.

In this thesis we address problems related to path traversal in very large datasets. 
We present algorithms for tree and graph data structures that permit 
efficient path traversal in the external memory setting. 
In addition to developing efficient external memory techniques, we have examined how 
we can incorporate concepts from the field of succinct data structures to reduce the memory 
footprint of these data structures. 
As we are interested in dealing with huge datasets, integrating these two techniques seems 
to be a natural fit.

\section{Summary of Results}\label{sec:res_summary}

In this section we summarize the results of this thesis and give
an overview of the main results arising from this research.
Chapters \ref{chp:leaf-root} and \ref{chp:succinct_graphs} describe
data structures that are both succinct and efficient in the 
external memory model.
Specifically, the data structures are designed to make path 
traversal efficient.
In Chapter \ref{chp:leaf-root} we present succinct representations
for rooted trees that permit efficient traversal of root to leaf 
paths, and paths from an arbitrary node in the tree to the root.
In Chapter \ref{chp:succinct_graphs} we develop succinct data 
structures for representing planar graphs that are efficient
in the external memory setting.
Chapters \ref{chp:mesh_trav} and \ref{chp:jump_and_walk}
are an extension of this work and focus on topics related to path
traversal;
Chapter ~\ref{chp:mesh_trav} presents a representation for tetrahedral
meshes which is efficient in the external memory setting, but which
is not succinct, and
Chapter~\ref{chp:jump_and_walk} presents results on using the
jump-and-walk point location paradigm in well-shaped triangular or tetrahedral
meshes. 
This technique has application to the structures
we describe for representing meshes in the earlier chapters.

\subsection{Succinct and External Memory Path Traversal in Trees}

In Chapter~\ref{chp:leaf-root} we present results on data structures that are 
both succinct and efficient for performing path traversal in trees in the 
external memory setting. 

\begin{enumerate}
  \item {
    We show how a tree $T$, built on $N$ nodes, can be blocked in a succinct 
    fashion such 
    that a bottom-up traversal requires $O(K / B)$ I/Os using only 
    $2N + \frac{\epsilon N }{\log_B{N}} + o(N)$ bits to store $T$, where $K$ 
    is the path length and $0 < \epsilon < 1$.
    Here $\epsilon$ is a user-tunable constant, which can increase the 
    I/O-efficiency at the cost of greater space usage.
    This technique is based on~\cite{hutch_mesh_zeh_2003}, and achieves an 
    improvement on the space bound by a factor of $\lg{N}$.
  }
  \item {
    We show that a binary tree, with keys of size $q=O(\lg{N})$ 
    bits, can be stored using $(3+q)N + o(N)$ bits so that a root-to-node 
    path of length $K$ can be reported with: 
    (a) $O \left( \frac{K}{\lg(1+(B \lg N)/q)} \right)$ I/Os, when 
    $K = O(\lg N)$; 
    (b) $O \left( \frac{\lg N}{\lg (1+\frac{B \lg^2 N}{qK} )} \right)$ 
    I/Os, when $K=\Omega(\lg N)$ and $K=O \left( \frac{B \lg^2 N}{q} \right)$; 
    and (c) $\BigOh{\frac{qK}{B \lg N}}$ I/Os, when 
    $K = \Omega \left( \frac{B \lg^2 N}{q} \right)$. 
    This result achieves a $\lg{N}$ factor improvement on the previous 
    space cost in~\cite{dem_iac_lan_2004} for the tree structure. 
    We further show that when the size, $q$, of a key is constant that 
    we improve the I/O efficiency for the case where $K=\Omega(B \lg{N})$ 
    and $K=O(B \lg^2{N})$ from $\Omega(\lg{N})$ to $O(\lg{N})$ I/Os. 
  }
\end{enumerate}

This work has been published in \emph{Algorithmica} 
\cite{DBLP:journals/algorithmica/DillabaughHM12}.
The significance of the $\lg{N}$-bit space reduction achieved in this 
chapter and the next is twofold. 
First, it reduces storage requirements for massive datasets where the
amount of storage needed may be a substantial consideration.
Second, as demonstrated by our second result for Chapter~\ref{chp:leaf-root},
the increase in block size can potentially improve I/O-efficiency.

\subsection{Data Structures for Path Traversal in Bounded-Degree Planar Graphs}

In Chapter~\ref{chp:succinct_graphs} we present succinct data structures for 
representing bounded-degree planar graphs such that paths can be traversed 
efficiently in the external memory setting.

\begin{enumerate}
  \item{
    We describe a data structure for bounded-degree planar graphs that allows traversal of 
    a path of length $K$ (where a path may be defined as a sequence of $K$ edge adjacent
    vertices) using $O(\frac{K}{\lg B})$ I/Os. 
    For a graph with $N$ vertices, this matches the I/O complexity of 
    Agarwal~\etal~\cite{DBLP:conf/soda/AgarwalAMVV98}
    but improves the space complexity from $O(N \lg{N}) + 2Nq$ bits to $O(N) + Nq + o(Nq)$ bits, 
    which is considerable in the context of huge datasets. 
    (For example, with keys of constant size, the improvement in the space is by a factor of $\lg{N}$.)
  }
  \item {
    We adapt our structure to represent triangulations. 
    If storing a point requires $\bitsPerPoint$ bits, we are able to store the triangulation in
    $N \bitsPerPoint + \OhOf{N} + \ohOf{N \bitsPerPoint}$ bits so that any path crossing $K$
    triangles can be traversed using $\OhOf{K / \lg B}$ I/Os.
    Again, the I/O efficiency of our structure matches that
    of~\cite{DBLP:conf/soda/AgarwalAMVV98} with a similar space improvement
    as for bounded-degree planar graphs.
  }
  \item {
    We show how to
    augment our triangulation representation with $\ohOf{N\bitsPerPoint}$ bits of extra
    information in order to support point location 
    queries using $\OhOf{\log_B N}$ I/Os.
    Asymptotically, this does not change the space requirements.
  }
  \item {
  We describe several
  applications that make use of our representation for triangulations.
  We demonstrate that reporting terrain profiles and trickle paths takes
  $\BigOh{K / \lg B}$ I/Os.
  We show that connected component queries, that is, reporting a set of
  triangles that share a common attribute and induce a connected
  subgraph in the triangulation's dual, can be performed using
  $\BigOh{K / \lg B}$ I/Os when the component being reported is convex and consists
  of $K$ triangles.
  For non-convex regions with holes, we achieve a query bound of
  $\BigOh{K / \lg B + h \log_B h}$, where $h$ is the number
  of edges on the component's boundary.
  To achieve this query bound, the query procedure uses
  $\BigOh{h \cdot (\bitsPerKey + \lg h)}$ extra space.
  Without using any extra space, the same query can be answered using
  $\BigOh{K / \lg B + h' \log_B h'}$ I/Os, where $h'$ is the number of
  triangles that are incident on the boundary of the component.
  }
\end{enumerate}

A preliminary version of this research was published in 2009 at the
\emph{20\textsuperscript{th} International Symposium on Algorithms and
Computation} \cite{DBLP:conf/isaac/DillabaughHMZ09}.
A journal version of the paper has been submitted 
to \emph{Algorithmica}.

\subsection{Path Traversal in Well-Shaped Tetrahedral Meshes}

Chapter~\ref{chp:mesh_trav} presents results on efficient path traversal 
in the external memory setting for well-shaped meshes in $\threeD$.
The data structures presented in this chapter are not succinct.
A \emph{well-shaped} mesh is composed of simplicies for which the aspect
ratio is bounded (see definition in Section \ref{ssec:geom_seperators}).

\begin{enumerate}
  \item{
    We show how to represent a tetrahedral mesh with $N$ tetrahedra in 
    $\BigOh{N/B}$ blocks 
    such that an arbitrary path of length $K$ can be traversed 
    in $\BigOh{\frac{K}{\lg{B}}}$ I/Os.
    Unlike our earlier results, this structure is not succinct.
  }
  \item{
    For a convex mesh stored using our representation, we demonstrate that 
    box queries which report $K$ tetrahedra can be performed in 
    $\BigOh{\frac{K}{\lg{B}}}$ I/Os.
  }
  \item {
    Given an arbitrarily-oriented plane which intersects at some point a convex 
    tetrahedral mesh, report the intersection of all tetrahedra which intersect 
    the plane. 
    For a plane that intersects $K$ tetrahedra this would require 
    $\BigOh{\frac{K}{\lg B}}$ I/Os.
   }
\end{enumerate}

These results were previously published at the 2010
\emph{Canadian Conference on Computational Geometry}
\cite{DBLP:conf/cccg/Dillabaugh10}.

\subsection{Jump-and-Walk Search in Well-Shaped Meshes}

In Chapter~\ref{chp:jump_and_walk} we present results on point location queries 
in $\plane$ and $\threeD$ using the \emph{Jump-and-walk} technique. 
Jump-and-walk is a simple two-step strategy for solving point location queries;
the first step involves a jump to some point near the query point, followed
by a walk step through the subdivision or mesh being searched.
We present no data structures, either succinct or efficient in the external 
memory setting, but rather we investigate a technique that has application to path 
traversal in both triangular and tetrahedral meshes. 
Our key results are as follows:

\begin{enumerate}
  \item {
    Given a well-shaped mesh $\mathcal{M}$ in $\plane$ or $\threeD$, 
    jump-and-walk search can be performed in the time required to perform a 
    nearest neighbour search on the vertices of $\mathcal{M}$, plus 
    {\em constant time} for the walk-step to find the triangle/tetrahedron 
    containing the query point.
  }
  \item{
    Given a well-shaped mesh $\mathcal{M}$ in $\plane$ or $\threeD$, 
    jump-and-walk search can be performed in the time required to perform an
    \emph{approximate} nearest neighbour (see Definition~\ref{def:ANN}) 
    search on the vertices of $\mathcal{M}$, plus {\em constant time} 
    for the walk-step to find the triangle/tetrahedron containing the 
    query point.
  }
  \item {
    We implement our jump-and-walk method in $\plane$, and present 
    experimental results for walks starting with the nearest
    and approximate nearest neighbours.
  }
\end{enumerate}

A preliminary version of this work was published at the 2011
\emph{Canadian Conference on Computational Geometry}
\cite{DBLP:conf/cccg/CarufelDM11}.
A journal version of the paper has been submitted to the 
 \emph{Journal of Computational Geometry}.

\section{Organization}

The remainder of this thesis is organized in the following
manner.
Chapter~\ref{chp:background} presents the relevant theoretical
background to the topics disussed in the later chapters.
Chapters~\ref{chp:leaf-root} through
\ref{chp:jump_and_walk} each present the results of an 
individual research paper, as outlined in 
Section~\ref{sec:res_summary}.
Each chapter is self-contained and includes a summary
of results, previous work, background, etc.
Separate chapters do not, however, include individual reference 
sections.
Given this layout there is some overlap between 
Chapter~\ref{chp:background} and the background sections
contained in the specific chapters.
However, the thesis background section provides more in-depth
coverage of topics only lightly touched upon in the
later chapters.
Finally, Chapter~\ref{chp:conclusions} summarizes the thesis, and 
identifies open problems and topics for future research
arising from this work.
\chapter{Background}\label{chp:background}

This chapter presents background information on the key structures and
techniques used in this thesis. 
The chapters presenting thesis results (Chapters 3 through 6) are intended to be 
self-contained; as a result, there may be some minor overlap with the current 
chapter's content.

\section{Trees and Graphs}

We begin by providing a few basic definitions related to graphs and trees. 
The description of graphs is largely based on \cite{diestel_GT_2000},
\cite{harary1969}, and
\cite{lipton_tarjan_pst1979}, while the description for trees is based on 
\cite{cormen_book_2004} and \cite{Rosen:2002:DMA:579402}.

A \emph{graph} $G = (V,E)$ consists a of set of \emph{vertices}, $V$, and a set of 
\emph{edges}, $E$.
An edge in the graph is defined by an unordered vertex pair $\{v,w\}$, in which 
case we say that $v$ and $w$ are \emph{adjacent}.
The edge $\{v,w\}$ is said to be \emph{incident} on its endpoints $v$ and $w$.
This definition of graphs describes what may be more specifically called
a \emph{simple} graph.
Other types of graphs may allow directed edges (in which case
the edge pairs are ordered), loops (where $w = v$), or multiple edges connecting
the same pair of vertices.
This later class of graph is known as a \emph{multigraph}.
In this thesis any reference to a graph will refer to a simple graph, unless 
stated otherwise.
For a vertex $v$ we call any vertex adajcent to $v$ a \emph{neighbour}, and
collectively refer to the set of neighbours as the \emph{neighbourhood} 
of $v$.  
The cardinality of the neighbourhood of $v$ is the degree of $v$, 
denoted $d(v)$.

For graph $G = (V,E)$ the \emph{induced subgraph} is the subset $G' =(V',E')$, including
vertices $V' \subset V$, and edges $E' = \{ (v,w) | (v,w) \in E$~and~$v, w \in V'\}$.
A \emph{path} in a graph is a sequence of edges connecting adjacent vertices.
A path is said to traverse this sequence of edges.
For a simple graph, the path can also be represented by a sequence of adjacent
veritices, as this sequence will identify the set of edges traversed along
the path.
A graph is \emph{connected} if for every pair of vertices in the graph there is a 
path connecting those vertices.
A \emph{cycle} in a graph is a path which begins and ends at the same vertex
without visiting any edge more than once.

A graph is said to be \emph{embedded} in a surface $S$ when it is drawn on $S$ 
such that no two edges intersect.
If the graph \emph{can} be embedded in the plane ($\plane$), then it is said to
be a \emph{planar graph}, while a graph that \emph{has been} embedded in the
plane is called a \emph{plane graph}.
A plane graph subdivides the plane into regions called \emph{faces}, including
an unbounded region called the \emph{exterior} face.
The collection of edges bounding a face forms a cycle. 
In this thesis, unless explicitly stated otherwise, we will assume each face of 
a plane graph is empty, such that it contains no other face, edge, or vertex.

A \emph{tree} $T$ is a connected graph with no cycles. 
The vertices of a tree are frequently referred to as \emph{nodes}. 
A rooted tree is a tree for which we identify a special node called the 
\emph{root}; any reference to a tree in this paper can be assumed to 
refer to a rooted tree, unless explicitly stated otherwise.
The \emph{depth} of a node $v \in T$, denoted $\funccase{depth}(v)$, is the number
of edges along the path from $v$ to $\funccase{root}(T)$. 
For the node $v$, we call the next node on the path from $v$ to $\funccase{root}(T)$ 
the \emph{parent} of $v$. 
If $p$ is the parent of $v$, then we say $v$ is a child of $p$, and if $v$ and $m$ 
are children of $p$ then $v$ and $m$ are \emph{siblings}. 
A node with no children is a \emph{leaf} node. 
The \emph{height} of a tree, normally denoted $h$, is defined as the maximum 
depth of any leaf in the tree. 
All nodes at depth $i$ in the tree $T$ are said to be at the 
$i$\textsuperscript{th} level of $T$. 

A tree is a \emph{binary} tree if nodes in the tree have at most 
two children. 
It is sometimes helpful to distinguish between the child nodes in a binary tree 
where a child node may be labeled as a left child or a right child.
The order of the children is significant, such that if a node has only one 
child it is still known whether this is the left or right child. 
A binary tree labeled in this manner is a subclass of \emph{ordinal} trees in which 
the order of children is significant and is preserved. 
In a \emph{cardinal} tree the children of a node are still ordered, but they do
not have specific positions. 
In a cardinal binary tree, if a node has only a single child, there is no 
way of distinguishing if it is the left or right 
child \cite{DBLP:conf/swat/FarzanM08}.

There are a number of ways in which all the nodes of a tree can be 
traversed. 
We will describe two common traversals that we employ in our algorithms, 
both of which start with the root of the tree. 
In a \emph{preorder} traversal we report the current node, and then report its 
children in the order in which they appear. 
A preorder traversal on a tree is analagous to a depth-first traversal in a 
graph. 
In a \emph{level order} traversal we visit first all nodes at depth $0$ in $T$ 
(the root), followed by all nodes at depth $1$ and so forth to depth $h$.
The level order traversal is analagous to breadth-first search in a graph.

\subsection{Notation}
In this thesis we use the term $\lg{N}$ to represent $\log_2{N}$.
For a set $S$ we use $|S|$ to denote the cardinality of the set. 
The cardinality of a graph $G$ is defined as the sum of
the cardinalities of its vertex and edges sets, therefore $|G| = |V|+|E|$. 
For the classes of graphs with which we deal in this thesis 
$|E| = \BigOh{|V|}$ and therefore we will generally use $|V|$ to specify 
the size of a graph.

\section{Planar Subdivisions, Triangulations, and Meshes}

A \emph{planar subdivision} is a decomposition of the plane into polygonal
faces, induced by the embedding of a planar graph in $\plane$, with 
graph edges mapped to straight-line segments.
For the purposes of this thesis we will assume that faces are simple
polygons that do not contain any other component of the graph.
Two faces in such a graph are adjacent if they share one or more edges. 
We can form the \emph{dual graph} of a planar subdivision by representing each
face of the subdivision by a vertex, and then connecting all vertices which
represent adjacent faces by an edge.

A \emph{triangulation}, $\triang$ is a special case of a planar graph in 
whieach face, excluding the outer face, has exactly three edges.
A triangulation on $N$ vertices has at most $3N - 6$ edges.

A \emph{mesh} is a subdivision of space into regularly-shaped simplices.
A triangulation embedded in $\mathbb{R}^2$, with vertices connected by 
straight-line segments is a triangular mesh. 
A \emph{tetrahedral mesh} is an extension of this idea into $\threeD$.
Such a mesh is a subdivision of a volume into tetrahedra. 
The vertices of such a mesh are points in $\threeD$ and each tetrahedron 
consists of six edges and four triangular faces.
Two tetrahedra which share a face are said to be adjacent.

\section{Model of Computation}
\label{sec:model_of_comp}

Our results are presented in one of two models of computation; namely,
the Word RAM model and the \emph{external memory} (EM) model. 
Here we give brief descriptions of both models. 

\subsection{Word RAM Model}
In the classical \emph{random access machine} or RAM model 
\cite{CookReckhow_RAMModel}, 
the memory is assumed to be composed of an infinite number of cells with integer 
addresses $0, 1, \ldots, \infty$ . 
It is assumed that each cell can hold an arbitrarily large integer value. 
The set of basic operations available in this model includes arithmetic operators, 
loading and storing values in cells, and indirect addressing (where the value of one 
cell is used to address another cell).  

The Word RAM model \cite{DBLP:conf/stacs/Hagerup98} differs from the classical 
RAM model, primarily in that the range of integers than can be stored in a cell 
is restricted to the range ${0, 1, \ldots, 2^w-1}$, where $w$ is the 
\emph{word size}.
This model adds some operations to the classic RAM model that are natural on 
integers represented as strings of $w$ bits.  
These operations include bit shifting and bitwise boolean operations AND, OR, 
and NOT. 
Under this model, all of these operations can be performed in constant time.

In this thesis we will assume that $w = \BigTheta{\lg N}$, where $N$ is the 
maximum problem size, measured by the number of data elements to be processed.

\subsection{External Memory Model}
\label{ssec:em_model}

In the \emph{external memory} model (EM), we divide the memory available to the process
into two levels; the internal memory, which has some finite capacity, and the
external memory which is assumed to have infinite capacity.
This model is somewhat more representative of the real situation for modern 
computer systems, which have a relatively limited amount of 
RAM and a much larger, though not infinite, amount of on disk memory.
In the external memory literature, the terms external memory, and disk, are often
used interchangeably.

This model of computation is intended to deal with 
problems in which the number of objects to process is greater than the computer's 
internal memory.
While the disk has infinite capacity in the model, access to an object stored
on disk is much slower (up to one million times) than the fastest components of
the 
computer's internal memory \cite{DBLP:journals/fttcs/Vitter06}. 
We refer to the process of reading data from disk, or writing data to disk, as an 
Input/Output operation (I/O). 
In this research, where  our data structures are static, we can assume 'read' 
operations are being performed. 
Most of the time required during an I/O to access an object on disk is for the 
positioning of the read/write head, after which reading contiguous objects 
is quite fast. 
Thus whenever possible, when accessing disk it is typically preferable to read 
many objects during a single I/O operation. 
In order to take advantage of this situation, when an I/O is performed, many objects 
are read from disk to internal memory in a single transfer. 
We refer to the collection of objects transfered between the disk and internal
memory as a \emph{block}.

External memory algorithms are evaluated with respect to the number of I/Os,
or block transfers, that are required to solve the problem.
In order to represent a particular instance of an external memory algorithm, the following 
parameters are commonly used (where all sizes are measured in terms of the number 
of objects being processed):

\begin{center}
  \begin{tabular}{lll}
    N &=& problem size \\
    M &=& size of internal memory \\
    B &=& block size \\
  \end{tabular}
\end{center}

For any interesting problem instance we can assume that $N > M \ge B \ge 1$.  
In some cases, we wish to refer to the number of blocks involved in a computation,
thus we refer to the problem size as $n=N/B$ and the memory size as $m=M/B$.

There is a large body of research on EM data structures and algorithms.
Agarwal and Vitter \cite{DBLP:journals/cacm/AggarwalV88} addressed 
several fundamental algorithms in the EM setting, including sorting.
In the EM literature, the efficiency of algorithms is often evaluated
in comparison to known bounds on fundamental operations in this setting,
including: searching $Search(N) = \BigTheta{\log_B{N}}$; scanning
$Scan(N) = \BigTheta{N/B}$; and sorting 
$Sort(N) = \BigTheta{ \frac{N}{B} \log_{\frac{M}{B}} \frac{N}{B} }$.
An excellent survey of the EM literature can be found in 
\cite{DBLP:journals/fttcs/Vitter06}.

External memory algorithms and data structures are designed to use locality of 
reference to ensure good I/O performance.
The general idea is that since objects are loaded into memory in block-size
chunks, we attempt to arrange the data so as to take advantage of this.
Therefore, when a block is loaded into memory, many of the objects it contains 
will be used in processing before we are required to load another block.
A fundamental concept in our research is that of \emph{blocking} a data structure.
This is the process of taking the objects that are represented in our data
structure, and determining in which disk block (or blocks) they will be placed.
The blocking of graphs, which is central to Chapters~\ref{chp:succinct_graphs} 
and \ref{chp:mesh_trav} of this work, relies on the division of the graph using 
\emph{separators}.
Separators form the topic of the Section (\ref{sec:separators}), 
after which we discuss the blocking of trees and graphs. 
It is worth noting that the best blocking scheme for a data structure is often
dependent on the specific application, or expected access pattern.

\section{Separators}\label{sec:separators}

Let  $G$ be a graph on $N$ vertices\footnote{Technically, for a 
separator theorem to apply,
the graph $G$ must belong to a class of graphs closed under the subgraph relation.
That is to say, if $S$ is such a class of graphs and $G \in S$, and $G_1$ is
a subgraph of $G$, then $G_1 \in S$. 
Consider, for example, planar graphs; a
subgraph of a planar graph remains planar, and as such the class of planar graphs
is closed under the subgraph relation. The class of connected graphs would not be
considered closed under the subgraph relation, since a subgraph of a connected
graph could be disconnected.}; we say that a $f(N)$-vertex separator exists for 
constants $\delta < 1$ and $\beta > 0$ if the vertices of $G$ can be partitioned 
into sets $A$, $B$, and $C$ such that:

\begin{enumerate}
 \item There is no edge connecting a vertex in $A$ with a vertex in $B$.
 \item Neither $A$, nor $B$, contains more than $\delta \cdot N$ vertices.
 \item $C$ contains at most $\beta \cdot f(N)$ vertices.
\end{enumerate}

The function $f(N)$ is a function of $N$ that is asymptotically smaller than $N$; 
for example, $\sqrt{N}$ or $N^{2/3}$.
Given the $\delta$ value for a separator, we state that the separator $\delta$-splits 
the graph $G$, and that the separator size is $\beta \cdot f(N)$.
Weighted versions of separators also exist.
Depending on the class of graphs, different strategies for identifying
separators are required. 
For the class of planar graphs, Lipton and Tarjan~\cite{lipton_tarjan_pst1979} 
gave a separator theorem with $\beta = 2 \sqrt{2}$, $f(N) = \sqrt{N}$ and 
$\delta = 2/3$. 
They provide a constructive proof which leads to a linear time algorithm
to find such a separator.
Miller~\cite{miller_1986} studied the problem of finding \emph{simple cycle 
separators} for embedded planar graphs. 
A simple cycle separator is either a vertex, or a simple cycle of bounded size 
which $2/3$-splits $G$ with a simple cycle of size at most 
$2 \sqrt{2 \left\lfloor \frac{c}{2} \right\rfloor N}$, where $c$ is the maximum
face size in the graph.
In this setting $\delta = 2/3$, 
$\beta = 2 \sqrt{ \left\lfloor \frac{c}{2} \right\rfloor }$ and $f(N) = \sqrt{N}$.
This construction also requires linear time.

Frederickson~\cite{Frederickson87} developed a scheme for subdividing 
bounded-degree planar graphs into bounded size regions by recursively applying the 
planar separator theorem of~\cite{lipton_tarjan_pst1979}. 
Given a graph $G$ with vertex set $V$, $V$ is partitioned into overlapping 
sets called \emph{regions}. 
Vertices contained in just one region are \emph{interior} to that region 
while vertices shared between multiple regions are termed \emph{boundary} 
vertices. 
His result is summarized in Lemma \ref{lem:bg_fred_graph_sep}.

\begin{lemma} [\cite{Frederickson87}] \label{lem:bg_fred_graph_sep}
A planar graph with $N$ vertices can be subdivided into $\Theta(N/r)$ 
regions, each consisting of at most $r$ vertices, with $\OhOf{N/\sqrt{r}}$
boundary vertices in total.
\end{lemma}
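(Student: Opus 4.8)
The plan is to prove Lemma~\ref{lem:bg_fred_graph_sep} by recursively applying the Lipton--Tarjan planar separator theorem, the same strategy used by Frederickson. The high-level idea is to repeatedly split the graph (and its pieces) with $\sqrt{\cdot}$-separators until every remaining piece has at most $r$ vertices, then argue that (a) the number of pieces produced is $\Theta(N/r)$ and (b) the total number of boundary vertices created across all levels of the recursion is $\OhOf{N/\sqrt{r}}$.

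First I would set up the recursion. Apply the planar separator theorem of~\cite{lipton_tarjan_pst1979} to $G$, obtaining sets $A$, $B$, $C$ with $|C| = \OhOf{\sqrt{N}}$, $|A|,|B| \le \tfrac{2}{3}N$, and no edges between $A$ and $B$. Put the separator vertices $C$ into both of the two subproblems (so that each subproblem remains a genuine planar graph and every edge of $G$ is covered), mark them as boundary vertices, and recurse on $G[A \cup C]$ and $G[B \cup C]$, stopping the recursion on any piece of size at most $r$. A technical point worth flagging: to keep the piece sizes genuinely shrinking by a constant factor one must be a little careful, since copying $C$ into both sides adds $\OhOf{\sqrt{N}}$ vertices to each side; for $N$ above a suitable threshold (a function of $r$) the $\tfrac{2}{3}$ factor still dominates, and one handles small $N$ separately. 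This is the standard Frederickson argument and I would cite it rather than reprove every detail.

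Next I would bound the number of regions. Since each region is a leaf of the recursion tree and has $\Theta(r)$ vertices (at least $\Omega(r)$ on the level just before stopping, at most $r$ by the stopping rule), and since the interiors of the regions partition most of $V$, there are $\Theta(N/r)$ regions. Then I would bound the total boundary size by summing the separator sizes over all nodes of the recursion tree. At recursion depth $i$ the pieces have size $\OhOf{(2/3)^i N}$ and there are $\OhOf{2^i}$ of them, so the separator vertices introduced at depth $i$ total $\OhOf{2^i \cdot \sqrt{(2/3)^i N}} = \OhOf{(2/\sqrt{2/3})^i \sqrt{N}}$ $=$ $\OhOf{(\sqrt{8/3})^i \sqrt{N}}$. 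Since the recursion stops at depth $i^{*}$ where $(2/3)^{i^{*}} N \approx r$, i.e. $2^{i^{*}} \approx N/r$, this is a geometric series dominated by its last term, giving total boundary $\OhOf{(N/r)\sqrt{r}} = \OhOf{N/\sqrt{r}}$.

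The main obstacle — and the place where a clean proof needs the most care — is precisely this geometric-sum bookkeeping together with the "piece-size shrinks geometrically" claim: the sum $\sum_i 2^i \sqrt{(2/3)^i N}$ has ratio $\sqrt{8/3} > 1$, so it is back-loaded, and the whole bound hinges on the recursion actually terminating at depth $\Theta(\log(N/r))$ rather than running deeper because the $+|C|$ reinjection stalls the size reduction on small pieces. Handling that requires either the threshold argument sketched above (stop recursing — or switch to a cruder bound — once pieces are only polynomially larger than $r$) or invoking Frederickson's lemma as a black box. Since the statement is attributed to~\cite{Frederickson87}, I expect the cleanest exposition is to give the recursive construction explicitly and then defer the delicate termination/accounting details to that reference.
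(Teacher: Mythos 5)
The thesis states this lemma as a cited result from Frederickson~\cite{Frederickson87} and gives no proof of it; the closest the thesis comes is Lemma~\ref{lem:rec_dual_separator}, where an analogous bound for tetrahedral meshes is established by solving the recurrence $D(n,r)\le c\,n^{2/3}+D(\delta n+O(n^{2/3}),r)+D((1-\delta)n+O(n^{2/3}),r)$ by induction with the strengthened hypothesis $D(n,r)\le\frac{cn}{r^{1/3}}-d\,n^{2/3}$. Your high-level plan---recurse with Lipton--Tarjan until every piece has at most $r$ vertices and bound the cumulative separator size---is exactly Frederickson's, and your closing instinct to defer the accounting details to that reference is also what the thesis itself does (for Lemma~\ref{lem:cycle_sep_size}).

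The quantitative sketch you offer in the middle, however, does not hold up and would need replacing rather than polishing. You count ``$O(2^i)$ pieces of size $O((2/3)^iN)$ at depth $i$,'' which would put $\Theta((4/3)^iN)$ vertices at level $i$; since the pieces (essentially) partition $V$, the right count is $\Theta((3/2)^i)$ pieces. Similarly, from $(2/3)^{i^*}N\approx r$ you infer $2^{i^*}\approx N/r$, but what actually follows is $(3/2)^{i^*}\approx N/r$. These two slips propagate: with your figures, the per-level total $(\sqrt{8/3})^i\sqrt{N}$ evaluated at $i^*$ comes out to roughly $(N/r)^{1.2}\sqrt{N}$, not $N/\sqrt{r}$, so the final step does not follow from the displayed sum even on your own terms. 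With the corrected count, the per-level contribution is $(3/2)^i\sqrt{(2/3)^iN}=(\sqrt{3/2})^{\,i}\sqrt{N}$, and the bottom term becomes $\sqrt{N/r}\cdot\sqrt{N}=N/\sqrt{r}$, as desired. This sensitivity is precisely why the level-by-level accounting is unattractive here: the cleaner and more robust route, used by Frederickson and by the thesis in its own analogous proof, is to guess a potential-style bound of the form $B(n)\le\frac{an}{\sqrt r}-b\sqrt n$ and verify the recurrence inductively, which makes no assumptions about the shape of the recursion tree or about pieces at a given depth all having comparable size.
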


We also give the following bound on the space required by the recursive 
application of the separator algorithm of Miller.

\begin{lemma}[\cite{miller_1986}]\label{lem:cycle_sep_size}
For an embedded planar graph $G$, on $N$ vertices, with faces of bounded maximum 
size $c$, recursively partitioned into regions of size $r$ using the cycle 
separator algorithm of Miller, the total number of boundary vertices is
bounded by $\BigOh{N/\sqrt{r}}$.
\end{lemma}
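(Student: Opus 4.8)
The plan is to mimic the proof of Frederickson's result (Lemma~\ref{lem:bg_fred_graph_sep}), but replacing the Lipton--Tarjan vertex separator by Miller's simple cycle separator, and tracking how the bounded face size $c$ enters the boundary-vertex count. First I would set up the recursion: starting from $G$, repeatedly apply Miller's algorithm to every current region whose size exceeds $r$, each time splitting a region of size $m$ into two pieces, neither larger than $\tfrac{2}{3}m$, by removing a simple cycle of at most $2\sqrt{2\lfloor c/2\rfloor m}$ vertices. Because $c$ is a fixed constant for the graph, the separator size is $\BigOh{\sqrt{m}}$ with a constant depending only on $c$; this is the same asymptotic form as in the Lipton--Tarjan case, so the recursion tree has the same shape.

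Next I would bound the total number of boundary vertices created. The standard accounting (as in~\cite{Frederickson87}) groups the regions of the recursion by size: at a level where regions have size roughly $m$, there are $\BigOh{N/m}$ of them, and each contributes $\BigOh{\sqrt{m}}$ boundary vertices, for a subtotal of $\BigOh{N/\sqrt{m}}$. Summing this geometric-type series from $m=N$ down to $m=r$, the dominant term is the one at the smallest region size, giving $\BigOh{N/\sqrt{r}}$ in total. One must be slightly careful that a vertex lying on a separating cycle may be counted in several regions, but since each recursive split only increases a vertex's region-membership count by a bounded amount and the recursion depth is $\BigOh{\log(N/r)}$, the standard argument of charging each boundary occurrence to the split that created it still yields $\BigOh{N/\sqrt{r}}$; alternatively one invokes the weighted version of the separator to keep the region sizes balanced and the recursion depth logarithmic.

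The main obstacle I expect is handling the interaction between the cycle separator and the ``empty face'' / bounded-face-size assumption across recursive calls: when Miller's algorithm is applied to a subregion, that subregion is itself an embedded planar graph, and one must verify that its maximum face size is still bounded by a constant (it can only grow by merging faces along removed cycles, but for the bounded-degree / triangulation setting relevant to this thesis the relevant faces stay $\BigOh{1}$, or one re-triangulates). Making this precise — i.e., arguing that the constant $\beta = 2\sqrt{\lfloor c/2\rfloor}$ does not degrade through the recursion — is the delicate point; once that is settled, the summation over recursion levels is routine and gives the claimed $\BigOh{N/\sqrt{r}}$ bound.
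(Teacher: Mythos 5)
Your proposal matches the paper's proof almost exactly: both set up the Frederickson-style recursion, observe that Miller's cycle separator has size $\BigOh{\sqrt{m}}$ on an $m$-vertex region once $c$ is treated as a constant, and then invoke (or reproduce) the summation from Lemma~1 of~\cite{Frederickson87} to conclude $\BigOh{N/\sqrt{r}}$ boundary vertices. The one subtlety you flag — that the maximum face size of a subregion could grow as vertices are deleted, potentially degrading Miller's constant through the recursion — is a genuine concern, but the paper's own proof does not address it either (it simply cites Frederickson's lemma), so your proposal is on equal footing with the paper's argument.
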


\begin{proof}
Miller's simple cycle separator on a graph of $N$ vertices has size at most 
$2 \sqrt{ 2 \left\lfloor \frac{c}{2} \right\rfloor N}$, which is 
$\BigOh{\sqrt{N}}$ for constant $c$. 
At each step in the partitioning, $G$ is subdivided into a separator 
$|S| = \BigOh{\sqrt{N}}$, plus subsets $N_1$ and $N_2$ each containing at most 
$\frac{2}{3}N$ vertices, plus the separator. 
Thus if $\frac{1}{3} \le \epsilon \le \frac{2}{3}$, we can characterize the size 
of the resulting subsets by 
$|N_1| = \epsilon N + \BigOh{\sqrt{N}}$ and $|N_2| = (1-\epsilon)N + \BigOh{\sqrt{N}}$. 
Following the proof of Lemma 1 in \cite{Frederickson87}, the total separator 
size required to split $G$ into regions of size at most $r$
becomes $\BigOh{\frac{N}{\sqrt{r}}}$.
\end{proof}

\subsection{Geometric Separators}\label{ssec:geom_seperators}

Miller~\etal~\cite{DBLP:journals/jacm/MillerTTV97} studied the problem of 
finding separators for sphere packings. 
More specifically, they developed a technique for finding geometric separators 
for $k$-ply neighbourhood systems, which we now define.

\begin{definition}[\cite{DBLP:journals/jacm/MillerTTV97}] 
A \emph{$k$-ply neighbourhood system} in $d$ dimensions is a set 
${B_1, \ldots, B_N}$, of $N$ closed  balls in $\mathbb{R}^d$, such that no point
in $\mathbb{R}^d$ is strictly interior to more than $k$ balls.
\end{definition}

The primary result of their research is summarized in 
Theorem~\ref{thm:miller_kply_sphere_separator}. 
As part of their research they show that there is a randomized linear time 
algorithm that can find such a separator.

\begin{theorem}[\cite{DBLP:journals/jacm/MillerTTV97}]\label{thm:miller_kply_sphere_separator}
Let $\Gamma = {B_1,\ldots,B_N}$ be a collection of balls that form a $k$-ply 
system in ${\mathbb{R}^d}$ space. 
There is a sphere, $S$, that partitions 
$\Gamma$ into three sets: $A$, the set of balls contained entirely in $S$; $B$, 
the set of balls entirely outside $S$; and $C$, the set of balls intersected by 
$S$ such that $|C| = \BigOh{k^{1/d} N^{1-1/d}}$ and 
$|A|,|B| \le \left( \frac{d+1}{d+2} \right) N$.
\end{theorem}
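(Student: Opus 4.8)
The plan is to carry the configuration up to a sphere, normalise it with a conformal map, cut it with a random great sphere, and project the cut back down --- the stereographic-projection method of Miller--Teng--Thurston--Vavasis. Fix a point $o\in S^{d}$ in general position, let $\Pi$ be the stereographic projection from $o$, and send each ball centre $p_i$ to $\hat p_i=\Pi^{-1}(p_i)\in S^{d}$. Since $\Pi$ is a bijection $\mathbb{R}^{d}\cup\{\infty\}\to S^{d}$ that maps spheres to spheres and preserves ``inside/outside'', each ball $B_i$ is carried to a spherical cap $\hat B_i\subseteq S^{d}$, and --- the point that makes the $k$-ply hypothesis usable --- the caps $\{\hat B_i\}$ again form a $k$-ply system on $S^{d}$ (no point of $S^{d}$ is strictly interior to more than $k$ of them). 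A great $(d-1)$-sphere $\gamma = S^{d}\cap u^{\perp}$ projects back under $\Pi$ to a sphere $S\subseteq\mathbb{R}^{d}$, and a ball $B_i$ lies strictly inside $S$, strictly outside, or straddles $S$ precisely when $\hat B_i$ lies in one open hemisphere cut by $\gamma$, in the other, or meets $\gamma$. So it suffices to produce a great sphere $\gamma$ whose two open hemispheres each contain at most $\frac{d+1}{d+2}N$ of the caps and which meets at most $\BigOh{k^{1/d}N^{1-1/d}}$ of them.

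The engine of the proof is a conformal normalisation (``centering''). The conformal automorphisms of $S^{d}$ form a family parametrised by the open ball $B^{d+1}$: parameter $0$ gives the identity, and as the parameter tends to a boundary point $v\in\partial B^{d+1}$ the map collapses almost all of $S^{d}$ towards $v$. I would prove the centering lemma: there is a parameter $q^{*}$ for which the centre $o'$ of the sphere is a centerpoint, in the Helly sense (depth $\ge N/(d+2)$), of the mapped points $\{g_{q^{*}}(\hat p_i)\}$. The proof is a topological degree argument: the map sending $q\in B^{d+1}$ to a suitably chosen continuous ``centre statistic'' of the pushed configuration $\{g_{q}(\hat p_i)\}$ extends continuously to $\overline{B^{d+1}}$ and is the identity on $\partial B^{d+1}$ (the configuration degenerates to $v$ there), so by invariance of degree it attains the value $o'$ at some interior $q^{*}$.

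Granting the centering lemma, the two required bounds are comparatively routine. Balance bound: after applying $g_{q^{*}}$, a great sphere of $S^{d}$ is exactly a hyperplane of $\mathbb{R}^{d+1}$ through the centerpoint $o'$, so in general position each open side contains at most $N-\lceil N/(d+2)\rceil\le\frac{d+1}{d+2}N$ of the $g_{q^{*}}(\hat p_i)$, hence at most that many caps; this is the centerpoint theorem and holds for every such great sphere. Cut bound: pick $\gamma$ uniformly at random and let $r_i$ be the angular radius of the $i$-th mapped cap; integral geometry gives $\Pr[\gamma\text{ meets cap }i]=\BigOh{r_i}$ and it is trivially at most $1$, so $\mathbb{E}[|C|]=\BigOh{\sum_i\min(r_i,1)}$. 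The $k$-ply property bounds the total area, $\sum_i\mathrm{area}(g_{q^{*}}(\hat B_i))\le k\cdot\mathrm{area}(S^{d})$, and a cap of angular radius $r$ has area $\BigTheta{\min(r,1)^{d}}$, whence $\sum_i\min(r_i,1)^{d}=\BigOh{k}$. Hölder's inequality with exponents $1$ and $d$ then gives $\sum_i\min(r_i,1)\le N^{1-1/d}(\sum_i\min(r_i,1)^{d})^{1/d}=\BigOh{k^{1/d}N^{1-1/d}}$, so $\mathbb{E}[|C|]=\BigOh{k^{1/d}N^{1-1/d}}$ and some great sphere meets at most that many caps. Since the balance bound holds for every generic great sphere and the cut bound holds on average, one great sphere $\gamma$ achieves both; undoing the conformal map and the stereographic projection (with $o$ chosen generically so that $S$ is a genuine sphere and no ball boundary hits $\Pi^{-1}(o)$) turns $\gamma$ into a sphere $S\subseteq\mathbb{R}^{d}$ with $A,B,C$ as stated, $|A|,|B|\le\frac{d+1}{d+2}N$ and $|C|=\BigOh{k^{1/d}N^{1-1/d}}$. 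The randomised linear-time algorithm then follows by using a constant-factor approximate centerpoint of the $\hat p_i$ and sampling $\BigOh{1}$ great spheres.

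\textbf{The main obstacle.} The delicate step is the centering lemma --- specifically, arranging that the conformal freedom suffices to place the centre of $S^{d}$ at an actual centerpoint of the images, not merely at (say) their centroid, which does not force the depth bound. Making this rigorous requires fixing the parametrised family of conformal maps explicitly, choosing a centre functional that is genuinely continuous on $B^{d+1}$, verifying the boundary degeneration, and handling affinely degenerate inputs; the degree argument itself is then short. The remaining pieces --- the Helly-based balance bound, the integral-geometry estimate for $\Pr[\gamma\text{ meets a cap}]$, and the Hölder inequality --- are standard.
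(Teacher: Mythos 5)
The thesis never proves this theorem: it is imported verbatim from Miller, Teng, Thurston and Vavasis with a citation, and only its consequences (the $\alpha$-overlap separator and its application to well-shaped meshes) are used later, so there is no in-paper proof to compare yours against. That said, your sketch is a faithful reconstruction of the original MTTV argument: stereographic lift, conformal centering, a random great-sphere cut, the centerpoint theorem for the balance bound, and the integral-geometry-plus-H\"older computation (the $k$-ply hypothesis gives $\sum_i \min(r_i,1)^d = O(k)$, hence $\sum_i \min(r_i,1) = O(k^{1/d}N^{1-1/d})$) for the cut bound. Two remarks. First, the balance bound you derive is a statement about the mapped centres $g_{q^*}(\hat p_i)$, and you correctly pass from ``cap entirely in an open hemisphere'' to ``image of the ball's centre in that hemisphere,'' which is exactly what is needed to bound $|A|$ and $|B|$ (counts of balls, not centres) by $\frac{d+1}{d+2}N$. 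Second, your centering lemma via a topological degree argument is a legitimate route that appears in later expositions, but the source paper instead computes a centerpoint of the lifted points explicitly and applies an explicit rotation followed by a conformal dilation that moves it to the sphere's centre, verifying directly that the origin is a centerpoint of the images; if you want to follow MTTV closely, that computation replaces the degree argument. Either way, the step you flag as delicate is indeed the one that carries the proof, and the rest of your outline is sound.
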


Based on this result, the authors were able show how separators could be found on
several concrete classes of graphs that can be represented as $k$-ply 
neighbourhood systems. 
These include intersection graphs and $k$-nearest neighbour graphs. 
Interestingly, the result for intersection graphs leads to a 
geometric proof of the planar separator theorem of Lipton and 
Tarjan~\cite{lipton_tarjan_pst1979}.
This separator also generalizes planar separators to higher dimensions, 
as it is applicable to any fixed dimensional space for dimension $d$.

In \cite{mttv_1998} the authors extend this work to the class of 
\emph{$\alpha$-overlap graphs}.  For a neighbourhood system  it is possible to 
associate an overlap graph with that system. The $\alpha$-overlap graph for a 
given neighbourhood system is defined as follows:

\begin{definition}[\cite{mttv_1998}]
Let $\alpha \ge 1$ be given, and let $(B_1,\ldots,B_n)$ be a collection of balls
that form a 1-ply neighbourhood system. 
If ball $B$ has radius $r$, then ball $\alpha \cdot B$ is the ball with the same
centre as $B$ with radius $\alpha \cdot r$.
The \emph{ $\alpha$-overlap graph} for this neighbourhood system is the undirected 
graph with vertices $V={1,\ldots,n}$ and edges:
\begin{equation}
E = {(i,j): B_i \cap (\alpha \cdot B_j) \ne \emptyset \mbox{ and } (\alpha \cdot B_i) 
\cap B_j \ne \emptyset}
\end{equation}
\end{definition}

For the class of  $\alpha$-overlap graphs the author's main result is summarized
in Theorem \ref{thm:alpha_overlap_separator}.

\begin{theorem}[\cite{mttv_1998}]\label{thm:alpha_overlap_separator}
For fixed dimension $d$ let $G$ be an  $\alpha$-overlap graph. Then $G$ has an
\begin{equation}\nonumber
\BigOh{\alpha \cdot n^{(d-1)/d}+q(\alpha,d))}
\end{equation}
separator that $(d+1)/(d+2)$-splits $G$. 
\end{theorem}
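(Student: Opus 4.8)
The plan is to reduce to the sphere--separator construction behind Theorem~\ref{thm:miller_kply_sphere_separator}, applied to the $1$-ply system $(B_1,\ldots,B_n)$ from which $G$ is built, and then to exhibit one well-chosen sphere $S$ in $\mathbb{R}^d$ that simultaneously $(d+1)/(d+2)$-splits the centres and meets only $\BigOh{\alpha\, n^{(d-1)/d}+\alpha^d}$ of the \emph{dilated} balls $\alpha B_i$. Write $B_i=B(p_i,r_i)$. Two features of that construction will be used as black boxes, exactly as in Theorem~\ref{thm:miller_kply_sphere_separator}: the centres $p_1,\ldots,p_n$ are lifted to the unit sphere $\mathbb{S}^d\subset\mathbb{R}^{d+1}$ by stereographic projection, a centerpoint of the lifted set is moved to the origin by a conformal map, a uniformly random great hypersphere of $\mathbb{S}^d$ is chosen, and $S$ is the image of that great hypersphere under the inverse lift (a sphere, or in the degenerate case a hyperplane). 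Then (i) at most $\frac{d+1}{d+2}\,n$ of the centres lie strictly on either side of $S$; and (ii) for every ball $B$, $\Pr[\,S\cap B\ne\emptyset\,]$ is at most a constant times $\min\{\rho(B),1\}$, where $\rho(B)$ is the angular radius of the spherical cap that is the image of $B$ under the lift, and $\rho(B)=\Theta(r)$ up to a bounded centre-dependent conformal factor whenever $r$ is small.

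The first step is a combinatorial reduction. \emph{Claim:} if $\{i,j\}\in E(G)$ with $r_i\le r_j$, and $p_i$, $p_j$ lie on opposite sides of $S$ with $B_j$ lying entirely on its own side, then $S\cap\alpha B_i\ne\emptyset$. Indeed, the definition of the $\alpha$-overlap graph forces $\|p_i-p_j\|\le\min(r_i+\alpha r_j,\, \alpha r_i+r_j)=\alpha r_i+r_j$; meanwhile $B_j$ entirely on its side gives $\mathrm{dist}(p_j,S)\ge r_j$, and since the segment $p_ip_j$ crosses $S$ we have $\|p_i-p_j\|\ge\mathrm{dist}(p_i,S)+\mathrm{dist}(p_j,S)\ge\mathrm{dist}(p_i,S)+r_j$; hence $\mathrm{dist}(p_i,S)\le\alpha r_i$, i.e.\ $S\cap\alpha B_i\ne\emptyset$. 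Now let $C=\{\,i:S\cap\alpha B_i\ne\emptyset\,\}$, let $A$ be the set of $i\notin C$ with $B_i$ strictly inside $S$, and let $B$ be the set of $i\notin C$ with $B_i$ strictly outside $S$. There is no $G$-edge between $A$ and $B$: for such an edge, both endpoints lie entirely on their (opposite) sides of $S$, so the endpoint of smaller radius would---by the Claim---have its dilate met by $S$ and thus lie in $C$, a contradiction. Since $A$ and $B$ inject into the centres strictly inside, respectively strictly outside, $S$, property~(i) gives $|A|,|B|\le\frac{d+1}{d+2}\,n$, which is the required split.

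It remains to bound $\mathbb{E}\,|C|=\sum_i\Pr[\,S\cap\alpha B_i\ne\emptyset\,]$, and here I would separate the balls by scale relative to $\alpha$. Fix a threshold $\theta_0$ and call $B_i$ \emph{small} when $\alpha\,\rho(B_i)<\theta_0$, with $\theta_0$ small enough that the image cap of $\alpha B_i$ is still in the linear regime of the lift; then that cap has angular radius $\BigOh{\alpha\,\rho(B_i)}$, so $\Pr[\,S\cap\alpha B_i\ne\emptyset\,]=\BigOh{\alpha}\cdot\Pr[\,S\cap B_i\ne\emptyset\,]$. Summing over the small balls and invoking the $k=1$ case of the packing estimate underlying Theorem~\ref{thm:miller_kply_sphere_separator}---which uses only that the interiors of $B_1,\ldots,B_n$ are pairwise disjoint and yields $\sum_i\Pr[\,S\cap B_i\ne\emptyset\,]=\BigOh{n^{(d-1)/d}}$---bounds their total contribution by $\BigOh{\alpha\,n^{(d-1)/d}}$. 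For the remaining \emph{large} balls I would put all of them into $C$ outright: each has $\rho(B_i)\ge\theta_0/\alpha$, hence an image cap of area at least a constant multiple of $(\theta_0/\alpha)^d$, and since those caps have pairwise disjoint interiors on the fixed-area sphere $\mathbb{S}^d$ there can be at most $\BigOh{\alpha^d}$ of them; this is the additive term $q(\alpha,d)=\BigOh{\alpha^d}$. A routine averaging over the random great hypersphere then produces a single $S$ realising the split of~(i) together with $|C|=\BigOh{\alpha\,n^{(d-1)/d}+\alpha^d}$, and the whole procedure is the randomized linear-time algorithm of Theorem~\ref{thm:miller_kply_sphere_separator}.

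The hard part will be the scale-separation analysis of $\mathbb{E}\,|C|$: one must make precise the assertion that the lift multiplies small angular radii by $\alpha$ up to bounded conformal distortion, pin down $\theta_0$ and control the transition between the small and large regimes, and confirm that the $k=1$ packing bound inherited from Theorem~\ref{thm:miller_kply_sphere_separator} is undisturbed by the dilation. By contrast, the combinatorial reduction is elementary, and the splitting of the centres is taken over essentially verbatim from Theorem~\ref{thm:miller_kply_sphere_separator}.
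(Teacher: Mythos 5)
The paper does not prove this theorem; it is stated as a cited result from~\cite{mttv_1998}, immediately after which the text turns to how the cited authors apply it to finite-element meshes. So there is no internal proof to compare against, and the relevant comparison is with the argument in the reference itself.

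Your reconstruction follows the route actually taken in~\cite{mttv_1998}: lift the centres of the underlying $1$-ply system to $\mathbb{S}^d$, conformally recenter at a centerpoint, draw a random great hypersphere $S$, and then separate $G$ by the set $C$ of indices whose dilated balls $\alpha B_i$ meet $S$. The combinatorial reduction you give is correct and complete. In particular, the triangle-type inequality $\|p_i-p_j\|\ge\mathrm{dist}(p_i,S)+\mathrm{dist}(p_j,S)$ for points on opposite sides of the sphere is valid (the segment $p_ip_j$ crosses $S$ at a point $x$, and each distance to $S$ is dominated by the distance to $x$), the overlap condition with $r_i\le r_j$ and $\alpha\ge1$ does pin down $\|p_i-p_j\|\le\alpha r_i+r_j$, and every index not in $C$ has its undilated ball strictly on one side of $S$ (else $S\cap B_i\ne\emptyset\subseteq\alpha B_i$), so $A$, $B$, and $C$ genuinely partition the index set and the centerpoint property gives the $(d+1)/(d+2)$ split on $|A|$ and $|B|$. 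The expected-size bound on $C$ via the scale split is also the argument the reference uses: small dilated caps pay a factor $\BigOh{\alpha}$ over the $k=1$ packing bound, and large ones are rare because their undilated caps are pairwise-disjoint caps of angular radius $\Omega(\theta_0/\alpha)$ on a fixed-area sphere, giving the additive $q(\alpha,d)=\BigOh{\alpha^d}$ term. You are right to flag the conformal-distortion estimate as the place where care is required; making that precise is exactly the technical content of the cited proof, not something the present paper addresses.
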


By demonstrating that finite-element simplical meshes, which satisfy a shape 
criterion, are overlap graphs, the authors show how geometric separators can be 
applied to meshes. 
Here we provide an overview of how well-shapedness is defined, as well as the proof 
that the geometric separator can be applied to finite-element simplical meshes 
(such as triangulations in $\mathbb{R}^2$ and tetrahedral meshes in 
$\mathbb{R}^3$).

Let $t$ be a simplex in a finite-element mesh.  The \emph{aspect ratio} of $t$,
denoted $\rho_t$,
may be defined as the ratio of the radius of the smallest sphere that contains 
$t$ to the radius of the largest sphere that can be inscribed in $t$.
We denote by $\rho$ the bound on the aspect ratio of all simplicies within a
the mesh.
These radii may be denoted by $R(t)$ (smallest containing sphere) and $r(t)$ 
(largest inscribed sphere). The aspect ratio $\rho_t$ of $t$ is then 
$\rho_t = R(t) / r(t)$. 

A sketch of this proof is as follows:
Let $\myMesh$ be a well-shaped mesh and let $G$ be a graph built on the interior 
vertices of $\myMesh$. 
Around each vertex $v$ in $G$, a ball is placed of radius $\frac{1}{2} \cdot r(t')$, 
where $t'$ is the simplex adjacent to $v$ for which the size of its maximum 
inscribed ball ($r(t')$) is minimized. None of these balls overlaps, so this 
forms a $1$-ply system.  
Because the aspect ratio of the simplices is bounded, a corresponding bound on 
the number of simplices adjacent to $v$ can be proven. 
Let $q$, which is a function of $\rho$ and $d$, be this bound. 
There is a bounded number of simplices neighbouring $v$, all connected under a 
neighbour relation, and of bounded aspect ratio. 
This leads to a bound on the largest value for $R(t)$ for a simplex adjacent 
to $v$ which is $\rho^q \cdot r(t')$. 
Since the radius of $B_v$ is $\frac{1}{2} \cdot r(t')$, selecting 
$\alpha = 2 \rho^q$ ensures that $\alpha B_v$ intersects the balls 
of all neighbours of $v$ in $G$.

\subsection{Well-Shaped Meshes}\label{ssec:ws_mesh}

Well-shaped meshes play an important role in the research presented in this
thesis.
When the aspect ratio of all simplicies within is bounded by a constant, $\rho$,
then we say the mesh is well-shaped, with bounded aspect ratio.
In this thesis, two slightly different definitions of aspect ratio are used.
In each case, $r(t)$ defines the radius of the incircle(sphere) of a simplex, but 
the meaning of $R(t)$ is changed.
In Chapter~\ref{chp:mesh_trav} we use the definition of \cite{mttv_1998} given
above, where $R(t)$ represents the radius of the smallest enclosing sphere.
This definition is used here because in Chapter~\ref{chp:mesh_trav} our structures are based on
Geometric separators from \cite{mttv_1998}.
In Chapter~\ref{chp:jump_and_walk} we alter the definition of $R(t)$ to be
the circumcircle(sphere) of the simplex.
This second definition was used in order to allow us to make the precise calculations 
regarding the number of simplicies visited on a straight-line walk through a
well-shaped mesh.
While the two definitions lead to different values of $\rho$ for a given mesh,
they are conceptually the same.

\section{Blocking Trees and Graphs}

\subsection{Blocking Trees}\label{ssec:tree_blocking}

There are existing tree structures, particularly the B-tree and its numerous 
variants, that are designed to support efficient operations in external memory. 
However, there are problems for which trees may be layed out in external memory,
or blocked, in a way the gives better I/O performance than the corresponding 
queries on a standard B-tree. 
Furthermore, there are instances in which the 
topology of the tree is important, and as such we are not free to alter the 
structure of the tree in order to improve I/O performance. 
As a result, our interest is in covering
the nodes of the tree with a set of blocks (again potentially allowing a storage 
blow-up) so that queries may be performed efficiently.

We are particularly interested in two problems:
\begin{enumerate}
 \item Blocking a static tree so that bottom-up traversals can be performed. 
 In this case, we are given a node in the tree and we wish to report the path 
 (of length $K$) from the node to the root.
 \item Blocking a static, binary tree so that we can perform top-down traversals,
 starting at the root of the tree, and report the path to a node at depth $K$ 
 (in other words, a path of length $K$).
\end{enumerate}

For the first query, Hutchinson~\etal~\cite{hutch_mesh_zeh_2003} gave a blocking 
strategy that uses linear space and reports bottom-up queries in $\BigOh{K/B}$ 
I/Os.
The \emph{giraffe-tree} of Brodal and Fagerberg~\cite{brodal_fagerberg_06} also 
reports such queries with $\BigOh{K/B}$ I/Os using linear space, and it also works 
in the cache-oblivious model. 
As our bottom-up query structure is based on the strategy of Hutchinson~\etal, 
we present their strategy here with the simplification that we assumed the 
tree is blocked on a single disk. 

Let $T$ be a tree rooted. 
Given a vertex $v$ at depth $K$ in $T$, we wish to 
report the path from $v$ to $\funccase{root}(T)$. 
The scheme works by splitting the tree into layers of height $\tau B$, where
$\tau$ is a constant such that $0 < \tau < 1$. 
Within each layer the tree is split into a set of subtrees rooted at the top
level of the layer.
The layers are then blocked in such a way that for each vertex $v$ in the layer
there is a block that includes the full path from $v$ to the root of its subtree 
within the layer. This guarantees that with a single I/O we are able to traverse 
$\tau B$ levels towards $\funccase{root}(T)$.

Let $h$ denote the height of $T$, and let $h'=\tau B$ be the maximum height of 
each layer; $T$ is then divided into ${\left\lceil h/h'\right\rceil}-1$ layers. 
We use $L_i$ to denote the layer from level $i \cdot h'$ to level 
$(i+1) \cdot h' - 1$. 
The set of subtrees produced for layer $L_i$ is rooted at level $i \cdot h'$.
Starting at the root of the leftmost subtree rooted at level 
$i \cdot h'$, the vertices of level $L_i$ are numbered according to their 
preorder labels from $1$ to $|L_i|$.

We next divide the vertices of $L_i$ into sets $V_0, \ldots, V_s$ of size 
$t = B - \tau B$ vertices, where set $V_j$ contains the vertices with preorder 
labels ${jt, \ldots, (j+1)t - 1}$. For each set $V_j$, let $A_j$ be the set of 
all vertices in layer $L_i$ not in $V_j$ but on a path from a vertex in $V_j$ 
to level $i \cdot h'$ ($L_i$'s top level). We store $L_i$ on disk in $|L_i| / s$ 
blocks, where block $B_j$ stores the subtrees of $T$ induced on vertices 
$V_j + A_j$.

In order for the blocking scheme described above to work, it must be the case that 
$|A_j| \leq \tau B$, so that $V_j + A_j \leq B$. 
Here we provide a alternate proof to that of \cite{hutch_mesh_zeh_2003} that 
this property holds. 

\begin{claim}
$|A_j| \leq \tau B$.
\end{claim}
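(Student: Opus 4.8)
The plan is to bound $|A_j|$ by counting, level by level, how many vertices of layer $L_i$ can lie on a path from some vertex of $V_j$ up to the top level $i \cdot h'$ of the layer. The key observation is that $V_j$ consists of vertices whose preorder labels form a \emph{contiguous} range $\{jt, \ldots, (j+1)t - 1\}$ within the preorder numbering of $L_i$, and this contiguity severely restricts the ancestor set. First I would fix an arbitrary level $\ell$ with $i \cdot h' \le \ell < (i+1)\cdot h'$ and ask: how many distinct vertices at level $\ell$ are ancestors (within $L_i$) of vertices in $V_j$? I claim this number is at most one more than the number of such ancestors that have a \emph{strict} descendant in $V_j$ at the next level down — more usefully, I would argue that at each level all but possibly one of these ancestor vertices has its entire subtree's preorder interval contained in $[jt, (j+1)t-1]$, because in a preorder traversal the labels of a subtree form a contiguous block, and two sibling subtrees have disjoint consecutive blocks.

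The crux of the argument: the ancestors of $V_j$ at level $\ell$ are visited consecutively in preorder among the level-$\ell$ vertices, so they form a run $w_1, w_2, \ldots, w_k$. For each \emph{internal} member $w_2, \ldots, w_{k-1}$ of this run, the whole subtree of $w_m$ in $L_i$ is traversed between the first and last vertex of $V_j$ in preorder, hence every vertex of that subtree lies in $V_j$; in particular $w_m \in A_j$ only if... wait — actually $w_m$ itself need not be in $V_j$, but I want to charge it. The cleaner route is: the path from any $v \in V_j$ up to level $i \cdot h'$ passes through exactly one vertex per level, and as $v$ ranges over $V_j$ these paths, restricted to level $\ell$, hit a set of vertices that is "almost" covered by $V_j$ — precisely, at most two vertices per level of $A_j$ fail to be "forced" and I'd show the forced ones are already counted inside $V_j$. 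Summing the genuinely-new vertices over the at most $\tau B$ levels of the layer then gives $|A_j| \le 2\tau B$ or so, and one must be more careful to get exactly $\tau B$: the right bookkeeping is that $A_j$ at level $\ell$ has size at most (number of distinct level-$(\ell)$ ancestors) and the "leftmost spine" and "rightmost spine" of $V_j$ together account for the only vertices not having their subtree swallowed by $V_j$'s interval, giving at most one new vertex per level, hence $|A_j| \le h' = \tau B$.

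So the key steps, in order, are: (1) recall that in preorder the label set of any subtree is an interval of consecutive integers; (2) deduce that at each level $\ell$ of $L_i$, the ancestors of $V_j$ occur consecutively in preorder; (3) show that every such ancestor except the one on the "leftmost root-to-$V_j$ path" has its subtree-interval contained in $[jt,(j+1)t-1]$, so it either lies in $V_j$ or is strictly between two elements of $V_j$ and thus also in $V_j$ — leaving at most one genuinely new vertex of $A_j$ per level; (4) sum over the at most $\tau B$ levels strictly above $V_j$'s lowest level to conclude $|A_j|\le \tau B$, whence $|V_j + A_j| \le t + \tau B = (B-\tau B)+\tau B = B$.

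The main obstacle I anticipate is step (3) — pinning down exactly which ancestor vertices escape being "absorbed" into $V_j$'s preorder interval, and making sure the count is truly one per level (not two, from both a left and a right spine). This hinges on the precise definition of the $V_j$: because the blocks are cut by preorder label alone, only the \emph{first} vertex of $V_j$ in preorder has ancestors-within-$L_i$ whose subtrees properly straddle the left boundary of the interval; the last vertex's ancestors' subtrees extend to the right but those ancestors were already visited before the last vertex, hence lie inside the interval. Getting this asymmetry right, and handling the boundary layers and the root-of-subtree case correctly, is where the real care is needed; everything else is routine counting.
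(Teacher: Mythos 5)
Your proposal is correct and takes essentially the same approach as the paper: both bound $|A_j|$ by showing there is at most one $A_j$-vertex per level of the layer, using the fact that $V_j$ is a contiguous preorder interval and that every subtree occupies a contiguous preorder interval. The paper phrases this as a direct two-vertex contradiction (if $q,r\in A_j$ lie on the same level with $p(q)<p(q)$, then a $V_j$-descendant of $q$ precedes $r$ and a $V_j$-descendant of $r$ follows it, forcing $r\in V_j$), which is crisper than your left-spine/right-spine bookkeeping but rests on exactly the same insight.
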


\begin{proof}
Since a layer contains $\tau B$, levels it is sufficient to prove that $A_j$ 
contains no two vertices on the same level. 
Assume that this were not the case, and let $q,r \in A_j$ be vertices on the same 
level of $T$ with preorder numbers $p(q)$ and $p(r)$. 
Assume that $p(q) < p(r)$. 
Consider the subtrees rooted at $q$ and $r$, denoted $T(q)$ and $T(r)$. $T(q)$ 
and $T(r)$ both include vertices from $V_j$, otherwise they would not be 
included in $A_j$. 
In a preorder labeling, all vertices in $T(q)$ are labeled 
prior to $T(r)$, including $r$. However, $T(r)$ contains vertices from $V_j$, 
but the vertices in $V_j$ are numbered consecutively; therefore either $T(r)$ contains 
no vertices of $V_j$ or $r$ belongs to $V_j$. In either case, $r$ is not in $A_j$.
\end{proof}

By packing non-full blocks together on disk it can be shown that the total 
space required in the above blocking scheme is $O(N/B)$ blocks. 
A path of length $K$ from $v$ to $\funccase{root}(T)$ will visit at 
most $K/ \tau B$ layers, and the subpath within each layer can be traversed 
with a single I/O. 
This leads to the following lemma.

\begin{lemma}[\cite{hutch_mesh_zeh_2003}]\label{lem:bot_up_hutch}
A rooted tree $T$ can be stored in $O(N/B)$ blocks on disk such that a bottom-up 
path of length $K$ in $T$ can be traversed in $O(K/\tau B)$ I/Os, where 
$0 < \tau < 1$ is a constant.
\end{lemma}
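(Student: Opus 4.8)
The plan is to verify Lemma~\ref{lem:bot_up_hutch} by assembling the three ingredients that the preceding discussion has already set up: (i) the partition of $T$ into layers of height $h' = \tau B$; (ii) the blocking of each layer into blocks of size at most $B$, whose validity rests on the Claim that $|A_j| \le \tau B$; and (iii) a bound on the total number of blocks, together with the observation that a bottom-up path meets only $O(K/\tau B)$ layers.

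First I would establish the space bound. Within a layer $L_i$, the vertices are grouped into $\lceil |L_i|/t \rceil$ sets $V_0,\dots,V_s$ of size $t = B - \tau B$, and the corresponding block $B_j$ stores $V_j + A_j$, which by the Claim has at most $t + \tau B = B$ vertices; so each block is legal. Summing over all sets in all layers, the number of blocks is at most $\sum_i \lceil |L_i|/t \rceil \le N/t + (\text{number of layers})$. Since $t = (1-\tau)B = \Theta(B)$ and the number of layers is $\lceil h/h' \rceil - 1 \le N/(\tau B)$, the total is $O(N/B)$; non-full blocks can then be packed together to keep the count $O(N/B)$ without affecting correctness, since each block is still read with one I/O.

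Next I would bound the I/Os for a query. Given a query vertex $v$ at depth $K$, the path from $v$ to $\funccase{root}(T)$ passes through consecutive levels $K, K-1, \dots, 0$, and hence intersects at most $\lceil K/h' \rceil + 1 = O(K/\tau B)$ layers. The key point, which is exactly what the blocking was designed to guarantee, is that whenever the path enters a layer $L_i$ at some vertex $u$, that vertex lies in some set $V_j$, and the block $B_j$ containing $V_j + A_j$ also contains the entire subpath from $u$ up to the root of $u$'s subtree within $L_i$ — which is the portion of the query path lying inside $L_i$. Thus one I/O suffices per layer visited, giving $O(K/\tau B)$ I/Os in total, which is the claimed bound.

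The only subtle point — and the part I would be most careful about — is arguing that the subpath of the query path lying within a single layer is entirely contained in one block. This is where the $A_j$ construction does its work: $A_j$ is defined as all vertices of $L_i$ that lie on a path from some vertex of $V_j$ up to $L_i$'s top level, so once $u \in V_j$, every ancestor of $u$ within $L_i$ is in $V_j \cup A_j$, hence in $B_j$. I would simply spell this out and cite the Claim for the size guarantee; no further calculation is needed, and the lemma follows.
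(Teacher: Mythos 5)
Your proposal is correct and matches the paper's approach: it re-derives the same blocking scheme (layers of height $\tau B$, preorder sets $V_j$ of size $t=(1-\tau)B$ augmented by the ancestor set $A_j$ of size $\le\tau B$), uses the Claim to certify $|V_j+A_j|\le B$, counts $O(N/B)$ blocks, and argues one I/O per layer since $A_j$ contains every in-layer ancestor of any $u\in V_j$. The only slight difference is that you derive the $O(N/B)$ block count directly from $\sum_i\lceil|L_i|/t\rceil$ rather than invoking the paper's packing of non-full blocks, but this is a harmless simplification and the key steps are identical.
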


We have just described a blocking scheme for blocking arbitrary trees for
bottom-up traversal.
We now describe a blocking scheme for top-down traversal in binary
trees.
Demaine \etal~\cite{dem_iac_lan_2004} described an optimal blocking technique 
for binary trees that bounds the number of I/Os in terms of the depth of a 
node within $T$. 
We provide an overview of this scheme, as we also employ it 
in the data structures described in Chapter~\ref{chp:leaf-root}.
The blocking is performed in two phases. 

In the first phase, the highest $c \lg N$ levels of $T$, where $c > 0$ is a 
constant,  are blocked as if they contained a perfect binary tree. 
(Again, we can think of splitting the top $\BigOh{\lg N}$ levels of $T$ into 
layers, ${L_0,\ldots,L_{c \lg{N}/ \lfloor \lg{(B+1)} \rfloor}}$ of height 
$h = \lfloor \lg{(B+1)} \rfloor$. 
Layer $L_0$ includes a single subtree rooted at 
$\funccase{root}(T)$, of height $h$, while layer $L_i$ is blocked into $B^i$ 
subtrees). 

Conceptually removing the nodes blocked in the Phase 1 blocking results in a 
set of subtrees root at level $\lfloor c \cdot \lg{N} \rfloor + 1$.
In Phase 2 we block these subtrees. 
Let $w(v)$ be the size of the subtree rooted at vertex $v$ and let $l(v)$ and 
$r(v)$ represent $v$'s left and right children, respectively. A null child has 
a weight of zero. 

Let $A$ be the number of vertices from the current subtree that can be placed 
in the current block without the block size exceeding $B$. 
Start with an empty block so that initially $A=B$. 
Given the root $v$ of a subtree of $T$, the blocking algorithm recursively selects 
a set of nodes to add to the current block as follows:

\begin{itemize}
 \item If $A = 0$, then create a new empty block and repeat the recursion 
 starting with $v$ and with $A=B$.
 \item If $A > 0$, then add $v$ to the current block and divide the remaining 
 $A-1$ capacity of the current block between $l(v)$ and $r(v)$ as follows: 
 \begin{eqnarray*}
   A_{l(v)} &=& (A-1) \cdot \frac{w(l(v))}{w(v)} \\
   A_{r(v)} &=& (A-1) \cdot \frac{w(r(v))}{w(v)}
 \end{eqnarray*}
\end{itemize}

Given a binary tree layed out according to this blocking scheme the authors
prove the following result:

\begin{lemma}[\cite{dem_iac_lan_2004}]\label{lem:top_down_demaine}
For a binary tree $T$, a traversal from the root to a node of depth $K$ 
requires the following number of I/Os:
\begin{enumerate}
   \item $\Theta( K/\lg(1+B))$, when $K = O(\lg N)$,
   \item $\Theta( \lg N/ (\lg (1+B \lg N/K)))$, when $K = \Omega(\lg N)$ and 
   $K = O(B \lg N)$, and 
   \item $\Theta(K/B)$, when $K = \Omega(B \lg N)$. 
\end{enumerate}
\end{lemma}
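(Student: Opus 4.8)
The plan is to analyze the two blocking phases separately, bound the I/O cost incurred while traversing through the nodes covered by each phase, and then add the two contributions according to which regime $K$ falls into. Throughout I will use the parameter $h = \lfloor \lg(B+1) \rfloor$ for the height of a ``perfect'' layer; note $h = \Theta(\lg B)$, so a bound of the form $\Theta(K/h)$ is the same as $\Theta(K/\lg(1+B))$.

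\textbf{Phase-1 cost.} The top $c\lg N$ levels are packed as layers $L_0, L_1, \dots$ of height $h$, where $L_i$ consists of $B^i$ subtrees each of which is a perfect binary tree of height $h$ and hence fits in one block. A root-to-node path that descends $K$ levels within this region crosses at most $\lceil K/h\rceil + 1$ layers, and each layer is paid for by a single I/O, since the relevant subtree of that layer lies in one block. Hence the Phase-1 portion of any descent of depth $d \le c\lg N$ costs $O(d/h + 1) = O(d/\lg(1+B) + 1)$ I/Os; a matching lower bound follows because a path confined to a single layer-$L_i$ block can descend at most $h$ levels, so $\Omega(d/h)$ I/Os are forced. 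This already gives case (1): when $K = O(\lg N)$ the whole path is inside Phase 1 and the bound is $\Theta(K/\lg(1+B))$.

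\textbf{Phase-2 cost.} When $K = \Omega(\lg N)$, the path first traverses all $\Theta(\lg N)$ Phase-1 levels (costing $\Theta(\lg N / \lg(1+B))$ I/Os by the above) and then enters one of the Phase-2 subtrees, which has size at most $N$. Inside a Phase-2 subtree, the key is the weight-balanced block-filling rule: when a block with capacity $A$ is rooted at $v$, the two children receive capacities proportional to their subtree weights, so along the actual root-to-node path the ``capacity'' allotted to the node at local depth $j$ is at least $A \cdot w(\text{node}_j)/w(v) \ge (B-1)\cdot 1 / w(v)$ only in the worst case, but more usefully, the block rooted at $v$ contains \emph{at least} the path prefix of length $\Omega(\lg(1 + B\,w(v)/w(\text{child hit})))$; iterating this telescopes the total number of Phase-2 blocks on the path. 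Carrying the telescoping out (this is the Demaine--Iacono--Langerman argument) yields that a descent of $K' = K - \Theta(\lg N)$ further levels into a subtree of size $\le N$ costs $\Theta\!\bigl(\lg N / \lg(1 + B\lg N / K)\bigr)$ I/Os when $K = O(B\lg N)$, and $\Theta(K/B)$ I/Os when $K = \Omega(B\lg N)$ (in the latter regime the $\lg N$ Phase-1 term is absorbed, since $K/B = \Omega(\lg N) \gg \lg N / \lg(1+B)$). Adding the Phase-1 and Phase-2 contributions and checking in each regime which term dominates gives cases (2) and (3).

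\textbf{Main obstacle.} The routine part is Phase 1 and the regime bookkeeping; the real work is the Phase-2 telescoping bound, i.e., proving that the weight-proportional capacity split forces each block to absorb a geometrically growing chunk of the path so that only $O(\lg N / \lg(1 + B\lg N/K))$ blocks are needed. Establishing the matching $\Omega$ direction in case (2) is the subtlest point: one must exhibit a family of subtrees (essentially sufficiently balanced trees) on which the block boundaries cannot be avoided, so that the logarithmic-in-$B\lg N/K$ denominator is tight and not merely an upper bound. I would handle the lower bound by a counting/packing argument on how much of a depth-$K$ path can possibly fit in $B$ nodes when the subtree sizes are forced to be balanced, mirroring the upper-bound recursion in reverse.
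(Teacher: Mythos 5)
The paper does not prove Lemma~\ref{lem:top_down_demaine} at all: it is stated with a citation to \cite{dem_iac_lan_2004} in the background chapter (where the two-phase blocking rule is described informally) and restated in Chapter~3 as a preliminary, with the analysis entirely deferred to that reference. So there is no in-paper proof to compare your sketch against; the thesis simply imports the result.

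Judged on its own merits, your Phase-1 analysis is fine and does yield the $\Theta(K/\lg(1+B))$ bound for $K = O(\lg N)$, and the overall two-phase decomposition and regime bookkeeping match the cited construction. But the Phase-2 argument, which you yourself flag as ``the real work,'' is a genuine gap rather than a proof. The one quantitative claim you state --- that a Phase-2 block rooted at $v$ absorbs a path prefix of length $\Omega\bigl(\lg(1 + B\,w(v)/w(\text{exit node}))\bigr)$ --- does not follow from the proportional capacity-splitting rule as written: the capacity is re-proportioned at every node of the path, so the depth reached inside a block is governed by the product of weight ratios along the whole prefix, not merely the endpoints; one can have $w(v)/w(\text{exit node})$ large yet cover only one level if the very first step sheds almost all the weight. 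Consequently the telescoping step is not set up correctly, the derivation of the $\Theta\bigl(\lg N / \lg(1 + B\lg N/K)\bigr)$ term is missing, and the matching lower bound for case~(2) is only gestured at. Since the thesis treats this as an imported result, the clean resolution is to cite \cite{dem_iac_lan_2004} for the proof rather than reconstruct it; if you do want to reconstruct it, the Phase-2 argument needs to track the multiplicative decay of the allotted capacity along the path and exhibit worst-case balanced subtrees for the lower bound, neither of which is done here.
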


\subsection{Blocking Graphs}

The model of graph blocking with which we work was described by 
Nodine~\etal~\cite{ngv_1996}. 
We are interested in performing external searching in a graph, $G=(V,E)$, that is
too large to fit into memory. 
The model is generic in that we assume we have some algorithm which generates 
a \emph{path} through the graph. 
A path is defined as a sequence of edge-adjacent vertices that are visited 
during the execution of the algorithm. 
We evaluate the effectiveness of algorithms with respect to the length of the 
path to be traversed. 
The length of the path is measured by the number of vertices visited in the 
sequence. 
We denote the path length by $K$.  
There is no restriction on the number of times a vertex may be visited, therefore 
it is possible that $K > N$.

There are a number of algorithms that may be at work. 
These include a blocking algorithm that involves the assignment of vertices to 
``blocks''. 
This can be viewed as a preprocessing step. 
The second algorithm at play is a processing algorithm that generates the path 
to be followed. 
Finally, there is a paging algorithm that determines which block should be loaded 
or evicted when the path leaves the current block.

There are a number of key properties of this model:
\begin{enumerate}
 \item The data associated with each vertex is stored with the vertex, and is of 
 fixed size.
\item A block stores at most $B$ vertices. 
\item Vertices may be present in more than one block.
\item Any block containing $v$ is said to \emph{serve} $v$.
An I/O operation is incurred when the next step in the processing algorithm will 
extend the path to a vertex $u$ that is not currently served.
\end{enumerate}

For the algorithms that we develop, we are not generally concerned with the paging 
algorithm used to load/evict blocks. 
Rather, we will generally assume that memory
is limited, so that $M=B$, and we can only hold a single block in memory at any
one time.
In Nodine's model, blocking schemes are evaluated with respect to two attributes; 
the \emph{blocking speed-up} denoted $\alpha(B)$ and the \emph{storage blow-up} 
denoted $s$. 
The blocking speed up, $\alpha(B)$, is the worst case number of vertices that 
can be traversed along the path before a vertex is encountered that is not served 
by the current block, incurring an I/O. One manner in which blocking may be 
improved is replicating / duplicating vertices so that they appear in multiple 
blocks. 
The storage blow-up measures the extent to which vertices are duplicated. 
If the blocked size of a graph is $S$, then the storage blow up $s$ is calculated
by $s = S/N$. 
If, for some blocking strategy, $s$ is a constant, then the storage requirement 
for $G$ is $\BigOh{N/B}$.

Nodine~\etal  studied the graph-blocking problem for a wide range of trees and 
graphs. 
Of greatest interest to our research, is a lower bound found on the blocking 
speed up of $\BigOmega{ \log_d B }$ when $s=2$ for $d$-ary trees (where $d$ is 
the maximum degree of a vertex). 
This result also implies an equivalent lower bound of 
$\BigOmega{ \log_d B }$ on the blocking speed up for degree 
$d$ planar graphs.

Agarwal~\etal~\cite{DBLP:conf/soda/AgarwalAMVV98} presented an algorithm that 
achieved such a speed up of $\BigOmega{ \log_d B }$ for planar graphs with $s=2$. 
As their algorithm is one of the building blocks of our graph blockings, we 
describe their technique in some detail. 

Let $G=(V,E)$ be a graph on $N$ vertices. 
The set $(V_1,\ldots,V_n)$ is a 
covering of the vertices of $V$ such that $V_i \subset V$ and 
$\cup_{i=1}^n V_i = V$. 
Each subset, $V_i$, is refered to as a region. A vertex $v$ is interior to region 
$V_i$ if all 
neighbours of $v$ also belong to $V_i$, otherwise we call $v$ a \emph{boundary} 
vertex. 
The boundary vertices are those vertices which are present in two or more regions. 

A $B$-division of $G$ is a covering $(V_1,\ldots,V_i)$ of $V$ by 
$T=\BigOh{N/B}$ blocks such that:
\begin{itemize}
 \item Each region has $\le B$ vertices.
 \item Any vertex $v$ is either a boundary or interior vertex. 
 \item The total number of boundary vertices over all regions is $\BigOh{N/\sqrt{B}}$.
\end{itemize}

The graph-partitioning scheme of Frederickson~\cite{Frederickson87} recursively 
applies the planar separator algorithm of Lipton and 
Tarjan~(\cite{lipton_tarjan_pst1979}, \cite{DBLP:journals/siamcomp/LiptonT80}), 
and for an $N$ vertex planar graph divides the graph into $\BigOh{N/r}$ regions 
of $r$ vertices and a total of $\BigOh{N/\sqrt{r}}$ boundary vertices.  
Setting $r = B$ results in a suitable $B$-division of $G$.

Let $S$ be the set of boundary nodes. 
By Lemma \ref{lem:bg_fred_graph_sep} above 
we have $S=\BigOh{N/\sqrt{r}}$. 
For each $v \in S$ grow a breadth-first search tree rooted at $v$, until there 
are $\sqrt{B}$ vertices in the tree. We call the subgraph of $G$ over this set 
of vertices a $\sqrt{B}$-neighbourhood.
Since $G$ is of bounded degree $d$, the $\sqrt{B}$-neighbourhood around a vertex 
$v$ contains all vertices of $G$ at distance less than $\frac{1}{2} \log_d B$ 
from $v$. 

The graph $G$ is then blocked on disk as follows:
the $B$-division is generated and each region $V_i$ of at most $B$ vertices
is stored in a single block on disk. The $\BigOh{N/\sqrt{B}}$ boundary vertices
of $S$ are grouped into $\BigOh{N/B}$ subsets such that the subsets contain at 
most $\sqrt{B}$ boundary nodes. 
The $\sqrt{B}$-neighbourhoods of all boundary vertices within a subset are 
stored in a single block on disk. 
The total storage for both the regions and the $\sqrt{B}$-neighbourhoods is 
$\BigOh{N/B}$.

The paging algorithm for path traversal works in the following manner. 
Assume the path is currently at vertex $v$ which is interior to some region 
$V_i$.
We follow the path until we reach a vertex $u \in V_i$ which is a boundary 
vertex. 
The $\sqrt{B}$-neighbourhood of $u$ is then loaded into memory and traversal 
continues. 
When the path leaves the $\sqrt{B}$-neighbourhood of $u$ at a vertex $v'$, 
the region containing $v'$, $V_j$, is loaded into memory, and so on.  
In the worst case, each time a new region is loaded we are unlucky and we load 
a boundary vertex, therefore we must load a new $\sqrt{B}$-neighbourhood. 
However, within the $\sqrt{B}$-neighbourhoods we are guaranteed to progress 
at least $\BigOh{\log_d B}$ steps along the path. 
Thus the blocking speed-up of this algorithm is 
$\alpha(B) = \BigOh{\log_d B}$; in other words, we can traverse a path of
length $K$ in $\BigOh{K/ \log_d B}$ steps. 
To summarize we have the following lemma:

\begin{lemma}[\cite{DBLP:conf/soda/AgarwalAMVV98}]\label{lem:agarwal_EM_graph_blocking}
A planar graph with bounded degree $d$ can be stored using $O(N/B)$ blocks so 
that any path of length $K$ can be traversed using $\BigOh{K/\log_d B}$ I/Os.
\end{lemma}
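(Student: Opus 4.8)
The plan is to follow the construction sketched above: build a $B$-division of $G$, store the regions blockwise, then augment the layout with blocked $\sqrt{B}$-neighbourhoods around the boundary vertices, and finally analyse the paging strategy to bound the number of I/Os.

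First I would invoke Frederickson's partitioning scheme (Lemma~\ref{lem:bg_fred_graph_sep}) with region-size parameter $r = B$. This recursively applies the Lipton--Tarjan planar separator theorem and yields $\BigOh{N/B}$ regions, each containing at most $B$ vertices, with $\BigOh{N/\sqrt{B}}$ boundary vertices in total. Each region is stored in a single disk block, accounting for $\BigOh{N/B}$ blocks. Next I would handle the ``seams'' between regions. For every boundary vertex $v$, grow a breadth-first search tree rooted at $v$ until it contains exactly $\sqrt{B}$ vertices; call the induced subgraph the $\sqrt{B}$-neighbourhood of $v$. Because $G$ has degree at most $d$, a BFS ball of radius $\varrho$ around $v$ contains at most $1 + d + d^2 + \cdots + d^\varrho$ vertices, so the $\sqrt{B}$-neighbourhood of $v$ contains every vertex within distance $\frac{1}{2}\log_d B$ of $v$. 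I would then group the $\BigOh{N/\sqrt{B}}$ boundary vertices into $\BigOh{N/B}$ buckets of at most $\sqrt{B}$ vertices each, and store in one block the union of the $\sqrt{B}$-neighbourhoods of all boundary vertices in a bucket; since such a block holds at most $\sqrt{B}\cdot\sqrt{B}=B$ vertices, this adds only $\BigOh{N/B}$ blocks, for $\BigOh{N/B}$ in total.

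Finally I would analyse the traversal of a path of length $K$ under the assumption $M=B$, i.e.\ a single block resides in memory at a time. Starting at a vertex interior to some region $V_i$, follow the path within the loaded block until it reaches a boundary vertex $u$; load the block holding $u$'s $\sqrt{B}$-neighbourhood; continue inside that neighbourhood until the path leaves it at some vertex $v'$; load the region containing $v'$; and repeat. The key accounting observation is that leaving the $\sqrt{B}$-neighbourhood of $u$ forces the path to a vertex at distance more than $\frac{1}{2}\log_d B$ from $u$, so the path advances $\BigOmega{\log_d B}$ vertices between the load of a neighbourhood block and the next I/O. Region loads and neighbourhood loads strictly alternate, so charging two I/Os to each such segment of progress still gives $\BigOh{K/\log_d B}$ I/Os overall.

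The step I expect to be the main obstacle is the worst-case charging argument. One must be careful that a newly loaded region can be exited immediately --- its entry vertex may itself be a boundary vertex --- so progress along the path cannot be guaranteed within a region, only within a $\sqrt{B}$-neighbourhood; the argument must therefore pair each such ``wasted'' region load with the following neighbourhood load, and confirm that the path genuinely cannot escape a $\sqrt{B}$-neighbourhood in fewer than $\BigOmega{\log_d B}$ steps even when it dips into and out of other regions along the way. The block-packing bounds (that $\sqrt{B}$ neighbourhoods of size $\sqrt{B}$ fit in one block, and that the $\BigOh{N/\sqrt{B}}$ boundary vertices bucket into $\BigOh{N/B}$ blocks) are routine once the separator bound of Lemma~\ref{lem:bg_fred_graph_sep} is in hand.
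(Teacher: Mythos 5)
Your proposal mirrors the paper's own exposition almost step for step: Frederickson's $r$-partition with $r=B$ to obtain the $B$-division, storing each region in one disk block, building $\sqrt{B}$-neighbourhoods by BFS around boundary vertices and packing $\sqrt{B}$ of them per block, and the alternating region/neighbourhood paging analysis charging $\Omega(\log_d B)$ progress to each neighbourhood load. The only addition you make is to spell out the geometric-series bound $1+d+\cdots+d^{\varrho}$ underpinning the claim that a $\sqrt{B}$-vertex BFS ball reaches radius $\frac{1}{2}\log_d B$, which the paper merely asserts; that is fine, and your concern about a freshly loaded region being exited immediately is exactly the worst case the paper handles by pairing region and neighbourhood loads.
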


Baswana and Sen~\cite{DBLP:journals/algorithmica/BaswanaS02} 
describe a revision of the blocking scheme of Agarwal that yields
an improved I/O efficiency for some graphs. 
As with the blocking strategy of Agarwal~\etal~\cite{DBLP:conf/soda/AgarwalAMVV98}, 
this new strategy recursively identifies a separator on the nodes 
of the graph $G$, until $G$ is divided into regions of size $r$ which can fit 
on a single block (in this context, region is used generically to refer to the 
size of partitions in the graph, as in Lemma \ref{lem:bg_fred_graph_sep}). 
However, rather than store $\alpha$-neighbourhoods of size $\sqrt{r}$ for each 
boundary vertex, larger $\alpha$-neighbourhoods are stored for a subset of the 
boundary vertices. 
In order for this scheme to work, it is necessary that boundary vertices be 
packed close together, so that the selected $\alpha$-neighbourhoods cover a
sufficiently large number of boundary vertices. 
This closeness of boundary vertices is not guaranteed using the technique of 
\cite{DBLP:conf/soda/AgarwalAMVV98}, but if the alternate separator algorithm 
of Miller \cite{miller_1986} is incorporated, then region boundary vertices 
are sufficiently packed.

Before presenting the main result of 
\cite{DBLP:journals/algorithmica/BaswanaS02} we define a few terms. 
For graph $G=(V,E)$ let $r^{-}(s)$ be the minimum depth of a breadth-first 
search tree of size $s$ built on any vertex $v \in V$. Let $c$ denote the 
maximum face size in $G$, and let $s$ be the size of the $\alpha$-neighbourhoods
constructed around a subset of the region boundary vertices. 
For regions with $r = B$ we have the following result based on 
\cite{DBLP:journals/algorithmica/BaswanaS02}:

\begin{lemma}\label{lem:Baswana_path_traversal}
Planar graph $G$ of maximum face size $c$ can be stored in 
$\BigOh{\frac{N}{B}}$ blocks such that a path of length $K$ can be 
traversed using $\BigOh{\frac{K}{ r^{-}(s) }}$ I/O operations where 
$s = \textbf{min}\left(r^{-}(s)\sqrt{\frac{B}{c}}, B \right)$.
\end{lemma}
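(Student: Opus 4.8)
The plan is to follow the structure of Agarwal~\etal's proof (Lemma~\ref{lem:agarwal_EM_graph_blocking}), but substitute Miller's simple-cycle separator (Lemma~\ref{lem:cycle_sep_size}) for the Lipton--Tarjan separator so that the boundary vertices of each region lie on a small number of simple cycles and are therefore ``packed close together''. First I would recursively apply Miller's cycle separator to partition $G$ into $\Theta(N/B)$ regions of at most $B$ vertices each; by Lemma~\ref{lem:cycle_sep_size} the total number of boundary vertices is $\BigOh{N/\sqrt{B}}$, and since each separator is a simple cycle of size $\BigOh{\sqrt{c \cdot (\text{region size})}}$, the boundary vertices of any one region lie on $\BigOh{1}$ cycles whose sizes are controlled by $c$. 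I would store each region in its own block, giving the first $\BigOh{N/B}$ blocks.

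Next I would cover the boundary vertices with $\alpha$-neighbourhoods of size $s$. Rather than building a $\sqrt{B}$-neighbourhood around every boundary vertex, I would pick a subset of boundary vertices spaced roughly $s/(\text{cycle length factor})$ apart along the separator cycles, grow a BFS tree of size $s$ around each, and argue that because the boundary vertices sit consecutively on simple cycles, a neighbourhood of size $s$ centred at a chosen vertex also ``covers'' all boundary vertices within distance $\Omega(s/\sqrt{c})$ of it along the cycle; hence $\BigOh{(N/\sqrt{B})/(s/\sqrt{c})} = \BigOh{N\sqrt{c}/(\sqrt{B}\,s)}$ neighbourhoods suffice. Each neighbourhood has $s \le B$ vertices so fits in a block, and packing $\Theta(B/s)$ of them per block the total neighbourhood storage is $\BigOh{N/B}$ blocks, which combined with the region blocks keeps the storage blow-up constant. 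The key constraint forcing the definition $s = \textbf{min}(r^{-}(s)\sqrt{B/c},\,B)$ is that we need both $s \le B$ (so a neighbourhood fits in one block) and the number of neighbourhoods to be $\BigOh{N/B}$, i.e. $N\sqrt{c}/(\sqrt{B}\,s) \le N/B$ up to the coverage radius; working the radius through $r^{-}(s)$ (the minimum BFS depth achieving size $s$) yields exactly this bound.

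Then I would analyse the traversal. Starting at a vertex interior to a region, we follow the path at no I/O cost until we hit a boundary vertex $u$; we load the $s$-neighbourhood that covers $u$ with one I/O. Since that neighbourhood is a BFS tree of size $s$ and minimum possible depth, it contains every vertex within distance $r^{-}(s)$ of $u$, so the path must make at least $r^{-}(s)$ steps before leaving it. When it leaves at some vertex $v'$, we load $v'$'s region ($\le 1$ I/O), and repeat. In the worst case every region entry immediately lands on a boundary vertex, costing two I/Os per $r^{-}(s)$ steps of progress, so a path of length $K$ costs $\BigOh{K/r^{-}(s)}$ I/Os, as claimed.

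The main obstacle will be making the ``closeness'' argument rigorous: I need to show that Miller's recursive cycle-separator partition actually produces region boundaries whose vertices lie consecutively on a bounded number of simple cycles of length $\BigOh{\sqrt{c\,B}}$ (not merely that the \emph{total} boundary size is $\BigOh{N/\sqrt B}$), and then quantify precisely how many boundary vertices a single size-$s$ BFS neighbourhood can absorb along such a cycle. This is where the face-size parameter $c$ enters the bound, and where the trade-off yielding $s = \textbf{min}(r^{-}(s)\sqrt{B/c}, B)$ is pinned down; the BFS-tree accounting that turns ``covers a cycle-arc of length $\ell$'' into ``handles $\Omega(\ell)$ boundary vertices'' while keeping per-block neighbourhood counts $\BigOh{B/s}$ is the delicate part, and I expect to lean on the corresponding packing argument in \cite{DBLP:journals/algorithmica/BaswanaS02}.
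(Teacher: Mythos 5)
Your overall strategy --- Miller's cycle separator to pack boundary vertices on cycles, then select a spaced subset of them and build larger BFS neighbourhoods --- is exactly the Baswana--Sen construction the paper sketches in Section~\ref{sec:alt_block_scheme}. However, there is a genuine error in the counting step that makes your derivation of $s = \textbf{min}(r^{-}(s)\sqrt{B/c},\,B)$ collapse. You claim that a BFS neighbourhood of \emph{size} $s$ centred on the cycle covers all boundary vertices within cycle-distance $\Omega(s/\sqrt{c})$ of its centre, and from this derive a neighbourhood count of $\BigOh{N\sqrt{c}/(\sqrt{B}\,s)}$. This is not right: a BFS tree of size $s$ has depth at least $r^{-}(s)$, and since consecutive boundary vertices on a simple cycle are graph-adjacent, what it actually absorbs is $\Theta(r^{-}(s))$ consecutive boundary vertices, not $\Theta(s/\sqrt{c})$. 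There is no general relation forcing $r^{-}(s) = \Theta(s/\sqrt{c})$ --- $r^{-}(s)$ depends on the local expansion of the graph and can be anywhere between $\Theta(\log s)$ and $\Theta(s)$ --- so the two quantities cannot be conflated.

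The correct accounting, which is what the paper does, is: select every $r^{-}(s)/2$-th cycle vertex, so the number of neighbourhoods is $\Theta\bigl(\sqrt{c}\,N / (\sqrt{B}\,r^{-}(s))\bigr)$, and the total space they occupy is that count times $s$. Demanding this be $\BigOh{N}$ gives the inequality $s \le r^{-}(s)\sqrt{B/c}$; adding the block-size constraint $s \le B$ gives the stated formula. Note this is a \emph{total-storage} constraint, not the ``number of neighbourhoods $\le N/B$'' constraint you state, which would yield $s \ge \sqrt{cB}$, a bound in the wrong direction. Your sentence ``working the radius through $r^{-}(s)$ yields exactly this bound'' is precisely the step where the argument needs to live, and as written it does not follow from the $s/\sqrt{c}$ coverage claim. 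The traversal analysis at the end is fine in spirit, though you should account for $u$ being up to $r^{-}(s)/2$ from the nearest selected centre, so the progress per I/O is $\Omega(r^{-}(s))$ rather than exactly $r^{-}(s)$; you also omit the single-vertex separator case, which Baswana--Sen handle by giving that vertex a full $B$-sized neighbourhood.
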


\section{Succinct Data Structures}
\label{sec:bckgrnd-succint-ds}

Our results in Chapters \ref{chp:leaf-root} and 
\ref{chp:succinct_graphs}, in
addition to being efficient in the EM setting, are succinct.
We now introduce succinct data structures, and some 
of the methods used in creating such data structures.

Consider a binary tree that perhaps stores a single ASCII
character per node. 
In a typical representation of such a tree, an individual
node would consist of the character and two pointers, one 
to each child node.
The bulk of the space required to store this representation
of a binary tree will be used to store pointers, not the 
character \emph{data} stored in the tree.
A similar situtation can arise for many data structures
in which the structural component of the data structure
accounts for much, or even most, of the overall space 
requirements.

Now say we have such a data structure and we want to 
reduce the space requirements.
We could save disk space by compressing the 
data structure using any one of the many compression
schemes available. 
However, this approach has two significant drawbacks.
Firstly, the compressed data structure is unusable in 
its compressed state, therefore we incur the additional computational
cost of decompressing it.
Secondly, the in-memory footprint of the decompressed
data structure results in no savings in space in terms of
RAM when the time comes to actually use the data structure.

The goal of succinct data structures is to
compress the \emph{structural} part of a data structure as
much as possible, but to leave it in a state where certain
desired operations can still be performed efficiently. 
More specifically, for a data structure to be succinct,
it must have a space complexity that matches the
information-theoretic lower bounds for the data structure
whilst permitting a constant time\footnote{or at least efficient 
when compared to similar operations in non-succinct structures}
set of desired 
operations~\cite{DBLP:conf/swat/FarzanM08}.

Two fundamental, and closely related, data structures used as building blocks in 
succinct data structures are bit vectors (arrays) and balanced parenthesis 
sequences.

A bit vector in this context is an array $\vvec[] [1 \ldots N]$, on $N$ bits, that 
supports the operations \emph{rank} and \emph{select} in constant time. 
These operations are defined as follows:

\begin{itemize}
 \item $\rankop[1]{\vvec,i}$ $(\rankop[0]{\vvec,i})$ returns the number of $1$s ($0$s) 
 in $\vvec[] [1 \ldots i]$.
 \item $\selop[1]{\vvec,r}$ $(\selop[0]{\vvec,r})$ returns the position of the 
 $r$\textsuperscript{th} occurrence of $1$ ($0$) in $\vvec$. 
 In other words, it returns the position $i$ such that 
 $\rankop[1]{\vvec,i}$ $(\rankop[0]{\vvec,i})$ returns $r$.
\end{itemize}

Jacobson \cite{jac_1989} described data structures for $\rankop$ and $\selop$ 
which required $N + \LittleOh{N}$ bits storage, but supported $\selop$ in 
suboptimal $\BigOh{\lg N}$ time under the word RAM model. 
Clark \cite{clark_96t} described how to achieve constant-time performance for 
the $\selop$ operation, again in $N + \LittleOh{N}$ space. 
Finally, Raman~\etal~\cite{DBLP:journals/talg/RamanRS07} presented a structure 
on $\lg\binom{N}{R} + O(N \lg \lg N / \lg N)$ bits to support the access to each 
bit, as  well as $\rankop$ and $\selop$ operations in $O(1)$, where $R$ is the 
number of $1$s in $A$. 
When $R$ is $\LittleOh{N}$, the space for this structure 
becomes $\LittleOh{N}$.

The following lemma summarizes the results of Jacobson~\cite{jac_1989} and 
Clark and Munro~\cite{clark_96t} (part (a)), and 
Raman~\etal~\cite{DBLP:journals/talg/RamanRS07} (part (b)):

\begin{lemma}
  \label{lem:rank_select}
  A bit vector $\vvec$, of length $N$, and containing $R$ $1$s can be represented 
  using either (a) $N + o(N)$ bits or (b) $\lg\binom{N}{R} + O(N \lg \lg N / \lg N)$ 
  bits to support the access to each bit, as well as $\rankop$ and $\selop$ 
  operations in $O(1)$ time (or $O(1)$ I/Os in external memory).
\end{lemma}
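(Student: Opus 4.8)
The plan is to establish each part separately, since they rely on different representations and achieve different space bounds. For part~(a), I would first partition the bit vector $\vvec[1\ldots N]$ into \emph{superblocks} of size $\lg^2 N$ and, within each superblock, into \emph{blocks} of size $\tfrac{1}{2}\lg N$. For $\rankop[1]$, I would store (i) for each superblock, the cumulative rank up to its left boundary (an absolute count, using $\OhOf{\lg N}$ bits each, for a total of $\OhOf{N/\lg N}$ bits), and (ii) for each block, the cumulative rank within its superblock (a relative count in $[0,\lg^2 N]$, hence $\OhOf{\lg\lg N}$ bits each, for a total of $\OhOf{N\lg\lg N/\lg N}=\ohOf{N}$ bits). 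The in-block rank is resolved by a universal lookup table indexed by all bit strings of length $\tfrac12\lg N$ and all positions within; this table has $2^{\lg N/2}\cdot\tfrac12\lg N\cdot\OhOf{\lg\lg N}=\OhOf{\sqrt N\,\mathrm{polylog}\,N}=\ohOf{N}$ bits. A $\rankop[1]$ query then costs $\OhOf{1}$ time: one superblock lookup, one block lookup, one table lookup. Since $\rankop[0]{\vvec,i}=i-\rankop[1]{\vvec,i}$, the zero-version is immediate. For $\selop$, I would recall the two-level scheme of Clark and Munro: group the $1$-bits into chunks of $\lg^2 N$, store the position of every $(\lg^2 N)$-th one explicitly; within a chunk that is ``sparse'' (spans $\geq \lg^4 N$ positions) store all answers directly, and within a ``dense'' chunk recurse with a finer subdivision, ultimately bottoming out in a universal lookup table of size $\ohOf{N}$. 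The bookkeeping must be done so the total auxiliary space is $\ohOf{N}$; adding the $N$ bits for $\vvec$ itself gives $N+\ohOf{N}$.

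For part~(b), the goal is to replace the explicit $N$-bit array by an information-theoretically near-optimal encoding of $\lg\binom{N}{R}$ bits plus $\OhOf{N\lg\lg N/\lg N}$ redundancy, while still supporting $\rankop$, $\selop$, and bit access in $\OhOf{1}$. The plan is to use the Raman--Raman--Santoro ``indexable dictionary'' / fully-indexable dictionary construction: divide $\vvec$ into blocks of size $u=\tfrac12\lg N$; each block is described by a pair (\emph{class}, \emph{offset}), where the class $c\in\{0,\ldots,u\}$ is the popcount of the block and the offset is the index of that particular block among the $\binom{u}{c}$ blocks of its class in some fixed enumeration. The class identifiers are stored in a separate array, and the offsets are concatenated into a bit string with variable field widths $\lceil\lg\binom{u}{c}\rceil$; a standard counting argument shows the total offset length is at most $\lg\binom{N}{R}+\OhOf{N/\lg N}$, and the class array plus the pointer structure needed to locate the $i$-th offset field contributes the $\OhOf{N\lg\lg N/\lg N}$ term. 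On top of this one layers the same two-level superblock/block index as in part~(a) to answer $\rankop$ and $\selop$, using universal tables (of size $\ohOf{N}$) to decode a $(\text{class},\text{offset})$ pair back to the raw block for bit access and in-block rank/select. When $R=\ohOf{N}$, the dominant term $\lg\binom{N}{R}=\OhOf{R\lg(N/R)}$ is $\ohOf{N}$, so the whole structure occupies $\ohOf{N}$ bits. Finally, I would note that in external memory each block and superblock fits in $\OhOf{1}$ machine words and hence in $\OhOf{1}$ disk blocks (as $B\geq 1$ and word size is $\BigTheta{\lg N}$), so every $\OhOf{1}$-time operation is also an $\OhOf{1}$-I/O operation.

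The main obstacle is the careful accounting in part~(b): one must verify simultaneously that (i) the concatenated offsets total $\lg\binom{N}{R}+\OhOf{N/\lg N}$ bits --- this uses $\sum_i \lceil\lg\binom{u}{c_i}\rceil \le \sum_i \lg\binom{u}{c_i}+ (\text{number of blocks}) = \lg\prod_i\binom{u}{c_i} + \OhOf{N/\lg N} \le \lg\binom{N}{R}+\OhOf{N/\lg N}$ by log-concavity of binomial coefficients --- and that (ii) the directory enabling $\OhOf{1}$ random access into these \emph{variable-width} fields, together with the class array, costs only $\OhOf{N\lg\lg N/\lg N}$ bits; and that (iii) all universal lookup tables stay within $\ohOf{N}$ bits, which forces the block size to be $\tfrac12\lg N$ rather than $\lg N$. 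Everything else --- the two-level rank index, the $\rankop[0]$/$\rankop[1]$ duality, the reduction to constant I/Os --- is routine once the block-decoding table and the space bound are in place. I would therefore present parts (a) and (b) as citations to Jacobson, Clark--Munro, and Raman--Raman--Santoro respectively, and include only the short space-accounting argument for part~(b) in full.
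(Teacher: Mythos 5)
Your proposal is correct and, in substance, matches the paper exactly: the paper does not prove Lemma~\ref{lem:rank_select} but simply cites Jacobson, Clark--Munro, and Raman \emph{et al.}, and your plan likewise defers to those same sources while giving an accurate sketch of the underlying two-level rank directory, the Clark--Munro select scheme, and the RRR class/offset encoding with its space accounting. One small slip: the reference \cite{DBLP:journals/talg/RamanRS07} is Raman, Raman, and Rao (Satti), not ``Raman--Raman--Santoro.''
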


Closely related to bit vectors are balanced-parentheses sequences.
Let $S$ be a sequence composed of $2N$ matching \emph{open} '(' and 
\emph{close} ')' parentheses (or alternately $N$ pairs of parentheses).  
The relation to bit vectors can easily be seen by replacing the open parentheses
with $1$ bits and the close parentheses with $0$ bits. 
For a parenthesis sequence on $N$ pairs, Munro and Raman~\cite{munro_raman_01} 
describe $\LittleOh{N}$ bit auxiliary dictionary structures that enable 
several useful operations in constant time. 
Given a parenthesis sequence 
$P$, and an open parenthesis at position $P[m]$, we define two such operations, 
$\funccase{findclose}(P,m)$ and $\funccase{enclose}(P,m)$.

\begin{enumerate}
	\item $\funccase{findclose}(P,m)$ returns the position in $P$ of the 
	closing parenthesis matching the open parenthesis at $P[m]$.
	\item $\funccase{enclose}(P,m)$ for the parentheses pair with open 
	parenthesis $m$ return the position in $P$ of the open parenthesis of 
	the nearest enclosing pair. 
\end{enumerate}

Munro and Raman~\cite{munro_raman_01} demonstrated that both of these operations 
require only constant time with their data structures.

\subsection{Tree Representations}

Here we present two succinct representations for trees used in our research. 
The first is the \emph{level-order binary marked} representation (LOBM) for 
binary trees of Jacobson~\cite{jac_1989}, the second is the 
balanced-parenthesis sequence used by Munro and Raman~\cite{munro_raman_01} for trees 
of arbitrary degree.

A LOBM tree is represented as a bit vector. 
Consider the set of nodes in the original tree;
these nodes form the set of \emph{interior} nodes. 
An interior node which is a leaf has zero children, while other interior nodes 
may have one or two children. 
For all interior nodes that are missing either their left or right child, 
insert a new \emph{external} node in place of the missing child. 
The bit vector is then generated as follows 
(see Fig \ref{fig:lobm_tree_representation} ).
\begin{enumerate}
 \item Mark each interior node in $T$ with a $1$ bit, and each external node 
 with a $0$ bit.
 \item Traverse $T$ in level order, visiting the nodes at each level from 
 left to right, and record the bit value of each node in the bit vector, in 
 the order visited.
\end{enumerate}

For a tree $T$ let $\vvec_T$ be the LOBM bit vector representing the tree. 
$\vvec_T$ contains $N$ one bits, and $N+1$ zero bits, for a 
total of $2N + 1$ bits. 
It is even possible to shave off $3$ bits for any non-empty tree, since the bit 
vector encoded using the LOBM technique always begins with a $1$ bit and ends 
with two $0$ bits. 
Let $m$ be the position of some node in $T$ in the bit array $\vvec_T$. 
We can navigate in $T$ using the following functions:


\begin{itemize}
 \item $\funccase{left-child}(m) \leftarrow 2 \cdot \rankop[1]{\vvec_T, m}$
 \item $\funccase{right-child}(m) \leftarrow 2 \cdot \rankop[1]{\vvec_T, m} + 1$
 \item $\funccase{parent}(m) \leftarrow \selop[1]{\vvec_T, \lfloor m/2 \rfloor }$
\end{itemize}

Now we describe the balanced-parentheses encoding of ordinal trees, where nodes may 
be of any degree, given in~\cite{munro_raman_01} and illustrated in 
Fig.~\ref{fig:balance-parens-encoding}. 

Let $P$ denote a balanced parentheses sequence encoding the tree $T$.
We construct $P$ by performing a preorder traversal on $T$.
The first time we encounter any node during this traversal we output an
open parenthesis '('. 
The last time we encounter any node during this traversal we output a close
parenthesis ')'. 
This process completes the parentheses sequence construction.
A node $m$ in $T$ is represented by the open parenthesis '(' output when $m$
was first encountered.

To navigate in $T$, we employ the $\funccase{findclose}()$ and $\funccase{enclose}()$
functions defined above.
$P[m]$ is the open parenthesis representing node $m \in T$. 
To find the parent node of $m$ we execute $\funccase{enclose}(P,m)$.
To find the children of $m$, we will define the function 
$\funccase{list-children}(P,m)$, which may be implemented as follows:
\vspace{6pt}

\protect \framebox[0.95\linewidth]
{
\begin{minipage}{0.90\linewidth}

\textbf{function $\funccase{list-children}(P,m)$} \\
1. \hspace{3pt} $c \leftarrow m$ \\
2. \hspace{3pt} \textbf{while} $P[c+1]='('$ \\
3. \hspace{18pt} report $P[c+1]$ as child of $m$ \\
4. \hspace{18pt} $c = \funccase{findclose}(P,c+1)$
\end{minipage}
}

\begin{figure}[ht]
	\centering
	\includegraphics[trim=0cm 18cm 0cm 0cm, clip=true,width=1.0\textwidth]{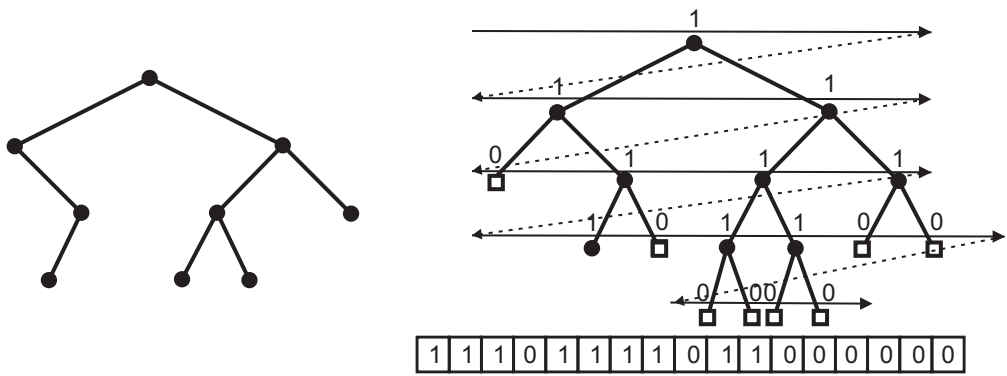}
	\caption{
	  The \emph{level-order binary marked} representation for binary trees.  
	  A binary tree is shown on the right-hand side. 
	  On the left-hand side the same tree is shown with external nodes (hallow 
	  squares) added and the representation as a bit vector. Arrows indicate the order 
	  in which nodes are visited in producing the bit vector.
	}
	\label{fig:lobm_tree_representation}
\end{figure}

\begin{figure}[ht]
	\centering
		\includegraphics[trim=0cm 6cm 0cm 2cm, clip=true,width=0.8\textwidth]{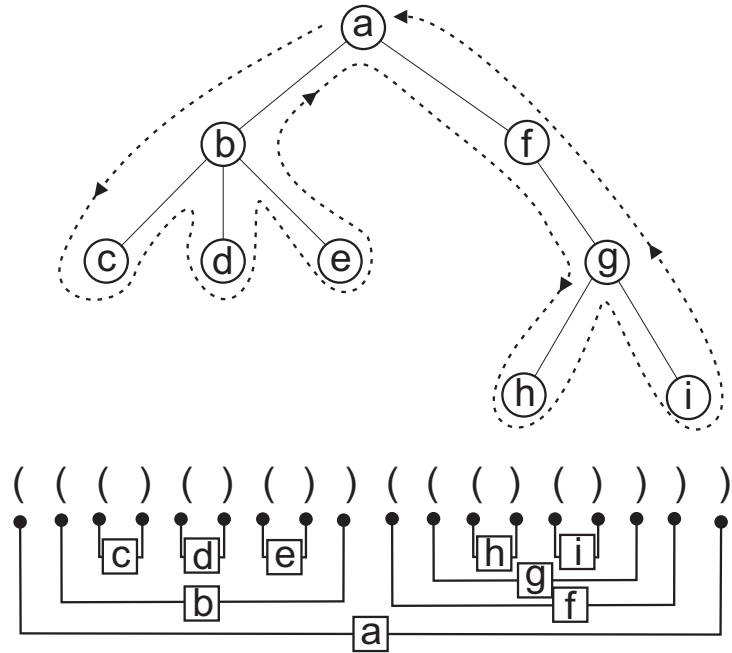}
	\caption{Balanced-parentheses encoding of an ordinal tree.}
	\label{fig:balance-parens-encoding}
\end{figure}


\subsection{Representing Graphs}

Since the data structures that we have developed for graphs in 
Chapter~\ref{chp:succinct_graphs} do not rely on the implementation details
of a specific succinct graph representation, we do not give a detailed 
description of the various representations used here.
A survey of the current state of research in the field of succinct graphs
is given in Section~\ref{sec:background} of that chapter.

\section{Memoryless Subdivision Traversal}

Consider a triangulation in $\plane$. We are given the problem of visiting 
every triangle in the triangulation and reporting some information. 
Given a starting triangle, a natural way to perform this operation is to execute 
a depth-first search from the start triangle, reporting the desired 
information when we first visit a triangle. 
The standard way of facilitating such a traversal is to store, with each triangle, 
a mark bit that will indicate if it has previously been visited. 
This marking prevents different branches of the traversal from revisiting 
the same triangle. 
However, there are instances where storing such a mark bit is problematic. 
For example, in Chapter~\ref{chp:succinct_graphs} we develop data structures 
and algorithms for  I/O efficient traversal of triangulations. 
The I/O efficiency is effectively achieved by storing duplicate copies of 
individual triangles in different blocks. 
Using mark bits in this structure would necessitate marking the visited triangle,
plus all duplicates in other blocks. 
This requirement would defeat the purpose of the blocking scheme which is to 
minimize the number of blocks that must be visted.

An algorithm that avoids the use of mark bits in the traversal of triangulated 
subdivisions was developed by Gold~\etal~\cite{gold_cormack86}, \cite{gold_maydell_1978}. 
De Berg~\etal~\cite{DBLP:journals/gis/BergKOO97} extended this approach to general 
planar subdivisions, and to traversing the surface of convex polyhedra. 

The algorithms presented for mesh traversals in Chapters 
\ref{chp:succinct_graphs} and \ref{chp:mesh_trav} use modifications of these
memoryless traversal algorithms.
Therefore, we present here an overview of how these algorithms work. 
A reference point in $\plane$ is selected from which to start the traversal. 
Typically, this point will be interior to the first face to be visited by the 
traversal. 
From this reference point, a unique entry edge can be selected for each face in 
the subdivision. 
How the entry edge can be uniquely identified is described in 
Gold~\etal~\cite{gold_cormack86}, \cite{gold_maydell_1978} for triangles, and
in De Berg~\etal~\cite{DBLP:journals/gis/BergKOO97} for simple polygons.

For the polygonal subdivision $S$, let $\dual{S}$ denote its dual graph.
We define the dual graph of $S$ as a directed multigraph defined as
follows.
We associated with each face in $S$ a vertex in $\dual{S}$. 
If $v$ and $w$ are vertices in $\dual{S}$, then we draw a directed edge from
$v$ to $w$, and from $w$ to $v$, for each edge they share. 
Now prune the edges of $\dual{S}$ as follows:
for node $v$ remove all outgoing edges except the one corresponding to unique
entry edge on the face corresponding to $v$ in $S$.
The resulting pruned dual graph forms a tree, rooted at the face containing 
the reference point, with all edges directed toward the 
root (for proof see \cite{DBLP:journals/gis/BergKOO97}).
 Unlike general graphs, traversal in a tree can be performed without the need 
for any mark bits, therefore the traversal of $S$ can be performed on this tree 
in order to report the faces of the subdivision. 
The entry face and the parent-child relations in the tree can be determined 
when a face is visited, so there is no need to precompute the tree in the dual;
rather, everything can be handled algorithmically.

In a polygonal subdivision, the determination of the entry edge in a memoryless 
environment involves testing each edge of the polygon in order to determine if 
it is the entry edge. De Berg~\etal accomplished this check by walking the boundary of each face 
encountered. 
Due to this requirement, the running time for this algorithm on a subdivision 
with $|E|$ edges is $\BigOh{|E|^2}$. 
This algorithm was improved by Bose and 
Morin~\cite{DBLP:conf/isaac/BoseM00} to $\BigOh{|E| \lg{|E|} }$. 
Their technique defined a total order on the edges of a face, relative to the 
reference point.
The edge of minimal value in this total order was selected 
as the entry edge for the face. 
When checking if some edge was the entry edge for its polygon, a 'both-ways' 
search\footnote{This two-ways search is similar to leader election in ring 
networks in distributed computing~\cite{santoro2006design}.} was employed, 
resulting in the reduced algorithm complexity. 

In the case of triangular subdivisions, note that the entry edge can always be 
found in constant time, thus the traversal time becomes $\BigOh{N}$ on a 
subdivision on $N$ triangles.


\chapter[Traversal in Trees]{Succinct and I/O Efficient Data Structures for 
  Traversal in Trees}\label{chp:leaf-root}
\chaptermark{Traversal in Trees}

\section{Introduction}

Many operations on graphs and trees can be viewed as the traversal of 
a path. 
Queries on trees, for example, typically involve traversing a path from 
the root to some node, or from some node to the root. 
Often, the datasets represented in graphs and trees are too large to fit 
in internal memory, and traversals must be performed efficiently in external 
memory (EM). 
Efficient EM traversal in trees is important for structures such as suffix 
trees, as well as being a building block in graph searching and shortest path 
algorithms. 
In both cases huge datasets are often dealt with.  
Suffix trees are frequently used to index very large texts or collections 
of texts, while large graphs are common in numerous applications such as 
Geographic Information Systems. 

Succinct data structures were first proposed by Jacobson~\cite{jac_1989}. 
The idea was to represent data structures using space as near the 
information-theoretical lower bound as possible while allowing efficient 
navigation. 
Succinct data structures, which have been studied largely outside the 
external memory model, also have natural applications to large data sets.

In this chapter, we present data structures for traversal in trees that are 
both efficient in the EM setting, and that encode the trees succinctly. 
We are aware of only the work of Clark and Munro~\cite{clark_96} and 
Chien \etal~\cite{DBLP:conf/dcc/ChienHSV08} on succinct full-text indices 
supporting 
efficient substring search in EM, that follows the same track. 
Our contribution is the first such work on general trees that bridges these 
two techniques. 
The novelty of our approach lies in the combination of known blocking schemes
and succinct encodings to create I/O-efficient succinct structures.
The key challenges we address in doing so are keeping memory requirements as 
small as possible in spite of the need to duplicate information, and maintaining
succinct mappings between blocks.

\subsection{Previous Work}
The results in this chapter are presented in the I/O model, whose details 
are given in Chapter~\ref{chp:background} (Section~\ref{ssec:em_model}).
In the I/O model the parameters $B$, $M$, and $N$ are used, respectively, 
to represent the size (in terms of the number of data elements) of a 
block, internal memory, and the problem instance. 
\emph{Blocking} of data structures in the I/O model has reference to the 
partitioning of the data into individual blocks that can subsequently be 
transferred with a single I/O.

Nodine \etal~\cite{ngv_1996} studied the problem of blocking graphs and 
trees for efficient traversal in the I/O model. 
Specifically, they looked at trade-offs between I/O efficiency and space 
when redundancy of data was permitted. 
The authors arrived at matching upper and lower bounds for complete $d$-ary 
trees and classes of general graphs with close to uniform average vertex degree. 
Among their main results they presented a bound of $\Theta(\log_d{B})$ for 
d-ary trees where, on average, each vertex may be represented twice. 
Blocking of bounded degree planar graphs, such as Triangular Irregular Networks
 (TINs), was examined by Agarwal \etal~\cite{DBLP:conf/soda/AgarwalAMVV98}. 
The authors showed how to store a planar graph of size $N$, and of bounded 
degree $d$, in $O(N/B)$ blocks so that any path of length $K$ can be traversed 
using $O(K/ \log_d{B})$ I/Os.

Hutchinson \etal~\cite{hutch_mesh_zeh_2003} examined the case of bottom-up traversal 
where the path begins with some node in the tree and proceeds to the root. 
The authors gave a blocking that supports bottom-up traversal in $O(K/B+1)$ I/Os when the 
tree is stored in $O(N/B)$ blocks (see Lemma~\ref{lem:bot_up_hutch}).  
The case of top-down traversal has been studied more extensively. 
Clark and Munro \cite{clark_96} described a blocking layout that yields a logarithmic 
bound for root-to-leaf traversal in suffix trees. 
Given a fixed independent probability on the leaves, 
Gil and Itai~\cite{gil_itai_1999} presented a blocking layout that yields the minimum 
expected number of I/Os on a root to leaf path.  
In the cache-oblivious model where the sizes of blocks and internal memory are unknown, 
Alstrup \etal~\cite{abdfrt_2004} gave a layout that yields a minimum worst case, 
or expected number of I/Os, along a root-to-leaf path, up to constant factors. 
Demaine \etal~\cite{dem_iac_lan_2004} presented an optimal blocking strategy that 
yields differing I/O complexity depending on the length of the path (see 
Lemma~\ref{lem:top_down_demaine}). 
Finally, Brodal and Fagerberg~\cite{brodal_fagerberg_06} described the giraffe-tree
which, similarly, permits a $O(K/B+1)$ root-to-leaf tree traversal with $O(N)$ space in 
the cache-oblivious model.

\subsection{Our Results} 
Throughout this chapter we assume that $B=\Omega(\lg{N})$, where $N$ is the number of nodes 
of the tree given (i.e., the disk block is of reasonable size)\footnote{In this chapter, 
$\lg{N}$ denotes $\log_2{N}$. 
When another base is used, we state it explicitly (e.g., $\log_B{N}$). }. 
We also assume that $B \le N$, as otherwise a tree on $N$ nodes can be trivially stored 
in a constant number of blocks, and any path on this tree can be traversed in $O(1)$ I/Os. 
Regarding the size of machine words in our model, we adopt the assumption that each word has 
$w = \Theta(\lg N)$ bits. 
This assumption is commonly used in papers on succinct data structures in internal 
memory~\cite{DBLP:journals/talg/RamanRS07}, and we adopt it for the external memory model we use. 

We present two main results:
\begin{enumerate}
\item In Section \ref{sec:bottom-up}, we show how a tree $T$ on $N$ nodes 
with $q$-bit keys, where $q = O(\lg N)$, can be blocked in a succinct fashion 
such that a bottom-up traversal requires $O(K / B+1)$ I/Os using only 
$(2+q)N + q \cdot \left[ \frac{2 \tau  N  (q + 2 \lg B)}{w} +  o(N) \right]  + \frac{8\tau N \lg B}{w}$ 
bits  for any constant $0 < \tau <1$ to store $T$, where $K$ is the 
path length and $w$ is the word size. 
Our data structure is succinct since the above space cost is at most 
$(2 + q) N + q\cdot (\eta N + o(N))$ bits for any constant $0 < \eta < 1$, as 
shown in the discussion after the proof of Theorem~\ref{thm:node-root}. 
When storing keys with tree nodes is not required (this problem is still valid 
since any node-to-root path is well-defined without storing any keys), we can 
represent $T$ in $2N + \frac{\epsilon N \lg B}{w} + o(N)$ bits for any constant 
$0 < \epsilon < 1$ while proving the same support for queries\footnote{The 
numbers of I/Os required to answer queries for the two data structures presented 
in this paragraph are in fact $O(K/(\tau B))$ and $O(K/(\epsilon B))$, respectively. 
We simplify these results using the fact that $\tau$ and $\epsilon$ are both 
constant numbers.}. 
Our technique is based on~\cite{hutch_mesh_zeh_2003} which only considers trees 
in which nodes do not store keys, and achieves an improvement on the space bound 
by a factor of $\lg{N}$.

\item In Section \ref{sec:top_down}, we show that a binary tree on $N$ nodes, with 
keys of size $q=O(\lg{N})$ bits, can be stored using $(3+q)N + o(N)$ bits so that 
a root-to-node path of length $K$ can be reported with: (a) 
$O \left( \frac{K}{\lg(1+(B \lg N)/q)} \right)$ I/Os, when $K = O(\lg N)$; 
(b) $O \left( \frac{\lg N}{\lg (1+\frac{B \lg^2 N}{qK} )} \right)$ I/Os, when 
$K=\Omega(\lg N)$ and $K=O \left( \frac{B \lg^2 N}{q} \right)$; and (c) 
$O \left( \frac{qK}{B \lg N} \right)$ I/Os, when 
$K = \Omega \left( \frac{B \lg^2 N}{q} \right)$. 
This result achieves a $\lg{N}$ factor improvement on the previous space 
cost in~\cite{dem_iac_lan_2004}. 
We further show that (see the discussion after Corollary~\ref{cor:top_down_fixed_keys}), 
when $q$ is constant, our approach not only reduces storage space but also 
improves the I/O efficiency of the result in~\cite{dem_iac_lan_2004}.
\end{enumerate}

\section{Preliminaries}\label{sec:prelim}

\subsection{Bit Vectors} 

The basic definitions for, and operations on, bit vectors have
been described in Chapter~\ref{chp:background}, 
Section~\ref{sec:bckgrnd-succint-ds}.
We repeat here a key lemma which summarizes results on
bit vectors used in the current Chapter.
Part (a) is from Jacobson~\cite{jac_1989} and Clark and 
Munro~\cite{clark_96}
while part (b) is from Raman~\etal~\cite{DBLP:journals/talg/RamanRS07}:

\begin{lemma}\label{lem:rank_select}
A bit vector $V$ of length $N$ can be represented using either: 
(a) $N + o(N)$ bits, or (b) $\lceil\lg{ N \choose R}\rceil + O(N \lg \lg N / \lg N)$ bits, 
where $R$ is the number of $1$s in $V$, to support the access to each bit, $\rankop$ and $\selop$ in 
$O(1)$ time (or $O(1)$ I/Os in external memory). 
\end{lemma}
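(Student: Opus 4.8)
The statement collects two classical constructions, so the plan is to recall each and check that the stated space bounds hold while every query touches only $O(1)$ machine words; since $w=\Theta(\lg N)$ and (in the external setting) $B=\Omega(\lg N)$, this immediately gives $O(1)$ I/Os.

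\textbf{Part (a).} First I would store $V$ verbatim, using $N$ bits, and superimpose the two-level directory of Jacobson~\cite{jac_1989} for $\rankop$: partition the positions into superblocks of $\Theta(\lg^2 N)$ bits and, inside each superblock, into blocks of $\frac12\lg N$ bits; record the cumulative number of $1$s up to each superblock boundary ($\Theta(\lg N)$ bits each, $o(N)$ in total) and, at each block boundary, the cumulative count since the start of the enclosing superblock ($\Theta(\lg\lg N)$ bits each, $O(N\lg\lg N/\lg N)=o(N)$ in total); a universal table indexed by a $\frac12\lg N$-bit pattern together with an in-block offset returns the rank inside a block, occupying $O(\sqrt N\,\mathrm{polylog}\,N)=o(N)$ bits. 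A $\rankop$ query then sums three looked-up values. For $\selop$ I would use the weight-partitioned hierarchy of Clark and Munro~\cite{clark_96}: bucket the $1$s into groups of $\lg^2 N$ consecutive ones, store the positions explicitly for those groups whose bit-span is large, and recurse once more (into groups of roughly $\sqrt{\lg N}$ ones) inside the remaining dense groups, again handling sparse sub-groups explicitly and dense ones with a universal lookup table. The amortised accounting showing that each of these cases contributes only $o(N)$ bits is precisely the argument of~\cite{jac_1989,clark_96}, and is the one delicate point; I would reproduce it rather than re-derive it.

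\textbf{Part (b).} For the entropy-compressed bound I would use the encoding of Raman, Raman and Rao~\cite{DBLP:journals/talg/RamanRS07}: cut $V$ into blocks of $b=\frac12\lg N$ bits and encode block $i$ as a pair $(c_i,o_i)$, where $c_i\in\{0,\ldots,b\}$ is its number of $1$s (its \emph{class}) and $o_i$ is its index, stored in $\lceil\lg\binom{b}{c_i}\rceil$ bits, among the class-$c_i$ patterns in a fixed enumeration. Concatenating the offsets gives total length $\sum_i\lceil\lg\binom{b}{c_i}\rceil \le \sum_i\lg\binom{b}{c_i} + N/b = \lg\prod_i\binom{b}{c_i} + O(N/\lg N) \le \lg\binom{N}{R} + O(N/\lg N)$, the last inequality because any $N$-bit string with $R$ ones splits uniquely into its $b$-bit blocks, so $\prod_i\binom{b}{c_i}\le\binom{N}{R}$. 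The class stream costs $\lceil\lg(b+1)\rceil=O(\lg\lg N)$ bits per block, i.e. $O(N\lg\lg N/\lg N)$ overall; a sampled directory of the partial sums of the $c_i$ and of the offset bit-lengths — one sample per $\Theta(\lg N)$ blocks, $O(\lg N)$ bits each — costs $o(N)$ bits; and a universal decoding table of $O(\sqrt N\,\mathrm{polylog}\,N)$ bits recovers from $(c_i,o_i)$ any bit of the block, its within-block rank, or its within-block select in $O(1)$ time. Combining the directory (to locate the block) with the table (to finish inside it) yields bit access, $\rankop$ and $\selop$ in $O(1)$ time within $\lceil\lg\binom{N}{R}\rceil + O(N\lg\lg N/\lg N)$ bits.

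Finally, for the external-memory claim I would note that each of these queries performs a constant number of probes, each reading $O(1)$ machine words of either the stored data or a contiguously laid-out directory or universal table; since a word is $\Theta(\lg N)=O(B)$ bits, each probe spans $O(1)$ blocks, so every operation costs $O(1)$ I/Os. The only real obstacle is the bookkeeping for the $\selop$ structure in part (a) (and, much less severely, the convexity step in part (b)); both are standard, and I would refer to~\cite{jac_1989}, \cite{clark_96} and~\cite{DBLP:journals/talg/RamanRS07} for the full details.
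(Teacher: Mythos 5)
Your proposal is correct, and it takes the same route the paper does: the paper states this lemma purely as a citation to Jacobson~\cite{jac_1989}, Clark and Munro~\cite{clark_96}, and Raman, Raman and Rao~\cite{DBLP:journals/talg/RamanRS07}, offering no proof of its own, and your sketch is a faithful reconstruction of exactly those three constructions (Jacobson's two-level rank directory, Clark--Munro's hierarchical select structure, and the RRR block-class/offset encoding with the $\prod_i\binom{b}{c_i}\le\binom{N}{R}$ bound). The observation that each query touches $O(1)$ words and hence $O(1)$ blocks, converting constant time to constant I/Os, is also the implicit step the paper relies on.
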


\subsection{Succinct Representations of Trees}\label{sec:succincttrees} 
As there are  ${2N \choose N} / (N+1)$ different binary trees (or ordinal trees)
on $N$ nodes, the information-theoretic lower bound of representing a binary 
tree (or ordinal tree) on $N$ nodes is $2N-O(\lg n)$. 
Thus, various approaches~\cite{jac_1989,munro_raman_01,benoit_et_al_2005} have 
been proposed to represent a binary tree (or ordinal tree) in $2N+o(N)$ bits
while supporting efficient navigation. 
Jacobson~\cite{jac_1989} first presented the \emph{level-order binary marked} 
(LOBM) structure for binary trees, which can be used to encode a binary tree 
as a bit vector of $2N$ bits. 
He further showed that operations such as retrieving the left child, the right 
child, and the parent of a node in the tree can be performed using $\rankop$ and 
$\selop$ operations on bit vectors. 
We make use of his approach in order to encode tree structures in Section~\ref{sec:top_down}. 

Another approach that we use in this work is based on the isomorphism between 
\emph{balanced parenthesis sequences} and ordinal trees. 
The balanced parenthesis sequence of a given tree can be obtained by performing 
a depth-first traversal and outputting an opening parenthesis the first time 
a node is visited, and a closing parenthesis after we visit all its descendants. 
Based on this, Munro and Raman~\cite{munro_raman_01} designed a succinct representation
of an ordinal tree of $N$ nodes in $2N+o(N)$ bits, 
which supports the computation of the parent, the depth, and the number of descendants 
of a node in constant time, and the $i$\textsuperscript{th} child of a node in 
$O(i)$ time. 
Lu and Yeh~\cite{DBLP:journals/talg/LuY08} further extended this representation 
to support the computation of the $i$\textsuperscript{th} child of a node and 
several other operations in $O(1)$ time. 

\subsection{I/O Efficient Tree Traversal} Hutchinson \etal  
\cite{hutch_mesh_zeh_2003} presented a blocking technique for rooted trees in 
the I/O model that supports bottom-up traversal. 
Their result is summarized in the following lemma:

\begin{lemma}[\cite{hutch_mesh_zeh_2003}]\label{lem:bot_up_hutch}
A rooted tree $T$ on $N$ nodes can be stored in $O(N/B)$ blocks on disk such 
that a bottom-up path of length $K$ in $T$ can be traversed in $O(K/ B+1)$ I/Os. 
\end{lemma}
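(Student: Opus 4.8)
The plan is to follow the layering scheme already sketched in Section~\ref{ssec:tree_blocking}, making precise the two points that actually need justification: that every proposed block fits into $B$ slots, and that the total number of blocks is $O(N/B)$. Everything else is routine bookkeeping.

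First I would fix a constant $0<\tau<1$ and cut $T$ into $\lceil h/(\tau B)\rceil$ horizontal \emph{layers} $L_0,L_1,\dots$, where $L_i$ consists of the nodes whose depth lies in $[\,i\tau B,(i+1)\tau B\,)$. Each $L_i$ decomposes into a forest of subtrees whose roots are the nodes of $L_i$ at depth $i\tau B$; concatenating these subtrees left to right gives a preorder numbering $1,\dots,|L_i|$ of the nodes of $L_i$. I would chop this numbering into consecutive groups $V_0,\dots,V_{s_i}$ of $t:=B-\tau B$ nodes each (the last group possibly smaller), and for each group $V_j$ let $A_j$ be the set of nodes of $L_i$ that are proper ancestors, within $L_i$, of some node of $V_j$. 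The block $\mathcal{B}_j$ stores the subforest of $T$ induced on $V_j\cup A_j$, together with the local parent structure and, for each node that is a subtree root of $L_i$, a pointer to the block of the layer above that serves its parent.

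The crux is to show $|V_j\cup A_j|\le B$, which, since $|V_j|\le t$, reduces to $|A_j|\le\tau B$. Because $L_i$ spans only $\tau B$ levels, it suffices that $A_j$ contains at most one node per level; and if $q,r\in A_j$ lay on the same level with preorder numbers $p(q)<p(r)$, then preorder numbers the entire subtree rooted at $q$ before reaching $r$, so $r$ would lie either in $V_j$ or in a subtree disjoint from $V_j$ (the nodes of $V_j$ being numbered consecutively), contradicting $r\in A_j$. Hence each $\mathcal{B}_j$ fits in one block. This is the one genuinely combinatorial step, and I expect it to be the main obstacle; it is exactly the Claim proved just before Lemma~\ref{lem:bot_up_hutch} in Section~\ref{ssec:tree_blocking}, so I can invoke it directly.

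For the block count, layer $L_i$ uses $\lceil |L_i|/t\rceil\le |L_i|/t+1$ blocks, so over all layers we use at most $\sum_i |L_i|/t+\lceil h/(\tau B)\rceil = N/t+O(h/B)$ blocks; since $t=(1-\tau)B=\Theta(B)$ and $h\le N$, this is $O(N/B)$, and packing the partially filled blocks together leaves it $O(N/B)$. Finally, for a query: a bottom-up path of length $K$ meets at most $K/(\tau B)+1$ layers, and within a layer $L_i$ the path runs monotonically upward from its lowest node $v$ to the root of the $L_i$-subtree containing $v$ (or exits at the top of $L_i$); every node on this subpath is either $v$ or a within-layer ancestor of $v$, hence all of them lie in the single block $\mathcal{B}_j$ with $v\in V_j$, so one I/O per layer suffices. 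This gives $O(K/(\tau B)+1)=O(K/B+1)$ I/Os since $\tau$ is constant.
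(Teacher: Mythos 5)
Your proposal reproduces the Hutchinson et al.\ blocking scheme exactly as sketched in Section~\ref{ssec:tree_blocking}: layers of height $\tau B$, preorder numbering within each layer, groups of $t = B - \tau B$ consecutive nodes, and the key claim $|A_j| \le \tau B$, which you correctly identify as the crux and invoke from the text. The block-count and per-layer single-I/O arguments are both sound, so this matches the paper's approach.
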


\noindent Their data structure involves cutting $T$ into layers of height 
$\tau B$, where $\tau$ is a constant ($0 < \tau < 1$). 
A forest of subtrees is created within each layer, and the subtrees are stored in blocks. 
If a subtree needs to be split over multiple blocks, the path to the top of the layer is stored for that block. 
This ensures that the entire path within a layer can be read by performing a single I/O.

To support top-down traversal, Demaine \etal~\cite{dem_iac_lan_2004} described an optimal blocking technique for binary trees that bounds the number of I/Os in terms of the depth of the node within $T$. The blocking has two phases. The first phase blocks the top $c \lg N$ levels of the tree, where $c$ is a constant, as if it were a complete tree. In the second phase, nodes are assigned recursively to blocks in a top-down manner. 
The proportion of nodes in either child's subtree assigned to the current block is determined by the sizes of the subtrees.  
The following lemma summarizes their results:

\begin{lemma} [\cite{dem_iac_lan_2004}]\label{lem:top_down_demaine}
A binary tree $T$ on $N$ nodes can be stored in $O(N/B)$ blocks on disk such that a traversal from the root to a node of depth $K$ requires the following number of I/Os:
\begin{enumerate}
   \item $\Theta( K/\lg(1+B))$, when $K = O(\lg N)$,
   \item $\Theta( \lg N/ (\lg (1+B \lg N/K)))$, when $K = \Omega(\lg N)$ and $K = O(B \lg N)$, and 
   \item $\Theta(K/B)$, when $K = \Omega(B \lg N)$. 
\end{enumerate}
\end{lemma}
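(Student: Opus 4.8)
The plan is to analyze the two–phase blocking just described and show that a root‑to‑node path of length $K$ crosses the claimed number of blocks, while the layout still occupies $O(N/B)$ blocks. The first step is to split the path at the phase boundary: write $K = K_1 + K_2$, where $K_1 = \min(K, c\lg N)$ is the portion lying in the top $c\lg N$ levels (Phase~1) and $K_2 = \max(0, K - c\lg N)$; note $K_2 = O(\lg N)$ whenever $K = O(\lg N)$, but $K_2 = \Theta(K)$ once $K = \omega(\lg N)$. The total I/O count is the number of Phase‑1 blocks crossed plus the number of Phase‑2 blocks crossed, so it suffices to bound each. For the space bound I would argue, in each phase, that every block closed before the recursion reaches a leaf of the (sub)tree is completely full with $B$ nodes and can be charged to those nodes, while every leaf‑terminated block is charged $O(1)$ to a nearby full block using the fact that branching is binary (and that only non‑empty Phase‑1 blocks are stored); this gives $O(N/B)$ blocks overall.

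\textbf{Phase 1.} The top $c\lg N$ levels are cut into layers of height $h = \lfloor\lg(B+1)\rfloor$, and within each layer every subtree has at most $2^{h}-1 \le B$ nodes, hence fits in one block. A downward path of length $K_1$ passes through $\lceil K_1/h\rceil = \Theta(K_1/\lg(1+B))$ such layers, and since any path confined to a single Phase‑1 block spans at most $h$ nodes, $\Omega(K_1/h)$ blocks are also necessary; thus Phase~1 costs $\Theta(K_1/\lg(1+B))$ I/Os. When $K = O(\lg N)$ this already establishes case~(1); when $K = \omega(\lg N)$ the path traverses all of Phase~1 and contributes exactly $\Theta(\lg N/\lg(1+B))$.

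\textbf{Phase 2 (the crux).} I would prove the following claim about the weight‑balanced recursive blocking: in a Phase‑2 subtree $T'$ with $n' \le N$ nodes, a root‑to‑node path of length $\ell$ crosses $O\!\left(\lg n'/\lg(1 + B\lg n'/\ell)\right)$ blocks (plus a constant) when $\ell = O(B\lg n')$, and $O(\ell/B)$ blocks when $\ell = \Omega(B\lg n')$. Let $B_1,\dots,B_m$ be the blocks met by the path, with $B_{i+1}$ entered at node $u_{i+1}$ after descending $d_i$ edges inside $B_i$ (so $u_{i+1}$ is the root of $B_{i+1}$, $\sum_{i} d_i \le \ell$, and $1 \le d_i < B$ since a block holds fewer than $B$ path nodes). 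A new block is opened at $u_{i+1}$ precisely when the capacity allotted there hits $0$; tracing the recursion, the capacity starts at $B$ at $u_i$ and is scaled at each step by the ratio $w(\text{child on path})/w(\text{current node}) \le 1$, with the $-1$ per level only making it smaller, so that $A_{d_i} \ge B\prod_{j=1}^{d_i} w(x_j)/w(x_{j-1}) - O(d_i) = 0$ forces $w(u_{i+1})/w(u_i) = O(d_i/B)$. Hence each block shrinks the subtree weight by a factor $\Omega(B/d_i) > 1$. Telescoping with $w(u_m) \ge 1$ and $w(u_1) = n'$ gives $\sum_{i=1}^{m-1}\lg(B/d_i) = O(\lg n')$ subject to $\sum_{i=1}^{m-1} d_i \le \ell$ and $1 \le d_i < B$. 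Maximizing $m$ under these constraints — by convexity the extremum equalizes the $d_i$, reducing to $x\lg(Bx/\ell) = \Theta(\lg n')$ — yields $x = \Theta(\lg n'/\lg(1 + B\lg n'/\ell))$ in the regime $\ell = O(B\lg n')$ and $x = \Theta(\ell/B)$ once $\ell = \Omega(B\lg n')$; matching lower bounds follow from explicit trees whose heavy path realizes the equalized $d_i$.

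\textbf{Combining.} For case~(2), $K = \Omega(\lg N)$ and $K = O(B\lg N)$: the path traverses all of Phase~1 at cost $\Theta(\lg N/\lg(1+B))$ and then runs a length‑$\Theta(K)$ path in a Phase‑2 subtree of size $\le N$, costing $O(\lg N/\lg(1 + B\lg N/K))$ by the claim; since $K = O(B\lg N)$, the Phase‑1 term is also $O(\lg N/\lg(1 + B\lg N/K))$, so the total is $\Theta(\lg N/\lg(1 + B\lg N/K))$. For case~(3), $K = \Omega(B\lg N)$: the Phase‑2 portion has length $\Theta(K) = \Omega(B\lg N) = \Omega(B\lg n')$, so it costs $\Theta(K/B)$, which dominates the $O(\lg N/\lg(1+B)) = O(\lg N)$ Phase‑1 cost because $K/B = \Omega(\lg N)$; the total is $\Theta(K/B)$. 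I expect the main obstacle to be the Phase‑2 step — establishing the inequality $w(u_{i+1})/w(u_i) = O(d_i/B)$ rigorously (the floors and the per‑level $-1$ must be tracked carefully) and then carrying out the constrained optimization cleanly, including the matching lower‑bound constructions needed for the $\Theta$ bounds.
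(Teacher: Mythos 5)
Note first that the paper itself does not prove Lemma~\ref{lem:top_down_demaine}: it describes Demaine~\etal's two-phase blocking and then quotes the I/O bound from~\cite{dem_iac_lan_2004} without argument, so there is no in-paper proof for you to match against. Your reconstruction of the original argument is a reasonable plan, and the inequality you flag as the main obstacle does go through cleanly. Along a Phase-2 block, write $u_i = x_0, x_1, \dots, x_{d_i} = u_{i+1}$ and set $a_j := A_j/w(x_j)$; the update $A_j = (A_{j-1}-1)\,w(x_j)/w(x_{j-1})$ gives $a_j = a_{j-1} - 1/w(x_{j-1})$, and since $w(x_j) > w(u_{i+1})$ for every $j < d_i$,
\[
a_{d_i} \;=\; \frac{B}{w(u_i)} - \sum_{j=0}^{d_i-1}\frac{1}{w(x_j)} \;>\; \frac{B}{w(u_i)} - \frac{d_i}{w(u_{i+1})}\enspace.
\]
The new block opens when $A_{d_i} < 1$, i.e.\ $a_{d_i} < 1/w(u_{i+1})$, which combined with the display gives $w(u_{i+1})/w(u_i) < (d_i+1)/B$: the per-level $-1$ costs only an additive $1$ in the numerator, so your $O(d_i/B)$ decay holds for all $d_i \ge 1$.

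Several pieces are still missing before this is a proof rather than an outline. The constrained maximization of $m$ subject to $\sum_i\lg\!\bigl(B/(d_i+1)\bigr) \le \lg n'$, $\sum_i d_i \le \ell$, $1 \le d_i < B$ is where the real work lies: when $\ell = \Omega(B\lg n')$ the first constraint goes slack (take $d_i = B-1$ and each term vanishes) and only the second binds, giving $m = O(\ell/B)$; when $\ell = O(B\lg n')$ both constraints are active and you get the interpolating bound — the ``equalize the $d_i$ by convexity'' step must be checked against the box constraint $d_i < B$, since it is precisely that constraint which separates case~(2) from case~(3). The $\Omega$-sides of all three $\Theta$ statements also need explicit adversarial trees, not an appeal to the extremal $d_i$-profile. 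Finally, the $O(N/B)$ space claim is asserted but not demonstrated: a new Phase-2 block opened over a tiny subtree occupies far fewer than $B$ slots, and one full block can hand off new blocks to up to $B+1$ boundary children, so ``charge each leaf-terminated block $O(1)$ to a nearby full block'' does not obviously bound the count — you need either a sharper charging scheme or, as the thesis's own Chapter~\ref{chp:leaf-root} does for its terminal blocks, a packing of small blocks into shared disk blocks. These are exactly the technical parts of the original paper; your plan points at them correctly but does not yet discharge them.
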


\section{Bottom-Up Traversal}\label{sec:bottom-up}

In this section, we present a set of data structures that encode a tree $T$ 
succinctly so that the number of I/Os performed in traversing a path from a given node 
to the root is asymptotically optimal. 
We wish to maintain a key with each node, and each key can be encoded with 
$q = O(\lg N)$ bits. 
Given the bottom-up nature of the queries, there is no need to check a node's 
key value while traversing, 
since the path always proceeds to the current node's parent. 
Thus, after we present our results on representing trees on nodes with keys, 
we also show how to encode trees whose nodes do not store key values in a corollary. 

In this section a node, $v$, of $T$ is uniquely identified by the number 
of the layer containing $v$ (the notion of layer is to be defined later) and by 
the preorder number of $v$ in its layer. 
Thus traversing a path means returning the identifiers of the nodes along the path together with the associated keys if keys are maintained. 

\subsection{Blocking Strategy} 
\label{sec:blocking}

Our blocking strategy is inspired by \cite{hutch_mesh_zeh_2003}; we have modified their technique and introduced new notation. 
Firstly, we give an overview of our approach. 
We partition $T$ into layers of height $\tau B$ where $0 < \tau < 1$. 
We permit the top layer and the bottom layer to contain fewer than $\tau B$ levels,
as doing so provides the freedom to partition the tree into layers with a desired distribution of nodes. 
See Figure~\ref{fig:layers} for an example. 
We then group the nodes of each layer into {\em tree blocks}
\footnote{A tree block is a portion of the tree, which is different from
 the notion of disk block. 
 A tree block is not necessarily stored in a disk block, either. 
 In the rest of the chapter, when the context is clear, we may use 
 the term block to refer to either a tree block or a disk block.}, 
 and we store with each block a \emph{duplicate path} which is defined 
 later in this section. 
In order to bound the space required by block duplicate paths, we further
 group blocks into \emph{superblocks}. 
 The duplicate path of a superblock's first block is the 
 \emph{superblock duplicate path} for that superblock. 
 By loading at most the disk block containing a node, along with its
 associated duplicate path and the superblock duplicate path, we 
 demonstrate that a layer can be traversed with at most $O(1)$ I/Os. 
 A set of bit vectors, to be described later, that maps the nodes at 
 the top of one layer to their parents in the layer above is used 
 to navigate between layers.

\begin{figure}[ht]
	\centering
		\includegraphics{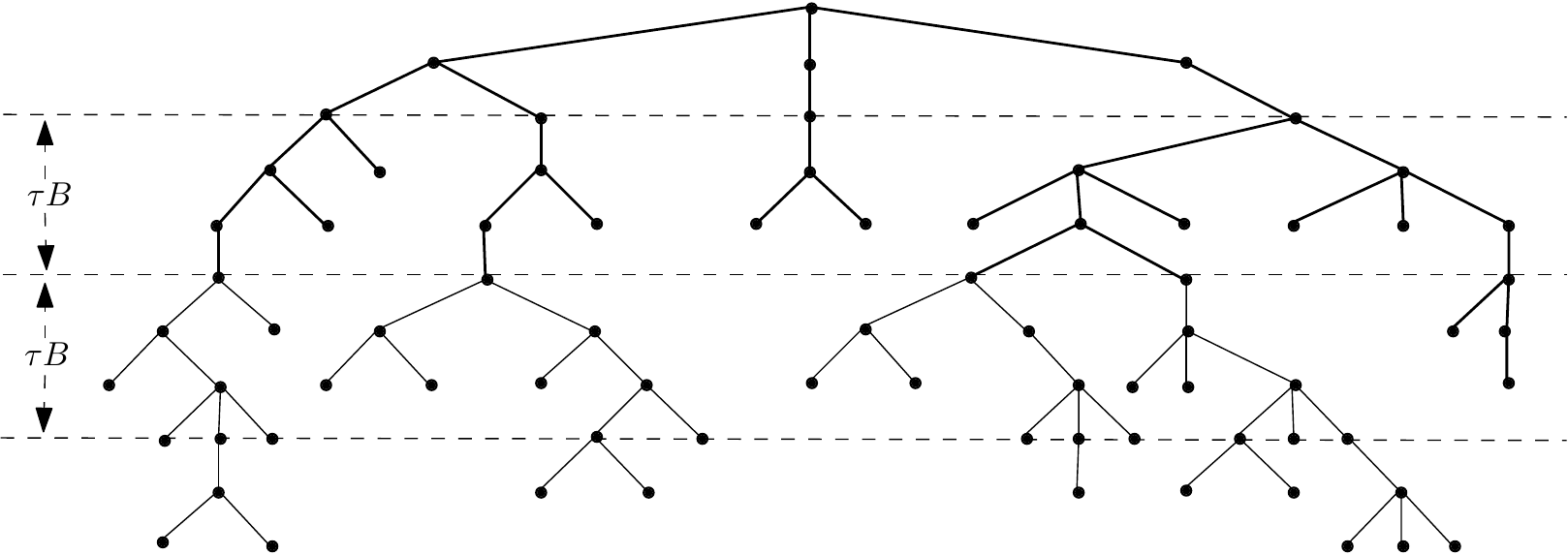}
	\caption{An example of partitioning a tree into layers}
	\label{fig:layers}
\end{figure}
 
Layers are numbered starting at $1$ for the topmost layer. 
As stated in the previous paragraph, we have the flexibility to choose an arbitrary level within the first $\tau B$ levels as the top of the second layer. 
Given this flexibility, we can prove the following lemma:
\begin{lemma}\label{lem:chooselayer}
There exists a division of $T$ into layers such that the total number of nodes on the top level of layers is bounded by $\lceil N /(\tau B)\rceil$. 
\end{lemma}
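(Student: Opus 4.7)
The plan is a straightforward averaging argument over the possible offsets for the first layer boundary. Once one chooses where the second layer begins, the rest of the layer tops are determined: if the top of the second layer lies at level $t \in \{1, 2, \ldots, \tau B\}$, then the tops of layers are precisely the levels $0, t, t + \tau B, t + 2\tau B, \ldots$ (up to the height of $T$). So I will parametrize the $\tau B$ admissible layerings by $t$ and bound the minimum over $t$ of the total number of nodes on top levels.

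Let $f(t)$ denote that total, and let $n_j$ denote the number of nodes of $T$ at depth $j$, so $\sum_{j} n_j = N$. First I would bound $\sum_{t=1}^{\tau B} f(t)$ by double counting. The root, on level $0$, is a top-level node for \emph{every} $t$, contributing $\tau B \cdot n_0 = \tau B$ to the sum. For each level $j \geq 1$, there is exactly one $t \in \{1, \ldots, \tau B\}$ with $j \equiv t \pmod{\tau B}$, namely $t = j - \lfloor (j-1)/(\tau B)\rfloor \cdot \tau B$, so the $n_j$ nodes at level $j$ are counted in exactly one $f(t)$. Summing yields $\sum_{t=1}^{\tau B} f(t) = \tau B + \sum_{j \geq 1} n_j = \tau B + N - 1$.

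By pigeonhole some offset $t^\star$ achieves $f(t^\star) \leq (\tau B + N - 1)/(\tau B) = 1 + (N-1)/(\tau B)$. Since $f(t^\star)$ is an integer, this forces $f(t^\star) \leq 1 + \lfloor (N-1)/(\tau B)\rfloor$, and I would finish by invoking the elementary identity $1 + \lfloor (N-1)/m \rfloor = \lceil N/m \rceil$ for positive integers $N$ and $m$, applied with $m = \tau B$, to obtain $f(t^\star) \leq \lceil N/(\tau B) \rceil$. Choosing the layering with offset $t^\star$ then proves the lemma.

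I do not anticipate a genuine obstacle; the argument is essentially combinatorial. The only delicate points are (i) making sure the range of admissible offsets really is $\{1, \ldots, \tau B\}$ rather than $\{1, \ldots, \tau B - 1\}$ — the value $t = \tau B$ is what allows the bound to be tight when $N$ is a multiple of $\tau B$, by absorbing the root into a single full-height first layer and avoiding the extra overcount — and (ii) the integer-rounding step at the end. I would verify the former against the phrase ``an arbitrary level within the first $\tau B$ levels'' in the setup before formalizing the proof.
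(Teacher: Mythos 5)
Your proof is correct and follows essentially the same route as the paper's: both count each non-root node exactly once and the root $\tau B$ times across the $\tau B$ admissible layerings, obtain the sum $N - 1 + \tau B$, and apply an averaging/pigeonhole step with the same integer-rounding conclusion. The only cosmetic difference is that the paper writes the final bound as $\lfloor (N-1+\tau B)/(\tau B)\rfloor \le \lceil N/(\tau B)\rceil$ rather than via your identity $1 + \lfloor (N-1)/m\rfloor = \lceil N/m\rceil$.
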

\begin{proof}
There are $\tau B$ different ways to divide $T$ into layers. 
Let $s_i$ denote the total number of nodes on the top level of layers under the $i$\textsuperscript{th} way of dividing $T$, and let $S_i$ denote the set of such nodes. 
We observe that given a node of $T$ that is not the root, there is only one value of $i$ such that this node is in $S_i$, while the root of $T$ appears once in each $S_i$. 
Therefore, we have $\sum_{i=1}^{\tau B} s_i = N - 1 + \tau B$. 
Then $\min{s_i} \le \lfloor (N-1+\tau B) / (\tau B) \rfloor = \lfloor N/(\tau B) + 1 - 1/(\tau B) \rfloor \le \lceil N/(\tau B) \rceil$. 
\end{proof}

Thus, we pick the division of $T$ into layers that makes the total number 
of nodes on the top level of layers bounded by $\lceil N /(\tau B)\rceil$. 
Let $L_i$ be the $i$\textsuperscript{th} layer in $T$. 
The layer is composed of a forest of subtrees whose roots are all at the 
top level of $L_i$.  
We now describe how the blocks and superblocks are created within $L_i$.
We number $L_i$'s nodes in preorder, starting from $1$ for the leftmost 
subtree, and number the nodes of the remaining subtrees from left to right. 
Once the nodes of $L_i$ are numbered, they are grouped into tree blocks 
of consecutive preorder number. 
The exact number of nodes in each tree block is to be determined later. 
We term the first tree block in a layer the \emph{leading block}, and 
the remaining tree blocks in a layer {\em regular blocks}. 
Each superblock, excepting possibly the first one in a layer (which we term 
the \emph{leading superblock}), contains exactly $\lfloor \lg B \rfloor$ 
tree blocks (see  Figure \ref{fig:block_group}). 
We term each of the remaining superblocks a {\em regular superblock}. 

\begin{figure}[t]
	\centering
		\includegraphics{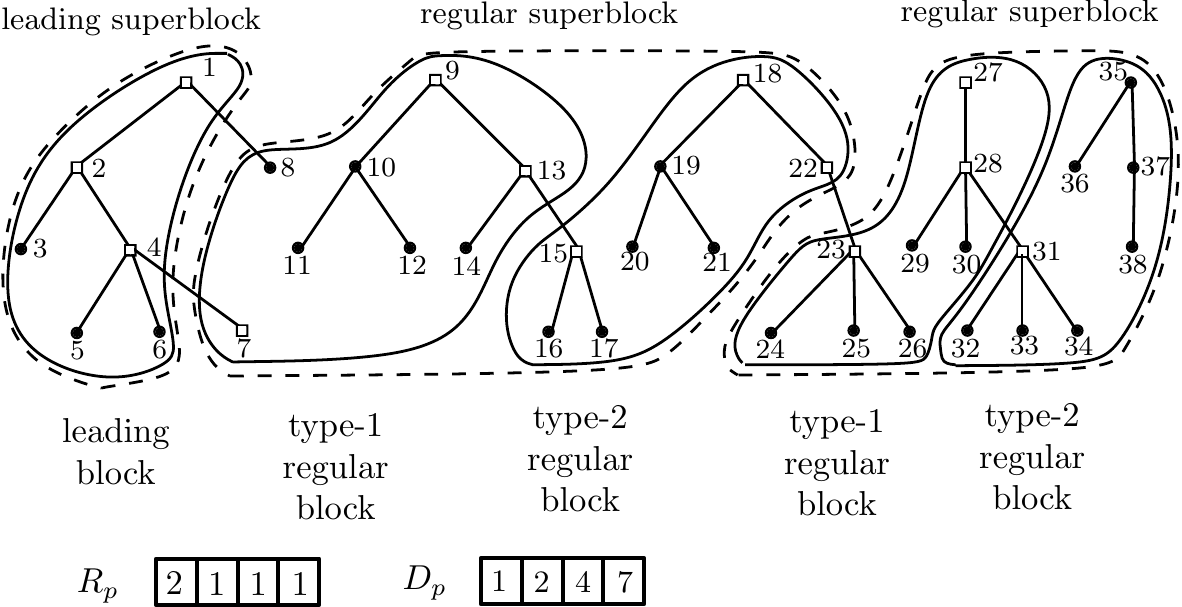}
	\caption[Labelling nodes in a tree]{Blocking within a layer 
(Layer $L_3$ in Figure~\ref{fig:layers}) is shown, 
along with block (solid lines) and superblock (dashed lines) boundaries. 
Numbers shown are preorder values in the layer. 
Nodes on the duplicate paths are indicated by a hollow square. 
The key values stored with nodes are not shown.
}
	\label{fig:block_group}
\end{figure}

We select as a superblock's {\em duplicate path} the path from the node with 
minimum preorder number in the superblock to the layer's top level. 
Similarly, a tree block's duplicate path is defined as the path from the node 
with minimum preorder number in the block to the lowest ancestor of this node 
whose parent is outside the superblock, if such an ancestor exists, or to the 
root otherwise. 
In the example in Figure~\ref{fig:block_group}, the duplicate path of the second 
block consists of nodes $7$, $4$, $2$, and $1$. 
A duplicate path has at most $\tau B$ nodes and satisfies the following property (this is analogous to Property 3 in Hutchinson~\etal~\cite{hutch_mesh_zeh_2003}):

\begin{property}\label{prop:dup-path}
Consider a superblock $Y$. 
For any node $x$ in $Y$, there exists either a path from $x$ to the top of its 
layer consisting entirely of nodes in $Y$, or a path from $x$ to a node on the 
duplicate path of $Y$ consisting entirely of nodes in $Y$ plus one node on the 
duplicate path. 

Similarly, for a tree block $Y'$ and a node $x'$ in $Y'$, there exists either a 
path, consisting entirely of nodes in $Y'$, from $x'$ to the root or a node 
whose parent is outside the superblock containing $Y'$, or a path from $x'$ to 
a node on the duplicate path of $Y'$ consisting entirely of nodes in $Y'$ plus 
one node on the duplicate path. 
\end{property}

\begin{proof}
Let $v$ be the node with the minimum preorder number in $Y$, and in the forest 
stored in $Y$ let $T_v$ be the subtree that contains $v$. 
For example, in Figure~\ref{fig:block_group}, if $Y$ is the fourth block, 
then $T_v$ is the subtree consisting of nodes $23$, $24$, $25$ and $26$. 
As the preorder number of $v$ is smaller than that of any other node in $T_v$, 
we conclude that node $v$ is the root of $T_v$. 
For the case in which $x \in T_v$, because the path from $x$ to $v$ is entirely 
within $T_v$, and $v$ is on the duplicate path of $Y$, our claim is true. 

The second case is $x \notin T_v$. In the forest stored in $Y$, let $T_x$ be the 
subtree that contains $x$, and let node $y$ be its root. 
If $y$ is at the top level of its layer, as the path from $x$ to $y$ contains 
only nodes in $Y$, the lemma follows directly.  
Thus, we need only consider the case in which $y$ is not at the top level of 
this layer, and it suffices to prove that the parent, $z$, of $y$ is on the 
duplicate path of $Y$. 
Assume to the contrary that $z$ is not (i.e., $z \ne v$ and $z$ is not $v$'s 
ancestor). 
As the preorder number of $z$ is smaller than that of $y$, $z$ is in a 
superblock, $Z$, that is to the left of $Y$. 
Therefore, the preorder number of $z$ is smaller than that of $v$. 
As $v$ is not a descendant of $z$, by the definition of preorder traversal, 
the preorder number of $v$ is larger than any node in the subtree rooted at 
$z$ including $y$, which is a contradiction. 

The second claim in this property can be proved similarly.
%
\end{proof}

The sizes of tree blocks are dependent on the approach we use to store them on disk. 
When storing tree blocks in external memory, we treat leading blocks and regular 
blocks differently. 
We use a disk block to store a regular block along with the representation of its
duplicate path, or the superblock duplicate path if it is the first tree block in 
a superblock. 
In such a disk block, we refer to the space used to store the duplicate path as 
\emph{redundancy} of the disk block. 
For simplicity, such space is also referred to as the redundancy of the regular tree block stored in this disk block. 
To store the tree structure and the keys in our succinct tree representation, 
we require $2 + q$ bits to represent each node in the subtrees of $L_i$. 
Therefore, if a disk block of $Bw$ bits (recall that $w = \Theta(\lg N)$ 
denotes the word size) has redundancy $W$, the maximum number of nodes 
that can be stored in it is: 

\begin{equation}\label{equ:redundancy}
   \succblksize = \left\lfloor \frac{B w - W}{2+q}\right\rfloor 
\end{equation}

Layers are blocked in such a manner that when a regular block is stored in a disk block, 
it has the maximum number of nodes as computed above, and the leading block is 
the only block permitted to have fewer nodes than any regular block. 
There are two types of regular blocks: a {\em type-1 regular block} is the first
block in a regular superblock while a {\em type-2 regular block} is a regular 
block that is not the first block in its superblock. 
In our representation, the redundancy in each disk block that stores a type-1 or type-2 regular block is fixed. 
Therefore, the number, $\succblksize_1$, of nodes in a type-1 regular block and the number, $\succblksize_2$, of nodes in a type-2 regular block are both fixed. 
In order to divide a layer into tree blocks it suffices to know the values of 
$\succblksize_1$ and $\succblksize_2$ (we give these values in Lemma~\ref{lem:noofnodes}).  
More precisely, we first compute the number, $s$, of nodes in a regular superblock using $s = \succblksize_1+(\lfloor \lg B \rfloor-1)\succblksize_2$. 
Let $l_i$ be the number of nodes in layer $L_i$. 
We then put the last $s\lfloor l_i/s\rfloor$ nodes into $\lfloor l_i/s\rfloor$ superblocks of which each can be easily divided into tree blocks. 
Finally, the first $l_i'=l_i \bmod s$ nodes are in the leading superblock in 
which the first $l_i'\bmod \succblksize_2$ nodes form the leading block. 
As the sizes of leading blocks can be arbitrarily small, we pack them into a sequence of disk blocks.

\subsection{Data Structures}\label{sec:bottomup_ds}

Each tree block is encoded by five data structures:

\begin{enumerate}

\item The block keys, $B_k$, is an $\succblksize$-element array which encodes the keys of the tree block. 

   \item An encoding of the tree structure, denoted $B_t$. The subtree(s) contained 
within the block are encoded as a sequence of balanced parentheses (see Section~\ref{sec:succincttrees}). 
Note that in this representation, the $i$\textsuperscript{th} opening  parenthesis corresponds to 
the $i$\textsuperscript{th} node in preorder in this block. 
More specifically, a preorder traversal of the subtree(s) is performed (again from 
left to right for blocks with multiple subtrees). 
At the first visit to a node, an opening parenthesis is output.
When a node is visited for the last time (going up), a closing parenthesis is output. 
Each matching parenthesis pair represents a node while the parentheses in 
between represent the subtree rooted at that node. 
For example, the fourth block in Figure~\ref{fig:block_group} is encoded as $(()()())((()()))$. 

   \item The duplicate path array, $D_p[1..\tau B]$. 
Entry $D_p[j]$ stores the preorder number of the node at the 
$j$\textsuperscript{th} level in the layer on the duplicate path, and a $0$ if 
such a node does not exist (recall that preorder numbers begin at 1, so the 0 value effectively flags an entry as invalid). 
In order to identify each node in $D_p$, there are three cases (let $v$ be the node with the smallest preorder number in the block): 
\begin{enumerate}
\item The block is a leading block. 
Consequently, the node $v$ is the only node on the duplicate path. Thus, for a leading block, we do not store $D_p$. 
\item The block is a type-1 regular block. For such a block, $D_p$ stores the preorder numbers of the nodes on the superblock duplicate path with respect to the preorder numbering in the layer. 
For example, in Figure~\ref{fig:block_group}, the duplicate path array for the 
second block stores $1, 2, 4$, and $7$. 
Note that it may be the case that $v$ is not at the $\tau B$\textsuperscript{th} level of the layer. In this case, the last one or more entries of $D_p$ are set to $0$. 
\item The block is a type-2 regular block. In this case, $D_p$ stores the duplicate path of this block, and each node in $D_p$ is identified by its preorder number with respect to the block's superblock. 
Then the duplicate path array for the third block in Figure~\ref{fig:block_group} 
stores $3, 7, 9$, and $0$. 
Note that in addition to the possibility that the last one or more entries of 
$D_p$ can be set to $0$s as described in case (b), the first one or more entries 
can also possibly be set to $0$s, 
in the case that the block duplicate path does not reach the top of the layer.
\end{enumerate}

\item The duplicate path key array, $D_k[1..\tau B]$, storing the keys of the nodes in $D_p$. More precisely, $D_k[i]$ stores the key value of node $D_p[i]$ for $ 1 \le i \le j$. 

   \item The root-to-path array, $R_p[1..\tau B]$, for each regular block.  
A regular block may include subtrees whose roots are not at the top level of the layer. 
Among them, the root of the leftmost subtree is on the duplicate path, and by 
Property~\ref{prop:dup-path}, the parents of the roots of the rest of all such 
subtrees are on the duplicate path of the block.  
$R_p$ is constructed to store such information, in which $R_p[j]$ stores the number 
of subtrees whose roots are either on the duplicate path somewhere between, and 
including, level $j$ and level $\tau B$ of the layer, or have parents on the 
duplicate path at one of these levels of the layer. 
The number of subtrees whose roots are children of the node stored in $D_p[j]$ 
can be calculated by evaluating $R_p[j] - R_p[j+1]$, if $D_p[j]$ is not the node 
with the smallest preorder number in the regular block. 
For example, in Figure~\ref{fig:block_group}, the second block has three subtrees. 
The root of the leftmost subtree is node $7$, and it is on the duplicate path of 
this block at level $4$. 
The parent of the root of subtree in the middle is node $1$ which is on the 
duplicate path at level $1$. The root of the rightmost subtree is at the top level. 
Thus, the content of the root-to-path array for the second block is $2, 1, 1, 1$. 
\end{enumerate}


For an arbitrary node $v \in T$, let $v$'s layer number be $\ell_v$ and its preorder number within the layer be $p_v$.  Each node in $T$ is uniquely represented by the pair $(\ell_v,p_v)$. Let $\pi$ define the lexicographic order on these pairs. Given a node's $\ell_v$ and $p_v$ values, we can locate the node by navigating within the corresponding layer. The challenge is how to map between the roots of one layer and their parents in the layer above.  
Consider the set of $N$ nodes in $T$. 
We define the following data structures which facilitate mapping between layers:

\begin{enumerate}

\item { Bit vector $\mathcal{V}_{first}[1..N]$, where $\mathcal{V}_{first}[i]=1$ 
iff the $i$\textsuperscript{th} node in $\pi$ is the first node within its layer.}


\item{ Bit vector $\mathcal{V}_{parent}[1..N]$, where $\mathcal{V}_{parent}[i]=1$ 
iff the $i$\textsuperscript{th} node in $\pi$ is the parent of some node at 
the top level of the layer below. }

\item{ Bit vector $\mathcal{V}_{first\_child}[1..N]$, where 
$\mathcal{V}_{first\_child}[i]=1$ iff the $i$\textsuperscript{th} node in $\pi$ 
is a root in its layer and no root in this layer with a smaller preorder number 
has the same parent. }
\end{enumerate}

Figure \ref{fig:bottom_up_bitvectors} demonstrates how the three bit vectors 
represent the mapping between nodes in different layers.

\begin{figure}[t]
	\centering
		\includegraphics{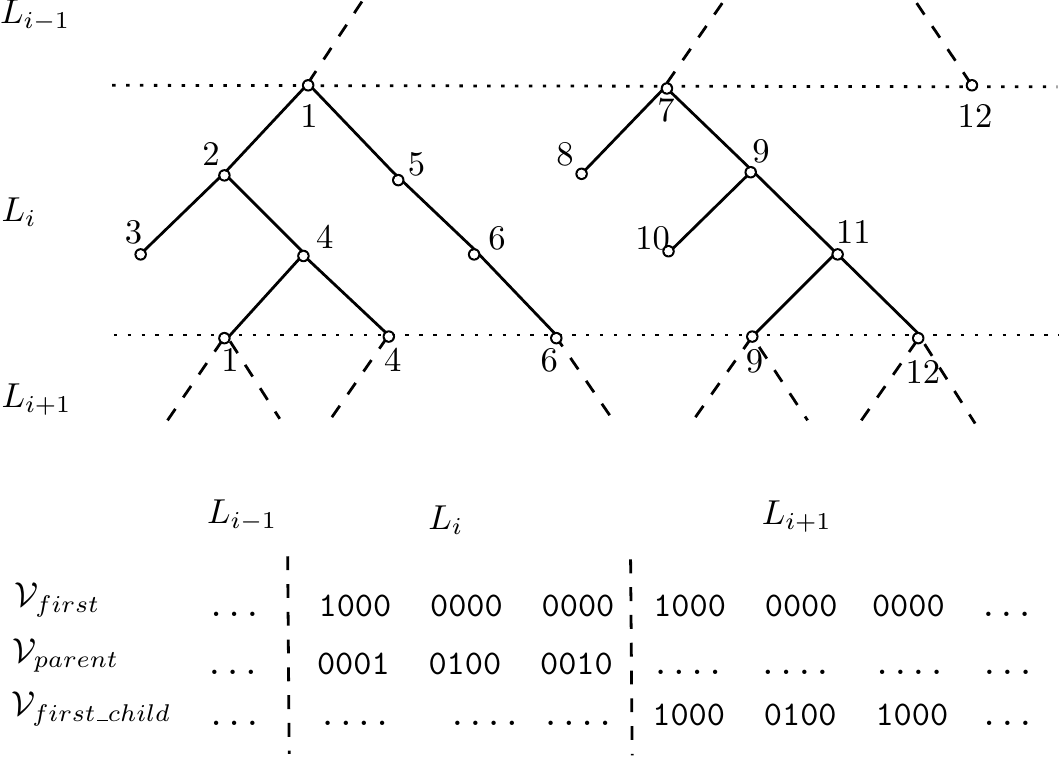}
	\caption[Scheme for mapping node labels between tree layers]{ Scheme for mapping between layers. 
The dashed horizontal lines indicate the top level of each layer. The bottom part of the figure shows the corresponding portions of the bit vectors used to maintain the mapping between layer $L_i$ and its neighbouring layers. 
}
	\label{fig:bottom_up_bitvectors}
\end{figure}

All leading blocks are packed together on disk. 
Note that leading blocks do not require a duplicate path or root-to-path array, 
therefore only the tree structure and keys need be stored for these blocks. 
Due to the packing, a leading block may overrun the boundary of a block on disk. 
We use the first $\lceil\lg{(Bw)}\rceil$ bits of each disk block to store an offset that indicates the position of the starting bit of the first leading block inside this disk block. This allows us to skip any overrun bits from a leading block stored in the previous disk block. 

We store two bit arrays to aid in locating blocks. The first indexes the leading blocks, and the second indexes regular blocks. 
Let $x$ be the number of layers on $T$, and let $z$ be the total number of regular blocks over all layers. The bit vectors are:

\begin{enumerate}

\item Bit vector $\mathcal{B}_l[1..x]$, where $\mathcal{B}_l[i]=1$ iff the $i$\textsuperscript{th} leading block resides in a different disk block than the $(i-1)$\textsuperscript{st} leading block.

\item Bit vector $\mathcal{B}_r[1..(x+z)]$ that encodes the number of regular blocks in each layer in unary. More precisely, $\mathcal{B}_r[1..(x+z)] = 0^{l_1}10^{l_2}10^{l_3}1 \ldots$, where $l_i$ is the number of regular blocks in layer $i$.
\end{enumerate}


With the details of data structures given, we can now determine the number of nodes in a type-1 or type-2 regular block. 

\begin{lemma}\label{lem:noofnodes}
To use the approach in Section~\ref{sec:blocking} to divide a layer into blocks, it is sufficient to choose:
\begin{enumerate}
\item $\succblksize_1=\lfloor \frac{ B w -  \tau B (\lceil \lg{N}\rceil + q +  \lg{B} +  \lg{w} )}{2+q} \rfloor$, and 
\item $\succblksize_2=\lfloor \frac{Bw- \tau B ( 2 \lg{B}  + 2  \lg w  +  \lg{ \lceil \lg{ B } \rceil}  + q)}{2+q} \rfloor$. 
\end{enumerate}
\end{lemma}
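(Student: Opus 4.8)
The strategy is to compute, for each of the two kinds of regular tree block, exactly how many bits of its $Bw$-bit disk block are taken up by the information that must be duplicated into it — its \emph{redundancy} $W$ in the sense of Equation~(\ref{equ:redundancy}) — and then to let that equation return the node capacity. Apart from the $(2+q)$-bit-per-node encoding of its own keys and parenthesis sequence, a regular block stores only the three length-$\tau B$ arrays $D_p$, $D_k$, and $R_p$; the lower-order auxiliary indices that support $O(1)$-time navigation (on the parenthesis sequences and on the global bit vectors) are charged to the overall $o(N)$ term and not to $W$. So the whole argument reduces to bounding the bit width of one entry of each of the three arrays.

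First I would handle a type-1 regular block, which carries its superblock's duplicate path. An entry of $D_p$ is a preorder number \emph{within the layer}, and a layer can have up to $N$ nodes (its height is $\tau B$ but its width is unbounded), so $\lceil\lg N\rceil$ bits are needed; an entry of $D_k$ is a $q$-bit key; an entry of $R_p$ is a count of subtrees of the block attaching to the duplicate path, hence at most the number of nodes of the block, and since each node costs at least $2$ bits in a $Bw$-bit disk block that number is at most $Bw/2$, so $\lg B + \lg w$ bits suffice. Summing over the $\tau B$ levels gives $W\le\tau B(\lceil\lg N\rceil + q + \lg B + \lg w)$, and Equation~(\ref{equ:redundancy}) then yields the stated value of $\succblksize_1$.

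For a type-2 regular block the accounting is identical except that $D_p$ now stores preorder numbers \emph{relative to the enclosing superblock}. A regular superblock is made of $\lfloor\lg B\rfloor$ blocks of at most $Bw/2$ nodes each, so one entry of $D_p$ fits in $\lg B + \lg w + \lg\lceil\lg B\rceil$ bits; with $D_k$ and $R_p$ still costing $q$ and $\lg B + \lg w$ bits per entry, this gives $W\le\tau B(2\lg B + 2\lg w + \lg\lceil\lg B\rceil + q)$ and hence the stated value of $\succblksize_2$. To close the argument I would check that these are meaningful quantities: since $B=\Omega(\lg N)$, $w=\Theta(\lg N)$, and $q=O(\lg N)$, each redundancy is $O(\tau B\lg N)=O(\tau Bw)$, so $Bw-W>0$ once the hidden constants are fixed and $\succblksize_1,\succblksize_2$ are positive integers; and knowing them, a layer with $l_i$ nodes can be cut into $\lfloor l_i/s\rfloor$ regular superblocks, with $s=\succblksize_1+(\lfloor\lg B\rfloor-1)\succblksize_2$, plus one leading superblock, exactly as prescribed in Section~\ref{sec:blocking}.

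I expect the bit accounting to be the only real obstacle. One has to pin down the correct ``universe'' to index into for each array entry — layer size ($\le N$) for a type-1 $D_p$, block size ($\le Bw/2$) for $R_p$, superblock size ($\le (Bw/2)\lg B$) for a type-2 $D_p$ — and be sure that nothing else, such as word-alignment padding, per-block bookkeeping, or the parenthesis-navigation indices, has to be paid for out of $W$ rather than the global term. An error in either direction would either break succinctness or destroy the invariant that each layer is crossed in $O(1)$ I/Os, so this is where the care has to go.
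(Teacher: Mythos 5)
Your proposal is correct and follows essentially the same approach as the paper's proof: compute the redundancy $W$ of a type-1 and type-2 regular block by summing the per-entry bit widths of $D_p$, $D_k$, and $R_p$ over the $\tau B$ levels (bounding the relevant universe in each case — layer size $\le N$ for a type-1 $D_p$ entry, block size $\le Bw/2$ for an $R_p$ entry, and superblock size $\le \lceil\lg B\rceil Bw/2$ for a type-2 $D_p$ entry) and then substitute $W$ into Equation~(\ref{equ:redundancy}). The only cosmetic difference is that you count $\lfloor\lg B\rfloor$ blocks per superblock (as in the blocking description) while the paper's derivation uses $\lceil\lg B\rceil$; both yield the same stated $\lg\lceil\lg B\rceil$ term, and your bound is if anything slightly tighter.
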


\begin{proof}
By Equation~\ref{equ:redundancy}, in order to choose appropriate values for 
$\succblksize_1$ and $\succblksize_2$ we need only compute the 
redundancy of a type-1 regular block and a type-2 regular block, respectively. 
Thus, we consider the space required to store $D_p$, $D_k$ and $R_p$. 

The array $D_k$ uses $\tau B q$ bits. 
To compute the space required for $D_p$, there are two cases: 
\begin{enumerate}

\item For a type-1 regular block, $D_p$ stores preorder values 
with respect to the layer preorder numbering. 
There can be as many as $N$ nodes in a layer, therefore each 
entry requires $\lceil \lg{N} \rceil$ bits. The total space for $D_p$ is thus $\tau B \lceil \lg{N} \rceil$.

\item For each node in a duplicate path of a type-2 regular block, $D_p$ stores 
its preorder number in the superblock. 
Since each superblock has $ \lceil \lg {B} \rceil$ blocks and thus has 
$B \lceil \lg {B} \rceil w $ bits, and in order to encode the tree structure 
and keys, we use $2+q$ bits per node. 
There are at most $B \lceil \lg {B} \rceil w /(2+q) \le B \lceil \lg {B} \rceil w /2$ nodes in a superblock. 
Therefore, $\lceil \lg{(B \lceil \lg {B} \rceil w /2)} \rceil$ bits are sufficient to store the preorder number of each node on the path.
As the duplicate path has $\tau B$ entries, the array $D_p$ requires the following number of bits:
\begin{eqnarray}
   \tau B \left\lceil \lg{\left(\frac{B \lceil\lg{B}\rceil w }{2}\right)} 
   \right\rceil 
   &\leq&
   \tau B \left( \lg{\left( \frac{B \lceil\lg{B}\rceil w}{2} \right)}  +1\right)
   \nonumber\\
&=& \tau B (  \lg{B} +   \lg{ \lceil \lg{ B } \rceil }  +  \lg{ w } )
\end{eqnarray}

\end{enumerate}

As a block may have at most $B w/(2+q) \le Bw /2$ nodes, each entry in $R_p$ 
can be encoded in at most $\lceil\lg (Bw /2)\rceil \le \lg{B}  +  \lg{w}$ bits. 
Thus the space per block for this array is at most $\tau B ( \lg{B}  +  \lg{w} )$ 
bits. This value holds whether the corresponding duplicate path is associated 
with a type-1 or type-2 regular block. 

The redundancy of a type-1 regular block includes space required to encode $D_p$, 
$D_k$, and $R_p$, which is $\tau B (\lceil \lg{N}\rceil + q +  \lg{B} +  \lg{w} )$ bits.

For a type-2 regular block, the number of bits used to store both $D_p$, $D_k$ and $R_p$ is:

\begin{eqnarray}
   && \tau B (  \lg{B}
      +  \lg{ \lceil \lg{ B } \rceil}   
      +  \lg w  + q 
      +  \lg{B}  
      +  \lg w) \nonumber\\
   &=& \tau B ( 2 \lg{B} 
      + 2  \lg w
      +  \lg{ \lceil \lg{ B } \rceil}  + q )
 \label{eqn:block_bits}
\end{eqnarray}

The results in the lemma can then be proved by substituting the redundancy $W$ in Equation~\ref{equ:redundancy} by the redundancy of a type-1 or type-2 regular block as computed above.
\end{proof}

To analyze the space costs of our data structures, we have the following lemma: 

\begin{lemma}\label{lem:bottom_up_space}
For sufficiently large $N$, there exists a positive constant $c$ such that the data structures described in this section occupy $(2+q)N + q \cdot \left[ \frac{2 \tau  N  (q + 2 \lg B)}{w} +  o(N) \right]  + \frac{8\tau N \lg B}{w} $ bits when $0 < \tau < c \le 1$. 
\end{lemma}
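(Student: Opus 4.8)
The plan is to account separately for each family of data structures introduced in Section~\ref{sec:bottomup_ds} and then sum the contributions, keeping careful track of which terms are $O(N)$, which are $O(N/w)$ times a polylogarithmic factor, and which are genuine lower-order terms. First I would handle the per-block tree encodings. Every node of $T$ lies in exactly one tree block, and the block structure $B_t$ together with the block keys $B_k$ costs $2+q$ bits per node; summing over all blocks gives the leading term $(2+q)N$. The subtlety is that a node may also appear on a duplicate path and hence be stored redundantly; these redundant copies are exactly the ``redundancy'' bits $W$ that were quantified in Lemma~\ref{lem:noofnodes}, so I would route all duplicate-path storage through that lemma rather than double-counting it here.

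Next I would bound the total redundancy. Each regular tree block carries redundancy $O(\tau B(\lg N + q + \lg B + \lg w))$ bits (type-1) or $O(\tau B(\lg B + \lg w + \lg\lceil\lg B\rceil + q))$ bits (type-2), by Lemma~\ref{lem:noofnodes}. Since each regular block holds $\Theta(Bw/(2+q))$ nodes, the number of regular blocks is $O(N/(Bw) \cdot (2+q)) = O(N(2+q)/(Bw))$, and multiplying block-count by per-block redundancy yields a total of $O\!\left(\frac{\tau N (q+2\lg B)(2+q)}{w}\right)$ bits for the bulk of the redundancy, plus smaller pieces involving $\lg N$ and $\lg\lg B$ which, using $q = O(\lg N)$ and $B = \Omega(\lg N)$, fold into the $q\cdot o(N)$ slack term. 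The leading blocks contribute only $2+q$ bits per node (no duplicate path, no root-to-path array), so they add nothing beyond what is already in $(2+q)N$, aside from $O(N/(Bw))$ bits of per-disk-block offsets of size $\lceil\lg(Bw)\rceil$, which is $o(N)$.

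Then I would dispatch the global bit vectors. The three mapping vectors $\mathcal{V}_{first}$, $\mathcal{V}_{parent}$, $\mathcal{V}_{first\_child}$ each have length $N$; but the number of $1$s in each is $O(N/(\tau B))$ (bounded by the number of layer-top nodes via Lemma~\ref{lem:chooselayer}, or by the number of regular blocks), so by part~(b) of Lemma~\ref{lem:rank_select} each occupies $\lg\binom{N}{O(N/(\tau B))} + O(N\lg\lg N/\lg N)$ bits, which is $o(N)$ since $B = \Omega(\lg N)$. The block-index vectors $\mathcal{B}_l[1..x]$ and $\mathcal{B}_r[1..(x+z)]$ have length $O(N/(\tau B))$ and $O(N/(2+q) \cdot (2+q)/(Bw) + x) = o(N)$ respectively, so under Lemma~\ref{lem:rank_select} they are $o(N)$ as well. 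Collecting the $\tau B$-factor ``$8\tau N\lg B/w$'' term: I would identify it as coming from the root-to-path arrays $R_p$ (and the $\lg w$-sized entries of $D_p$ for type-2 blocks), each contributing $O(\tau B \lg B)$ bits per regular block over $O(N/(Bw))$ blocks — tracing constants through Equation~\ref{eqn:block_bits} gives the stated coefficient $8$ once one accounts for both $R_p$ and the $\lg B$ pieces of $D_p$ across the two block types. Adding the leading term, the redundancy term (which is multiplied by $q$ because each redundant node carries a $q$-bit key), the $8\tau N\lg B/w$ term, and the $o(N)$ leftovers gives exactly the claimed bound, provided $\tau$ is small enough that the block-size expressions in Lemma~\ref{lem:noofnodes} are positive — that is the role of the constant $c$.

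The main obstacle I expect is not any single estimate but the bookkeeping: making sure that no node's storage is counted twice (in particular, that the ``$(2+q)N$'' term counts each node once in its home block, and all other appearances are subsumed by the redundancy analysis), that the distinction between type-1 and type-2 regular blocks is carried consistently through the block-count-times-per-block-cost product, and that every polylogarithmic residue ($\lg N$, $\lg\lg B$, $\lg w$, the offsets, the $O(N\lg\lg N/\lg N)$ from Lemma~\ref{lem:rank_select}) is correctly absorbed into either $q\cdot o(N)$ or the bare $o(N)$. Verifying that the number of regular blocks is $\Theta(N(2+q)/(Bw))$ — rather than merely $O(N/B)$ — is the quantitative crux that converts per-block redundancy into the $1/w$-scaled total, and hence is what produces the promised $\lg N$-factor improvement over~\cite{hutch_mesh_zeh_2003}.
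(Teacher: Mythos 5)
Your proposal is correct and follows essentially the same route as the paper's proof: decompose the storage into the once-per-node $(2+q)N$ term, the duplicate-path/root-to-path redundancy (block count times per-block redundancy from Lemma~\ref{lem:noofnodes}), the sparse mapping bit vectors handled via Lemma~\ref{lem:rank_select}(b), and the disk-block offsets, with the constant $c$ ensuring the redundancy is small enough relative to $Bw$. The paper packages the type-1/type-2 distinction slightly differently — by computing an average redundancy $\overline{W}$ per block within a regular superblock and using that single quantity to bound both the block count and the total redundancy, which also makes the precise constraint on $\tau$ (namely $\overline{W} < Bw - \overline{W} - (2+q)$) cleaner to state — but this is a presentational choice, not a different argument, and your observation that the block count is $\Theta(N(2+q)/(Bw))$ rather than merely $O(N/B)$ is indeed the crux that produces the $\lg N$-factor improvement.
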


\begin{proof}

First consider the number of bits used to store the actual tree structure of 
$T$ (i.e., the total space used by all the $B_t$s). 
The balanced parentheses encoding requires $2N$ bits, and each node of $T$ is 
contained in one and only one block. 
Hence the structure of $T$ is encoded using $2N$ bits. 

Next consider the space for all the block keys (i.e. the total space used by all 
the $B_k$s. 
Since the key of each node is encoded exactly once in all the $B_k$s, the total 
space cost is $Nq$ bits. 

We now consider the total space required for all the duplicate paths, duplicate 
path keys, and root-to-path arrays. 
Since there is one type-1 regular block and $\lceil\lg B\rceil - 1$ type-2 regular blocks in a regular superblock, by the proof of Lemma~\ref{lem:noofnodes}, the sum of the redundancy of all the blocks in a regular superblock is at most:  

\begin{eqnarray}
 && \tau B (\lceil \lg{N}\rceil + q +  \lg{B} +  \lg{w} ) \nonumber \\
  && + (\lceil \lg{B} \rceil - 1) \tau B ( 2 \lg{B}  + 2  \lg w  +  \lg{ \lceil \lg{ B } \rceil}  + q)
\end{eqnarray}

Dividing the above expression by $\lceil\lg B\rceil$, which is the number of 
blocks in a regular superblock, we get an upper bound of the average redundancy 
per block in a regular superblock:

\begin{eqnarray}
   \overline{W} &=& \frac{\tau B q + (\lceil \lg{B} \rceil - 1)\tau B q}{\lceil \lg B \rceil} + \frac{\tau B  \lceil \lg{N} \rceil}{\lceil \lg B \rceil} +
 \frac{\tau B ( \lg{B}  +  \lg w)}{\lceil \lg B \rceil} \nonumber \\
 && + \frac{(\lceil \lg{B} \rceil - 1) \tau B ( 2 \lg{B}  + 2  \lg w  +  \lg{ \lceil \lg{ B } \rceil} )}{ \lceil \lg B \rceil} \nonumber\\
&<& \tau B q + \frac{\tau B  (\lg{N}+1)}{\lceil\lg B\rceil} +
 \frac{\tau B ( \lg{B}  +  \lg w)}{\lceil\lg B \rceil} 
 +\tau B ( 2  \lg{B} 
 			+ 2  \lg w
 			+   \lg{\lg{B}} ) \nonumber \\
 && - \frac{\tau B ( 2  \lg{B}  
 			+ 2  \lg w
 			+   \lg{\lceil \lg{B} \rceil} )}{ \lceil \lg B \rceil} \nonumber\\
 	&<& \tau B  (\log_B{N}+q)  + \tau B ( 2  \lg{B} 
 			+ 2  \lg w
 			+   \lg{\lceil \lg{B} \rceil} )
   \label{eqn:avg_block_redundancy}
\end{eqnarray}

A regular block in a leading superblock is a type-2 regular block, and its 
redundancy is given by Equation~\ref{eqn:block_bits} which is smaller 
than $\overline{W}$. 
Therefore, we use $\overline{W}$ to bound the average redundancy of a regular 
block. 
%
The total number of blocks required to store $T$ is then at most 
$\frac{N}{\lfloor (B w - \overline{W})/(2+q) \rfloor}<\frac{N}{(B w - \overline{W})/(2+q) -1} = \frac{(2+q)N}{B w - \overline{W} - (2+q)}$. 
Then the total size, $R$, of the redundancy for $T$ is:

\begin{equation}
  R < (2+q)N \cdot \frac{\overline{W}}{B w - \overline{W}-(2+q)} < (2+q)N \cdot \frac{2\overline{W}+(2+q)}{B w} 
\label{eqn:red}
\end{equation}

when $\overline{W} <  B w - \overline{W} - (2+q)$. 
%
By Inequality~\ref{eqn:avg_block_redundancy}, this condition is true if: 

\begin{eqnarray}
   B w &>& 2 \tau B \log_B{N}  + 2\tau B q + 4 \tau B  \lg{B} + 4 \tau B   \lg w 
 			+  2 \tau B  \lg{\lceil \lg{B} \rceil}  + 2 + q      \nonumber \\
  \iff w &>& \tau( 2 \log_B{N}  + 2 q + 4 \lg{B} + 4 \lg w 
 			+  2 \lg{\lceil \lg{B} \rceil}  + \frac{2 + q}{\tau B}  )
   \label{eqn:tau_value}
\end{eqnarray}

Since we assume $B = \Omega(\lg N)$, $B \le N$, $q = O(\lg N)$, and $w = \Theta(\lg N)$, the expression inside the brackets on the right-hand side of Inequality~\ref{eqn:tau_value} is $O(w)$. 
By the definition of Big-Oh notation, for sufficiently large $N$ there exists a constant, $c'$, such that:

$$2 \log_B{N}  + 2 q + 4 \lg{B} + 4 \lg w +  2 \lg{\lceil \lg{B} \rceil}  + \frac{2 + q}{\tau B} \le c' w.$$ 

Thus for any $0 < \tau < c$, where $c = \min(1, \frac{1}{c'})$, Inequality~\ref{eqn:tau_value} holds. 

We then substitute for $\overline{W}$ in Inequality~\ref{eqn:red}, to obtain the following:

\begin{eqnarray}
   R&<& \frac{(2qN + 4N)\tau B\log_B{N}}{B w}  
  + \frac{(2qN + 4N)\tau B q }{B w} \nonumber \\
   &&+ \frac{(4qN + 8N)\tau B \lg{B}}{B w} 
   + \frac{(4qN + 8N)\tau B \lg w}{B w} \nonumber \\
      &&+ \frac{(2qN + 4N)\tau B \lg{\lceil \lg{B} \rceil}}{B w} 
   + (2+q)N \cdot \frac{2+q}{B w} \nonumber \\
  &<& \frac{(2qN + 4N)\tau\lg N}{ w\lg{B}} + 
    \frac{(2qN + 4N)\tau q }{w} \nonumber \\ 
   &&+ \frac{(4qN + 8N)\tau (\lg B ) }{ w} + 
   \frac{(4qN + 8N)\tau \lg{w}}{w} \nonumber \\ 
      &&+ \frac{(2qN + 4N)\tau \lg{\lceil \lg{B} \rceil}}{w} 
   + (2+q)N \cdot \frac{2+q}{B w}
\end{eqnarray}

By our assumptions that $B = \Omega(\lg N)$, $B \le N$, $q = O(\lg N)$ and 
$w = \Theta(\lg N)$, 
 all terms except the second and third are $q \cdot o(N)$, therefore we can 
 summarize the space complexity of the redundancy as:

\begin{equation}
   R < q \cdot \left[ \frac{2 \tau  N  (q + 2 \lg B)}{w} +  o(N) \right]  + \frac{8\tau N \lg B}{w}
\end{equation}

When packed on the disk, each leading block requires an offset of 
$\lceil\lg{(Bw)}\rceil$ bits. 
As the number of leading blocks is $\lceil N/{\tau B}\rceil$, such information requires $o(N)$ bits in total.

Now we consider the space required to store $\mathcal{V}_{first}$, 
$\mathcal{V}_{parent}$, and $\mathcal{V}_{first\_child}$. 
Each vector must index $N$ bits. 
However, using Lemma~\ref{lem:rank_select}b, we can do better than $3N + o(N)$ 
bits of storage, if we consider that the number of $1$s in each bit vector is small. 

For $\mathcal{V}_{first}$, the total number of $1$s is $\lceil N/(\tau B)\rceil$ 
in the worst case, when $T$ forms a single path. 
The number of $1$s appearing in $\mathcal{V}_{parent}$ is bounded by the number of 
roots appearing on the top level of the layer below which is bounded by 
$\lceil N/(\tau B)\rceil$ as shown in Lemma~\ref{lem:chooselayer}. 
The bit vectors $\mathcal{V}_{first\_child}$ has at most as many $1$ bits as $\mathcal{V}_{parent}$, 
and as such the number of $1$ bits in each of the three vectors is bounded 
by $\lceil N/(\tau B) \rceil$. 
By Lemma \ref{lem:rank_select}b, each of the three bit vectors requires at most $\left\lceil\lg{ N \choose N/(\tau B)} \right\rceil + O(N \lg \lg N / \lg N) = o(N)$ bits. 

Finally, we consider the bit vectors $\mathcal{B}_l$ and $\mathcal{B}_r$.  $\mathcal{B}_l$ stores a single bit for each leading block.  There are as many leading blocks as there are layers, so  this bit vector has at most $\lceil N/(\tau B) \rceil$ bits.  
Thus, it can be represented using $o(N)$ bits. 
The number of $1$s in $\mathcal{B}_r$ is equal to the number of layers, which is $\lceil N/(\tau B) \rceil$. 
The number, $a$, of $0$s in $\mathcal{B}_r$ is equal to the total number of regular blocks. 
As there are fewer than $2N$ nodes in all the regular blocks and each regular 
block contains a non-constant number of nodes, 
we have $a = o(N)$. 
Therefore, the length of $\mathcal{B}_r$ is $o(N)$, which can be stored in $o(N)$ bits using Lemma \ref{lem:rank_select}a. 


Summing up, our data structures require $(2+q)N + q \cdot \left[ \frac{2 \tau  N  (q + 2 \lg B)}{w} +  o(N) \right]  + \frac{8\tau N \lg B}{w} $ bits in total when $0 <\tau < c = \min(1, \frac{1}{c'})$ for sufficiently large $N$.
\end{proof}

\subsection{Navigation} The algorithm for reporting a node-to-root path, including layer preorder numbers of the nodes on the path and their keys, is given by algorithms $ReportPath(T,v)$ (see Figure~\ref{fig:report_path}) and $ReportLayerPath(\ell_v, p_v)$ (see Figure \ref{fig:find_root}). 
Algorithm $ReportPath(T,v)$ is called with $v$ being the number of a node in $T$ given by $\pi$. $ReportPath$ handles navigation between layers, and calls $ReportLayerPath$ to perform the traversal within each layer. 
For $ReportLayerPath$, the parameters $\ell_v$ and $p_v$ are the layer number and the preorder value of node $v$ within the layer, as previously described. $ReportLayerPath$ returns the preorder number, within layer $\ell_v$, of the root of path reported from that layer. In $ReportLayerPath$ we find the block $b_v$ containing node $v$ using the algorithm $FindBlock(\ell_v, p_v )$ described in Figure \ref{fig:find_block}. 

The above algorithms operate on the data structures defined in Section~\ref{sec:bottomup_ds}. 
There are two types of data structures; the first type includes the data 
structures that are stored in individual tree blocks such as $B_t$ and $D_p$. 
After the corresponding trees blocks are loaded into internal memory, 
information stored in these data structures can be retrieved by performing 
operations such as linear scan without incurring additional I/Os. 
The second type are those that occupy a non-constant number of disk blocks such as $\mathcal{V}_{first}$ and $\mathcal{V}_{first\_child}$. 
We do not load them entirely into internal memory;
instead, we perform operations such as $\rankop$ and $\selop$ to retrieve information from them.

\begin{figure}[t]
\framebox[\linewidth]{
\begin{minipage}{0.95\linewidth}
\normalsize
   \textbf{Algorithm $ReportPath(T,v)$}
   \begin{enumerate}
      \item Find $\ell_v$, the layer containing $v$, using 
      $\ell_v = \rankop[1]{\mathcal{V}_{first},v}$.
      \item Find $\alpha_{\ell_v}$, the position in $\pi$ of $\ell_v$'s first 
      node, using $\alpha_{\ell_v} = \selop[1]{\mathcal{V}_{first},\ell_v}$.
      \item Find $p_v$, $v$'s preorder number within $\ell_v$, using 
      $p_v = v - \alpha_{\ell_v}$.
      \item Repeat the following steps until the top layer has been reported.
      \begin{enumerate}
         \item Let $r = ReportLayerPath(\ell_v, p_v)$ be the preorder number of the root of the path in layer $\ell_v$ (this step also reports the path within the layer).
         \item Find $\alpha_{(\ell_v-1)}$, the position in $\pi$ of the first node at the next higher layer, using $\alpha_{(\ell_v-1)} = \selop[1]{\mathcal{V}_{first},\ell_v-1}$.
         \item Find $\lambda$, the rank of $r$'s parent among all the nodes 
	  in the layer above that have children in $\ell_v$, using 
	  $\lambda = \rankop[1]{\mathcal{V}_{first\_child}, \alpha_{\ell_v} + r - 1} - 
	    \rankop[1]{\mathcal{V}_{first\_child}, \alpha_{\ell_v} - 1}$.
         \item Find which leaf $\delta$ at the next higher layer corresponds to 
	    $\lambda$, using  $\delta = \selop[1]{\mathcal{V}_{parent}, \rankop[1]{\mathcal{V}_{parent}, \alpha_{(\ell_v-1)}} -1 + \lambda}$.
         \item Update $\alpha_{\ell_v} = \alpha_{(\ell_v-1)}$; $p_v = \delta - \alpha_{(\ell_v - 1)}$; and $\ell_v = \ell_v - 1$.
      \end{enumerate}
   \end{enumerate}
\end{minipage}
}
\caption{Algorithm for reporting the path from node $v$ to the root of $T$}
\label{fig:report_path}
\end{figure}

\begin{figure}[t]
\framebox[\linewidth]{
\begin{minipage}{0.95\linewidth}
\normalsize
   \textbf{Algorithm $ReportLayerPath(\ell_v,p_v)$}
   \begin{enumerate}
      \item Load block $b_v$ containing $p_v$ by calling $FindBlock(\ell_v, p_v)$. Scan $B_t$ (the tree's representation) to locate $p_v$. 
Let $SB_v$ be the superblock containing $b_v$, and load $SB_v$'s first block if $b_v$ is not the first block in $SB_v$. Let $\lowestop(D_p)$ be the preorder number with respect to $SB_v$ of the lowest node on $b_v$'s duplicate path, i.e. the last non-zero entry in $D_p$ (let $\lowestop(D_p)=1$ if $b_v$ is a leading block), and let $\lowestop(SB_{D_p})$ be the preorder number in with respect to layer $\ell_v$ of the lowest node on the superblock duplicate path (again, if $b_v$ is a leading block, let $\lowestop(SB_{D_p})=1$).
      \item Traverse the path from $p_v$ to a root in $B_t$. If $r$ is the preorder number (within $B_t$) of a node on this path, report $(r-1) + (\lowestop(D_p)-1) + \lowestop(SB_{D_p})$ as its preorder number in layer $\ell_v$ and $B_k[r]$ as its key. This step terminates at a root in $B_t$. Let $r_k$ be the rank of this root in the set of roots of $B_t$.
      \item Scan the root-to-path array, $R_p$ from $\tau B$ to $1$ to find the largest $i$ such that $R_p[i] \ge r_k$. If $r_k \ge R_p[1]$, then $r$ is on the top level in the layer, so return $(r-1) + (\lowestop(D_p)-1) + \lowestop(SB_{D_p})$ as its preorder number in layer $\ell_v$ and $B_k[r]$ as its key. We then terminate.
      \item Set $j = i - 1$.
      \item $\whileop( j \geq 1$ and $D_p[j] \neq 0)$ report $D_p[j] + \lowestop(SB_{D_p}) - 1$, the preorder number in layer $\ell_v$ of the node on the duplicate path at level $j$ of this layer, and $B_k[D_p[j]]$, the key stored in this node, and then set $j = j - 1$.
      \item If $j \geq 1$ then report $SB_{D_p}[j]$, the preorder number in layer $\ell_v$ of the node on the superblock duplicate path at level $j$ of this layer, and the key stored in this node (retrieved from $SB_v$'s first block), and set $j = j - 1$  $\untilloop(j < 1)$.
   \end{enumerate}
\end{minipage}
}
\caption[Algorithm $ReportLayerPath$]{Steps to execute traversal within a layer, $\ell_v$, starting at the node with preorder number $p_v$. This algorithm reports the nodes visited and returns the layer preorder number of the root at which it terminates.}
\label{fig:find_root}
\end{figure}

\begin{figure}[t]
\framebox[\linewidth]{
\begin{minipage}{0.95\linewidth}
\normalsize
   \textbf{Algorithm $FindBlock(\ell_v,p_v)$}
   \begin{enumerate}
      \item Find $\sigma$, the disk block containing $\ell_v$'s 
	leading block using $\sigma = \rankop[1]{\mathcal{B}_l,\ell_v}$.
      \item Find $\alpha$, the rank of $\ell_v$'s leading block within 
	$\sigma$, using the identity $\alpha = \ell_v - \selop[1]{B_l, \sigma} + 1$. 
      \item Scan $\sigma$ to find, and load, the data for $\ell_v$'s leading block (may required loading the next disk block). Note the size $\delta$ of the leading block.
      \item If $p_v \le \delta$ then $p_v$ is in the already loaded leading block, terminate.
      \item Calculate $\omega$, the rank of the regular block containing $p_v$ within the $\selop[1]{\mathcal{B}_r, \ell_v+1} - \selop[1]{\mathcal{B}_r, \ell_v}$ 
      regular blocks in this layer, using the values of $\succblksize_1$ and $\succblksize_2$. 
      \item Load the disk block containing the ($\rankop[0]{\mathcal{B}_r, \ell_v} + \omega$)\textsuperscript{th} regular block and terminate.
   \end{enumerate}
\end{minipage}
}
\caption[Algorithm $Findblock$]{$FindBlock$ algorithm.}
\label{fig:find_block}
\end{figure}

We now have the following lemma: 

\begin{lemma}\label{lem:bottom_up_IOs}
The algorithm ReportPath traverses a path of length $K$ in $T$ in $O( K / \tau B + 1)$ I/Os.
\end{lemma}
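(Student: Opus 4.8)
The plan is to bound the number of I/Os incurred by the two routines that do the actual disk access, namely $FindBlock$ and $ReportLayerPath$, and then sum over the layers that a node-to-root path of length $K$ passes through. First I would observe that a node at depth $K$ lies in at most $\lceil K/(\tau B)\rceil + 1$ layers, since each layer (except possibly the top and bottom) has height exactly $\tau B$; hence it suffices to show that handling each layer costs $O(1)$ I/Os, plus an additional $O(1)$ for the inter-layer navigation performed in $ReportPath$.

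For the inter-layer navigation in $ReportPath$, I would note that steps (a)--(e) of the main loop only invoke $\rankop$ and $\selop$ on the global bit vectors $\mathcal{V}_{first}$, $\mathcal{V}_{parent}$, and $\mathcal{V}_{first\_child}$; by Lemma~\ref{lem:rank_select} each such operation costs $O(1)$ I/Os, so the per-layer inter-layer bookkeeping is $O(1)$. For the intra-layer work, I would analyze $FindBlock$: step~1 and step~2 use $\rankop$/$\selop$ on $\mathcal{B}_l$ ($O(1)$ I/Os), step~3 scans the disk block containing the layer's leading block and possibly the following disk block ($O(1)$ I/Os, since a leading block together with an overrun fits in two consecutive disk blocks), and steps~5--6 use $\rankop$/$\selop$ on $\mathcal{B}_r$ and then load exactly one disk block holding the target regular block ($O(1)$ I/Os). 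The key structural fact that makes step~6 a single block load is the blocking invariant from Section~\ref{sec:blocking}: regular blocks have the fixed sizes $\succblksize_1,\succblksize_2$ given in Lemma~\ref{lem:noofnodes} and each is stored together with its (superblock) duplicate path inside one disk block of $Bw$ bits, so from $p_v$, $\ell_v$, and those two constants one computes the index $\omega$ arithmetically and reads one block.

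Then I would turn to $ReportLayerPath$. Step~1 loads $b_v$ and, if $b_v$ is not the first block of its superblock, also loads the superblock's first block (which holds $SB_{D_p}$ and its keys) — that is one extra block load, so $O(1)$ I/Os total, plus the $FindBlock$ call already bounded above. Steps~2--6 then operate entirely in internal memory: traversing $B_t$ from $p_v$ to a root, scanning $R_p$, scanning $D_p$ and $D_k$, and reading entries of $SB_{D_p}$ from the already-loaded first block of the superblock — none of these incurs a new I/O. The correctness ingredient I would cite here is Property~\ref{prop:dup-path}: every node of the superblock reaches the top of its layer via a path that stays inside the block (or superblock) except for at most one node on the duplicate path, which guarantees that the path segment lying in layer $\ell_v$ is fully recoverable from the one or two blocks already in memory. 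Hence each layer is processed in $O(1)$ I/Os.

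The main obstacle is making the ``each layer costs $O(1)$ I/Os'' claim airtight in the boundary cases — in particular, confirming that a leading block that is packed across disk-block boundaries is still read with $O(1)$ I/Os using the stored offset, and that the occasional second block load (the superblock's first block) does not compound across layers. Once those are handled, the arithmetic is routine: summing $O(1)$ over the $O(K/(\tau B)+1)$ layers gives $O(K/(\tau B)+1)$ I/Os, which, since $\tau$ is a constant, is $O(K/B+1)$ as claimed; I would then remark that this matches the bound of Lemma~\ref{lem:bot_up_hutch}.
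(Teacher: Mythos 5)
Your proof is correct and takes essentially the same route as the paper's own proof: both count the $O(K/(\tau B))$ layers the path touches and then argue that processing each layer, including the inter-layer transition, costs $O(1)$ I/Os, via one or two block loads (the tree block plus possibly the superblock's first block), at most two I/Os for the packed leading-block scan in $FindBlock$, and a constant number of $\rankop$/$\selop$ operations. Your write-up simply unfolds the constant-I/O-per-layer claim into a finer step-by-step walk through $FindBlock$ and $ReportLayerPath$, which the paper asserts more compactly.
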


\begin{proof}
In each layer we progress $\tau B$ steps toward the root of $T$. 
In order to do so, we must load the disk block containing the current node and 
possibly also the block storing the superblock duplicate path. 
When we step between layers we must account for the I/Os involved in mapping the 
layer level roots to their parents in the predecessor layer. 
This involves a constant number of \emph{rank} and \emph{select} operations 
which may be done in $O(1)$ I/Os.

The $FindBlock$ algorithm involves a scan of the disk blocks storing leading 
blocks, but this may generate at most 2 I/Os. 
The remaining operations in $FindBlock$ use a constant number of \emph{rank} and \emph{select} calls, and therefore require $O(1)$ I/Os. 

As a path of length $K$ has nodes in $\lceil K / \tau B\rceil$ layers, as in 
order to traverse the path the number of I/Os required in each layer and between 
two consecutive layers is constant (as shown above), we conclude that it 
takes $O(\lceil K / \tau B\rceil) = O( K / \tau B + 1)$ I/Os to traverse the path.
\end{proof}

Lemmas \ref{lem:bottom_up_space} and \ref{lem:bottom_up_IOs} lead to the following theorem: 

\begin{theorem}\label{thm:node-root}
A tree $T$ on $N$ nodes with $q$-bit keys, where $q = O(\lg N)$, 
can be represented in 
$(2+q)N + q \cdot \left[ \frac{2 \tau  N  (q + 2 \lg B)}{w} +  o(N) \right]  + \frac{8\tau N \lg B}{w} $ 
bits such that given a node, the path from this node to the root of $T$ can be 
reported in $O(K/\tau B + 1)$ I/Os, where $K$ is the length of the node-to-root path 
reported, $\tau$ is a constant such that $0 < \tau < 1$, and $w$ is the word size. 
\end{theorem}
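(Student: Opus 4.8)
The plan is to assemble the theorem directly from the blocking construction of Section~\ref{sec:blocking} together with the two quantitative lemmas already established, namely the space bound of Lemma~\ref{lem:bottom_up_space} and the I/O bound of Lemma~\ref{lem:bottom_up_IOs}. First I would fix the decomposition of $T$ into layers of height $\tau B$: by Lemma~\ref{lem:chooselayer} there is a choice of layer boundaries for which the total number of top-of-layer nodes is at most $\lceil N/(\tau B)\rceil$, and I would adopt that choice so that the layer-mapping bit vectors stay sparse. Within each layer I would number nodes in preorder, cut them into tree blocks of the fixed sizes $\succblksize_1$ and $\succblksize_2$ from Lemma~\ref{lem:noofnodes}, group $\lfloor \lg B\rfloor$ consecutive tree blocks into a superblock, and attach to each block its duplicate path (and, for the first block of a superblock, the superblock duplicate path). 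Correctness of this layout for bottom-up traversal rests on Property~\ref{prop:dup-path}: from any node of a (super)block one reaches the top of its layer using only nodes of that block plus a single node on the stored duplicate path, so an entire within-layer subpath is recoverable from one disk block plus, at most, the superblock's leading block.

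Next I would invoke the two lemmas. For space, Lemma~\ref{lem:bottom_up_space} gives $2N$ bits for all the balanced-parenthesis encodings $B_t$ and $Nq$ bits for all the key arrays $B_k$ (each node lies in exactly one block); the duplicate-path redundancy is bounded by averaging the type-1 and type-2 redundancies over the $\lfloor\lg B\rfloor$ blocks of a superblock, which damps the $\lceil\lg N\rceil$-bits-per-entry cost of the type-1 array by a $\lg B$ factor and yields an average redundancy $\overline{W}$ per block with $\overline{W}=o(Bw)$; and the layer-mapping vectors $\mathcal{V}_{first}$, $\mathcal{V}_{parent}$, $\mathcal{V}_{first\_child}$ together with the index vectors $\mathcal{B}_l$, $\mathcal{B}_r$ all have $O(N/(\tau B))$ ones (or that many bits), hence cost $o(N)$ bits via Lemma~\ref{lem:rank_select}. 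For I/Os, Lemma~\ref{lem:bottom_up_IOs} shows that each layer is crossed with $O(1)$ I/Os (load the current node's disk block and possibly the superblock's leading block), each transition between consecutive layers uses $O(1)$ $\rankop$/$\selop$ operations on the mapping vectors, and $FindBlock$ costs $O(1)$ I/Os; since a length-$K$ path meets $\lceil K/(\tau B)\rceil$ layers, the total is $O(K/(\tau B)+1)$.

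Combining the two lemmas then yields exactly the stated space and I/O bounds, valid for any constant $\tau$ below the threshold $c=\min(1,1/c')$ of Lemma~\ref{lem:bottom_up_space}, which is still a constant in $(0,1)$ under the standing assumptions $B=\Omega(\lg N)$, $B\le N$, $q=O(\lg N)$, and $w=\Theta(\lg N)$.

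I expect the delicate part to be the space accounting rather than the navigation. One must verify that $\overline{W}$ satisfies $\overline{W} < Bw - \overline{W} - (2+q)$ so that the storage blow-up factor $\overline{W}/(Bw-\overline{W}-(2+q))$ is well defined and bounded by $2\overline{W}/(Bw)$, and then check term by term that every contribution to the redundancy, except the $q^2N/w$- and $qN\lg B/w$-type terms, collapses to $q\cdot o(N)$. This is precisely where the superblock size $\lfloor\lg B\rfloor$ and the word-size assumption are used, so that is the step I would write out most carefully; the rest is routine bookkeeping on top of Property~\ref{prop:dup-path} and the rank/select primitives.
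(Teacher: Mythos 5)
Your outline matches the paper's main thrust: the theorem is indeed assembled from Lemmas~\ref{lem:bottom_up_space} and~\ref{lem:bottom_up_IOs}, with Lemma~\ref{lem:chooselayer}, the block sizes of Lemma~\ref{lem:noofnodes}, and Property~\ref{prop:dup-path} doing the structural work underneath. You also correctly identify the averaging over the $\lfloor\lg B\rfloor$ blocks of a superblock and the inequality $\overline{W} < Bw - \overline{W} - (2+q)$ as the delicate points of the space accounting.

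However, there is a genuine gap at the very end. You conclude that the bound holds ``for any constant $\tau$ below the threshold $c=\min(1,1/c')$ of Lemma~\ref{lem:bottom_up_space}, which is still a constant in $(0,1)$.'' But the theorem is stated for \emph{every} constant $\tau$ with $0<\tau<1$, and $c$ may well be much smaller than $1$. Your argument as written does not cover the range $c\le\tau<1$, and observing that $c$ is itself a positive constant does not close that range. The paper handles this explicitly: for $c\le\tau<1$ one actually builds the structure with layer height $\lambda=c/2$, which makes the true space cost strictly smaller than the expression in the theorem (since that expression is monotone increasing in $\tau$), and then pads with unused $0$ bits to meet the stated bound exactly; the I/O cost $O(K/\lambda B+1)=O(K/\tau B+1)$ is unaffected because both are $\Theta(K/B+1)$. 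Similarly, Lemma~\ref{lem:bottom_up_space} is only asserted ``for sufficiently large $N$,'' and your proof inherits that restriction without removing it. The paper disposes of small $N$ by noting that a constant-size tree can be stored in a constant number $S_0$ of bits with $O(1)$ I/Os, and $S_0$ can be absorbed into the $q\cdot o(N)$ slack. Without these two patch-ups, the statement you actually prove is weaker than Theorem~\ref{thm:node-root}; you should add both.
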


\begin{proof}
Lemmas \ref{lem:bottom_up_space} and \ref{lem:bottom_up_IOs} guarantee that 
this theorem is true for sufficiently large $N$ when $\tau$ is a constant 
such that $0 < \tau < c$. 

To argue the mathematical correctness of our theorem when the above conditions 
are removed we first claim that when $c \le \tau < 1$ our theorem is still true 
for sufficiently large $N$. 
In this case, we set the height of the layers to be $\lambda = c/2 $ when constructing our data structures. 
The data structures constructed would occupy 
$(2+q)N + q \cdot \left[ \frac{2 \lambda N  (q + 2 \lg B)}{w} +  o(N) \right]  + \frac{8\lambda N \lg B}{w}$ bits, 
which is smaller than the space cost in our theorem, since $\lambda < \tau$. 
We can then pad our data structures with bits of $0$s to achieve the claimed 
space bound. 
These bits are never used and it is therefore unnecessary to add them in practice; 
they are there to guarantee that our theorem be mathematically correct. 

In order to further remove the condition that $N$ is sufficiently large, we observe from the proof of Lemma~\ref{lem:bottom_up_space} that the condition ``for sufficiently large $N$'' comes from order notation. 
Thus this condition means that there exists a positive constant number $N_0$ such that for any $N \ge N_0$, this theorem is true. 
When $N < N_0$, the tree $T$ has a constant number of nodes, and thus we can easily store it in external memory in a constant amount of space, say $S_0$ bits, while supporting the report of any node-to-root path in $O(1)$ I/Os. 
Thus, for any $N$ we can use at most 
$\max(S_0, (2+q)N + q \cdot \left[ \frac{2 \tau  N  (q + 2 \lg B)}{w} +  o(N) \right]  + 
\frac{8\tau N \lg B}{w}) < S_0 + (2+q)N + q \cdot \left[ \frac{2 \tau  
N  (q + 2 \lg B)}{w} +  o(N) \right]  + \frac{8\tau N \lg B}{w}$ bits to store $T$ while achieving the I/O bounds claimed in this theorem. 
Since $S_0$ is a constant, it can be absorbed in the term $q\cdot o(\lg N)$. 
\end{proof}

To show that our data structure is space-efficient, it suffices to show that the space 
cost minus the $(2+q)N$ bits required to encode the tree structure and the keys is small. 
This extra space cost is shown to be 
$q \cdot \left[ \frac{2 \tau  N  (q + 2 \lg B)}{w} +  o(N) \right]  + \frac{8\tau N \lg B}{w}$ 
bits in the above theorem. 
Since $q \cdot \left[ \frac{2 \tau  N  (q + 2 \lg B)}{w}  \right]  + \frac{8\tau N \lg B}{w} = \tau \cdot q \cdot O(N)$,
we can select $\tau$ such that this expression is at most $\eta N$ for any constant 
$0 < \eta < 1$ when $N$ is sufficiently large.
Thus, our data structure occupies at most $(2 + q) N + q\cdot (\eta N + o(N))$ bits for any constant $0 < \eta < 1$ for sufficiently large $N$. 
The strategy used in the proof of Theorem~\ref{thm:node-root} can also be used here to remove the condition that $N$ is sufficiently large. 

For the case in which we do not maintain a key with any node, we have the following corollary: 

\begin{corollary}\label{cor:node-root}
A tree $T$ on $N$ nodes can be represented in $2N + \frac{\epsilon N \lg B}{w} + o(N)$ bits such that given a node-to-root path of length $K$, the path can be reported in $O(K/B+1)$ I/Os, for any constant number $\epsilon$ such that $0 < \epsilon < 1$.
\end{corollary}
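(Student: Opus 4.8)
The plan is to obtain the corollary by specializing the construction of Section~\ref{sec:bottomup_ds} to the case in which no keys are stored, which amounts to setting $q=0$ throughout, and then re-running the space accounting of Lemma~\ref{lem:bottom_up_space} for this case (one cannot simply evaluate the statement of Theorem~\ref{thm:node-root} at $q=0$, since the $o(N)$ contributions of the layer-mapping bit vectors were folded into a $q\cdot o(N)$ term there). First I would strip the key-dependent components from each tree block, keeping only the balanced-parentheses encoding $B_t$, the duplicate-path array $D_p$, and the root-to-path array $R_p$, together with the global bit vectors $\mathcal{V}_{first}$, $\mathcal{V}_{parent}$, $\mathcal{V}_{first\_child}$, $\mathcal{B}_l$ and $\mathcal{B}_r$; the arrays $B_k$ and $D_k$ disappear. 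The navigation algorithms are unchanged except that they no longer emit key values, so Lemma~\ref{lem:bottom_up_IOs} applies verbatim: a node-to-root path of length $K$ touches $\lceil K/(\tau B)\rceil$ layers, and each layer together with each inter-layer transition costs $O(1)$ I/Os, giving $O(K/(\tau B)+1)=O(K/B+1)$ I/Os for constant $\tau$.

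Next I would redo the space bound with $q=0$. The collection of all the $B_t$'s encodes the tree in exactly $2N$ bits. Substituting $q=0$ into Equation~\ref{equ:redundancy} and the estimates behind Lemma~\ref{lem:noofnodes}, the redundancy of a regular block is now due only to $D_p$ and $R_p$, so the average per-block redundancy (the analogue of Inequality~\ref{eqn:avg_block_redundancy}) becomes $\overline{W} < \tau B\log_B N + \tau B(2\lg B + 2\lg w + \lg\lceil\lg B\rceil)$; plugging this into Inequality~\ref{eqn:red} and discarding lower-order terms exactly as in the proof of Lemma~\ref{lem:bottom_up_space} (using $B=\Omega(\lg N)$, $B\le N$, $w=\Theta(\lg N)$) yields total redundancy $R < \frac{8\tau N\lg B}{w} + o(N)$. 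Each of the five global bit vectors has at most $\lceil N/(\tau B)\rceil$ ones and hence occupies $o(N)$ bits by Lemma~\ref{lem:rank_select}(b), and the leading-block offsets add a further $o(N)$ bits; so the total is $2N + \frac{8\tau N\lg B}{w} + o(N)$ bits.

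Finally, to hit the target $2N + \frac{\epsilon N\lg B}{w} + o(N)$ for a prescribed constant $0<\epsilon<1$, I would fix the layer-height constant $\tau = \min(\epsilon/8,\,c/2)$, where $c$ is the absolute constant from Lemma~\ref{lem:bottom_up_space} (coming from Inequality~\ref{eqn:tau_value}): this keeps $\tau < c$, so the redundancy estimate remains valid, while $8\tau = \min(\epsilon,4c)\le \epsilon$ ensures $\frac{8\tau N\lg B}{w}\le \frac{\epsilon N\lg B}{w}$. The ``sufficiently large $N$'' hypothesis is removed by storing constant-size trees explicitly, exactly as in the proof of Theorem~\ref{thm:node-root}. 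I do not anticipate a genuine obstacle; the only points requiring care are tracking how the hidden constant in the query bound depends on $\epsilon$ through $\tau$, and verifying that setting $q=0$ breaks no step that tacitly assumed $q\ge 1$ — in particular, each regular block still holds $\Theta(Bw)$ nodes, so the number of regular blocks is $o(N)$ and $\mathcal{B}_r$ still has only $o(N)$ zeros.
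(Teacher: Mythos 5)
Your proof is correct and matches the paper's approach: the paper likewise observes that setting $q=0$ kills the $q\cdot o(N)$ absorption term, re-derives the bound as $2N + \frac{8\tau N \lg B}{w} + o(N)$, and then defines $\epsilon = 8\tau$ (noting the range $0<\epsilon<8$ includes $0<\epsilon<1$), whereas you fix $\tau = \min(\epsilon/8, c/2)$ for a given $\epsilon$ — the two parameterizations are trivially interchangeable. Your version is a bit more explicit about why one cannot just substitute $q=0$ into the statement of Theorem~\ref{thm:node-root}, which is exactly the point the paper's terser proof is making.
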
 

\begin{proof}
When we prove Lemma~\ref{lem:bottom_up_space}, Lemma~\ref{lem:bottom_up_IOs}, 
and Theorem~\ref{thm:node-root}, the only places where we make use of the fact 
that $q$ is positive is where we use $q\cdot o(N)$ to absorb the term $o(N)$ in space analysis. Thus in the above theorem and lemmas, we can set $q = 0$ and add a term of $o(N)$ to the space bound to achieve the following result: A tree $T$ on $N$ nodes can be represented in $2N + \frac{8\tau N \lg B }{w} + o(N)$ bits such that given a node-to-root path of length $K$, the path can be reported in $O(K/B+1)$ I/Os, for any constant number $\tau$ such that $0 < \tau < 1$. 

In order to simplify our space result, we define one additional term $\epsilon = 8 \tau$, and then the claim in this corollary is true for any constant $0 < \epsilon < 8$, which includes the special case in which $0 < \epsilon < 1$.
\end{proof}

It is clear that the space cost of our data structure in the above corollary is very close to the information-theoretic lower bound of representing an ordinal tree on $N$ nodes, which is $2N-O(\lg N)$ as stated in Section~\ref{sec:succincttrees}.

\section{Top-Down Traversal}\label{sec:top_down}

Given a binary tree $T$, in which every node is associated with a key, we wish to traverse a top-down path of length $K$ starting at the root of $T$ and terminating at some node $v \in T$. Let $\succblksize$ be the maximum number of nodes that can be stored in a single block, and let $q = O(\lg N)$ be the number of bits required to encode a single key. 

\subsection{Data Structures} 

We begin with a brief sketch of our data structures. 
A tree $T$ is partitioned into subtrees, where each subtree $T_i$ is laid out 
into a \emph{tree block}. 
Each tree block contains a succinct representation of $T_i$ and the set of 
keys associated with the nodes in $T_i$. 
The edges in $T$ that span a block boundary are not explicitly stored within 
the tree blocks. 
Instead, they are encoded through a set of bit vectors (detailed later in 
this section) that enable navigation between tree blocks.

To introduce our data structures, we give some definitions. 
If the root node of a tree block is the child of a node in another block, 
then the first block is a \emph{child} of the second. 
There are two types of tree blocks: \emph{internal} blocks that have one 
or more \emph{child} blocks, and {\em terminal} blocks that have no \emph{child} blocks.  
The \emph{block level} of a block is the number of blocks along a path from the 
root of this block to the root of $T$.

We number the internal blocks in the following manner. 
Firstly, we number the block containing the root of $T$ as 1, and we number its 
child blocks consecutively from left to right. 
We then consecutively number the internal blocks at each successive block 
level (see Figure \ref{fig:root-node-block-arrange}). 
The internal blocks are stored on the disk in an array $I$ of disk blocks, such 
that the tree block numbered $j$ is stored in entry $I[j]$.

\begin{figure}[t]
	\centering
		\includegraphics{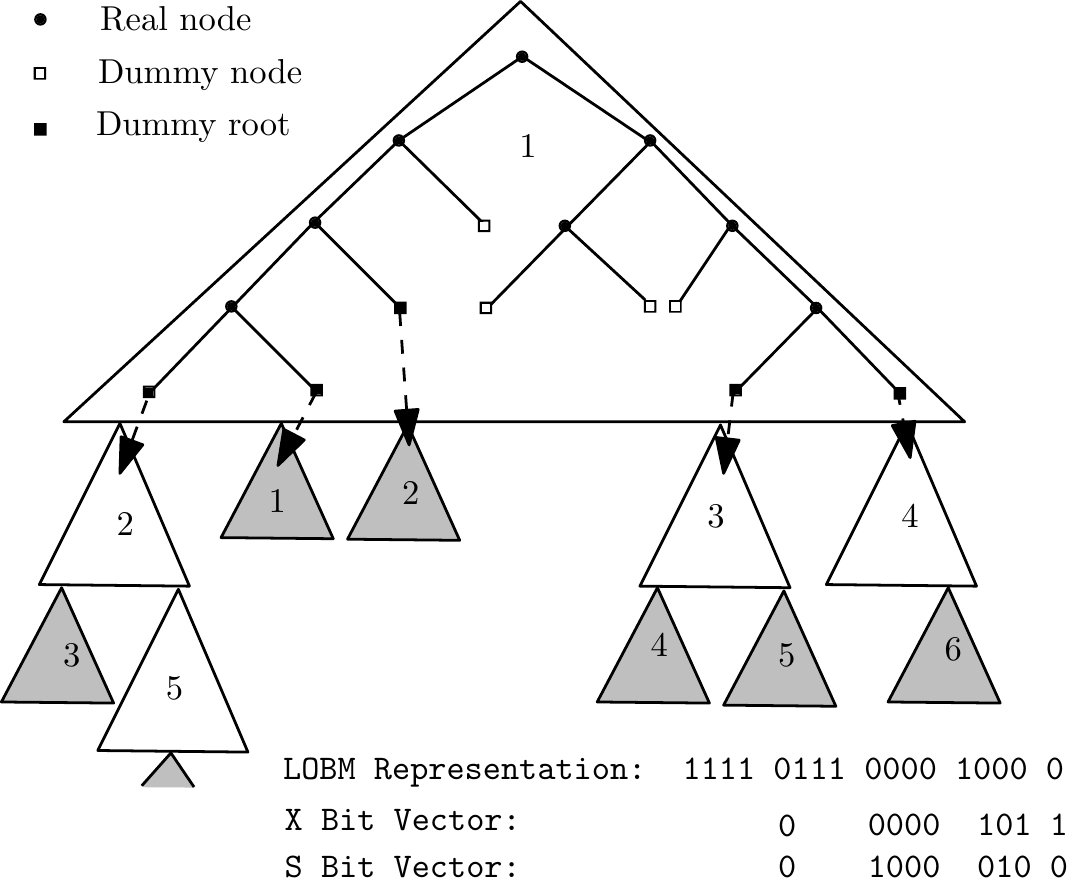}
	\caption[Tree block representation]{Numbering of internal (hallow triangles) and terminal 
(shaded triangles) blocks for $T$.  
The structure of $T$ within internal block 1 is also shown. 
The dashed arrows indicate the parent-child relationship between dummy 
roots in internal block 1 and their child blocks.  
Finally, the LOBM representation for internal block 1 and the corresponding 
bits in bit vectors X and S are shown at the bottom.  
Bits in bit vectors X and S have been spaced such that they align with their 
corresponding 0 bits (the dummy nodes/roots) in the LOBM representation.}
	\label{fig:root-node-block-arrange}
\end{figure}

Terminal blocks are numbered and stored separately. 
Starting again at 1, they are numbered from left to right at each block level. 
Terminal blocks are stored in the array $Z$. As terminal blocks may vary in size, there is no one-to-one correspondence between disk and tree blocks in $Z$; rather, the tree blocks are packed into $Z$ to minimize wasted space.  At the start of each disk block $j$, a $\lceil\lg{(Bw)}\rceil$-bit \emph{block offset} is stored which indicates the position of the starting bit of the first terminal block stored in $Z[j]$. Subsequent terminal blocks are stored immediately following the last bits of the previous terminal blocks. If there is insufficient space to record a terminal block within disk block $Z[j]$, the remaining bits are stored in $Z[j+1]$.

We now describe how an individual internal tree block is encoded. Consider the block of subtree $T_j$; it is encoded using the following structures:
\begin{enumerate}
\item The block keys, $B_k$, is an $\succblksize$-element array which encodes the keys of $T_j$ in level order.
\item The tree structure, $B_s$, is an encoding of $T_j$ using the LOBM sequence of 
Jacobson~\cite{jac_1989}. 
More specifically, we define each node of $T_j$ as a \emph{real} node.  
$T_j$ is then augmented by adding \emph{dummy} nodes as the left and / or right 
child of any real node that does not have a corresponding real child node in $T_j$. 
The dummy node may, or may not, correspond to a node in $T$, but the corresponding
node is not part of $T_j$. 
We then perform a level order traversal of $T_j$ and output a 1 each time we 
visit a real node, and a 0 each time we visit a dummy node. 
If $T_j$ has $\succblksize$ nodes, the resulting bit vector has $\succblksize$ 1s for real nodes and $\succblksize+1$ 0s for dummy nodes. Observe that the first bit is always 1, and the last two bits are always 0s, so it is unnecessary to store them explicitly. Therefore, $B_s$ can be represented with $2\succblksize-2$ bits.
\item The \emph{dummy offset}, $B_d$. Let $\Gamma$ be a total order over the 
set of all dummy nodes in internal blocks. 
In $\Gamma$ the order of dummy node, $d$, is determined first by its block number, and second by its position within $B_s$. The dummy offset records the position in $\Gamma$ of the first dummy node in $B_s$.
\end{enumerate}

 The encoding for terminal blocks is identical to internal blocks except that the dummy offset is omitted, and the last two $0$s of $B_s$ are encoded explicitly.

We now define a \emph{dummy root}. Let $T_j$ and $T_k$ be two tree blocks where $T_k$ is a child block of $T_j$. Let $r$ be the root of $T_k$, and $v$ be $r$'s parent in $T$. When $T_j$ is encoded, a dummy node is added as a child of $v$ which corresponds to $r$. Such a dummy node is termed a dummy root.

Let $\ell$ be the number of dummy nodes over all internal blocks.  We create three bit arrays:
\begin{enumerate}
\item $X[1..\ell]$ stores a bit for each dummy node in internal blocks. Set $X[i] = 1$ iff the $i$\textsuperscript{th} dummy node in $\Gamma$ is the dummy root of an internal block.
\item $S[1..\ell]$ stores a bit for each dummy node in internal blocks. Set $S[i] = 1$ iff the $i$\textsuperscript{th} dummy node in $\Gamma$ is the dummy root of a terminal block.
\item $S_B[1..\ell']$, where $\ell'$ is the number of 1s in $S$. Each bit in this array corresponds to a terminal block. Set $S_B[j] = 1$ iff the corresponding terminal block is stored starting in a disk block of $Z$ that differs from the one in which terminal block $j-1$ starts.
\end{enumerate}

\subsection{Block Layout} 
We have yet to describe how $T$ is split up into \emph{tree} blocks. 
This is achieved using the two-phase blocking strategy of 
Demaine \etal~\cite{dem_iac_lan_2004} (see Section~\ref{ssec:tree_blocking} for details). 
Phase one blocks the first $c \lg {N}$ levels of $T$, where $0 < c < 1$. 
Starting at the root of $T$, the first $\lfloor \lg{(\succblksize+1) \rfloor}$ levels are placed in a block. Conceptually, if this first block is removed, we are left with a forest of $O(\succblksize)$ subtrees. The process is repeated recursively until $c \lg {N}$ levels of $T$ have thus been blocked.

In the second phase we block the rest of the subtrees by the following recursive procedure. The root, $r$, of a subtree is stored in an empty block. The remaining $\succblksize-1$ capacity of this block is then subdivided, proportional to the size of the subtrees, between the subtrees rooted at $r$'s children. During this process, if at a node the capacity of the current block is less than $1$, a new block is created. To analyze the space costs of our structures, we have the following lemma.

\begin{lemma}\label{lem:top_down_space}
The data structures described above occupy $(3 + q)N + o(N)$ bits.
\end{lemma}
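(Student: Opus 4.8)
The plan is to account for the space of each component of the representation separately — the tree structures $B_s$, the block keys $B_k$, the dummy offsets $B_d$ and terminal block offsets, and the three global bit vectors $X$, $S$, $S_B$ — and show that the first two together contribute $(3+q)N$ bits (up to lower-order terms) while everything else is $o(N)$.

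First I would bound the total cost of the LOBM encodings $B_s$. Every real node of $T$ appears in exactly one tree block, contributing one $1$-bit, and each such block with $\mathcal{N}$ real nodes has $\mathcal{N}+1$ dummy nodes, hence at most $2\mathcal{N}+1$ bits (or $2\mathcal{N}-2$ for internal blocks where the leading $1$ and trailing two $0$s are suppressed). Summing over all blocks, the $1$-bits total $N$; the dummy $0$-bits total (number of blocks) more than $N$, so I need to bound the number of tree blocks. By the two-phase blocking of Demaine~\etal, each block is filled to capacity $\mathcal{N}=\Theta(Bw/(q+O(1)))$ except for $O(1)$ underfull blocks per recursive split, so the number of blocks is $O(N/\mathcal{N}) = O(N q /(Bw))$, which under the standing assumptions $B=\Omega(\lg N)$, $q=O(\lg N)$, $w=\Theta(\lg N)$ is $o(N)$. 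Thus all the $B_s$ sequences together occupy $2N + o(N)$ bits. Similarly the $B_k$ arrays store each key exactly once, using $qN$ bits total, plus $O(1)$ wasted key-slots per underfull block, contributing only another $o(N)$.

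Next I would handle the remaining pieces. The dummy offsets $B_d$ and the terminal-block offsets each cost $O(\lg(Bw))$ bits per block, and since there are $o(N)$ blocks, these sum to $o(N)$. For the global bit vectors: $X$, $S$ have length $\ell$ = total number of dummy nodes in internal blocks $= O(N) + (\text{number of internal blocks}) = O(N)$ actually $\le N + o(N)$; crucially the number of $1$s in $X$ is the number of internal blocks and the number of $1$s in $S$ is the number of terminal blocks, both $o(N)$, so by Lemma~\ref{lem:rank_select}(b) each of $X$ and $S$ needs only $\lg\binom{\ell}{o(N)} + O(\ell\lg\lg\ell/\lg\ell) = o(N)$ bits. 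Likewise $S_B$ has length $\ell'=o(N)$ and is stored in $o(N)$ bits by Lemma~\ref{lem:rank_select}(a). Adding the parts: $2N$ (structures) $+\ qN$ (keys) $+\ N$ — wait, the stated bound is $(3+q)N$, so I must locate the extra $N$: it comes from counting the dummy $0$-bits more carefully, since each tree block is encoded with roughly $2\mathcal{N}+1$ bits and, more importantly, the dummy nodes that are dummy roots are counted once in the parent block — but every real node except the global root is the root of at most... Actually the clean accounting is: the LOBM of a tree on $\mathcal{N}$ nodes has $2\mathcal{N}+1$ bits; summing $2\mathcal{N}_j+1$ over blocks gives $2N + (\#\text{blocks})$, not $3N$. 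So the third $N$ must instead be attributed to $X$ and $S$ having length $\ell \approx N$ — but we just argued they are $o(N)$ by sparsity. The resolution I would pin down is that $X$ and $S$ are stored over all $\ell$ dummy positions and the lemma's statement conservatively bounds $\ell \le N + o(N)$ and charges $X$ a plain $N+o(N)$-bit representation (no sparse encoding needed for the bound claimed), giving the $(3+q)N + o(N)$ total: $2N$ from structures, $qN$ from keys, $N$ from one of the global bit vectors stored plainly, and $o(N)$ from all the rest.

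The main obstacle is getting the constant in front of $N$ exactly right — specifically deciding which of $X$, $S$ is stored with a plain $N+o(N)$-bit bitvector (contributing the third $N$) versus with Raman~\etal's sparse encoding, and confirming that the number of dummy nodes in internal blocks is $N + o(N)$ rather than something larger. This requires the bound on the number of tree blocks from the Demaine~\etal\ blocking (that underfull blocks are few), which I would need to invoke carefully, together with the fact that each real node contributes exactly one dummy slot on average up to the block count. Once that block-count bound is in hand, every other term falls out of the standing parameter assumptions and Lemma~\ref{lem:rank_select}.
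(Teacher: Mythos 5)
Your overall plan matches the paper's: account for $B_s$, $B_k$, offsets, and the three bit vectors separately, and argue $2N+o(N)$ for the structures, $qN+o(N)$ for the keys, and $o(N)$ for the bookkeeping pieces, with the remaining $N$ coming from one global bit vector. The paper's own accounting is exactly this decomposition, with the third $N$ attributed to $X$: it bounds the number of $1$s in $X$ only by $N/2$, then stores $X$ via Lemma~\ref{lem:rank_select}(a) in $N+o(N)$ bits, while $S$ and $S_B$ each have $\ohOf{N}$ ones and are stored sparsely in $\ohOf{N}$ bits.

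The weak point in your write-up is the claim, made before you backtrack, that ``the number of $1$s in $X$ is the number of internal blocks\ldots\ both $o(N)$.'' That claim is not justified by anything you invoke, and if it held it would give the strictly stronger bound $(2+q)N + o(N)$, contradicting your need to ``locate the extra $N$.'' The paper does not attempt to prove that $X$ is sparse; it observes only that $X$ has length $\le N+1$ and at most $N/2$ ones, and accepts the plain $N+o(N)$-bit representation. Your eventual resolution (store $X$ plainly) is correct, but you should drop the unsupported sparsity claim for $X$ entirely rather than first assert it and then abandon it without saying why it fails. The sparsity argument via Lemma~\ref{lem:rank_select}(b) is sound for $S$ and $S_B$, whose numbers of ones are bounded by $N/\succblksize = o(N)$, but you should not extend it to $X$ without a bound on the number of internal blocks, which the blocking scheme does not supply. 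One further minor point: your ``$2N + (\#\text{blocks})$'' tally for the $B_s$ sequences ignores the saved bits (the implicit leading $1$ and, for internal blocks, trailing $0$s); the paper argues directly that duplicated dummy roots are free for this reason and the total for the tree structure is at most $2N$, not $2N$ plus the block count.
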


\begin{proof}
We first determine the maximum block size $\succblksize$. 
In our model a block stores at most $B w$ bits. 
The encoding of the subtree $T_j$ requires $2 \succblksize$ bits. 
We also need $\succblksize q$ bits to store the keys, 
and $\lceil\lg N\rceil + \lceil \lg (Bw)\rceil$ bits to store the dummy offset.  
We therefore have the following equation: $2\succblksize + \succblksize q + 
\lceil\lg N\rceil + \lceil \lg (Bw)\rceil \le B w$. 
Thus, number of nodes stored in a single block satisfies:
\begin{eqnarray}
\succblksize \le \frac{Bw - \lceil\lg N\rceil - \lceil \lg (Bw)\rceil}{q+2}
\end{eqnarray}

Therefore, we choose $\succblksize = \lfloor\frac{Bw - \lceil\lg N\rceil - \lceil \lg (Bw)\rceil}{q+2}\rfloor$ to partition the tree. Thus:
\begin{eqnarray}
   \succblksize =  \Theta \left( \frac{ B \lg {N} }{q} \right)
\end{eqnarray}

During the first phase of the layout, a set of non-full internal blocks may be 
created.  However, the height of the phase 1 tree is bounded by $c \lg{N}$ levels, 
so the total number of wasted bits in these blocks is bounded by $o(N)$.

The arrays of blocks $I$ and $Z$ store the structure of $T$ using the LOBM 
succinct representation which requires $2N$ bits. 
The dummy roots are duplicated as the roots of child blocks, but as the first 
bit in each block need not be explicitly stored, the entire tree structure still 
requires only $2N$ bits. We also store $N$ keys which require $N \cdot q$ bits. 
The block offsets stored in $Z$ and the dummy offsets stored for internal blocks require $o(N)$ bits in total. 
The bit vectors $S$ and $S_B$ have size at most $N+1$, but in both cases the number of $1$ bits is bounded by $N/\succblksize$. By Lemma \ref{lem:rank_select}b, we can store these vectors in $o(N)$ bits. 
The number of $1$ bits in $X$ is bounded by $N/2$. By lemma \ref{lem:rank_select}a it can be encoded by $N + o(N)$ bits. 
The total space is thus $(3+q)N + o(N)$ bits.
\end{proof}

\subsection{Navigation} 

Navigation in $T$ is summarized in Figures \ref{fig:top-down-search-alg} 
and \ref{fig:term-block-search-alg} which show the algorithms 
$Traverse(key,i)$ and $TraverseTerminalBlock(key,i)$, respectively. 
During the traversal, the function \linebreak $\compop(key)$ compares the 
value \emph{key} to the key of a node in order to determine which branch of the 
tree to traverse. 
The parameter $i$ is the number of a \emph{disk} block. 
Traversal is initiated by calling $Traverse(key,1)$.

\begin{figure}[t]
\framebox[\linewidth]{
\begin{minipage}{0.95\linewidth}
\normalsize
   \textbf{Algorithm $Traverse(key, i)$}
   \begin{enumerate}
      \item Load block $I[i]$ to main memory. Let $T_i$ denote the subtree stored in $I[i]$.
      \item Scan $B_s$ to navigate within $T_i$. At each node $x$, use $\compop(key, B_k[x])$ to determine which branch to follow until a dummy node $d$ with parent $p$ is reached.
      \item Scan $B_s$ to determine $j = \rankop[0]{B_s,d}$.
      \item Determine the position of $j$ with respect to $\Gamma$ by adding the \emph {dummy offset} to calculate $\lambda = B_d + j - 1$.
      \item If $X[\lambda]=1$, then set $i= \rankop[1]{X,\lambda}$ and call $Traverse(key,i)$.
      \item If $X[\lambda] =0$ and $S[\lambda]=1$, then set $i= \rankop[1]{S,\lambda}$ and call $TraverseTerminalBlock(key,i)$.
      \item If $X[\lambda] = 0$ and $S[\lambda]=0$, then $p$ is the final node on the traversal, so the algorithm terminates.
   \end{enumerate}
\end{minipage}
}
\caption{Top-down searching algorithm for a blocked tree}
\label{fig:top-down-search-alg}
\end{figure}

\begin{figure}[t]
\framebox[\linewidth]{
\begin{minipage}{0.95\linewidth}
\normalsize
   \textbf{Algorithm $TraverseTerminalBlock(key,i)$}
   \begin{enumerate}
      \item Load disk block $Z[\lambda]$ containing terminal block $i$, where $\lambda = \rankop[1]{S_B, i}$.
      \item Let $B_d$ be the offset of disk block $Z[\lambda]$.
      \item Find $\alpha$, the rank of terminal block $i$ within $Z[\lambda]$, by scanning from $S_B[i]$ backwards to find the largest $j \le i$ such that $S_B[j]=1$. Set $\alpha = i-j+1$.
      \item Starting at $B_d$, scan $Z[\lambda]$ to find the start of the $\alpha$\textsuperscript{th} terminal block. 
      Recall that each block stores a bit vector $B_s$ in the LOBM encoding, so that we can determine when we have reached the end of one terminal block as follows:
      \begin{enumerate}
         \item Set two counters $\mu = \beta = 1$; $\mu$ records the number of $1$ bits (\emph{real} nodes) encountered in $B_s$ during the scan, while $\beta$ records the excess of $1$ to $0$ bits  (\emph{dummy} nodes) encountered. Both are initially set to $1$ as the root node of the block is implied.
         \item Scan $B_s$. When a $1$ bit is encountered increment $\mu$ and $\beta$. When a $0$ bit is encountered decrement $\beta$. Terminate the scan when $\beta < 0$ as the end of $B_s$ has been reached.
         \item Now $\mu$ records the number of nodes in the terminal block, so calculate the length of array $B_k$ needed to store the keys and jump ahead this many bits. This will place the scan at the start of the next terminal block.
      \end{enumerate} 
      \item Once the $\alpha$\textsuperscript{th} block has been reached, the 
      terminal block can be read in (process is the same as scanning the 
      previous blocks). It may be the case that this terminal block overruns 
      the \emph{disk} block $Z[\lambda]$ into $Z[\lambda+1]$. 
      In this case, skip the first $\lceil \lg{(Bw)} \rceil$ bits of 
      $Z[\lambda+1]$ which stores the block offset, and continue reading in the 
      terminal block.
      \item With the terminal block in memory, the search can be concluded in 
      a manner analogous to that for internal blocks, except that once a dummy 
      node is reached, the search terminates.
   \end{enumerate}
\end{minipage}
}
\caption{Performing search for a terminal block}
\label{fig:term-block-search-alg}
\end{figure}

\begin{lemma}\label{lem:top-down-IOs}
For a tree $T$ laid out in blocks and represented by the data structures described 
above, a call to $TraverseTerminalBlock$ can be performed in $O(1)$ I/Os, while $Traverse$ can be executed in $O(1)$ I/Os per recursive call.
\end{lemma}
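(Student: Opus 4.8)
The plan is to go through both algorithms one line at a time and sort every operation into three categories: (a) loading a single disk block, costing one I/O; (b) a constant number of $\rankop$/$\selop$ queries on one of the globally stored bit vectors $X$, $S$, $S_B$, each costing $O(1)$ I/Os by Lemma~\ref{lem:rank_select}; and (c) purely in-memory work performed inside a tree block already resident in internal memory, costing no I/O. The enabling observation is that a tree block occupies at most $Bw$ bits, hence $O(B)$ words, and since $M \ge B$ the whole loaded tree block fits in internal memory; consequently every scan of $B_s$, every comparison against an entry of $B_k$, every evaluation of $\rankop[0]{B_s,d}$, and every addition of a stored dummy or block offset is of category (c) and incurs no I/O.

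For $Traverse(key,i)$ the classification is immediate: Step~1 loads $I[i]$ (category (a), noting $I[i]$ is a single disk block by construction); Steps~2--4 are category (c); Steps~5 and~6 each issue one $\rankop$ on $X$ or $S$ (category (b)) and then recurse, the recursive call being excluded since the lemma bounds the cost \emph{per} invocation; Step~7 performs no I/O. Thus a single call to $Traverse$ costs $O(1)$ I/Os. For $TraverseTerminalBlock(key,i)$, Step~1 computes $\rankop[1]{S_B,i}$ (category (b)) and loads $Z[\lambda]$ (category (a)); Step~2 reads the offset already present in the loaded block; Step~3, which locates the largest $j \le i$ with $S_B[j]=1$, is realised not by the literal backward scan of the pseudocode but as $j=\selop[1]{S_B,\rankop[1]{S_B,i}}$, again $O(1)$ I/Os by Lemma~\ref{lem:rank_select}; Step~4 scans forward to the start of the $\alpha$-th terminal block, and since by the definition of $\alpha$ that start lies inside the already-loaded $Z[\lambda]$, this is category (c); Step~5 reads the chosen terminal block, which may overrun into $Z[\lambda+1]$ and therefore adds at most one more disk-block load; Step~6 concludes the search inside the loaded terminal block. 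Summing, $TraverseTerminalBlock$ costs $O(1)$ I/Os.

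The only genuine obstacle is reconciling the ``scanning'' phrasing of the pseudocode with a constant I/O bound. Concretely, the backward traversal of $S_B$ in Step~3 of $TraverseTerminalBlock$ could, taken literally, touch $\Theta(\ell')$ positions of an externally stored bit vector, so the proof must replace it by the $\rankop$-then-$\selop$ computation above; and one must verify the two boundary facts that the forward scan in Step~4 never leaves the disk block already resident in memory while the subsequent read of the $\alpha$-th terminal block spills into at most one further disk block of $Z$. Once these points are settled, the rest is bookkeeping: each invocation consists of one disk-block load together with a constant number of $\rankop$/$\selop$ calls, each $O(1)$ I/Os by Lemma~\ref{lem:rank_select}, plus I/O-free in-memory computation.
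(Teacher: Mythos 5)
Your overall strategy — classify each step as a block load, a constant number of $\rankop$/$\selop$ calls, or I/O-free in-memory work — matches the paper's. The one place you genuinely diverge is exactly the step the paper itself flags as the delicate one: Step~3 of $TraverseTerminalBlock$. The paper keeps the literal backward scan of $S_B$ and argues it is safe by a bounding observation: the terminal blocks indexed by positions $j$ through $i$ of $S_B$ all begin inside the same disk block $Z[\lambda]$, each such terminal block occupies at least $2+q$ bits, so the scan touches only $O(B w / (2+q)) = O(B w)$ bits of $S_B$, and since $S_B$'s disk blocks are the same size as $Z$'s the scan crosses at most one block boundary. You instead replace the scan outright by $j = \selop[1]{S_B,\rankop[1]{S_B,i}}$, which is correct (if $S_B[i]=1$ it returns $i$, otherwise the position of the last preceding $1$) and gives $O(1)$ I/Os immediately from Lemma~\ref{lem:rank_select}. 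Your route is arguably cleaner, since it sidesteps the need to reason about where terminal blocks lie on disk, at the modest cost of departing from the pseudocode as written; the paper's route has the virtue of proving the algorithm correct as stated. Both are valid, and the remaining bookkeeping — one I/O for $I[i]$, at most two for the possibly-spilled terminal block read, $O(1)$ rank/select calls — you handle the same way the paper does.
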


\begin{proof}

Internal blocks are traversed by the algorithm $Traverse(key,i)$ in Figure~\ref{fig:top-down-search-alg}. Loading the block in step 1 requires a single I/O, while steps 2 through 4 are all performed in main memory. Steps 5 and 6 perform lookups and call $\rankop$ on $X$ and $S$, respectively. This requires a constant number of I/Os. Step 7 requires no additional I/Os.

The algorithm $TraverseTerminalBlock(key,i)$ is executed at most once per traversal. 
The look-up and $\rankop$ require only a single I/O. 
The only step that might cause problems is step 3 in which the bit array $S_B$ 
is scanned.  
Note that each bit in $S_B$ corresponds to a terminal block stored in $Z$. 
The terminal block corresponding to $i$ is contained in $Z[\lambda]$, 
and the terminal block corresponding to $j$ also starts in $Z[\lambda]$. A terminal block is represented by at least $2 + q$ bits. As blocks in $S_B$ are of the same size as in $Z$, we cross at most one block boundary in $S_B$ during the scan.
\end{proof}

The I/O bounds are then obtained directly by substituting our succinct block size $\succblksize$ for the standard block size $B$ in Lemma \ref{lem:top_down_demaine}. Combined with Lemmas \ref{lem:top_down_space} and  \ref{lem:top-down-IOs}, this gives us the following result:

\begin{theorem}\label{thm:root-node}
A rooted binary tree, $T$, of size $N$, with keys of size $q=O(\lg{N})$ bits,  can be stored using $(3+q)N + o(N)$ bits so that a root to node path of length $K$ can be reported with:
\begin{enumerate}
\item $O \left( \frac{K}{\lg(1+(B \lg N)/q)} \right)$ I/Os, when $K = O(\lg N)$,
\item $O \left( \frac{\lg N}{\lg (1+\frac{B \lg^2 N}{qK} )} \right)$ I/Os, when $K=\Omega(\lg N)$ and $K=O \left( \frac{B \lg^2 N}{q} \right)$, and 
\item $O \left( \frac{qK}{B \lg N} \right)$ I/Os, when $K = \Omega \left( \frac{B \lg^2 N}{q} \right)$.
\end{enumerate}
\end{theorem}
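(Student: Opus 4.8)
The plan is to assemble the theorem from the three ingredients already established, viewing our construction as an instance of the Demaine--Iacono--Langerman two-phase blocking scheme in which the role of the disk-block capacity $B$ is played by our succinct tree-block capacity $\succblksize$. First I would dispose of the space claim: this is exactly Lemma~\ref{lem:top_down_space}, so nothing new is needed. Recall why it holds --- the tree topology costs $2N$ bits in the LOBM encoding (a dummy root is shared between a block and its child block, and the leading bit of each block is implicit, so no node is paid for twice), the keys cost $Nq$ bits, the bit vector $X$ costs $N + o(N)$ bits by Lemma~\ref{lem:rank_select}(a) since it has at most $N/2$ ones, and every remaining auxiliary object ($S$, $S_B$, the block offsets in $Z$, the dummy offsets, and the $o(N)$ wasted bits in the $O(\lg N)$ levels produced by phase one) contributes only $o(N)$ bits. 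This gives the $(3+q)N + o(N)$ bound.

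Next I would handle the I/O bound. The block-layout procedure of Section~\ref{sec:top_down} is, line for line, the strategy of Lemma~\ref{lem:top_down_demaine}, the sole difference being that a tree block now holds up to $\succblksize$ nodes rather than $B$. Consequently the number of \emph{tree} blocks met along a root-to-node path of length $K$ equals the I/O count of Lemma~\ref{lem:top_down_demaine} with $B$ replaced by $\succblksize$: it is $\Theta(K/\lg(1+\succblksize))$ when $K = O(\lg N)$; $\Theta\!\big(\lg N / \lg(1+\succblksize \lg N / K)\big)$ when $K = \Omega(\lg N)$ and $K = O(\succblksize \lg N)$; and $\Theta(K/\succblksize)$ when $K = \Omega(\succblksize \lg N)$. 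By Lemma~\ref{lem:top-down-IOs}, each recursive call of $Traverse$ --- one disk access for the internal block $I[i]$, plus a constant number of $\rankop$/$\selop$ queries on $X$ and $S$, each costing $O(1)$ I/Os by Lemma~\ref{lem:rank_select} --- costs $O(1)$ I/Os, and $TraverseTerminalBlock$ is invoked at most once along a path, also at $O(1)$ I/Os. Hence the total I/O count is within a constant factor of the number of tree blocks visited. It then remains only to substitute $\succblksize = \Theta(B\lg N / q)$ (derived inside the proof of Lemma~\ref{lem:top_down_space}, using $B = \Omega(\lg N)$ and $w = \Theta(\lg N)$ so that $Bw$ dominates the $\lceil\lg N\rceil + \lceil\lg(Bw)\rceil$ correction terms): then $\lg(1+\succblksize) = \Theta\!\big(\lg(1 + B\lg N/q)\big)$, $\succblksize\lg N = \Theta(B\lg^2 N / q)$ --- so the boundary conditions $K = O(\succblksize\lg N)$ and $K = \Omega(\succblksize\lg N)$ become $K = O(B\lg^2 N/q)$ and $K = \Omega(B\lg^2 N/q)$ --- and $K/\succblksize = \Theta(qK/(B\lg N))$. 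This yields the three stated bounds.

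The only point requiring genuine care, and hence the closest thing to an obstacle, is the gap between ``number of tree blocks'' and ``number of I/Os'': unlike the idealized model of Lemma~\ref{lem:top_down_demaine}, our terminal blocks are packed several per disk block in $Z$ and our navigation consults external bit vectors rather than holding them in memory. This is precisely what Lemma~\ref{lem:top-down-IOs} is designed to absorb, so the argument reduces to quoting it; in writing it up I would make explicit the one delicate step, namely that the backward scan of $S_B$ in step~3 of $TraverseTerminalBlock$ crosses at most one disk-block boundary --- since a terminal block occupies at least $2+q$ bits (the same per-entry budget against which a bit of $S_B$ is effectively indexed), the predecessor terminal block $j$ starts in the same disk block $Z[\lambda]$ as block $i$ --- keeping that step at $O(1)$ I/Os and completing the reduction.
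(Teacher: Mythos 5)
Your proposal is correct and follows the paper's own argument essentially verbatim: take the space bound from Lemma~\ref{lem:top_down_space}, observe that the layout matches the Demaine--Iacono--Langerman blocking of Lemma~\ref{lem:top_down_demaine} with $B$ replaced by $\succblksize = \Theta(B\lg N/q)$, and invoke Lemma~\ref{lem:top-down-IOs} to convert the tree-block count into an I/O count. The extra care you devote to the $S_B$ scan and the terminal-block packing is the same observation the paper uses in Lemma~\ref{lem:top-down-IOs}.
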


When key size is constant, the above result leads to the following corollary.

\begin{corollary}\label{cor:top_down_fixed_keys}
Given a rooted binary tree, $T$, of size $N$, with keys of size  $q = O(1)$ bits, $T$ can be stored using $(3+q)N + o(N)$ bits in such a manner that a root to node path of length $K$ can be reported with:
\begin{enumerate}
\item $O \left( \frac{K}{\lg(1+(B \lg N))} \right)$ I/Os when $K = O(\lg N)$,
\item $O \left( \frac{\lg N}{\lg (1+\frac{B \lg^2 N}{K} )} \right)$ I/Os when $K=\Omega(\lg N)$ and $K=O \left( B \lg^2 N \right)$, and 
\item $O \left( \frac{K}{B \lg N} \right)$ I/Os when $K = \Omega (B \lg^2 N)$.
\end{enumerate}
\end{corollary}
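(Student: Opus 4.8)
The plan is to obtain the corollary directly from Theorem~\ref{thm:root-node} by holding $q$ constant and simplifying the resulting asymptotic expressions. The space bound is immediate: substituting $q = O(1)$ into Theorem~\ref{thm:root-node} gives $(3+q)N + o(N)$ bits, exactly as stated (and this is $O(N)$). For the I/O bounds, first note that the three case hypotheses transform correctly: since $q = \Theta(1)$, we have $B\lg^2 N / q = \Theta(B\lg^2 N)$, so ``$K = O(B\lg^2 N/q)$'' becomes ``$K = O(B\lg^2 N)$'' and ``$K = \Omega(B\lg^2 N/q)$'' becomes ``$K = \Omega(B\lg^2 N)$'', while ``$K = O(\lg N)$'' and ``$K = \Omega(\lg N)$'' are unchanged. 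So the partition of the $K$-range in the corollary matches that of the theorem.

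The only step with any content is to show that the constant $q$ sitting inside a nested logarithm may be absorbed into the asymptotic notation. For this I would record the elementary estimate: for every fixed constant $\lambda > 0$ and every $x \ge 0$, one has $\lg(1 + \lambda x) = \Theta(\lg(1 + x))$. When $\lambda \ge 1$ this follows from $\lg(1+x) \le \lg(1+\lambda x) \le \lg(\lambda(1+x)) = \lg \lambda + \lg(1+x)$; when $0 < \lambda < 1$, a short derivative computation shows $g(x) = \lg(1+\lambda x) - \lambda \lg(1+x)$ is nondecreasing with $g(0)=0$, whence $\lambda \lg(1+x) \le \lg(1+\lambda x) \le \lg(1+x)$. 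In either case the two quantities differ only by a constant factor, and the same is then true of their reciprocals whenever the arguments stay $\Omega(1)$.

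Applying this with $q = \Theta(1)$ dispatches the three cases. In case (1), $(B\lg N)/q = \Theta(B\lg N)$, so $\lg(1 + (B\lg N)/q) = \Theta(\lg(1 + B\lg N))$ and the bound becomes $O(K/\lg(1+(B\lg N)))$. In case (3), $qK = \Theta(K)$, giving $O(K/(B\lg N))$. Case (2) is the only one needing slightly more care: here $\frac{B\lg^2 N}{qK} = \Theta\!\left(\frac{B\lg^2 N}{K}\right)$, and the regime hypothesis $K = O(B\lg^2 N/q)$ guarantees this argument is $\Omega(1)$, so $\lg\!\left(1 + \frac{B\lg^2 N}{qK}\right) = \Theta\!\left(\lg\!\left(1 + \frac{B\lg^2 N}{K}\right)\right)$ and, passing to reciprocals, the bound becomes $O\!\left(\lg N / \lg(1 + \frac{B\lg^2 N}{K})\right)$. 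I do not anticipate any genuine obstacle: the corollary is essentially Theorem~\ref{thm:root-node} read with $q$ held constant, and the one line of real justification is the log estimate above together with the observation that the case hypotheses keep every relevant logarithm's argument $\Omega(1)$, so the $\Theta$-estimates remain meaningful and survive inversion.
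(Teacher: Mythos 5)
Your proposal is correct and takes exactly the approach the paper intends: the paper presents Corollary~\ref{cor:top_down_fixed_keys} as an immediate consequence of Theorem~\ref{thm:root-node} with $q$ held constant and gives no separate proof. Your write-up simply makes explicit the routine justification (the $\lg(1+\lambda x)=\Theta(\lg(1+x))$ estimate and the matching of case boundaries) that the paper leaves implicit.
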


Corollary \ref{cor:top_down_fixed_keys} shows that, in the case where the number 
of bits required to store each search key is constant, our approach not only 
reduces storage space but also improves the I/O efficiency. 
In particular, when $K=\omega(B \lg{N})$ and $K=O(B \lg^2{N})$, the number of I/Os required in Lemma \ref{lem:top_down_demaine} is $\Theta(K/B) = \omega(\lg{N})$ while that required in Corollary~\ref{cor:top_down_fixed_keys} is $O \left( \frac{\lg N}{\lg (1+\frac{B \lg^2 N}{K} )} \right) = O(\lg{N})$. 
For all the other values of $K$, it is clear that our I/O bounds are better. 

\section{Conclusions}
 
We have presented two new data structures that are both I/O efficient 
and succinct for bottom-up and top-down traversal in trees. 
Our bottom-up result applies to trees of arbitrary degree, while our 
top-down result applies to binary trees.  
In both cases the number of I/Os is asymptotically optimal.

Our results lead to several open problems.  
Our top-down technique is valid for binary trees only. 
Whether this approach can be extended to trees of larger degrees is an 
open problem.  
For the bottom-up case it would be interesting to see if the asymptotic 
bound on I/Os can be improved from $O(K/B+1)$ to something closer to 
$O(K/\succblksize+1)$ I/Os, where $\succblksize$ is the number of nodes 
that can be represented succinctly in a single disk block.  
In both the top-down and bottom-up cases, several $\rankop$ and 
$\selop$ operations are required to navigate between blocks. 
These operations use only a constant number of I/Os, and it would be 
useful to reduce this constant factor. 
This might be achieved by reducing the number of $\rankop$ and 
$\selop$ operations used in the algorithms, or by demonstrating how 
the bit arrays could be interleaved to guarantee a low number of I/Os 
per block.

%
%
%
%
%
%

\chapter[Traversal in Planar Graphs]{I/O-Efficient Path Traversal in Planar 
  Graphs}\label{chp:succinct_graphs}
\chaptermark{Traversal in Planar Graphs}

\section{Introduction}

External memory (EM) data structures and succinct data structures both
address the problem of representing very large data sets.
In the EM model, the goal is to structure data that are too large to fit into
internal memory in a way that minimizes the transfer of data between
internal and external memory when answering certain queries.
For succinct data structures, the aim is to encode the structural
component of the data structure using as little space as is
theoretically possible while still permitting efficient navigation of
the structure.
Thus, EM data structures deal with the I/O bottleneck
that arises when the data are too large to fit into memory, while
succinct data structures help to avoid this bottleneck as they allow
more data to be stored in memory.
Succinct EM data structures maximize the amount of data that fits on a disk of
a given size or in a disk block.
The former is important because an increasing number of
large-scale applications find themselves limited by the amount of data
that fits on a disk.
The latter helps to reduce the I/O bottleneck further, as more data
can be swapped between memory and disk in a single I/O operation.

In this chapter, we develop a succinct EM data structure for path
traversal in planar graphs.
Given a bounded-degree planar graph $G$,
our goal is to simultaneously minimize the amount of space used to
store $G$ on disk as well as the number of I/O operations required to report
a path of length $K$ in $G$.
As practical applications of our
structure, we show how it can be used to answer a range of important
queries on triangular irregular network (TIN) models of terrains.

\subsection{Background}\label{sec:background}

In the \emph{external memory} (EM) model \cite{DBLP:journals/cacm/AggarwalV88}, the
computer is assumed to be equipped with a two-level memory hierarchy
consisting of \emph{internal} and (disk-based) \emph{external memory}.
The external memory is assumed to have infinite size, but accessing data
elements in external memory is several orders of magnitude slower than accessing
data in internal memory.
Conversely, while internal memory permits efficient operations, its size is limited 
to $M$ data elements.
Data is transferred between internal and external memory by means of
\emph{I/O operations} (\emph{I/Os} for short), each of which transfers
a block of $B$ consecutive data items. 
The efficiency of a data structure in the EM model is measured in terms of the
space it uses and the number of I/Os required to answer certain queries.
In order to take advantage of blockwise disk accesses, it is necessary that $M
\ge B$.
Throughout this work, we assume $M \ge 2B$ and $B = \OmegaOf{\lg N}$,
where $N$ denotes the input size. Furthermore, we assume that the machine
word size is $\wsize = \Theta(\lg N)$ bits, an assumption commonly used 
in papers on succinct data structures in internal 
memory~\cite{DBLP:journals/talg/RamanRS07}.

Nodine~\etal~\cite{ngv_1996} first explored the problem of blocking
graphs in external memory for efficient path traversal. 
They stored the
 graph in disk blocks (possibly with duplication of
vertices and edges), so that any path in the graph could be traversed
using few I/Os relative to the path's length.
The efficiency of the blocking is expressed in terms
of the \emph{blocking speed-up}, which is the minimum ratio between
the length of a traversed path and the number of I/Os required to
traverse it, taken over all paths that require more than $c$ I/Os to
traverse, for some constant $c$, and assuming that only one block can
be held in memory at any point in time.
The authors identified the optimal bounds for the worst-case blocking speed-up
for several classes of graphs.
Agarwal \etal~\cite{DBLP:conf/soda/AgarwalAMVV98} proposed a
blocking of bounded-degree planar graphs such that any path of length
$K$ can be traversed using $\OhOf{K / \lg B}$ I/Os.

Succinct data structures represent their structural
components using space as near the information-theoretic lower bound as
possible while still permitting efficient operations.
These were originally proposed by Jacobson
\cite{jac_1989}, who designed succinct representations of trees and
graphs.
To represent graphs, Jacobson relied on the technique of book
embeddings by Bernhart and Kainen
\cite{DBLP:journals/jct/BernhartK79}.
A $k$-page book embedding of a graph is an ordering of the graph's vertices,
along with a partition of its edges into $k$ ``pages'',
each of which is a subset of edges that induces an outerplanar embedding of
the graph consistent with the chosen vertex ordering.
Yannakakis \cite{DBLP:conf/stoc/Yannakakis86}
demonstrated that $k=4$ is necessary and sufficient to partition
planar graphs.
Jacobson's data structure embeds each page of a graph
on $N$ vertices using $\OhOf{N}$ bits, and can thus represent
$k$-page graphs using $\OhOf{kN}$ bits and planar graphs using
$\OhOf{N}$ bits.
The structure supports adjacency queries using
$\OhOf{\lg N}$ bit probes, and the listing of the neighbours of a
vertex $v$ of degree $\deg{v}$ using $\OhOf{\deg{v} \lg N + k}$ bit
probes.

Jacobson's result was improved upon by Munro and Raman
\cite{DBLP:conf/focs/MunroR97} under the word-RAM model.
They showed how to represent a $k$-page graph with $N$ vertices and $M$ edges
using $2kN + 2M + \ohOf{Nk + M}$ bits such that adjacency and
vertex degree queries can be answered in $\OhOf{k}$ time.
The neighbours of a vertex $v$ can be reported in $\OhOf{d(v)+k}$ time.
For planar graphs, this result translates to an $(8N + 2M + \ohOf{N+M})$-bit
representation that answers adjacency and degree queries in constant
time, and lists neighbours in $\OhOf{\deg{v}}$ time.
Gavoille and Hanusse \cite{DBLP:journals/dmtcs/GavoilleH08} proposed an encoding
for $M$-edge $k$-page-embeddable graphs that allows multiple edges and
loops.
For a graph with no isolated vertices, their structure uses $2M
\lg k + 4M$ bits, which is an improvement over Munro and Raman's
structure whenever $M \le kN / (2\lg k)$.
By adding an auxiliary table of $\ohOf{M \lg k}$ bits, this encoding
supports calculating vertex degrees in constant time, answering adjacency
queries in $\OhOf{\lg k}$ time (constant for planar graphs), and
accessing all neighbours of a vertex in $\OhOf{\deg{v}}$ time.

An alternate approach to book embeddings of planar graphs, which uses
canonical orderings of the graph's vertices and edges, was presented
by Chuang \etal~\cite{chuang_et_al_1998}.
Their solution represents a planar graph using
$2M + (5 + 1/\epsilon)N + \ohOf{M + N}$ bits, for any $\epsilon > 0$ 
(with $\epsilon = \OhOf{1}$), and supports constant-time adjacency 
and degree queries.
If the graph is simple, this bound becomes
$\frac{5}{3}M + (5 + 1/\epsilon)N + \ohOf{N}$.
For triangulated planar graphs, they reduced the space
bound to $2M + 2N + \ohOf{N}$ bits.
The space bound for general planar graphs was reduced by Chiang
\etal~\cite{DBLP:journals/siamcomp/ChiangLL05}, who showed that a
graph with no multiple edges or self loops can be represented using
$2M + 2N + \ohOf{M + N}$ bits to support the same set of
operations.
For the case of triangulated planar graphs, Yamanaka and
Nakano~\cite{DBLP:conf/walcom/YamanakaN08} presented an encoding that
uses $2N + \ohOf{M}$ bits and supports adjacency, degree, and
clockwise neighbour queries in constant time.

Barbay \etal~\cite{DBLP:conf/isaac/BarbayAHM07} presented several
results with respect to both planar graphs, triangulations, and
$k$-page graphs, including the first results for labeled graphs and
triangulations.
For planar triangulations, they added support for
rank/select queries of edges in counterclockwise order using $2M
\lg 6 + \ohOf{M}$ bits.
For plane graphs, their structures support
standard queries, as well as rank/select queries in counterclockwise
order, using $3N(2 \lg 3 + 3 + \epsilon) + \ohOf{N}$
bits (for $0 < \epsilon < 1$).
For $k$-page graphs
with large values of $k$, they proposed a representation using $N + 2M
\lg k + M \cdot \ohOf{\lg k} + \OhOf{M}$ bits and supporting
adjacency queries in $\OhOf{\lg k \lg\lg k}$ time, degree queries in
constant time, and the listing of neighbours in $\OhOf{\deg{v} \lg\lg k}$ time.
An alternative representation uses
$N + (2 + \epsilon)M \lg k + M \cdot \ohOf{\lg k} + \OhOf{M}$ bits and
supports adjacency queries in $\OhOf{\lg k}$ time, degree queries in
constant time, and the listing of neighbours in $\OhOf{\deg{v}}$ time.

A third strategy for succinct graph representations is based on graph
partitions.
Aleardi \etal~\cite{DBLP:conf/wads/AleardiDS05}
introduced a succinct representation for triangulations with a
boundary based on a hierarchical partition of the triangulation.
The combinatorial structure of the triangulation is partitioned into small
sub-triangulations which are further partitioned into tiny
sub-triangulations.
The representation stores the connectivity
information with asymptotically 2.175 bits per triangle, and supports
navigation between triangles in constant time.
The same authors presented an improved result of 1.62 bits per triangle for
triangulations without a boundary, and also demonstrated that
3-connected planar graphs can be represented with 2 bits per edge
\cite{DBLP:journals/tcs/AleardiDS08}.
For both planar triangulations and 3-connected planar graphs, this
representation permits constant-time navigation using additional
$\OhOf{N \lg \lg N \mathop{/} \lg N}$ bits of storage.
A partitioning approach was also
employed by Blandford \cite{DBLP:conf/soda/BlandfordBK03}, who showed
that graphs with small separators can be represented using $\OhOf{N}$
bits such that adjacency and degree queries can be answered in
constant time and listing the neighbours of a vertex takes
$\OhOf{\deg{v}}$ time.

Farzan and Munro \cite{DBLP:conf/esa/FarzanM08} considered
the case of succinct representations of arbitrary graphs.
They described an encoding that requires
$(1 + \epsilon) \lg \parensXL{\genfrac{}{}{0pt}{}{N^2}{M}}$ bits,
for an arbitrarily small constant $\epsilon > 0$,
and supports adjacency and degree queries in constant
time, and the listing of neighbours in $\OhOf{\deg{v}}$ time.

Little work has focused on obtaining succinct EM data structures.
The only results of which we are aware focus on text indexing 
\cite{DBLP:conf/dcc/ChienHSV08,  clark_96}, and on path traversals in 
trees \cite{DillabaughHM08}.

\subsection{Our Contributions}

In this chapter we present the following results:

\begin{enumerate}
\item In Section \ref{sec:graph_rep}, we present a data structure that
  uses $Nq + \OhOf{N} + \ohOf{Nq}$ bits to represent a planar graph with $N$
  vertices, each with a label of size $q$, and which allows the traversal of
  any path of length $K$ using $O(K / \lg B)$ I/Os.
  This path traversal cost matches that achieved by the data structure of
  Agarwal \etal~\cite{DBLP:conf/soda/AgarwalAMVV98}, but the latter
  uses $\ThetaOf{N \lg N} + 2Nq$ bits to store the graph.
  In the context of large datasets, this space saving represents a
  considerable improvement
  (e.g., with keys of constant size, the space bound improves by
  a factor of $\lg N$).
\item In Section \ref{sec:tins}, we apply our structure to store
  triangulations.
  If storing a point requires $\bitsPerPoint$ bits, we are able to store a triangulation in
  $N \bitsPerPoint + \OhOf{N} + \ohOf{N \bitsPerPoint}$ bits so that any path crossing $K$
  triangles can be traversed using $\OhOf{K / \lg B}$ I/Os.
  Again, the I/O efficiency of our structure matches that
  of~\cite{DBLP:conf/soda/AgarwalAMVV98} with a similar space improvement
  as for bounded-degree planar graphs.
\item In Section \ref{sec:point_location}, we show how to
  augment our triangulation representation with $\ohOf{N\bitsPerPoint}$ bits of extra
  information in order to support point location 
  queries using $\OhOf{\log_B N}$ I/Os.
  Asymptotically, this does not change the space requirements.
\item In Section \ref{sec:applications}, we describe several
  applications that make use of our representation for triangulations 
  from Section~\ref{sec:tins}.
  We demonstrate that reporting terrain profiles and trickle paths takes
  $\BigOh{K / \lg B}$ I/Os.
  We show that connected-component queries---that is, reporting a set of
  triangles that share a common attribute and induce a connected
  subgraph in the triangulation's dual---can be performed using
  $\BigOh{K / \lg B}$ I/Os when the component being reported is convex and consists
  of $K$ triangles.
  For non-convex regions with holes, we achieve a query bound of
  $\BigOh{K / \lg B + h \log_B h}$, where $h$ is the number
  of edges on the component's boundary.
  In order to achieve this query bound, the query procedure uses
  $\BigOh{h \cdot (\bitsPerKey + \lg h)}$ extra space.
  Without using any extra space, the same query can be answered using
  $\BigOh{K / \lg B + h' \log_B h'}$ I/Os, where $h'$ is the number of
  triangles that are incident on the boundary of the component.
\end{enumerate}

\section{Preliminaries}

\paragraph{Rank and select queries on bit vectors.}

The basic definitions for, and operations on, bit vectors have
been described in Chapter~\ref{chp:background}, 
Section~\ref{sec:bckgrnd-succint-ds}.
We repeat here a key lemma which summarizes results on
bit vectors used in the current Chapter.
Part (a) is from Jacobson~\cite{jac_1989} and Clark and 
Munro~\cite{clark_96}
while part (b) is from Raman~\etal~\cite{DBLP:journals/talg/RamanRS07}:

\begin{lemma}
  \label{lem:rank_select}
  A bit vector $S$ of length $N$ and containing $R$ $1$s can be
  represented using either (a) $N + \ohOf{N}$ bits or (b) $\lg \binom{N}{R}
  + \OhOf{N \lg \lg N / \lg N}$ bits to support the access to each bit, as
  well as $\rankopsym$ and $\selopsym$ operations, in $\OhOf{1}$ time (or
  $\OhOf{1}$ I/Os in external memory).\footnote{Note that $\log \binom{N}{R} +
    \OhOf{N \log \log N / \log N} = \ohOf{N}$ as long as $R = \ohOf{N}$.}
\end{lemma}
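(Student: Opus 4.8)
The plan is to prove both parts by the standard two-level (in fact three-level) directory constructions, augmented for part~(b) with the compressed block encoding of Raman, Raman and Rao. In all cases the argument reduces each operation to $O(1)$ accesses to memory words of $\Theta(\lg N)$ bits each, and the external-memory claim then follows from the blocking assumption.

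For part~(a), first I would partition $S$ into \emph{superblocks} of $\lceil \lg^2 N\rceil$ consecutive bits and, for each superblock, store the number of $1$s among all preceding superblocks; there are $O(N/\lg^2 N)$ superblocks and each count fits in $O(\lg N)$ bits, so this costs $O(N/\lg N) = o(N)$ bits. Next I would subdivide each superblock into \emph{blocks} of $\lceil \frac{1}{2}\lg N\rceil$ bits and store, for each block, the number of $1$s in the preceding blocks of its own superblock; each relative count fits in $O(\lg\lg N)$ bits and there are $O(N/\lg N)$ blocks, for $O(N\lg\lg N/\lg N) = o(N)$ bits. Finally, to finish a rank query inside a block I would precompute a universal lookup table mapping a block pattern (one of at most $2^{\lceil \frac{1}{2}\lg N\rceil} = O(\sqrt N)$ patterns) together with an intra-block position to the partial popcount; this table occupies $O(\sqrt N \cdot \lg N \cdot \lg\lg N) = o(N)$ bits. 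A $\rankopsym$ query then reads one superblock count, one block count and one table cell, i.e.\ $O(1)$ word accesses; single-bit access is immediate. For $\selopsym$ I would build the symmetric hierarchy indexed by $1$-positions: store the position of every $(\lg^2 N)$-th $1$, and for each resulting chunk branch on whether it is ``sparse'' (spans more than $\lg^4 N$ bits, in which case all its answers are written out explicitly, costing $o(N)$ bits amortized) or ``dense'' (recurse with a smaller table). This is the classical argument of Jacobson and of Clark and Munro and again yields $O(1)$-word-access $\selopsym$.

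For part~(b), I would keep the $O(N/\lg N)$ blocks of $t = \lceil \frac{1}{2}\lg N\rceil$ bits but encode block $i$ as a pair $(c_i, o_i)$, where the \emph{class} $c_i \in \{0,\dots,t\}$ is the popcount of the block and the \emph{offset} $o_i$ is its rank among all $t$-bit strings of popcount $c_i$ in a fixed enumeration. Storing all classes costs $O((N/\lg N)\lg\lg N) = O(N\lg\lg N/\lg N)$ bits; storing the offsets concatenated costs $\sum_i \lceil\lg\binom{t}{c_i}\rceil \le \sum_i \lg\binom{t}{c_i} + O(N/\lg N) \le \lg\binom{N}{R} + O(N/\lg N)$ bits, where the key inequality $\prod_i \binom{t}{c_i} \le \binom{N}{R}$ holds because, summing over all block-popcount profiles, $\sum \prod_i \binom{t}{c_i} = \binom{N}{R}$ by Vandermonde's identity. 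Since the offset stream has variable-length records, I would additionally store for each superblock a pointer into that stream together with the partial sums needed for decoding; these are the same $o(N)$-bit directories used in part~(a). A second universal table mapping $(c,o)$ to the corresponding $t$-bit pattern (again $o(N)$ bits, since $t \le \frac{1}{2}\lg N$) lets me materialize any block in $O(1)$ time, after which $\rankopsym$, $\selopsym$ and bit access proceed exactly as in part~(a).

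The main obstacle is the $\selopsym$ structure: the naive ``store every $1$-position'' table uses $\Theta(N\lg N)$ bits, so the sparse/dense recursion and its amortized accounting is the delicate part of the argument; the combinatorial inequality $\sum_i \lg\binom{t}{c_i} \le \lg\binom{N}{R}$ in part~(b) is the other point needing care, though it is short. The external-memory statement is then routine: since $B = \Omega(\lg N)$ and the word size is $\Theta(\lg N)$, a disk block holds $\Omega(\lg^2 N)$ bits, so a superblock lies in $O(1)$ disk blocks and each of the $O(1)$ word accesses performed by any operation touches $O(1)$ disk blocks, giving $O(1)$ I/Os.
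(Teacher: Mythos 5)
The paper never actually proves this lemma — it simply cites it as a known black-box result, attributing part (a) to Jacobson and Clark--Munro and part (b) to Raman, Raman and Rao, so there is no paper proof to compare against. Your reconstruction is the standard one from exactly those sources: the two-level superblock/block directory with a universal $O(\sqrt N)$-entry lookup table for rank and single-bit access, Clark's sparse/dense dichotomy for select, and for part (b) the class--offset block encoding with the multinomial Vandermonde bound $\prod_i \binom{t}{c_i} \le \binom{N}{R}$ controlling the offset stream length. All of this is correct in outline, and you have correctly identified the two genuinely delicate points — the amortized accounting in Clark's select recursion, and the Vandermonde inequality — as the places that need care. The one spot where the sketch stays at arm's length is the dense case of select: storing every $1$-position within a dense chunk relative to the chunk start already costs $\Theta(N\lg\lg N)$ bits, so a second level of recursion (subsample every $(\lg\lg N)^2$-th $1$, re-split into sparse/dense at a polylogarithmic threshold, and only then fall through to a table) is required before the table lookup applies; you gesture at this with ``recurse with a smaller table'' but do not carry it out. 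Since the paper treats the lemma as a citation, none of this affects the surrounding arguments, but if you were to write the proof out in full that second level is where the remaining work lies.
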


\paragraph{Planar graph partitions.}

Frederickson \cite{Frederickson87} introduced the notion of an
\emph{$r$-partition} of an $n$-vertex graph $G$, which is a collection of
$\ThetaOf{n / r}$ subgraphs of size at most $r$ such that for every
edge $xy$ of~$G$ there exists a subgraph that contains both $x$ and $y$.
A vertex $x$ is \emph{interior} to a subgraph if no other
subgraph in the partition contains $x$, otherwise $x$ is a
\emph{boundary vertex}.
Frederickson showed the following result for planar graphs.

\begin{lemma}[\cite{Frederickson87}]
  \label{lem:fred_graph_sep}
  Every $n$-vertex planar graph of bounded degree has an $r$-partition
  with $\OhOf{n / \sqrt{r}}$ boundary vertices.
\end{lemma}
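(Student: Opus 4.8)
The plan is to construct the $r$-partition by recursively applying the Lipton--Tarjan planar separator theorem~\cite{lipton_tarjan_pst1979}, exactly as in Frederickson's original argument~\cite{Frederickson87}, and then to bound the number of boundary vertices by solving a divide-and-conquer recurrence.

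First I would set up the recursion. If $n \le r$ then $G$ is already a single region with no boundary vertices, so assume $n > r$. Apply the separator theorem to split $V(G)$ into sets $A$, $B$, $C$ with no edge between $A$ and $B$, with $|A|, |B| \le \frac{2}{3}n$, and with $|C| = \OhOf{\sqrt{n}}$. Recurse on the two subgraphs of $G$ induced by $A \cup C$ and by $B \cup C$, and output the union of the two collections of regions so obtained. This yields a genuine $r$-partition: for every edge $xy$ of $G$, the endpoints $x$ and $y$ always lie on the same side of each separator that is applied to a subgraph containing both of them, so by induction some leaf region contains both; and a vertex interior to its region was never selected into any separator, so the boundary vertices all belong to the union of the separators $C$ produced throughout the recursion.

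Next I would bound the total separator size. Since the separator $C$ is duplicated into both recursive instances, their sizes $n_1, n_2$ satisfy $n_1, n_2 \le \frac{2}{3}n + \OhOf{\sqrt{n}}$ and $n_1 + n_2 \le n + \OhOf{\sqrt{n}}$. Letting $b(n)$ be the worst-case number of boundary vertices for an $n$-vertex bounded-degree planar graph, this gives
\begin{equation}
  b(n) \le b(n_1) + b(n_2) + \OhOf{\sqrt{n}}, \qquad b(n) = 0 \ \mbox{ for } n \le r .
\end{equation}
I would solve this by induction with a strengthened hypothesis of the form $b(m) \le c_1 m / \sqrt{r} - c_2 \sqrt{m}$; the subtractive $\sqrt{m}$ term is what absorbs both the additive $\OhOf{\sqrt{n}}$ and the mild growth of the subproblem sizes caused by copying $C$ into both sides. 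Intuitively, the recursion tree has depth $\OhOf{\lg(n/r)}$ and the separator cost per level forms a geometric series dominated by its finest level, where $\ThetaOf{n/r}$ subproblems of size $\ThetaOf{r}$ together contribute $\ThetaOf{(n/r)\sqrt{r}} = \ThetaOf{n/\sqrt{r}}$ boundary vertices.

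Finally I would count the regions. Each leaf region is either an input subgraph of size at most $r$ that was never split, or is one side of a split of a graph with more than $r$ vertices; as every separator has size $\OhOf{\sqrt{r}} = \ohOf{r}$, both sides of such a split retain $\OmegaOf{r}$ vertices once $r$ exceeds a constant, so each leaf region has $\ThetaOf{r}$ vertices, and since the leaf regions together cover all $n$ vertices there are $\ThetaOf{n/r}$ of them. I expect the recurrence above to be the only real obstacle: a careless unrolling leaves a spurious $\lg(n/r)$ factor, so obtaining the clean $\OhOf{n/\sqrt{r}}$ bound hinges on the strengthened induction hypothesis and on the bottom level of the recursion dominating the cost. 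The bounded-degree hypothesis plays only a supporting role here --- it makes the $r$-partition a useful object for the later blocking scheme and keeps each boundary vertex incident to $\OhOf{1}$ regions --- and is not needed for the separator-size accounting itself.
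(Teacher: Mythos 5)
The thesis does not prove this lemma; it is cited verbatim from Frederickson's paper, and the only nearby analysis in the text (the proof of the companion Lemma~\ref{lem:cycle_sep_size} for Miller's cycle separator) also defers to ``the proof of Lemma~1 in~\cite{Frederickson87}.'' Your proposal is therefore a reconstruction of Frederickson's own argument rather than an alternative to something in the thesis, and as such it is correct. You identify the right mechanism --- recursive Lipton--Tarjan with the separator duplicated into both sides --- and, more importantly, the right subtlety: the naive unrolling of $b(n) \le b(n_1) + b(n_2) + \OhOf{\sqrt{n}}$ leaves a $\lg(n/r)$ factor, and the subtractive term in the strengthened hypothesis $b(m) \le c_1 m/\sqrt{r} - c_2\sqrt{m}$ is precisely what kills it (via the strict concavity of $\sqrt{\cdot}$, so that $\sqrt{n_1}+\sqrt{n_2} \ge (\sqrt{1/3}+\sqrt{2/3})\sqrt{n}$). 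One small point worth spelling out when you formalize this: when a child has $n_i \le r$ so that $b(n_i) = 0$, you still want the inductive bound $0 \le c_1 n_i/\sqrt{r} - c_2\sqrt{n_i}$ to hold so the main calculation applies uniformly; this requires $n_i \ge (c_2/c_1)^2 r$, which you get for free because $n_i \ge n/3 > r/3$ whenever the parent still exceeds $r$, provided $c_1 \ge 2c_2$ say and $r$ exceeds a constant. Your remarks about the region count being $\ThetaOf{n/r}$ and the bounded-degree assumption being irrelevant to the separator accounting (it only controls how many regions a boundary vertex can touch) are also correct.
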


\section{Notation}\label{sec_notation}

The following tables provide a summary of the various notations used in this chapter,
in particular in Section~\ref{sec:graph_rep}.

\begin{table}[H]
\centering
\begin{tabular}{l | l}
  Data Structure & Description \\ \hline
  $\fstofreg$ & A bit-vector of length $\dnumv$ the $k$'th bit of which \\
              & marks that $\vvec[k]$ as the first element of its region. \\
  $\fstofsubreg$ & A bit-vector of length $\dnumv$ the $k$'th bit of which marks that\\
              &  $\vvec[k]$ is the first element of its subregion. \\
  $\idxvec$ & Bit vector marking the number of subregion boundary vertices per \\
              & region. \\
  $\bdvec$ & Bit vector of length $n'$ which marks if vertex $\vvec[k]$ is a subregion \\
              &  boundary vertex. \\
  $\regvec$ & Vector each mapping subregion boundary vertex to its defining \\
              & subregion and label within that subregion. The length of this\\
			  & vector is the sum of the subregion boundary vertices for each \\
			  & region. \\
  $\subregvec$ & Vector recording the region label of all subregion boundary \\
			  & vertices within the sub-regions. The length of this vector is \\
			  & the sum of the subregion boundary vertices for each subregion \\
			  & (Due to duplicates, $|\subregvec| > |\regvec|$.). \\ \hline
\end{tabular}
\caption{Data structures used to convert between graph, region, and subregion labels}
\label{tab:ds_label_conv}
\end{table}

\begin{table}[H]
	\centering
		\begin{tabular}{ l | l}
			Notation & Description \\ \hline
			$\perm[i,j]$ & Order of vertices within a subregion. \\
			$B$ & Block size in number of elements. \\
			$\succblksize$ & Block in number of elements for a succinctly-encoded block. \\
			$G$ & A planar graph of bounded degree on N vertices. \\
			$\glbl{x}$ & Unique graph label of vertex $x$.\\
			$\reglbl[i]{x}$ & Unique region label of vertex $x$ in region $R_i$. \\
			$\subreglbl[i,j]{x}$ & Unique region label of vertex $x$ in sub-region $R_{i,j}$. \\
			$N_\alpha(x)$ & The $\alpha$-neighbourhood of vertex $x$. \\
			$q$ & Key size in bits. \\
			$q_i$ & The number of subregions in region $R_i$. \\
			$t$ & The number of regions in $G$. \\
			$\numv[i]$ & Denotes the number of vertices in region $\reg[i]$. \\
			$\dnumv[i]$ & Denotes the total number of vertices in all sub-regions of~$\reg[i]$ (This \\
			  & differs from $\numv[i]$ in that duplicates may be counted several times.).\\
			$\dnumv$ & Denotes the total number of vertices in all subregions of $G$.\\
			$\reg[i]$ & Region $i$, a subgraph of $G$ in the top level partition. \\
			$\subreg[i,j]$ & Subregion $i,j$, a subgraph of $G$, and subset of region $R_i$. \\
			$\numv[i,j]$ & Denotes the number of vertices in subregion $\subreg[i,j]$. \\
			$s_R$ & Upper bound on the number of bits required to encode a subregion \\
			& $\alpha$-neighbourhood. \\
			$s_S$ & Upper bound on the number of bits required to encode a region. \\		
			$S^{R}_\alpha$ & Set of region boundary vertices selected to build \\
			& $\alpha$-neigbhourhoods for (Section~\ref{sec:alt_block_scheme}). \\
			$S^{SR}_\alpha$ & Set of subregion boundary vertices selected to build \\
			& $\alpha$-neigbhourhoods for (Section~\ref{sec:alt_block_scheme}). \\ 
			$\vvec$ & Vector formed by concatenating all subregions of $G$.\\
			$\vvec[i]$ & Vector formed by concatenating all regions of $\reg[i]$.\\ 
			$\vvec[i,j]$ & Vector on all vertices of $\subreg[i,j]$.\\
			$\wsize$ & Word size in bits. \\ \hline
		\end{tabular}
	\caption{Notations used in describing graph representation}
	\label{tab:notation}
\end{table}

\begin{table}[H]
	\centering
		\begin{tabular}{ l | l}
			Data Structures & Description \\ \hline
			$\B_R$ & Bit vector of length $N$ marking region boundary vertices. \\
			$\B_S$ & Bit vector of length $N$ marking sub-region boundary vertices. \\
			$\S$ & Array storing all subregions, packed. \\
			$\F$ & Per subregion bit-vector marking if subregion is first sub-region in its region. \\
			$\D$ & Per subregion bit-vector marking if subregion starts in different block of $\S$ \\
			& than the preceeding subregion. \\
			$\N_S$ & Array storing $\alpha$-neighbourhoods of all sub-region boundary vertices. \\
			$\N_R$ & Array storing $\alpha$-neighbourhoods of all region boundary vertices. \\
			$\L_S$ & Vector recording the minimum graph label of interior vertices of each sub-region. \\
			$\L_R$ & Vector recording the minimum graph label of interior vertices of each region. \\
			$N_{i,j}$ & The number of vertices stored in subregion $i,j$ (part of  \\
			 & subregion encoding).\\
			$\G_{i,j}$ & A succinct encoding of the graph structure of subregion $R_{i,j}$ (part of  \\
			 & sub-region encoding).\\
			$\B_{i,j}$ & A bit-vector marking vertices in a subregion as either region or subregion \\
			& boundaries (part of sub-region encoding).\\
			$\K_{i,j}$ & An array storing the $q$ bit-keys for each vertex within a subregion \\
			 & (part of sub-region encoding). \\
			$\G_x$ & A succinct encoding of the graph structure of $N_\alpha(x)$ \\
			 & (part of $\alpha$-neighbourhood encoding).\\
			$\T_x$ & A bit vector marking the terminal vertices in an  $\alpha$-neighbourhood. \\
			 & (part of $\alpha$-neighbourhood encoding). \\
			$p_x$ & Position of $x$ in $\pi_x$ (permutation of vertices in $N_\alpha(x)$\\
			 & (part of $\alpha$-neighbourhood encoding) \\
			$\K_x$ & An array storing keys associated with vertices in  $N_\alpha(x)$ . \\
			 & (part of $\alpha$-neighbourhood encoding) \\		
			$\L_x$ & An array storing labels of the vertices in $N_\alpha(x)$. \\
			& (part of $\alpha$-neighbourhood encoding) \\ 
			$\regptable$ & $\alpha$-neighbourhood pointer table for region boundary \\
			& vertices. \\
			$\subregptable$ & $\alpha$-neighbourhood pointer table for subregion boundary\\
			& vertices. \\ \hline
		\end{tabular}
	\caption{Data structures used to represent graph components}
	\label{tab:data_structs}
\end{table}

\section{Succinct Representation of Bounded-Degree Planar Graphs}
\label{sec:graph_rep}

Our main result considers planar graphs of degree $d = \OhOf{1}$
where the degree of a graph is the maximum degree of its vertices.
Let $G$ be such a graph.
Each vertex $x$ in $G$ stores a $\bitsPerKey$-bit key
to be reported when visiting $x$.
We assume that $\bitsPerKey = \OhOf{\lg N}$, but
$\bitsPerKey$ is not necessarily related to the size of the graph.
It may take on a small constant value, e.g., the labels may be colours
chosen from a constant-size set.

Our data structure is based on a two-level partition of $G$, similar to
that used in \cite{DBLP:journals/talg/BoseCHMM12}.
More precisely, we compute an
$\regsz$-partition of $G$, for some parameter $\regsz \ge B \lg^2 N$ to be
defined later.
We refer to the subgraphs in this partition as \emph{regions}, and
denote them by $\reg[1], \reg[2], \dots, \reg[t]$.
Each region $\reg[i]$ is further
divided into smaller subgraphs, called \emph{subregions}, that form
an $\subregsz$-partition of $\reg[i]$, for some parameter
$B \le \subregsz < \regsz$.
We use $q_i$ to denote the number of subregions of $\reg[i]$, and $\subreg[i,1],
\subreg[i,2], \dots, \subreg[i,q_i]$ to denote the subregions themselves.
According to the definitions from the previous section, we classify
every vertex in a region $\reg[i]$ as a \emph{region-interior} or
\emph{region boundary vertex}, and every vertex in a subregion
$\subreg[i,j]$ as a \emph{subregion-interior} or \emph{subregion boundary
  vertex}.
Every region boundary vertex is also considered to be a
subregion boundary vertex in any subregion that contains it
(see Fig.~\ref{fig:partitioned_graph} for an illustration).
Note that if a vertex is a region boundary vertex of one region, it is a region
boundary vertex for all regions that contain it.
The same is true for subregion boundary vertices.
Thus, we can refer to a vertex as a region or subregion boundary or interior
vertex without reference to a specific region or subregion.

\begin{figure}[t]
  \centering
  \includegraphics[width=0.6\textwidth]{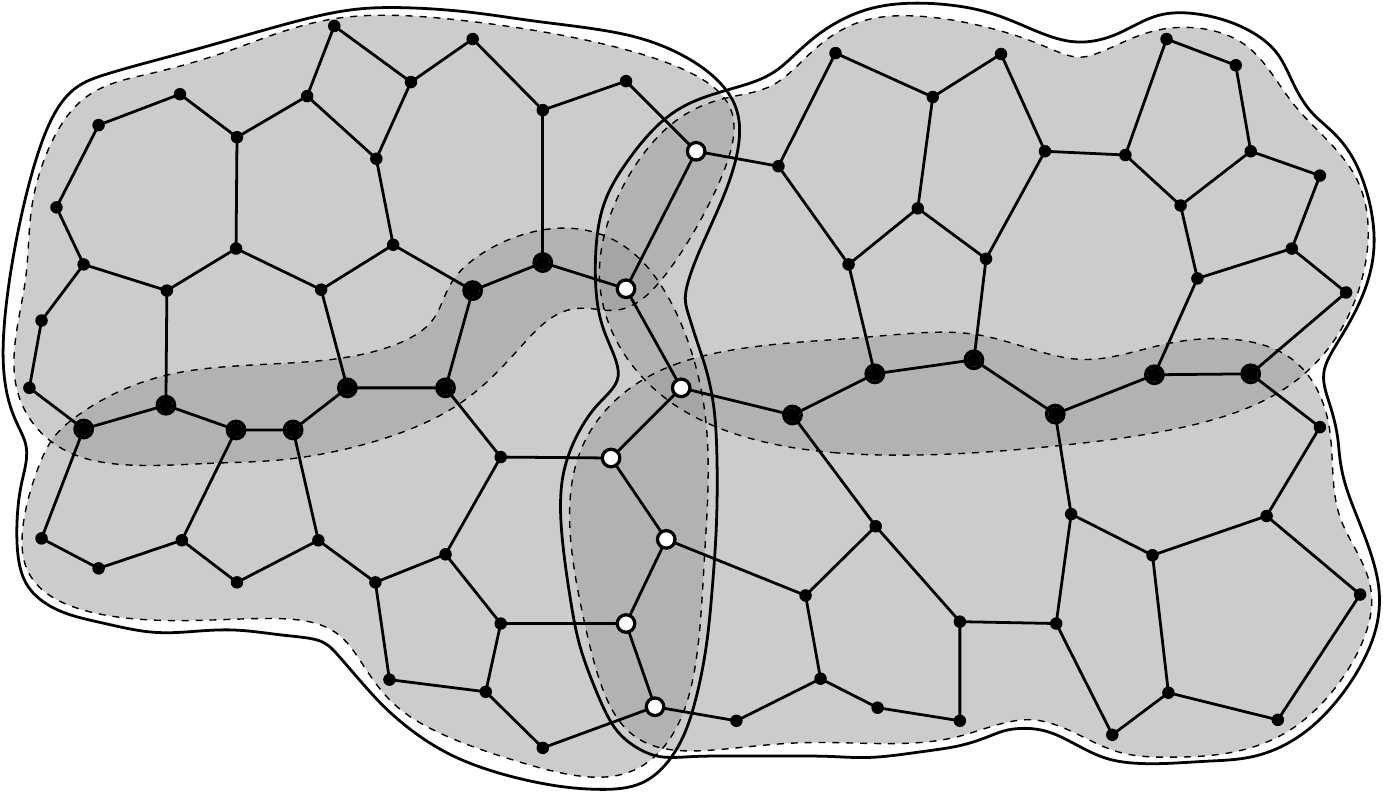}
  \caption[Recursively-partitioned graph]{A graph $G$ partitioned into two 
	regions (delineated by solid lines), which are further partitioned into two 
	subregions each (delineated by dashed lines).
    Region boundary vertices are shown as large hollow disks while subregion 
	boundary vertices interior to their regions are shown as large solid disks.
    Vertices interior to their subregions are shown as small solid disks.}
  \label{fig:partitioned_graph}
\end{figure}

In our data structure, all subregions are stored explicitly while
regions are represented implicitly as unions of their subregions.
In addition, we store the \emph{$\alpha$-neighbourhoods} of all (region
and subregion) boundary vertices.
These neighbourhoods are defined
as follows~\cite{DBLP:conf/soda/AgarwalAMVV98}.
For a vertex $x$, consider a breadth-first search in $G$ starting at $x$.
Then the $\alpha$-neighbourhood $\nb[\alpha]{x}$ of $x$ is the subgraph of $G$
induced by the first $\alpha$ vertices visited by the search.
For a region boundary vertex, we store its entire $\alpha$-neighbourhood.
For a subregion boundary vertex interior to a region $\reg[i]$, we store
only those vertices of its $\alpha$-neighbourhood that belong to
$\reg[i]$.
A vertex $y \in \nb[\alpha]{x}$ is an \emph{interior vertex} of
$N_\alpha(x)$ if all its neighbours belong to $\nb[\alpha]{x}$;
otherwise $y$ is \emph{terminal}
(see Fig.~\ref{fig:alpha_neighbourhoods} for an illustration of the
definitions pertaining to $\alpha$-neighbourhoods).

\begin{figure}[t]
  \centering
  \includegraphics{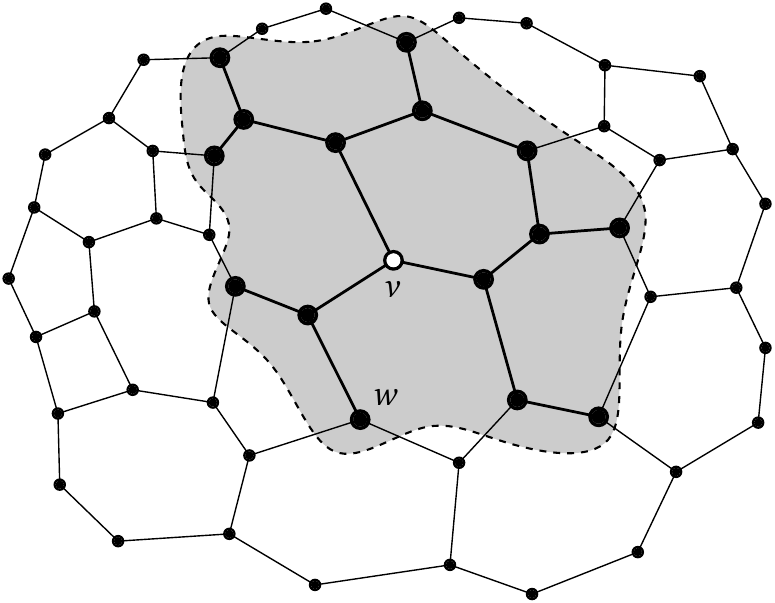}
  \caption[$\alpha$-neighbourhood of a boundary vertex]{The 
	$\alpha$-neighbourhood $\nb[\alpha]{v}$ of a vertex $v$,
    for $\alpha = 16$.
    Vertex $v$ is interior to $\nb[\alpha]{v}$ while vertex $w$ is terminal.}
  \label{fig:alpha_neighbourhoods}
\end{figure}

We refer to subregions and $\alpha$-neighbourhoods---that is, to the
subgraphs of $G$ we store explicitly---as \emph{components} of $G$.
We store each component succinctly so as to allow for the efficient
traversal of edges in the component.
The traversal of a path in $G$ must be able to visit different components.
In order to facilitate this, we assign a unique \emph{graph label} to each vertex of $G$,
and provide mechanisms to (1) identify a component containing a vertex $x$ and
locate $x$ in that component, given its graph label, and (2) determine
the graph label of any vertex in a given component.

The rest of this section is organized as follows.
In Section~\ref{sec:graph_labelling}, we define a number of labels
assigned to the vertices of $G$, including the graph labels just
mentioned, and provide $\ohOf{N}$-bit data structures that allow us to
convert any such label into any other label using $\OhOf{1}$ I/Os.
In Section~\ref{sec:datastructs}, we discuss the succinct
representations we use to store subregions and
$\alpha$-neighbourhoods.
In Section~\ref{sec:navigation},
we discuss how to traverse paths I/O efficiently using this
representation.
In Section \ref{sec:alt_block_scheme}, we describe an 
alternative scheme
for blocking planar graphs that can improve I/O efficiency in 
some instances.

\subsection{Graph Labeling}

\label{sec:graph_labelling}

In this section, we describe the labeling scheme that underlies our
data structure.
Our scheme is based on the one used in
Bose~\etal~\cite{DBLP:journals/talg/BoseCHMM12}.
It assigns three labels to each vertex.
As already mentioned, the \emph{graph label} $\glbl{x}$ identifies each vertex
$x \in G$ uniquely.
The \emph{region label} $\reglbl[i]{x}$ of a vertex $x$ in a region $\reg[i]$
identifies $x$ uniquely among the vertices in $\reg[i]$, and the
\emph{subregion label} $\subreglbl[i,j]{x}$ of a vertex $x$ in a subregion
$\subreg[i,j]$ identifies $x$ uniquely
among the vertices in $\subreg[i,j]$.
(Note that a region boundary vertex appears in more than one region and,
thus, receives one region label per region that contains it.
Similarly, every region or subregion boundary vertex receives multiple
subregion labels.)

A standard data structure would identify vertices by their graph labels, and
store these labels for every vertex.
Since there are $N$ vertices, every graph label must use at least
$\lg N$ bits, and such a representation uses at least $N \lg N$ bits of space.
In our structure, we store graph labels for only a small subset of the vertices, 
region labels for yet another subset of the vertices, and subregion labels for 
a third subset. 
Many vertices in the graph have no label explicitly stored.
Since regions and subregions are small, and region and subregion labels have to
be unique only within a region or subregion, they can be stored using
less than $\lg N$ bits each.
Next we define these labels.

In the data structure described in the next section, the vertices of
each subregion $\subreg[i,j]$ are stored in a particular order $\perm[i,j]$.
We use the position of a vertex $x$ in $\perm[i,j]$ as its
subregion label $\subreglbl[i,j]{x}$ for subregion~$\subreg[i,j]$.

For a region $\reg[i]$ and a vertex $x \in \reg[i]$, we say an occurrence of
$x$ in a subregion $\subreg[i,j]$ \emph{defines} $x$ if there is no
subregion $\subreg[i,h]$ with $h < j$ that contains~$x$; otherwise the
occurrence is a \emph{duplicate}.
If the occurrence of $x$ in a subregion $\subreg[i,j]$ defines $x$, we also call
$\reg[i,j]$ the \emph{defining subregion} of $x$.
Clearly, all occurrences of subregion interior vertices are
defining.
We order\footnote{This ordering, as with the ordering used for assigning 
graph labels, is employed strictly for labeling purposes. It does not imply
any arrangement of vertices within the structures used to represent the graph.}
the vertices of $\reg[i]$ so that all subregion boundary vertices
precede all subregion-interior vertices, and the vertices in each of these two
groups are sorted primarily by their defining subregions and secondarily by
their subregion labels within these subregions.
The region labels of the vertices in $\reg[i]$ are now obtained by numbering the
vertices in this order.

Graph labels are defined analogously to region labels.
Again, we say an occurrence of a vertex $x$ in a region $\reg[i]$ defines
$x$ if $x \not\in \reg[h]$, for all $h < i$.
We then order the vertices of $G$ so that the region boundary vertices precede
the region-interior vertices, and the vertices in each of these two groups are
sorted primarily by their defining regions and secondarily by their region
labels inside their defining regions.
The graph labels of the vertices in $G$ are
then obtained by numbering the vertices in order.

Observe that this labeling scheme ensures that all vertices interior
to a region $\reg[i]$ receive consecutive graph labels, and all vertices
interior to a subregion $\subreg[i,j]$ receive consecutive graph and
region labels.
The labeling of subregion vertices at the region level is depicted in 
Figure~\ref{fig:region_labeling}.
Generating graph labels is an analogous process.

\begin{figure}[t]
  \centering
  \includegraphics[width=0.8\textwidth]{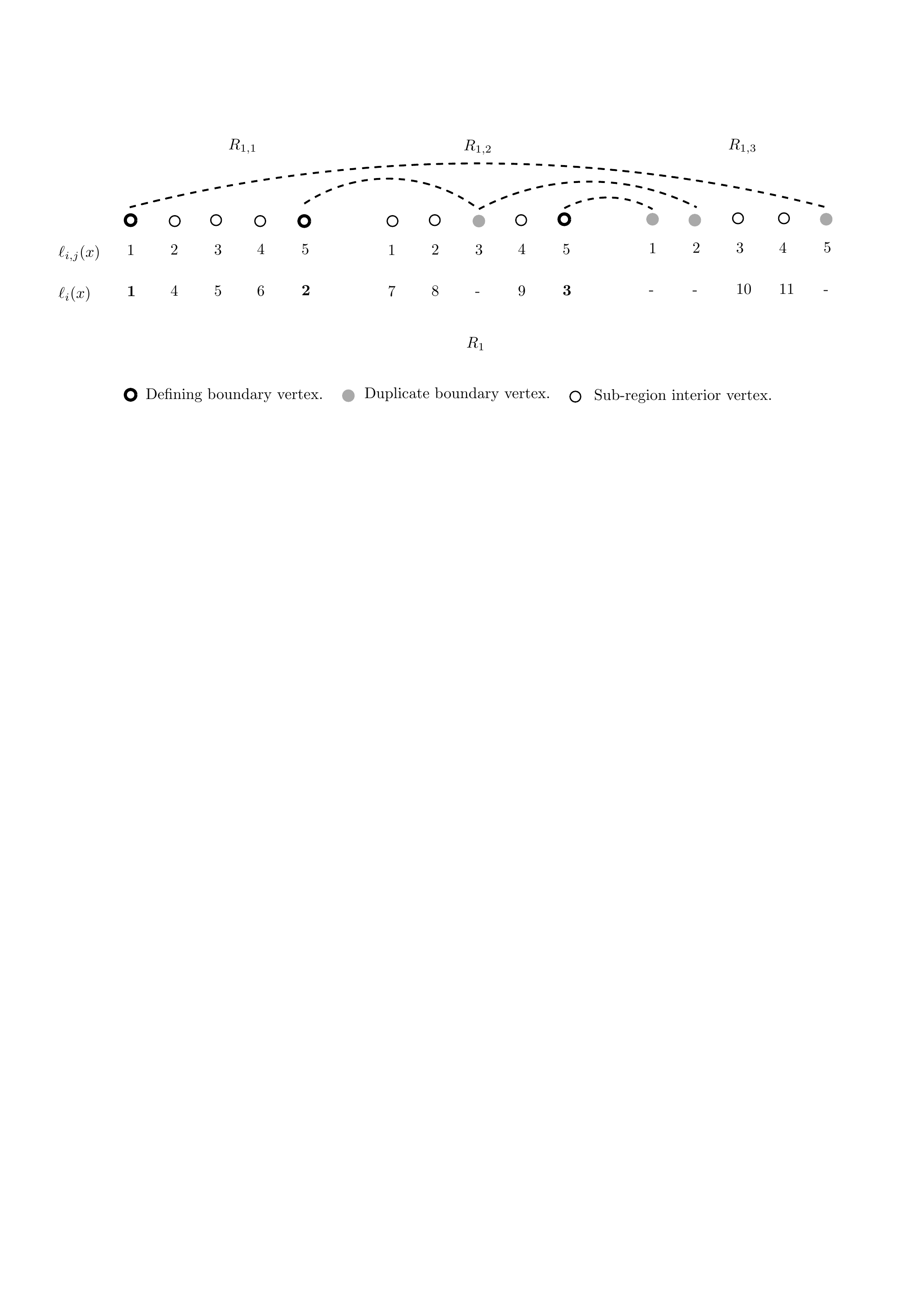}
  \caption[Assigning region labels]{
    The generation of region labels for region $\reg[1]$ is depicted,
    where $\reg[1]$ is composed of subregions $\subreg[1,1]$, $\subreg[1,2]$,
    and $\subreg[1,3]$.
    The dashed lines along the top indicate equivalence between vertices.
    Each subregion has five vertices, assigned subregion labels 1 through
    5.
    Assigned region labels are indicated below subregion labels. 
    Region labels of boundary vertices are bold while the dash indicates
    a vertex is a duplicate, and therefore not assigned an additional
    region label.
  }
  \label{fig:region_labeling}
\end{figure}

Our data structure for representing the graph $G$ will make use of
the following result, which extends a result from \cite{DBLP:journals/talg/BoseCHMM12}.
In particular, parts~\ref{item:subregion-to-region}
and~\ref{item:region-to-graph} were proved in \cite{DBLP:journals/talg/BoseCHMM12} for a slightly
different vertex labeling than we use here while
parts~\ref{item:graph-to-region} and~\ref{item:region-to-subregion} are new.
For completeness, we prove all parts of the lemma here.

\begin{lemma}
  \label{lem:label_ops}
  There is a data structure of $\ohOf{N}$ bits that allows the following
  conversion between vertex labels of $G$ to be performed using
  $\OhOf{1}$ I/Os.
  \begin{enumerate} 
  \item Given the graph label $\glbl{x}$ of a vertex $x$, compute the defining
    region $\reg[i]$ of $x$ and the region label $\reglbl[i]{x}$ of $x$ in
    $\reg[i]$.\label{item:graph-to-region}
  \item Given the region label $\reglbl[i]{x}$ of a vertex $x$ in a region
    $\reg[i]$, compute the defining subregion $\subreg[i,j]$ of $x$ and the
    subregion label $\subreglbl[i,j]{x}$ of $x$ in
    $\subreg[i,j]$.\label{item:region-to-subregion}
  \item Given the subregion label of a vertex $x$ in a subregion
    $\subreg[i,j]$ of $\reg[i]$, compute the region label $\reglbl[i]{x}$
    of $x$ in $\reg[i]$.\label{item:subregion-to-region}
  \item Given the region label of a vertex $x$ in a region $\reg[i]$,
    compute the graph label $\glbl{x}$ of $x$.\label{item:region-to-graph}
  \end{enumerate}
\end{lemma}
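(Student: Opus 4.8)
The plan is to make the $\ohOf{N}$-bit data structure of the lemma explicit and then verify each of the four conversions in turn. First I would fix the ingredients: store the two length-$N$ bit vectors $\B_R$ and $\B_S$, which in graph-label order mark the region boundary and subregion boundary vertices; store the arrays $\L_R$ and $\L_S$, whose $i$-th entry is the smallest graph label among the interior vertices of $\reg[i]$ (respectively of the $i$-th subregion, listed in subregion order); and store the auxiliary vectors $\fstofreg$, $\fstofsubreg$, $\idxvec$, $\bdvec$, $\regvec$, $\subregvec$ of Section~\ref{sec_notation}, which record for each boundary vertex the $\OhOf{1}$ integers needed to pass from one label system to another (its defining subregion, its position inside it, the number of subregion boundary vertices preceding it in its region, and so on). Every bit vector is augmented for rank and select via Lemma~\ref{lem:rank_select}. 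The total space is $\ohOf{N}$ bits: by Lemma~\ref{lem:fred_graph_sep} applied at the region level with $\regsz \ge B\lg^2 N$ and at the subregion level with $\subregsz \ge B$, and because $d = \OhOf{1}$, the numbers of region boundary vertices, of subregion boundary vertices, of regions and of subregions are all $\ohOf{N}$ (e.g.\ there are $\OhOf{N / \sqrt{\regsz}} = \OhOf{N / (\sqrt{B}\lg N)}$ region boundary vertices); moreover each boundary vertex lies in $\OhOf{d} = \OhOf{1}$ regions and subregions, so $\dnumv$ and the lengths of $\regvec$, $\subregvec$ are $N + \ohOf{N}$. Hence each bit vector is sparse and costs $\ohOf{N}$ bits by Lemma~\ref{lem:rank_select}(b), and each array has $\ohOf{N}$ entries of $\OhOf{\lg N}$ bits.

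Next I would isolate the one structural fact that makes the conversions run in $\OhOf{1}$ I/Os rather than needing a predecessor search: the labeling rules place all boundary vertices before all interior vertices and sort the interior vertices by their defining region or subregion, and an interior vertex of $\reg[i]$ (respectively $\subreg[i,j]$) has $\reg[i]$ (respectively $\subreg[i,j]$) as its unique, hence defining, region/subregion. Therefore the interior vertices of each subregion occupy one contiguous run of graph labels and one contiguous run of region labels, the interior vertices of each region occupy one contiguous run of graph labels, and these runs appear in region/subregion order. Consequently, viewing $\L_R$ (respectively $\L_S$) as the marked positions of a length-$N$ bit vector, an interior graph label $g$ yields its owning region (subregion) by the single query $\rankop[1]{\L_R, g}$ and the start of that run by one $\selopsym$, with no search.

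The four conversions are then short. For part~\ref{item:graph-to-region}, look up $\B_R[\glbl{x}]$: if $x$ is region-interior, set $i = \rankop[1]{\L_R, \glbl{x}}$, recover the first interior graph label of $\reg[i]$ by a $\selopsym$ and the number of subregion boundary vertices of $\reg[i]$ from $\idxvec$ (these precede all interior vertices in region-label order), and add the offset to obtain $\reglbl[i]{x}$; if $x$ is a region boundary vertex, its graph label lies in the boundary prefix, where vertices are grouped by defining region, so a constant number of rank/select queries on $\fstofreg$ (after, if needed, one rank/select to translate between the $\vvec$-ordering and the label ordering) return the defining region $\reg[i]$, a lookup in $\regvec$/$\subregvec$ reads off $x$'s defining subregion and subregion label there, and part~\ref{item:subregion-to-region} converts that to $\reglbl[i]{x}$. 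Part~\ref{item:region-to-subregion} is the same argument one level down, using $\B_S$, $\L_S$, $\fstofsubreg$, $\bdvec$. Parts~\ref{item:subregion-to-region} and~\ref{item:region-to-graph} are the upward directions; these were carried out in \cite{DBLP:journals/talg/BoseCHMM12} for an all-but-identical labeling, and the same recipe applies: a constant number of rank/select calls on $\B_S, \idxvec, \fstofsubreg$ (respectively $\B_R, \fstofreg$) plus one lookup in $\L_S$ (respectively $\L_R$). In every case $\OhOf{1}$ rank/select operations and $\OhOf{1}$ word-sized array accesses suffice, each of which is $\OhOf{1}$ I/Os by Lemma~\ref{lem:rank_select}, so every conversion runs in $\OhOf{1}$ I/Os.

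I expect the real work, and the main obstacle, to be the bookkeeping forced by boundary vertices rather than any deep idea: a boundary vertex owns one label per region/subregion containing it, so each conversion must route through the ``defining'' copy, and must do so without a predecessor search (which would cost $\OhOf{\log_B N}$ I/Os). Thus the delicate steps are (i) proving precisely that interior labels form contiguous, correctly ordered runs so that a single $\rankopsym$ on $\L_R$ or $\L_S$ replaces a search, and (ii) checking that each auxiliary quantity used in the boundary case---counts of subregion boundary vertices per region, ranks inside defining blocks, the $\vvec$-to-label translation---is either one of the $\ohOf{N}$-bit vectors above or computable from them with $\OhOf{1}$ rank/select queries. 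Once those points are pinned down, the remainder is routine index arithmetic.
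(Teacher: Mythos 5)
Your overall architecture matches the paper's: the same two-level label hierarchy, the same split into ``interior'' vs.~``boundary'' cases with explicit $\regvec$/$\subregvec$ storage for the boundary side, the same contiguity observations about graph labels, and the same space accounting via Lemma~\ref{lem:fred_graph_sep} and Lemma~\ref{lem:rank_select}(b). Using $\L_R$/$\L_S$ (re-interpreted as marked positions in a length-$N$ bit vector) to locate the owning region or subregion of an interior vertex by one $\rankopsym$ is a fine alternative to the paper's route through $\fstofreg$. But your concrete recipe for part~\ref{item:graph-to-region} in the interior case is wrong, and the error traces to an assumption that does not hold.

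You compute $i=\rankop[1]{\L_R,\glbl{x}}$, read off the first interior graph label $g_0=\selop[1]{\L_R,i}$ and the subregion-boundary count $\bdcount{i}$ from $\idxvec$, and ``add the offset'' to obtain $\reglbl[i]{x}\approx\bdcount{i}+(\glbl{x}-g_0)+1$. This presumes that the region-interior vertices of $\reg[i]$ occupy the contiguous block of region labels $\bdcount{i}+1,\dots,\numv[i]$. They do not: a vertex can be interior to $\reg[i]$ yet be a subregion boundary vertex, and every such vertex receives a region label in $\{1,\dots,\bdcount{i}\}$, interleaved among the region boundary vertices in an order determined by defining subregion and subregion label. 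The paper observes contiguity of graph labels per region and of graph/region labels per subregion, but explicitly not contiguity of region labels per region---and your parenthetical ``these precede all interior vertices in region-label order'' is true only if ``interior'' means subregion-interior, which leaves the region-interior-but-subregion-boundary vertices unaccounted for. As a concrete failure: if $\reg[i]$ has four subregion boundary vertices with region labels $1,2,3,4$, of which $2$ and $4$ are region-interior, then the vertex with graph label $g_0$ has region label $2$, not $\bdcount{i}+1=5$. What the conversion actually needs is the region-level analogue of $\bdvec$: a bit vector over the concatenation of the regions' vertex lists, each listed in region-label order, with a $1$ at every region boundary occurrence. With that vector one sets $j=\glbl{x}-(\text{total number of region boundary vertices})$, then $k$ equal to the position of the $j$\textsuperscript{th} $0$, then $i=\rankop[1]{\fstofreg,k}$ and $\reglbl[i]{x}=k-\regpos{i}+1$; the select-over-zeros is precisely what absorbs the scattered region labels, and no fixed additive offset can replace it. Part~\ref{item:region-to-graph} needs the symmetric rank-over-zeros. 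Note also that $\B_R$ cannot substitute here---in graph-label order it is just a block of $1$s followed by $0$s---and $\idxvec$ supplies the wrong count (per-region subregion-boundary counts rather than the single global region-boundary count). Once you add this region-level bit vector and route part~\ref{item:graph-to-region}/\ref{item:region-to-graph} through it, the rest of your sketch goes through.
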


\begin{proof}
  We construct separate $\ohOf{N}$-bit data structures in order to accomplish parts
  \ref{item:region-to-subregion} and~\ref{item:subregion-to-region}, and to
  accomplish parts \ref{item:graph-to-region} and~\ref{item:region-to-graph}.
  Since these data structures are very similar, we discuss those for
  parts \ref{item:region-to-subregion} and~\ref{item:subregion-to-region} in detail,
  and only provide a space analysis for the second structure.
  We use $\numv[i]$ to denote the number of vertices in region $\reg[i]$,
  $\numv[i,j]$ to denote the number of vertices in subregion $\subreg[i,j]$,
  $\dnumv[i] := \sum_{j=1}^{q_i} \numv[i,j]$ to denote the total number of
  vertices in all subregions of~$\reg[i]$, where $q_i$ is the number of
  subregions of region $\reg[i]$, and $\dnumv := \sum_{i=1}^t \dnumv[i]$ to
  denote the total number of vertices in all subregions.

  Our data structure for converting between region and subregion labels consists
  of a number of vectors.
  We introduce them as required for the operations we discuss.
  Most of these vectors are bit vectors representing a vector $\vvec$
  of the vertices in all subregions of $G$.
  We define this vector first.
  $\vvec$ is the concatenation of $t$ subvectors $\vvec[1], \vvec[2],
  \dots, \vvec[t]$ storing the vertices in the regions of $G$.
  Each such vector $\vvec[i]$ is itself the concatenation of
  subvectors $\vvec[i,1], \vvec[i,2], \dots, \vvec[i,q_i]$ representing
  the subregions of $\reg[i]$.
  Each such vector $\vvec[i,j]$ stores the vertices in $\subreg[i,j]$
  sorted by their subregion labels.
  Note that our data structure does not store $\vvec$.
  We only introduce it here as a reference for discussing the vectors our
  data structure does store.
  Throughout this proof, when we define a bit vector, we store it using
  the second part of Lemma~\ref{lem:rank_select} to support constant-I/O
  accesses to its elements, as well as constant-I/O $\rankopsym$ and $\selopsym$
  operations.

\begin{table*}[ht]
	\centering
		\begin{tabular}{ l | l}
			Operator & Description \\ \hline
			$\rankop[x]{S, i}$ & Returns the number of $x$'s in bit-vector $S$ up to
				position $S[1..i]$, where $x \in \{0,1\}$.\\
			$\selop[x]{S, r}$ & Returns the $r$\textsuperscript{th} $x$ in bit-vector $S$
				where $x \in \{0,1\}$.\\
			$\regpos{i}$ & Returns the first element of region $\reg[i]$ in $\vvec$. \\
			$\subregpos{i,j}$ & Returns the first element of sub-region $\subreg[i,j]$
				in $\vvec$. \\
			$\possubreg{i,p}$ & Returns the sub-region index $j$ from $\vvec$ where
				$p \in \reg[i]$. \\
			$\bdcount{i}$ & Calculates in $\vvec$ the number of sub-region boundary vertices
				in $\reg[i]$. \\ \hline
		\end{tabular}
	\caption{Summary of bit-vector operations used in this section. }
	\label{tab:succ_operators}
\end{table*}

  \paragraph{Locating subvectors.}

  The first two vectors that our data structure stores are bit vectors
  $\fstofreg$ and $\fstofsubreg$ that
  allow us to identify, for a region $\reg[i]$, the index of the first
  element of $\vvec[i]$ in $\vvec$ and, for a subregion $\subreg[i,j]$,
  the index of the first element of $\vvec[i,j]$ in $\vvec$.
  Both vectors have length $\dnumv = \size{\vvec}$.
  The $k$th bit in $\fstofreg$ is $1$ if and only if $\vvec[][k]$ is the
  first element of $\vvec[i]$, for some region $\reg[i]$,
  and the $k$th bit of $\fstofsubreg$ is $1$
  if and only if $\vvec[][k]$ is the first element of $\vvec[i,j]$, for some
  subregion $\subreg[i,j]$.
  Using these two bit vectors, we can implement the following three operations
  using $\OhOf{1}$ I/Os.
  \begin{itemize}
  \item $\regpos{i}$ returns the index of the first element of $\vvec[i]$
    in $\vvec$ and is implemented as $\regpos{i} := \selop[1]{\fstofreg,i}$.
  \item $\subregpos{i,j}$ returns the index of the first element of $\vvec[i,j]$
    in $\vvec$ and is implemented as
    $\subregpos{i,j} := \selop[1]{\fstofsubreg,\rankop[1]{\fstofsubreg,\regpos{i}-1}+j}$.
  \item $\possubreg{i,p}$ returns the index $j$ of the subregion $\subreg[i,j]$
    of $\reg[i]$ such that $\vvec[][p]$ is an element of  $\vvec[i,j]$,
    provided $\vvec[][p]$ is an element of $\vvec[i]$.
    This operation can be implemented as
    $\possubreg{i,p} := \rankop[1]{\fstofsubreg,p} - \rankop[1]{\fstofsubreg,
      \regpos{i} - 1}$.
  \end{itemize}

  \paragraph{Identifying subregion boundary vertices.}

  The next two vectors we store, $\idxvec$ and $\bdvec$, help us decide whether
  a vertex is a subregion boundary vertex based only on its region or subregion
  label.
  This is useful because the conversion between region and subregion
  labels is done differently for subregion boundary and subregion-interior
  vertices.
  Vector $\idxvec$ is the concatenation of $t$ bit vectors
  $\idxvec[1], \idxvec[2], \dots, \idxvec[t+1]$.
  For $1 \le i \le t$, the subvector $\idxvec[i]$ has length equal to the
  number of subregion boundary vertices in $\reg[i]$.
  Its first bit is $1$, all other bits are $0$.
  The last subvector $\idxvec[t+1]$ stores a single $1$ bit.
  Vector $\bdvec$ is a bit vector of length $n'$.
  $\bdvec[][k] = 1$ if and only if $\vvec[][k]$ is a subregion boundary
  vertex.

  Now observe that a vertex $x$ with region label $\reglbl[i]{x}$ in region
  $\reg[i]$ is a subregion boundary vertex if and only if
  $\reglbl[i]{x}$ is no greater than the number of subregion boundary vertices
  in $\reg[i]$ because, in the ordering used to define
  the region labels of the vertices in $\reg[i]$, all subregion boundary
  vertices precede all subregion-interior vertices.
  In order to test this condition, we need to compute the number of subregion
  boundary vertices in $\reg[i]$, which we can do using $\OhOf{1}$ I/Os
  as $\bdcount{i} := \selop[1]{\idxvec,i+1} - \selop[1]{\idxvec,i}$.
 
  For a vertex $x$ with subregion label $\subreglbl[i,j]{x}$ in subregion
  $\subreg[i,j]$, observe that the subvector $\vvec[i,j]$ of $\vvec$
  stores the vertices of $\subreg[i,j]$ sorted by their subregion labels.
  Thus, $x$ appears at position $\subreglbl[i,j]{x}$ in $\vvec[i,j]$,
  which is position $k := \subregpos{i,j} + \subreglbl[i,j]{x} - 1$ in $\vvec$.
  The corresponding bit $\bdvec[][k]$ in $\bdvec$ is $1$ if and only if
  $x$ is a subregion boundary vertex.
  Thus, testing whether $x$ is a subregion boundary vertex requires calculating
  $\subregpos{i,j}$, which takes $\OhOf{1}$ I/Os followed by a single lookup
  in the bit vector $\bdvec$.

  \paragraph{Subregion-interior vertices.}

  For the set of subregion-interior vertices, the vectors we have already
  defined suffice to convert between region
  and subregion labels.

  Given a subregion-interior vertex $x$ in $\reg[i]$ and its region label
  $\reglbl[i]{x}$ in $\reg[i]$, observe that the subregion-interior vertices in
  $\reg[i]$ appear in $\vvec[i]$ sorted by their region labels, and
  that these are exactly the vertices in $\vvec[i]$ whose corresponding
  bits in $\bdvec$ are $0$.
  Thus, since subregion-interior vertices have greater region labels than
  subregion boundary vertices, we can find the position $k$ of the only
  occurrence of $x$ in $\vvec$ as
  $k := \selop[0]{\bdvec, \rankop[0]{\bdvec, \regpos{i} - 1} + \reglbl[i]{x} 
  - \bdcount{i}}$, which takes $\OhOf{1}$ I/Os to compute.
  Vertex $x$ belongs to the subregion $\subreg[i,j]$ such that
  $\vvec[][k]$ belongs to $\vvec[i,j]$.
  The index $j$ of this subregion is easily computed using $\OhOf{1}$ I/Os, as
  $j := \rankop[1]{\fstofsubreg,k} - \rankop[1]{\fstofsubreg,\regpos{i}-1}$.
  Since vertices in $\vvec[i,j]$ are sorted by their subregion labels in
  $\subreg[i,j]$, the subregion label of $x$ is the index of $\vvec[][k]$
  in $\vvec[i,j]$, which can be computed using $\OhOf{1}$ I/Os as
  $\subreglbl[i,j]{x} := k - \subregpos{i,j} + 1$.

  To compute the region label in $\reg[i]$ for a subregion-interior
  vertex $x$ in $\subreg[i,j]$, we perform this conversion in the reverse
  direction.
  Firstly, we find the position $k$ of $x$ in $\vvec$ as
  $k := \subregpos{i,j} + \subreglbl[i,j]{x} - 1$.
  Secondly, we compute the region label of $x$ in $\reg[i]$ as
  $\reglbl[i]{x} := \rankop[0]{\bdvec,k} - \rankop[0]{\bdvec,\regpos{i}-1} + \bdcount{i}$,
  which is correct by the argument from the previous paragraph.
  Thus, the conversion from subregion labels to region labels also takes
  $\OhOf{1}$ I/Os.

  \paragraph{Subregion boundary vertices.}

  For subregion boundary vertices in $\reg[i]$, the vectors we have defined so
  far cannot be used to convert between region and subregion labels, mainly
  because these vertices by definition appear more than once in $\vvec[i]$.
  On the other hand, there are only a few such vertices, which allows us to
  store their region and subregion labels explicitly.
  In particular, we store two vectors $\regvec$ and~$\subregvec$.
  Vector $\regvec$ is the concatenation of subvectors $\regvec[1], \regvec[2],
  \dots, \regvec[t]$, one per region $\reg[i]$ of $G$.
  The length of $\regvec[i]$ is the number of subregion boundary vertices
  in $\reg[i]$, and $\regvec[i][k]$ is the pair $(j,k')$ such that the
  subregion $\subreg[i,j]$ is the defining subregion of the vertex $x$
  with region label $k$, and $k'$ is $x$'s subregion label in $\subreg[i,j]$.
  We represent the index $j$ using $\lg \regsz$ bits, and the subregion
  label $k'$ using $\lg \subregsz$ bits.
  The length of vector $\subregvec$ equals the number of $1$s in vector
  $\bdvec$.
  $\subregvec[][k]$ is the region label of the vertex in $\vvec$ corresponding
  to the $k$th $1$ in $\bdvec$.
  We store each such region label in $\subregvec$ using $\lg \regsz$ bits.

  Now consider a subregion boundary vertex $x$ with region label
  $\reglbl[i]{x}$ in $\reg[i]$.
  The index of the entry in $\regvec$ storing the index of the defining
  subregion of $x$ and $x$'s subregion label in this subregion is 
  $k := \selop[1]{\idxvec,i} + \reglbl[i]{x} - 1$, which takes
  $\OhOf{1}$ I/Os to compute.
  Another I/O suffices to retrieve $\regvec[][k]$.

  Given a subregion boundary vertex $x$ with subregion label
  $\subreglbl[i,j]{x}$ in subregion $\subreg[i,j]$, the index of $x$ in
  $\vvec$ is $k := \subregpos{i,j} + \subreglbl[i,j]{x} - 1$.
  The corresponding bit $\bdvec[][k]$ in $\bdvec$ is $1$, and the
  region label of $x$ can be retrieved from
  $\subregvec[][\rankop[1]{\bdvec,k}]$, which takes $\OhOf{1}$ I/Os.

  \paragraph{Space bound.}

  The vectors $\fstofreg$ and $\fstofsubreg$ have length
  $n' = N + \OhOf{N / \sqrt{\subregsz}} = \OhOf{N}$ because their length
  equals that of $\vvec$; only subregion boundary vertices appear more
  than once in $\vvec$, and the total number of occurrences of subregion
  boundary vertices in $\vvec$ is $\OhOf{N / \sqrt{\subregsz}}$ by
  Lemma~\ref{lem:fred_graph_sep}.
  Every $1$-bit in such a vector corresponds to the start of a region
  or subregion.
  Thus, both vectors contain $\OhOf{N / \subregsz} = \ohOf{N}$ $1$s.
  By Lemma~\ref{lem:rank_select}(b), this implies that both vectors can
  be represented using $\ohOf{N}$ bits.

  The vector $\idxvec$ has length $\OhOf{N / \sqrt{\subregsz}} = \ohOf{N}$
  because every bit in $\idxvec$ corresponds to a unique subregion boundary
  vertex.
  Thus, by Lemma~\ref{lem:rank_select}(a), we can store $\idxvec$ using
  $\ohOf{N}$ bits.

  The vector $\bdvec$ has length $n' = \OhOf{N}$ and contains
  $\OhOf{N / \sqrt{\subregsz}}$ $1$s, one per occurrence of a subregion boundary
  vertex in a subregion.
  Hence, by Lemma~\ref{lem:rank_select}(b), we can store $\bdvec$ using
  $\ohOf{N}$ bits.

  Finally, the number of entries in vectors $\regvec$ and $\subregvec$ is
  bounded by the number of occurrences of subregion boundary vertices in
  subregions, which is $\OhOf{N / \sqrt{\subregsz}}$.
  Each entry in such a vector is represented using
  $\OhOf{\lg \regsz} = \OhOf{\lg\subregsz + \lg \lg N}$ bits.
  Thus, the total space used by these two vectors is
  $\OhOf{(N / \sqrt{\subregsz})(\lg \subregsz + \lg \lg N)}$,
  which is $\ohOf{N}$ because $\subregsz = B = \OmegaOf{\lg N}$.
  To summarize, the vectors we use to convert between region and subregion
  labels use $\ohOf{N}$ space and allow us to convert between these labels
  using $\OhOf{1}$ I/Os.
  This proves parts (b) and (c) of the lemma.

  \paragraph{Conversion between region and graph labels.}

  The data structure for conversion between region and graph labels is
  identical, with the following modifications.
  \begin{itemize}
  \item The vector $\vvec$ is the concatenation of $t$ subvectors
    $\vvec[1], \vvec[2], \dots, \vvec[t]$.
    Vector $\vvec[i]$ stores the vertices in region $\reg[i]$ sorted
    by their region labels.
  \item The $k$th bit in $\bdvec$ is one if and only if the vertex
    $\vvec[k]$ is a region boundary vertex.
  \item There is no need for vectors $\fstofsubreg$ and $\idxvec$.
    Instead, we store a single $(\lg N)$-bit integer counting the number
    of region boundary vertices of $G$.
  \item The entries in $\regvec$ are pairs of indices of defining
    regions and region labels in these regions, and the entries in
    $\subregvec$ are graph labels.
  \end{itemize}
  Using these modified vectors, the conversion between graph and region labels
  can be accomplished using straightforward modifications of the procedures
  described in this proof.

  Vectors $\fstofreg$ and $\bdvec$ have length
  $N + \OhOf{N / \sqrt{\regsz}} = \OhOf{N}$ and contain $\OhOf{N / \regsz}$ and
  $\OhOf{N / \sqrt{\regsz}}$ $1$s, respectively, which are both $\ohOf{N}$.
  Hence, by Lemma~\ref{lem:rank_select}(b), they can both be stored using
  $\ohOf{N}$ bits.
  The number of region boundary vertices is stored using $\lg N = \ohOf{N}$
  bits.
  Each entry in vectors $\regvec$ and $\subregvec$ uses $\OhOf{\lg N}$ bits
  of space, and the number of these entries is $\OhOf{N / \sqrt{\regsz}}$.
  Hence, the total size of these vectors is $\OhOf{(N / \sqrt{\regsz}) \lg N}$,
  which is $\ohOf{N}$ because $\regsz \ge B \lg^2 N = \OmegaOf{\lg^3 N}$.
  This proves parts (a) and (d) of the lemma.
\end{proof}

\subsection{Data Structures}\label{sec:datastructs}

Our representation of $G$ consists of three parts: (1) succinct
representations of the subregions, packed into an array $\ds$, (2)
succinct representations of the $\alpha$-neighbourhoods of boundary
vertices, packed into two arrays $\regnbvec$ and $\subregnbvec$, and (3) two
vectors $\regminvec$ and $\subregminvec$ recording the minimum graph labels of
the interior vertices in each region or subregion.
Throughout this section, we use total orders $\reglt$ and $\subreglt$ of the
regions and subregions of~$G$, respectively.
For two regions $\reg[i]$ and $\reg[j]$, we define $\reg[i] \reglt \reg[j]$ if
$i < j$.
For two subregions $\subreg[i,j]$ and $\subreg[i',j']$, we define
$\subreg[i,j] \subreglt \subreg[i',j']$ if $i < i'$ or $i = i'$ and $j < j'$.

\paragraph{Minimum graph labels.}

The first part of our data structure consists of two arrays $\regminvec$ and
$\subregminvec$.
The $k$\textsuperscript{th} position of $\regminvec$ stores the minimum graph
label of the interior vertices in region $\reg[k]$.
The $k$\textsuperscript{th} position in $\subregminvec$ stores the minimum graph label of the
interior vertices in the subregion $\subreg[i,j]$ at position $k$ in the
sequence of subregions defined by $\subreglt$.

\paragraph{Succinct encoding of subregions.}

We represent each subregion $\subreg[i,j]$ using four data structures:
\begin{enumerate}
\item Its number of vertices $\numv[i,j]$ stored in $\lg N$
  bits.\footnote{$\lg \subregsz$ bits would suffice; however it simplifies the
    analysis to use $\lg N$ bits.}
\item The graph structure of $\subreg[i,j]$ encoded succintly as
	$\graphrep[i,j]$, using the encoding of 
	\cite{DBLP:journals/siamcomp/ChiangLL05}.
  This encoding involves a permutation $\perm[i,j]$ of
  the vertices in $\subreg[i,j]$.
\item A bit vector $\bdvec[i,j]$ of length $\size{\subreg[i,j]}$ with
  $\bdvec[i,j][i] = 1$ if and only if the corresponding vertex in $\perm[i,j]$
  is a  (region or subregion) boundary vertex.
\item An array $\keyvec[i,j]$ of length $\size{\subreg[i,j]}$ that stores the
  $\bitsPerKey$-bit key for each vertex.
  These labels are stored according to the permutation $\perm[i,j]$.
\end{enumerate}
The representation of $\subreg[i,j]$ is the concatenation of $\numv[i,j]$,
$\graphrep[i,j]$, $\bdvec[i,j]$, and $\keyvec[i,j]$.

Let $\succblksize$ be the maximum integer such that a subregion with $\succblksize$
vertices can be encoded in this manner. 
We choose the parameters of our two-level partition as $\regsz := \succblksize \log^3
N$ and $\subregsz := \succblksize$.
This ensures that every subregion fits in a disk block.
However, a given subregion may be much smaller, and storing
each subregion in its own disk block may be wasteful.
Instead, we pack the representations of all subregions into an array $\ds$,
without gaps, using a construction proposed 
in~\cite{DBLP:journals/algorithmica/DillabaughHM12} as follows.
The subregion representations in $\ds$ are ordered according to $\subreglt$
and the array $\ds$ is stored in consecutive disk blocks; the first
$\lg N$ bits of each block are reserved for a \emph{block offset}
value.
Note that the representation of each region $\subreg[i,j]$ is
distributed over at most two consecutive blocks in $\ds$.
If $\subreg[i,j]$ is distributed over two blocks $B_1$ and $B_2$,
the block offset of the second block, $B_2$, records the number of bits
of $\subreg[i,j]$ stored in $B_2$.
Any other block stores a block offset of~$0$.
In other words, the block offset of each block stores the starting
position of the first region $\subreg[i,j]$ in this block that is not
stored partially in the previous block.
Since we use $\lg N$ bits per block to store the block offset, each subregion
of size $\succblksize$ must be stored in at most $B \wsize - \lg N$ bits.
This representation of $\ds$ is illustrated in Figure~\ref{fig:block-packing}.

\begin{figure}[t]
  \centering
  \includegraphics{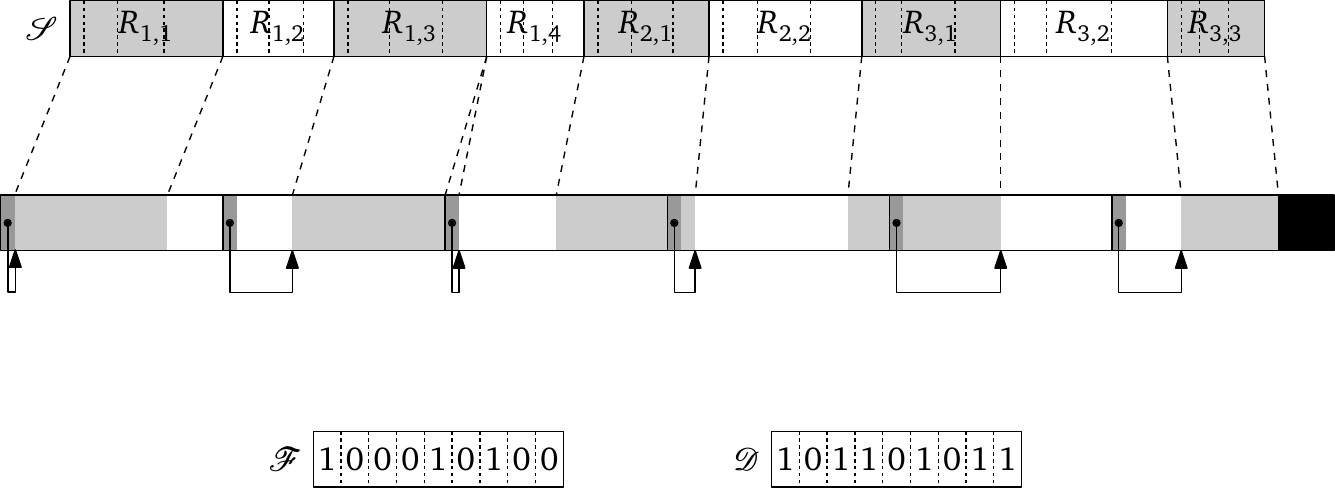}
  \caption[Packing graph blocks in memory]{Arrays $\ds$, $\fst$, and $\diff$, 
	and the packing of $\ds$ into blocks.
    The dark portion at the beginning of each block stores the block offset.
    The black portion of the last block is unused.}
  \label{fig:block-packing}
\end{figure}

In order to facilitate the efficient lookup of subregions in $\ds$, we augment
it with two bit vectors $\fst$ and $\diff$, each represented to support
constant-time $\rankop$ and $\selop$ queries (see Lemma~\ref{lem:rank_select}).
If $Q$ is the total number of subregions in~$G$, each of the two vectors has
length $Q$.
Their entries are defined as follows:
\begin{enumerate}
\item $\fst[k] = 1$ if and only if the $k$th subregion $\subreg[i,j]$
  is the first subregion in $\reg[i]$, that is, if $j = 1$.
\item $\diff[k] = 1$ if and only if the $k$th subregion starts in
  a different block from the $(k-1)$st subregion; that is, if
  the first bits of these two regions belong to different blocks in
  $\ds$.
\end{enumerate}

\paragraph{\boldmath Succinct encoding of $\alpha$-neighbourhoods.}

The third piece of our data structure consists of succinct encodings
of the $\alpha$-neighbourhoods of (region and subregion) boundary
vertices, for $\alpha = (\succblksize)^{1/3}$.
Recall that the $\alpha$-neighbourhood of a subregion boundary vertex interior
to its region includes only vertices inside this region.
Each $\alpha$-neighbourhood $\nb[\alpha]{x}$ is represented using the
following structures:
\begin{enumerate}
\item A succinct encoding $\graphrep[x]$ of the graph structure of
  $\nb[\alpha]{x}$.
  This encoding involves a permutation $\perm[x]$ of the vertices in
  $\nb[\alpha]{x}$.
\item A bit vector $\term[x]$ of length $\size{\nb[\alpha]{x}}$ for which
  $\term[x][k] = 1$ if and only if the $k$th vertex in $\perm{x}$ is a terminal
  vertex of $\nb[\alpha]{x}$.
\item A $(\lg \alpha)$-bit integer $\pos[x]$ that records the position of
  $x$ in $\perm[x]$.
\item An array $\keyvec[x]$ of length $\size{\nb[\alpha]{x}}$ that stores the
  key associated with each vertex in $\nb[\alpha]{x}$.
  These keys are stored according to the permutation $\perm[x]$.
\item An array $\lblvec[x]$ of length $\size{\nb[\alpha]{x}}$ that stores
  labels of the vertices in $\nb[\alpha]{x}$.
  The specific labels stored in $\lblvec[x]$ depend on whether $x$ is a region
  boundary vertex or a subregion boundary vertex interior to a region.
  \begin{itemize}
  \item If $x$ is a region boundary vertex, $\lblvec[x][k]$ is the graph
    label of the $k$th vertex in $\perm[x]$.
    Each such graph label is stored using $\lg n$ bits.
  \item If $x$ is a subregion boundary vertex interior to a region
    $\reg[i]$, the label stored in $\lblvec[x][k]$ depends on the type of the
    $k$\textsuperscript{th} vertex $y$ in $\perm[x]$.
    If $y$ is an interior vertex of $\nb[\alpha]{x}$---that is,
    $\term[x][k] = 0$---then $\lblvec[x][k]$ is the difference between the graph
    label of $y$ and the minimum graph label of the interior vertices in
    $\reg[i]$,  which is stored in $\regminvec$.
    We call this the \emph{region offset} of $y$.
    If $y$ is a terminal vertex of $\nb[\alpha]{x}$---that is,
    $\term[x][k] = 1$---then $\lblvec[x][k]$ is the region label of $y$.
    We note that each of these two types of labels can be stored using
    $\lg \regsz$ bits.
    For region labels this is obvious, as each region has at most $\regsz$
    vertices.
    For region offsets, we observe that every interior vertex of
    $\nb[\alpha]{x}$ must also be interior to $\reg[i]$ because every
    boundary vertex of $\reg[i]$ has at least one neighbour outside of
    $\reg[i]$ and, hence, must be terminal for $\nb[\alpha]{x}$ (if it is
    contained in $\nb[\alpha]{x}$) because $\nb[\alpha]{x}$ contains only
    vertices in $\reg[i]$.
    The bound on the number of bits required to store region offsets now
    follows because $\reg[i]$ has at most $\regsz$ vertices, and all interior
    vertices of $\reg[i]$ receive consecutive graph labels.
  \end{itemize}
\end{enumerate}

Given that each $\alpha$-neighbourhood contains at most $\alpha$
vertices, there exist upper bounds, $\regub$ and $\subregub$, on the number of
bits required to encode the $\alpha$-neighbourhood of any region or
subregion boundary vertex, respectively.
We pad every $\alpha$-neighbourhood $\nb[\alpha]{x}$ to size $\regub$ or
$\subregub$ using extra bits, depending on whether $x$ is a region boundary
vertex.
We store the $\alpha$-neighbourhoods of all region boundary vertices in
an array $\regnbvec$, and the $\alpha$-neighbourhoods of all subregion boundary
vertices interior to their regions in an array $\subregnbvec$.
The $\alpha$-neighbourhoods in each array are ordered according to the graph
labels of their defining boundary vertices.

In order to be able to locate the $\alpha$-neighbourhoods of boundary vertices
in $\regnbvec$ and $\subregnbvec$, we store two bit vectors,
$\regbdvec$ and $\subregbdvec$, of length $N$ that support $\rankop$
and $\selop$ queries;
$\regbdvec[i] = 1$ if and only if the vertex with graph label $i$ is a region
boundary vertex; $\subregbdvec[i] = 1$ if and only if the vertex is a subregion
boundary vertex interior to its region.

\paragraph{Space analysis.}

Before discussing how to traverse a path in $G$ using this
representation, we prove a bound on its size.

\begin{lemma}
  \label{lem:main_space}
  The above representation of a bounded-degree planar graph $G$ with
  $N$ vertices, each with an associated $\bitsPerKey$-bit key, uses 
  $N \bitsPerKey + \OhOf{N} + \ohOf{N \bitsPerKey}$ bits.
\end{lemma}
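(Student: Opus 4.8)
The plan is to account for the space of each of the three components of the representation in turn---the minimum-graph-label arrays $\regminvec$ and $\subregminvec$, the packed subregion array $\ds$ together with its auxiliary bit vectors $\fst$ and $\diff$, and the $\alpha$-neighbourhood arrays $\regnbvec$ and $\subregnbvec$ together with $\regbdvec$ and $\subregbdvec$---and then add in the $\ohOf{N}$-bit label-conversion structures from Lemma~\ref{lem:label_ops}. The guiding principle throughout is that there are only $\ThetaOf{N/\regsz}$ regions and $\ThetaOf{N/\subregsz}$ subregions, and only $\OhOf{N/\sqrt{\subregsz}}$ occurrences of subregion boundary vertices (and $\OhOf{N/\sqrt{\regsz}}$ region boundary vertices) by Lemma~\ref{lem:fred_graph_sep}, so anything charged per region, per subregion, or per boundary vertex is $\ohOf{N}$ provided we remember that $\regsz = \succblksize \lg^3 N = \OmegaOf{\lg^3 N}$, $\subregsz = \succblksize = \OmegaOf{\lg N}$, and $\alpha = \succblksize^{1/3}$.

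First I would handle the subregion array $\ds$, which carries the dominant term. Every vertex of $G$ appears in exactly one defining subregion, so summing the representations of all subregions counts each vertex once for the succinct graph structure $\graphrep[i,j]$ (which is $\OhOf{1}$ bits per vertex by the encoding of Chiang \etal~\cite{DBLP:journals/siamcomp/ChiangLL05}), once for its bit in $\bdvec[i,j]$, and once for its $\bitsPerKey$-bit key in $\keyvec[i,j]$; duplicated occurrences of subregion boundary vertices add only $\OhOf{(N/\sqrt{\subregsz})\cdot\bitsPerKey} = \ohOf{N\bitsPerKey}$ extra. The $\numv[i,j]$ fields and the $\lg N$-bit block offsets contribute $\OhOf{(N/\subregsz)\lg N} = \ohOf{N}$ since $\subregsz = \OmegaOf{\lg N}$. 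Hence $\ds$ uses $N\bitsPerKey + \OhOf{N} + \ohOf{N\bitsPerKey}$ bits. Here I must also check that $\succblksize$ is well-defined, i.e.\ that a subregion on $\succblksize$ vertices does fit in $B\wsize - \lg N$ bits; this follows because its encoding takes $(\bitsPerKey + \OhOf{1})\succblksize + \lg N$ bits, which is $\Theta(B\wsize)$ for the right choice of $\succblksize$, so $\succblksize = \ThetaOf{B\wsize/(\bitsPerKey+1)}$. The bit vectors $\fst$ and $\diff$ have length $Q = \ThetaOf{N/\subregsz}$ and so cost $\OhOf{N/\subregsz} = \ohOf{N}$ bits each by Lemma~\ref{lem:rank_select}.

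Next I would bound the $\alpha$-neighbourhood arrays. There are $\OhOf{N/\sqrt{\regsz}}$ region boundary vertices and $\OhOf{N/\sqrt{\subregsz}}$ subregion boundary vertices, and each padded $\alpha$-neighbourhood occupies $\regub$ or $\subregub$ bits, where both bounds are $\OhOf{\alpha(\bitsPerKey + \lg N)} = \OhOf{\succblksize^{1/3}(\bitsPerKey+\lg N)}$ --- the graph structure and terminal/boundary bits are $\OhOf{\alpha}$, the key array is $\alpha\bitsPerKey$, and the label array $\lblvec[x]$ uses $\lg n$ or $\lg\regsz = \OhOf{\lg N}$ bits per entry. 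Multiplying, $\subregnbvec$ uses $\OhOf{(N/\sqrt{\subregsz})\cdot\succblksize^{1/3}(\bitsPerKey+\lg N)} = \OhOf{N \succblksize^{-1/6}(\bitsPerKey + \lg N)}$ bits; since $\succblksize = \OmegaOf{B\wsize/(\bitsPerKey+1)}$ grows with $N$, this is $\ohOf{N\bitsPerKey} + \ohOf{N}$, and $\regnbvec$ is even smaller because $\regsz \gg \subregsz$. The vectors $\regbdvec$, $\subregbdvec$ have length $N$ but only $\ohOf{N}$ ones each, so cost $\ohOf{N}$ bits by Lemma~\ref{lem:rank_select}(b); and $\regminvec$, $\subregminvec$ have $\ThetaOf{N/\regsz}$ and $\ThetaOf{N/\subregsz}$ entries of $\lg N$ bits, i.e.\ $\ohOf{N}$ bits total. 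Finally adding the $\ohOf{N}$ bits from Lemma~\ref{lem:label_ops} and summing all contributions gives $N\bitsPerKey + \OhOf{N} + \ohOf{N\bitsPerKey}$.

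The main obstacle, I expect, is not any single calculation but pinning down the interaction between $\succblksize$, $\alpha$, and the boundary-vertex counts so that the $\alpha$-neighbourhood arrays genuinely fall into $\ohOf{N\bitsPerKey} + \ohOf{N}$ rather than into $\OhOf{N\bitsPerKey}$. The choices $\alpha = \succblksize^{1/3}$, $\subregsz = \succblksize$, $\regsz = \succblksize\lg^3 N$ are exactly calibrated so that $\sqrt{\subregsz}$ beats $\alpha$ by a polynomial factor in $\succblksize$; the argument hinges on the fact that $\succblksize \to \infty$ as $N \to \infty$ (because $B = \OmegaOf{\lg N}$ and $\bitsPerKey = \OhOf{\lg N}$, so $\succblksize = \OmegaOf{\wsize} = \OmegaOf{\lg N}$), which makes the $\succblksize^{-1/6}$ factor vanishing. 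I would state this dependence explicitly before doing the arithmetic, since it is the one place where a careless choice of parameters would break the bound.
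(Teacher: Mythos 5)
Your proposal follows the same three-part decomposition as the paper (subregion array, $\alpha$-neighbourhood arrays, auxiliary bit vectors and label-conversion structures), and your accounting for $\ds$, $\fst$, $\diff$, $\regminvec$, $\subregminvec$, $\regbdvec$, $\subregbdvec$, and for $\regnbvec$ is sound. However, your bound for $\subregnbvec$ has a genuine gap. You write the per-neighbourhood cost as $\OhOf{\alpha(\bitsPerKey + \lg N)}$, on the grounds that $\lblvec[x]$ stores ``$\lg n$ or $\lg\regsz = \OhOf{\lg N}$ bits per entry,'' and then conclude that $N\succblksize^{-1/6}(\bitsPerKey + \lg N) = \ohOf{N\bitsPerKey}+\ohOf{N}$ merely because $\succblksize \to \infty$. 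That conclusion does not follow from the paper's stated assumptions: they give only $\succblksize = \OmegaOf{\lg N}$ in general, and $\succblksize = \ThetaOf{B\wsize} = \OmegaOf{\lg^2 N}$ when $\bitsPerKey = \OhOf{1}$. In the constant-key case, $N\succblksize^{-1/6}\lg N$ can be as large as $\ThetaOf{N\lg^{2/3} N}$, which is neither $\OhOf{N}$ nor $\ohOf{N\bitsPerKey}$, so the total would exceed the claimed $N\bitsPerKey + \OhOf{N} + \ohOf{N\bitsPerKey}$.

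This is exactly why the data structure stores \emph{region} labels and \emph{region offsets} (each fitting in $\lg\regsz = \lg\succblksize + 3\lg\lg N$ bits) rather than full graph labels in $\lblvec[x]$ for subregion-boundary $\alpha$-neighbourhoods. Replacing $\lg\regsz$ by the crude upper bound $\lg N$ throws away the one cancellation that makes the estimate work. With $\lg\regsz$ the term becomes
\[
\OhOfV{\frac{N\succblksize^{1/3}(\lg\succblksize + \lg\lg N)}{\sqrt{\succblksize}}}
= \OhOfV{\frac{N\lg\succblksize}{\succblksize^{1/6}}} + \OhOfV{\frac{N\lg\lg N}{\succblksize^{1/6}}},
\]
and both summands are $\ohOf{N}$ using only $\succblksize = \OmegaOf{\lg N}$, since $\lg x / x^{1/6} \to 0$ and $\lg\lg N / \lg^{1/6} N \to 0$. (For $\regnbvec$, using $\lg N$ is harmless because $\regsz = \succblksize\lg^3 N$ supplies an extra $\lg^{3/2} N$ in the denominator, which absorbs the $\lg N$ in the numerator; your observation that $\regnbvec$ is ``even smaller'' is correct, but you cannot symmetrize the argument to $\subregnbvec$.) To repair the proposal, retain the $\lg\regsz$ bound for subregion-boundary label entries and carry it through the arithmetic rather than collapsing it to $\lg N$ up front.
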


\begin{proof}
  First we determine $\succblksize$.
  Let $c = \OhOf{1}$ be the number of bits per vertex used in the succinct
  representation $\graphrep[i,j]$ of the graph structure of each subregion
  $\subreg[i,j]$.
  Then the representation of each subregion uses at most 
  $\lg N + (c + \bitsPerKey + 1)\succblksize$ bits.
  Since a block can hold $B \wsize$ bits, with the first $\lg N$ bits used
  for the block offset in $\ds$, we have
  \begin{equation}
    \label{eqn:block_size}
    \succblksize = \frac{B \wsize - 2 \lg N}{c + \bitsPerKey + 1},
  \end{equation}
  assuming for the sake of simplicity that $c + \bitsPerKey + 1$ divides 
  $B \wsize - 2 \lg N$.
  Now we analyze the space consumption of the different parts of our structure.

  \textit{Minimum graph labels.}  The arrays $\regminvec$ and $\subregminvec$
  together store $\OhOf{N / \succblksize}$ graph labels, each of size $\lg N$.
  By (\ref{eqn:block_size}) and because $\bitsPerKey = \OhOf{\lg N}$, 
  $\succblksize = \OmegaOf{B} = \OmegaOf{\lg N}$.
  Hence, these two arrays use $\OhOf{N}$ bits of space.

  \textit{Subregions.}  Storing every vertex once uses $N(c + \bitsPerKey + 1)
  = N \bitsPerKey + \OhOf{N}$ bits.
  However, boundary vertices are stored once per subregion in which they are 
contained.
  By Lemma~\ref{lem:fred_graph_sep}, there are $\OhOf{N / \sqrt{\regsz}}$ region
  boundary vertices and $\OhOf{N / \sqrt{\subregsz}}$ subregion boundary
  vertices, counting each occurrence separately.
  Since $\regsz \ge \subregsz = \succblksize = \OmegaOf{\lg N}$, storing boundary
  vertices therefore requires
  \begin{equation*}
    \OhOfV{\frac{(c + \bitsPerKey + 1)N}{\sqrt{\succblksize}}} =
    \OhOfV{\frac{cN}{\sqrt{\succblksize}}} +
    \OhOfV{\frac{\bitsPerKey N}{\sqrt{\succblksize}}} = \ohOf{N} + \ohOf{N \bitsPerKey}
  \end{equation*}
  bits.
  In addition, we store a $(\lg N)$-bit size $\numv[i,j]$ for each
  subregion $\subreg[i,j]$.
  These numbers use $\OhOf{N \lg N / \succblksize} = \OhOf{N}$
  bits, as there are $\OhOf{N / \succblksize}$ subregions.
  In total, the size of array $\ds$ is $T := N \bitsPerKey + \OhOf{N} + \ohOf{N \bitsPerKey}$,
  excluding block offsets.
  There is one block offset of size $\lg N$ per block of $B \wsize =
  \OmegaOf{\lg^2 N}$ bits.
  Thus, the total space used by block offsets is $\OhOf{T / \lg N} = \ohOf{N \bitsPerKey}$.

  The final part of the representation of subregions is the vectors
  $\fst$ and $\diff$.
  Each vector stores one bit per subregion.
  Thus, their total size is $\OhOf{N / \succblksize} = \ohOf{N}$, and they can be stored
  in $\ohOf{N}$ space to support $\rankop$ and $\selop$ operations.
  In summary, we use $N \bitsPerKey + \OhOf{N} + \ohOf{N \bitsPerKey}$ bits to represent all
  subregions.

  \textit{$\alpha$-Neighbourhoods.}  We bound the sizes of the
  $\alpha$-neighbourhoods of region and subregion boundary vertices
  separately.
  By Lemma~\ref{lem:fred_graph_sep}, there are $\OhOf{N / \sqrt{\regsz}}$
  region boundary vertices.
  The representation of the $\alpha$-neighbourhood of each such vertex uses
  \begin{equation*}
    \lg \alpha + \alpha(c + 1 + \bitsPerKey + \lg N) = \OhOfV{(\succblksize)^{1/3} \lg N}
  \end{equation*}
  bits.
  Hence, the total size of the $\alpha$-neighbourhoods of all
  region boundary vertices stored in $\regnbvec$ is:
  
  \begin{equation*}
    \OhOfV{\frac{N (\succblksize)^{1/3} \lg N}{\sqrt{\regsz}}} =
    \OhOfV{\frac{N (\succblksize)^{1/3} \lg N}{\sqrt{\succblksize \lg^3 N}}} =
    \ohOf{N}.
  \end{equation*}
  The $\alpha$-neighbourhood of every subregion boundary vertex is
  represented using
  \begin{equation*}
    \lg \alpha + \alpha(c + 1 + \bitsPerKey + \lg \regsz) = \OhOfV{(\succblksize)^{1/3} (\bitsPerKey + \lg
      \regsz)}
  \end{equation*}
  bits.
  Since there are $\OhOf{N / \sqrt{\subregsz}} = \OhOf{N / \sqrt{\succblksize}}$
  subregion boundary vertices, the total size of their
  $\alpha$-neighbourhoods stored in $\subregnbvec$ is therefore
  \begin{align*}
    \OhOfV{\frac{N (\succblksize)^{1/3} (\bitsPerKey + \lg \regsz)}{\sqrt{\succblksize}}} &=
    \OhOfV{\frac{N (\succblksize)^{1/3} \bitsPerKey}{\sqrt{\succblksize}}} +
    \OhOfV{\frac{N (\succblksize)^{1/3} \lg (\succblksize \lg^3 N)}{\sqrt{\succblksize}}}\\
    &= \ohOf{N \bitsPerKey}.
  \end{align*}
  The last equality follows because $\succblksize = \OmegaOf{\lg N}$.
  Together, arrays $\regnbvec$ and $\subregnbvec$ use $\ohOf{N \bitsPerKey}$ space.

  The bit vectors $\regbdvec$ and $\subregbdvec$ have length $N$ each and
  contain $\ohOf{N}$ 1 bits each.
  Thus, by Lemma \ref{lem:rank_select}(b), they
  can be stored in $\ohOf{N}$ bits.

  In summary, we use $\ohOf{N \bitsPerKey}$ bits to store all $\alpha$-neighbourhoods.
  By summing the sizes of the three parts of
  our data structure, we obtain the space bound claimed in the lemma.
\end{proof}

\subsection{Navigation}

\label{sec:navigation}

To traverse a path $\CDpath$ in $G$, we use a strategy similar to that used 
by Agarwal \etal~\cite{DBLP:conf/soda/AgarwalAMVV98},
alternately using subregions and $\alpha$-neighbourhoods in order to
progress along $\CDpath$.
We assume we are given a function, $\stepop$, that takes the graph label
and key of the current vertex, $x$, as an argument, and either reports that $x$
is the last vertex of $\CDpath$ (in which case the traversal should stop), or outputs
the graph label and key of the successor of $x$ in $\CDpath$.
To make this decision, $\stepop$ may
use state information computed during the traversal of $\CDpath$ up to $x$
and needs access to the set of neighbours of $x$.
Therefore, the task of our traversal procedure is to ensure that, during each step
along $\CDpath$, the neighbours of the current vertex are in memory.

Assume, without loss of generality, that the start vertex of $\CDpath$ is 
interior to a
subregion~$\subreg[i,j]$.
This is easy to ensure initially by loading the
representation of $\subreg[i,j]$ into memory.
Then we follow the path $\CDpath$ by repeated application of the $\stepop$ function
and by invoking the following paging procedure for each visited vertex before
applying $\stepop$ to it.
\begin{itemize}
\item If the current vertex $x$ is interior to the current component
  (subregion or $\alpha$-neighbourhood) in memory, we do not do
  anything, as all of $x$'s neighbours are in memory.
\item If the current component is a subregion $\subreg[i,j]$ and $x$ is a
  boundary vertex of $\subreg[i,j]$, or the current component is an
  $\alpha$-neighbourhood $\nb[\alpha]{y}$ and $x$ is a terminal vertex of
  $\nb[\alpha]{y}$, then $x$ has neighbours outside the current
  component.
  We load a component containing $x$ and all its neighbours into memory:
  \begin{itemize}
  \item If $x$ is a region or subregion boundary vertex, we load its
    $\alpha$-neighbourhood $\nb[\alpha]{x}$ into memory.
  \item If $x$ is interior to a subregion $\subreg[i',j']$, we load
    $\subreg[i',j']$ into memory.
  \end{itemize}
\end{itemize}

Our paging strategy ensures that, for every vertex $x \in \CDpath$, we have
a succinct representation of a component of $G$ containing all
neighbours of $x$ in memory when $x$ is visited.
To show that this allows us to traverse any path of length $K$
using $\OhOf{K / \lg B}$ I/Os, we need to show that
(1) given the graph label of a vertex $x$, we can identify and load its
$\alpha$-neighbourhood, or the subregion $x$ is interior to, using $\OhOf{1}$
I/Os; (2) the graph labels of the vertices in the component currently
in memory and the graph structure of this component
can be determined without performing any I/Os; and (3) for every
$\lg B$ steps along path $\CDpath$, we load only $\OhOf{1}$ components into memory.

Before proving these claims, we prove that we can efficiently convert between
graph, region, and subregion labels of a vertex $x$, and identify the region and
subregion to which $x$ is interior, or, if $x$ is a region or subregion boundary vertex,
the region or subregion which defines $x$.
The following lemma states this formally.
Its proof is the same as in~\cite{DBLP:journals/talg/BoseCHMM12}.
We include it here for completeness.

The results of the previous lemma lead to the following lemma.

\begin{lemma}
\label{lem:component_ios}
When the traversal algorithm encounters a terminal or boundary
vertex $v$, the next component containing $v$ in which the traversal
may be resumed can be loaded in $O(1)$ I/O operations.
\end{lemma}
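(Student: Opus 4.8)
The plan is to do a case analysis on the type of the vertex $v$ that the traversal has just reached, using the label-conversion machinery of Lemma~\ref{lem:label_ops} together with the bit vectors $\regbdvec$, $\subregbdvec$, $\fst$, and $\diff$. When the traversal stops at a vertex $v$ because $v$ is a boundary vertex of the current subregion or a terminal vertex of the current $\alpha$-neighbourhood, we will have the graph label $\glbl{v}$ available (the component currently in memory stores either graph labels directly, in $\lblvec[\cdot]$ for region-boundary $\alpha$-neighbourhoods and in the subregion vertex order via Lemma~\ref{lem:label_ops}(d), or region offsets/region labels that convert to graph labels by adding the appropriate entry of $\regminvec$ or by Lemma~\ref{lem:label_ops}(c,d)). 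So the starting point is: we know $\glbl{v}$ and we want to load, in $O(1)$ I/Os, a component that contains $v$ together with all of $v$'s neighbours.

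First I would handle the case where $v$ is a region or subregion boundary vertex. This is detected by a single lookup in $\regbdvec$ or $\subregbdvec$ at position $\glbl{v}$. If $v$ is a region boundary vertex, its $\alpha$-neighbourhood $\nb[\alpha]{v}$ sits in $\regnbvec$ at rank $\rankop[1]{\regbdvec,\glbl{v}}$ among the region-boundary $\alpha$-neighbourhoods; since each such neighbourhood is padded to the fixed size $\regub$, its bit offset into $\regnbvec$ is a simple multiplication, and it spans $O(1)$ disk blocks, so one (or two) I/Os retrieve it. Symmetrically, if $v$ is a subregion boundary vertex interior to its region, we use $\rankop[1]{\subregbdvec,\glbl{v}}$ to index into $\subregnbvec$, where each neighbourhood is padded to $\subregub$; again $O(1)$ I/Os suffice. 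In either subcase, by the definition of $\alpha$-neighbourhood, $\nb[\alpha]{v}$ contains all neighbours of $v$ (for a region-boundary vertex this is immediate; for a subregion-boundary vertex interior to $\reg[i]$, all of $v$'s neighbours lie in $\reg[i]$ and within the first $\alpha$ vertices of the BFS restricted to $\reg[i]$), so loading it is correct.

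Second I would handle the case where $v$ is interior to some subregion $\subreg[i',j']$, which occurs when $v$ is a terminal vertex of the current $\alpha$-neighbourhood but is not itself a region or subregion boundary vertex (both bit-vector lookups return $0$). Here I first apply Lemma~\ref{lem:label_ops}(a) to $\glbl{v}$ to obtain the defining region $\reg[i']$ and the region label $\reglbl[i']{v}$, then Lemma~\ref{lem:label_ops}(b) to obtain the defining subregion index $j'$ and the subregion label $\subreglbl[i',j']{v}$; this is $O(1)$ I/Os. It remains to locate the packed representation of $\subreg[i',j']$ inside $\ds$. Let $k$ be the rank of $\subreg[i',j']$ in the order $\subreglt$; this is computable from $i'$ and $j'$ using $\fst$ (the $i'$-th $1$ in $\fst$ marks the first subregion of $\reg[i']$, so $k = \selop[1]{\fst,i'} + j' - 1$). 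The block of $\ds$ in which $\subreg[i',j']$ begins is then $\rankop[1]{\diff,k}$, since $\diff$ marks exactly those subregions that start in a new block; because every subregion fits in a single disk block, it is stored in at most two consecutive blocks of $\ds$, and the block offsets let us parse out its boundaries, so $O(1)$ I/Os load it. Since $v$ is interior to $\subreg[i',j']$, all of $v$'s neighbours lie in $\subreg[i',j']$, so this load is correct.

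The routine part is the bookkeeping with ranks, selects, and block offsets; the only place that needs a little care — and the step I'd flag as the main obstacle — is making sure the graph label $\glbl{v}$ is actually in hand at the moment the traversal halts at $v$, given that the in-memory component may store $v$'s label only as a region offset or a region label rather than a graph label. This is resolved by observing that a terminal vertex of a subregion-boundary $\alpha$-neighbourhood $\nb[\alpha]{y}$ stores (by construction of $\lblvec[y]$) the region label of that vertex, from which Lemma~\ref{lem:label_ops}(d) recovers the graph label in $O(1)$ I/Os, while an interior vertex stores a region offset that is added to the appropriate entry of $\regminvec$; for a subregion, the graph label is recovered from the subregion label via Lemma~\ref{lem:label_ops}(c) then~(d). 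In every case the conversion is a constant number of I/Os, so the total cost of locating and loading the next component is $O(1)$ I/Os, as claimed.
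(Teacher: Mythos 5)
Your proof is correct and follows essentially the same route as the paper's: detect the type of $v$ via the bit vectors, recover $\glbl{v}$ from whatever label is currently in memory using Lemma~\ref{lem:label_ops}, then either index by rank into the appropriate padded $\alpha$-neighbourhood array, or use $\fst$ and $\diff$ to locate the subregion in $\ds$. The only cosmetic difference is organizational — you split the cases by vertex type (boundary vs.\ interior) while the paper splits by the type of component currently in memory (subregion vs.\ $\alpha$-neighbourhood) — and your use of $\fst$/$\diff$ is in fact cleaner than the paper's own write-up, which reuses the symbols $\mathcal{B}_R$/$\mathcal{B}_S$ for the per-subregion index vectors despite having defined them as the boundary-vertex bit vectors.
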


\begin{proof}
Consider firstly the case of a boundary vertex, $v$, for a subregion.
The vertex may be a region or subregion boundary.
By Lemmas \ref{lem:label_ops}(c) and \ref{lem:label_ops}(d), the graph
label of $v$ can be determined in $O(1)$ I/O operations.
If $v$ is a region boundary vertex, this graph label serves as a direct 
index into the array of region boundary vertex $\alpha$-neighbourhoods 
(recall that all region boundary vertices are labeled with consecutive 
graphs labels).
Loading the $\alpha$-neighbourhood requires an additional I/O operation.

If $v$ is a subregion boundary vertex, the region and region
label can be determined by Lemma \ref{lem:label_ops}(a). 
By Lemma \ref{lem:label_ops}(d) we can determine the graph label. 
For the subregion boundary vertex, $v$, with graph label $\ell_G$, we can
determine the position of the $\alpha$-neighbourhood of $v$ in
$\N_S$ in $O(1)$ I/Os by 
$\rankop[1]{\bdvec, \ell_G}$.
In order to report the graph labels of vertices in the
  $\alpha$-neighbourhood, we must know the graph label of the first
  interior vertex of the region, which we can read from
  $\mathcal{L}_R$ with at most a single additional I/O operation.

  Further,  consider the case of a terminal node in a
  $\alpha$-neighbourhood. If this is the $\alpha$-neighbourhood of a
  subregion boundary vertex, the region $R_i$ and region label are
  known, so by Lemma \ref{lem:label_ops}(d) we can determine the graph
  label, and load the appropriate component with $O(1)$ I/O
  operations. Likewise, for $\alpha$-neighbourhoods of a region
  boundary vertex, the graph label is obtained directly from the vertex
  key.

  Finally, we show that a subregion can be loaded in $O(1)$
  I/Os. Assume that we wish to load subregion $R_{i,j}$. Let $r =
  \selop[1]{\mathcal{B}_R,i}$ mark the start of the region $i$. We can
  then locate the block, $b$, in which the representation for
  subregion $j$ starts by $b=\rankop[1]{\mathcal{B}_S, r+j}$. 
  There may be other subregions stored entirely in $b$ prior to
  $R_{i,j}$. 
  We know that if $\mathcal{B}_S[r+j] = 0$, then $R_{i,j}$ is the
  first subregion stored in $b$ that has its representation start in
  $b$.  
  If this is not the case, the result of
  $\selop[1]{\rankop[1]{\mathcal{B}_S,r+j}}$ will indicate the position
  of $r$ among the subregions stored in $b$.  
  We can now read subregion $R_{i,j}$ as follows: 
  we load block $b$ into memory and
  read the block offset which indicates where the first subregion in
  $b$ starts. 
  If this is $R_{i,j}$, we read the subregion offset in order to
  determine its length, and read $R_{i,j}$ into memory, possibly
  reading into block $b+1$ if there is an overrun. 
  If $R_{i,j}$ is not
  the first subregion starting in block $b$, then we note how many
  subregions we must skip from the start of $b$, and by reading only
  the subregion offsets we can jump to the location in $b$ where
  $R_{i,j}$ starts. 
  The $\rankop$ and $\selop$ operations require
  $O(1)$ I/Os, and we read portions of at most two blocks $b$ and
  $b+1$ to read a subregion, therefore we can load a subregion with $O(1)$
  I/Os.
\end{proof}

\begin{lemma}
  \label{lem:report_int_labels}
  Given a subregion or $\alpha$-neighbourhood, the graph labels of
  all interior (subregions) and internal ($\alpha$-neighbourhoods)
  vertices can be reported without incurring any additional I/Os
  beyond what is required when the component is loaded to main memory.
\end{lemma}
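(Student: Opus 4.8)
The plan is to observe that, once a component has been brought into internal memory by the procedure of Lemma~\ref{lem:component_ios}, the graph label of every interior vertex is either written verbatim inside the component or recoverable from that component together with a single minimum-graph-label value that is already fetched while loading; reporting all such labels is then a linear in-memory scan. First I would fix the bookkeeping: when a subregion $\subreg[i,j]$ is loaded I treat the single lookup of the minimum graph label $m$ of the interior vertices of $\subreg[i,j]$ (read from $\subregminvec$, or, in the traversal, derived in memory from the graph label of the entry vertex, which is interior to $\subreg[i,j]$) as part of the $\OhOf{1}$-I/O loading cost; similarly, when the $\alpha$-neighbourhood $\nb[\alpha]{x}$ of a subregion boundary vertex interior to a region $\reg[i]$ is loaded, the minimum graph label $m'$ of the interior vertices of $\reg[i]$ is read from $\regminvec$ as part of loading --- exactly the extra I/O already charged in the proof of Lemma~\ref{lem:component_ios}. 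Thus after loading, memory holds the component's structures plus the relevant minimum graph label, and the claim reduces to an I/O-free computation.

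For a subregion $\subreg[i,j]$, I would invoke the labeling scheme of Section~\ref{sec:graph_labelling}: a subregion-interior vertex has all its neighbours inside $\subreg[i,j]$ and hence inside $\reg[i]$, so it is interior to $\reg[i]$ as well; consequently the interior vertices of $\subreg[i,j]$ receive consecutive graph labels, and within that block their order by graph label coincides with their order by region label, which in turn coincides with their order by subregion label, i.e.\ their position in $\perm[i,j]$. The reporting procedure is therefore: scan the in-memory bit vector $\bdvec[i,j]$ left to right and, for the $t$th position that carries a $0$, report graph label $m + t - 1$. No disk block beyond those already read for the subregion is touched.

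For an $\alpha$-neighbourhood $\nb[\alpha]{x}$, the interior (non-terminal) vertices are distinguished by the in-memory bit vector $\term[x]$, and their labels come from the in-memory array $\lblvec[x]$. If $x$ is a region boundary vertex, $\lblvec[x][k]$ already stores the graph label of the $k$th vertex of $\perm[x]$, so it is read off directly. If $x$ is a subregion boundary vertex interior to $\reg[i]$, then for each $k$ with $\term[x][k]=0$ the entry $\lblvec[x][k]$ is the region offset of that vertex, and we report $m' + \lblvec[x][k]$ using the value $m'$ obtained during loading. These two cases exhaust the components, which proves the lemma.

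The only delicate point, and the step I expect to need the most care, is the first paragraph: one must make sure that a single minimum-graph-label access --- not one per reported vertex --- suffices for the whole component, and that this access is genuinely absorbed into the $\OhOf{1}$-I/O cost of loading (or avoided altogether by deriving the minimum from the entry vertex's label). The ordering facts used in the subregion case --- interior-of-subregion implies interior-of-region, and the coincidence of the graph-, region-, and subregion-label orders on these vertices --- also need to be checked against the definitions of Section~\ref{sec:graph_labelling}, but once they are in hand the remainder is just a scan over data already resident in memory.
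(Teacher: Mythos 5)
Your proof is correct and follows essentially the same route as the paper's: scan the in-memory bit vector ($\bdvec[i,j]$, respectively $\term[x]$) to enumerate interior/internal vertices, exploit the consecutiveness of graph labels for subregion-interior vertices (and the stored offsets in $\lblvec[x]$), and charge the single $\subregminvec$/$\regminvec$ lookup to the $\OhOf{1}$ I/Os already spent loading the component. Your treatment of the "delicate point" matches the paper's own remark that the $\regminvec$ read "potentially costs an I/O; however, we can pay for this when the component is loaded."
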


\begin{proof}
  The encoding of a subregion induces an order on all vertices, both
  interior and boundary, of that subregion. Consider the interior
  vertex at position $j$ among all vertices in the subregion. 
  The position of this vertex among all interior vertices may be
  determined by the result of $\rankop[0]{\mathcal{B}, j}$. 
  Recall that graph labels assigned to all interior vertices in a subregion are
  consecutive; therefore, by adding one less this value to the
  graph label of the first vertex in the subregion, the graph label
  of interior vertex at position $j$ is obtained.

  For the $\alpha$-neighbourhood of a region boundary vertex, the graph
  labels from $\mathcal{L}_\alpha$ can be reported directly.  
  For the $\alpha$-neighbourhoods of subregion boundary vertices, we can
  determine from $\mathcal{L}_R$ the graph label of the first vertex
  in the parent region. 
  This potentially costs an I/O; however, we can pay for this when the 
component is loaded. 
  We can then report graph
  labels by adding the offset stored in $\mathcal{L}_{\alpha}$ to the
  value from $\mathcal{L}_R$.
\end{proof}

\begin{lemma}
  \label{lem:main_IO_bound}
  Using the data structures and navigation scheme described above, a
  path of length $K$ in graph $G$ can be traversed in $O \left(
    \frac{K}{ \lg{\succblksize} } \right)$ I/O operations.
\end{lemma}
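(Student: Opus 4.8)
The plan is to follow the amortized‑analysis template of Agarwal \etal~\cite{DBLP:conf/soda/AgarwalAMVV98}: show that (i) each time the navigation scheme of Section~\ref{sec:navigation} loads a new component (a subregion or an $\alpha$‑neighbourhood) it spends only $O(1)$ I/Os, and (ii) traversing a path $\CDpath$ of length $K$ triggers only $O(K/\lg\succblksize)$ such loads; multiplying (i) by (ii) yields the bound. Fact (i) is already in hand: Lemma~\ref{lem:component_ios} bounds the cost of loading the component in which the traversal resumes after a terminal or boundary vertex by $O(1)$ I/Os, and loading the initial subregion is $O(1)$ I/Os as well. Moreover, while the current component is in memory the graph labels and keys of every vertex visited inside it can be reported with no further I/Os by Lemma~\ref{lem:report_int_labels}, so $\stepop$ always has what it needs; the $O(1)$ label conversions of Lemma~\ref{lem:label_ops} needed to identify the next component occur only at component transitions and are absorbed into the $O(1)$ charged to the load. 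Thus the whole argument reduces to bounding the number of component loads.

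The core estimate is that an $\alpha$‑neighbourhood, once entered, carries the walk forward by $\Omega(\lg\succblksize)$ steps before a terminal vertex is reached. Since $G$ has bounded degree $d=O(1)$, the ball of radius $t$ about any vertex $x$ has at most $d^{t+1}$ vertices, so $\nb[\alpha]{x}$ — being induced by the first $\alpha$ vertices of a BFS from $x$ — contains every vertex of $G$ within distance $\log_d\alpha - O(1)$ of $x$. A subpath of $\CDpath$ that starts at $x$ and has fewer edges than that never leaves $\nb[\alpha]{x}$, hence never reaches a terminal vertex. With $\alpha=(\succblksize)^{1/3}$ and $d$ constant, $\log_d\alpha=\Theta(\lg\succblksize)$, giving the claimed $\Omega(\lg\succblksize)$ progress per untruncated $\alpha$‑neighbourhood.

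Turning per‑neighbourhood progress into a bound on the number of loads uses a structural fact about the load sequence: a subregion in memory contains all neighbours of each of its interior vertices, so the walk can leave a subregion only through one of its boundary vertices, which immediately forces an $\alpha$‑neighbourhood load; hence no two subregion loads are consecutive, and up to a factor of two the number of loads is controlled by the number of $\alpha$‑neighbourhood loads. If every $\alpha$‑neighbourhood contributed $\Omega(\lg\succblksize)$ progress we would conclude at once that the walk enters only $O(K/\lg\succblksize)+1$ of them and finish.

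The main obstacle is that the $\alpha$‑neighbourhood of a subregion boundary vertex $x$ that is interior to a region $\reg[i]$ is truncated to the vertices of $\reg[i]$, so the walk might escape it after very few steps by reaching a region boundary vertex of $\reg[i]$. I would dispose of this with a short case analysis of how a truncated $\nb[\alpha]{x}$ is exited. Either the exit vertex $v$ is terminal because of a neighbour that lies in $\reg[i]$ but outside $\nb[\alpha]{x}$ — by the degree bound above such a neighbour has BFS‑number exceeding $\alpha$, which forces $v$ (and hence the walked subpath from $x$) to have length $\Omega(\log_d\alpha)$, exactly as in the untruncated case — or $v$ is a region boundary vertex of $\reg[i]$, in which case the component loaded next is the \emph{untruncated} $\nb[\alpha]{v}$, which does advance the walk by $\Omega(\lg\succblksize)$. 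Consequently every maximal run of consecutive ``low‑progress'' loads (a subregion and/or a truncated neighbourhood) has length at most a small constant and is immediately followed by a load that makes $\Omega(\lg\succblksize)$ progress; a routine amortization over these runs gives that the walk triggers $O(K/\lg\succblksize)$ loads in total. Combined with the $O(1)$ I/Os per load from Lemma~\ref{lem:component_ios}, this yields the stated $O\!\left(K/\lg\succblksize\right)$ bound, with a harmless additive $O(1)$ for paths shorter than $\lg\succblksize$.
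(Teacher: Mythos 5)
Your proposal is correct and follows essentially the same approach as the paper's proof: bound each component load at $O(1)$ I/Os via Lemma~\ref{lem:component_ios} and Lemma~\ref{lem:report_int_labels}, argue that an untruncated $\alpha$-neighbourhood of $(\succblksize)^{1/3}$ vertices in a degree-$d$ graph guarantees $\Theta(\lg\succblksize)$ progress, and then handle the truncated $\alpha$-neighbourhood of a subregion-boundary vertex by the same two-way case split (either it still gives $\Theta(\lg\succblksize)$ progress, or it terminates at a region-boundary vertex whose untruncated neighbourhood does). Your treatment is marginally more explicit — spelling out the $d^{t+1}$ ball-size bound and the observation that no two subregion loads are consecutive — but these are presentational refinements of the same argument.
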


\begin{proof}
  The $\alpha$-neighbourhood components are generated by performing a
  breadth-first traversal, from the boundary vertex $v$, which
  defines the neighbourhood. A total of $\succblksize^{1/3}$ vertices are added
  to each $\alpha$-neighbourhood component. Since the degree of $G$ is
  bounded by $d$, the length of a path from $v$ to the terminal
  vertices of the $\alpha$-neighbourhood is
  $\log_d{\succblksize^{1/3}}$. 
  However, for subregion boundary vertices, the
  $\alpha$-neighbourhoods only extend to the boundary vertices of the
  region, such that the path from $v$ to a terminal node may be less
  than $\log_d{\succblksize^{1/3}}$. 
  In the later case the terminal vertex
  corresponds to a region boundary vertex.

  Without loss of generality, assume that traversal starts with an
  interior vertex of some subregion. Traversal will continue in the
  subregion until a boundary vertex is encountered, at which time the
  $\alpha$-neighbourhood of that vertex is loaded. In the worst case
  we travel one step before arriving at a boundary vertex of the
  subregion. If the boundary vertex is a region boundary, the
  $\alpha$-neighbourhood is loaded, and we are guaranteed to travel at
  least $\log_d{\succblksize^{1/3}}$ steps before another component must be
  visited.  If the boundary vertex is a subregion boundary, then the
  $\alpha$-neighbourhood is loaded, and there are again two possible
  situations. In the first case, we are able to traverse
  $\log_d{\succblksize^{1/3}}$ steps before another component must be visited. In
  this case, by visiting two components, we have progressed a minimum
  of $\log_d{\succblksize^{1/3}}$ steps. In the second case, a terminal vertex in
  the $\alpha$-neighbourhood is reached before $\log_d{\succblksize^{1/3}}$ steps
  are taken. This case will only arise if the terminal vertex
  encountered is a region boundary vertex. Therefore, we load the
  $\alpha$-neighbourhood of this region boundary vertex, and progress
  at least $\log_d{\succblksize^{1/3}}$ steps along the path before another I/O
  will be required.

  Since we traverse $\log_d{\succblksize^{1/3}}$ vertices with a constant number
  of I/Os, we visit $O \left( \frac{K}{ \lg{\succblksize} } \right)$ components
  to traverse a path of length $K$. By Lemma \ref{lem:component_ios},
  loading each component requires a constant number of I/O operations,
  and by Lemma \ref{lem:report_int_labels} we can report the graph
  labels of all vertices in each component without any additional
  I/Os. Thus, the path may be traversed in $O \left( \frac{K}{ \lg \succblksize }
  \right)$ I/O operations.
\end{proof}

\begin{theorem}
  \label{thm:planar_graph}
  Given a planar graph $G$ of bounded degree, where each vertex stores
  a key of $\bitsPerKey$ bits, there is a data structure that represents $G$ in
  $N\bitsPerKey +O(N) + o(N\bitsPerKey)$ bits that permits traversal of a path of length
  $K$ with $O \left( \frac{K}{ \lg{B} } \right)$ I/O operations.
\end{theorem}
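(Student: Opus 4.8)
The plan is to read the theorem off the two principal lemmas already established in this section---Lemma~\ref{lem:main_space} for the space bound and Lemma~\ref{lem:main_IO_bound} for the traversal cost---and then to carry out the one small remaining piece of bookkeeping: identifying the parameter $\succblksize$ that governs the I/O bound of Lemma~\ref{lem:main_IO_bound} with the physical block size $B$ appearing in the statement.

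First I would fix the data structure to be exactly the one assembled in Sections~\ref{sec:graph_labelling}--\ref{sec:navigation}: the two-level partition with $\regsz := \succblksize \lg^3 N$ and $\subregsz := \succblksize$; the packed subregion array $\ds$ with its companion bit vectors $\fst$ and $\diff$; the packed $\alpha$-neighbourhood arrays $\regnbvec$ and $\subregnbvec$ with the locating bit vectors $\regbdvec$ and $\subregbdvec$; the minimum-graph-label arrays $\regminvec$ and $\subregminvec$; and the $\ohOf{N}$-bit label-conversion machinery of Lemma~\ref{lem:label_ops}. Lemma~\ref{lem:main_space} already certifies that this representation uses $N\bitsPerKey + \OhOf{N} + \ohOf{N\bitsPerKey}$ bits, which is precisely the space bound claimed, so nothing further is needed for the space half of the theorem.

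Second I would invoke Lemma~\ref{lem:main_IO_bound}, which shows that the paging scheme of Section~\ref{sec:navigation}---alternately loading a subregion or an $\alpha$-neighbourhood, each in $\OhOf{1}$ I/Os by Lemma~\ref{lem:component_ios}, and recovering the graph labels inside a loaded component without extra I/Os by Lemma~\ref{lem:report_int_labels}---traverses any path of length $K$ in $\OhOf{K / \lg\succblksize}$ I/Os. It then remains to observe $\lg\succblksize = \Theta(\lg B)$. By Equation~(\ref{eqn:block_size}), $\succblksize = (B\wsize - 2\lg N)/(c + \bitsPerKey + 1)$ with $c = \OhOf{1}$; under the standing assumptions $\wsize = \Theta(\lg N)$, $\bitsPerKey = \OhOf{\lg N}$ and $B = \OmegaOf{\lg N}$, the proof of Lemma~\ref{lem:main_space} already records $\succblksize = \OmegaOf{B}$, and the same estimates give $\succblksize = \OhOf{B \lg N}$. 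Hence $\lg\succblksize = \OmegaOf{\lg B}$, while $\lg\succblksize = \OhOf{\lg B + \lg\lg N} = \OhOf{\lg B}$ because $B = \OmegaOf{\lg N}$ forces $\lg\lg N = \OhOf{\lg B}$. Substituting $\lg\succblksize = \Theta(\lg B)$ into the bound of Lemma~\ref{lem:main_IO_bound} yields $\OhOf{K/\lg B}$, completing the proof.

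I do not expect a substantive obstacle: every hard step---the recursive partition, the succinct component encodings, the label arithmetic of Lemma~\ref{lem:label_ops}, and the amortised paging analysis---has already been discharged by Lemmas~\ref{lem:label_ops}, \ref{lem:main_space}, \ref{lem:component_ios}, \ref{lem:report_int_labels} and~\ref{lem:main_IO_bound}. The only point that requires a moment's care is the reconciliation $\lg\succblksize = \Theta(\lg B)$, and this rests entirely on the ``reasonable block size'' hypothesis $B = \OmegaOf{\lg N}$, which prevents the $\Theta(\lg N)$-bit word size and the $\OhOf{\lg N}$-bit keys from inflating or deflating the per-block vertex capacity by more than a logarithmic factor.
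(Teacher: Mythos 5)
Your proof is correct and takes essentially the same route as the paper: invoke Lemma~\ref{lem:main_space} for the space bound and Lemma~\ref{lem:main_IO_bound} for the I/O bound, then observe that $\succblksize$ and $B$ are interchangeable in the logarithm. In fact you spell out both directions of $\lg\succblksize = \Theta(\lg B)$ whereas only $\lg\succblksize = \Omega(\lg B)$ (i.e., $\succblksize = \Omega(B)$, which is what the paper's one-line proof cites) is needed to conclude $O(K/\lg\succblksize) = O(K/\lg B)$.
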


\begin{proof}
  Proof follows directly from Lemmas \ref{lem:main_space} and
  \ref{lem:main_IO_bound}. We substitute $B$ for $\succblksize$, using
  Eq.~\ref{eqn:block_size} ($\succblksize = \Omega(B)$), as this is standard for
  reporting results in the I/O model.
\end{proof}

Due to the need to store keys with each vertex, it is impossible to
store the graph with fewer than $N\bitsPerKey$ bits.  The space savings in our
data structure are obtained by reducing the space required to store
the actual graph representation.
Agrawal~\etal~\cite{DBLP:conf/soda/AgarwalAMVV98} do not attempt to
analyze their space complexity in bits but, assuming they use
$\lg{N}$ bit pointers to represent the graph structure, their
structure requires $\OhOf{N \lg{N}}$ bits for any size $\bitsPerKey$. If $\bitsPerKey$ is
a small constant, our space complexity becomes $\OhOf{N} +
\ohOf{N}$, which represents a savings of $\lg{N}$ bits compared to
the $\OhOf{N \lg{N}}$ space of of Agarwal~\etal. In the worst case
when $\bitsPerKey = \ThetaOf{\lg{N}}$, our space complexity is 
$\ThetaOf{N \lg{N}}$, which is asymptotically equivalent to that of
Agarwal~\etal. However, even in this case, our structure can save a
significant amount of space due to the fact that we store the actual
graph structure with $\OhOf{N}$ bits (the $N\bitsPerKey$ and $\ohOf{N\bitsPerKey}$
terms in our space requirements are related directly to space required
to store keys), compared to $\OhOf{N \lg{N}}$ bits in their
representation.

\subsection{An Alternate Blocking Scheme}\label{sec:alt_block_scheme}

In this section, we describe a blocking scheme based on that
of Baswana and Sen \cite{DBLP:journals/algorithmica/BaswanaS02}, which
allows us to improve the I/O efficiency of our data structure for some
planar graphs. 
As with the blocking strategy of
\cite{DBLP:conf/soda/AgarwalAMVV98} employed previously, this new
strategy recursively identifies a separator on the nodes of the graph
$G$, until $G$ is divided into regions of size $r$ which can fit on a
single block (in this context, region is used generically to refer to
the size of partitions in the graph as in Lemma
\ref{lem:fred_graph_sep}). However, rather than store
$\alpha$-nieghbourhoods of size $\sqrt{r}$ for each boundary vertex,
larger $\alpha$-neighbourhoods are stored for a subset of the boundary
vertices. In order for this scheme to work, it is necessary that
boundary vertices be packed close together, in order that the selected
$\alpha$-nieghbourhoods cover a sufficiently large number of boundary
vertices. This closeness of boundary vertices is not guaranteed using
the technique of \cite{DBLP:conf/soda/AgarwalAMVV98}, but if the
alternate separator algorithm of Miller~\cite{miller_1986} is
incorporated, then region boundary vertices are sufficiently packed.

For an embedded graph $G=(V,E)$ with maximum face size $c$,
the separator result of Miller~\cite{miller_1986} is 
stated in Lemma~\ref{lem:miller_cycle_sep}.

\begin{lemma}[Miller~\cite{miller_1986}]\label{lem:miller_cycle_sep}
If $G$ is an embedded planar graph consisting of $N$ vertices, then
there exists a balanced separator which is a single vertex, or a simple cycle
of size at most $2 \sqrt{2 \left\lfloor \frac{c}{2} \right\rfloor N}$,
where $c$ is the maximum face size.
\end{lemma}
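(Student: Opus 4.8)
The plan is to prove the lemma by reducing to the case of a 2-connected, internally triangulated plane graph and then running a breadth-first-search / fundamental-cycle sweep in the style of Lipton--Tarjan~\cite{lipton_tarjan_pst1979}, but tracking a \emph{simple cycle} rather than a vertex set (a plain vertex separator, as one would get directly from the planar separator theorem, is not what we need here). First I would dispose of the connectivity hypothesis: if $G$ is not 2-connected, apply a weighted-centroid argument to its block--cut tree. This yields either a cut vertex whose removal leaves every component with at most $2N/3$ of the weight --- which is then the claimed (single-vertex) separator --- or a single block $B$ holding more than $2N/3$ of the weight, into which we recurse after concentrating the weight of $G\setminus B$ at the cut vertex through which $B$ attaches. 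So it suffices to treat a 2-connected plane graph.

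Next I would reduce to a triangulation. For each face of size $s>3$, insert one new vertex in its interior joined to all $s$ of its vertices; keep the original weights $1/N$ and give each inserted vertex weight $0$. Call the resulting triangulated plane graph $G'$; it has $O(N)$ vertices. Run BFS in $G'$ from a carefully chosen root and let $T$ be the BFS tree with levels $L_0,L_1,\dots$. Following Miller~\cite{miller_1986}, I would exploit that each non-tree edge $e$ closes exactly one simple cycle $C_e$ with $T$ --- the edge together with the two tree paths from its endpoints up to their lowest common ancestor --- that $|C_e|\le 2\,\mathrm{depth}(T)+1$, and that $C_e$, being a simple cycle of a plane graph, partitions the faces of $G'$ into those strictly inside and those strictly outside. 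One can enumerate the non-tree edges in an order in which consecutive cycles bound regions differing by a single triangle, so the enclosed weight changes by at most the weight of one triangle; since the extreme cycles enclose weight near $0$ and near $1$, some $C_e$ encloses weight in $[1/3,2/3]$, i.e.\ is a balanced cycle separator of $G'$.

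Then I would bound the length and translate back to $G$. Either $\mathrm{depth}(T)$ is small relative to $\sqrt{\lfloor c/2\rfloor N}$, in which case the balanced fundamental cycle is already short, or the BFS has many levels, and an averaging argument over a window of levels around the weighted-median level produces a single level $L_i$ of few vertices that is itself a balanced vertex separator, from which the ``single vertex'' alternative of the statement (or a short cycle through $L_i$) is extracted. To go from a balanced separating cycle of $G'$ back to $G$: wherever the cycle passes through an inserted vertex $f$ of an original face of size at most $c$, entering at $u$ and leaving at $v$, replace the length-$2$ detour $u\,f\,v$ by the shorter of the two boundary walks of that face between $u$ and $v$, which has at most $\lfloor c/2\rfloor$ edges; with care to avoid self-intersections this produces a simple separating cycle of $G$, and the accounting --- done with the inserted vertices weighted $0$ and with the BFS root and fundamental cycle chosen so that each traversed face contributes its half-perimeter only once --- yields the stated bound $2\sqrt{2\lfloor c/2\rfloor N}$.

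The hard part will be exactly this last piece of bookkeeping: coupling the \emph{balance} of the chosen fundamental cycle with a bound on its \emph{length counted in original edges} that is $2\sqrt{2\lfloor c/2\rfloor N}$, rather than the lossy $\lfloor c/2\rfloor\cdot 2\sqrt{2\cdot O(N)}$ that a naive ``triangulate, then reroute everything'' argument gives. Making this tight requires either an a-priori bound on the BFS depth (so that a deep tree is forced into the light-level case) or a weighting of the radial/face structure that charges each traversed face its half-perimeter only once, together with the care needed to keep the rerouted curve a simple cycle. This is the technical heart of Miller's construction and is where essentially all of the effort would go.
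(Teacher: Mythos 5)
The paper states this lemma as a citation of Miller~\cite{miller_1986} and provides no proof of its own, so there is no internal argument to compare against; your sketch has to stand on its own. At that level it captures the broad outline of Miller's construction (block--cut reduction, triangulation via inserted face vertices, a BFS fundamental-cycle sweep, rerouting through face boundaries), but it leaves two genuine holes. You flag the first yourself: the coupling between the \emph{balance} of the chosen fundamental cycle and its \emph{length counted in original edges}, which is exactly where the constant $2\sqrt{2\lfloor c/2\rfloor N}$ comes from and which the naive ``triangulate, find a balanced fundamental cycle, then reroute'' argument does not supply. Deferring that step is not deferring a detail; it is deferring the theorem, since without it the sketch only yields the lossy $\lfloor c/2\rfloor\cdot O(\sqrt{N})$ bound you yourself reject.

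The second hole is a concrete error in your deep-BFS case. You claim that a light level $L_i$ near the weighted median is ``itself a balanced vertex separator, from which the `single vertex' alternative of the statement (or a short cycle through $L_i$) is extracted.'' A sparse level is a small vertex \emph{set}, not a single vertex, and it is generally not a cycle; the single-vertex alternative in the statement corresponds to your block--cut (cut-vertex) case, not to a light BFS level. Having already reduced to the 2-connected case, the theorem there promises a simple \emph{cycle}, and a sparse level does not deliver one. Miller closes this by contracting the vertices at levels $\le i$ (together with the interior region) to a single super-vertex, which preserves planarity and triangulation and makes the BFS tree shallow, so the fundamental-cycle argument applies in the contracted graph; the resulting cycle is then pulled back to $G$ with the same face-rerouting accounting as before. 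Without this device, or an equivalent one, your outline cannot produce the promised simple cycle when the BFS tree is deep, and so does not prove the statement as given.
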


Baswana and Sen~\cite{DBLP:journals/algorithmica/BaswanaS02}
construct a graph permitting {I/O}-efficient traversal by recursively
applying this separator.
If the separator is a single vertex $v$, then an $\alpha$-neighbourhood
is constructed for $v$ with $\alpha = B$. 
If the separator is a cycle, then the following rule is applied.
Let $r^{-}(k)$ be the minimum depth of a
breadth-first search tree of size $k$ built on any vertex $v \in
V$.
From this cycle every $\frac{r^{-}(s)}{2}$\textsuperscript{th}
vertex is selected, and an $\alpha$-neighbourhood is constructed 
by performing a breadth-first search from each of the selected 
vertices.
In this case, we let $\alpha=s$, where $s$ is
a number in the range $(\sqrt{B},B)$, a precise value that will be
defined shortly.
Vertices on the cycle which are not selected are associated with the
$\alpha$-neighbourhood constructed for the nearest selected vertex $v$.

Let $S(N)$ be the total space required to block 
the graph in this fashion.
This value includes both the space for blocks storing the
regions, and for those storing $\alpha$-neighbourhoods.
The value for $S(N)$ following recursive application of the 
separator defined in Lemma~\ref{lem:miller_cycle_sep} is
given by

\begin{equation}
  S(N) = c_1N + c_2\frac{\sqrt{c}N}{\sqrt{B}r^{-}(s)} s
\end{equation}

where $c_1$ and $c_2$ are constants independent of $s$.
In order to maximize $s$ while maintaining $S(N) = \OhOf{N}$,
select for $s$ the largest value $z \le B$ such that 
$z \le r^{-}(z)\sqrt{\frac{B}{c}}$.
This is achieved by choosing 
$s = \textbf{min}\left(r^{-}(s)\sqrt{\frac{B}{c}}, B \right)$.

This construction is summarized in the following lemma.

\begin{lemma}[\cite{DBLP:journals/algorithmica/BaswanaS02}]
  A planar graph $G$ of size $N$ with maximum face size $c$ 
  can be stored in
  $\BigOh{ \frac{N}{B} }$ blocks so that a path of length $K$ can
  be traversed using $\BigOh{ \frac{K}{ r^{-}(s) } }$ I/O operations
  where $s = \textbf{min}\left( r^{-}(s)\sqrt{\frac{B}{c}}, B \right)$.
\end{lemma}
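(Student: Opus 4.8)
The construction of the blocked graph is already given in full above; proving the lemma amounts to verifying its two assertions, namely that the representation occupies $\BigOh{N/B}$ blocks and that a path of length $K$ is traversable with $\BigOh{K/r^{-}(s)}$ I/Os. Both arguments follow the template used for Lemma~\ref{lem:agarwal_EM_graph_blocking}, with Miller's cycle separator (Lemma~\ref{lem:miller_cycle_sep}) replacing the Lipton--Tarjan separator and with the larger, more sparsely placed $\alpha$-neighbourhoods of the present construction in place of the $\sqrt{B}$-neighbourhoods. First I would record that the value $s = \textbf{min}(r^{-}(s)\sqrt{B/c}, B)$ is well defined: this implicit, fixed-point-style definition --- and its equivalence to taking the largest integer $z \le B$ with $z \le r^{-}(z)\sqrt{B/c}$ --- rests on the monotonicity of $r^{-}$ and is due to~\cite{DBLP:journals/algorithmica/BaswanaS02}.

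The space bound follows by bounding the two terms of the recurrence for $S(N)$ stated above. Recursively applying the separator of Lemma~\ref{lem:miller_cycle_sep} until every region has at most $r = B$ vertices, Lemma~\ref{lem:cycle_sep_size} bounds the total number of region boundary vertices --- equivalently, the total length of all separating cycles over all levels of the recursion --- by $\BigOh{N/\sqrt{B}}$. Hence there are $\BigOh{N/B}$ regions whose sizes sum to $N + \BigOh{N/\sqrt{B}} = \BigOh{N}$, accounting for the $c_1 N$ term. Because only every $(r^{-}(s)/2)$-th vertex of each separating cycle carries an $\alpha$-neighbourhood, the number of stored $\alpha$-neighbourhoods is $\BigOh{N/(\sqrt{B}\,r^{-}(s)) + N/B}$, and each holds $s$ vertices, so the $\alpha$-neighbourhoods occupy $\BigOh{Ns/(\sqrt{B}\,r^{-}(s))} + \BigOh{N}$ space; substituting $s \le r^{-}(s)\sqrt{B/c}$ makes the first term $\BigOh{N/\sqrt{c}} = \BigOh{N}$ for constant $c$. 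Thus $S(N) = \BigOh{N}$, and after packing components into blocks and padding as in Section~\ref{sec:datastructs} the structure fits in $\BigOh{N/B}$ blocks; the auxiliary rank/select bit vectors that locate regions and map each non-selected cycle vertex to the $\alpha$-neighbourhood of its nearest selected vertex add only $\ohOf{N}$ bits by Lemma~\ref{lem:rank_select}.

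For the traversal bound, the key claim is that each loaded $\alpha$-neighbourhood lets the path advance $\OmegaOf{r^{-}(s)}$ edges before another component must be fetched. When the path reaches a boundary vertex $u$, let $v$ be the selected cycle vertex with which $u$ is associated; then $u$ lies within $r^{-}(s)/2$ edges of $v$ along that cycle, and hence in $G$. By the definition of $r^{-}$, every breadth-first tree of size $s$ rooted at $v$ has depth at least $r^{-}(s)$, so the ball of radius $r^{-}(s)-1$ about $v$ is contained in $N_\alpha(v)$, where $\alpha = s$ (in the single-vertex separator case $\alpha = B \ge s$, which only helps). Consequently every terminal vertex of $N_\alpha(v)$ is at distance at least $r^{-}(s)-1$ from $v$, so the subpath from $u$ until the path first exits $N_\alpha(v)$ has length at least $r^{-}(s)/2 - 1 = \OmegaOf{r^{-}(s)}$. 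Between two consecutive $\alpha$-neighbourhood loads the path enters at most one region (one I/O), and locating the relevant region or $\alpha$-neighbourhood costs $\BigOh{1}$ I/Os via the rank/select structures, exactly as in Lemma~\ref{lem:component_ios}. Hence $\BigOh{1}$ I/Os suffice per $\OmegaOf{r^{-}(s)}$ path edges, yielding $\BigOh{K/r^{-}(s)}$ I/Os overall.

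The step I expect to be the main obstacle is the space recurrence: one must argue, level by level of the recursion and in the style of the proof of Lemma~\ref{lem:cycle_sep_size} (equivalently Lemma~1 of Frederickson~\cite{Frederickson87}), that the cost of the $\alpha$-neighbourhoods telescopes to $\BigOh{N}$ rather than picking up a logarithmic factor, and that the implicit choice of $s$ stays consistent throughout; these points are exactly what~\cite{DBLP:journals/algorithmica/BaswanaS02} establish, so the task here is to re-examine them in the present framework, where region and $\alpha$-neighbourhood lookups are resolved through the succinct auxiliary structures rather than explicit pointers. A secondary technical point is the boundary case in which $r^{-}(s)$ is too small for $u$ to be guaranteed inside $N_\alpha(v)$ (for instance when $B$ is a small constant); as usual this is absorbed into the $\BigOh{\cdot}$ notation, or ruled out by the standing assumption $B = \OmegaOf{\lg N}$.
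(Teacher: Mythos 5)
The paper does not actually prove this lemma: it is stated with a citation to Baswana and Sen, and the paragraph preceding it supplies only the construction and the space recurrence $S(N) = c_1N + c_2\frac{\sqrt{c}\,N}{\sqrt{B}\,r^{-}(s)}\,s$, deferring the remaining arguments to the cited paper. Your proposal fills in precisely those deferred pieces, and it does so correctly and along the same lines the paper sketches: the space bound you derive is the paper's recurrence (your version absorbs the explicit $\sqrt{c}$ factor from Miller's separator into the big-$O$, which is harmless since $c$ is a constant, and the substitution $s \le r^{-}(s)\sqrt{B/c}$ is exactly how the paper motivates the choice of $s$), and your I/O argument --- that the ball of radius $r^{-}(s)-1$ around a selected cycle vertex $v$ is wholly contained in $N_{\alpha}(v)$, so an entering boundary vertex is $\OmegaOf{r^{-}(s)}$ graph-distance from every terminal vertex --- is the standard Baswana--Sen analysis, and it is sound (BFS visits strictly closer vertices first, so a size-$s$ BFS frontier reaching depth $r^{-}(s)$ implies the ball of radius $r^{-}(s)-1$ has fewer than $s$ vertices and hence lies entirely inside $N_\alpha(v)$). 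So: correct, and essentially the same approach the paper gestures at without proving. Two small cosmetic points: the spacing of $r^{-}(s)/2$ between selected vertices actually places an unselected cycle vertex within $r^{-}(s)/4$ of its nearest selected vertex (your bound of $r^{-}(s)/2$ is looser but still $\OmegaOf{r^{-}(s)}$), and your $N/B$ term in the count of $\alpha$-neighbourhoods is the right way to account for the single-vertex-separator case where $\alpha=B$.
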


Our succinct graph representation relies on a two-level partitioning of 
the graph.
Before determining a bound on the space for such a succinct representation,
it is useful to bound the total number of boundary vertices 
resulting from the recursive application of Miller's separator; this is 
done in Lemma~\ref{lem:cycle_sep_size}.

\begin{lemma}
  \label{lem:cycle_sep_size}
  For an embedded planar graph $G$ on $N$ nodes, with faces of
  bounded maximum size $c$, recursively partitioned into regions of
  size $r$ using the cycle separator algorithm of Miller
  \cite{miller_1986}, the separator $S$ has $\OhOf{N/\sqrt{r}}$
  vertices.
\end{lemma}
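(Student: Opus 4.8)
The plan is to treat a single application of Miller's cycle separator (Lemma~\ref{lem:miller_cycle_sep}) as the atomic operation and then analyze the recursion exactly in the style of Frederickson's Lemma~1 in \cite{Frederickson87}. On a piece with $n$ vertices, Miller's theorem produces a separator that is either a single vertex or a simple cycle on at most $2\sqrt{2\lfloor c/2\rfloor\,n}$ vertices; since $c$ is bounded this is $O(\sqrt n)$, and removing it splits the piece into two subpieces each having at most $\tfrac{2}{3}n$ vertices. Recursively applying this to every piece whose size still exceeds $r$ produces the partition into $\Theta(N/r)$ regions of size at most $r$, and the total number of boundary vertices $|S|$ is at most the sum, over all internal nodes of the resulting recursion tree, of the separator sizes created there.

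First I would set up the recurrence. Let $\beta(n)$ denote an upper bound on the number of separator vertices generated while recursively partitioning an $n$-vertex piece down to size $r$. Then $\beta(n) = 0$ for $n \le r$, and otherwise $\beta(n) \le c_0\sqrt n + \beta(n_1) + \beta(n_2)$, where $n_1, n_2 \le \tfrac{2}{3}n$ and $n_1 + n_2 \le n + c_0\sqrt n$; the last term records that separator vertices may be counted on both sides of the cut. The $c_0\sqrt n$ slack is harmless because $\sqrt n = o(n)$, so the total vertex count summed over all pieces at any fixed level of the recursion stays $O(N)$. This is the one point that needs a little care, and it is handled by Frederickson's analysis, which I would cite rather than reproduce.

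Next I would bound $\beta(N)$ by charging separator vertices level by level down the recursion tree. At a level where the pieces have characteristic size about $s$ there are $O(N/s)$ of them, each contributing $O(\sqrt s)$ separator vertices, for a total of $O((N/s)\sqrt s) = O(N/\sqrt s)$ at that level. In particular, at the bottom level the pieces have size $\Theta(r)$, of which there are $O(N/r)$, contributing $O((N/r)\sqrt r) = O(N/\sqrt r)$ in total. Since the characteristic sizes $s$ shrink geometrically from $N$ to $r$ as one descends the tree, the per-level totals $O(N/\sqrt s)$ form a geometric series whose largest term is $O(N/\sqrt r)$; summing them gives $\beta(N) = O(N/\sqrt r)$, which is the claimed bound on $|S|$.

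The main obstacle is precisely this level-by-level accounting: one must check that the separators produced by the larger pieces near the root do not dominate, and that the ``$+c_0\sqrt n$'' duplication slack does not cause the per-level vertex totals to grow beyond $O(N)$. Both facts are exactly what Frederickson's Lemma~1 establishes for the recursion driven by a $\sqrt N$-separator, and since Miller's cycle separator enjoys the same $O(\sqrt n)$ size bound and the same constant balance factor $\tfrac{2}{3}$, that argument applies verbatim here; the only difference is that the separator constant $2\sqrt{2\lfloor c/2\rfloor}$ depends on the bounded face size $c$, which is absorbed into the $O(\cdot)$. (Note that this is the same statement, with the same proof, as the version of Lemma~\ref{lem:cycle_sep_size} recalled in the background chapter.)
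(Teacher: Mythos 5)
Your proposal is correct and follows essentially the same route as the paper's proof: establish that Miller's cycle separator has size $O(\sqrt{n})$ and balance factor $2/3$ for bounded face size $c$, set up the recurrence with the $O(\sqrt{n})$ duplication slack on each side of the cut, and then invoke the proof of Lemma~1 in Frederickson~\cite{Frederickson87} to conclude the total is $O(N/\sqrt{r})$. The extra level-by-level geometric-series sketch you give is a reasonable gloss on what Frederickson's lemma establishes, but the argument you actually rely on is the same citation the paper uses.
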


\begin{proof}
  Miller's simple cycle separator on a graph of $N$ nodes has size at
  most $2 \sqrt{ 2 \left\lfloor \frac{c}{2} \right\rfloor N}$, which
  is $\OhOf{\sqrt{N}}$ for constant $c$. At each step in the
  partitioning, $G$ is subdivided into a separator $|S| =
  \OhOf{\sqrt{N}}$, plus subsets $N_1$ and $N_2$, each containing at
  most $\frac{2}{3}N$ nodes plus the separator. Thus if $\frac{1}{3}
  \le \epsilon \le \frac{2}{3}$, we can characterize the size of the
  resulting subsets by $|N_1| = \epsilon N + \OhOf{\sqrt{N}}$ and
  $|N_2| = (1-\epsilon)N + \OhOf{\sqrt{N}}$. Following the proof of
  Lemma 1 in \cite{Frederickson87}, the total separator size require to
  split $G$ into regions of size at most $N$ then becomes
  $\BigOh{\frac{N}{\sqrt{r}}}$.
\end{proof}

We now describe how this technique is applied in order to achieve a succinct graph
encoding. 
Let $S_\alpha$ be the set of separator vertices selected as boundary
vertices.
We distinguish between the region separator vertices, 
$S^{R}_\alpha$, and the sub-region separator vertices $S^{SR}_\alpha$.
Select every $\frac{r^{-}(s)}{2}$\textsuperscript{th} separator vertex 
for which to build an $\alpha$-neighbourhood.
Each boundary vertex, $v$, is assigned to a single $\alpha$-neighbourhood, namely
the $\alpha$-neighbourhood centred nearest $v$ (or possibly at
$v$, if $v$ was an element of $S^{R}_\alpha$ or $S^{SR}_\alpha$). 
The graph is then encoded using the encoding described in 
Section~\ref{sec:datastructs}, with two small additions.
We add two arrays; one for region boundaries, which we denote 
$\regptable$, and the second for subregion boundaries,
which we denote $\subregptable$.
The length of these arrays is equal to the number of region boundary 
vertices and the number of sub-region boundary vertices, respectively.
Each of $\regptable$ (and $\subregptable$) is an array of tuples, where
the first tuple element is an index into $\regnbvec$ ($\subregnbvec$), 
and the second tuple element records the position of the vertex within
its $\alpha$-neighbourhood. 
By adding these structures to our existing succinct encoding, we have the
 following lemma.

\begin{lemma}
\label{lem:ptable_ios}
When the traversal algorithm encounters a terminal or boundary
vertex $v$, the next component containing $v$ in which the traversal
may be resumed can be loaded in $O(1)$ I/O operations.
\end{lemma}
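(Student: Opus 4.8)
The plan is to mirror the proof of Lemma~\ref{lem:component_ios}, inserting one extra level of indirection to handle the fact that, under the scheme of this section, an $\alpha$-neighbourhood is no longer built for every boundary vertex but is shared among all boundary vertices assigned to a common selected separator vertex; the arrays $\regptable$ and $\subregptable$ supply exactly the data needed to resolve this sharing. As in Lemma~\ref{lem:component_ios} I would distinguish three situations: (i) $v$ is a region or subregion boundary vertex reached while traversing a subregion; (ii) $v$ is a terminal vertex of the $\alpha$-neighbourhood currently in memory; and (iii) the component to be loaded is a subregion, because the vertex at which traversal is resumed is interior to some subregion $\subreg[i',j']$.

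For case (i) I would first recover the graph label $\ell$ of $v$ by applying Lemma~\ref{lem:label_ops}(c) followed by Lemma~\ref{lem:label_ops}(d), which costs $O(1)$ I/Os, and then compute the rank of $v$ among the boundary vertices of the appropriate type with a single $\rankop$ query on $\regbdvec$ (respectively $\subregbdvec$), again $O(1)$ I/Os by Lemma~\ref{lem:rank_select}. This rank indexes the corresponding entry of $\regptable$ (respectively $\subregptable$): a single tuple $(k,p)$ whose first component $k$ is the index in $\regnbvec$ (respectively $\subregnbvec$) of the $\alpha$-neighbourhood to which $v$ is assigned, and whose second component $p$ records the position of $v$ within that neighbourhood's vertex ordering, so that the walk can be resumed there. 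Since every stored $\alpha$-neighbourhood is padded to a common encoding length, entry $k$ is a fixed-size slot and is read in $O(1)$ I/Os; this remains true even though the $\alpha$-neighbourhoods used here are larger than those of Section~\ref{sec:datastructs} ($\alpha$ up to $B$ rather than $(\succblksize)^{1/3}$), because an $\alpha$-neighbourhood on at most $B$ vertices, each contributing $O(\bitsPerKey + \lg N) = O(\wsize)$ bits, still fits in $O(B\wsize)$ bits, that is, $O(1)$ disk blocks. If the neighbourhood just loaded is that of a subregion boundary vertex, one further access to $\regminvec$ supplies the minimum graph label of the interior vertices of the enclosing region, from which the region offsets stored in the neighbourhood are expanded to graph labels exactly as in the proof of Lemma~\ref{lem:component_ios}.

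Cases (ii) and (iii) are then handled essentially verbatim as in Lemma~\ref{lem:component_ios}. In case (ii) the encoding of the $\alpha$-neighbourhood currently in memory stores, for its terminal vertex $v$, either $\glbl{v}$ directly (region-boundary neighbourhoods) or $v$'s region label together with the identity of the enclosing region (subregion-boundary neighbourhoods), so Lemma~\ref{lem:label_ops}(d) again yields $\glbl{v}$ in $O(1)$ I/Os; a lookup in $\regbdvec$ and $\subregbdvec$ then routes us into case (i) or case (iii). In case (iii), loading $\subreg[i',j']$ proceeds exactly as in the proof of Lemma~\ref{lem:component_ios}: a constant number of $\rankop$ and $\selop$ queries on the bit vectors that index the subregions within $\ds$ identify the block containing the start of $\subreg[i',j']$, and, guided by the stored block offsets, reading at most two consecutive blocks of $\ds$ retrieves it. I expect the only genuinely new step---and the one requiring the most care---to be verifying that the two-hop lookup ``boundary vertex $\to$ pointer-table entry $\to$ padded $\alpha$-neighbourhood'' really costs only $O(1)$ I/Os: this reduces to checking that $\regptable$ and $\subregptable$ are laid out contiguously so that a rank value directly addresses the desired tuple, and that a padded $\alpha$-neighbourhood still occupies only $O(1)$ disk blocks in the regime $\alpha \le B$. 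Everything else is an unchanged reduction to Lemma~\ref{lem:component_ios} together with the constant-I/O rank/select guarantee of Lemma~\ref{lem:rank_select}.
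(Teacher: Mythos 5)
Your proof is correct and takes essentially the same approach as the paper: reduce to Lemma~\ref{lem:component_ios} and insert one level of indirection through $\regptable$ (respectively $\subregptable$), reading a fixed-size tuple that points to the shared $\alpha$-neighbourhood and records the resume position. Your extra check that a padded $\alpha$-neighbourhood with $\alpha\le B$ still occupies $O(1)$ disk blocks is a useful detail the paper leaves implicit, but it does not change the argument.
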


\begin{proof}
The proof of this lemma is the same as that in 
Lemma~\ref{lem:component_ios}, with the addition that we must
use $\regptable$ and $\subregptable$ to lookup the correct 
$\alpha$-neighbourhood and the position of boundary vertex $v$
within it.

We begin with the case of a region boundary vertex.
In Lemma~\ref{lem:component_ios} it was shown that for a region
boundary vertex, we can locate its graph label in $\OhOf{1}$ I/O
operations.
The graph label serves as a direct index into $\regnbvec$.
In the current case, we let the graph label of a region boundary
vertex serve as an index into $\regptable$.
The first element in the corresponding tuple is a pointer to the
$\regnbvec$, and the second element of the tuple indicates from where
to resume traversal.
Using $\regptable$ incurs at most a single additional I/O.
Thus, lookup for region boundary vertices is $\OhOf{1}$.

For subregion boundary vertices, we apply the same indirection
to locate the proper $\alpha$-neighbourhood.
The result of the lookup $\rankop[1]{\bdvec, \ell_G}$, rather
than serving as a direct index into $\subregnbvec$, is used 
as an index into $\subregptable$.
The pointer stored in the first element of the returned
tuple can then be used to locate the correct $\alpha$-neighbourhood
in $\subregnbvec$. 
Again, this extra array lookup results in only a single 
additional I/O.
\end{proof}

\begin{lemma}\label{lem:cycle_sep_space}
  The representation described above of a bounded-degree planar graph $G$, 
  with $n$ vertices, each with an associated $\bitsPerKey$-bit key, uses 
  $n\bitsPerKey + \OhOf{n} + \ohOf{n\bitsPerKey}$ bits.
\end{lemma}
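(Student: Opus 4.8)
The plan is to mirror the proof of Lemma~\ref{lem:main_space} almost line for line, changing only the places where the partitioning enters. As there, I would first fix the block parameter $\succblksize$ from Equation~(\ref{eqn:block_size}); this is unchanged, since the per‑subregion encoding (the size field $\numv[i,j]$, the succinct graph‑structure encoding, the boundary bit vector, and the key array) is exactly the one from Section~\ref{sec:datastructs}, and I would keep the two‑level sizing $\subregsz := \succblksize$ and $\regsz := \succblksize\lg^3 n$ so every region and subregion fits in a block. The only structural difference is that $G$ is recursively cut with Miller's cycle separator (Lemma~\ref{lem:miller_cycle_sep}) instead of the Lipton–Tarjan separator behind Lemma~\ref{lem:fred_graph_sep}. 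The point is that Lemma~\ref{lem:cycle_sep_size} gives the \emph{same} asymptotic boundary‑vertex bound: recursively partitioning down to pieces of size $r$ leaves only $\OhOf{n/\sqrt{r}}$ boundary vertices, so the number of region boundary vertices is $\OhOf{n/\sqrt{\regsz}}$ and the number of subregion boundary vertices (counted with multiplicity) is $\OhOf{n/\sqrt{\subregsz}}$ — the exact counts used in Lemma~\ref{lem:main_space}. Hence the space analyses for the array $\ds$ (subregion encodings plus block offsets), the bit vectors $\fst$ and $\diff$, the arrays $\regminvec$ and $\subregminvec$, and the label‑conversion structures of Lemma~\ref{lem:label_ops} all carry over verbatim and contribute $n\bitsPerKey + \OhOf{n} + \ohOf{n\bitsPerKey}$ bits.

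The part that genuinely needs rework is the accounting for the $\alpha$‑neighbourhoods, since now there is only one neighbourhood per $\Theta(r^{-}(s))$ selected separator vertices but each may contain up to $s$ vertices, where $s = \min(r^{-}(s)\sqrt{B/c},\,B)$, rather than $\succblksize^{1/3}$. I would set up the bound as in Lemma~\ref{lem:main_space}: the number of region‑boundary neighbourhoods in $\regnbvec$ is $\OhOf{n/(\sqrt{\regsz}\,r^{-}(s))}$, each encoded with $\OhOf{1}$ structural bits, $\bitsPerKey$ key bits, and $\lg n$ label bits per vertex, so $\regnbvec$ uses $\OhOf{\frac{n\,s}{\sqrt{\regsz}\,r^{-}(s)}(\bitsPerKey + \lg n)}$ bits, and symmetrically $\subregnbvec$ uses $\OhOf{\frac{n\,s}{\sqrt{\subregsz}\,r^{-}(s)}(\bitsPerKey + \lg\regsz)}$ bits, since inside a region the stored labels are region offsets or region labels of $\lg\regsz = \OhOf{\lg n}$ bits. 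The decisive fact is the same identity that makes the non‑succinct Baswana–Sen scheme use $\OhOf{n}$ blocks, namely $s/r^{-}(s) = \OhOf{\sqrt{B}}$ (immediate from both cases of the definition of $s$); feeding this in, together with $\regsz = \OmegaOf{B\lg^3 n}$, $\subregsz = \succblksize = \OmegaOf{\lg n}$, and $\bitsPerKey = \OhOf{\lg n}$, should collapse both bounds to $\OhOf{n} + \ohOf{n\bitsPerKey}$, matching the behaviour of the $\alpha$‑neighbourhoods in Lemma~\ref{lem:main_space}. The bit vectors $\regbdvec$ and $\subregbdvec$ are unchanged and, by Lemma~\ref{lem:rank_select}(b), still use $\ohOf{n}$ bits. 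This step — verifying that enlarging the neighbourhoods to size up to $B$ does not blow up the total, the crux being that the sparsification factor $r^{-}(s)$ exactly cancels the growth of $s$ relative to $\sqrt{\regsz}$ and $\sqrt{\subregsz}$ — is the main obstacle.

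Finally I would bound the two new pointer tables $\regptable$ and $\subregptable$. Each has one entry per region (resp.\ subregion) boundary vertex, i.e.\ $\OhOf{n/\sqrt{\regsz}}$ (resp.\ $\OhOf{n/\sqrt{\succblksize}}$) entries, and each entry stores an index into $\regnbvec$ (resp.\ $\subregnbvec$) together with an intra‑neighbourhood position, both of size $\OhOf{\lg n}$ bits. Thus $\regptable$ uses $\OhOf{(n/\sqrt{\regsz})\lg n}$ bits, which is $\ohOf{n}$ because $\regsz = \OmegaOf{\lg^4 n}$, and $\subregptable$ uses $\OhOf{(n/\sqrt{\succblksize})\lg n}$ bits, which after substituting $\succblksize = \ThetaOf{B\lg n/(\bitsPerKey+1)}$ and $B = \OmegaOf{\lg n}$ is $\OhOf{n\sqrt{\bitsPerKey+1}} = \OhOf{n} + \ohOf{n\bitsPerKey}$. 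Summing the three groups of contributions — subregion machinery, $\alpha$‑neighbourhoods, and pointer tables — then yields the claimed bound $n\bitsPerKey + \OhOf{n} + \ohOf{n\bitsPerKey}$.
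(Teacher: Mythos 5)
Your high-level plan — carry over the subregion machinery, label-conversion structures, and boundary-vertex counts from Lemma~\ref{lem:cycle_sep_size}, re-account for the $\alpha$-neighbourhoods, and add the two pointer tables — matches the structure of the paper's proof, and your bounds for $\regptable$, $\subregptable$, and $\regnbvec$ are sound. But there is a genuine gap in the step you yourself flag as the ``main obstacle.'' You take the neighbourhood size to be $s=\min\left(r^{-}(s)\sqrt{B/c},\,B\right)$, i.e.\ Baswana--Sen's original parameter, and rely on the identity $s/r^{-}(s)=\OhOf{\sqrt{B}}$. The paper instead rescales the parameter to $s=\min\left(r^{-}(s)\succblksize^{1/3}/\sqrt{c},\,\succblksize\right)$, so that $s/r^{-}(s)=\OhOf{\succblksize^{1/3}}$, matching the choice $\alpha=\succblksize^{1/3}$ used for the $\alpha$-neighbourhoods throughout Section~\ref{sec:datastructs}. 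This is not a cosmetic difference: with your $s$, the subregion neighbourhood total becomes
\begin{equation*}
\OhOfV{\frac{n\,s}{\sqrt{\subregsz}\,r^{-}(s)}(\bitsPerKey+\lg\regsz)}
=\OhOfV{n\sqrt{\tfrac{\bitsPerKey+1}{\lg n}}\,(\bitsPerKey+\lg\regsz)},
\end{equation*}
using $\sqrt{B}/\sqrt{\succblksize}=\ThetaOf{\sqrt{(\bitsPerKey+1)/\lg n}}$. Now take $\bitsPerKey=\OhOf{1}$ and $B=n^{\OmegaOf{1}}$, so $\lg\regsz=\ThetaOf{\lg n}$; the expression is $\ThetaOf{n\sqrt{\lg n}}=\omega(n)$, which exceeds the claimed $n\bitsPerKey+\OhOf{n}+\ohOf{n\bitsPerKey}=\OhOf{n}$. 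The sparsification factor $r^{-}(s)$ does \emph{not} exactly cancel the growth of $s$ against $\sqrt{\subregsz}$; it leaves a residual $\sqrt{B/\succblksize}$ that is too large to absorb the $\lg\regsz$ label overhead you must now store with every neighbourhood vertex.

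With the paper's choice, the same expression becomes $\OhOfV{\frac{n}{\succblksize^{1/6}}(\bitsPerKey+\lg\regsz)}$, and since $\lg\regsz=\lg\succblksize+3\lg\lg n$ while $\succblksize=\OmegaOf{\lg n}$, the ratio $\lg\regsz/\succblksize^{1/6}$ is bounded by a constant and the $\bitsPerKey$ term picks up an $o(1)$ factor, giving $\OhOf{n}+\ohOf{n\bitsPerKey}$ as required. So the fix is to define $s$ with $\succblksize^{1/3}$ (and cap $\succblksize$) in place of $\sqrt{B}$ (and cap $B$); once you do, your computation of the three groups of contributions goes through.
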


\begin{proof}
By Lemma \ref{lem:cycle_sep_size}, our two-level partitioning results
in $\OhOf{N/\sqrt{\succblksize \lg^{3}N}}$ region boundary vertices, and
$\OhOf{N/\sqrt{\succblksize}}$ subregion boundary vertices. From the separators
we select every $\frac{r^{-}(s)}{2}$'th vertex to add to $S_{\alpha}$,
so that we have $|S^{R}_{\alpha}| = \BigOh{ \frac{N}{\sqrt{\succblksize \lg^{3}N}
    r^{-}(s)} }$ and $|S^{SR}_{\alpha}| = \BigOh{\frac{N}{\sqrt{\succblksize}
    r^{-}(s)} }$.

We now demonstrate that the succinct encoding of the data structures
required to represent this partitioning requires the same amount of
storage as our previous result. 
We now store fewer, but larger,
$\alpha$-neighbourhoods, and we must store the tables $\subregptable$ and
$\regptable$.
For $\subregptable$, the first pointer in the tuple points to a block 
associated with an $\alpha$-neighbourhood in
$S^{SR}_{\alpha}$, and therefore requires 
$\lg{ \left( \BigOh{\frac{N}{\sqrt{\succblksize} r^{-}(s)}} \right)}$ 
bits, while the
vertex pointer requires $\lg{\succblksize}$ bits. The total space requirement in
bits for $\subregptable$ is thus:

\begin{equation}
  \OhOf{N/\sqrt{\succblksize}} \cdot \left( \lg{\left( \OhOf{\frac{N}{\sqrt{\succblksize}
            r^{-}(s)}} \right)} + \lg{\succblksize}  \right) = \ohOf{N}
\end{equation}

For $\regptable$, the respective pointer sizes are
$\lg{\left( \BigOh{\frac{N}{\sqrt{\succblksize \lg^3{N} } r^{-}(s)}} \right)}$
and $\lg{\succblksize}$ bits and, similarly, the total space is $\ohOf{N}$
bits.

Finally, we must account for the space required for the
$\alpha$-neighbourhoods. Before doing so, we must determine the actual
size of the $\alpha$-neighbourhoods, which we will denote $s$. Let $s
= \textbf{min}\left(r^{-}(s)\frac{\succblksize^{1/3}}{\sqrt{c} }, \succblksize \right)$; we
then have the following space complexities for region boundary
vertices (the term $k$ represents the constant from the $O()$
notation):

\begin{equation}
  \label{eqn:space_region_bound_scs}
  \frac{k\sqrt{c}N}{\sqrt{\succblksize \lg^3{N}} r^{-}(s)} \cdot
  r^{-}(s)\frac{\succblksize^{1/3}}{\sqrt{c} } \cdot ( \lg{N} + \bitsPerKey + c ) = o(N)
\end{equation}


\begin{eqnarray}
\frac{k\sqrt{c}N}{\sqrt{\succblksize} r^{-}(s)} \cdot r^{-}(s)\frac{\succblksize^{1/3}}{\sqrt{c} } \cdot ( \lg{(\succblksize \lg^3{N})} + \bitsPerKey + c ) &=& \frac{kN}{\succblksize^{\frac{1}{6}}} (\lg{\succblksize} + 3 \lg{\lg{N}}) + \frac{kN}{\succblksize^{\frac{1}{6}}}(\bitsPerKey+c) \nonumber \\
&=& \OhOf{N} + \ohOf{N\bitsPerKey} \label{eqn:space_sub_reg_bound}
\end{eqnarray}

\end{proof}

These results match the space requirements obtained for region and
subregion $\alpha$-neighbourhoods using our original partitioning
scheme (see Eqs. \ref{eqn:space_region_bound_scs} and
\ref{eqn:space_sub_reg_bound}), and lead to the following theorem.

\begin{theorem}
  Given a planar graph $G$ of bounded degree and with face size
  bounded by $c$, where each vertex stores a key of $\bitsPerKey$ bits, there is
  a data structure representing $G$ in $N\bitsPerKey +O(N) + o(N\bitsPerKey)$ bits that
  permits traversal of a path of length $K$ with $O \left( \frac{K}{
      r^{-}(s) } \right)$ I/O operations, where $s =
  \textbf{min}\left(r^{-}(s)\frac{\succblksize^{1/3}}{\sqrt{c} }, \succblksize \right)$.
\end{theorem}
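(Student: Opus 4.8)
The plan is to combine the three lemmas just established --- Lemma~\ref{lem:ptable_ios} (component loading in $O(1)$ I/Os under the new blocking), Lemma~\ref{lem:cycle_sep_space} (the $N\bitsPerKey + O(N) + o(N\bitsPerKey)$ space bound), and Lemma~\ref{lem:report_int_labels} (reporting interior graph labels without extra I/Os, which carries over verbatim since the subregion/$\alpha$-neighbourhood encodings are unchanged) --- in exactly the same way Theorem~\ref{thm:planar_graph} was assembled from Lemmas~\ref{lem:main_space} and~\ref{lem:main_IO_bound}. The only genuinely new ingredient needed is an analogue of Lemma~\ref{lem:main_IO_bound}: a proof that the navigation scheme traverses a path of length $K$ using $O(K / r^{-}(s))$ I/Os, where $s = \textbf{min}(r^{-}(s)\frac{\succblksize^{1/3}}{\sqrt{c}}, \succblksize)$.

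For that I/O bound, I would argue as follows. Each $\alpha$-neighbourhood stored for a selected separator vertex $v$ has size $\alpha = s$, so a breadth-first search from $v$ reaches radius at least $r^{-}(s)$ inside it (by the definition of $r^{-}$ as the \emph{minimum} depth of a size-$s$ BFS tree, any BFS tree of size $s$ has depth $\ge r^{-}(s)$). A non-selected separator vertex lies within $\frac{r^{-}(s)}{2}$ hops of its nearest selected vertex, so starting from any boundary vertex we can still walk $\Theta(r^{-}(s))$ steps before leaving the loaded $\alpha$-neighbourhood. The standard charging argument of Agarwal~\etal\ then applies: starting inside a subregion we walk until hitting a boundary vertex (one wasted step in the worst case), load its $\alpha$-neighbourhood, and are guaranteed $\Omega(r^{-}(s))$ further steps; as in Lemma~\ref{lem:main_IO_bound} the one subtle case is a terminal vertex of a subregion-boundary vertex's $\alpha$-neighbourhood being reached early --- but this can only happen at a region boundary vertex, whose own (full, region-level) $\alpha$-neighbourhood then again yields $\Omega(r^{-}(s))$ progress. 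Hence every $O(1)$ components visited buys $\Omega(r^{-}(s))$ path steps, giving $O(K / r^{-}(s))$ components, and by Lemmas~\ref{lem:ptable_ios} and~\ref{lem:report_int_labels} each contributes $O(1)$ I/Os.

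Finally I would assemble the theorem: space is $N\bitsPerKey + O(N) + o(N\bitsPerKey)$ by Lemma~\ref{lem:cycle_sep_space}, and the I/O cost is $O(K / r^{-}(s))$ by the argument above, with the value of $s$ being the one that keeps $S(N) = O(N)$, namely $s = \textbf{min}(r^{-}(s)\frac{\succblksize^{1/3}}{\sqrt{c}}, \succblksize)$ (the $\succblksize^{1/3}$ replacing the $\sqrt{B}$ of the Baswana--Sen analysis because our succinctly encoded block holds $\succblksize$ vertices and the region/subregion nesting contributes the $\lg^3 N$ factor, exactly as in Lemma~\ref{lem:cycle_sep_space}). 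One can state the result directly in terms of $\succblksize$ rather than substituting $B$, since unlike Theorem~\ref{thm:planar_graph} the bound here naturally involves $r^{-}(\cdot)$ evaluated at an argument depending on $\succblksize$.

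The main obstacle I anticipate is bookkeeping rather than conceptual: pinning down precisely which value of $s$ simultaneously (i) keeps the $\alpha$-neighbourhood storage $o(N)$ and $o(N\bitsPerKey)$ at \emph{both} the region and subregion levels --- the two nesting levels have different boundary-vertex counts ($O(N/\sqrt{\succblksize\lg^3 N})$ versus $O(N/\sqrt{\succblksize})$) and hence different slack --- and (ii) is self-consistent, since $s$ appears on both sides of $s = \textbf{min}(r^{-}(s)\frac{\succblksize^{1/3}}{\sqrt{c}}, \succblksize)$ through $r^{-}(s)$, so I must argue (monotonicity of $r^{-}$) that the largest such $s$ is well-defined. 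The self-referential definition is handled exactly as in Baswana--Sen/Lemma~\ref{lem:Baswana_path_traversal}, so I would simply cite that reasoning.
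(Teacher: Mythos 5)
Your proposal is correct and follows the paper's own strategy: the paper's proof is a brief one--paragraph assembly that notes the structure and navigation are inherited from Theorem~\ref{thm:planar_graph}, cites Lemma~\ref{lem:ptable_ios} for the $O(1)$ I/O per component-lookup with the pointer tables, and cites Lemma~\ref{lem:cycle_sep_space} for the space bound, implicitly leaning on the Baswana--Sen/Miller charging analysis (Lemma~\ref{lem:Baswana_path_traversal}) for the $O(K/r^{-}(s))$ I/O cost. You spell out the charging argument and the definition of $r^{-}$ in more detail than the paper does, but the decomposition and lemmas invoked are the same.
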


\begin{proof}
The succinct structure and navigation technique is the same as
described in Theorem~\ref{thm:planar_graph}.
The only change is the addition of $\subregptable$ and $\regptable$.
Lemma~\ref{lem:ptable_ios} demonstrates that the new structures do not alter the
bounds on I/Os, while Lemma~\ref{lem:cycle_sep_space} establishes that the 
space complexity is unchanged.
\end{proof}

\section{Triangulations Supporting Efficient Path Traversal}
\label{sec:tins}

In this section, we describe how our data structures may be used to 
represent triangulations in a manner that permits efficient path
traversal in the EM setting.
In Section \ref{ssec:triangulation_rep}, we describe this representation in the 
general, non-EM setting, giving a compact representation for triangulations 
based on succinct representations of planar graphs.
Later, in Section \ref{ssec:compact_tin_external}, we detail how our succinct
and I/O-efficient planar graph representation can be applied to this representation 
to yield a compact representation for 
triangulations permitting I/O-efficient path traversal.

\subsection{Representing Triangulations}
\label{ssec:triangulation_rep}

Let $\pointset$ be a set of points in $\plane$, then the triangulation $\triang$
of $\pointset$ is the planar graph with $\pointset$ vertices and all faces 
(except the outer face) of size $3$.
The set of all edges adjacent to the outer face of the triangulation is the
convex hull of $\pointset$. 
Let $\triang = (\pointset,E,T)$ be the triangulation with vertices $\pointset$, 
edges $E$, and 
triangles (faces) $T$. 
The dual graph of $\triang$ is $\dual{\triang} = (T^*,E^*,\pointset^*)$, it has a node
 $t^* \in T^*$ for each triangle $t \in T$, a face $p^* \in \pointset^*$ for each 
vertex $p \in \pointset$, and an edge $e^* \in E^*$ connecting vertices $t^*_1$ and
 $t^*_2$, if and only if triangles $t_1$ and $t_2$ share an edge in $\triang$.

We further define the \emph{augmented dual graph} 
 $\augdual{\triang} = (\augdual{V} = (\pointset \cup \dual{T}), \augdual{E},
\augdual{F})$.
Figure \ref{fig:aug_tri} shows an example of a triangulation, its dual, 
and its augmented dual graph.
The vertex set of $\augdual{\triang}$ is formed by the union of the vertex set 
$P$ of $\triang$ and the nodes $\dual{T}$ of $\dual{\triang}$. 
We will distinguish between two types of vertices in $\augdual{\triang}$, 
depending on from which set the vertex is drawn. 
The vertices taken from the dual graph, denoted $\augdual{T}$, 
are referred to as the \emph{triangle nodes}. 
The vertices drawn from the set of vertices in the primal, $\pointset$, 
are refereed   
to as the \emph{point vertices} and are denoted $\augdual{\pointset}$.

For the edges of $\augdual{\triang}$, we again distinguish two types
 of edges. 
The \emph{triangle edges} are exactly the edge set in the dual, $E^*$. 
A \emph{point edge} is added between a triangle node and a point vertex
 if the point vertex is one of the corresponding triangle's three vertices in 
 the primal.  
We denote this set $\augdual{E}$, and define it formally as:\\

\begin{centering}
$\augdual{E} = \{\augdual{e} | \left( e^* \in E^* \right)
 \cup \left( \augdual{e}=( \augdual{p} \in \augdual{\pointset}, 
  \augdual{t} \in \augdual{T} ) \text{ and } p 
  \text{ adjacent to } t \text{ in } \triang \right) \} $
\end{centering}
\vspace{12pt}

\noindent where $p$ and $t$ are the vertex and triangle, respectively, corresponding to 
$\augdual{p} \in \augdual{\pointset}$ and $t^{\augdualsym} \in T^{\augdualsym}$. 
$\augdual{\triang}$ also contains a set of faces $F^{\augdualsym}$, but these faces 
do not figure in our discussion.

For purposes of quantifying bit costs, we denote by $\bitsPerPoint = \OhOf{\lg{N}}$ 
the number of bits required to represent a point in our data structures. 
Many applications for triangulations will not require that a key be stored 
with the triangles, therefore we assume that there is no $q$ bit key associated 
with each triangle node. 
We show that the space used by keys in our graph structure is effectively 
the same as the space used by the point set in our triangulation, but that 
if we wish to maintain keys we can do so without significantly increasing 
the space used.

\begin{figure}
\centering
	\includegraphics[width=0.8\textwidth]{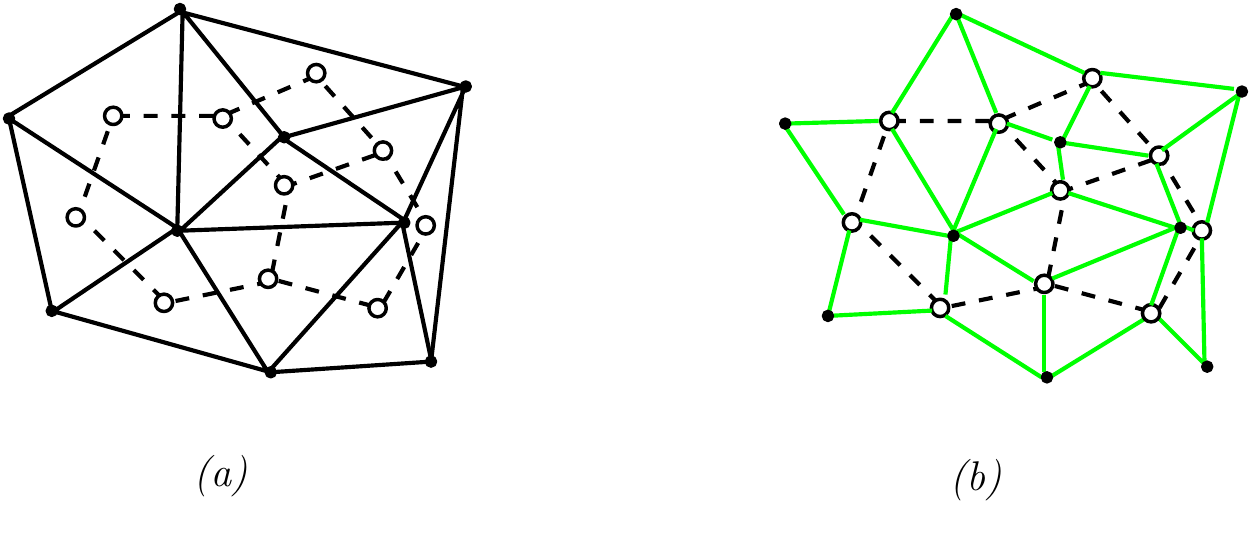}
\caption[Representation of a triangulation as a planar graph]{Representation 
	of a triangulation as a planar graph. 
  (a) The dual graph $\dual{\triang}$ of a triangulation, with vertices as hallow 
circles and edges as dashed lines. 
Edges in the triangulation $\triang$ are shown as solid lines. 
(b) The augmented graph $\augdual{T}$ is shown;
hallow points are triangle vertices, solid points are point 
vertices, black dashed lines are triangle edges, and solid green lines are point 
edges. }\label{fig:aug_tri}
\end{figure}

We begin with the following lemma. 

\begin{lemma}\label{lem:points_for_dual}
Given the dual graph $\dual{\triang}$, of triangulation $\triang$ with 
$N$ triangles, the augmented dual graph $\augdual{\triang}$ has at most $2N+2$ 
vertices.
\end{lemma}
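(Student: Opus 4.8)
The plan is to reduce the statement to a bound on the number of point vertices and then to obtain that bound from Euler's formula. By construction, $\augdual{\triang}$ has vertex set $\augdual{V} = \pointset \cup \dual{T}$, where $\dual{T}$ is the set of dual nodes, one per triangle of $\triang$. Since the point vertices and the triangle nodes are distinct types of vertices, $|\augdual{V}| = |\pointset| + |\dual{T}| = |\pointset| + N$. Hence it suffices to prove $|\pointset| \le N + 2$.

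To do this I would first write $n := |\pointset|$ and $e := |E|$, and observe that $\triang$, being a triangulation of a point set, is a connected plane graph whose bounded faces are exactly the $N$ triangles, so its total number of faces (including the outer face) is $N+1$. Euler's formula $n - e + (N+1) = 2$ then gives $e = n + N - 1$. Next I would count edge--face incidences in two ways: every edge of $\triang$ borders exactly two faces, each of the $N$ triangular faces is bounded by $3$ edges, and the outer face is bounded by the $h$ edges of the convex hull of $\pointset$; therefore $2e = 3N + h$. Combining the two identities yields $3N + h = 2(n + N - 1)$, i.e.\ $N = 2n - h - 2$.

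Finally, since the convex hull is a simple cycle through distinct vertices of $\pointset$, we have $h \le n$. Substituting into $N = 2n - h - 2$ gives $N \ge 2n - n - 2 = n - 2$, hence $n = |\pointset| \le N + 2$, and therefore $|\augdual{V}| = n + N \le 2N + 2$, as claimed. (Incidentally, the bound is tight when all points are in convex position, where $h = n$.)

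The computation itself is routine; the only points needing care are the implicit non-degeneracy assumptions --- that $\triang$ is connected (automatic for a triangulation of a point set), that the outer face is bounded precisely by the convex-hull edges each counted once, so that the identity $2e = 3N + h$ is exact, and the trivial small cases $n \le 2$, for which $n \le N + 2$ holds vacuously. I do not anticipate any genuine obstacle beyond this bookkeeping.
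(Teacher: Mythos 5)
Your proof is correct, but it takes a genuinely different route from the paper. The paper proves $|\pointset|\le N+2$ by induction on the number of triangles: starting from a single triangle with $3 = 1+2$ vertices, it argues that because the dual graph stays connected, each newly added triangle shares an edge (hence two of its three vertices) with an already-present triangle, so at most one new point is contributed per triangle. You instead obtain the bound from Euler's formula together with a double count of edge--face incidences, yielding the exact identity $N = 2|\pointset| - h - 2$ (with $h$ the hull size), from which the inequality follows immediately via $h \le |\pointset|$. The Euler approach is arguably cleaner: it avoids the bookkeeping of choosing an incremental order in which the dual stays connected (which the paper asserts rather than fully justifies), and it gives the exact formula and tightness characterization as free byproducts. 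The paper's induction, for its part, is more local and does not rely on the global planarity/connectivity facts that Euler's formula packages up. Both are sound, and both exploit the same decomposition $|\augdual{V}| = |\pointset| + N$ and the same standing assumption that $\triang$ triangulates the convex hull of $\pointset$.
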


\begin{proof}
The vertex set $\augdual{V}$ includes the nodes $\dual{T}$ for which 
$|\dual{T}| = N$, and the vertex set $\pointset$. 
We prove by induction that $|P| \le N+2$, thus $|\augdual{V}| \le 2N + 2$. 
For the base case, we have a terrain $\triang$ with a single triangle in which case 
$N = 1$, and $|P| = 3  = N + 2$. 
Now assume that $|P| \le N + 2$ holds for all terrains of $N$ triangles. 
Let $\dual{\triang}_N$ be the dual graph of a triangulation $\triang$ with $N$ 
triangles and $|P|$ vertices. 
The dual graph $\dual{\triang}_{N+1}$ is created by adding a single triangle 
to $\triang$. 
This new triangle, $t_1$, has three adjacent points; however, 
since $\dual{\triang}_{N+1}$ is connected, at least one dual edge is added 
connecting a vertex $t^*_2 \in \dual{\triang}_N$ with the new vertex 
$t^*_1 \in \dual{\triang}_{N+1}$. 
In $\triang$, this dual edge represents the fact that $t_1$ and $t_2$ are 
adjacent, and two of the vertices from $P$ adjacent to $t_2$ are already 
in $\augdual{V}$. Therefore, adding a new triangle adds at most one 
additional point to $\augdual{V}$ and $|\augdual{V}_{N+1}| \le (N+1) + 2$.
\end{proof}

We encode $\augdual{\triang}$ using a succinct planar graph data structure.
The various encodings all involve a permutation of the vertices of 
$\augdual{\triang}$.
Let $\ell(v)$, the \emph{augmented graph label} of $v \in \augdual{V}$ be 
the position of a vertex $v$ in this permutation. 
For every point vertex $p \in \augdual{P} \subset \augdual{V}$ in this set, 
we store the point coordinates in an array $\mathcal{P}$ ordered by the 
augmented graph label, $\ell(p)$. 
Finally, we create a bit vector $\pi$ of length $|\augdual{V}|$, where 
$\pi[v] = 0$ if $v$ is a triangle vertex and $\pi[v] = 1$ if $v$ is a 
point vertex. 
To summarize this structure we have the following lemma.

\begin{lemma}\label{lem:terrain_space}
The data structures described above can represent a terrain $\triang$ composed 
of $N$ triangles with $\bitsPerPoint = O(\lg{N})$ bit point coordinates, using 
$N\bitsPerPoint + O(N)$ bits, such that given the label of a triangle, the adjacent
 triangles and points can be reported in $O(1)$ time.
\end{lemma}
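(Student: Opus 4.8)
The plan is to bound the space of each of the three stored components separately, and then to describe the constant-time adjacency query. For the space bound I would appeal to one of the succinct planar-graph encodings surveyed in Section~\ref{sec:background} --- for instance that of Chiang, Lin, and Lu~\cite{DBLP:journals/siamcomp/ChiangLL05} or of Munro and Raman~\cite{DBLP:conf/focs/MunroR97} --- each of which stores an $M$-edge, $N$-vertex planar graph in $O(N+M)$ bits while listing the neighbours of a vertex $v$ in time proportional to its degree $d(v)$. Note first that $\augdual{\triang}$ is simple and planar (a triangle node may be placed at each triangle's centroid, joined to the three corner vertices and, across edge midpoints, to the centroids of the at most three adjacent triangles, with no crossings), so such an encoding applies.

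First I would bound the encoding of $\augdual{\triang}$. By Lemma~\ref{lem:points_for_dual}, $|\augdual{V}| \le 2N+2 = O(N)$, and since $\augdual{\triang}$ is planar its number of edges is $O(|\augdual{V}|) = O(N)$; hence the succinct representation of $\augdual{\triang}$ occupies $O(N)$ bits. Next, the array $\mathcal{P}$ stores coordinates for the point vertices only; by the proof of Lemma~\ref{lem:points_for_dual} there are at most $N+2$ of these, each requiring $\bitsPerPoint$ bits, so $\mathcal{P}$ uses $N\bitsPerPoint + O(\bitsPerPoint) = N\bitsPerPoint + O(\lg N)$ bits. Finally, the bit vector $\pi$ has length $|\augdual{V}| = O(N)$, and storing it with support for $\rankop$ and $\selop$ via Lemma~\ref{lem:rank_select}(a) costs $O(N) + o(N)$ bits. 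Summing the three pieces gives $N\bitsPerPoint + O(N)$ bits, as claimed.

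For the query, given the augmented graph label $\ell(t)$ of a triangle node $t$, I would invoke the neighbour-listing operation of the succinct planar-graph structure on $t$. In $\augdual{\triang}$, a triangle node is incident to at most three triangle edges ($t$ shares at most one of its three triangulation edges with each neighbouring triangle) and to exactly three point edges (one per corner vertex), so $d(t) \le 6$; the listing therefore runs in $O(1)$ time and returns at most six neighbours. For each returned neighbour $v$, a single lookup of $\pi[v]$ decides in $O(1)$ time whether $v$ is an adjacent triangle ($\pi[v]=0$) or an incident point vertex ($\pi[v]=1$); in the latter case the coordinates are read from position $\rankop[1]{\pi,v}$ of $\mathcal{P}$, since $\mathcal{P}$ lists the point vertices in order of augmented graph label. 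Hence both the adjacent triangles and the incident points are reported in $O(1)$ total time.

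The point requiring care is that $\augdual{\triang}$ is \emph{not} a bounded-degree graph: a point vertex of the triangulation may be incident to arbitrarily many triangles, so a generic $O(d(v))$-time neighbour listing is not constant-time at such a vertex. This is not an obstacle here, because adjacency queries are only ever issued on triangle nodes, whose degree is bounded by the constant $6$; the high-degree point vertices never serve as the argument of a query. The only remaining detail is to pick, among the encodings of Section~\ref{sec:background}, one whose neighbour listing is output-sensitive and whose bit complexity is linear in $N$ for planar graphs --- which the cited structures (\cite{chuang_et_al_1998,DBLP:journals/siamcomp/ChiangLL05,DBLP:conf/focs/MunroR97}) all provide.
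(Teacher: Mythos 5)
Your proof is correct and takes essentially the same approach as the paper's: bound the encodings of $\augdual{\triang}$, $\mathcal{P}$, and $\pi$ separately, then report the at-most-six neighbours of a triangle node in $O(1)$ time and classify each via a lookup in $\pi$, using $\rankop[1]{\pi,\cdot}$ to index into $\mathcal{P}$ for point vertices. The only step the paper makes explicit that you elide is the $O(1)$ conversion between a triangle's graph label (its rank among triangle nodes) and its augmented graph label via $\selop[0]{\pi,\cdot}$ and $\rankop[0]{\pi,\cdot}$, while conversely your observation that adjacency queries are never issued at the possibly high-degree point vertices makes explicit something the paper leaves implicit.
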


\begin{proof}
Since the augmented planar graph is simple, we can encode it with 
$2|\augdual{E}| + 2|\augdual{V}| + o(|\augdual{V}|)$ bits 
using the encoding of~\cite{DBLP:journals/siamcomp/ChiangLL05}. 
By Lemma \ref{lem:points_for_dual}, if $|\dual{T}| = N$ then 
$|\augdual{V}| \le 2N+2$, which bounds the number of vertices. 
Each triangle node is connected by an edge to at most three
other triangles, thus there are at most $\frac{3}{2} N$ triangle edges. 
Additionally, there are $3N$ point edges connecting the 
triangle vertices with point vertices, therefore in total there are no 
more than $\frac{9}{2}N$ edges. 
Thus the augmented graph can be encoded using at most $13N + o(N)$ bits. 

In~\cite{DBLP:journals/siamcomp/ChiangLL05} adjacency and degree queries
 can be performed in $O(1)$. 
We must still demonstrate that given a label, we can identify the corresponding 
triangle vertex, and show that for a vertex we can distinguish between point 
vertex and triangle node neighbours. 
We assign to each triangle a unique graph label as follows. 
Consider the set of triangles in $\triang$, the graph label of each 
triangle corresponds to that of its dual vertex $t^* \in T^*$. 
In $\augdual{\triang}$, each of these vertices has an augmented graph label.  
The graph label of dual node $\dual{t} \in \dual{T}$, and therefore the 
corresponding triangle is equal to the rank of the corresponding 
triangle vertex $\augdual{t}$'s augmented graph label among all 
vertices from the set $\augdual{V}$. 

Given the graph label of a triangle node $\augdual{t} \in \augdual{T}$,
 the augmented graph label is calculated as $\ell(t^{\augdualsym}) = 
\selop[0]{\pi, \augdual{t}}$.
Conversely, given $\ell(\augdual{t})$ for a triangle vertex 
$\augdual{t} \in \augdual{T}$, we can report the graph label 
of $\augdual{t}$ by $\rankop[0]{\pi,\ell(\augdual{t})}$. 
In order to report the triangles adjacent to $\augdual{t}$, we check in 
the succinct encoding for $\augdual{\triang}$ to find all neighbours of 
$\augdual{t}$. 
Since $d \le 6$, this operation takes constant time. 
Let $\augdual{v} \in \augdual{V}$ be a neighbour of $\augdual{t}$ in 
$\augdual{\triang}$. 
The value of $\pi[\augdual{v}]$ indicates whether $\augdual{v}$ is a 
triangle node or point vertex. 
We can recover the coordinates of a point vertex $\augdual{v}$ by 
$\mathcal{P}[\rankop[1]{\pi,\augdual{v}}]$.

The array $\mathcal{P}$ stores at most $N+2$ points. 
If each point requires $\bitsPerPoint$ bits, then $\mathcal{P}$ 
requires $N\bitsPerPoint$ bits. 
Finally, by Lemma \ref{lem:rank_select}(b) we can store the bit array 
$\pi$ such that $\rankop$ and $\selop$ can be performed in constant time 
with $N + o(N)$ bits.
\end{proof}

\subsection{Compact External Memory Representation}
\label{ssec:compact_tin_external}

In this section, we extend our data structures for I/O-efficient traversal in 
bounded-degree planar graphs (Section \ref{sec:graph_rep}) to triangulations. 
We thereby obtain a triangulation that is compact, but which also 
permits efficient traversal. 
We represent $\triang$ by its dual graph $\dual{\triang}$. 
Since the dual graph of the triangulation (and subsequently each component) is a 
bounded-degree planar graph, it can be represented with the data structures 
described in Section \ref{sec:graph_rep}. 
Each component is a subgraph of $\dual{\triang}$, for which we generate the 
augmented subgraph, as described in Section~\ref{ssec:triangulation_rep}.

\begin{lemma}\label{lem:terrain_components}
The space requirement, in bits, to store a component ($\alpha$-neighbourhood
 or sub-region) representing a triangulation is within a constant factor of the 
space required to store the triangulation's dual graph. 
\end{lemma}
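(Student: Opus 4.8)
The plan is to compare, component by component, the bits used by the augmented sub-triangulation we actually store against the bits used by the underlying piece of $\dual{\triang}$ when it is stored by the representation of Section~\ref{sec:graph_rep}. Fix a component and let $m$ be the number of triangle nodes it contains; the corresponding subgraph of $\dual{\triang}$ is a bounded-degree planar graph on $m$ vertices, and when each of its vertices carries a $\bitsPerPoint$-bit datum (so that, summed over all components, its size matches the target bound $N\bitsPerPoint+O(N)+o(N\bitsPerPoint)$ of Theorem~\ref{thm:planar_graph} with key size $\bitsPerPoint$) it occupies $\Theta(m)+\Theta(m\bitsPerPoint)$ bits, together with $\Theta(m\lg N)$ further bits of per-vertex labels when the component is an $\alpha$-neighbourhood. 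So the task reduces to bounding every piece of the augmented-triangulation encoding of the component by $O(m)+O(m\bitsPerPoint)+O(m\lg N)$ bits.

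First I would bound the combinatorial size of the augmented subgraph. Each triangle is incident to exactly three point vertices and to at most three other triangles, so a component with $m$ triangle nodes has at most $m+3m=4m$ vertices (a local, weaker analogue of Lemma~\ref{lem:points_for_dual}; connectedness of the component is not needed, the trivial count suffices) and at most $\frac{3}{2}m$ triangle edges together with $3m$ point edges, hence $O(m)$ edges. Being simple and planar, this augmented subgraph is therefore encoded in $2|\augdual{E}|+2|\augdual{V}|+o(|\augdual{V}|)=O(m)$ bits with the encoding of~\cite{DBLP:journals/siamcomp/ChiangLL05}, exactly as in the proof of Lemma~\ref{lem:terrain_space}. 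The bit vector $\pi$ that marks point vertices has length $O(m)$ and costs $O(m)$ bits (or $o(m)$ by Lemma~\ref{lem:rank_select}), and the component bookkeeping inherited from Section~\ref{sec:graph_rep}---the vertex-count field, the boundary-marking bit vector, and, for an $\alpha$-neighbourhood, the terminal bit vector, the position field and the per-vertex label array---contributes $O(m)+O(\lg N)$ bits, plus $O(m\lg N)$ bits of labels in the $\alpha$-neighbourhood case; the additive $O(\lg N)$ is absorbed since $m=\Omega(B)=\Omega(\lg N)$.

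Second I would handle the point coordinates. The component's point array holds at most $3m$ coordinates of $\bitsPerPoint$ bits each, i.e.\ $O(m\bitsPerPoint)$ bits; since a component never has more than $3m$ (and, for a connected sub-triangulation, at most $m+2$) point vertices for its $m$ triangles, this is within a constant factor of the $\Theta(m\bitsPerPoint)$ bits spent attaching a $\bitsPerPoint$-bit datum to each of the $m$ dual vertices---the precise sense in which the point set ``costs the same'' as keys. Summing the three contributions, the component occupies $O(m)+O(m\bitsPerPoint)+O(m\lg N)$ bits, which is within a constant factor of the $\Theta(m)+\Theta(m\bitsPerPoint)+\Theta(m\lg N)$ bits used by the same component in the dual-graph representation of Section~\ref{sec:graph_rep}, proving the lemma.

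The main obstacle is keeping the two regimes of $\bitsPerPoint$ straight: the $\Theta(\lg N)$-bit per-vertex labels that live inside $\alpha$-neighbourhoods are unrelated to $\bitsPerPoint$, so one must verify---as above---that they are already present, with the same asymptotic cost, in the dual-graph encoding, and hence do not spoil the constant-factor comparison even when $\bitsPerPoint$ is a small constant. Apart from that, the argument is just the vertex/edge count of the augmented subgraph together with a re-use of the encoding estimate from the proof of Lemma~\ref{lem:terrain_space}.
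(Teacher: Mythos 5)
Your proof is correct, and it takes a genuinely simpler route than the paper's. The paper splits into two cases: for an $\alpha$-neighbourhood, which is connected by construction (it is grown by BFS), it invokes Lemma~\ref{lem:points_for_dual} to get the sharp bound of at most $\alpha+2$ point vertices; for a subregion, which may be disconnected, Lemma~\ref{lem:points_for_dual} does not apply directly, so the paper argues that each connected piece of a subregion must contain at least one subregion-boundary vertex, bounds the number of pieces per subregion, and concludes the subregion has fewer than $2\succblksize$ point vertices. You sidestep this entire case distinction with the trivial union bound that $m$ triangle nodes can name at most $3m$ distinct point vertices, which does not care whether the component is connected. For the lemma as stated---a pure ``within a constant factor'' claim---this is all that is needed, and your treatment is uniform and arguably more robust, since it does not lean on the somewhat delicate counting of connected components that the paper uses in the subregion case. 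The trade-off is only in the constant: the paper's $\alpha+2$ (versus your $3\alpha$) is what one would want if a subsequent statement required the leading coefficient on the point storage to be close to one, but the present lemma asks for no such thing. One small point of wording: your remark that the additive $O(\lg N)$ bookkeeping is absorbed ``since $m=\Omega(B)$'' is shaky for $\alpha$-neighbourhoods (where $m=\alpha=\Theta(\succblksize^{1/3})$ is only $\Omega((\lg N)^{1/3})$), but as you yourself note in the final paragraph, the correct justification is simply that the dual-graph encoding being compared against carries the same $\Theta(\lg N)$ fields and the same $\Theta(m\lg N)$ label array, so these terms cancel in the ratio regardless of the size of $m$.
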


\begin{proof}
We first consider the case of $\alpha$-neighbourhoods. 
To store a triangulation, we require additional space to store: the points in $P$ 
adjacent to the component's triangles; the augmented 
graph, which includes more vertices and edges; and the bit-vector $\pi$. 
By Lemma~\ref{lem:points_for_dual}, the points array is of size at most 
$(\alpha + 2)\bitsPerPoint$ bits. 
Since the $\alpha \bitsPerKey$-bit array of keys is no longer needed, storing the 
points incurs no additional space over that used to store the dual graph 
in our construction in Section \ref{sec:graph_rep}. 
However, if the keys are needed, we require 
$\alpha (\bitsPerPoint + \bitsPerKey)$ bits to 
store both the points and the key values.  
For the graph representation, again by Lemma \ref{lem:points_for_dual}, in 
the augmented graph we at most double the size of the vertex set, and add 
no more than $3 \alpha$ edges. 
The graph encoding is a linear function of the number of edges and 
vertices. 
Therefore, the number of bits required for the graph encoding 
increases by a constant factor. 
Finally, for $\pi$ we require fewer than two bits per vertex, thus we can add 
this cost to the constant cost of representing the graph.

For subregions, the analysis is more complex. 
A subregion may be composed of multiple connected components, thus we cannot 
assume that there will be at most $\succblksize+2$ point vertices in the 
augmented subgraph.  
By Lemma \ref{lem:fred_graph_sep}, there are no more than 
$\Theta(N/ \sqrt{\succblksize})$ boundary vertices in $\dual{\triang}$. 
Since each connected component in a subregion must have at least one boundary vertex, 
this bounds the number of distinct connected components in all subregions. 
Each subregion is composed of at most $\Theta(\sqrt{\succblksize})$ individual 
components, therefore, in the worst case there are no more than 
$\sqrt{\succblksize}( \sqrt{\succblksize} + 2) < 2 \succblksize$ point vertices. 
Thus, the total number of vertices in the augmented graph is less than 
$3\succblksize$, which adds at most an additional $6\succblksize$ edges. 
As demonstrated with the $\alpha$-neighbourhoods, the additional storage, 
for all data structures used to represent a subregion, 
increases by only a constant factor.
\end{proof}

One important feature of our representation is that each triangle stores 
very limited topological information. 
Consider the information available to some triangle 
$\augdual{t} \in \augdual{T}$ in the augmented dual graph. 
We can determine the - up to three - triangles adjacent to $\augdual{t}$, which 
correspond to the adjacent triangle vertices, as well as the three 
adjacent point vertices, but we have no information about how these are related.  
The only way to acquire this information is to actually visit each of 
$\augdual{t}$'s neighbours, and thereby construct the topological 
information. 
Fortunately, the need to visit the neighbours of a triangle in order to reconstruct 
its topological information will not increase the I/O costs of path traversal. 
If $\augdual{t}$ corresponds to an interior vertex (in a subregion), or an 
internal vertex (in an $\alpha$-neighbourhood), then all of its neighbours are 
represented in the current component. 
If $\augdual{t}$ corresponds to a boundary (subregion) or terminal 
($\alpha$-neighbourhood) vertex, the traversal algorithm already requires 
that a new subregion, or $\alpha$-neighbourhood, be loaded. 
If we must load an $\alpha$-neighbourhood, we are ensured that all of 
$\augdual{t}$'s neighbours are in the new component. 
In the worst case, $\augdual{t}$ is a terminal vertex in a 
$\alpha$-neighbourhood, and a boundary vertex in the newly loaded region. 
This forces us to load a second $\alpha$-neighbourhood; however, observe that 
the same sequence of I/Os is necessary even if we were not required to 
reconstruct $\augdual{t}$'s local topology. 
Thus, at no point during a traversal is it necessary to load a component 
merely in order to construct the topology of a triangle in an adjacent, or overlapping, 
component.

Due to Theorem~\ref{thm:planar_graph} and Lemma~\ref{lem:terrain_components} we
have the following theorem.

\begin{theorem}\label{thm:terrain_traversal}
Given triangulation $\triang$, where each point coordinate may be stored in 
$\bitsPerPoint$ bits, there is a data structure that represents $\triang$ in 
$N\bitsPerPoint + O(N) +o(N\bitsPerPoint)$ bits, that permits traversal of a path 
which crosses $K$ faces in $\triang$ with 
$O \left( \frac{K}{ \lg{B} } \right)$ I/O operations.
\end{theorem}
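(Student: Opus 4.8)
The plan is to obtain the theorem by instantiating the succinct, I/O-efficient planar-graph machinery of Section~\ref{sec:graph_rep} on the dual graph $\dual{\triang}$ of the triangulation, and then ``upgrading'' each stored component so that it carries enough geometric information to recover point coordinates, invoking Lemma~\ref{lem:terrain_components} to control the resulting space blow-up. First I would observe that $\dual{\triang}$ is a planar graph of maximum degree three, so Theorem~\ref{thm:planar_graph} applies: compute an $\regsz$-partition of $\dual{\triang}$ into regions, subdivide each region into an $\subregsz$-partition into subregions, store the subregions packed into an array and the $\alpha$-neighbourhoods of all (region and subregion) boundary vertices, together with the $\ohOf{N}$-bit labelling machinery of Lemma~\ref{lem:label_ops}. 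A path that crosses $K$ triangles of $\triang$ is exactly a walk of length $K$ in $\dual{\triang}$, so the navigation procedure of Section~\ref{sec:navigation} traverses it in $\OhOf{K/\lg\succblksize}$ I/Os, which becomes $\OhOf{K/\lg B}$ once we fix $\succblksize = \Theta(B)$ through the analogue of Eq.~\ref{eqn:block_size}.

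The substantive modification is that, instead of storing a $\bitsPerKey$-bit key per triangle node, each component now stores the \emph{augmented} subgraph $\augdual{\triang}$ restricted to that component --- that is, the point vertices incident to its triangles, their coordinates in an array, and the bit vector $\pi$ distinguishing triangle nodes from point vertices, exactly as in Section~\ref{ssec:triangulation_rep}. I would then invoke Lemma~\ref{lem:terrain_components}: for an $\alpha$-neighbourhood the augmented subgraph has at most $\alpha+2$ point vertices (Lemma~\ref{lem:points_for_dual}) and at most $3\alpha$ additional point edges, while for a subregion --- which may be disconnected --- the number of connected components is bounded by its number of boundary vertices, which is $\OhOf{\sqrt{\succblksize}}$ per subregion by Lemma~\ref{lem:fred_graph_sep}, so the augmented subgraph has fewer than $\sqrt{\succblksize}(\sqrt{\succblksize}+2) < 2\succblksize$ point vertices and $\OhOf{\succblksize}$ extra edges. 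Hence each component's encoding grows by only a constant factor plus $\bitsPerPoint$ bits per stored point, so in the space accounting of Lemma~\ref{lem:main_space} the roles played by the per-vertex graph-encoding cost $c$ and by $\bitsPerKey$ are now played by $\ThetaOf{1}$ and $\bitsPerPoint$, respectively; re-running that accounting (which only used $\bitsPerKey = \OhOf{\lg N}$ and $\succblksize = \OmegaOf{\lg N}$) yields a total of $N\bitsPerPoint + \OhOf{N} + \ohOf{N\bitsPerPoint}$ bits.

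Two points need a little care. First, $\bitsPerPoint = \OhOf{\lg N}$ must be substituted into the block-size equation exactly where $\bitsPerKey$ appeared, so that $\succblksize = \ThetaOf{B}$ still holds and the sublinear terms remain $\ohOf{N\bitsPerPoint}$; this is routine since $\bitsPerPoint$ plays a formally identical role. Second --- and this is the only genuinely new argument --- each triangle node stores no ordering of its incident triangle/point neighbours, so reconstructing a triangle's local topology requires visiting its neighbours. I would argue this costs no extra I/Os: if the current triangle corresponds to an interior vertex of a subregion (or an internal vertex of an $\alpha$-neighbourhood) all its neighbours are already in memory; if it is a boundary (resp.\ terminal) vertex, the navigation procedure is in any case about to load a component containing it and all its neighbours, so the topology is reconstructed along the traversal without extra cost. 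The main obstacle is thus not a hard estimate but making this topology-reconstruction argument airtight --- in particular checking that the worst case, a terminal vertex of one $\alpha$-neighbourhood that is simultaneously a boundary vertex of the next-loaded subregion, still only triggers the $\OhOf{1}$ I/Os already charged by Lemma~\ref{lem:component_ios}, and hence does not change the $\OhOf{K/\lg B}$ bound. Combining the space bound with the I/O bound, and substituting $B$ for $\succblksize$ as in Theorem~\ref{thm:planar_graph}, gives the theorem.
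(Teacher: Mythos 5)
Your proposal matches the paper's own proof in both structure and content: both represent $\triang$ by its dual, apply Theorem~\ref{thm:planar_graph} after replacing per-vertex $\bitsPerKey$-bit keys with the augmented-subgraph encoding (point coordinates plus the bit vector $\pi$), invoke Lemma~\ref{lem:terrain_components} to bound the per-component blow-up by a constant factor, and close with precisely the same topology-reconstruction argument showing that visiting a triangle's neighbours never forces an extra component load beyond what the traversal already charges. The only cosmetic difference is that you spell out the substitution of $\bitsPerPoint$ for $\bitsPerKey$ in Lemma~\ref{lem:main_space} and Eq.~\ref{eqn:block_size} more explicitly than the paper, which leaves this implicit.
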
 

For the case in which we wish to associate a $q$ bit key with each 
triangle we also have the following theorem.

\begin{theorem}\label{thm:terrain_keys_traversal}
Given triangulation $\triang$, where each point coordinate may be stored 
in $\bitsPerPoint$ bits, and where each triangle has associated with it a 
$\bitsPerKey$ bit key, there is a data structure that represents 
$\triang$ in 
$N(\bitsPerPoint + \bitsPerKey) + O(N) + o(N(\bitsPerPoint + \bitsPerKey))$ bits, that 
permits traversal of a path which crosses $K$ faces in $\triang$ 
with $O \left( \frac{K}{ \lg{B} } \right)$ I/O operations.
\end{theorem}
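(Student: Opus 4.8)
The plan is to reuse, essentially verbatim, the compact external-memory triangulation representation behind Theorem~\ref{thm:terrain_traversal}, and to observe that attaching a $\bitsPerKey$-bit key to every triangle changes only the payload carried by the triangle nodes of the augmented dual graph, not the partition, the $\alpha$-neighbourhoods, or the navigation. Concretely, I would represent $\triang$ by the augmented dual graph $\augdual{\triang}$ exactly as in Section~\ref{ssec:triangulation_rep}, apply the two-level partition of Section~\ref{sec:graph_rep} to $\augdual{\triang}$, and store each component (subregion or $\alpha$-neighbourhood) succinctly. The only modification is to the per-component payload: where the keyless construction stores, per component, only the $\bitsPerPoint$-bit coordinates of the point vertices occurring in it, I would additionally store, for each triangle node of the component, its $\bitsPerKey$-bit key. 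This is precisely the ``keys needed'' branch of Lemma~\ref{lem:terrain_components}, which already bounds the per-component cost by $\alpha(\bitsPerPoint+\bitsPerKey)$ (resp. $\OhOf{\succblksize(\bitsPerPoint+\bitsPerKey)}$ for subregions). To keep the two kinds of payload separable inside a component, I would add one bit vector per component --- a local analogue of the vector $\pi$ from Lemma~\ref{lem:terrain_space} --- marking which of the component's stored vertices are point vertices; a single $\rankop$ then indexes into whichever of the two concatenated payload arrays is appropriate. These extra bit vectors cost $\OhOf{1}$ bits per stored vertex, hence $\OhOf{N}$ bits overall, and incur $\OhOf{1}$ I/Os to query once the component is in memory.

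For the space bound I would follow the accounting of Lemma~\ref{lem:main_space} and Lemma~\ref{lem:terrain_components}, with the role of the per-vertex key size $q$ in Theorem~\ref{thm:planar_graph} played by $\bitsPerPoint + \bitsPerKey$. By Lemma~\ref{lem:points_for_dual}, $\augdual{\triang}$ has $N$ triangle nodes and at most $N+2$ point vertices, so storing one copy of each vertex's payload costs $N\bitsPerKey + (N+2)\bitsPerPoint = N(\bitsPerPoint+\bitsPerKey) + \OhOf{\lg N}$ bits. The only further payload cost arises because boundary vertices are duplicated across the subregions and $\alpha$-neighbourhoods containing them; by Lemma~\ref{lem:fred_graph_sep} there are $\OhOf{N/\sqrt{\succblksize}}$ such occurrences, so the duplicated payload costs $\OhOf{(\bitsPerPoint+\bitsPerKey)N/\sqrt{\succblksize}} = \ohOf{N(\bitsPerPoint+\bitsPerKey)}$ bits. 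The succinct encodings of the augmented component graphs contribute $\OhOf{N}$ bits, since each such subgraph is simple with a linear number of vertices and edges (as in Lemma~\ref{lem:terrain_space} and Lemma~\ref{lem:terrain_components}), and the label-conversion data structures, the minimum-label arrays, the $\fst$ and $\diff$ vectors, the block offsets, and the region/subregion boundary bit vectors contribute $\OhOf{N} + \ohOf{N(\bitsPerPoint+\bitsPerKey)}$ bits by the analysis of Lemma~\ref{lem:main_space}. Summing, the total is $N(\bitsPerPoint+\bitsPerKey) + \OhOf{N} + \ohOf{N(\bitsPerPoint+\bitsPerKey)}$ bits.

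For the I/O bound I would note that none of the modifications touches the graph structure, the partition parameters $\regsz$ and $\subregsz$, or the neighbourhood size $\alpha = \succblksize^{1/3}$, so the paging strategy and its analysis in Lemma~\ref{lem:main_IO_bound} carry over unchanged, giving $\OhOf{K/\lg\succblksize}$ I/Os for a path crossing $K$ faces; substituting $B$ for $\succblksize$ is legitimate because $\succblksize = \ThetaOf{B\wsize/(\bitsPerPoint+\bitsPerKey)} = \OmegaOf{B}$ under the standing assumptions $\wsize = \ThetaOf{\lg N}$ and $\bitsPerPoint+\bitsPerKey = \OhOf{\lg N}$. The remark following Lemma~\ref{lem:terrain_components}, that reconstructing the local topology of a boundary or terminal triangle never forces a component load beyond what traversal already requires, is likewise unaffected. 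The only point that needs care --- bookkeeping rather than a genuine obstacle --- is keeping the leading term exactly $N(\bitsPerPoint+\bitsPerKey)$ instead of the wasteful $2N\max(\bitsPerPoint,\bitsPerKey)$; this is why the per-component payload must be split into a coordinate array over the point vertices and a key array over the triangle nodes, rather than stored as a single array of $\max(\bitsPerPoint,\bitsPerKey)$-bit slots. Everything else is an immediate specialization of Theorem~\ref{thm:planar_graph} with key size $\bitsPerPoint+\bitsPerKey$, combined with Lemma~\ref{lem:terrain_components}, mirroring the proof of Theorem~\ref{thm:terrain_traversal}.
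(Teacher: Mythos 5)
Your proposal is correct and takes essentially the same route as the paper: reuse the compact representation behind Theorem~\ref{thm:terrain_traversal}, extend the per-component payload to carry both the $\bitsPerPoint$-bit coordinates (on point vertices) and the $\bitsPerKey$-bit keys (on triangle nodes), and invoke the ``keys needed'' branch of Lemma~\ref{lem:terrain_components} together with the accounting of Lemma~\ref{lem:main_space} to land on $N(\bitsPerPoint+\bitsPerKey)+\OhOf{N}+\ohOf{N(\bitsPerPoint+\bitsPerKey)}$ bits with the unchanged $\OhOf{K/\lg B}$ I/O bound. The paper's own proof is a terse remark that leans entirely on Lemma~\ref{lem:terrain_components}; your version spells out the per-component bit vector used to separate the two payload arrays and the worry about avoiding a $2N\max(\bitsPerPoint,\bitsPerKey)$ leading term, which the paper leaves implicit, so this is the same argument made explicit rather than a genuinely different one.
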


\begin{proof}
In our proof of Lemma \ref{lem:terrain_components}, we demonstrated 
that the number of triangles (dual nodes with an associated $\bitsPerKey$ 
bit key) and points (of $\bitsPerPoint$ bits) is within a constant factor 
of each other in the worst case. 
Assuming this worst case does occur, we are effectively assuming each 
triangle is associated with a $\bitsPerKey$ bit key in our data structures. 
This yields the desired space bound.
\end{proof}

\section{Point Location for Triangulations}\label{sec:point_location}

A common operation in any partition of space is point location, or more
specifically in $\plane$ planar point location.
In Section \ref{sec:applications}, we describe a number of queries
that require, as part of the query, identification of the triangle 
containing a point. 
For example, in order to perform a rectangular window query, we start by
identifying the triangle containing one of the rectangle's four
corner points.
In this section, we describe how to extend our data structure
for triangulations (see Section \ref{sec:tins}) in order to answer point 
location queries efficiently.

Our point location structure is based on that of 
Bose~\etal~\cite{DBLP:journals/talg/BoseCHMM12}.
One important difference between our technique and theirs is that we use
a different data structure as the basis for performing point location.
Their construction relies on the point location structure of 
Kirkpatrick~\cite{DBLP:journals/siamcomp/Kirkpatrick83} for which there is no
known external memory version.  
We use the structure of Arge~\etal~\cite{DBLP:conf/alenex/ArgeDT03},
which uses linear space, and answers vertical ray shooting queries in 
$\OhOf{\log_B N}$ I/Os.

We want to design a point location structure which, given a query point $p_q$, 
will return the label of the sub-region containing $p_q$. 
Since the subregion fits in internal memory, point location can be carried
out within the subregion at no additional I/O cost.
We use a two-level search structure.  
The first level allows us to locate the region containing $p_q$ while the 
second allows us to locate the subregion containing $p_q$.
As the structure is effectively the same at each level, we will only 
describe the structure for locating the region in detail.

Consider the set of region boundary vertices used to partition $\dual{\triang}$. 
These correspond to a subset of the triangles in $\triang$.
We select all the edges of this subset of the triangles of $\triang$,
and insert these into a search structure which we denote $\N$.
With each edge in $\N$, we associate the label of the region
immediately below it.
If the triangle below an edge is shared by more than one region, as
will often be the case, we can arbitrarily select one. 
Our search structure $\N$ is built using the persistent B-tree structure of 
Arge~\etal~\cite{DBLP:conf/alenex/ArgeDT03} which answers vertical
ray shooting queries.
We repeat this process for each region, creating a search structure
$\N_i$ for each region $\reg[i]$.

In order to determine to which region the query point $p_q$ belongs, we perform a vertical 
ray shooting query from $p_q$ to report the first edge encountered in $\N$
above that point. 
Since we associate with each edge the region below it, the result of this query 
yields the region, $\reg[i]$ containing $p_q$.
We then search in $\N_i$ to locate the sub-region $j$ containing $p_q$.
Finally, we perform an exhaustive search on the triangles of $\subreg[i,j]$
to locate the triangle containing $p_q$.
It is possible, in the event that $p_q \notin \triang$, that the search yields
no result at the cost of searching $\N$, $\N_i$, plus the
the exhaustive search of $\subreg[i,j]$.
This is of course the same cost incurred by a successful search in the worst
case.

There is one important addition yet to make to the search structure $\N$.
Consider the situation depicted in Fig.~\ref{fig:pl_structure}(a), where
we wish to find the triangles containing points $p$ and $q$. 
The ray-shooting query from $p$ correctly returns an edge of the separator
vertically above it; however, the ray starting from $q$ never intersects a 
segment from the separator.

\begin{figure}
  \includegraphics[width=\textwidth]{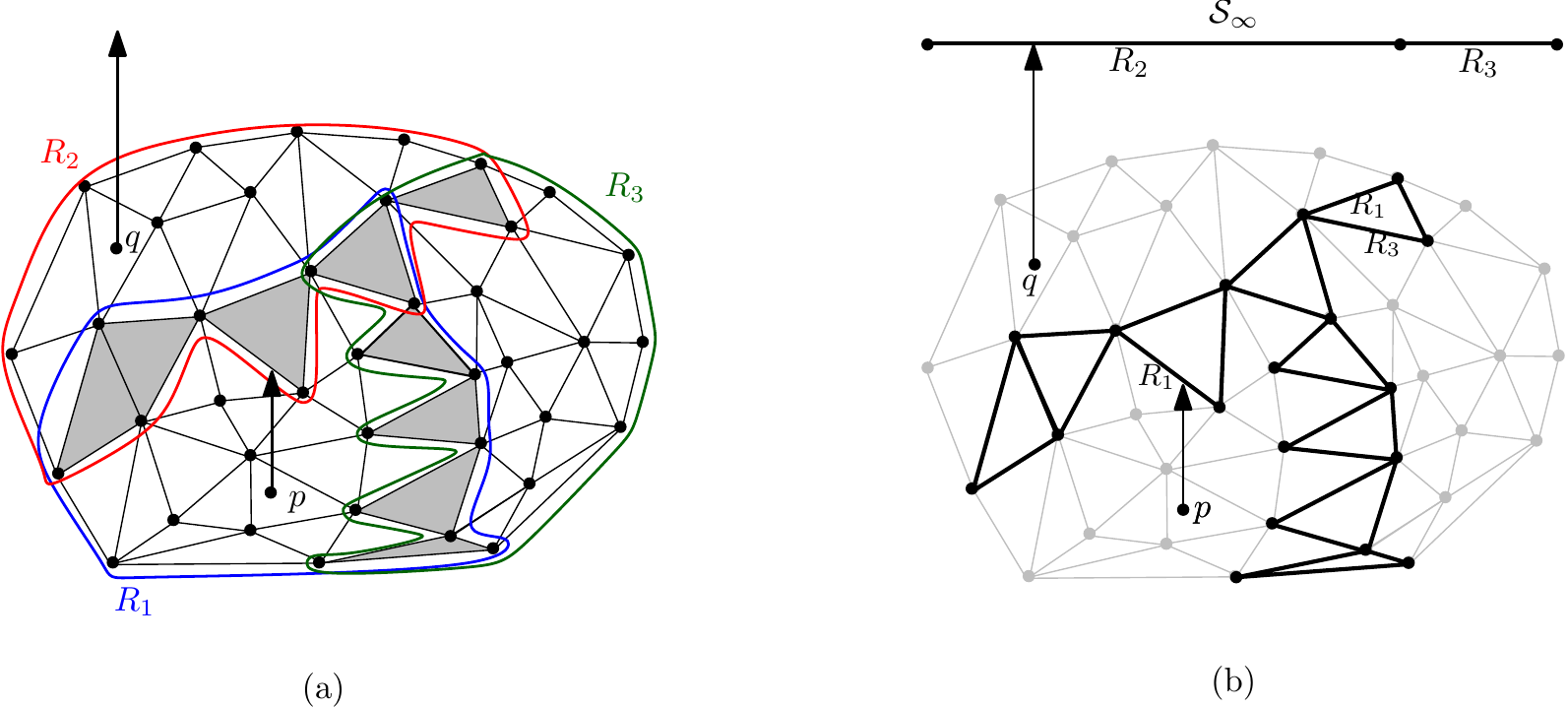}
  \caption[Point location by vertical ray shooting]{In (a), a 
    partitioned triangulation is depicted along with two vertical
    ray shooting queries $p$ and $q$.
    Separator triangles between regions are shaded.
    Recall that the separator is calculated on the dual graph, thus
    the jagged appearance of the separator when drawn in the 
    triangulation.
    Inserting just the edges of boundary triangles into $\N$ is 
    sufficient to handle query $p$ but query $q$ fails.
    In (b), the complete search structure is shown.
    The segments added to $\N$ (drawn as black and heavy) 
    include all boundary triangle edges and the segments generated
    by partitioning $S_{\infty}$.
    Region labels for a small subset of $\N$ are indicated just below
    their respective segments.
    Note that any query falling within $\triang$, and even some outside,
    will return a region to search.
  }
  \label{fig:pl_structure}
\end{figure}
              
In order to ensure that the ray shooting query always reports a valid region, we 
extend our search structure $\N$ as follows.
Let $S_{\infty}$ be a line segment with endpoints at the min and max 
$x$ coordinates of $\triang$, and with $y$ coordinate $+\infty$.
We scan $\triang$ from left to right and split $S_{\infty}$ at any
point where the region immediately below the upper hull changes,
and record for each such segment the region immediately below.
If we insert all the newly-generated segments in $\N$, all vertical ray 
shooting queries within $\triang$ now return a valid result.

\begin{lemma}\label{lem:size_s_infty}
The number of segments added to $\N$ by partitioning $S_{\infty}$
is bounded by the size of the region separator for the subdivision
at the region level. Likewise, the number of segments added 
to all $\N_i$ is bounded by the size of the subregion separators.
\end{lemma}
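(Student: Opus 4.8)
The plan is to identify precisely what the partition of $S_{\infty}$ counts, and then charge each cut to a triangle whose dual vertex lies on the region separator. First I would note that, sweeping a point from left to right along $S_{\infty}$, the ``region immediately below'' it is the region label assigned to the triangle of $\triang$ lying directly below the upper hull edge at the current $x$-coordinate; this triangle, and hence its label, is constant over the $x$-projection of each upper hull edge and can change only at an upper hull vertex. Thus the number of segments into which $S_{\infty}$ is split equals one plus the number of upper hull vertices $v$ at which the labels assigned to the two triangles lying below the two hull edges incident to $v$ disagree. Call these the \emph{change vertices}; it suffices to bound their number by $\OhOf{|S|}$, where $S$ is the region separator (equivalently, the set of region boundary triangles).

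Next I would carry out the charging. Fix a change vertex $v$, let $e,e'$ be its two incident hull edges in left-to-right order, and let $t,t'$ be the triangles of $\triang$ directly below $e$ and $e'$; both contain the vertex $v$. Since $v$ lies on the convex hull, the triangles incident to $v$ form a contiguous (non-cyclic) fan $t=f_0,f_1,\dots,f_\ell=t'$ in which consecutive triangles share an edge, so the dual vertices of consecutive $f_k$ are adjacent in $\dual{\triang}$. The label of $t$ is some region $\reg[i]$ containing $t$, and by the definition of a change vertex $t'$ does not belong to $\reg[i]$. Let $f_p$ be the last triangle along the fan whose set of containing regions includes $\reg[i]$; then $f_{p+1}$ does not belong to $\reg[i]$, yet $f_p$ and $f_{p+1}$ are dual-adjacent, so the dual vertex of $f_p$ cannot be interior to $\reg[i]$ (an interior vertex has all neighbours in the same region). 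Hence $f_p$ is a region boundary triangle, and I charge $v$ to $f_p$.

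The argument then closes by bounding the multiplicity of the charge: a triangle $f_p$ can witness a change vertex $v$ only when $v$ is one of its three vertices, so each region boundary triangle is charged at most three times, and distinct change vertices are distinct hull vertices and hence are not conflated. Therefore the number of segments added to $\N$ is at most $3\,|S| + 1 = \OhOf{|S|}$ (the small constants being immaterial, since in the regime of interest $|S| = \OhOf{N/\sqrt{\succblksize\lg^3 N}}$ is $\omega(1)$). For the structures $\N_i$ I would run the identical argument inside each region $\reg[i]$, with subregions $\subreg[i,j]$ in place of regions, the upper envelope of $\reg[i]$ in place of the upper hull, and subregion boundary triangles of $\reg[i]$ receiving the charges; summing the bounds over all regions gives a total of $\OhOf{\sum_i (\text{size of the subregion separator of }\reg[i])}$, i.e.\ $\OhOf{}$ the total size of the subregion separators.

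The main obstacle is making the structural step watertight: one must justify that the triangles around an upper hull vertex really do form a contiguous fan whose two extreme triangles are exactly the ones below the two incident hull edges, and one must track \emph{sets} of containing regions per triangle rather than a single region, because the label of a shared triangle is chosen arbitrarily among the regions that contain it. Once that is pinned down, the segment count, the reduction to change vertices, and the multiplicity bound are all routine.
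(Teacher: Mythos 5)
Your approach is essentially the paper's: both reduce the breakpoints of $S_\infty$ to upper-hull vertices where the region label changes, walk the fan of triangles around such a vertex, and observe that the label change forces some dual-adjacent pair into different regions, whence one of them is a boundary triangle; you make the charging explicit (multiplicity at most three per boundary triangle) where the paper leaves it implicit. The one thing the paper flags that your argument does not address is the subregion case: the partitioning of \cite{Frederickson87} does not guarantee that a region $\reg[i]$ is connected in $\triang$, so its ``upper envelope'' may consist of several disjoint pieces, introducing extra breakpoints in the $S_\infty$ partition for $\N_i$ that occur at transitions out of $\reg[i]$ rather than between subregions; these must be charged to region boundary triangles rather than subregion boundary triangles, and you would need to argue (as the paper does) that their total number over all $\N_i$ is still bounded by the size of the region separator.
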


\begin{proof}
We begin with proving this claim for $\N$.
By definition, $S_{\infty}$ is only split when the region immediately
below the upper hull of $\triang$ changes.
Any such change must occur at a vertex in $\triang$.
Let $v$ be this vertex and an let $t_i$ be a triangle in $\reg[i]$ 
immediately to its left, and $t_j$ a triangle in $\reg[j]$ immediately
to its right, where $t_i$ and $t_j$ are on the upper hull of
$\triang$ (see Fig.~\ref{fig:pl_s_infty}).
Assume for the sake of contradiction that none of the triangles
adjacent to $v$ is a boundary triangle. 
In our construction, no two edge adjacent triangles in $\triang$
can be in different regions unless at least one of them is
a boundary triangle.
Since $t_i$ and $t_j$ are in separate regions if we walk 
from $t_i$ to $t_j$ around $v$, we must arrive at two 
triangles which do not have the same region, since the
region label changes at some point along this walk.
This contradicts the claim that no triangle adjacent to $v$ is 
a boundary triangle.

The proof on the bound for the sizes of each $\N_i$ is 
analogous to that for $\N$.
There is one matter that must be addressed;
the separator algorithm of \cite{Frederickson87} does
not guarantee contiguous regions in $\triang$.
Within a region, this may introduce discontinuities 
that increase the complexity of $S_{\infty}$.
However, the number of such discontinuities is in no case
worse than the size of the region separator over
all $\N_i$. 

\end{proof}

\begin{figure}
  \centering
  \includegraphics[scale=1.0]{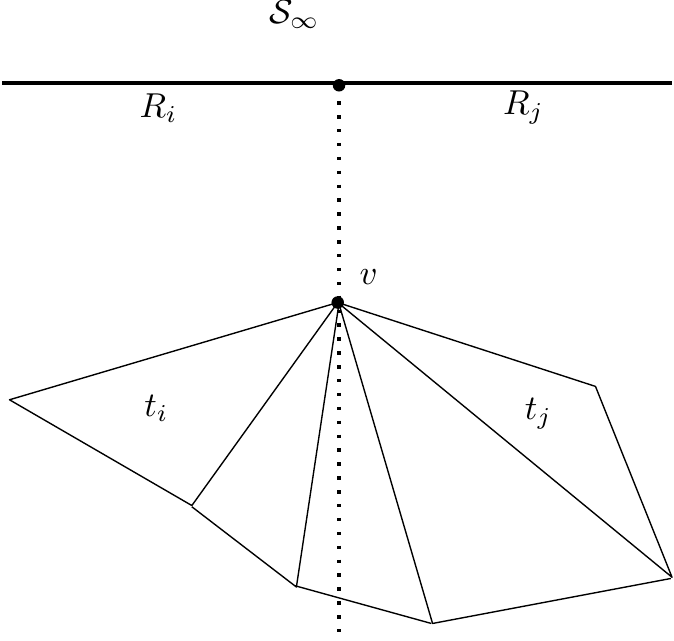}
  \caption[Bound on size of $\mathcal{S}_{\infty}$]{
  Illustration of proof that region boundary triangle must
  exists at a break in $\mathcal{S}_{\infty}$.
  }
  \label{fig:pl_s_infty}
\end{figure}

\begin{lemma}\label{lem:point_location_space}
By augmenting our terrain data structure with a structure using an additional 
$o(N\bitsPerPoint)$ bits, point location queries can be answered in 
$\OhOf{\log_{\succblksize}{N}}$ I/Os.
\end{lemma}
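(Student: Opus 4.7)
The plan is to bound the I/O cost of the three-step query procedure and then bound the space of the auxiliary structures separately. A point-location query proceeds by (i) a vertical ray-shooting query in $\N$ to identify the region $\reg[i]$ containing $p_q$, (ii) a vertical ray-shooting query in $\N_i$ to identify the subregion $\subreg[i,j]$ containing $p_q$, and (iii) loading $\subreg[i,j]$ via the procedure of Lemma~\ref{lem:component_ios} and scanning its triangles in internal memory. Since $\N$ and each $\N_i$ are realized by the persistent $B$-tree of~\cite{DBLP:conf/alenex/ArgeDT03}, steps (i) and (ii) take $\OhOf{\log_B N}$ and $\OhOf{\log_B \regsz}$ I/Os respectively, step (iii) takes $\OhOf{1}$ I/Os, and the final scan inside the single disk block holding $\subreg[i,j]$ is free. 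Because $\succblksize = \ThetaOf{B}$ by Eq.~\ref{eqn:block_size}, the total is $\OhOf{\log_{\succblksize}{N}}$, as claimed.

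For the space, I would count the number of segments inserted into each search structure and multiply by the persistent $B$-tree's linear per-segment cost in bits. Into $\N$ I insert the edges of all triangles that are dual to region boundary vertices, plus the partition of $S_\infty$; by Lemma~\ref{lem:fred_graph_sep} and Lemma~\ref{lem:size_s_infty} the combined number of segments is $\OhOf{N/\sqrt{\regsz}}$. Into the $\N_i$'s I insert, in aggregate, the edges of the subregion-boundary triangles of every region and the corresponding $S_\infty$-partitions, totalling $\OhOf{N/\sqrt{\subregsz}}$ segments. Each segment is stored with its endpoint coordinates in $\OhOf{\bitsPerPoint}$ bits together with a region (resp.\ subregion) label of $\OhOf{\lg N}$ bits. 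With $\regsz = \succblksize \lg^3 N$ and $\subregsz = \succblksize = \OmegaOf{\lg N}$, the combined space is
\[
  \OhOfV{\frac{N \bitsPerPoint}{\sqrt{\succblksize \lg^3 N}}} + \OhOfV{\frac{N \bitsPerPoint}{\sqrt{\succblksize}}} = \ohOf{N \bitsPerPoint},
\]
which is the bound required. (The region/subregion labels give a lower-order $\OhOf{(N/\sqrt{\subregsz})\lg N}$ term that is also $\ohOf{N \bitsPerPoint}$ under our assumption $\bitsPerPoint = \OhOf{\lg N}$ together with $\succblksize = \OmegaOf{\lg N}$.)

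The main technical care is making sure that no ray shooting query inside the bounding strip of $\triang$ can miss a segment: this is precisely what the augmentation by the partitioned $S_\infty$ guarantees, while Lemma~\ref{lem:size_s_infty} ensures the augmentation costs nothing more than a constant factor in the segment count. The label returned by $\N$ must match the indexing used by our graph representation so that the subsequent search of $\N_i$ and the loading of $\subreg[i,j]$ via Lemma~\ref{lem:component_ios} are compatible; since region and subregion labels are precisely what we store with each segment, this compatibility is immediate.
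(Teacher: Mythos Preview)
Your approach matches the paper's: a two-level persistent $B$-tree on the region/subregion boundary-triangle edges (plus the partitioned $S_\infty$), with the segment counts coming from Lemma~\ref{lem:fred_graph_sep} and Lemma~\ref{lem:size_s_infty}, and the I/O bound from two $\OhOf{\log_B N}$ ray-shooting queries.

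There is one slip in your space accounting. You charge each subregion label $\OhOf{\lg N}$ bits and then claim the resulting $\OhOf{(N/\sqrt{\subregsz})\lg N}$ term is $\ohOf{N\bitsPerPoint}$ ``under our assumption $\bitsPerPoint = \OhOf{\lg N}$.'' That assumption is an \emph{upper} bound on $\bitsPerPoint$, so it cannot be used to show something is $\ohOf{N\bitsPerPoint}$; indeed, with $\subregsz=\succblksize=\OmegaOf{\lg N}$ your term is $\OhOf{N\sqrt{\lg N}}$, which is not $\ohOf{N\bitsPerPoint}$ when $\bitsPerPoint$ is, say, constant. The paper avoids this by using the tighter label size at the subregion level: since each region has only $\OhOf{\regsz/\subregsz}=\OhOf{\lg^3 N}$ subregions, a subregion label needs only $\OhOf{\lg\lg N}$ bits, and then $\OhOf{(N/\sqrt{\succblksize})\lg\lg N}=\ohOf{N}\subseteq\ohOf{N\bitsPerPoint}$ holds for any $\bitsPerPoint\ge 1$. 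With that correction your argument is complete and coincides with the paper's.
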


\begin{proof}
In the top level, region finding, data structure there are 
$\BigOh{ \frac{N}{\sqrt{\succblksize \lg^3 N}} }$ region boundary vertices. 
For each such vertex we insert at most six edges into our search structure;
this includes three edges for each triangle in the separator, plus any edges
added to $S_{\infty}$ by Lemma~\ref{lem:size_s_infty}.
Associated with each edge we must store $2\bitsPerPoint$ bits for the endpoints, 
plus a region label requiring $\lg{ \left( \frac{N}{\succblksize \lg^3 N} \right) }$ 
bits. 
Thus the total space, in bits, required by this structure will be:

\begin{eqnarray}
O\left( \frac{N}{\sqrt{\succblksize \lg^3{N}}} \right) \cdot \left( 2\bitsPerPoint + 
\lg{\left( \frac{N}{\succblksize \lg^3 N} \right)} \right) &=&  
O\left( \frac{N\bitsPerPoint}{\sqrt{\succblksize \lg^3 N}} \right) + 
\OhOf{\frac{N}{\sqrt{\succblksize \lg^3 N}}} \cdot 
\lg{ \left( \frac{N}{\succblksize \lg^3 N} \right) } \nonumber \\
	&=& \ohOf{N\bitsPerPoint} + \ohOf{N} \label{eqn:pl_region_space}
\end{eqnarray}

For the subregion search structures we will consider their space cumulatively.
We add $\OhOf{N/\sqrt{\succblksize}}$ edges, each requiring $2\bitsPerPoint$ bits 
for the endpoints,
and $\lg{( \lg^3 N )}$ bit references to the $\lg^3{N}$ sub-regions within the region. 
Again, by Lemma~\ref{lem:size_s_infty}, the addition of edges in $S_{\infty}$ does not 
asymptotically increase the size of the $\N_i$.
The space required for this structure is then:

\begin{equation}\label{eqn:pl_subregion_space}
\BigOh{\frac{N}{\sqrt{\succblksize}}} \cdot ( 2\bitsPerPoint + \lg{(\lg^3{N})} ) = 
\BigOh{ \frac{N\bitsPerPoint}{\sqrt{\succblksize}}} + \BigOh{ \frac{N}{\sqrt{\succblksize}} \cdot \lg{\lg{(N)}}}
\end{equation}

The first term of this equation is clearly $\ohOf{ N \bitsPerPoint }$. 
Recall that $\succblksize = (B \lg N)/(c + \bitsPerPoint)$ and that $B = \OmegaOf{\lg{N}}$. 
Thus, for the second term of Eq.(\ref{eqn:pl_subregion_space}) we have:

\begin{eqnarray}
\BigOh{ \frac{N}{\sqrt{\succblksize}} \cdot 3 \lg{\lg{N}} } &=& 
\BigOh{ \frac{N}{\sqrt{\frac{B \lg N}{c+\bitsPerPoint}}} \cdot \lg{\lg{(N)}} } \nonumber\\
&=& \ohOf{N\bitsPerPoint}\label{eqn:pl_subregion_space_final}
\end{eqnarray} 

The total space we use is thus bounded by $\ohOf{ N\bitsPerPoint}$ bits. 
The I/O complexity stems from the fact that we perform two point location 
queries on data structures that answer the query in $\OhOf{\log_B N}$ I/Os.
\end{proof}

\begin{theorem}\label{thm:terrain_with_point_location}
Given a terrain $\triang$, where each point coordinate may be stored in $\bitsPerPoint$ 
bits, there is a data structure that represents $\triang$ in 
$N\bitsPerPoint + O(N) +o(N\bitsPerPoint)$ bits that permits traversal of a path crossing 
$K$ faces in $\triang$ with $O \left( \frac{K}{ \lg{B} } \right)$ I/Os, and which 
supports point location queries with $\OhOf{\log_B{N}}$ I/Os.
\end{theorem}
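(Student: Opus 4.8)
The plan is to obtain this result purely by composing the two structures built in the preceding sections, with no new machinery required. First I would instantiate the representation of Theorem~\ref{thm:terrain_traversal}: this already fixes a two-level partition of the dual graph $\dual{\triang}$ into regions of size $\regsz = \succblksize \lg^3 N$ and subregions of size $\subregsz = \succblksize$, stores every subregion and every $\alpha$-neighbourhood succinctly together with the point array $\mathcal{P}$ and the bit vector $\pi$, achieves $N\bitsPerPoint + O(N) + o(N\bitsPerPoint)$ bits, and supports traversal of a path crossing $K$ faces in $O(K/\lg B)$ I/Os. Onto exactly this partition I would then graft the point location structure of Lemma~\ref{lem:point_location_space}: the persistent B-tree $\N$ built over the edges of the region-boundary triangles (together with the segments obtained by splitting $S_\infty$, whose number is controlled by Lemma~\ref{lem:size_s_infty}), and, for each region $\reg[i]$, the analogous structure $\N_i$ over the edges of its subregion-boundary triangles. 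Because these are precisely the boundary vertices already produced by the partition of Theorem~\ref{thm:terrain_traversal}, the point location structure can be attached without modifying or duplicating the underlying representation.

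Next I would add up the two contributions. By Lemma~\ref{lem:point_location_space} the point location structure occupies an extra $o(N\bitsPerPoint)$ bits, so the total space is $\bigl(N\bitsPerPoint + O(N) + o(N\bitsPerPoint)\bigr) + o(N\bitsPerPoint) = N\bitsPerPoint + O(N) + o(N\bitsPerPoint)$, as claimed. The path-traversal bound is untouched, since the point location data is never consulted during a walk. For the query cost, Lemma~\ref{lem:point_location_space} gives $O(\log_{\succblksize} N)$ I/Os — a region-finding ray-shooting query in $\N$, a subregion-finding query in $\N_i$, and an in-memory scan of $\subreg[i,j]$ — and I would then perform the usual base change of the logarithm: by Equation~\ref{eqn:block_size} together with the standing assumptions $\wsize = \Theta(\lg N)$, $\bitsPerPoint = O(\lg N)$, and $B = \Omega(\lg N)$, we have $\succblksize = \Theta\!\bigl(B\lg N/(\bitsPerPoint + c)\bigr) = \Omega(B)$, hence $\lg \succblksize = \Omega(\lg B)$ and $\log_{\succblksize} N = O(\log_B N)$. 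This is the same substitution of $B$ for $\succblksize$ already used in Theorem~\ref{thm:planar_graph}, and it yields the stated $O(\log_B N)$ point location bound, completing the proof.

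The only genuine subtlety — and the step I would be most careful to spell out — is the compatibility of the two constructions: one must check that the set of region- and subregion-boundary triangles indexed by $\N$ and the $\N_i$ coincides with the boundary set generated by the partition underlying Theorem~\ref{thm:terrain_traversal}, and that the space and I/O analysis of Lemma~\ref{lem:point_location_space} is expressed in terms of that same parameter $\succblksize$ (so that the $o(N\bitsPerPoint)$-bit bound and the $O(\log_{\succblksize} N)$-I/O bound apply verbatim here). Everything else — summing space terms and the logarithm base change — is routine.
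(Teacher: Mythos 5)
Your proposal is correct and follows exactly the same route as the paper: invoke Theorem~\ref{thm:terrain_traversal} for the base representation and traversal bound, bolt on Lemma~\ref{lem:point_location_space} for the extra $\ohOf{N\bitsPerPoint}$ bits and the point-location query cost, and observe that the space terms absorb and the $\log_{\succblksize}$ bound rewrites as $\log_B$ since $\succblksize=\Omega(B)$. Your version is somewhat more explicit about the compatibility of the two partitions and the logarithm base change, but there is no substantive difference.
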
 

\begin{proof}
The terrain data structure requires $N\bitsPerPoint + O(N) +o(N\bitsPerPoint)$ bits by 
Theorem \ref{thm:terrain_traversal}. 
By Lemma \ref{lem:point_location_space}, adding point location requires only 
$\ohOf{N\bitsPerPoint}$ bits, thus the overall space bound is the same and queries 
can be answered in $\OhOf{\log_B{N}}$ I/Os.
\end{proof}

  \section{Applications}
  \label{sec:applications}

  In this section, we present a number of applications involving path traversal
  in triangulations.
  These applications employ the data structures described in Sections \ref{sec:tins}
  and \ref{sec:point_location}.
  The applications, in particular those described in 
  Section~\ref{ssec:ter_profile_tpath}, deal with triangulations used to model 
  terrains.
  A terrain model is a digital model of some surface.
  Most often the surface being modeled is some region of the earth, but any surface
  with relief can be modeled.
  Triangulations, commonly refered to as Triangular Irregular Networks, or TINs 
  in this context, are one means of modeling a digital terrain.
  In a TIN, the point set $\pointset$ is a set of points of known elevation. 
  Given a triangulation, $\triang$, built on $\pointset$, the elevation for any
  point, say $q$, on the surface can be readily estimated from the vertices
  of the triangle containing $q$.
  In addition to $x$ and $y$ coordinates, each point in $\pointset$ stores a $z$
  coordinate which records the elevation.
  Such models are sometimes refered to as 2.5 dimensional, since they cannot 
  properly represent a truly three-dimensional surface (for example there is no
  way for a planar triangulation to represent an overhanging cliff). 

  We begin by describing two simple and closely related queries, 
  reporting terrain profiles, and trickle paths, in 
  Section~\ref{ssec:ter_profile_tpath}.  
  In Section~\ref{sec:con_comp_queries}, we present a slightly more complex 
  application of our data structures, 
  reporting connected components of a triangulation (or terrain). 
  We start with the simpler case of reporting a component which is a convex
  subregion of the triangulation, and then describe methods for the more 
  general case where the connected component may represent a non-convex region.
  As all the queries we describe in this section require that a starting 
  triangle in the mesh be identified, we will assume that this triangle
  is given as part of the query.
  In practice, most of the queries we describe here can use a start triangle,
  identified by answering a
  single planar point location query using the technique outlined in 
  Section~\ref{sec:point_location}.

  \subsection{Terrain Profiles and Trickle Paths}\label{ssec:ter_profile_tpath}

  Terrain profiles are a common tool in GIS visualization. 
  The input is a line segment, or chain of line segments possibly forming a polygon, 
  and the output is a profile of the elevation along the line segment(s). 
  The trickle path, or path of steepest descent, from a point $p$ is the path on
  the terrain, represented by $\triang$, 
  that begins at $p$ and follows the direction of steepest descent until it reaches 
  a local minimum or the boundary of $\triang$ \cite{DBLP:conf/cccg/BergBDKOGRSY96}. 
  Both queries involve traversing a path over $\triang$, with the fundamental 
  difference being that in the terrain profile the path is given, whereas in 
  reporting the trickle path, the path is unknown beforehand and must be 
  determined based on local terrain characteristics.

  In analyzing these algorithms, we measure the complexity of a path by length 
  of the sequence of triangles visited, which we denote by $K$. 
  When a path intersects a vertex, we consider all triangles adjacent to that 
  vertex to have been visited. 
  Given this definition, we have the following result for terrain profile and 
  trickle path queries:

  \begin{lemma}\label{lem:tprofile_tpath}
  Let $\triang$ be a terrain stored using the representation described
  in Theorem~\ref{thm:terrain_traversal}, then:
  \begin{enumerate}
  \item{ Given a chain of line segments, $S$, the profile of the intersection 
  of $S$ with $\triang$ can be reported with $\BigOh{ \frac{K}{\lg{B}} }$ I/Os.}
  \item{ Given a point $p$ the trickle path from $p$ can be reported with 
  $\BigOh{ \frac{K}{\lg{B}} }$ I/Os.}
  \end{enumerate}
  \end{lemma}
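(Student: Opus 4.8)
The plan is to show that each of the two queries reduces to traversing a single path in the dual graph $\dual{\triang}$, after which the $\BigOh{K / \lg B}$ bound follows immediately from Theorem~\ref{thm:terrain_traversal}. First I would treat the terrain-profile case. Given the starting triangle (assumed to be part of the query, per the remarks at the start of this section) and the segment chain $S$, the query walks $S$ across $\triang$: at the current triangle $t$, we use the three stored vertex coordinates of $t$ to decide through which edge of $t$ the chain $S$ next exits, and then move to the triangle on the other side of that edge. This produces a sequence of edge-adjacent triangles --- precisely a path in the sense used by Theorem~\ref{thm:terrain_traversal} --- whose length is $K$ by the definition of $K$ in the lemma. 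The only subtlety is when $S$ passes exactly through a vertex $v$; the lemma then declares all triangles incident to $v$ to be visited, and since $\triang$ has bounded degree there are $\BigOh{1}$ such triangles, enumerable by walking around $v$ using $\BigOh{1}$ edge-adjacency steps that are themselves carried out by the traversal procedure. Hence the whole walk is a path of length $\BigOh{K}$, and the components needed along it are loaded in $\BigOh{K / \lg B}$ I/Os by Theorem~\ref{thm:terrain_traversal}.

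Next I would handle the trickle path, where the path is not given but generated online, so I need to argue that each step of steepest descent is decided using only information present in the component currently in memory. When the path lies in the interior of a triangle $t$, the direction of steepest descent is the projection onto $t$ of the gradient of the linear function interpolating the heights at $t$'s three vertices; these coordinates are all stored with $t$ in the current component, and following this direction determines the edge (or vertex) through which the path leaves $t$. At an edge $e$ shared by $t$ and $t'$, a constant-size test on the coordinates of $t$, $t'$, and the endpoints of $e$ decides whether the path continues into $t'$, runs along $e$ (in which case both incident triangles are visited and are edge-adjacent), or, at an endpoint of $e$, enters the fan of triangles around that vertex; bounded degree again means only $\BigOh{1}$ incident triangles need be inspected to find the next descent direction, and they are reached by $\BigOh{1}$ edge-adjacency steps. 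Thus the $\stepop$ function required by Theorems~\ref{thm:planar_graph} and~\ref{thm:terrain_traversal} is realizable with no extra I/Os, the visited triangles form a path of length $\BigOh{K}$, and the traversal costs $\BigOh{K / \lg B}$ I/Os.

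The routine part is the planar point location needed to locate the first triangle, which is either supplied with the query or, if one prefers, obtained with a single query to the structure of Theorem~\ref{thm:terrain_with_point_location} in $\BigOh{\log_B N}$ I/Os --- a lower-order additive term absorbed into $\BigOh{K / \lg B}$ as soon as $K \ge 1$. The step I expect to be the main obstacle is the case analysis for the trickle path at vertices and along edges (saddle points, locally flat edges, and local minima): one must check that the ``next triangle'' is always well defined, that the path does not revisit triangles in a way that inflates $K$, and that every triangle consulted in a local test lies in a component the traversal algorithm is already obliged to load. None of these cases affects the asymptotics; they only require a careful but elementary geometric argument that steepest descent is locally decidable from the data stored with a constant number of adjacent triangles.
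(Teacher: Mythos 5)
Your proposal follows the same route as the paper: reduce each query to a path traversal in the dual of $\triang$ and invoke Theorem~\ref{thm:terrain_traversal}, with the $\log_B N$ point-location cost as a lower-order additive term. One claim you make is false, though it does not break the argument: $\triang$ is a triangulation and has \emph{no} bounded-degree assumption; only the (augmented) dual graph has bounded degree. A vertex of $\triang$ can be incident to $\omega(1)$ triangles, so you cannot conclude that walking around a primal vertex costs $\BigOh{1}$ adjacency steps. What rescues the bound — and what the paper actually uses — is the convention stated just above the lemma: when the path passes through a vertex, \emph{all} triangles incident to that vertex are counted in $K$, so the cost of enumerating the fan around a high-degree vertex is already charged to the path length. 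You do gesture at this (``the lemma then declares all triangles incident to $v$ to be visited''), but you should drop the bounded-degree claim and let that accounting do the work; it is needed both for the profile case and for the steepest-descent fan inspection in the trickle-path case.
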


  \begin{proof}
  Let $S$ be a chain of $i$ segments, denoted $s_0, s_1, \ldots s_i$. 
  In order to report an elevation profile, start at the endpoint of $s_0$, and 
  let $t \in \triang$ be the triangle which contains this endpoint. 
  We calculate the intersection of $s_0$ with the boundary of $t$ in order to determine 
  which triangle to visit next. 
  If at any point in reporting the query the next endpoint of the current segment 
  $s_j$ falls within the current triangle, we advance to the next segment 
  $s_{j+1}$. 
  This procedure is repeated until the closing endpoint of $s_i$ is reached.
  This query requires walking a path through $T$ that visits exactly $K$ triangles.

  For the trickle path, we are given triangle $t$ and some point interior to $t$. 
  The trickle path from $p$, and its intersection with the boundary of $t$, 
  can be calculated from the coordinates of the point vertices adjacent to $t$. 
  If the path crosses only the faces of triangles, we can simply report the 
  intersection of the path with those triangles.
  By Theorem \ref{thm:terrain_traversal}, it 
  requires $O(K/\lg{B})$ I/Os to report a path crossing $K$ triangles. 
  When an edge is followed, visiting both faces adjacent to that edge no more 
  than doubles the number of triangles visited. 
  The only possible problem occurs when the path intersects a point vertex. 
  Such cases require a walk around the point vertex in order to determine 
  through which triangle 
  (or edge) the path exits, or if the point vertex is a local minima. 
  In this case, the path must visit each triangle adjacent to the point vertex 
  through which it passes, thus all triangles visited during the cycle around a 
  point vertex are accounted for in the path length $K$.
  \end{proof}

  \subsection{Connected Component Queries}\label{sec:con_comp_queries}

  In this section, we describe how connected component queries can be reported 
  using our data structures. 
  We begin by defining \emph{connected component} and \emph{connected
  component query} in this setting.
  Let $\triang$ be a triangulation, and let $t$ be a triangle in $\triang$. 
  We denote by $\mathcal{P}(t)$ some property that is true for $t$.
  This property may be some value stored for each triangle, or some value that
  can be computed locally, such as slope or aspect if $\triang$ is a terrain.
  The connected component of $t$ with respect to $\mathcal{P}$ is the 
  subset $\concomp \subset \triang$, such that following conditions hold for
  all $t_c \in \concomp$:

  \begin{enumerate}
    \item $\mathcal{P}(t_c)$ is true, and
    \item there exists a path $t_1, t_2, \ldots t_n$ where $t = t_1$ and 
    $t_c = t_n$ such that $\mathcal{P}(t_i)$ is true for all $i$, and $t_i$ 
    and $t_{i+1}$ are adjacent.
  \end{enumerate}

  In order to simplify the discussion that follows, we adopt two conventions that we use
  throughout this section. 
  Firstly, while we assume that $\triang$ is represented by its augmented dual
  graph $\augdual{\triang}$, we will describe the various algorithms in terms
  of operations on the dual graph $\dual{\triang}$. 
  Furthermore, when we can do so without confusion, we use a single identifier
  to represent both a triangle and its corresponding node in the dual.
  For example, if we have triangle $t \in \triang$ corresponding to  
  $\dual{t} \in \dual{\triang}$, we may use $t$ in reference to both the
  triangle and the dual node.

  The second convention we will adopt relates to $\mathcal{P}$.
  If $\mathcal{P}(t)$ is true, then we call $t$ a \emph{red} node (triangle).  
  If $\mathcal{P}(t)$ is false, then we call $t$ a \emph{black} node (triangle).



  In Section~\ref{ssec:conv_con_comp} we address queries 
  where the component being reported is convex. 
  This constraint is not as limiting as it may seem.
  For example, rectangular window queries, a very common type of query, 
  may be viewed as a specialized case of reporting a convex 
  component.
  In Section~\ref{ssec:gen_con_comp}, we deal with the more challenging 
  problem of reporting
  connected components where the component may be non-convex.

  \subsubsection{Convex Connected Components}\label{ssec:conv_con_comp}

  Before describing the reporting of convex connected components, we show how 
  to solve a related problem. 
  Given the dual of a convex triangulation, $\dual{\triang}$, and a 
  vertex $\dual{s} \in \dual{\triang}$, 
  perform a depth-first traversal of $\dual{\triang}$, starting at $s$, without 
  using mark bits.
  Mark bits are commonly used in depth-first traversal to record which vertices 
  have already been visited. 
  However, their use is infeasible with our data structures, due to the 
  duplication of the vertices of $\dual{\triang}$.
  Mark bits would require that we find and mark all duplicates of 
  a vertex whenever it is visited.
  This entirely defeats the purpose of our data structure.

  In order to avoid the use of mark bits, we select from among the edges incident 
  to each node a single \emph{entry} edge for that node. 
  While performing a depth-first traversal, we extend the traversal 
  to a node only along its entry edge.
  This leaves us with the problem of determining how to select the entry edge
  for a node.  
  Gold and Maydell~\cite{gold_maydell_1978} and Gold and 
  Cormack~\cite{gold_cormack86} demonstrated how such an entry edge could 
  be selected to enable depth-first traversal on a triangulation.
  Their approach identified one of the three edges of a triangle as its 
  entry edge.
  Since there is a one-to-one correspondence between the edges in $\triang$ 
  and $\dual{\triang}$, we can apply such a rule to identifying entry
  edges for nodes in $\dual{\triang}$. 
  De~Berg~\etal~\cite{DBLP:journals/gis/BergKOO97} showed that the same technique can 
  be applied to the more general case of planar subdivisions, and 
  gave an entry-edge selection rule for polygons that, naturally, also works 
  for triangles.
  De~Berg's selection rule operates as follows\footnote{We 
  omit some minor details that are not relevant to triangles.}:

  \begin{enumerate}
  \item Let $t_s$ be a triangle in $\triang$, and let $s$ be a point interior
  to $t_s$.
  \item Let $t$ be the triangle for which we wish to identify an entry edge.
  Calculate the distance from $s$ to the closures of all edges of $t$.
  Let $s'$ be the point on $t$ that minimizes $\texttt{dist}(s,s')$. 
  \item If $s'$ is interior to an edge of $t$; select this edge as the
  entry edge, otherwise, $s'$ must be a vertex of $t$. In this case, let
  $e$ and $e'$ be the edges adjacent to $s'$. If $e$ is \emph{exposed}
  to $s$, such that the line $\vec{\ell}$ induced by $e$ has
  $s$ strictly to its right, then select $e$ as the entry edge, otherwise
  select $e'$. 
  \end{enumerate}

  Now consider the graph $\dual{\triang}$.
  If we remove all edges from $\dual{\triang}$ that
  do not correspond to entry edges in $\triang$, we are left with a tree, 
  rooted at $t_s$ (see Figure \ref{fig:imp_tree_entry}).
  Proof of this claim can be found in~\cite{DBLP:journals/gis/BergKOO97}.
  The triangle $t_s$, which contains $s$, has no entry edges itself.

  \begin{figure}[th]
	  \centering
		  \includegraphics[width=0.6\textwidth]{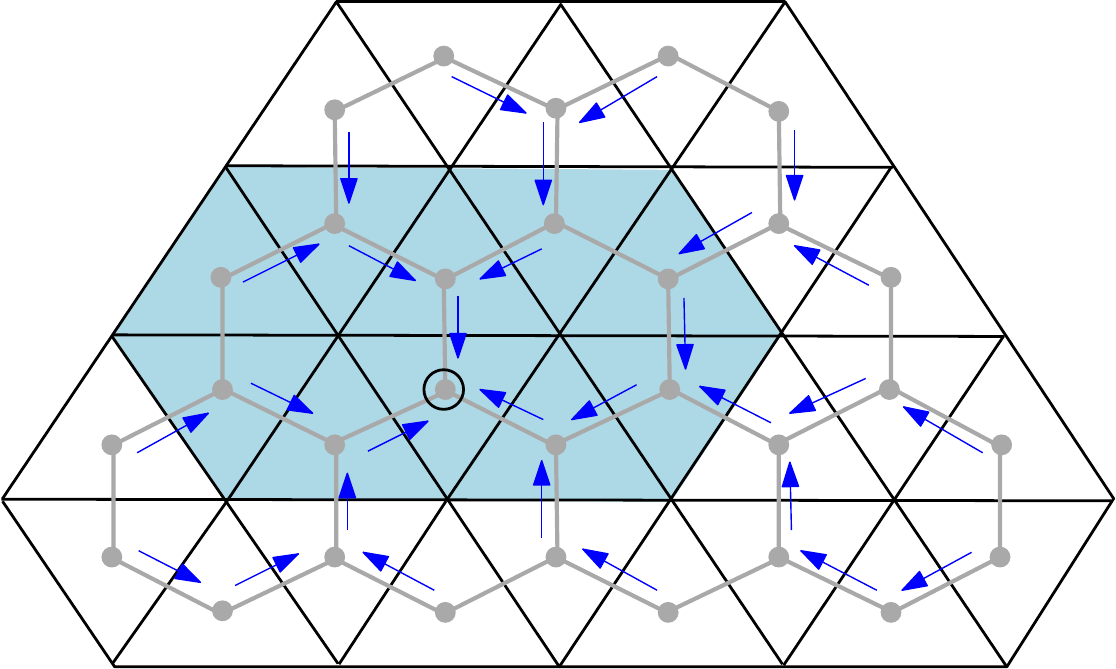}
	  \caption[Dual graph with non-entry edges removed forms a tree]{ 
	    A triangulation is shown with its dual graph. 
	    The blue arrows indicate the entry edges selected for each triangle when
	    triangle $t_s$ is selected as the root ($\dual{t}_s$ is circled). 
	    The arrows are directed from child to parent.
	    Removing the non-entry edges results in a tree in the dual, rooted at 
	    $\dual{t}_s$.
	    So long as a connected component is convex (e.g., the shaded area) this 
	    tree property holds for the subgraph consisting of only nodes in the
	    component.} \label{fig:imp_tree_entry}
  \end{figure}
	  
  \begin{figure}[th]
	  \centering 
	  \includegraphics[width=0.6\textwidth]{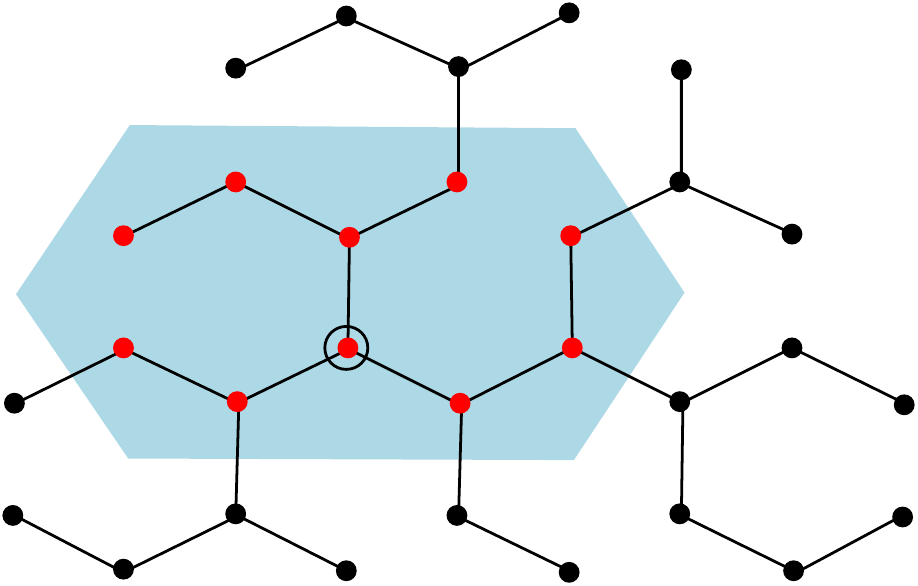}
	  \caption[Tree rooted at $t_s$]{ The tree rooted at $t_s$. 
	    The red nodes indicate nodes of the subtree contained within the 
	    connected component; this subtree is still connected even when all 
	    (black) nodes - which are not part of the component - 
	    are removed.} \label{fig:implicit_tree}
  \end{figure}

  This result leads to a simple traversal algorithm for connected 
  components, in the case where we know that the region to be reported 
  is convex.
  We summarize with the following lemma.

  \begin{lemma}\label{lem:report_convex_component}
  Let $\triang_C \subset \triang$ be a connected, convex component.
  Given $t_s \in \triang_C$ we can report all triangles in $\triang_C$ with 
  $\BigOh{\frac{|\triang_C|}{\lg{B}}}$ I/Os.
  \end{lemma}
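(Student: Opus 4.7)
The plan is to perform a memoryless depth-first search on $\triang_C$ by exploiting the entry-edge framework already set up in the chapter. First, I would fix a reference point $s$ in the interior of $t_s$, which via De~Berg~\etal's rule assigns a unique entry edge to every triangle of $\triang$ other than $t_s$. As recalled just before the lemma, deleting all non-entry edges from $\dual{\triang}$ turns it into a tree rooted at $t_s$. The crux of the proof is to argue that the restriction of this tree to the red (i.e.\ in-component) nodes is still a tree with the same root; equivalently, for every $t \in \triang_C \setminus \{t_s\}$, the entry-edge parent $\pi(t)$ is again red.

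This is precisely where convexity of $\triang_C$ is used. Recall that $\pi(t)$ is the triangle sharing the edge $e^*$ of $t$ that minimizes the distance from $s$; in the vertex-of-$t$ sub-case, one selects the edge exposed to $s$. In either case the point $s'$ realizing the minimum distance lies on the segment from $s$ to the interior of $t$ chosen by the rule, so the interior of $\pi(t)$ meets the open segment between $s$ and $s'$. Because $s$ lies in $t_s \subseteq \triang_C$ and $s'$ lies on $\partial t \subseteq \overline{\triang_C}$, the convexity of $\triang_C$ forces the segment, and hence $\pi(t)$, to lie inside $\triang_C$. Thus $\pi(t)$ is red, and the red nodes form a subtree $\tree_C$ of the entry-edge tree with root $t_s$.

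Given $\tree_C$, the algorithm is the standard memoryless DFS: at a red triangle $t$, examine each of its (at most three) neighbours $t'$; descend into $t'$ only if $t'$ is red and its entry edge is the edge shared with $t$; after all recursive calls return, back up through the entry edge of $t$. Entry-edge tests take $O(1)$ arithmetic per triangle since each triangle has only three edges. Each edge of $\tree_C$ is traversed twice and each red triangle probes a constant number of neighbours, so the whole procedure visits a walk of length $O(|\triang_C|)$ in $\dual{\triang}$, and every triangle ever touched is either in $\triang_C$ or shares an edge with one. Invoking Theorem~\ref{thm:terrain_traversal} on this walk gives the stated $O(|\triang_C|/\lg B)$ I/O bound. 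The main obstacle is the convexity-to-subtree step in the second paragraph; once that is nailed down the rest is routine application of the path-traversal machinery from Section~\ref{sec:tins}.
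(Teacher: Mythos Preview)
Your proof follows essentially the same route as the paper: fix the entry-edge tree rooted at $t_s$, use convexity of $\triang_C$ to show the red triangles form a subtree, run a memoryless DFS on that subtree (a walk of length $O(|\triang_C|)$ in the dual), and invoke Theorem~\ref{thm:terrain_traversal} for the I/O bound. The paper frames the convexity step by contradiction (a black node $b$ with a red parent has its entry edge on the hull, forcing any red descendant of $b$ outside the hull) whereas you argue directly that each red node's parent is red, but the content is the same.

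One small caveat: your assertion that ``the interior of $\pi(t)$ meets the open segment between $s$ and $s'$'' need not hold when $s'$ is a vertex of $t$---the segment $ss'$ can approach that vertex through some other incident triangle rather than through $\pi(t)$. The fix is easy: De~Berg~\etal's rule always selects an edge \emph{exposed} to $s$, so $s$ and the interior of $t$ lie on opposite sides of the line through the entry edge $e^*$; but if $\pi(t)$ were black then $e^*$ would lie on $\partial\triang_C$, and by convexity that line would support $\triang_C$, forcing $s$ and $t$ to the same side, a contradiction. This is effectively what the paper's hull argument encodes.
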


  \begin{proof}
  Performing a depth-first search on a tree on $|\triang_C|$ vertices is equivalent 
  to walking a path of length at most $4 \cdot |\triang_C|$.
  By Theorem \ref{thm:terrain_traversal}, our data structures permit a walk along 
  such a path with a total I/O cost of $\BigOh{ \frac{|\triang_C|}{\lg{B}} }$.

  Let $s$ be some point interior to $\triang_C$, and let $t_s \in \triang_C$ be
  the triangle containing $s$.
  Starting with $t_s$, we implicitly construct a tree in $\dual{\triang}$ using the 
  selection method of~\cite{DBLP:journals/gis/BergKOO97}.
  Since $\triang_C$ is convex, removing all subtrees rooted at black nodes results 
  in an implicit tree, rooted at $t_s$, which includes all red nodes in $\triang_C$
  (see Figures \ref{fig:imp_tree_entry} and \ref{fig:implicit_tree}).
  It remains to be proven that the removal of all subtrees rooted at black nodes
  will not result in the removal of any red nodes from the tree.
  Assume that removing the subtree rooted at some black node, $b$ removes some
  red node $r$. 
  Let $q$ be the parent of $b$ in $\triang_C$. 
  The entry edge corresponding to dual edge $(q,b)$ must be on the convex
  hull of $\triang_C$.
  However, this means that $r$ must be outside the convex hull of $\triang_C$, which
  is a contradition.
  Therefore, no such triangle (node) $r$ can be removed.
  \end{proof}
  
  We give the following corollary to Lemma \ref{lem:report_convex_component}
  with application to rectangular window queries.
  Such queries are a specialization of reporting a convex region 
  and occur commonly in practice.

  \begin{corollary}\label{cor:window_query}
  Given a terrain $\triang$, and a query window $W$, the set of 
  triangles which intersect the query window can be 
  reported with $O \left( \frac{|\triang_W|}{\lg{B}} \right)$ I/Os. 
  \end{corollary}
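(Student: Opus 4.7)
The plan is to reduce the window query to an instance of the convex connected-component traversal of Lemma~\ref{lem:report_convex_component}, setting the property $\mathcal{P}(t)$ to ``$t$ intersects $W$.'' First I would obtain a starting triangle $t_s \in \triang_W$ either directly from the query or by a single point-location query on a point of $W \cap \triang$ using Theorem~\ref{thm:terrain_with_point_location}, at a cost of $O(\log_B N)$ I/Os; this cost is absorbed by the main traversal whenever the output is non-trivial. Next I would pick a reference point $s$ in the interior of $W \cap t_s$ and construct the implicit entry-edge tree on $\dual{\triang}$ rooted at $t_s$ on the fly from $s$ using the rule of \cite{DBLP:journals/gis/BergKOO97}, exactly as in the proof of Lemma~\ref{lem:report_convex_component}.

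The hard part will be adapting the pruning argument, because, unlike the statement of Lemma~\ref{lem:report_convex_component}, the set $\triang_W$ is not itself convex as a union of triangles---only the query window $W$ is convex. I would argue as follows: let $b$ be a black triangle (one that does not intersect $W$) whose parent in the implicit tree is a red triangle $q$, and let $e$ be the entry edge of $b$. Since $e$ is an edge of $b$ and $b \cap W = \emptyset$, the segment $e$ itself is disjoint from $W$; combined with the fact that $s \in W$ lies on the $q$-side of $e$ by the entry-edge rule, convexity of $W$ then forces $W$ to lie entirely on the $q$-side of the supporting line of $e$. By the standard property of the entry-edge tree, every triangle in the subtree rooted at $b$ lies on the $b$-side of $e$, so no such triangle can meet $W$. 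Hence the entire subtree below $b$ is black and can safely be skipped without losing any red triangle; this simultaneously shows that $\triang_W$ is connected through the implicit tree.

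With the pruning in place, the depth-first traversal visits each triangle of $\triang_W$ together with at most a constant number of immediate black neighbours per red triangle (those whose blackness is detected when computing the entry-edge structure locally), for a total walk of length $O(|\triang_W|)$ in $\dual{\triang}$. Applying Theorem~\ref{thm:terrain_traversal} to charge this walk to disk accesses then yields the claimed bound of $O(|\triang_W|/\lg B)$ I/Os.
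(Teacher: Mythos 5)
There is a genuine gap in the pruning argument. You try to use the \emph{unmodified} entry-edge rule and justify pruning at black nodes by a half-plane argument, but both of its steps fail. First, from $e\cap W=\emptyset$ and $s\in W$ on the $q$-side of the supporting line $\ell$ of $e$, convexity of $W$ does \emph{not} force $W\subseteq$ the $q$-side of $\ell$: the line $\ell$ may cut through $W$ at a place far from the segment $e$ (e.g.\ $W$ a long thin rectangle and $e$ a short boundary edge near one end), so $W$ can straddle $\ell$ while still missing $e$. Second, ``every triangle in the subtree rooted at $b$ lies on the $b$-side of $e$'' is not a property of the entry-edge tree of~\cite{DBLP:journals/gis/BergKOO97}; the subtree of $b$ is only the set of triangles whose greedy walk toward $s$ passes through $b$, and it carries no such half-plane containment. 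Indeed the conclusion itself is false: take $W$ a thin rectangle with $s$ near its left end and a triangle $r$ near the right end that pokes into $W$ only at a tip, with the point of $r$ closest to $s$ lying below $W$; then $r$'s entry edge leads to a parent entirely outside $W$, so $r$ is red with a black parent and your pruned traversal never reports it. This is exactly why $\triang_W$ need not induce a connected subtree of the unmodified tree even though $W$ is convex.

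The paper's proof avoids this by invoking the modification from de~Berg~\etal~\cite{DBLP:journals/gis/BergKOO97}: the entry-edge selection is restricted to the portions of edges lying inside $W$ (edges, or parts of edges, outside $W$ are disqualified). With distances measured only to the clipped edge portions, the parent of a triangle intersecting $W$ is guaranteed to intersect $W$ as well, and the clipped rule yields an implicit rooted tree on exactly the triangles of $\triang_W$, on which the memoryless depth-first traversal runs in $O(|\triang_W|/\lg B)$ I/Os as in Lemma~\ref{lem:report_convex_component}. Your reduction to the connected-component framework, the choice of $\mathcal{P}$, and the final I/O accounting are all fine once this clipping step is inserted; the point-location preamble is harmless but unnecessary, since the paper assumes the start triangle is given with the query.
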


  \begin{proof}
  The query window problem is equivalent to reporting 
  a connected component where the selection property is that a triangle 
  intersects the query window $W$. 
  The region to be reported is convex, but care must be taken with 
  triangles that intersect the query window boundary. 
  In \cite{DBLP:journals/gis/BergKOO97} it is shown that the entry edge selection 
  rule can be modified to disqualify any edge (or portion thereof) that 
  is outside $W$. 
  Using this modification, an implicit rooted tree is still formed on 
  the triangles of $\triang_W$, and we can perform a memoryless 
  depth-first traversal.
  \end{proof}

  \subsubsection{General Connected Components}\label{ssec:gen_con_comp}

  In this section, we present an algorithm that reports connected components 
  that may be non-convex and/or contain holes.
  Using the terminology developed in Section~\ref{ssec:conv_con_comp}, we 
  say $\concomp$ is a connected component in $\triang$, corresponding 
  to $\dual{\concomp}$ in $\dual{\triang}$. 
  Recall that $\dual{\concomp}$ forms a connected component of red nodes 
  in the subgraph formed by all red nodes in $\dual{\triang}$.
  In $\triang$, we can think of the boundary of a connected component being 
  formed by all edges along the outer perimeter of the component 
  in addition to any edges along the perimeters of holes. 
  In the dual, the boundary corresponds to the set of edges connecting 
  red vertices in $\dual{\concomp}$, with black vertices in $\dual{\triang}$. 

  In order to report $\concomp$, we select some triangle $s \in \concomp$ as our
  start triangle. 
  We select entry edges relative to an arbitrary point interior to $s$.
  Since $\triang$ itself is convex, we can construct an implicit tree
  on the nodes of $\dual{\triang}$, rooted at $s$.
  We denote this tree $\dual{\triang}_T$.
  
  Let $r$ and $b$ be red and black vertices in
  $\dual{\concomp}$, corresponding to adjacent triangles in $\triang$.
  Let $e$ be the edge of $\triang$ separating triangles $r$ and $b$.
  In $\dual{\triang}$, $e$ corresponds to the edge $(r,b)$.
   While this edge is undirected in $\dual{\triang}$, for the sake of the
  discussion that follows we will assume that in $\dual{\triang}_T$, it
  is directed towards the root, $s$.
  We call such edges \emph{boundary} edges, and classify them according
  to their role in the implicit tree $\dual{\triang}_T$, as follows 
  (see Fig.\ref{fig:imp_tree_non-convex}):
  
  \begin{enumerate}
    \item if $(r,b) \notin \dual{\triang}_T$, then $e$ is a \emph{wall}\footnote{
    All edges on the convex hull of $\triang$ are considered wall edges.} edge,
    \item if $r$ is the parent of $b$, then $e$ is an \emph{access} edge, and 
    \item if $b$ is the parent of $r$, then $e$ is an \emph{exit} edge.
  \end{enumerate}

  \begin{figure}[th]
	  \centering
		  \includegraphics[width=0.6\textwidth]{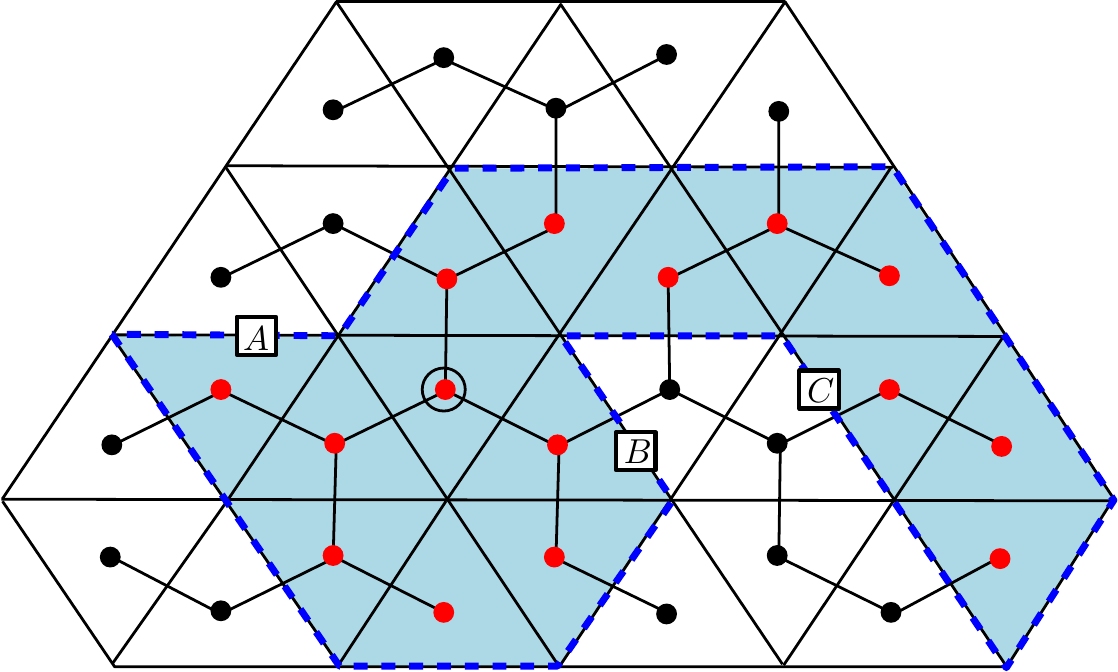}
	  \caption[Boundary edge definitions]{A triangulation and the implicit tree 
	  formed by the entry edges selected for the circled vertex $\dual{t}_s$. 
	  In this case, the connected component of $t_s$ is non-convex so that the tree 
	  formed by only red nodes is disconnected. 
	  The boundary edges of the triangulation are indicated by the heavy dashed 
	  blue line. 
	  Using our notation for boundary edges, A is a \emph{wall} edge, B is an 
	  \emph{access} edge where the tree enters the component, and C is an 
	  \emph{exit} edge where the tree leaves the component. }
	  \label{fig:imp_tree_non-convex}
  \end{figure}

  We report the triangles in $\concomp$ using the algorithm 
  $\funccase{DepthFirstTraversal}$ listed in Fig. \ref{fig:traverse_graph_algorithm}.
  $\funccase{DepthFirstTraversal}$ performs a standard depth-first traversal in 
  $\dual{\triang}_T$, with one important modification. 
  If a branch of the algorithm's execution terminates at an access boundary 
  edge, then the function $\funccase{ScanBoundary}$ (Fig. \ref{fig:scanhandrail_alg}) 
  is invoked. 
  $\funccase{ScanBoundary}$ traverses the chain of boundary edges, and recursively calls 
  $\funccase{DepthFirstSearch}$ at each exit edge. 
  A search structure, $\mathcal{V}$, is maintained to ensure that no boundary edge is 
  scanned more than once.  
  Whenever an access edge is visited during the $\funccase{ScanBoundary}$ 
  or $\funccase{DepthFirstSearch}$ 
  processes, it is added to the search structure $\mathcal{V}$ if it is not already 
  present. 
  If an access edge is encountered that is already in $\mathcal{V}$, then execution 
  of $\funccase{ScanBoundary}$ is halted. 
  Likewise, when $\funccase{DepthFirstTraversal}$ encounters an access edge already in 
  $\mathcal{V}$, it does not invoke $\funccase{ScanBoundary}$. 
  Figure \ref{fig:cmp_with_holes} demonstrates the operation of
  $\funccase{DepthFirstSearch}$ and $\funccase{ScanBoundary}$.

  \begin{figure}[h]
  \protect \framebox[0.95\linewidth]
  {
  \begin{minipage}{0.90\linewidth}

  \textbf{Algorithm $\textsc{\funccase{DepthFirstTraversal}}(t_s,s)$} \\
  1. \hspace{3pt} $t \leftarrow t_s$ \\
  2. \hspace{3pt} \textbf{do} \\
  3. \hspace{18pt} \textbf{if}($t$ has unvisited children) \\
  4. \hspace{33pt} $c \leftarrow $ next unvisited child of $t$ \\
  5. \hspace{33pt} \textbf{if}($\mathcal{P}(c) \ne \mathcal{P}(t)$) \\
  6. \hspace{48pt} $e \leftarrow$ boundary edge corresponding to $(c,t)$ \\
  7. \hspace{48pt} \textbf{if}($e$ is an access edge \textbf{and} $e \notin \mathcal{V}$) \\
  8. \hspace{63pt} $\funccase{ScanBoundary(e, t)}$ \\
  9. \hspace{48pt} \textbf{endif} \\
  10. \hspace{30pt} \textbf{else} \\
  11. \hspace{45pt} $t \leftarrow c$ \\
  12. \hspace{30pt} \textbf{endif} \\
  13. \hspace{15pt} \textbf{else} \\
  14. \hspace{30pt} $t \leftarrow$ parent of $t$ \\
  15. \hspace{15pt} \textbf{endif} \\
  16. \textbf{until}($t$ has no more unvisited children \textbf{and} $t=t_s$) \\
  \end{minipage}
  }
  \caption{Algorithm $\funccase{DepthFirstTraversal}$. }
  \label{fig:traverse_graph_algorithm}
  \end{figure}

  \begin{figure}[h]
  \protect \framebox[0.95\linewidth]
  {
  \begin{minipage}{0.90\linewidth}

  \textbf{Algorithm $\textsc{\funccase{ScanBoundary}}(e,t,s)$} \\
  1. \hspace{3pt} $e' \leftarrow$ next boundary edge in counterclockwise direction from $e$ \\
  2. \hspace{3pt} \textbf{do} \\
  3. \hspace{18pt} \textbf{if}($e'$ is an access edge) \\
  4. \hspace{33pt} \textbf{if}($e' \in \mathcal{V}$) \\
  5. \hspace{48pt} \textbf{break} \\
  6. \hspace{33pt} \textbf{else} \\
  7. \hspace{48pt} add $e'$ to $\mathcal{V}$ \\
  8. \hspace{33pt} \textbf{endif} \\
  9. \hspace{18pt} \textbf{endif} \\
  10. \hspace{15pt} \textbf{if}($e'$ is an exit edge) \\
  11. \hspace{30pt} select triangle $t$ adjacent to $e'$ \\
  12. \hspace{30pt} $\funccase{DepthFirstTraversal(t,s)}$ \\
  13. \hspace{15pt} \textbf{endif} \\
  14. \textbf{until}(e' == e) \\
  \end{minipage}
  }
  \caption{Algorithm $\funccase{ScanBoundary}$. }
  \label{fig:scanhandrail_alg}
  \end{figure}

  \begin{figure}[th]
	  \centering
		  \includegraphics[width=0.8\textwidth]{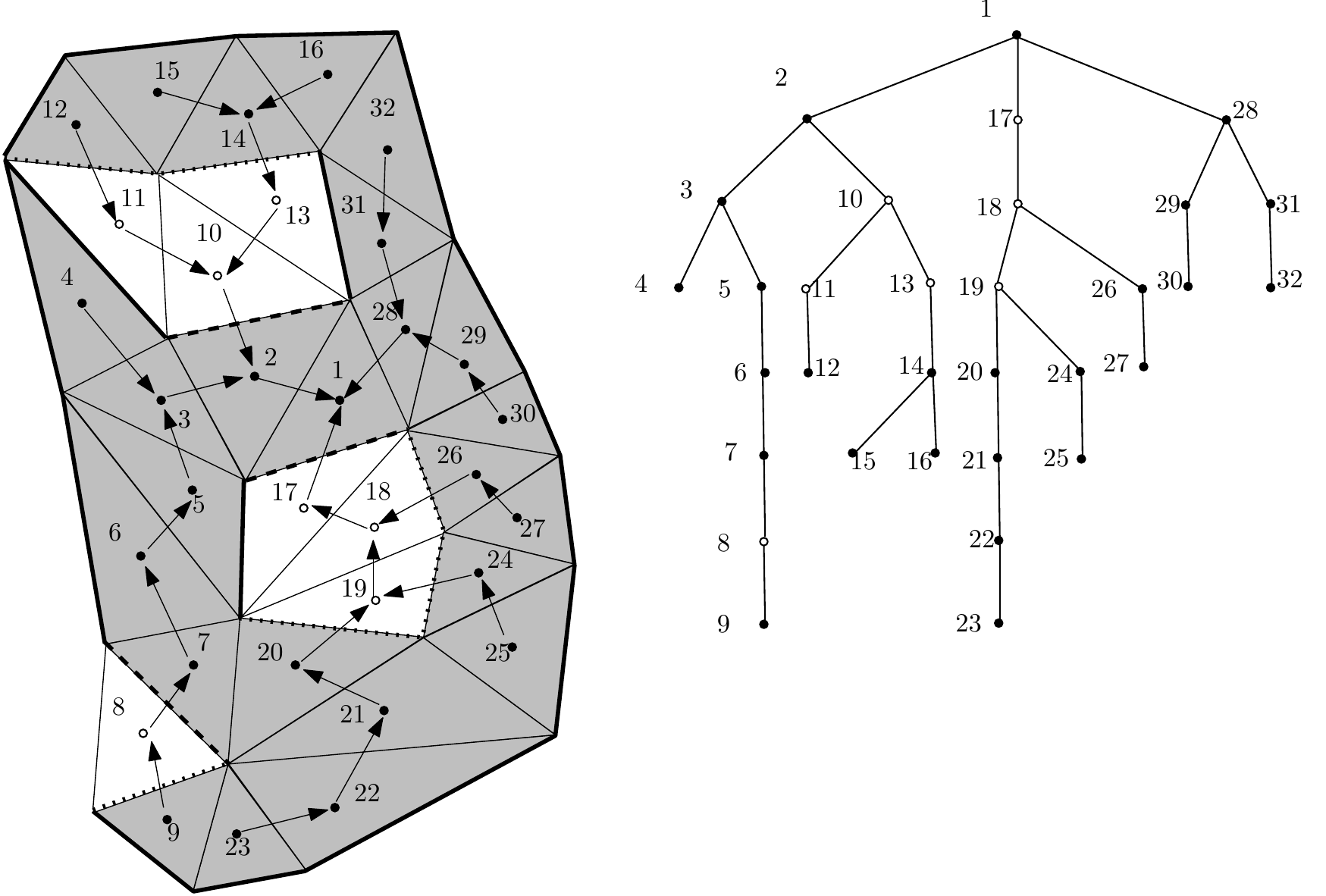}
	  \caption[Traversal example on non-convex region with holes]{On the left is shown a 
	  triangulation $\triang$ and connected component $\triang_C$ (shaded) 
	  with a non-convex boundary and holes. 
	  The original tree $G_T$ is shown on the right, with vertices labeled by preorder 
	  number for the entire triangulation $\triang$. 
	  Hallow vertices correspond to vertices removed from $G_T$ by holes and/or 
	  concavities in the boundary of $\triang_C$. 
	  The parent-child relationships, as well as the vertex labels are also shown on $\triang$. 
	  In $\triang$ boundary edges are thick lines, with entry edges being dashed 
	  and exit edges dotted. 
	  The original call to $\funccase{DepthFirstTraversal}$ first encounters the 
	  entry edge 
	  (8,7), from which $\funccase{ScanBoundary}$ will visit exit edges (9,8), 
	  (12,11), and 
	  (14,13), in that order, before terminating back at $(8,7)$. 
	  This initial call to $\funccase{ScanBoundary}$ results in subtrees rooted
	  at vertices 
	  $9$, $12$, and $14$ being reported. 
	  $\funccase{ScanBoundary}$ is not invoked from entry edge (10,2), as this 
	  edge is added 
	  to $\mathcal{V}$ during the invocation of $\funccase{ScanBoundary}$ from 
	  $(8,7)$. 
	  $\funccase{ScanBoundary}$ is, however, called from $(17,1)$, which results 
	  in entry 
	  edges $(26,18)$, $(24,19)$ and $(20,19)$ being visited and the subtrees 
	  rooted at vertices $26$, $24$, and $20$ being reported.
	  The dashed arrows represent cases where $\funccase{ScanBoundary}$ jumps 
	  between
	  \emph{access} edges and \emph{exit} edges.}
	  \label{fig:cmp_with_holes}
  \end{figure}

  \begin{lemma}\label{lem:cc_holes_alg_correctness}
  Given a triangle $t_s \in \concomp$, the algorithms 
  $\funccase{DepthFirstTraversal}$ and $\funccase{ScanBoundary}$ report all
  triangles in $\concomp$. 
  \end{lemma}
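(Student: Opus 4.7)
The plan is to verify two claims: (i) the algorithm never reports a black triangle, and (ii) it reports every triangle of $\concomp$. Claim~(i) is immediate from inspection of the pseudocode: the recursive descent in $\funccase{DepthFirstTraversal}$ only crosses to a child sharing the current node's property (line~11), and the recursive call issued by $\funccase{ScanBoundary}$ (line~12) starts at the red endpoint of an exit edge; since the initial call begins at $t_s\in\concomp$, every triangle visited during tree descent is red. The work of the proof is claim~(ii).

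To establish (ii), I would reduce the problem to a combinatorial statement about the spanning tree $\dual{\triang}_T$. Let $F$ denote the subforest of $\dual{\triang}_T$ induced by the red vertices of $\dual{\concomp}$. Each maximal component of $F$ is a subtree of $\dual{\triang}_T$ whose root is either $t_s$ or a red vertex whose $\dual{\triang}_T$-parent is black. Because a complete DFS invocation visits every vertex of the tree of $F$ in which it starts, it suffices to show that each tree of $F$ is the starting point of some $\funccase{DepthFirstTraversal}$ call generated by the execution.

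I would prove this by induction on the depth in $\dual{\triang}_T$ of the root of each tree of $F$. The base case is $T_0$, the tree containing $t_s$, which is explored by the initial call. For the inductive step, let $T_j$ have root $r_j$ at depth $d\ge 2$, so its parent $b_j$ in $\dual{\triang}_T$ is black and lies on the boundary of some maximal black region $\mathcal{B}_i$ bordered by a boundary curve $C_i$. The geometric lemma underpinning the step is that every boundary curve $C_i$ hosts an access edge whose red endpoint is strictly shallower than $r_j$. Take $b^*\in\mathcal{B}_i$ minimising the $\dual{\triang}_T$-depth; its parent $p^*$ cannot lie in $\mathcal{B}_i$ (by minimality) nor in another black region (any two $\dual{\triang}$-adjacent black triangles lie in the same black component), so $p^*$ is red and $(p^*,b^*)$ is an access edge on $C_i$ with $\mathrm{depth}(p^*)\le d-2$. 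Consequently $p^*$ belongs to a tree $T_{j'}$ of $F$ whose root has depth at most $d-2$, and by the inductive hypothesis $T_{j'}$ is explored. When the DFS of $T_{j'}$ visits $p^*$ and iterates to its child $b^*$, it either invokes $\funccase{ScanBoundary}$ on $(p^*,b^*)$ or finds this edge already in $\mathcal{V}$; either way $C_i$ has been walked in full, so the exit edge $(r_j,b_j)$ has been processed and a recursive $\funccase{DepthFirstTraversal}$ launched from $r_j$, discovering $T_j$.

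The hard part will be the access-edge claim together with the bookkeeping around $\mathcal{V}$. For the latter I would argue that the very first $\funccase{ScanBoundary}$ invocation on $C_i$ traverses the entire curve: before that call no access edge on $C_i$ is in $\mathcal{V}$, so the early-exit guard at line~5 of $\funccase{ScanBoundary}$ never fires until the cursor returns to the initial edge, guaranteeing that every exit edge on $C_i$ is processed during that single walk. Combined with the inductive argument above, this yields that the set of reported triangles equals $\concomp$ and completes the proof.
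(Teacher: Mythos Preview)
Your key bookkeeping claim—that the first $\funccase{ScanBoundary}$ invocation on $C_i$ traverses the entire curve because no access edge on $C_i$ is in $\mathcal{V}$ beforehand—does not hold. During that walk, each exit edge spawns a recursive $\funccase{DepthFirstTraversal}$, which may reach an access edge $e'$ lying on the \emph{same} curve $C_i$ that the outer walk has not yet visited; since $e'\notin\mathcal{V}$ at that moment, a nested $\funccase{ScanBoundary}$ on $C_i$ fires and inserts access edges into $\mathcal{V}$ ahead of the outer cursor, so when the outer walk resumes it can hit one of those and break at line~5 before completing the loop. The paper avoids this via a different organization: it follows the tree path $t_s\leadsto t$ for an arbitrary $t\in\concomp$, and at each maximal subpath outside $\concomp$ it argues by a short \emph{chain of blocking calls} (if a walk is blocked at $e\in\mathcal{V}$, the walk that inserted $e$ continued further along the curve, and iterating this chain eventually reaches the target exit edge). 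Your induction on subtree-root depth is a reasonable alternative framing, but it still needs that chain argument in place of the incorrect ``first call covers everything'' step.

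There is a second, smaller gap in the access-edge claim. Taking $\mathcal{B}_i$ to be the maximal \emph{black} region containing $b_j$ only yields that $p^*$ is red, not that $p^*\in\concomp$: a hole of $\concomp$ can contain red components other than $\concomp$, and the minimum-depth vertex $b^*$ of the black component may have its tree-parent in one of those, in which case $(p^*,b^*)$ is not a boundary edge of $\concomp$ at all and the inductive hypothesis does not apply to $p^*$. The easy fix is to let $\mathcal{B}_i$ be the connected component of the complement of $\concomp$ containing $b_j$ (equivalently, the region bounded by $C_i$ on the non-$\concomp$ side); then $p^*\notin\mathcal{B}_i$ immediately forces $p^*\in\concomp$, and $(p^*,b^*)$ is genuinely an access edge on $C_i$ as your argument requires.
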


  \begin{proof}
  Let $s$ be the start triangle selected for $\funccase{DepthFirstTraversal}$,
  and let $t \in \concomp$ be a node in $\dual{\triang}$ not reported by 
  $\funccase{DepthFirstTraversal}$.
  Since $s$ is the root of $\dual{\triang}_T$, there is a path connecting $s$ 
  to $t$ in  $\dual{\triang}_T$. 
  If all nodes on the path from $s$ to $t$ in $\dual{\triang}_T$ are red, then
  clearly $\funccase{DepthFirstTraversal}$ reports $t$.
  Thus, there must be black nodes on the path $s \leadsto t$, corresponding to
  the path leaving $\concomp$.
  Let $w$ and $u$ be the first, and last, nodes encountered on the first such
  black subpath on $s \leadsto t$ (it may be the case that $w = u$).
  Let $w'$ be the red node preceeding $w$ on $s \leadsto t$, and let $u'$ be
  the red node following $u$ on the same path (see 
  Figure~\ref{fig:component_subpaths}).
  The edge $(w'w)$ is an exit edge, while $(u,u')$ is an access edge under
  our definitions. 
  The subpath $s \leadsto w'$ is wholly contained in $\concomp$, thus the call
  to $\funccase{DepthFirstTraversal}$ from $s$ reaches $w'$.
  If the access edge corresponding to $(w',w)$ has not 
  been previously visited, the $\funccase{ScanBoundary}$ algorithm is invoked.
  The call to $\funccase{ScanBoundary}$ visits the edge $(u,u')$, unless it
  is blocked when it encounters a previously visisted access edge in 
  $\mathcal{V}$.
  This may  occur if one of the recursive 
  calls to $\funccase{DepthFirstSearch}$ made at exit edges during 
  $\funccase{ScanBoundary}$ 
  encounters the same boundary. 
  However, if $\funccase{ScanBoundary}$ is blocked from visiting $(u,u')$ in 
  such a fashion, then $\funccase{ScanBoundary}$ must have been called
  from the blocking access edge. 
  Let $s_0, s_1, \ldots, s_i$ be a sequence of such blocking calls to 
  $\funccase{ScanBoundary}$ along the boundary chain between $(w',w)$ 
  and $(u,u')$. 
  Clearly, the last such call, $s_i$, will result in $(u,u')$ being visited. 
  This same argument can be applied to any other subpaths leaving 
  $\concomp$ on $s \leadsto t$, and as such $\dual{t}$ is visited by 
  $\funccase{DepthFirstTraversal}$.
  \end{proof}

  \begin{figure}[th]
	  \centering
		  \includegraphics[width=0.6\textwidth]{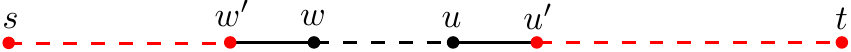}
	  \caption[General component subpaths]{The path through the tree 
	  $\dual{\triang}_T$ connecting nodes $s$ and $t$, which are both
	  part of the connected component $\concomp$. The path
	  leaves $\concomp$ at $(w',w)$ and re-enters at $(u,u')$.}
	  \label{fig:component_subpaths}
  \end{figure}
  
  The I/O cost of reporting a general connected component includes
  the cost to walk the triangles of $\concomp$, plus the cost
  of walking the boundary of $\concomp$.
  We denote by $h$ the size of the boundary, and summarize our I/O
  costs with the following lemma.

  \begin{lemma}\label{lem:io_eff_ccomp}
  The algorithm $\funccase{DepthFirstTraversal}$ reports a connected 
  component 
  $\concomp$ with $h$ boundary edges using 
  $O\left( \frac{|\concomp|}{\lg{\succblksize}} \right) + O(h \log_{B}{h})$ I/Os.
  \end{lemma}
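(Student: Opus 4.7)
The plan is to decompose the I/O cost of $\funccase{DepthFirstTraversal}$ into three components: (i) the cost of the depth-first walk through triangles in $\concomp$ itself, (ii) the cost of walking the boundary chains inside $\funccase{ScanBoundary}$, and (iii) the cost of maintaining the auxiliary structure $\mathcal{V}$ that prevents boundary chains from being scanned more than once. Each piece is handled separately and then summed.

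For (i), I would observe that the invocations of $\funccase{DepthFirstTraversal}$, taken together, explore precisely the red nodes of $\dual{\triang}_T$ and the first black child encountered along each tree edge out of $\concomp$. By Lemma~\ref{lem:cc_holes_alg_correctness} each red node is visited once, and the number of black nodes visited in this way is at most $h$ (one per access edge). Since this exploration follows tree edges and visits each edge at most twice, it traces a path in $\triang$ of length $O(|\concomp|+h)$. Applying Theorem~\ref{thm:terrain_traversal} gives an I/O cost of $O((|\concomp|+h)/\lg B) = O(|\concomp|/\lg \succblksize) + O(h/\lg B)$, absorbing the term $\lg B = \Theta(\lg \succblksize)$.

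For (ii), the key claim is that each boundary edge participates in $O(1)$ boundary-scan steps over the entire execution. Once $\funccase{ScanBoundary}$ processes an access edge it inserts it into $\mathcal{V}$, and the test in line 4 of $\funccase{ScanBoundary}$ ensures that the scan terminates the first time it reaches an access edge already in $\mathcal{V}$. Consequently, the counterclockwise scans together cover each of the $h$ boundary edges a constant number of times; the total walk along the triangulation induced by these scans has length $O(h)$, which by Theorem~\ref{thm:terrain_traversal} costs $O(h/\lg B)$ I/Os. The main subtlety here, and the principal obstacle I anticipate, is arguing that recursive re-entries into $\funccase{ScanBoundary}$ through the chain $s_0, s_1, \ldots, s_i$ described in the proof of Lemma~\ref{lem:cc_holes_alg_correctness} cannot cause any boundary edge to be traversed more than a constant number of times; this follows by charging each scan step to a distinct access edge insertion into $\mathcal{V}$, since a scan never crosses an access edge already present in $\mathcal{V}$.

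For (iii), implement $\mathcal{V}$ as an I/O-efficient dictionary, e.g.\ a B-tree on boundary-edge identifiers, supporting insertions and membership queries in $O(\log_B h)$ I/Os each. Throughout the execution we perform $O(h)$ such operations (one per boundary edge encountered), for a total of $O(h \log_B h)$ I/Os. Summing (i), (ii), and (iii), the dominant terms are $O(|\concomp|/\lg \succblksize)$ from the tree walk and $O(h \log_B h)$ from the dictionary operations, the $O(h/\lg B)$ terms being absorbed into $O(h \log_B h)$. This yields the claimed bound of $O(|\concomp|/\lg \succblksize) + O(h \log_B h)$ I/Os.
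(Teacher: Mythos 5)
Your three-part decomposition (the tree walk, the boundary scans, and a B-tree for $\mathcal{V}$) matches the paper's, and parts (i) and (iii) are sound. The gap is in part (ii): the claim that the walks induced by $\funccase{ScanBoundary}$ have total length $O(h)$ because each boundary edge participates in $O(1)$ boundary-scan steps. Advancing from one boundary edge to the next counterclockwise one is not a single move in the triangulation: when two consecutive boundary edges meet only at a vertex $v$, the scan must pivot through every triangle of $\concomp$ in the fan around $v$ between them, and $v$ can have arbitrarily high degree in the primal triangulation (only the dual is bounded-degree). A single hole with $h = O(1)$ boundary edges can therefore force $\omega(h)$ triangle visits. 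Since the quantity controlled by Theorem~\ref{thm:terrain_traversal} is the number of triangles visited, not the number of edge-to-edge transitions, your $O(h/\lg B)$ bound for this component does not follow as stated.

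The paper closes this by charging scan steps to triangles of $\concomp$ rather than to boundary edges: any triangle of $\concomp$ can be visited by boundary scans at most three times, since it can be adjacent to at most three distinct boundary chains, so the scans add at most $3|\concomp|$ to the total walk length --- a constant-factor increase absorbed into the $O\left(\frac{|\concomp|}{\lg\succblksize}\right)$ term. With that substitution your argument goes through, and the $h$-dependence in the final bound survives only through the $O(h\log_B h)$ cost of maintaining $\mathcal{V}$, exactly as in your part (iii).
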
 

  \begin{proof}
  Performing depth-first traversal is equivalent to a walk of 
  length at most $4|\concomp|$, which can be performed in 
  $O\left( \frac{|\concomp|}{\lg{\succblksize}} \right)$ I/Os. 
  We must also account for the length of the paths traversed by calls 
  to $\funccase{ScanBoundary}$. 
  Any triangle in $\concomp$ may be visited at most three times, since 
  it may be adjacent to no more than three different boundary chains. 
  Thus, the total additional length of the walks associated with calls 
  to $\funccase{ScanBoundary}$ is bounded by $3|\concomp|$, which 
  increases the length of the path traversed by a constant factor. 

  We must also account for the number of I/Os required to maintain 
  and query the boundary edge structure $\mathcal{V}$. 
  There are $h$ boundary edges, and in the worst case most of these 
  edges may be access edges. 
  Using a B-tree to store $\mathcal{V}$ 
  supports insertions and queries in $O(\log_B{h})$ time.  
  An access edge is added to $\mathcal{V}$ only once, and is visited at 
  most one additional time. 
  Thus the total cost to maintain and query $\mathcal{V}$ is $O(h \log_B{h})$.  
  \end{proof}

  Finally, we account for the space used to store $\mathcal{V}$. 
  Since our triangulation does not store edges, we must have some
  way of uniquely identifying the edges in $\mathcal{V}$.
  We do so by storing the coordinates of the midpoint of each edge, which
  serves as our search key.
  For $\mathcal{V}$ we use a B-tree, and for each edge store a $\bitsPerPoint$-bit 
  key plus a $\lg{N}$ bit pointer. 
  Thus the space for this structure is $\OhOf{h \cdot(\bitsPerPoint + \lg{N})}$ 
  bits. 
  In theory, the size of this structure could be as large as $\concomp$,
  but in many scenarios it will be significantly smaller.  
  The following theorem summarizes our results 
  for convex and general connected components. 
  The space bound adds the space for $\mathcal{V}$ to the bound from
  Theorem \ref{thm:terrain_traversal}.
  The I/O bounds are obtained from Lemmas \ref{lem:report_convex_component} 
  and \ref{lem:io_eff_ccomp}.

  \begin{theorem}\label{thm:conn_comp}
  A triangulation $T$, with $\bitsPerKey$-bit keys per triangle, may be 
  stored using 
  $N(\bitsPerPoint + \bitsPerKey) + O(N) + o(N(\bitsPerPoint + \bitsPerKey)$ 
  bits such that a 
  connected component $\concomp$ may be reported using 
  $O \left( \frac{|\concomp|}{\lg{B}} \right)$ I/Os if $\concomp$
  is convex.  
  If $\concomp$ may be non-convex or have holes, then the query requires 
  $O\left( \frac{|\concomp|}{\lg{B}}  + h \log_B{h}\right)$ 
  I/Os, plus an additional $\OhOf{h \cdot(\bitsPerPoint + \lg{N})}$ bits of storage, 
  where $h$ is the number of boundary edges of 
  $\concomp$.
  \end{theorem}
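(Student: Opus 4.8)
The plan is to assemble the theorem from the compact, I/O-efficient triangulation representation of Section~\ref{ssec:compact_tin_external} together with the two component-reporting algorithms developed in this section. First I would take the key-bearing triangulation structure from Theorem~\ref{thm:terrain_keys_traversal}: it stores $\triang$ in $N(\bitsPerPoint+\bitsPerKey)+\OhOf{N}+\ohOf{N(\bitsPerPoint+\bitsPerKey)}$ bits and traverses any path crossing $K$ faces with $\OhOf{K/\lg B}$ I/Os, while — as argued in Section~\ref{ssec:compact_tin_external} — it lets us reconstruct the local topology of each visited triangle (its up to three triangle neighbours and three point neighbours) without incurring I/Os beyond those already charged for loading a new component. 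That is exactly the interface the De~Berg~\etal\ entry-edge rule needs, so the memoryless depth-first traversals used below are realizable on this representation.

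For the convex case I would simply invoke Lemma~\ref{lem:report_convex_component}: given the start triangle $t_s\in\concomp$, choosing entry edges relative to an interior point of $t_s$ turns $\dual{\triang}$ into an implicit tree rooted at $t_s$, and by convexity of $\concomp$ deleting the subtrees hanging off black nodes leaves a subtree containing every red node of $\concomp$. A memoryless depth-first traversal of that subtree is a walk of length $\OhOf{|\concomp|}$, which by Theorem~\ref{thm:terrain_keys_traversal} costs $\OhOf{|\concomp|/\lg B}$ I/Os and uses no auxiliary storage, giving the first claim.

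For the general (possibly non-convex or holed) case, correctness of the pair $\funccase{DepthFirstTraversal}$/$\funccase{ScanBoundary}$ is Lemma~\ref{lem:cc_holes_alg_correctness}, and the running time is Lemma~\ref{lem:io_eff_ccomp}, which bounds the work by $\OhOf{|\concomp|/\lg B}$ for the depth-first walks (each triangle visited at most three times, since it touches at most three boundary chains) plus $\OhOf{h\log_B h}$ for maintaining and querying the visited–access-edge set $\mathcal{V}$ with a B-tree. To close I would bill the space of $\mathcal{V}$: it holds at most $h$ records, each a $\bitsPerPoint$-bit edge-midpoint key plus an $\OhOf{\lg N}$-bit pointer, i.e. $\OhOf{h(\bitsPerPoint+\lg N)}$ extra bits; adding this to the bound of Theorem~\ref{thm:terrain_keys_traversal} yields the stated storage. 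Finally, exactly as in the proof of Theorem~\ref{thm:planar_graph}, I would rewrite $\lg\succblksize$ as $\lg B$ using $\succblksize=\Omega(B)$ from Eq.~\ref{eqn:block_size}, so that the I/O bounds appear in the customary form.

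Since this is essentially a packaging statement, there is no deep obstacle; the one point needing care is verifying that replacing the keyless triangulation (Theorem~\ref{thm:terrain_traversal}) by the key-bearing one (Theorem~\ref{thm:terrain_keys_traversal}) preserves everything the component algorithms depend on — the memoryless traversal primitive, the on-the-fly topology reconstruction, and $\OhOf{1}$-I/O component loading — and that $\mathcal{V}$, whose size can in the worst case grow to $\Theta(|\concomp|)$, is correctly reported as \emph{additional} working space rather than folded into the asymptotic storage of the structure.
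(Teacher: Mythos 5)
Your proposal is correct and follows essentially the same packaging strategy as the paper: invoke Lemma~\ref{lem:report_convex_component} for the convex case, Lemmas~\ref{lem:cc_holes_alg_correctness} and~\ref{lem:io_eff_ccomp} for the general case, charge the B-tree $\mathcal{V}$ at $\OhOf{h(\bitsPerPoint+\lg N)}$ extra bits, and convert $\lg\succblksize$ to $\lg B$ via $\succblksize=\Omega(B)$. One small but genuine improvement on your side: the paper's proof sketch points to Theorem~\ref{thm:terrain_traversal} (the keyless representation, $N\bitsPerPoint+\OhOf{N}+\ohOf{N\bitsPerPoint}$ bits) for the base storage, which does not match the $N(\bitsPerPoint+\bitsPerKey)+\OhOf{N}+\ohOf{N(\bitsPerPoint+\bitsPerKey))}$ bound actually claimed; you correctly substitute Theorem~\ref{thm:terrain_keys_traversal}, which is what the statement requires since $\mathcal{P}$ may be a stored per-triangle key.
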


  \subsection{Connected Components Without Additional Storage}
  \label{ssec:no_add_storage}

  One drawback with our technique for reporting connected components 
  in Section~\ref{ssec:gen_con_comp} is that 
  we must store the search structure $\mathcal{V}$ in order to complete the 
  traversal. 
  In this section, we present a revised version of the algorithm that removes 
  the need for this additional data structure, at the cost of performing 
  additional I/O operations.

  Our technique is based on the algorithm of Bose and 
  Morin~\cite{DBLP:conf/isaac/BoseM00}, for the more general case of subdivision 
  traversal in planar subdivisions. 
  That paper, in turn, is a refinement of the algorithm presented by 
  de~Berg~\etal~\cite{DBLP:journals/gis/BergKOO97}. 
  The strategy in both papers is to identify a single entry edge on each face. 
  When an edge is visited while reporting the edges of a face, a check is made 
  to determine if it is the (unique) entry edge for an adjacent face. 
  If the edge proves to be an entry edge, then the adjacent face is entered. 
  The resulting traversal is a depth-first traversal of the faces of the 
  subdivision. 
  In both papers, (\cite{DBLP:conf/isaac/BoseM00} and \cite{DBLP:journals/gis/BergKOO97}),
  the subdivision is assumed 
  to be represented as a doubly-connected edge list, or a similar 
  topological structure. 

  Bose and Morin~\cite{DBLP:conf/isaac/BoseM00} select entry edges based 
  on a total order $\preceq_p$ on the edges of the subdivision. 
  The position of each edge, $e$ in $\preceq_p$, is determined based on 
  the edge's key. 
  The key is a 4-tuple of properties that can be calculated locally for 
  each edge, based on the edge's geometry. 
  As with our previous entry edge selection rule, the keys are calculated 
  with respect to a known point interior to the start triangle.
  To determine if an edge is the entry point of a face, the following 
  \emph{both-ways} search is performed.
  Starting at edge $e$, the face is scanned in both directions 
  until either (a) an edge $e'$ is encountered with 
  a lower key than $e$ in $\preceq_p$, or (b) the scans meet without 
  having found any such edge.  
  In case (b), the edge $e$ is then selected as the entry edge for 
  the face.

  The main result of Bose and Morin is that for a subdivision 
  with $N$ vertices, all faces (including edges and vertices) can be 
  reported in $\OhOf{N \log{N}}$ steps. 
  Their approach can also be applied to reporting a connected 
  component of the subdivision in $\OhOf{h \log{h}}$ time, where the 
  $h$ is the number of vertices in the component.  

  In this chapter, all faces are triangles, thus to find the entry edge
  at the triangle level is a constant-time operation using any
  selection technique.
  Where the 'search both ways' technique proves useful in our setting 
  is in dealing with the boundary of the component. 
  A hole may consist of 
  one or many faces (triangles), but we treat a hole as if it were a 
  single face.
  For each hole, we want to identify a unique entry edge from among
  the access edges on its boundary.
  Since edges are not explicitly stored in our construction, walking
  the boundary involves walking the set of triangles that touch the boundary. 
  This includes all triangles in $\concomp$ that:
  
  \begin{enumerate}
  \item have an edge on the boundary, or
  \item have an adjacent vertex, $p \in P$, that lies on the boundary.
  \end{enumerate}
  
  Let $h'$ denote the number of triangles touching the boundary of
  $\concomp$. 
  This includes both holes in $\concomp$ and its outer boundary.
  A triangle can be adjacent to the boundary at no more than three 
  points (or edges), therefore 
  $h' = \OhOf{|\concomp|}$. 
  
  Assume $H$ is some hole in our component. 
  In order to apply the analysis of Bose and Morin directly to our results, 
  we conceptually add zero length
  \emph{pseudo-edges} to $H$ at any point on the boundary of $H$ that 
  is adjacent to a triangle $t$ which touches $H$ at a point, but which 
  does not share an edge with $H$ (see Fig. \ref{fig:pseudo-face} ). 
  With respect to the key values in $\preceq_p$, we set the value of 
  a key for a pseudo-edge to $\infty$. 
  Since the entry edge in any hole is the edge of minimum key value, 
  no pseudo-edge will ever be selected. 
  Given this definition of a pseudo-edge, the value $h'$ can also 
  be considered the sum of real and pseudo-edges over all holes 
  and the exterior boundary of the component $\concomp$.

  \begin{figure}[th]
	  \centering
		  \includegraphics[width=0.8\textwidth]{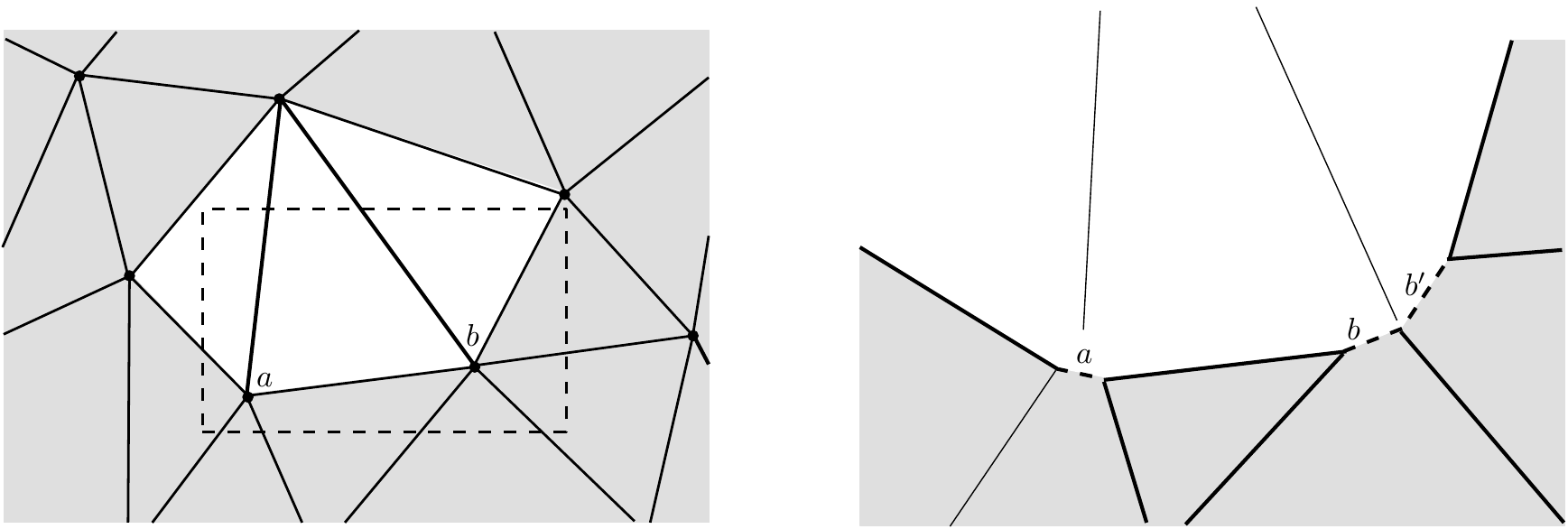}
	  \caption[Pseudo-edges in region boundaries]{The figure on the left 
    shows a portion of a component (gray) with a hole (white). 
	  The dashed box indicates the detailed area shown in the figure on the
	  right. 
	  The right-hand figure shows conceptually how additional edges are 
	  added to the hole boundary, corresponding to triangles in the component.
	  At the point marked $a$, a single pseudo-edge is added as there is 
	  only one non-edge adjacent triangle adjacent to this point. 
	  The point $b$ has two non-edge adjacent triangles, therefore two pseudo-edges 
	  $b$ and $b'$ are added to the hole boundary. }
	  \label{fig:pseudo-face}
  \end{figure}

  The following lemma summarizes results for Bose and Morin that can now be 
  applied directly to our setting.

  \begin{lemma}\label{lem:bose_morin_time}
  For a connected component $\concomp$ requiring $h'$ triangles to be visited 
  in order to walk the boundary of all holes plus the exterior boundary of 
  $\concomp$, the both-ways search technique can find entry edges for all 
  holes (and the perimeter) in at most $\OhOf{h' \log{h'}}$ steps, 
  or $\OhOf{h' \log_{\succblksize}{h'}}$ I/Os. 
  \end{lemma}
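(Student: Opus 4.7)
The plan is to reduce the claim directly to the analysis of Bose and Morin~\cite{DBLP:conf/isaac/BoseM00}, with the boundary treated as a generalized planar subdivision thanks to the pseudo-edge construction introduced just above the lemma. First, I would make precise the combinatorial object being searched: for each hole $H$ of $\concomp$ (and analogously for the exterior boundary), take the cyclic sequence of real boundary edges and pseudo-edges that one encounters while walking the ring of triangles of $\concomp$ incident to $H$. The total length of all these rings is exactly $h'$, and each edge/pseudo-edge inherits a key whose values define the total order $\preceq_p$, with pseudo-edge keys set to $\infty$ so they are never selected as entry edges. Thus, finding an entry edge for every hole and for the outer boundary is precisely the problem Bose and Morin solve, applied to this $h'$-element set of rings.

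Next, I would invoke their both-ways search analysis verbatim: at each edge $e$ of a ring, we walk outward in both directions until we either (i) discover a lower-keyed edge, at which point $e$ is rejected, or (ii) the two scans meet, at which point $e$ is the unique entry edge. A charging argument in~\cite{DBLP:conf/isaac/BoseM00} shows that the total number of edge traversals across all both-ways searches on a ring of length $k$ is $\OhOf{k \log k}$, and summing over all rings (using concavity of $x \log x$ or just $\log h' \ge \log k$) yields $\OhOf{h' \log h'}$ steps overall. Because each step in the ring corresponds to advancing to an edge-adjacent or vertex-adjacent triangle of $\concomp$, a step in this combinatorial walk is realized by visiting $\OhOf{1}$ triangles of $\triang$, so the total number of triangle visits performed during entry-edge discovery is $\OhOf{h' \log h'}$.

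Finally, I would translate these $\OhOf{h' \log h'}$ triangle visits into I/Os. The visits form a walk (with jumps back to the starting edge, each of which is itself a bounded walk in the ring) through triangles of $\triang$, so by Theorem~\ref{thm:terrain_traversal} any walk of length $K$ can be performed in $\OhOf{K/\lg B}$ I/Os. Substituting $K = \OhOf{h' \log h'}$ gives $\OhOf{h' \log h' / \lg B} = \OhOf{h' \log_B h'}$ I/Os, which matches the stated bound (recall $\succblksize = \ThetaOf{B}$ up to log factors absorbed in the $\OhOfsym$). Finally, I would note that the traversal needs no auxiliary search structure $\mathcal{V}$: entry edges are identified locally by key comparisons on the ring, so the space overhead of Section~\ref{ssec:gen_con_comp} is eliminated.

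The main obstacle I anticipate is verifying cleanly that the Bose--Morin both-ways bound transfers without loss to our setting of rings augmented by pseudo-edges and walked through triangles rather than along an explicit DCEL. The subtlety is twofold: pseudo-edges must never block or mislead a both-ways scan (handled by the $\infty$ key convention, so they only add constant cost per traversal), and each ring must be scanned in a way that is consistent with the memoryless triangle traversal already used in the thesis (so that no additional topological information beyond what is reconstructible from a loaded component is required). Once those points are checked, the step count and the I/O conversion follow mechanically.
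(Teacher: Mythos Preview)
Your proposal is correct and follows essentially the same approach as the paper: reduce to Bose and Morin's both-ways search analysis by treating the holes and outer boundary as faces of a subdivision via the pseudo-edge construction, obtain the $\OhOf{h'\log h'}$ step bound, and then convert steps to I/Os using the $\OhOf{\lg \succblksize}$-steps-per-I/O guarantee of the traversal structure. The paper's proof is a terse three-sentence version of exactly this; your write-up is more explicit about summing over rings and about the walk being realizable in the succinct triangulation, but there is no substantive difference in strategy.
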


  \begin{proof}
  Theorem 1 in Bose and Morin \cite{DBLP:conf/isaac/BoseM00} states that 
  a planar subdivision of faces with $n$ vertices can be traversed in 
  $\OhOf{n \log{n}}$ time. 
  By adding zero-length pseudo-edges, we effectively make 
  the set of holes and the exterior boundary equivalent to faces with 
  a total of $h'$ edges. 
  Thus, using the both-ways search, we can locate entry edges with 
  $\OhOf{h' \log{h'}}$ steps. 
  Since we can travel $\OhOf{\log{\succblksize}}$ steps during such 
  searches before incurring an I/O, we can perform all such searches in 
  $\OhOf{h' \log_{\succblksize}{h'}}$ I/Os.
  \end{proof}

  Applying the both-ways search technique presented above requires only 
  minor modifications to our algorithms. 
  In the $DepthFirstTraversal$ algorithm (Fig. 
  \ref{fig:traverse_graph_algorithm}) at line 7, rather than check 
  if $e \in \mathcal{V}$, we perform the both-ways search to determine 
  if $e$ is the unique entry edge. 
  If this is true, we then perform $ScanBoundary$ starting with $e$. 
  The only alteration to the $ScanBoundary$ algorithm 
  (Fig. \ref{fig:scanhandrail_alg}) is that we can omit steps 3 
  through 9, since following the both-ways search we know that $e$ 
  is the  unique entry edge for the boundary or hole.  

  To summarize we have the following theorem.

  \begin{theorem}\label{thm:conn_comp_without_add_storage}
  A triangulation $\triang$, with $\bitsPerKey$-bit keys per triangle, 
  may be stored using $N\bitsPerKey + O(N) + o(N\bitsPerKey)$ bits such 
  that a connected component $\concomp$ may be reported using 
  $\BigOh{\frac{|\concomp|}{\lg{B}} + h' \log_B{h'}}$ I/Os, where 
  $h'$ is the total number of triangles that touch all holes in, 
  plus the boundary of, $\concomp$.
  \end{theorem}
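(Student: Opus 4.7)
The plan is to combine the modified algorithms (with the both-ways search replacing the auxiliary structure $\mathcal{V}$) with the space and I/O bounds already established for our triangulation representation. Since we no longer store $\mathcal{V}$, the space bound is exactly that of Theorem~\ref{thm:terrain_keys_traversal}, namely $N\bitsPerKey + O(N) + o(N\bitsPerKey)$ bits. So the real work is establishing correctness of the modified traversal and bounding its I/O cost.

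For correctness, I would first argue that the modified $ScanBoundary$ is invoked on the unique entry edge of each boundary cycle (exterior boundary or hole) at most once. The both-ways search of Bose and Morin~\cite{DBLP:conf/isaac/BoseM00} defines a total order $\preceq_p$ on the boundary edges (with pseudo-edges set to $\infty$) so that within each cycle exactly one real edge is declared the entry edge. When $DepthFirstTraversal$ arrives at an access edge $e$ adjacent to a cycle $C$, the both-ways search along $C$ returns true iff $e$ is the $\preceq_p$-minimum edge of $C$. Thus at most one invocation of $ScanBoundary$ per cycle ever fires, and since the cycle is entirely traversed from that minimum edge, every exit edge of $C$ is encountered and every red subtree reachable through $C$ is visited. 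The argument that all of $\concomp$ is then reported mirrors that of Lemma~\ref{lem:cc_holes_alg_correctness}: for any $t \in \concomp$, decomposing the $s \leadsto t$ path in $\dual{\triang}_T$ into maximal red subpaths shows that each inter-subpath black segment is crossed via the unique entry/exit pair of its enclosing boundary cycle.

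For the I/O bound, I would split the cost into two terms. The depth-first part of the traversal walks at most $O(|\concomp|)$ tree edges in $\dual{\triang}_T$ (each triangle of $\concomp$ is charged for at most its constant-degree neighbourhood in the tree), and by Theorem~\ref{thm:terrain_traversal} this costs $O(|\concomp|/\lg B)$ I/Os. For the boundary work, pseudo-edges are introduced at each point where a triangle of $\concomp$ touches a boundary cycle only at a vertex, so that the total length of all cycles (exterior plus holes) is $\Theta(h')$. Lemma~\ref{lem:bose_morin_time} then gives $O(h' \log_B h')$ I/Os for all both-ways searches combined, whether successful or failed, since every edge participates in searches whose total length is $O(h' \log h')$ and our representation supports $\lg B$ traversal steps per I/O. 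Summing yields the claimed $O(|\concomp|/\lg B + h' \log_B h')$ bound.

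The main obstacle I anticipate is the careful accounting of the boundary-walking cost. In particular, I need to ensure that when $ScanBoundary$ recursively invokes $DepthFirstTraversal$ at each exit edge, the boundary triangles it revisits do not inflate the $|\concomp|/\lg B$ term beyond a constant factor, and that failed both-ways searches (initiated at access edges that are not the minimum of their cycle) still fit within the $O(h' \log_B h')$ budget. The first is handled by the three-visits-per-triangle argument already sketched in Lemma~\ref{lem:io_eff_ccomp}; the second follows from the Bose--Morin analysis since their $O(n \log n)$ bound counts \emph{all} both-ways search steps over the entire traversal, not just those of the eventual entry edges.
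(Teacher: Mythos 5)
Your proposal is correct and follows essentially the same route as the paper: drop $\mathcal{V}$ so the space bound reverts to that of the base triangulation representation, replace the membership test with the Bose--Morin both-ways search (using pseudo-edges of key $\infty$ so each boundary cycle has a unique entry edge), and charge the depth-first part to $\BigOh{|\concomp|/\lg B}$ I/Os and all both-ways search steps to $\BigOh{h'\log_B h'}$ I/Os via Lemma~\ref{lem:bose_morin_time}. Your correctness argument is in fact spelled out in more detail than the paper's, which simply summarizes these points before stating the theorem.
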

  
  \section{Conclusions}\label{sec:graph_conclusions}
  
  In this chapter we have presented succinct structures supporting
  I/O efficient path traversal in planar graphs.
  In fact, our results are somewhat more general than this.
  The key features of planar graphs which our data structures rely
  on are; first, planar graphs are $k$-page embeddable, and second, 
  planar graphs of bounded degree have an $r$-partition with
  a suitably small separator. 
  Any class of graphs for which these two properties hold should 
  permit construction of the succinct data structures.

\chapter[Path Traversal in Well-Shaped Meshes]{I/O Efficient Path Traversal in 
  Well-Shaped Tetrahedral Meshes}\label{chp:mesh_trav}
\chaptermark{Traversal in Well-Shaped Meshes}

\section{Introduction}
Traversal of a path visiting the elements of a data structure is 
a situation that commonly arises in computing. 
For example, reporting an elevation contour on a digital terrain model, 
finding a path between nodes in a network, and even searching in a tree 
structure can all be viewed as traversing a path. 
In representing very large graphs in the external memory setting, 
the graph is often partitioned into \emph{blocks} which correspond to the 
disk blocks in memory. 
An effective partitioning of the graph permits the traversal of a path without
incurring a large cost in terms of I/Os (input-output operations that occur each
time a new block is visited).
Nodine~\emph{et al.}~\cite{ngv_1996} first studied the problem of graph blocking, 
and derived bounds for path traversal in several classes of graphs. 
Agarwal~\etal~\cite{DBLP:conf/soda/AgarwalAMVV98} describe an 
$O \left( N \right)$ space data 
structure for planar graphs of bounded degree $d$ that permits traversal of a path 
visiting a sequence of $K$ vertices in $O\left( K/\log_d B \right)$ I/Os. 
Our own research in Chapter~\ref{chp:succinct_graphs} examined succinct 
representions of such graphs while maintaining the $O\left( K/\log_d B \right)$ 
traversal bound.

In this chapter, we explore representing a tetrahedral mesh, $\myMesh$, in a 
manner that permits I/O efficient path traversal. 
We assume that $\myMesh$ is \emph{well-shaped}, by which we mean that the aspect 
ratio of each tetrahedron is bounded by a constant. 
This assumption is valid, for instance, for mesh generation algorithms that enforce 
the well-shaped property on their output meshes~\cite{mttv_1998}.

\subsection{Results}
We present a representation for well-shaped convex tetrahedral meshes in 
$\mathbb{R}^3$, permitting efficient path traversal in the external memory setting.  
We partition the dual graph of $\myMesh$ into regions, and store these regions 
in disk blocks. 
We also construct fixed-size neighbourhoods around the vertices forming the boundaries 
between the regions, and store a collection of these neighbourhoods in blocks. 
I/O efficiency of path traversal is guaranteed by the fact that the number of vertices 
we can traverse every time we encounter a boundary vertex (tetrahedron in the primal) 
is bounded from below. 
Our mesh representation has two key requirements. 
Firstly, we must be able to partion $\myMesh$'s dual into block-sized regions, 
and secondly, the number of \emph{boundary} vertices occurring at the intersection 
of the regions must be small. We will demonstrate that these requirements can be met 
for well-shaped meshes.

We give a representation of a well-shaped convex tetrahedral mesh that can be stored 
in $O\left( N \right)$, space and that permits traversal of a path visiting a sequence 
of $K$ tetrahedra in $O\left( K / \lg{B} \right)$ I/Os. 
We have not completed a detailed analysis of the pre-processing time, but the geometric 
separator algorithm we employ \cite{DBLP:journals/jacm/MillerTTV97} can partition a mesh 
in $Scan(N)$ I/Os. 
Since the recursive partitioning will require no more than $O(\log{N})$ steps, the preprocessing 
requires at worst $O(\log{N} \cdot Scan(N))$ I/Os. 
As applications of our technique, we demonstrate how our structure can be used to report the 
intersection of $\myMesh$ with an axis-parallel box, or an arbitrarily oriented plane 
$\mathcal{H}$, in $O\left( K / \lg{B} \right)$ I/Os. 
For axis-parallel box queries, the Priority R-Tree \cite{DBLP:journals/talg/ArgeBHY08} 
permits reporting in $O( (N/B)^{2/3} + K/ B )$ I/Os, but cannot efficiently handle 
arbitrarily-oriented plane queries. 
In each case, $K$ is the total number of tetrahedra intersecting the query box or plane, 
respectively. 
Further, we note that the $(N/B)^{2/3}$ term for the Priority R-Tree is the search time 
in $\mathbb{R}^3$, for which we do not account in our analysis.

\section{Mesh Partitioning}\label{sec:Prelim}

The tetrahedral mesh $\myMesh$ is composed of vertices, edges, faces, and tetrahedra. 
The vertices, edges, and faces form the \emph{skeleton} of the mesh, and we say a tetrahedron 
is \emph{adjacent} to elements of the skeleton that compose its boundary. 
Two tetrahedra are adjacent if and only if they share a face. 
Let $\dual{\myMesh}$ be the dual graph of $\myMesh$. 
For each tetrahedron $t \in \myMesh$, $\dual{\myMesh}$ contains a vertex $v(t)$ corresponding 
to $t$. 
Two vertices, $v(t)$ and $v(t')$, are connected by an edge in $\dual{\myMesh}$ if the 
corresponding tetrahedra share a face. 
Exclusive of the outer face (which we do not represent in $\dual{\myMesh}$), the degree of any 
vertex in $\dual{\myMesh}$ is at most $4$.

For tetrahedron $t$, we let $R(t)$ be the radius of the smallest enclosing sphere of $t$, 
and $r(t)$ be the radius of the largest sphere that can be inscribed in $t$. 
The \emph{aspect ratio} of $t$, denoted $\rho_t$, is defined as  the ratio of these two radii, 
$\rho_t = R(t)/r(t)$. 
If the aspect ratio of all tetrahedra in $\myMesh$ is bounded by a constant, 
$\rho$, then we say $\myMesh$ is \emph{well-shaped}.

In a \emph{geometric graph}, a $d$ dimensional coordinate is associated with each vertex. 
These coordinates yield an embedding of the graph in $\mathbb{R}^d$.  
Miller~\emph{et al.}~\cite{DBLP:journals/jacm/MillerTTV97} studied the problem of finding 
separators for geometric graphs. 
More specifically, they developed a technique for finding separators for $k$-ply neighbourhood 
systems defined as follows.

\begin{definition}[\cite{DBLP:journals/jacm/MillerTTV97}] 
A \emph{$k$-ply neighbourhood system} in $d$ dimensions is a set ${B_1, \ldots, B_n}$ of $n$ 
closed balls in $\mathbb{R}^d$, such that no point in $\mathbb{R}^d$ is strictly interior to 
more than $k$ balls.
\end{definition}

The authors were able show that separators could be found on several concrete classes of graphs 
that can be represented as $k$-ply \emph{neighbourhood systems}. 
The authors extend this work to the class of \emph{$\alpha$-overlap graphs};  they are defined
as follows:

\begin{definition}[\cite{mttv_1998}]
Let $\alpha \ge 1$ be given, and let $(B_1,\ldots,B_n)$ be a 1-ply neighbourhood system. 
The \emph{ $\alpha$-overlap graph} for this neighbourhood system is the undirected graph 
with vertices $V={1,\ldots,n}$ and edges:
\begin{equation*}
E = \{(i,j): B_i \cap (\alpha \cdot B_j) \ne \emptyset \land (\alpha \cdot B_i) \cap B_j \ne \emptyset \}
\end{equation*}
\end{definition}

For the class of $\alpha$-overlap graphs, the paper's main result is summarized in 
Lemma~\ref{lem:alpha_overlap_separator}.

\begin{lemma}[\cite{mttv_1998}]\label{lem:alpha_overlap_separator}
Let $G$ be an  $\alpha$-overlap graph in a fixed dimension $d$;
then $G$ has an \\ $O\left( \alpha \cdot n^{(d-1)/d}+q(\alpha,d)) \right)$ 
separator that $(d+1)/(d+2)$-splits.
\end{lemma}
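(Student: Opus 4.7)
The plan is to construct the separator by lifting the centres of the balls into one higher dimension, applying a centerpoint argument there, and projecting the resulting great sphere back down to $\mathbb{R}^d$. Given the $\alpha$-overlap graph on a $1$-ply neighbourhood system $\{B_1,\ldots,B_n\}$ with centres $c_i$ and radii $r_i$, I would work geometrically with these centres and radii rather than with the abstract graph.

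First I would apply stereographic projection $\Pi^{-1}\colon\mathbb{R}^d\to S^d\subset\mathbb{R}^{d+1}$ to map each centre $c_i$ to a point $\hat{c}_i$ on the unit sphere. I would then invoke the centerpoint theorem in $\mathbb{R}^{d+1}$: there exists a point $z^*$ such that every closed half-space through $z^*$ contains at least $n/(d+2)$ of the lifted points. A conformal (M\"obius) transformation of $S^d$ that sends $z^*$ to the origin then ensures that any great sphere of $S^d$ (i.e., the intersection of $S^d$ with a hyperplane through the origin) divides the lifted centres into two sets, each of size at most $\bigl((d+1)/(d+2)\bigr)n$. Projecting such a great sphere back by $\Pi$ yields a sphere $S\subset\mathbb{R}^d$, and I would set $A$ to be the balls strictly inside $S$, $B$ the balls strictly outside, and $C$ the balls intersected by $S$. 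The $(d+1)/(d+2)$-splitting property of $A$ and $B$ is immediate from the centerpoint construction. The fact that $C$ is genuinely a vertex separator in the $\alpha$-overlap graph follows because if $B_i\subset A$ and $B_j\subset B$ with $B_i,B_j\notin C$, then $S$ lies strictly between them, so neither $\alpha B_i$ meets $B_j$ nor $\alpha B_j$ meets $B_i$, hence $(i,j)\notin E$.

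The main obstacle is bounding the size of $C$. I would choose the great sphere uniformly at random on $S^d$ and estimate the expected number of balls crossed by its image $S$ in $\mathbb{R}^d$. To avoid distortions from the conformal map, I would split the balls into ``small'' balls, whose images after the transformation have radius below a threshold depending on the average ball size, and ``large'' balls. For small balls, the $1$-ply condition lets me bound the total volume they occupy near $S$, and an integral-geometric averaging argument over great spheres (together with the $\alpha$-scaling that governs how far $S$ can be from a centre before the ball is safely on one side) yields an expected crossing count of $O(\alpha\cdot n^{(d-1)/d})$. The large balls are few in number and can be absorbed into the additive term $q(\alpha,d)$, which depends only on $\alpha$ and $d$ through the local combinatorics of how many large balls can mutually overlap under the $1$-ply hypothesis.

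Finally, I would note that because the expected separator size meets the claimed bound, at least one great sphere realizes it, yielding the existence statement. Standard derandomization via conditional expectations over a suitable discretization of the space of great spheres turns this into a deterministic construction, giving the randomized linear-time algorithm mentioned in the surrounding text. The hardest step to execute rigorously is the crossing bound, because tracking how the conformal map distorts both the ball radii and the $\alpha$-enlargements requires careful bookkeeping to keep the dependence on $\alpha$ linear and the dependence on $n$ at the optimal $n^{(d-1)/d}$ exponent.
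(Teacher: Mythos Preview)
The paper does not prove this lemma at all: it is quoted verbatim as a result of Miller, Teng, Thurston, and Vavasis \cite{mttv_1998} and used as a black box. So there is no ``paper's own proof'' to compare against; your sketch is essentially an outline of the original MTTV argument (stereographic lift, centerpoint in $\mathbb{R}^{d+1}$, random great sphere, expected-crossing bound), which is the right provenance.

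That said, your sketch contains a genuine error in the definition of the separator. You set $C$ to be the balls \emph{intersected by $S$} and then claim that if $B_i$ lies strictly inside $S$ and $B_j$ strictly outside, ``neither $\alpha B_i$ meets $B_j$ nor $\alpha B_j$ meets $B_i$''. This does not follow: $B_i$ can sit well inside $S$ while $\alpha B_i$ reaches far past $S$ and hits $B_j$. (Take $B_i$ of radius $1$ at the origin, $S$ of radius $2$, $B_j$ of radius $1/2$ centred at distance $3$, and $\alpha=10$.) With your definition of $C$, edges of the $\alpha$-overlap graph can cross from $A$ to $B$, so $C$ is not a separator.

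The fix in the MTTV argument is to put into $C$ every ball whose $\alpha$-dilation meets $S$ (equivalently, to thicken $S$ appropriately before testing intersection). Then if $i\notin C$ and $j\notin C$ lie on opposite sides, $\alpha B_i$ is entirely inside and $B_j$ entirely outside (or vice versa), so $\alpha B_i\cap B_j=\emptyset$ and $(i,j)\notin E$. This is also precisely why the crossing bound picks up the factor of $\alpha$ in $O(\alpha\cdot n^{(d-1)/d})$: you are counting balls whose $\alpha$-enlargements a random great sphere hits, not the balls themselves. Your later remark about ``the $\alpha$-scaling that governs how far $S$ can be from a centre'' hints that you have the right intuition for the counting step, but the separator-correctness step as written fails and needs this correction.
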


In our setting, $d=3$, and $\alpha$ is fixed, thus the $q(\alpha,d)$ term, which 
is a function of two constants (representing the maximum degree of a vertex), is
constant, and we are left with a separator of size $O\left( n^{2/3} \right) $.
The geometric separator can also be used to find a vertex and edge separator on 
a well-shaped tetrahedral mesh \cite{mttv_1998}. 
Next we show that if $\myMesh$ is well-shaped, then the separator can likewise 
be applied to partition $\dual{\myMesh}$.

\begin{lemma}\label{lem:dual_is_overlap}
Let $\myMesh$ be a tetrahedral mesh where each tetrahedron 
$t \in \myMesh$ has aspect ratio bounded by $\rho$. 
Let $\dual{\myMesh}$ be the dual graph of $\myMesh$; then $\dual{\myMesh}$ is a 
subgraph of an $\alpha$-overlap graph for $\alpha = 2 \rho$.
\end{lemma}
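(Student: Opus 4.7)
The plan is to exhibit an explicit $1$-ply neighbourhood system whose $\alpha$-overlap graph (for $\alpha = 2\rho$) has at least all the edges of $\dual{\myMesh}$, thereby realizing $\dual{\myMesh}$ as a subgraph. The natural candidate is to associate to each tetrahedron $t \in \myMesh$ the closed ball $B_t$ inscribed in $t$, centered at the incenter $c_t$ with radius equal to the inradius $r(t)$. Since each vertex of $\dual{\myMesh}$ corresponds to a tetrahedron of $\myMesh$, the vertex sets of $\dual{\myMesh}$ and of the $\alpha$-overlap graph of $\{B_t\}$ can be identified.

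First I would verify the $1$-ply property. Each $B_t$ lies in the closure of $t$, so its open interior lies in the open interior of $t$. Because distinct tetrahedra of $\myMesh$ have disjoint interiors, no point of $\mathbb{R}^3$ can be strictly interior to two distinct balls $B_t$ and $B_{t'}$, and the family $\{B_t : t \in \myMesh\}$ is a $1$-ply neighbourhood system as required.

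Next I would establish the overlap condition for every edge $\{v(t), v(t')\}$ of $\dual{\myMesh}$. Such an edge corresponds to tetrahedra $t$ and $t'$ sharing a face $f$. By definition of the incenter, $c_{t'}$ is at distance exactly $r(t')$ from $f$, so the foot of perpendicular $p$ from $c_{t'}$ onto $f$ lies in $f \cap \partial B_{t'}$, and in particular $p \in B_{t'}$. The key geometric estimate is that $p$ also lies in the closure of $t$ (since $f \subset \partial t$), so $\|c_t - p\|$ is bounded by the diameter of $t$, which is at most $2R(t) \le 2\rho\, r(t) = \alpha\, r(t)$. Hence $p \in (\alpha B_t) \cap B_{t'}$, so this intersection is non-empty; swapping the roles of $t$ and $t'$ gives $B_t \cap (\alpha B_{t'}) \ne \emptyset$ symmetrically. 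Therefore every edge of $\dual{\myMesh}$ is an edge of the $\alpha$-overlap graph, and the lemma follows.

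The main subtlety, and the step I would be most careful with, is the distance bound $\|c_t - p\| \le 2R(t)$. It is tempting to try to use the inradius-to-circumradius aspect ratio only along the normal to $f$, but $p$ need not be the point of $f$ closest to $c_t$, so a one-dimensional argument across $f$ does not suffice; one really needs the diameter bound coming from the smallest enclosing sphere of $t$, which is where the hypothesis that $R(t)$ denotes that radius (as fixed in Section~\ref{ssec:ws_mesh}) is used crucially. The rest of the argument is essentially bookkeeping about closed versus open balls, which the definition of $k$-ply systems (based on strict interiority) handles cleanly.
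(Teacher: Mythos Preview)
Your proof is correct and follows essentially the same approach as the paper: both use the inscribed balls as the $1$-ply system and then show that scaling by $2\rho$ makes each ball reach the adjacent tetrahedron's inscribed ball via the shared face. The only cosmetic difference is that the paper argues the ball of radius $2R(t)$ centred at the incenter contains all of $t$ (by first swallowing the smallest enclosing sphere), whereas you pick the specific tangency point of $B_{t'}$ on the shared face and bound its distance to $c_t$ by $\operatorname{diam}(t)\le 2R(t)$ directly; the two arguments are interchangeable.
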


\begin{proof}
Consider the following neighbourhood system. 
Let $v(t) \in \mathcal{M^*}$ be a vertex corresponding to tetrahedron 
$t \in \myMesh$. 
Let $b(t)$ be the largest inscribed ball in $t$, with radius $r(t)$ and centred
at point $p(t)$. 
The collection of balls $\mathcal{B}=\{b(1), \ldots, b(n)\}$ corresponds to the 
$n$ tetrahedra of $\myMesh$ and the vertices of $\mathcal{M^*}$, and form a 
$1$-ply system. 
No two balls can overlap, since each is inscribed within a single tetrahedron.  
We show that $\mathcal{B}$ forms an $\alpha$-overlap graph, for $\alpha = 2 \rho$.

Since each $b(t)$ is maximal, it touches each of the four faces of $t$ at a point.
Let $B(t)$ be the smallest ball that wholly contains $t$; recall that this ball 
has radius $R(t)$. 
Furthermore, let the point $P(t)$ be the centre of $B(t)$. 
Since $B(t)$ wholly contains $t$, it also contains $b(t)$ and its centre $p(t)$.
Therefore, the distance between the centres of these two balls, $P(t)$ and 
$p(t)$, is less than the radius, $R(t)$, of the larger ball. 
Consider the ball of radius $2R(t)$ centered at $p(t)$. 
Since its distance to $P(t)$ is less than $R(t)$, this ball contains both 
$P(t)$ and the entire ball $B(t)$, which wholly contains $t$.

For $i=1, \ldots, 4$, let $b(i) \in \mathcal{B}$ represent the inscribed balls 
associated with each neighbour of $v(t)$ in $\mathcal{M^*}$. 
Each $b(i)$ touches a face of $t$, since the tetrahedra in which they are
inscribed are neighbours ot $t$.
The ball centered at $p(t)$ of radius $2R(t)$ 
fully contains $t$ (and all its faces), and thus intersects each of these balls.

The aspect ratio of all $t \in \myMesh$ is bounded by $\rho$; therefore, for the aspect 
ratio, $\rho_t$, of any tetrahedron, we have that $\rho_t \le \rho$. 
Increasing the radius of a ball $b(t)$ to $2 R(t)$ is equivalent to 
$2 \frac{R(t)}{r(t)} r(t) = 2 \rho_t \cdot r(t) \le 2 \rho \cdot r(t)$. 
Thus, the overlap graph is an $\alpha$-overlap graph with $\alpha = 2 \rho$.
\end{proof}

\begin{lemma}\label{lem:dual_separator}
Given the dual graph of a tetrahedral mesh, with bounded aspect ratio $c$, there
is a partitioning algorithm that produces an $O\left( n^{\frac{2}{3}} \right)$ 
separator that $\frac{4}{5}$-splits $\dual{\myMesh}$.
\end{lemma}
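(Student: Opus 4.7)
The plan is to chain together the two preceding lemmas. By Lemma~\ref{lem:dual_is_overlap}, the dual graph $\dual{\myMesh}$ is a subgraph (on the same vertex set) of some $\alpha$-overlap graph $G$ with $\alpha = 2\rho$, where $\rho = c$ is the bound on the aspect ratios of tetrahedra. Since $\rho$ is a fixed constant, so is $\alpha$. I can then invoke Lemma~\ref{lem:alpha_overlap_separator} on $G$ in dimension $d = 3$: this guarantees the existence of a separator $S \subseteq V(G)$ of size $O\bigl(\alpha \cdot n^{(d-1)/d} + q(\alpha,d)\bigr) = O\bigl(2\rho \cdot n^{2/3} + q(2\rho,3)\bigr)$ that $(d{+}1)/(d{+}2) = 4/5$-splits $G$. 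Because $\rho$ and $d$ are constants, both $\alpha$ and $q(\alpha,d)$ are constants, so $|S| = O(n^{2/3})$ as required.

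The one step that needs a brief justification (and is the closest thing to a subtle point in this proof) is transferring the separator from the overlap graph $G$ back down to $\dual{\myMesh}$. I would argue this as follows: since $V(\dual{\myMesh}) = V(G)$ and $E(\dual{\myMesh}) \subseteq E(G)$, any partition $V(G) = A \,\dot\cup\, B \,\dot\cup\, S$ with no $G$-edge between $A$ and $B$ automatically has no $\dual{\myMesh}$-edge between $A$ and $B$. Hence $S$ is also a valid separator of $\dual{\myMesh}$, the same vertex-count bounds on $A$ and $B$ remain in force, and the $4/5$-split property is preserved verbatim.

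The remaining ingredient is constructive: Miller~\etal's randomized linear-time algorithm associated with Lemma~\ref{lem:alpha_overlap_separator} (built from the neighbourhood system of inscribed balls described in the proof of Lemma~\ref{lem:dual_is_overlap}) produces $S$ explicitly, which is why the statement claims a partitioning algorithm rather than mere existence. I expect no real obstacle here; the argument is essentially just ``instantiate and restrict''. The only care required is to verify that the constants absorbed into the $O(\cdot)$ depend only on $\rho$ and $d$, not on $n$, which is immediate from the form of the bound in Lemma~\ref{lem:alpha_overlap_separator}.
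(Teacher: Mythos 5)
Your proposal is correct and follows essentially the same route as the paper: combine Lemma~\ref{lem:dual_is_overlap} with Lemma~\ref{lem:alpha_overlap_separator} for $d=3$ and constant $\alpha = 2\rho$. Your extra remarks on restricting the separator to the subgraph and on the constants are sound elaborations of steps the paper leaves implicit.
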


\begin{proof}
By Lemma \ref{lem:dual_is_overlap}, we demonstrated that $\dual{\myMesh}$ is a 
subgraph of an $\alpha$-overlap graph for $\alpha = 2c$. 
Applying the separator theorem of Lemma~\ref{lem:alpha_overlap_separator} 
yields an $O\left( n^{\frac{2}{3}} \right)$ separator that 
$\frac{4}{5}$-splits $\dual{\myMesh}$.
\end{proof}


This separator is sufficient to partition $\dual{\myMesh}$, 
but we must still show that it can be recursively applied to split $\dual{\myMesh}$ into 
block-sized regions while bounding the total number of boundary vertices. 
In Lemma~\ref{lem:rec_dual_separator}, we extend the result of \cite{Frederickson87} 
on planar graphs to well-shaped tetrahedral meshes.

\begin{lemma}\label{lem:rec_dual_separator}
An $n$-vertex dual graph, $\dual{\myMesh}$, of a well-shaped tetrahedral mesh can be divided 
into $O\left( n/r \right)$ regions with no more than $r$ vertices each, and 
$O\left( \frac{n}{r^{1/3}} \right)$ boundary vertices in total.
\end{lemma}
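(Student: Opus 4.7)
The plan is to adapt Frederickson's recursive partitioning scheme (which underlies Lemma~\ref{lem:bg_fred_graph_sep}) to the tetrahedral-mesh setting, using the geometric separator of Lemma~\ref{lem:dual_separator} in place of the Lipton--Tarjan planar separator. Specifically, I would repeatedly apply Lemma~\ref{lem:dual_separator} to $\dual{\myMesh}$: at each step, a subproblem of size $n_i$ is split into a separator of size $O(n_i^{2/3})$ and two sides each of size at most $\tfrac{4}{5}n_i$. The recursion halts once a subproblem has at most $r$ vertices. The resulting partition into regions is produced by the usual recursion-tree construction, and its leaves are the regions of the $r$-partition.

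Counting the number of regions is the easier half. Since every split shrinks any subproblem by a factor of at least $5/4$, the recursion tree has depth $O(\log (n/r))$, and because at every internal node the sizes of the two children sum to at most the parent's size, the total size across leaves is at most $n$. As each leaf has size $\Theta(r)$ in the worst case (we continue splitting as long as a region exceeds $r$), the number of regions is $O(n/r)$, matching the first part of the claim.

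The core of the proof is bounding the total number of boundary vertices. I would do this by summing the separator contributions across the levels of the recursion tree. At depth $d$, the total size of all live subproblems is at most $n$, and by Jensen's inequality applied to the concave function $x \mapsto x^{2/3}$, the sum of $n_i^{2/3}$ over subproblems at depth $d$ is maximized when all subproblems are of equal size. Since at depth $d$ there are at most $2^d$ subproblems, each of average size $n/2^d$, the sum of separator sizes at that level is $O\bigl(2^d\cdot(n/2^d)^{2/3}\bigr)=O\bigl(n^{2/3}\cdot 2^{d/3}\bigr)$. The depth is $D=\Theta(\log(n/r))$, so $2^{D/3}=\Theta((n/r)^{1/3})$, and the geometric sum over levels is dominated by the deepest level, giving a total of $O\bigl(n^{2/3}\cdot(n/r)^{1/3}\bigr)=O(n/r^{1/3})$ boundary vertices, as claimed.

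The main obstacle I expect is the careful bookkeeping in the level-sum argument: the $4/5$-splitting guarantee does not force splits to be balanced, so I must justify using Jensen's inequality at every level rather than only assuming balanced splits, and I must account for separator vertices that may appear in more than one child subproblem (these can be folded into the total size at the next level without asymptotically changing the bound, since $n+O(n^{2/3})=O(n)$). A clean way to handle this uniformly is to prove by induction on $n$ that the total boundary count $B(n)$ satisfies a recurrence $B(n)\le c\,n^{2/3}+B(n_1)+B(n_2)$ with $n_1,n_2\le\tfrac{4}{5}n$, and verify that the ansatz $B(n)=A\,n/r^{1/3}$ holds once one is careful about the base case (regions of size at most $r$ contribute zero), thereby closing the induction and yielding the stated bound.
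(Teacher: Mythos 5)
Your overall strategy---recursively applying the geometric separator of Lemma~\ref{lem:dual_separator} and bounding the accumulated separator mass---is the same as the paper's, and your count of $O(n/r)$ regions is acceptable (each leaf produced by splitting a node of size $>r$ inherits at least a $\tfrac{1}{5}$-fraction of it, so leaves have size $\Omega(r)$). The gap is in the boundary-vertex bound, and it affects both routes you sketch. For the level-by-level Jensen argument: the per-level bound $\sum_i n_i^{2/3}\le 2^{d}(n/2^{d})^{2/3}=n^{2/3}2^{d/3}$ is valid, but your conclusion that the geometric sum equals $O\bigl(n^{2/3}(n/r)^{1/3}\bigr)$ silently assumes the recursion depth is $\Theta(\log_2(n/r))$. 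A $\tfrac{1}{5}$--$\tfrac{4}{5}$ split only guarantees depth $O(\log_{5/4}(n/r))\approx 3.1\log_2(n/r)$, and substituting that depth into $2^{D/3}$ gives roughly $(n/r)^{1.03}$, not $(n/r)^{1/3}$, so the bound becomes vacuous. Capping the number of internal nodes per level by $n/r$ instead rescues each level to $O(n/r^{1/3})$, but then summing over $\Theta(\log(n/r))$ levels costs an extra logarithmic factor. Either way the level-sum argument as stated does not yield the claimed bound.

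The induction you offer as the clean fix does not close either: with the ansatz $B(n)=A\,n/r^{1/3}$, substituting into $B(n)\le c\,n^{2/3}+B(n_1)+B(n_2)$, where $n_1+n_2=n+O(n^{2/3})$ because separator vertices are retained in both children, yields $B(n)\le A\,n/r^{1/3}+c\,n^{2/3}+O(A\,n^{2/3}/r^{1/3})$, and nothing absorbs the positive $c\,n^{2/3}$ term. This is exactly why the paper's proof (following Lemma~1 of \cite{Frederickson87}) strengthens the inductive hypothesis to $D(n,r)\le A\,n/r^{1/3}-d\,n^{2/3}$. The strengthened induction closes precisely because the split ratio $\delta$ is bounded away from $0$ and $1$: one has $\delta^{2/3}+(1-\delta)^{2/3}\ge 1+\lambda$ with $\lambda=(1/5)^{2/3}+(4/5)^{2/3}-1>0$, and this concavity surplus, scaled by $d$, pays for both the fresh $c\,n^{2/3}$ separator and the $O(n^{2/3})$ duplicated separator vertices pushed into the children. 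You correctly identified the obstacle in your final paragraph, but the ansatz you propose is the one that fails; the missing idea is the negative second-order correction term $-d\,n^{2/3}$ in the inductive hypothesis.
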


\begin{proof}
By Lemma~\ref{lem:dual_separator}, we can subdivide the dual graph 
$\dual{\myMesh}$, on $n$ vertices, into two regions of size $\delta n$ and 
$(1-\delta)n$ for $\frac{1}{5} \le \delta \le \frac{4}{5}$ with a separator of 
size $\beta = O\left( n^{\frac{2}{3}} \right)$. 
Each region retains the vertices in the separator (boundary vertices), and the 
separator is recursively applied to the new regions until we have regions of 
maximum size $r$. 

Let $v$ be a boundary vertex in the resulting subdivided dual graph. 
Let $d(v)$ be one less than the number of regions that contain $v$, and 
let $D(n,r)$ be the sum of the $d(v)$s over all boundary vertices for a graph 
of size $n$ with regions of maximum size $r$. 
$D(n,r)$ can be calculated by the recurrence: $
D(n,r) \le cn^{2/3} + D(\delta n + O\left( n^{2/3} \right),r) + D((1 - \delta)n + O\left( n^{2/3} \right),r)$ 
for $n > r$, and $D(n,r) = 0$ for $n \le r$, where $c>1$ is a constant, and $\frac{1}{5} \le \delta \le \frac{4}{5}$. 
It can be shown by induction that $D(n,r) \in O\left( \frac{n}{r^{1/3}} \right)$. 
Since all boundary vertices have $d(v) \ge 1$, this value also bounds the total number of boundary vertices. 

Our inductive proof that $D(n,r) \in O\left( \frac{n}{r^{1/3}} \right)$ is as 
follows. 
In order to simplify our calculations we will use a modified version of our 
inductive hypothesis $D'(n,r) = D(n,r) \cdot r^{1/3}$ and we have:

\begin{eqnarray*}
D'(n,r) &\le& O\left[\frac{cn}{r^{1/3}} - dn^{2/3}\right] \cdot r^{1/3} \\
&=& cn - d r^{1/3} n^{2/3}
\end{eqnarray*}

We now want to show that $D'(n,r) \le  cn - d r^{1/3} n^{2/3}$ for some 
suitably-selected constant $d$. 
We let $\gamma_1 = (\delta n + c_1n^{2/3})$ and 
$\gamma_2 = (1-\delta)n + c_1n^{2/3}$, where $c_1$ is the constant term in 
the Big-Oh notation of our separator. 
We have:

\begin{eqnarray}
D'(n,r) &\le& O\left [c n^{2/3} + \frac{c( \delta n + c_1n^{2/3} )}{r^{1/3}} - d( \gamma_1^{2/3} ) + \frac{ c((1-\delta)n + c_1n^{2/3})}{r^{1/3}} - d(\gamma_2^{2/3}) \right] r^{1/3} \nonumber \\ 
&=& c r^{1/3} n^{2/3} + c( \delta n + c_1 n^{2/3})  + c( ( 1-\delta ) n + c_1 n^{2/3} ) - d r^{1/3} O\left[ \gamma_1^{2/3} + \gamma_2^{2/3} \right] \nonumber \\
&=& cr^{1/3}n^{2/3} + c\delta n + c c_1 n^{2/3} + cn - c\delta n + c c_1 n^{2/3} - dr^{1/3} O\left[ \gamma_1^{2/3} + \gamma_2^{2/3} \right] \nonumber \\
&=& cr^{1/3}n^{2/3} + cn + 2cc_1n^{2/3} - dr^{1/3} \left[ \gamma_1^{2/3} + \gamma_2^{2/3} \right] \label{eqn:recc1}
\end{eqnarray}

Again, we want to select $d$ so that Equation \ref{eqn:recc1} is less than or 
equal to $cn - dr^{1/3} n^{2/3}$. 
Noting that both equations contain a $cn$ term, we can remove it and we wish to 
prove the following statement true for a value of $d$:

\begin{equation}
c r^{1/3} n^{2/3} + 2cc_1n^{2/3} - dr^{1/3} \left[ \gamma_1^{2/3} + \gamma_2^{2/3} \right] \le -d r^{1/3} n^{2/3} 
\end{equation}

which we can further simplify by dividing each side by $n^{\frac{2}{3}}$ and 
grouping the $d$ terms as follows:

\begin{equation}\label{eqn:recc2}
c r^{1/3} + 2cc_1 \le dr^{1/3} \left[ \frac{\gamma_1^{2/3}}{n^{\frac{2}{3}}} + \frac{\gamma_2^{2/3}}{n^{\frac{2}{3}}} - 1 \right]  
\end{equation}

We next solve for $\gamma_1^{2/3}$ and $\gamma_2^{2/3}$:
\begin{eqnarray*}
\gamma_1^{2/3} &=& (\delta n + c_1n^{2/3})^{2/3} \\
&=& \left( n \left(\delta + \frac{c_1}{n^{1/3}}\right) \right)^{2/3} \\
&=& n^{2/3} \cdot \left( \delta + \frac{c_1}{n^{1/3}} \right)^{2/3}
\end{eqnarray*}

\begin{eqnarray*}
\gamma_2^{2/3} &=& ((1-\delta) n + c_1n^{2/3})^{2/3} \\
&=& \left( n \left((1-\delta) + \frac{c_1}{n^{1/3}}\right) \right)^{2/3} \\
&=& n^{2/3} \cdot \left((1-\delta) + \frac{c_1}{n^{1/3}}\right)^{2/3}
\end{eqnarray*}

We substitute the values for $\gamma_1$ and $\gamma_2$ and simplify the 
right-hand side of Equation \ref{eqn:recc2}.

\begin{eqnarray}
&=& dr^{1/3}[\frac{n^{2/3} \cdot \left(\delta + \frac{c_1}{n^{1/3}}\right)^{2/3}}{n^{2/3}} + \frac{n^{2/3} \cdot \left((1-\delta) + \frac{c_1}{n^{1/3}}\right)^{2/3}}{n^{2/3}} - 1] \nonumber \\
&=& dr^{1/3}\left[ \left(\delta + \frac{c_1}{n^{1/3}}\right)^{2/3} + \left((1-\delta) + \frac{c_1}{n^{1/3}}\right)^{2/3} - 1 \right] \label{eqn:recc3}
\end{eqnarray}

Note that the two $\frac{c_1}{n^{1/3}}$ terms in Eq. \ref{eqn:recc3} are bounded 
by a small constant, and dropping those terms results in a smaller expression 
within the square brackets. 
Furthermore, we minimize the value within the square brackets when we select 
$\gamma = 1/5$. 
Therefore, we can replace the value in the square brackets in Eq. \ref{eqn:recc3} with 
$\lambda$, which we define as follows:

\begin{equation}
\lambda = \left[ \left( \frac{1}{5} \right)^{\frac{2}{3}} + \left( \frac{4}{5} \right)^{\frac{2}{3}} - 1 \right]
\end{equation}

We are left with choosing $d$ for which the following is true:

\begin{eqnarray*}
cr^{1/3} + 2cc_1 &\le& d r^{1/3} \lambda \\
\frac{c}{\lambda} + \frac{2cc_1}{r^{1/3} \lambda} &\le& d
\end{eqnarray*}

Finally, since 
$\frac{c}{\lambda} + \frac{2cc_1}{r^{1/3} \lambda} < \frac{c}{\lambda} + \frac{2cc_1}{\lambda}$,
it is sufficient to select $d > \frac{c}{\lambda} + \frac{2cc_1}{\lambda}$ in 
order to satisfy the recurrence.
\end{proof}

It turns out that our result in Lemma~\ref{lem:rec_dual_separator}
is a specialized case of a more general proof given by Teng~\cite{teng:118}
(see Lemma 4.4 in that paper). 
That paper applies the same geometric separator that we have used to decompose
a well-shaped mesh into regions of size $\BigTheta{ \log^h{n} }$, and results
in a separator with $\OhOf{ n / \log^{(h/d)}{n} }$ boundary vertices,
where $d$ is the dimension and $h$ is a positive integer.
In our case, where $d=3$, if we let $r = \log{n}$, and select $h=1$,
we end up with regions of size $\BigTheta{ r }$ with $\OhOf{ n / r^{1/3}}$
boundary vertices.

\section{Data Structure and Navigation}\label{ssec:ds_navigation}

We apply the recursive partitioning algorithm on $\dual{\myMesh}$ to produce 
regions of at most $B$ vertices, where $B$ is the disk block size. 
Each region is stored in a single block in external memory. 
The block stores both the dual vertices and the corresponding geometry from 
$\myMesh$. 
Next, we identify around each boundary vertex a neighbourhood which we call 
its $\alpha$-neighbourhood. 
The $\alpha$-neighbourhood is selected by performing a breadth-first search, 
starting at the boundary vertex, and retaining the subgraph of $\dual{\myMesh}$ 
induced on the first $\alpha = \sqrt{B}$ vertices encountered. 
We can store the regions in $O\left( N/B \right)$ blocks, and the 
$O\left( N/\sqrt{B} \right)$ $\alpha$-neighbourhoods in $O\left( N/B \right)$ 
blocks. 
Thus the total space is linear.

In order to traverse the data structure, assume that we start with some tetrahedron $t$ 
interior to a region for which the corresponding block is loaded. 
We follow the path to a neighbour of $t$. 
If the neighbour is a boundary vertex, we load the $\alpha$-neighbourhood. 
Since $\dual{\myMesh}$ is of bounded degree, loading an $\alpha$-neighbourhood 
guarantees at least $\log_4{\sqrt{B}} = O\left( \lg B \right)$ progress along 
the path before another I/O is incurred. 
A path of length $K$ can be traversed with $O\left( K/ \lg{B} \right)$ I/Os. 
We summarize our results with the following theorem.

\begin{theorem}\label{thm:mesh_path_traversal}
Given a convex tetrahedral mesh, $\myMesh$, of bounded aspect ratio, there is 
an $O\left( N/B \right)$ block representation of $\myMesh$ that permits 
traversal of a path which visits a sequence of $K$ tetrahedra of $\myMesh$ 
using $O\left( \frac{K}{\lg{B}} \right)$ I/Os.
\end{theorem}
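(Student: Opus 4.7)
The plan is to follow the two-level blocking paradigm that was used in Chapter~\ref{chp:succinct_graphs} for planar graphs, but with the geometric separator of Lemma~\ref{lem:rec_dual_separator} replacing the Lipton--Tarjan separator. First, I would apply Lemma~\ref{lem:rec_dual_separator} recursively to $\dual{\myMesh}$ with the region-size parameter set to $r = B$. This produces $\OhOf{N/B}$ regions, each containing at most $B$ dual vertices, and $\OhOf{N/r^{1/3}}$ boundary vertices in total. Each region is packed into a single disk block, together with the coordinates of the corresponding tetrahedra of $\myMesh$; because every tetrahedron appears in $\OhOf{1}$ regions (the only duplication is at boundary vertices, whose total count is sublinear), the region array occupies $\OhOf{N/B}$ blocks.

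Second, for every boundary vertex $v$ I would grow an $\alpha$-neighbourhood $\nb[\alpha]{v}$ by breadth-first search in $\dual{\myMesh}$ of size $\alpha = \sqrt{B}$, and pack all such neighbourhoods contiguously into a second array. Each neighbourhood occupies $\OhOf{\sqrt{B}}$ cells. The I/O analysis for path traversal then proceeds exactly as in Lemma~\ref{lem:main_IO_bound}: assume, without loss of generality, that the traversal begins at a vertex interior to some loaded region; while inside the region we make free progress, and as soon as we step onto a boundary vertex we pay one I/O to load its $\alpha$-neighbourhood. Because $\dual{\myMesh}$ has bounded degree $d \le 4$, a BFS ball of radius $\tfrac{1}{2}\log_d \sqrt{B} = \OmegaOf{\lg B}$ around $v$ is entirely contained in $\nb[\alpha]{v}$, so at least $\OmegaOf{\lg B}$ further steps along the path are performed before another terminal vertex is reached. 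Amortising, a path of length $K$ costs $\OhOf{K/\lg B}$ I/Os.

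The main obstacle, and the step I would treat most carefully, is the space bound for the $\alpha$-neighbourhoods. With the separator of Lemma~\ref{lem:rec_dual_separator} applied at granularity $r = B$, the number of boundary vertices is $\OhOf{N/B^{1/3}}$, so storing a $\sqrt{B}$-sized neighbourhood at each would cost $\OhOf{N \cdot B^{1/6}}$ cells, which exceeds the claimed $\OhOf{N/B}$ block bound. To get the bound right, I would either (i) use a two-level partition, first into \emph{regions} of size $B \log^{3} N$ and then into \emph{subregions} of size $B$, and store full $\sqrt{B}$-neighbourhoods only at the few region-boundary vertices while storing smaller, region-confined neighbourhoods of size $\BigTheta{\log_d \sqrt{B}}$ at subregion-boundary vertices (mirroring the scheme of Section~\ref{sec:datastructs}); or (ii) rebalance the parameters so that $\alpha$ and $r$ jointly satisfy both the packing bound and the progress bound. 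Either variant keeps the total neighbourhood storage in $\OhOf{N/B}$ blocks while preserving $\OmegaOf{\lg B}$ progress per I/O.

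Finally, I would package the result: loading a region or a neighbourhood is a single I/O, the necessary pointer tables from boundary vertices to their neighbourhood records add only $\OhOf{N/B}$ blocks, and convexity of $\myMesh$ is used only to ensure that the path $\CDpath$ is well-defined in $\dual{\myMesh}$ without having to traverse the outer face. Combining the space count from the partitioning step with the I/O analysis from the navigation step yields Theorem~\ref{thm:mesh_path_traversal}.
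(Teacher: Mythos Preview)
Your approach is essentially the paper's: a single-level partition of $\dual{\myMesh}$ via Lemma~\ref{lem:rec_dual_separator} with $r=B$, $\alpha$-neighbourhoods around the boundary vertices, and the same navigation argument (walk inside a region until a boundary vertex is hit, then load its neighbourhood and gain $\Omega(\lg B)$ guaranteed progress). The paper does not use a two-level partition here; it simply sets $\alpha=\sqrt{B}$, asserts there are $O(N/\sqrt{B})$ boundary vertices, and packs the $\sqrt{B}$-sized neighbourhoods into $O(N/B)$ blocks.

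You are right to flag the space step: the boundary-vertex count that Lemma~\ref{lem:rec_dual_separator} actually delivers is $O(N/B^{1/3})$, not $O(N/\sqrt{B})$, so with $\alpha=\sqrt{B}$ the neighbourhood array would occupy $O(N\cdot B^{1/6})$ cells rather than $O(N)$. Your option~(ii) is the clean repair and is all that is needed: take $\alpha=B^{1/3}$, so that $O(N/B^{1/3})\cdot B^{1/3}=O(N)$ cells suffice, while the BFS radius is still $\log_4 B^{1/3}=\Theta(\lg B)$ and the $O(K/\lg B)$ traversal bound is unaffected. Your option~(i), the two-level partition, is overkill in this chapter; that machinery was introduced in Chapter~\ref{chp:succinct_graphs} to control label bit-lengths in the succinct setting, a concern absent from this non-succinct result. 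Finally, convexity of $\myMesh$ plays no role in the proof of this theorem; it is only used later, in the depth-first-traversal applications, to guarantee the implicit-tree property of Lemma~\ref{lem:subgraph_mesh_dual_is_tree}.
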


\section{Applications}

We now demonstrate the application of our data structures for reporting the 
results of box and plane intersection queries. 
A box query is the $\threeD$ equivalent of a rectangular window query in 
$\plane$.
For large meshes, such queries allow users to visualize or analyze a subset
of the mesh elements.
Given that the meshes may be very large, restricting access to a subset of the mesh
can be expected to be a common operation.

Plane intersection queries are the $\threeD$ extension of line intersection
queries (or terrain profiles) in $\plane$.
Why might such a query be of interest?
Assume that we have a mesh that models the variation of some property within
a volume. 
Visualization of the volume, or a subset thereof, is achived through somehow 
mapping mesh elements to a $\plane$ display.
Intersecting the mesh with an arbitrary plane is one means of generating a 
$\plane$ display from the mesh data.

\subsection{Depth-First Traversal}

To begin, we demonstrate how we can perform a depth-first traversal on a mesh 
using our structures, as intersection queries can be answered as 
special cases of depth-first traversal. 
A challenge in performing depth-first traversals in $\dual{\myMesh}$ is that 
depth-first traversal algorithms generally mark vertices as visited in order to avoid 
revisiting them from another execution branch. 
In our data structure, vertices may appear in multiple blocks, and loading all 
copies of a vertex to mark them as visited destroys the I/O efficiency of the 
structure. 
Thus, we need a means of determining if a vertex in $\dual{\myMesh}$ 
(tetrahedron in $\myMesh$) has already been visited. 
This can be achieved using the following lemma based on 
De Berg~\emph{et al.}~\cite{DBLP:journals/gis/BergKOO97}.

\begin{lemma}[\cite{DBLP:journals/gis/BergKOO97}]\label{lem:subgraph_mesh_dual_is_tree}
Given a convex tetrahedral mesh, $\myMesh$, and a tetrahedron, $t_s \in \myMesh$, 
then for every tetrahedron $(t \ne t_s) \in \myMesh$ a unique \emph{entry face} 
can be selected such that the edges in the dual $\dual{\myMesh}$, corresponding
to the entry faces, implicitly form a tree rooted at $t_s^* \in \dual{\myMesh}$.
\end{lemma}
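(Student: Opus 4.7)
The plan is to lift the planar entry-edge construction of De Berg~\etal~\cite{DBLP:journals/gis/BergKOO97}, which was already used for triangulations in Section~\ref{ssec:conv_con_comp}, from $\plane$ to $\threeD$. I would fix a reference point $s$ in the interior of $t_s$, and for each tetrahedron $t \ne t_s$ define its entry face as follows. Let $s' = s'(t)$ be the point on $\partial t$ that minimizes Euclidean distance to $s$. If $s'$ lies strictly in the interior of a face $f$ of $t$, declare $f$ to be the entry face of $t$. Because $\myMesh$ is convex and $s$ is interior to the mesh, the segment $\overline{s s'}$ stays inside $\myMesh$ in a neighbourhood of $s'$, so $f$ is shared with another tetrahedron $t'$ of $\myMesh$ (it is not an external boundary face), and $\mathrm{dist}(s,t') < \mathrm{dist}(s,t)$. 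In other words, crossing the entry face strictly decreases the potential $\Phi(t) := \mathrm{dist}(s,t)$.

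Next I would dispose of degeneracies, where $s'$ lies on an edge or vertex of $t$. Exactly as in the polygonal case of~\cite{DBLP:journals/gis/BergKOO97}, I would break ties by an ``exposure'' rule: among the faces of $t$ incident to $s'$, pick the one whose supporting plane separates $s$ from the interior of $t$, and break any further ties using a canonical order on the faces of $\myMesh$ (or a generic-position symbolic perturbation of the vertex coordinates). This yields a unique entry face for every $t \ne t_s$. In the degenerate case the distance to $s$ need not strictly decrease when crossing the entry face, but a lexicographic secondary key (for example, a perturbed distance) does strictly decrease, so the pair $(\Phi(t), \text{tiebreak}(t))$ is monotonically decreasing along any sequence of entry-face crossings.

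With the rule in place, the tree property follows by a standard counting and acyclicity argument. Orient, in $\dual{\myMesh}$, each dual edge corresponding to an entry face from $t^*$ to the neighbour sharing the entry face of $t$. Every non-root vertex of $\dual{\myMesh}$ has out-degree exactly $1$ and the root $t_s^*$ has out-degree $0$, giving $N-1$ directed edges on $N$ vertices. A directed cycle would require returning to a vertex with the same value of $(\Phi,\text{tiebreak})$, which the strict monotonicity above forbids, so the oriented subgraph is a forest; with $N-1$ edges it is a single tree rooted at $t_s^*$.

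The main obstacle, I expect, is the degeneracy handling rather than the monotonicity argument. In $\plane$ the closest-point $s'$ can coincide with at most a single vertex, shared by a small cycle of faces, and the ``exposed edge'' rule there is essentially forced. In $\threeD$, the closest point can lie on an edge whose link is an arbitrarily long cycle of tetrahedra, or on a vertex whose link is a combinatorially complex sphere triangulation, so one has to verify that the local tie-breaking rule is globally consistent, i.e.\ that it never creates a cycle among the tetrahedra incident to a common edge or vertex. A symbolic perturbation of the input coordinates (so that no four mesh points are cospherical with $s$ and no face is perpendicular to a segment from $s$) makes all closest-point computations non-degenerate and reduces the proof to the generic case already handled, which is how I would finish the argument cleanly.
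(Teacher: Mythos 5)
The paper never actually proves this lemma: it is stated as a citation of De Berg \etal~\cite{DBLP:journals/gis/BergKOO97}, and the surrounding text only explains how the entry faces are used for the traversal. Your overall strategy --- pick the entry face of $t$ via the closest point $s'$ of $\partial t$ to a reference point $s$ interior to $t_s$, observe that this face cannot be a boundary face of the convex mesh and that crossing it strictly decreases $\mathrm{dist}(s,\cdot)$, then get the tree from out-degree one plus acyclicity --- is the argument of the cited source, and the non-degenerate part of your write-up (closest point interior to a face) is correct.

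The gap is in the edge/vertex case, and specifically in your claim that a symbolic perturbation of the coordinates ``makes all closest-point computations non-degenerate.'' It does not. For a tetrahedron $t$ with $s \notin t$, the set of positions of $s$ whose nearest point on $t$ lies in the relative interior of an edge $e$ is $\mathrm{relint}(e) \oplus N_t(e)$, where $N_t(e)$ is the two-dimensional normal cone spanned by the outward normals of the two faces meeting at $e$; this is a full-dimensional open region of $\threeD$, and the vertex case is likewise full-dimensional. These configurations are therefore generic, not degenerate, and no perturbation reduces the proof to the face-interior case you handled. This matters because the entire three-dimensional difficulty lives exactly there: every tetrahedron in the link of $e$ can attain the same value of $\Phi$ (their common closest point to $s$ being the same point of $e$), so acyclicity of the orientation induced on that cycle of dual edges must come from the tie-breaking rule alone. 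Your ``exposure'' rule is not even well defined in general (when $s-s'$ lies in the normal cone of $e$, neither, one, or both incident faces may have a supporting plane separating $s$ from $\mathrm{int}(t)$, depending on the dihedral angle), and an arbitrary canonical order on faces carries no monotonicity, so the asserted strict decrease of $(\Phi,\text{tiebreak})$ across a tie class is unsupported. To close the proof you would need to state a concrete selection rule among the faces incident to $s'$ and verify that it orients the dual edges around each edge and vertex acyclically --- which is the substantive content of the proof in De Berg \etal\ that this lemma delegates to.
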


The entry faces are selected by picking a special tetrahedron $t_s$ and a 
reference point $p_s$ interior to $t_s$. 
We perform a depth-first traversal of $\dual{\myMesh}$ on the implicit tree 
rooted at $t_s^*$. 
The selection of entry faces is based entirely on $p_s$ and the geometry of the 
current vertex of $\dual{\myMesh}$ 
(recall that each dual vertex stores the corresponding tetrahedron geometry). 
Depth-first traversal in a tree does not require the use of mark bits.

As part of a mesh traversal, we may be interested in reporting the features
of the mesh intersecting the query region.
In this setting, we will refer to the skeleton of the mesh as the edges and faces
of the tetrahedra visited.
Assume we wish to report the features of the mesh skeleton during 
traversal without double-reporting.
We define the \emph{interior} of an element on the mesh skeleton to be the 
closed set of points included in that element, exclusive of the endpoints for an
edge, and exclusive of the edges in the case of a face.

Consider the point $p_s$, and assume that the endpoints of no line segment are 
collinear with $p_s$ and that no three points defining a face are co-planar with 
$p_s$ (we remove these assumptions later). 
With respect to a tetrahedron, $t$, we say that any vertex, line or face 
adjacent to $t$ is \emph{invisible} if the line connecting $p_s$ with any point 
on that feature intersects the interior of $t$, otherwise we say the feature is 
$\emph{visible}$. 
For any element $k$ on the skeleton of $\myMesh$, the neighbourhood of $k$ is 
the collection of tetrahedra adjacent to $k$. 
The lemma leads to a technique for reporting the skeleton without duplicates, 
while the theorem summarizes our results for depth-first traversal.

\begin{lemma}\label{lem:tetra_visibility}
Let $p$ be any point interior to a convex tetrahedral mesh $\myMesh$, and let 
$x$ be a point invisible to $p$ and interior to any line segment or face of a 
tetrahedron $t \in \myMesh$. Then with respect to $t$, every interior point on 
the line segment or face containing $x$ is invisible.
\end{lemma}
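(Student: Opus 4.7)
The plan is to reduce visibility of any point $y$ in the relative interior of an edge or face of $t$ to a simple condition on where $p$ sits with respect to the supporting planes of that feature. Under the non-degeneracy hypotheses already in force (no face-plane contains $p$, no edge-line contains $p$), this condition should be independent of $y$, which immediately yields the lemma. I would split into the two cases, face and edge, and handle them in parallel.

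For the face case, let $f$ be the face of $t$ whose relative interior contains $x$, supported by a plane $\Pi$. The tetrahedron $t$ is convex, with $\mathrm{int}(t)$ strictly on one side of $\Pi$; call this open half-space $\Pi^{+}$. For any $y\in\mathrm{relint}(f)$, I would argue that the open segment $(y,p)$ lies entirely on the side of $\Pi$ containing $p$, and that, because $\mathrm{int}(t)$ locally agrees with $\Pi^{+}$ at $y$, the segment meets $\mathrm{int}(t)$ iff $p\in\Pi^{+}$. This condition does not depend on the choice of $y$, so either every point of $\mathrm{relint}(f)$ is invisible or none is; the hypothesis on $x$ forces the former.

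For the edge case, let $e$ be the edge of $t$ whose relative interior contains $x$, let $f_1,f_2$ be the two faces of $t$ meeting along $e$, and let $\Pi_1,\Pi_2$ be their supporting planes with $\Pi_i^{+}$ the open half-space containing $\mathrm{int}(t)$. I would introduce the open dihedral wedge $W=\Pi_1^{+}\cap\Pi_2^{+}$ along $e$ on the interior side of $t$. The argument mirrors the face case: for $y\in\mathrm{relint}(e)$, $y$ lies on both $\Pi_1$ and $\Pi_2$ but strictly on the correct side of the other two face-planes of $t$, so near $y$ the set $\mathrm{int}(t)$ coincides with $W$, while by convexity of $W$ the open segment $(y,p)$ lies in $W$ iff $p\in W$. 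Thus the segment meets $\mathrm{int}(t)$ iff $p\in W$, again a $y$-independent condition, so $x$ being invisible forces $p\in W$ and every point of $\mathrm{relint}(e)$ is invisible.

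The main obstacle I anticipate is making rigorous the claim that $\mathrm{int}(t)$ locally agrees with the open half-space (resp.\ dihedral wedge) at $y$. This rests on the standard description of the interior of a convex polytope as the intersection of the open half-spaces of its facets, combined with the observation that the only facet-inequalities that are tight at a relative-interior point of $f$ (resp.\ $e$) are those coming from $f$ itself (resp.\ the two faces sharing $e$), while the remaining ones are strict. The non-degeneracy hypotheses on $p$ keep the dichotomy clean, ruling out the borderline cases in which $p$ would lie exactly on $\Pi$, $\Pi_1$, or $\Pi_2$ and make ``entering $\mathrm{int}(t)$'' ambiguous.
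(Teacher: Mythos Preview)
Your approach is correct and genuinely different from the paper's. The paper argues by contradiction: assuming $x$ invisible and $y$ visible on the same feature, it considers the triangle $\triangle pxy$ and asserts that because $\overline{px}$ enters $\mathrm{int}(t)$ while $\overline{py}$ does not, ``$t$ must intersect $\overline{xy}$ at some point,'' then notes that $\overline{xy}$ lies in the relative interior of the feature, a contradiction. Your argument instead gives a direct, $y$-independent criterion---$p\in\Pi^{+}$ for a face, $p\in W=\Pi_1^{+}\cap\Pi_2^{+}$ for an edge---and shows that this criterion alone decides invisibility of every relative-interior point. Your route is more explicit and arguably more rigorous; the paper's sentence ``$t$ must intersect $\overline{xy}$'' is terse to the point of ambiguity, since $\overline{xy}\subset t$ already.

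One small wrinkle to tighten in your edge case: the chain ``$(y,p)\subset W$ iff $p\in W$, thus $(y,p)$ meets $\mathrm{int}(t)$ iff $p\in W$'' does not close by itself, because failing to lie entirely in $W$ does not a priori preclude meeting $\mathrm{int}(t)\subset W$ somewhere. The fix is exactly the ingredient you already have: if $p\notin W$ then (by non-degeneracy) $p\in\Pi_i^{-}$ for some $i\in\{1,2\}$, and since $y\in\Pi_i$, the entire open segment $(y,p)$ lies in $\Pi_i^{-}$, hence is disjoint from $\Pi_i^{+}\supset\mathrm{int}(t)$. With that line added, both cases are airtight.
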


\begin{proof}
Let $x$ and $y$ be points on the line segment or face under consideration. 
With respect to $p$ and $t$, assume that $x$ is invisible but $y$ is visible. 
Consider the triangle $\triangle pxy$. 
By definition, $\overline{px}$ intersects 
$t$ in its interior, while $\overline{py}$ does not. 
For this to be the case, $t$ must intersect $\overline{xy}$ at some point. 
However, $t$ is convex (as are all features on its skeleton), and all points 
on $\overline{xy}$ are interior to the line (or face) on $t$'s skeleton; therefore 
such an intersection cannot occur, which is a contradiction.
\end{proof}

\begin{lemma}\label{lem:skeleton_visibility}
For any element $k$ on the skeleton of $\myMesh$ there is one, and only one, 
tetrahedron $t$ in the neighbourhood of $k$, for which $k$ is invisible with 
respect to $p_s$.
\end{lemma}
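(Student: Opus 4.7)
My plan is to reduce the claim about an entire skeleton element $k$ to a claim about a single interior point $x\in k$, using Lemma~\ref{lem:tetra_visibility} as the bridge. That lemma says that if $k$ is invisible to some tetrahedron $t$ at one interior point, then it is invisible at every interior point; equivalently, checking visibility of $k$ with respect to any $t$ in the neighbourhood of $k$ reduces to checking whether the single segment $\overline{p_s x}$ meets the interior of $t$. So I would fix once and for all one representative point $x$ in the relative interior of $k$, and study the segment $\sigma=\overline{p_s x}$.

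The key geometric observation I would then prove is: $k$ is invisible with respect to $t$ if and only if $\sigma$ approaches $x$ \emph{from inside $t$}, i.e.\ a sufficiently small prefix of $\sigma$ ending at $x$ lies in the closed tetrahedron $t$. The forward direction uses convexity of $t$: since $t$ is convex, its intersection with the supporting line of $\sigma$ is a (possibly degenerate) segment $[a,b]\subseteq\partial t$, and because $x\in\partial t\cap\sigma$, we have $x\in\{a,b\}$; if $\sigma$ meets the interior of $t$, then the interior of this intersection segment lies between $p_s$ and $x$, forcing $x=b$ and hence $\sigma$ to enter $t$ and remain inside until $x$. The reverse direction is immediate, since a segment ending inside a convex body and terminating on its boundary intersects the interior just before the endpoint (using the genericity assumption that $p_s$ is not coplanar / collinear with the defining simplices of $k$, so $p_s$ itself is not in $\partial t$ in a tangential way).

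With this characterization in hand, existence and uniqueness both fall out immediately. For existence, trace $\sigma$ from $p_s$ towards $x$ through the conforming tetrahedralization $\myMesh$ (which is convex, so $\sigma\subset\myMesh$); the final tetrahedron traversed before reaching $x$ is some $t^\star$, and because $x\in k$ and the mesh is conforming, $k$ is a face, edge, or vertex of $t^\star$, placing $t^\star$ in the neighbourhood of $k$. By the characterization, $k$ is invisible with respect to $t^\star$. For uniqueness, any tetrahedron $t$ in the neighbourhood of $k$ with $k$ invisible must also have $\sigma$ approaching $x$ from its interior; but $\sigma$ has a single well-defined terminal tetrahedron (again by genericity, so $\sigma$ does not graze a shared face tangentially at $x$), so $t=t^\star$.

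The main obstacle, and the only place where care is really needed, is the equivalence between ``$\sigma$ meets the interior of $t$'' and ``$\sigma$ ends inside $t$.'' The subtle case is a tetrahedron $t$ in the neighbourhood of $k$ that $\sigma$ enters, then \emph{exits} through a different face, before eventually reaching $x$ on $\partial t$; a priori this would falsely suggest a second invisible tetrahedron. I would rule this out using convexity: since the line through $\sigma$ meets $t$ in a single segment $[a,b]$, and $x\in\partial t$ lies on this line, $x$ must be one of the endpoints $a$ or $b$, so the ``enter, exit, then come back to $x$'' scenario is impossible. The degenerate configurations (segments collinear with $p_s$, faces coplanar with $p_s$) are explicitly excluded by the hypotheses in effect here, so the generic argument suffices; these will be reinstated later in the chapter by a standard symbolic-perturbation argument.
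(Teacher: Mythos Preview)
Your proposal is correct and takes essentially the same approach as the paper: trace $\overline{p_s x}$ through the convex mesh and identify the last tetrahedron traversed as the unique one rendering $k$ invisible, invoking Lemma~\ref{lem:tetra_visibility} to pass from a single interior point to the whole feature. The paper is terser on uniqueness (it simply says ``tetrahedra are non-overlapping''), whereas you spell out the convexity argument ruling out the enter/exit/return scenario; one small slip in your write-up: the intersection segment $[a,b]$ has its \emph{endpoints} on $\partial t$, not $[a,b]\subseteq\partial t$.
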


\begin{proof}

The tetrahedra adjacent to any feature are non-overlapping.  
Consider a line segment $\ell$ drawn from $p_s$ to some point $x$ on a feature, 
denoted $z$, in the skeleton of $\myMesh$. If $z$ is a vertex then $z=x$. 
All features adjacent to $t_s$ are invisible from $t_s$ since $p_s$ is interior 
to $t_s$, and any line segment drawn from $p_s$ intersects the interior of $t_s$. 
Assume $z$ is not adjacent to $t_s$.  
As $\myMesh$ is convex, $\ell$ intersects some tetrahedron $t$, adjacent to $z$, 
prior to encountering $x$. 
Since $t$ makes $x$ invisible by Lemma~\ref{lem:tetra_visibility}, all of $z$ is 
invisible with respect to $t$. 
Since tetrahedra are non-overlapping, no tetrahedron other than $t$ can make 
$z$ invisible.
\end{proof}

Finally, we deal with degenerate cases of a face supported by a plane that is 
co-planar with $p_s$, and an edge for which the endpoints and $p_s$ are 
collinear. 
The case of a face is simpler, so we deal with it first. 
We take the plane supporting the face in question $f$, which we call $H_f$. 
$H_f$ contains $p_s$. 
Consider the directed line segment from $p_s$ to an arbitrary point in $f$. 
Since this line is directed and lies on $H_f$, we can use its direction to 
distinguish a left from a right side of $H_f$. 
We arbitrarily select the tetrahedron on the left (or right) side of $f$ as the 
tetrahedron that reports $f$. 
For the case of a line segment collinear with $p_s$, let the segment $s$ have 
endpoints $u$ and $w$, and $\ell$ be the line through $p_s, u, w$. 
Let $H_{\ell}$ be the plane through $\ell$ that is orthogonal to the $X-Y$ plane 
in $\threeD$ space. 
$H_{\ell}$ intersects two of the tetrahedra adjacent to $s$. 
Relative to the $X-Y$ plane, one of these tetrahedra is above and one is below
$\ell$. 
Select the tetrahedra above $\ell$. 
If $\ell$ is parallel to the $Z$ axis, then we can use the $X-Z$ plane in place 
of the $X-Y$ plane.

\begin{theorem}\label{thm:mesh_df_traversal}
Given a convex well-shaped tetrahedral mesh $\myMesh$ of $N$ tetrahedra, 
$\myMesh$ can be represented in linear space such that from an arbitrary 
tetrahedron, $t \in \myMesh$, a depth-first traversal can be performed that 
reports each tetrahedron in $\myMesh$ once, as well as all features on the skeleton of 
$\myMesh$ without duplication in $O\left( N / \lg{B} \right)$ I/Os.
\end{theorem}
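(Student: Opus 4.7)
The plan is to combine the implicit-tree structure from Lemma~\ref{lem:subgraph_mesh_dual_is_tree} with the path traversal result of Theorem~\ref{thm:mesh_path_traversal}, and then layer the visibility-based skeleton reporting from Lemmas~\ref{lem:tetra_visibility} and~\ref{lem:skeleton_visibility} on top. First, I would store $\myMesh$ using the representation of Theorem~\ref{thm:mesh_path_traversal}, which uses $O(N/B)$ blocks and supports traversal of any length-$K$ path in $O(K/\lg B)$ I/Os. I would fix the starting tetrahedron $t \in \myMesh$, pick a reference point $p_s$ interior to $t$, and conceptually root the implicit entry-face tree at $t^*$ in $\dual{\myMesh}$, as provided by Lemma~\ref{lem:subgraph_mesh_dual_is_tree}.

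Next, I would run a standard depth-first traversal on this implicit tree. The key point is that entry faces are determined locally from the geometry of the current tetrahedron and the fixed reference point $p_s$, so no visited-bit marking is needed and no copy-synchronization across blocks is required. Each edge of the tree is traversed twice (once downward, once back up), so the total number of dual-graph edge traversals is $O(N)$. Applying Theorem~\ref{thm:mesh_path_traversal} to this walk yields an $O(N/\lg B)$ I/O bound for visiting every tetrahedron exactly once. Reporting each tetrahedron on its first visit gives the first part of the claim.

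For skeleton features, at each visited tetrahedron $t$ I would examine its $O(1)$ adjacent vertices, edges, and faces, and report exactly those that are \emph{invisible} from $p_s$ with respect to $t$ in the sense defined before Lemma~\ref{lem:tetra_visibility}. By Lemma~\ref{lem:skeleton_visibility}, each skeleton element $k$ is invisible with respect to exactly one tetrahedron in its neighbourhood, so it is reported exactly once. Degenerate configurations where a face is coplanar with $p_s$ or an edge is collinear with $p_s$ are resolved by the left/right (respectively above/below) tie-breaking conventions described in the paragraph preceding the theorem; these rules are likewise decidable from the local geometry of $t$ and from $p_s$, so they incur no additional I/O.

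The main obstacle I anticipate is justifying that every visibility check and every degenerate-case tie-break can indeed be evaluated using only information available within the currently loaded block (i.e., the geometry of $t$ together with the fixed point $p_s$), so that the reporting machinery contributes no I/Os beyond the DFS itself. The argument is that visibility of an adjacent feature $k$ of $t$ depends only on the line segments from $p_s$ to points of $k$ and on $t$'s own geometry, both of which are known when $t$ is the current vertex; the degenerate-case rules depend only on the supporting plane of $k$ (or the supporting line of an edge) together with $p_s$, which are likewise local. Once that is established, combining the $O(N/\lg B)$ DFS cost with the $O(1)$-per-tetrahedron reporting work gives the stated bound.
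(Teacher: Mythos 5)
Your proposal is correct and follows essentially the same route as the paper's proof: store the mesh via Theorem~\ref{thm:mesh_path_traversal}, perform a mark-bit-free DFS on the implicit entry-face tree of Lemma~\ref{lem:subgraph_mesh_dual_is_tree} (an $O(N)$-length walk, hence $O(N/\lg B)$ I/Os), and report skeleton features uniquely via the invisibility criterion of Lemma~\ref{lem:skeleton_visibility}. Your added discussion of why the visibility tests and degenerate-case tie-breaks are evaluable locally is a useful elaboration but not a departure from the paper's argument.
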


\begin{proof}
By Theorem \ref{thm:mesh_path_traversal}, there is a representation of $\myMesh$ 
that permits traversal of a path of length $N$ in $\myMesh$ in 
$\BigOh{ N / \lg{B} }$ I/Os. 
By Lemma~\ref{lem:subgraph_mesh_dual_is_tree}, we can perform a tree traversal in 
$\dual{\myMesh}$ to visit all tetrahedra in $\myMesh$.  
The total length of the path traversed is $\BigOh{ N }$ so we use 
$\BigOh{ N / \lg{B} }$ I/Os. 
By Lemma~\ref{lem:skeleton_visibility}, we can report each feature of the 
skeleton exactly once and, during the traversal the tetrahedron can be reported 
at either its first or last visit.
\end{proof}

\subsection{Box Queries} 
Assume we are given an axis parallel box in $\threeD$. 
Select as a starting point one of the box's corner points, and assume we are 
given the tetrahedron $t \in \myMesh$ containing this point. 
We modify the entry face selection rule so that only faces from among those 
that intersect the interior of the box can be selected. 
We then build an implicit tree in $\dual{\myMesh}$ that visits all tetrahedra
that intersect the box interior. 
Again, using our structure, this allows us to report the intersection in 
$\BigOh{ K/ \lg B }$ I/Os.

\subsection{Plane Intersection Queries}
As a visualization tool, we wish to intersect a convex tetrahedral mesh 
$\myMesh$ by a plane $\mathcal{H}$ and report the intersection of 
$\myMesh$ and $\mathcal{H}$. 
Let $K$ be the number of tetrahedra in $\myMesh$ intersected by $\mathcal{H}$, 
including tetrahedra that intersect $\mathcal{H}$ only in their skeleton. 
Further, assume that we are given some tetrahedra $t_s \in \mathcal{T}$ that 
is part of the intersection set. 
The intersection of $\myMesh$ by $\mathcal{H}$ produces a two-dimensional planar 
subdivision mapped onto $\mathcal{H}$, which we will denote $\myMesh_{\mathcal{H}}$. 
Faces in $\myMesh_{\mathcal{H}}$ correspond to the intersection of $\mathcal{H}$ with 
the interior of a tetrahedron $t \in \myMesh_{\mathcal{H}}$. 
Edges in $\myMesh_{\mathcal{H}}$ correspond to the intersection of $\mathcal{H}$ with a face 
of $t$, while vertices correspond to the intersection of $\mathcal{H}$ with an 
edge of $t$. 
A face of $t \in \myMesh$, which lies directly on $\mathcal{H}$, appears as a face 
in $\myMesh_{\mathcal{H}}$. 
Likewise, edges and vertices from $\myMesh$ that fall exactly on $\mathcal{H}$ 
appear as edges and vertices, respectively, on $\myMesh_{\mathcal{H}}$.

Let $\mathcal{H}^+$ be the open half-space above $\mathcal{H}$, and 
$\mathcal{H}^-$ the open half-space below $\mathcal{H}$. 
Let $\hat{\myMesh}_{\mathcal{H}}$ be the set of tetrahedra in $\myMesh$ which 
intersect both $\mathcal{H}$ and $\mathcal{H}^+$. 
Tetrahedra in $\hat{\myMesh}_{\mathcal{H}}$ have some point in their interior in 
$\mathcal{H}^+$, and intersect $\mathcal{H}$. 
We select entry faces as before with one minor modification; 
we measure the distance between $p_s$ and tetrahedron $t$ on the plane 
$\mathcal{H}$, and disqualify as a candidate entry face any face which does not 
extend into $\mathcal{H}^+$.

\begin{lemma}
It is sufficient to visit the tetrahedra in $\hat{\myMesh}_{\mathcal{H}}$ to 
report all faces, edges, and vertices of $\myMesh_{\mathcal{H}}$.
\end{lemma}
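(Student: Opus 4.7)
The plan is to argue that every feature of $\myMesh_{\mathcal{H}}$ has at least one adjacent tetrahedron that belongs to $\hat{\myMesh}_{\mathcal{H}}$, and that visiting such a tetrahedron is sufficient to report that feature. Since faces of $\myMesh_{\mathcal{H}}$ include all edges and vertices of $\myMesh_{\mathcal{H}}$ on their boundaries, I will focus mainly on faces and argue that edges and vertices are produced as boundaries of faces that are themselves reportable from $\hat{\myMesh}_{\mathcal{H}}$. The degenerate features that lie entirely within a face, edge, or vertex of $\myMesh$ (i.e., coplanar or collinear with $\mathcal{H}$) will be handled by a separate, symmetric argument using the half-space convention adopted at the end of the preceding subsection.

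First I would split the faces of $\myMesh_{\mathcal{H}}$ into two classes. The generic class consists of faces obtained as the two-dimensional intersection of $\mathcal{H}$ with the \emph{interior} of a tetrahedron $t \in \myMesh$. Because $t$ is convex and its interior meets $\mathcal{H}$ in a non-empty open planar region, $t$ must contain interior points strictly above and strictly below $\mathcal{H}$; otherwise the intersection of $\mathcal{H}$ with $\mathrm{int}(t)$ would be empty or at most one-dimensional. Hence $t \cap \mathcal{H}^+ \ne \emptyset$ and $t$ belongs to $\hat{\myMesh}_{\mathcal{H}}$, so the face is reported when $t$ is visited. The degenerate class consists of faces of $\myMesh_{\mathcal{H}}$ that coincide with a face $f$ of some tetrahedron of $\myMesh$ lying on $\mathcal{H}$. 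Such an $f$ is shared by at most two tetrahedra, and by the tie-breaking rule from the previous subsection, exactly the tetrahedron lying in $\mathcal{H}^+$ is chosen to report $f$; that tetrahedron is, by definition, in $\hat{\myMesh}_{\mathcal{H}}$.

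Next I would dispose of edges and vertices of $\myMesh_{\mathcal{H}}$. A generic edge of $\myMesh_{\mathcal{H}}$ is produced when $\mathcal{H}$ cuts a triangular face of a tetrahedron $t$ in a one-dimensional segment: such a face is incident to two tetrahedra, and since the cutting segment has points in $\mathcal{H}^+$ arbitrarily close to itself on each side of the face, at least one of the two tetrahedra must have interior in $\mathcal{H}^+$ and thus lie in $\hat{\myMesh}_{\mathcal{H}}$. An edge of $\myMesh_{\mathcal{H}}$ that coincides with an edge of $\myMesh$ is bordered (in $\mathcal{H}$) by faces of $\myMesh_{\mathcal{H}}$ already handled above, and the tie-breaking rule for co-linear edges assigns the reporting to the upper of the two candidate tetrahedra, which lies in $\mathcal{H}^+$. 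Vertices of $\myMesh_{\mathcal{H}}$ are handled in the same way: they arise either as transverse intersections of $\mathcal{H}$ with an edge of some $t$, in which case any incident tetrahedron that meets $\mathcal{H}^+$ works, or as vertices of $\myMesh$ that lie on $\mathcal{H}$, in which case they are endpoints of edges already reported.

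The main obstacle I anticipate is the fully degenerate boundary case: a face, edge, or vertex of $\myMesh$ that lies entirely on $\mathcal{H}$ and is incident only to tetrahedra contained in $\mathcal{H}^- \cup \mathcal{H}$ (that is, no incident tetrahedron extends into $\mathcal{H}^+$). This can happen only at the outer boundary of $\myMesh$, and it must be handled cleanly to keep the lemma tight. My plan is to argue that the entry-face selection rule, together with the convexity of $\myMesh$ and the assumption that the starting tetrahedron $t_s$ is part of $\hat{\myMesh}_{\mathcal{H}}$, guarantees that any such degenerate boundary feature is the boundary of some reportable face in $\hat{\myMesh}_{\mathcal{H}}$, and is thus visited indirectly. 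If that fails in some residual case, the fallback is to tighten the statement with a mild general-position assumption on $\mathcal{H}$ analogous to the ones already made earlier for reporting the skeleton without duplicates.
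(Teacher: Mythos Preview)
Your approach is essentially the same as the paper's: show that each feature of $\myMesh_{\mathcal{H}}$ is adjacent to some tetrahedron in $\hat{\myMesh}_{\mathcal{H}}$. The paper's proof is in fact much terser than yours---it only spells out the degenerate case (a face, edge, or vertex of $\myMesh$ lying exactly on $\mathcal{H}$, where one of the adjacent tetrahedra must lie in $\mathcal{H}^+$) and leaves the generic case implicit.

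Two minor points. First, your appeal to ``the tie-breaking rule from the previous subsection'' is misplaced: those rules concern degeneracies relative to the reference point $p_s$ (co-planar faces and co-linear edges with $p_s$) and are about avoiding \emph{double} reporting during the depth-first traversal. They have nothing to do with $\mathcal{H}$, and this lemma does not need them; you only need \emph{existence} of an adjacent tetrahedron in $\hat{\myMesh}_{\mathcal{H}}$, not a rule for which one reports. Second, the boundary degenerate case you flag (a feature on $\mathcal{H}$ lying on the boundary of $\myMesh$ with its only adjacent tetrahedron in $\mathcal{H}^-$) is a genuine corner case, and the paper simply asserts that a face on $\mathcal{H}$ ``separates two tetrahedra,'' implicitly assuming interiority; so your caution there is warranted rather than excessive.
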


\begin{proof}
We must show that every vertex, face, or edge in $\myMesh \cup \mathcal{H}$ is 
adjacent to a tetrahedron $t \in  \hat{\myMesh}_{\mathcal{H}}$. 
A face that lies exactly on $\mathcal{H}$ separates two tetrahedra, one of which 
must lie in $\mathcal{H}^+$. 
Likewise, any edge or vertex which lies directly on $\mathcal{H}$ is adjacent to 
a tetrahedron in $\mathcal{H}^+$. 
Since these tetrahedron are adjacent to features on $\mathcal{H}$ they are 
included in $\hat{\myMesh}_{\mathcal{H}}$, and therefore are reported.
\end{proof}

\begin{lemma}\label{lem:entry_in_H+}
Let $t$ be a tetrahedron in $\hat{\myMesh}_{\mathcal{H}}$. 
Let $t' \in \myMesh$ be the tetrahedron adjacent to $t'$ across the face 
$\mathtt{entry}(t)$; then $t'$ is also an element of 
$\hat{\myMesh}_{\mathcal{H}}$.
\end{lemma}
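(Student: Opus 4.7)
The plan is to unpack the two conditions that the modified entry-face selection rule places on $\mathtt{entry}(t)$ and observe that both transfer to the neighbour $t'$ across this face. First I would recall that the entry face is, by definition, a face of $t$ shared with exactly one adjacent tetrahedron $t'$, so $\mathtt{entry}(t) \subseteq t \cap t'$; in particular, every point of $\mathtt{entry}(t)$ lies in $t'$.

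Next I would apply the explicit disqualification rule stated just before the lemma: a face of $t$ is allowed to be $\mathtt{entry}(t)$ only if it extends into $\mathcal{H}^+$. Hence $\mathtt{entry}(t) \cap \mathcal{H}^+ \neq \emptyset$, and together with $\mathtt{entry}(t) \subseteq t'$ this gives $t' \cap \mathcal{H}^+ \neq \emptyset$, establishing one of the two defining conditions of membership in $\hat{\myMesh}_{\mathcal{H}}$.

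The subtler obligation is to show $t' \cap \mathcal{H} \neq \emptyset$. For this I would lean on the stipulation that distances between $p_s$ and the candidate faces of $t$ are measured \emph{on the plane $\mathcal{H}$}, which implicitly assumes $p_s \in \mathcal{H}$ and that the measurement is taken within $\mathcal{H}$. For such a distance to be well defined, the candidate face must itself intersect $\mathcal{H}$, so $\mathtt{entry}(t) \cap \mathcal{H} \neq \emptyset$; any point in this intersection also lies in $t'$, giving $t' \cap \mathcal{H} \neq \emptyset$. Combining both properties yields $t' \in \hat{\myMesh}_{\mathcal{H}}$.

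The main obstacle is justifying the implicit requirement that $\mathtt{entry}(t)$ intersect $\mathcal{H}$, since the text only makes the ``extends into $\mathcal{H}^+$'' clause explicit. I would make the proof robust by sharpening the entry-face rule so as to explicitly disqualify any face whose intersection with $\mathcal{H}$ is empty (equivalently, requiring the distance from $p_s$ measured on $\mathcal{H}$ to be finite and attained by a point of the face). This is the minimal extra restriction needed to guarantee that the implicit tree remains inside $\hat{\myMesh}_{\mathcal{H}}$; once it is in force, the short argument above closes the proof.
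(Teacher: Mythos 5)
Your proof is correct and takes essentially the same route as the paper's: the paper's one-line argument likewise transfers membership to $t'$ through the shared entry face, using the rule that disqualifies faces not extending into $\mathcal{H}^+$. You are in fact more careful than the paper, whose proof only verifies $t' \cap \mathcal{H}^+ \neq \emptyset$ and leaves the condition $t' \cap \mathcal{H} \neq \emptyset$ implicit in the stipulation that distances to candidate faces are measured on the plane $\mathcal{H}$ — exactly the point you identify and make explicit.
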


\begin{proof}
By definition, the entry face of $t$ contains a point $x_t$ in $\mathcal{H}^+$. 
The tetrahedron $t'$ shares this face with $t$, thus $t'$ must intersect 
$\hat{\myMesh}_{\mathcal{H}}$.
\end{proof}

Now consider the dual graph $\dual{\myMesh}$ of $\myMesh$. 
Selecting the tetrahedra in $\hat{\myMesh}_{\mathcal{H}}$ defines a subset of 
the vertices of $\dual{\myMesh}$. 
Let $\hat{\myMesh}_{\mathcal{H}}^*$ be the subgraph induced on this subset. 

\begin{lemma}\label{lem:dual_tree_plane_intersection}
Defining the set of entry faces as above produces an implicit tree, rooted at 
$v(t_s)$, on $\hat{\myMesh}_{\mathcal{H}}^*$.
\end{lemma}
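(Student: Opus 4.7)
The plan is to verify, for the implicit graph $\hat{\myMesh}_{\mathcal{H}}^*$ with the modified entry-face rule, the three defining properties of an arborescence rooted at $v(t_s)$: (i) every non-root vertex has exactly one outgoing entry-face edge, (ii) that edge leads to another vertex of $\hat{\myMesh}_{\mathcal{H}}^*$, and (iii) iterating the entry-face relation terminates only at $v(t_s)$, so no directed cycle can form. Item (ii) is already handed to us by Lemma~\ref{lem:entry_in_H+}, so the real work is (i) and (iii). The overall structure mirrors the proof of Lemma~\ref{lem:subgraph_mesh_dual_is_tree} from \cite{DBLP:journals/gis/BergKOO97}, but carried out in the restricted subgraph and with distances measured on $\mathcal{H}$ instead of in $\threeD$.

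For well-definedness (claim (i)), first I would show that every $t \in \hat{\myMesh}_{\mathcal{H}}$ admits at least one candidate entry face, i.e.\ a face that reaches into $\mathcal{H}^+$. Since $t$ is a closed convex tetrahedron with $t \cap \mathcal{H}^+ \ne \emptyset$, the set $t \cap \mathcal{H}^+$ is a nonempty convex polytope whose relative boundary inside $\mathcal{H}^+$ is a union of (relatively open) pieces of the faces of $t$; at least one such face therefore contains interior points of $\mathcal{H}^+$, and the candidate set is nonempty. Uniqueness of the chosen face then follows from the same lexicographic tie-breaker used in Lemma~\ref{lem:subgraph_mesh_dual_is_tree}, applied to the restricted candidate set and to the distance function measured on $\mathcal{H}$.

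For acyclicity and termination (claim (iii)) I would define a potential function
\[
\Phi(t) \;=\; \min_{x \in t \cap \mathcal{H}} \mathrm{dist}_{\mathcal{H}}(p_s, x),
\]
and argue that $\Phi$ is strictly decreasing along entry-face pointers until we reach $t_s$. Since $p_s \in t_s$, we have $\Phi(t_s) = 0$; for every other $t \in \hat{\myMesh}_{\mathcal{H}}$, $p_s$ is interior to $t_s$ and disjoint from $t$, so $\Phi(t) > 0$. If $t'$ is the entry-face neighbour of $t$ and $f = \mathtt{entry}(t) = t \cap t'$, the nearest point $x^\star$ on $t \cap \mathcal{H}$ to $p_s$ lies on $f \cap \mathcal{H}$ (because the entry-face rule chose the face closest to $p_s$ among those reaching into $\mathcal{H}^+$, and the convex body $t \cap \mathcal{H}$ has its $p_s$-nearest point on its boundary); since $f \cap \mathcal{H} \subseteq t' \cap \mathcal{H}$, a short convexity argument shows that $t'$ contains points on $\mathcal{H}$ strictly closer to $p_s$ than $x^\star$, giving $\Phi(t') < \Phi(t)$. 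Together with finiteness of the mesh this yields acyclicity and forces every entry-face chain to end at $t_s$, completing the tree property.

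The main obstacle I anticipate is precisely the strict-inequality step $\Phi(t') < \Phi(t)$: in degenerate configurations the $p_s$-nearest point on $t \cap \mathcal{H}$ could coincide with the $p_s$-nearest point on $t' \cap \mathcal{H}$, both lying on the shared face $f$. Ruling this out requires a careful local analysis on $\mathcal{H}$, combining the fact that $f$ extends into $\mathcal{H}^+$ (so $f \cap \mathcal{H}$ is a proper segment rather than a single point), the convexity of both tetrahedra, and the tie-breaking rule. I would handle the genuinely tied cases the same way \cite{DBLP:conf/isaac/BoseM00} and the degeneracy discussion preceding Theorem~\ref{thm:mesh_df_traversal} do, by imposing a fixed total order consistent with the distance measure on $\mathcal{H}$ so that the "\,$<$\," becomes a strict lexicographic comparison. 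Once that is in place, the three claims combine to show that the entry-face edges form the claimed tree rooted at $v(t_s)$.
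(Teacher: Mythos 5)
Your proof is correct and follows essentially the same route as the paper's: the paper's own proof simply cites Lemma~\ref{lem:subgraph_mesh_dual_is_tree} for acyclicity/uniqueness and Lemma~\ref{lem:entry_in_H+} for closure of the entry relation within $\hat{\myMesh}_{\mathcal{H}}$, which are exactly your ingredients (i)--(iii). The only difference is one of detail: you make explicit the distance-to-$p_s$ potential argument that the paper leaves implicit in its appeal to the De~Berg~\etal\ selection rule, which is a reasonable expansion rather than a new approach.
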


\begin{proof}
The proof is essentially the same as for 
Lemma~\ref{lem:subgraph_mesh_dual_is_tree}. 
Since the entry faces are still unique, there is no danger of introducing cycles. 
By Lemma \ref{lem:entry_in_H+}, each entry face leads to another tetrahedron in 
$\hat{\myMesh}_{\mathcal{H}}$, therefore connectivity is maintained.
\end{proof}

Lemma \ref{lem:dual_tree_plane_intersection} leads to a simple traversal 
algorithm for reporting the intersection of $\myMesh_{\mathcal{H}}$ with $\mathcal{H}$. 
We perform a depth-first traversal of $\hat{\myMesh}_{\mathcal{H}}^*$ on the 
implicit tree defined by the entry edges, and report the intersection of each tetrahedron 
with $\mathcal{H}$. 
If we visit a total of $K$ tetrahedra, the total path length of the traversal is 
$\BigOh{ K }$ steps, thus if we represent $\hat{\myMesh}_{\mathcal{H}}^*$ with 
the I/O efficient representation of $\myMesh_{\mathcal{H}}$, we can report the intersection 
in $\BigOh{ K/ \lg B }$ I/Os. 

The correctness of the box query algorithm does not rely on the fact that the 
box is axis-parallel.  
In fact, given a starting point interior to any arbitrarily oriented convex shape, 
we can report the intersection in $O\left( K/ \lg B \right)$ I/Os. 
Plane intersection can be viewed as a degenerate case of reporting an 
arbitrarily-oriented box query with a box that has been flattened. 
For both the box and the plane intersection queries we have presented, we have assumed 
that we are given a starting tetrahedron $t_s$, interior to the query region. 
Efficiently finding $t_s$ represents the next step in our resarch, and
 an effective technique in this setting is described in 
Chapter~\ref{chp:jump_and_walk}.

\section{Open Problems}

Our original purpose in conducting this research was to develop a data 
structure that is both succinct and efficient in the EM setting.
In particular we hoped to extend our structure for triangulations in
$\plane$ (see Chapter~\ref{chp:succinct_graphs}) to tetrahedral 
meshes in $\threeD$.
Unfortunately the data structures we use rely on the fact that there
is a $k$-page embedding of the dual graph with constant $k$. 
There is no known bound on the page-thickness of the dual of a
tetrahedral mesh, thus it is uncertain that our representation will
work.
Barat~\etal~\cite{DBLP:journals/combinatorics/BaratMW06} proved 
that bounded-degree planar graphs with maximum degree greater
than nine have arbitrarily large geometric thickness.
They state that finding the page embedding of a degree five graph 
is an open problem.
Meanwhile, Duncan~\etal~\cite{DBLP:journals/corr/cs-CG-0312056}
gave a bound of $k=2$ when the maximum degree is less than four.
With $d=8$, the augmented dual graph of a tetrahedral mesh is in
the gray area between these results.
Thus, a potential approach to solving this problem would be to prove that
the dual of a tetrahedral mesh is $k$-page embeddable, although there
is no guarantee this is possible.

As was the case in the previous chapter, our results are more general
than presented here.
We have stated our results for well-shaped tetrahedra meshes, but they
will hold for any tetrahedral mesh that admits a separator for which
the recursive application results in a suitably small set of separator
vertices.

\chapter[Jump and Walk in Well-Shaped Meshes]{Jump-and-Walk in Well-Shaped 
  Meshes}\label{chp:jump_and_walk}
\chaptermark{Jump-and-Walk}

\section{Introduction}

Point location is a topic that has been extensively studied since the origin 
of computational geometry. 
In $\plane$, the point location problem, typically referred to as planar point 
location, can be defined as follows. 
Preprocess a polygonal subdivision of the plane so that given a query point $q$, 
the polygon containing $q$ can be reported efficiently. 
There are several well-known results showing that a data structure with $\BigOh{n}$ 
space can report such queries in $\BigOh{ \log n }$  time. 
In spatial point location, we have a subdivision of the three-dimensional space into 
polyhedra, and given a query point $q$, we wish to return the polyhedron 
containing $q$.
From a theoretical standpoint, the problem of efficient general spatial point location 
in $\threeD$ is still open~\cite{cg_book_2008}.

One popular method for solving the point location problem originated in the late 1970s with
the works of Green and Sibson~\cite{DBLP:journals/cj/GreenS78} and Bowyer~\cite{bowyer81}, 
in the context of Voronoi diagrams.
This technique, which has come to be commonly known as 
\emph{jump-and-walk}, has shown to be practically efficient and
has attracted theoretical interest.

In this chapter, we consider the application of jump-and-walk to a more
specialized problem; namely, point location queries in two and three 
dimensions for well-shaped triangular and tetrahedral meshes.  
A well-shaped mesh, denoted by $\mathcal{M}$, is one in which all its simplices 
have bounded aspect ratio (see Definition~\ref{def:well-shaped}).
This assumption is valid for mesh generation algorithms that enforce the 
well-shaped property on their output meshes~\cite{mttv_1998}. 
Our motivation comes from an external memory setting, where we have examined 
data structures for representations that permit efficient path traversals in meshes
which are too large to fit in the main memory~\cite{DBLP:conf/cccg/Dillabaugh10}.

\subsection{Jump-and-Walk}\label{sec:jump_and_walk}
Assume we have a subdivision of our search space and a 
query point $q$, and we wish to report the subdivision element
containing $q$.

\begin{enumerate}
  \item  {
  Select a set of possible start (jump) points such that:
  \begin{enumerate}
	\item they can be queried efficiently, and
	\item their location within the subdivision is known.
  \end{enumerate}
  }
  \item From the set of jump points, select a point from which to start, which
 we denote as $p$. 
  \item Starting at $p$, walk along the straight line from $p$ to $q$, passing through 
  the elements of the subdivision until you arrive at the point $q$, at which 
  time you report the current subdivision element.  
\end{enumerate}

M\"ucke \etal~\cite{DBLP:journals/comgeo/MuckeSZ99} formalized the concept of 
jump-and-walk. 
Given a set of points, and a Delaunay triangulation over them, they considered 
a (random) subset of these points as the candidate jump-points. 
Given a query point $q$, they locate the nearest point in this subset (say $p$) 
to $q$ as the jump-point, and then walk along the straight line segment $pq$, 
passing through possibly several triangles until they reach the triangle 
containing $q$. 
The output of the query is the triangle that contains $q$. 
This in turn was motivated by the experiments presented in the Ph.D. thesis of 
M\"ucke~\cite{muckephd}, where the randomized incremental method was used to 
construct 3D-Delaunay triangulations using flips. 
It is shown in \cite{DBLP:journals/comgeo/MuckeSZ99} that the expected running 
time of performing the point location using the jump-and-walk strategy is 
$O(n^{1/4})$, for a Delaunay triangulation of a set of $n$ random, uniformly 
distributed, points over a convex set in $\mathbb{R}^3$. 
M\"ucke provides experiments that confirm his theoretical results. 
Results on point location in planar Delaunay triangulations, under the 
framework of jump-and-walk, has been studied by 
Devroye \etal~\cite{DBLP:journals/algorithmica/DevroyeMZ98}, where 
$O(n^{1/3})$ expected time is spent for a set of
$n$ random points.
With slightly more advanced data structures used in the jump stage, 
Devroye \etal~\cite{Devroye200461} demonstrated that expected search times 
for the jump-and-walk in Delaunay triangulations range from  
$\BigOmega{ \sqrt{n} }$ to $\BigOmega{ \log n }$, depending on the specific 
data structure employed for the jump step.
Devillers~\cite{DBLP:journals/ijfcs/Devillers02} discusses a data structure,
the Delaunay hierarchy, which is used to guide the jump-and-walk point location during
Delaunay triangulation construction.  They provide an expected-time analysis and show that the cost of inserting the
$n$\textsuperscript{th} point in the triangulation is
$\BigOh{ \alpha \log_{\alpha} n}$, where $\alpha$ is a tunable constant that 
reflects the number of vertices retained at each higher level in the 
hierarchy. Furthermore, Devillers \etal~\cite{DBLP:journals/ijfcs/DevillersPT02}
describe an implementation study of
several point location strategies using the jump-and-walk paradigm.
Haran and Halperin~\cite{DBLP:journals/jea/HaranH08} provide a  detailed experimental study of the performance of several point location methods in planar arrangements, including that of jump-and-walk. 
For more recent implementation studies of various point location strategies, including jump-and-walk,
see~\cite{DBLP:conf/compgeom/DevillersC11, ZhuISVD2012}.

\subsection{Our Results}

Suppose that we are given a query point $q$ in a 
well-shaped mesh $\mathcal{M}$, with vertex set $P$ and aspect ratio $\rho$
(refer to Definition~\ref{def:well-shaped}).
Let $p\in P$ be a nearest neighbour of $q$,
and $\hat{p} \in P$ be an \emph{approximate} nearest neighbour (ANN) of $q$
(refer to Definition~\ref{def:ANN}).
We show that,
once $p$ (respectively $\hat{p}$) has been located,
the number 
of tetrahedra intersected by the line segment $pq$
(respectively $\hat{p}q$) is bounded 
by a constant.
More precisely,
$pq$ intersects at most $\frac{\sqrt{3}\,\pi}{486}\rho^3(\rho^2+3)^3$ tetrahedra
(refer to Theorem~\ref{thm:exact_3d}),
and $\hat{p}q$ intersects at most
$\frac{\sqrt{3}\,\pi}{486}\rho^3(\rho^2+3)^3+\left\lfloor \frac{\pi}{\arcsin\left(\frac{3\sqrt{3}}{8\rho^2}\right)}\right\rfloor$ tetrahedra (refer to Theorem~\ref{thm:3d_approximate}).

Therefore,
given a well-shaped mesh $\mathcal{M}$,
a jump-and-walk search can be performed in the time required to perform a 
nearest-neighbour search on the vertices of $\mathcal{M}$, plus 
$O(1)$ time for the walk-step to find the tetrahedron 
containing the query point.
Also, a jump-and-walk search can be performed in the time required to perform an
approximate nearest-neighbour search on the vertices of $\mathcal{M}$, plus $O(1)$ time
for the walk-step to find the tetrahedron containing the query point.

All the previous theoretical results for the jump-and-walk paradigm  resulted in expected 
search times (see Section \ref{sec:jump_and_walk}).
Our research is unique in that it gives the first (that we know of) provable 
result for the exact number of steps required  for the walk step in well-shaped meshes. As a consequence of guaranteeing constant time for the walk step, and given
existing techniques to perform the jump step efficiently, our results lead to
a provably efficient point location strategy in the setting of well-shaped
meshes.

Our motivation to study came from the data structues presented in 
Chapters~\ref{chp:succinct_graphs} and \ref{chp:mesh_trav}, which
permit efficient path traversals in meshes. 
One of the challenges for the applications we presented there was how
to resolve point location queries efficiently. 
Our initial impression was that since the meshes were well-shaped (i.e.,
each tetrahedron is ``fat''),
and the degree of each vertex is constant, proving constant time for
the walk step from a nearest
neighbour would be trivial.
However, when we tried to prove this formally, we ran into 
several intricate issues and technical difficulties, as the reader will see in 
Section \ref{sec:j-w-3D}, especially when wanting to prove tight bounds with 
respect to the aspect ratio. 
Moreover, when we jump to an approximate nearest neighbour, the problem 
becomes even more challenging. 

\subsection{Organization}
The remainder of this chapter is organized as follows. 
In Section \ref{sec:Background}, we define a number of terms related to 
well-shaped meshes and present some observations on such meshes.
This section also defines the nearest-neighbour and approximate nearest-neighbour
problems. 
In Section \ref{sec:j-w-2D}, we present our jump-and-walk results in the
$\CDTwoD$ setting. 
Although several results in $\plane$ are considered to be common knowledge
(see Lemma~\ref{lem:min_angle}, for instance),
we could not find any proofs in the literature for these results.
Also,
the proof schemes in $\plane$
set the table for the proofs in $\threeD$.
For a given aspect ratio $\rho$, 
we wish to describe in terms of $\rho$
the number of triangles intersected by $pq$ and $\hat{p}q$.
Surprisingly, this seems to be unknown in the literature.
This too is an appropriate preparation for the $\threeD$ setting.
In Section \ref{sec:j-w-3D} we present results in the $\threeD$ setting.
In Section \ref{sec:walk-experiments} we describe the results of experiments
based on the implementation of our technique in $\plane$.
Section \ref{sec:summary} concludes this chapter, and discusses possible future
work.

\section{Background}\label{sec:Background}

We consider jump-and-walk in triangular and tetrahedral 
meshes in two and three dimensions, respectively. 
We use $\mathcal{M}$ to denote a mesh in either $\plane$ or $\threeD$, 
and if we want to be specific we use $\meshTwoD$ to denote a triangular mesh
 in $\plane$, and $\meshThreeD$ to denote a tetrahedral mesh in $\threeD$. 
We assume that the triangles and tetrahedra, which are collectively referred 
as simplices, are \emph{well-shaped}, a term which will be defined shortly. 
If all simplices of a mesh $\mathcal{M}$ are well-shaped, then $\mathcal{M}$ 
itself is said to be a {\em well-shaped mesh}. 
Triangles in $\meshTwoD$ are considered adjacent if and only if they share an 
edge.
Similarly, tetrahedra in $\meshThreeD$ are considered adjacent if and only 
if they share a face. 

\subsection{Well-Shaped Meshes}

We begin by stating the \emph{well-shaped} property. 
In this chapter we use a slightly different definition for well-shapedness
than that given in Section~\ref{ssec:geom_seperators} and in~\cite{mttv_1998}.
The definition is conceptually the same, but for $R(t)$ we use the circumsphere
of the mesh simplex rather than the smallest enclosing sphere.
We have made this change because it simplifies some of the calculations
we use to arrive at precise bounds for walk step.
Formally, we use the following definition:

\begin{definition} \label{def:well-shaped}
We say that a mesh $\meshTwoD$ ($\meshThreeD$)
is \emph{well-shaped} if for any triangle (tetrahedron) $t \in \meshTwoD$ 
($t \in \meshThreeD$), the ratio formed by the radius $r(t)$ of the incircle
 (insphere) of $t$ and the radius $R(t)$ of the circumcircle (circumpshere) 
 of $t$ is bounded by a constant $\rho$, i.e., $\frac{R(t)}{r(t)} < \rho$.
\end{definition}

In this work, all meshes and simplicies (triangles and tetrahedra)
are assumed to be well-shaped.
Note that in two dimensions,
$\rho\geq 2$ where $\rho = 2$ corresponds to the equilateral triangle,
and in three dimensions,
$\rho\geq 3$ where $\rho = 3$, corresponds to the regular tetrahedron.
We make the following observations related to Definition~\ref{def:well-shaped}.

\begin{observation}\label{obs:ins_circ_phere}
Let $t$ be a triangle (tetrahedron).
\begin{enumerate}
\item\label{obs:insphere_bound}
Let $v$ be any vertex of $t$.
Denote by $e_v$ ($f_v$)
the opposite edge (face) of $v$ in $t$.
Let $$\mathtt{mdist}(v,t) = \min_{x\in e_v}|xv| \qquad (\mathtt{mdist}(v,t) = \min_{x\in f_v}|xv|),$$
where the minimum is taken over all points $x$ on $e_v$ ($f_v$).
Then, $\mathtt{mdist}(v,t)$
is an upper-bound on the diameter of the incircle (insphere of) $t$.
Formally, $2r(t) \le \mathtt{mdist}(v,t)$.

\item\label{obs:circumsphere_bound}
Let $e$ be the longest edge (or in fact any edge) of $t$.
The diameter of the circumcircle (circumsphere) of $t$ is at least as long as $e$.
 Formally, $2R(t) \ge |e|$.
\end{enumerate}
\end{observation}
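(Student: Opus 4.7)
The plan is to establish the two inequalities separately; part~2 is essentially immediate, while part~1 requires a short geometric argument relating the perpendicular height from a vertex to the quantity $\mathtt{mdist}(v,t)$.

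For part~2, I would simply observe that every vertex of $t$ lies on its circumcircle (respectively circumsphere), so each edge of $t$ is a chord of that circle (sphere). Since the longest chord of a circle (sphere) is a diameter, any edge has length at most $2R(t)$, which gives $|e|\le 2R(t)$ for the longest edge (and in fact for every edge), as the statement indicates.

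For part~1, the main idea is to sandwich $2r(t)$ beneath $\mathtt{mdist}(v,t)$ through the perpendicular height. Let $h_v$ denote the perpendicular distance from $v$ to the line (respectively plane) supporting $e_v$ (respectively $f_v$). I would first show $2r(t)\le h_v$: the incircle (insphere) is contained in $t$ and is tangent to $e_v$ (respectively $f_v$); tangency forces it to lie in the closed half-plane (half-space) bounded by the supporting line (plane) on the same side as $v$, and its farthest point from that line (plane) is realized along the normal direction at perpendicular distance $2r(t)$. Because this farthest point must still lie in $t$, it cannot exceed the maximum perpendicular distance of any point of $t$ to the supporting line (plane), which is $h_v$ and is attained at $v$ itself. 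Hence $2r(t)\le h_v$.

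It then remains to note $h_v\le \mathtt{mdist}(v,t)$ by a brief case analysis: if the foot of the perpendicular from $v$ to the supporting line (plane) lies inside $e_v$ (respectively $f_v$), the two quantities are equal; otherwise the minimum of $|xv|$ over $x\in e_v$ ($x\in f_v$) is attained on the relative boundary of the edge (face) and strictly exceeds $h_v$. Chaining yields $2r(t)\le \mathtt{mdist}(v,t)$. The only place requiring mild care is keeping the two- and three-dimensional arguments strictly parallel; in particular, verifying in the tetrahedral case that the farthest point of the insphere from the supporting plane of $f_v$ is attained on the normal through the centre (so that its perpendicular offset is exactly $2r(t)$), which is immediate because the insphere is a Euclidean ball tangent to that plane. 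No nontrivial calculation is needed.
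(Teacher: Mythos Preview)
Your argument is correct. The paper states this result as an \emph{observation} and offers no proof, treating both parts as immediate; your write-up supplies exactly the kind of short justification one would expect, and the slab argument for part~1 (bounding $2r(t)$ by the perpendicular height $h_v$, then $h_v$ by $\mathtt{mdist}(v,t)$) is the natural way to make the claim precise.
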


\begin{lemma}\label{lem:min_angle}
There is a lower bound of $\alpha$ for each of the angles in any triangle 
of $\meshTwoD$.
There is a lower bound of $\Omega$ for each of the solid angles in any 
tetrahedron of $\meshThreeD$.
In particular, we have $\alpha \leq \frac{\pi}{3}$ and
$\cos(\alpha) = \frac{1+\sqrt{\rho(\rho-2)}}{\rho}$
for the two-dimensional case.
For the three-dimensional case,
$\Omega \leq 3\arccos\left(\frac{1}{3}\right) - \pi$
and $\sin\left(\frac{\Omega}{2}\right) = \frac{3\sqrt{3}}{8\rho^2}$.
\end{lemma}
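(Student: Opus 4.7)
The plan is to handle the planar and spatial cases in parallel: identify the smallest angle (respectively solid angle), characterise the extremal configuration consistent with the aspect-ratio constraint $R(t)/r(t) \le \rho$, and solve for the bound in closed form. The planar extremum will turn out to be an isoceles triangle, while the spatial estimate will come from a face-area / vertex-distance comparison controlled by the insphere and the circumsphere.

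For the planar case, I would begin from the classical identity $r(t)/R(t) = 4\sin(A/2)\sin(B/2)\sin(C/2)$, so the hypothesis $R(t)/r(t) \le \rho$ is equivalent to $4\sin(A/2)\sin(B/2)\sin(C/2) \ge 1/\rho$. Let $A = \min\{A,B,C\}$. Fixing $A$ and maximising $\sin(B/2)\sin(C/2)$ subject to $B+C = \pi - A$, via the concavity of $\log\sin$ on $(0,\pi/2)$, forces the optimum at $B = C = (\pi-A)/2$, so the worst-case triangle is isoceles. Substituting and using $\sin^2((\pi-A)/4) = (1-\sin(A/2))/2$, the constraint collapses to the quadratic $2\rho\sin(A/2)(1-\sin(A/2)) = 1$ in $\sin(A/2)$; its smaller root yields $\sin(A/2) = (\rho - \sqrt{\rho(\rho-2)})/(2\rho)$, and $\cos A = 1 - 2\sin^2(A/2)$ simplifies to $\cos\alpha = (1+\sqrt{\rho(\rho-2)})/\rho$. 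The bound $\alpha \le \pi/3$ is then the special case $\rho = 2$ (the equilateral triangle).

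For the spatial case, fix a vertex $v$ of a tetrahedron $t$ with opposite face $f_v$. Two ingredients bound $\Omega_v$ from below. First, applying Observation~\ref{obs:ins_circ_phere}(\ref{obs:insphere_bound}) at the other three vertices of $t$ (so that each face incident to $v$ stands off from the insphere by $r(t)$), together with the isoperimetric inequality $\mathrm{area}(f_v) \ge 3\sqrt{3}\,r_{f_v}^2$ for the $2$D incircle of $f_v$, bounds the area of $f_v$ from below by a constant multiple of $r(t)^2$. Second, Observation~\ref{obs:ins_circ_phere}(\ref{obs:circumsphere_bound}) applied to the edges out of $v$ gives $|vu| \le 2R(t)$ for every vertex $u$ of $f_v$, and hence every point of $f_v$ sits within distance $2R(t)$ of $v$. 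Projecting $f_v$ radially onto the unit sphere centred at $v$ and using an area-over-distance-squared estimate for $\sin(\Omega_v/2)$ produces a lower bound of order $r(t)^2/R(t)^2 \ge 1/\rho^2$; sharpening the constants by taking the extremal face to be equilateral yields $\sin(\Omega_v/2) \ge 3\sqrt{3}/(8\rho^2)$. The companion inequality $\Omega \le 3\arccos(1/3) - \pi$ is a sanity check obtained by specialising to the regular tetrahedron, where $\rho = 3$.

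The main obstacle is the three-dimensional case, specifically (i) establishing that the $2$D inradius of $f_v$ is controlled from below by $r(t)$, which requires a careful argument about where the insphere touches each face, and (ii) converting the face-area estimate into the exact constant $3\sqrt{3}/8$, which calls for identifying the equilateral extremal face and evaluating the spherical projection precisely rather than through a crude inequality. The planar derivation, by contrast, reduces to a one-line concavity argument followed by a quadratic, and is essentially routine once the identity $r/R = 4\sin(A/2)\sin(B/2)\sin(C/2)$ is in hand.
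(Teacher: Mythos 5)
Your two-dimensional argument is correct but follows a genuinely different route from the paper's. The paper works purely synthetically: it fixes an inner circle $\gamma$ of radius $r$ and an outer circle $\Gamma$ of radius $R$ at mutual distance $d$, shows the angle-minimising triangle sandwiched between them has $\gamma$ and $\Gamma$ as incircle and circumcircle (so $d=\sqrt{R(R-2r)}$ by Euler's relation), and maximises $\cos(\angle bac)=\frac{(R+d)^2-2r^2}{(R+d)^2}$ over $0\le d\le\sqrt{R(R-2r)}$ to land on $\frac{1+\sqrt{\rho(\rho-2)}}{\rho}$. You instead start from the identity $r/R=4\sin(A/2)\sin(B/2)\sin(C/2)$ and use concavity of $\log\sin$ to show the extremal triangle is isoceles, then solve the quadratic $2\rho s(1-s)=1$ in $s=\sin(A/2)$; I checked that the smaller root does give $\cos\alpha=1-2s^2=\frac{1+\sqrt{\rho(\rho-2)}}{\rho}$, and the feasibility condition $B=C=(\pi-A)/2\ge A$ holds precisely because $A$ is the minimum angle. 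Both methods identify the same isoceles extremal configuration; yours trades the paper's coordinate computation and Euler-relation input for a one-line Lagrange/concavity step, which is arguably cleaner. (One small nit: $\alpha\le\pi/3$ needs $\frac{1+\sqrt{\rho(\rho-2)}}{\rho}\ge\frac12$ for \emph{all} $\rho\ge 2$, not merely the evaluation at $\rho=2$, though this is immediate.) For the three-dimensional statement the comparison is moot: the paper offers no proof at all and simply cites a reference, so your sketch already does more than the paper attempts. That said, your 3D outline is only an outline --- the passage from a face-area/distance estimate to the \emph{exact} identity $\sin(\Omega/2)=\frac{3\sqrt{3}}{8\rho^2}$ (as opposed to a bound of the right order $1/\rho^2$) is precisely the step you flag as unresolved, and nothing in your sketch yet pins down why the extremal configuration gives that constant; if you wanted a self-contained proof you would need to carry that computation through, but for matching the paper a citation suffices.
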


\begin{proof}
For the three-dimensional case, refer to~\cite{springerlink:10.1007/BF01955874}.

For the two-dimensional case, 
let $\gamma$ and $\Gamma$ be two circles
such that $\gamma$ is included in $\Gamma$
(refer to Fig.~\ref{fig:proof_lem_two_dim}).
\begin{figure}
	\centering
 	\includegraphics[scale=0.875]{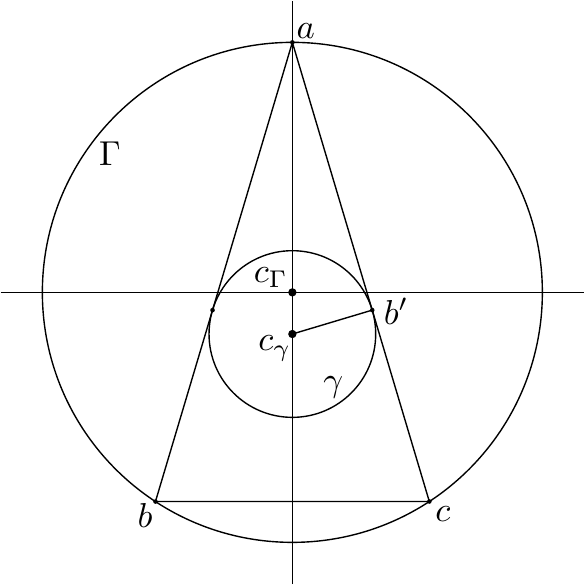}
	\caption{Proof of Lemma~\ref{lem:min_angle} for the two-dimensional case}
	\label{fig:proof_lem_two_dim}
\end{figure}
Denote by $r$ (respectively by $R$)
the radius of $\gamma$ (respectively of $\Gamma$),
and by $c_{\gamma}$ (respectively by $c_{\Gamma}$)
the center of $\gamma$ (respectively of $\Gamma$).
Let $d = |c_{\gamma}c_{\Gamma}|$.
Without loss of generality,
suppose $c_{\Gamma} = (0,0)$
and $c_{\gamma} = (0,-d)$.
Consider all triangles $\triangle abc$ such that $\Gamma$ contains $\triangle abc$
and $\triangle abc$ contains $\gamma$.
We will prove that:

\begin{itemize}
\item among all these triangles,
there exists a unique one that minimizes $\angle bac$.

\item This triangle is such that $\gamma$ and $\Gamma$
are, respectively, the incircle and the circumcircle of $\triangle abc$.
\end{itemize}

Therefore, $\alpha = \angle bac$.

In order to prove the existence of $\triangle abc$, suppose without loss 
of generality that:

\begin{itemize}
\item $a,b,c\in\Gamma$,

\item $a=(0,R)$,

\item $ab$ and $ac$ are tangent to $\gamma$
(denote by $b'$ the point where $ac$ and $\gamma$ are tangent),

\item and $bc$ and $\gamma$ have at most one intersection point.
\end{itemize}
From the theory of incircles and circumcircles,
$bc$ and $\gamma$ have exactly one intersection point
if and only if $d = \sqrt{R(R-2r)}$.
Hence,
$\triangle abc$ is well-defined if and only if $0\leq d\leq\sqrt{R
(R-2r)}$ and $\rho=\frac{R}{r} \geq 2$.
Therefore,
\begin{eqnarray*}
\cos(\angle bac) &=& \cos(2\angle c_{\gamma}ac)\\
&=& 2\cos^2\!(\angle c_{\gamma}ac) - 1 \\
&=& 2\left(\frac{|ab'|}{|ac_{\gamma}|}\right)^2 - 1 \\
&=& 2\left(\frac{\sqrt{(R+d)^2-r^2}}{R+d}\right)^2 - 1 \\
&=& \frac{(R+d)^2-2r^2}{(R+d)^2}\enspace.
\end{eqnarray*}

Using calculus, one can verify that
\begin{eqnarray*}
\max_{0\leq d\leq\sqrt{R(R-2r)}} \cos(\angle bac) &=& \max_{0\leq d\leq\sqrt{R(R-2r)}} \frac{(R+d)^2-2r^2}{(R+d)^2} \\
&=& \frac{\left(R+\sqrt{R(R-2r)}\right)^2-2r^2}{\left(R+\sqrt{R(R-2r)}\right)^2} \\
&=& \frac{r+\sqrt{R(R-2r)}}{R} \\
&=& \frac{1+\sqrt{\rho(\rho-2)}}{\rho}
\end{eqnarray*}
for any $r$, $R$ such that $\frac{R}{r} = \rho \geq 2$.
Therefore, the maximum of $\cos(\angle bac)$ corresponds to the case 
where $\gamma$ and $\Gamma$ are respectively the incircle 
and the circumcircle of $\triangle abc$.
Thus, $\cos(\alpha) = \frac{1+\sqrt{\rho(\rho-2)}}{\rho}$.

Finally,
$$\min_{\rho \geq 2} \frac{1+\sqrt{\rho(\rho-2)}}{\rho} = \frac{1}{2} \enspace,$$
hence $\alpha \leq \frac{\pi}{3}$.
This is consistent with the fact that in a triangle,
it is impossible for all angles to be strictly greater than $\frac{\pi}{3}$.
\end{proof}

For the rest of the chapter,
$\mathcal{C}(c,r)$
(respectively $\mathcal{S}(c,r)$)
denotes the circle
(respectively the sphere)
with centre $c$ and radius $r$. 

\subsection{Nearest Neighbour Queries}

The nearest neighbour query works as follows:
given a point set $P$ and a query point $q$, return a point $p \in P$ nearest 
to $q$, i.e., for all $v\in P$, $v \ne p$, $|pq| \leq |vq|$.
A closely-related query is the approximate nearest neighbour (ANN) query 
defined as follows \cite{DBLP:journals/jacm/AryaMNSW98}.

\begin{definition} \label{def:ANN}
Let $P$ be a point set in $\mathbb{R}^d$,
$q$ be a query point, and $p \in P$ be a nearest neighbour of $q$.
Given an $\epsilon \ge 0$, we say that a point $\hat{p} \in P$ is 
an $(1 + \epsilon)$-\emph{approximate nearest neighbour} of $q$ 
if $|\hat{p}q| \le (1 + \epsilon)|pq|$. 
\end{definition}

Arya~\etal~\cite{DBLP:journals/jacm/AryaMNSW98} show that a linear-space search 
structure can be efficiently computed that permits ANN queries to be answered in 
$\BigOh{c_{d,\epsilon} \log{n}}$ time, where $c_{d,\epsilon}$ is a constant that
depends on the dimension $d$ and the selected $\epsilon$.

\begin{figure}
	\centering
 	\includegraphics[scale=0.95]{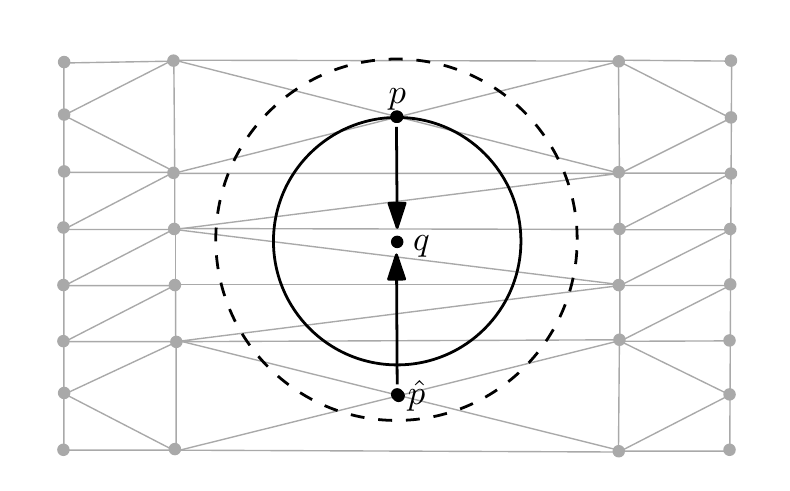}
	\caption[Jump-and-walk]{Demonstration of jump-and-walk either from nearest neighbour 
	$p$ to $q$ or from approximate nearest neighbour $\hat{p}$ to $q$. 
	The solid circle is centred on $q$ and has radius $|pq|$, while the 
	dashed circle is also centered on $q$ but has radius $(1 + \epsilon)|pq|$. 
	The arrows from $p$ and $\hat{p}$, respectively, show the sequence of triangles
	visited on the walk from $p$ ($\hat{p}$) to the triangle containing $q$.}
	\label{fig:jump_and_walk}
\end{figure}

\section{Jump-and-Walk in $\meshTwoD$}\label{sec:j-w-2D}

\subsection{Jump to Nearest Neighbour}
Let $P$ be the set of vertices of a well-shaped mesh $\meshTwoD$, and $q$ be 
a query point lying in a triangle of $\meshTwoD$.
Denote by $\alpha$ the lower bound on the angle of any triangle of $\meshTwoD$
(refer to Lemma~\ref{lem:min_angle}).
Let $p$ be a nearest neighbour of $q$. 
Consider the set of triangles encountered in a straight-line walk from 
$p$ to $q$ in $\meshTwoD$.

\begin{proposition}\label{prop:const_bound_2d}
The walk step along $pq$ visits at most 
$\left\lfloor \frac{\pi}{\alpha}\right\rfloor$ triangles.
\end{proposition}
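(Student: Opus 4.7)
The plan is to prove the bound via an angular-sum argument that exploits both the well-shaped property and the nearest-neighbour hypothesis on $p$.

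I would first set up the combinatorial structure of the walk. Let $\ell$ denote the line supporting $pq$. The walk crosses mesh edges $e_1, e_2, \ldots, e_{k-1}$, each of which has its two endpoints strictly on opposite sides of $\ell$. Hence, for every intermediate triangle $t_i$ with $1 < i < k$, the two edges $e_{i-1}$ and $e_i$ of $t_i$ share a common vertex, which I call the \emph{apex} $c_i$ of $t_i$. The apex lies on one side of $\ell$, and consecutive triangles sharing the same apex form a fan; the side of $\ell$ on which the apex lies flips from one fan to the next.

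Next, I would associate to each triangle $t_i$ a distinguished interior angle $\gamma_i$: the interior angle at $p$ for $t_1$, the interior angle at the apex $c_i$ for each intermediate $t_i$, and the interior angle at the far vertex (opposite to the entry edge) for $t_k$. Each $\gamma_i$ is at least $\alpha$ by Lemma~\ref{lem:min_angle}. Moreover, for intermediate $t_i$, this interior angle equals $\angle x_i c_i y_i$, where $x_i, y_i$ are the crossing points of $pq$ with the two apex-incident edges; since $x_i, y_i$ lie on $[p,q]$, this angle is bounded above by $\angle p c_i q$. In particular, within a single fan around apex $c$ with $f$ triangles, the sum of the corresponding $\gamma_i$'s is exactly the fan's total angular span $\Theta_c$ at $c$, so $f\alpha \le \Theta_c \le \angle p c q$.

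The key step, and the principal obstacle, is to show that $\sum_{i=1}^{k} \gamma_i \le \pi$. Combined with $\gamma_i \ge \alpha$, this immediately gives $k\alpha \le \pi$, hence $k \le \lfloor \pi/\alpha \rfloor$. The nearest-neighbour hypothesis is essential here: every apex $c_i$ is a mesh vertex with $|c_iq| \ge |pq|$, so $c_i$ lies outside the open disk centred at $q$ of radius $|pq|$. Combined with the fact that successive fans cover pairwise disjoint subsegments of $pq$, this geometric constraint should force the angular sum to be bounded by the visible angular span at $q$ looking back toward $p$ from either side of $\ell$, which is at most $\pi$. A careful case analysis of the apex positions with respect to segment $pq$ and the circle $\mathcal{C}(q,|pq|)$, separating apexes close to $p$ from those close to $q$ and exploiting that apexes on the same side of $\ell$ cannot both sit arbitrarily close to $pq$ without one of them being closer to $q$ than $p$, is the main technical difficulty.
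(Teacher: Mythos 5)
Your setup is sound, and you have isolated the two correct ingredients: each crossed triangle contributes an angle of at least $\alpha$ at a distinguished vertex, and the nearest-neighbour hypothesis keeps every mesh vertex out of the open disk bounded by $\mathcal{C}(q,|pq|)$. But the proof is incomplete precisely where you say it is: the inequality $\sum_i \gamma_i \le \pi$ carries the entire content of the proposition, and the route you sketch for it does not go through. The $\gamma_i$ are angles measured at \emph{different} apex points scattered outside the disk, and angles at distinct external points have no additive structure that can be charged against a single ``angular span at $q$''; indeed the subsegments $[x_i,y_i]$ all lie on the line through $q$, so they subtend angle zero at $q$, and nothing about the apexes' angular positions as seen from $q$ controls the sizes of the $\gamma_i$. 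One can show each individual fan angle is below $\pi/2$ (the locus $\angle xcy=\pi/2$ is the circle with diameter $xy$, which lies inside the open disk except possibly at $p$), but summing such per-fan bounds over many fans does not yield $\pi$; some genuine disjointness mechanism is required, and you have not supplied one.

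The paper supplies exactly that mechanism by converting angles into arcs of $\mathcal{C}(q,|pq|)$ rather than summing angles at the apexes. For a crossed triangle with its lone vertex $a$ on one side of $\ell$, both edges incident to $a$ meet $pq\setminus\{p\}$ in the open disk while having both endpoints outside it (this is where the nearest-neighbour hypothesis is used), so both are secants of the circle. The external-angle (two-secants) theorem then gives $\angle a = \tfrac{1}{2}(\text{far arc} - \text{near arc})$, so the far arc cut off by the two secants has length at least $2\angle a \ge 2\alpha$. These far arcs, taken over all triangles whose lone vertex lies on a fixed side of $\ell$, are pairwise disjoint and fit in a semicircular arc of length $\pi$, giving at most $\left\lfloor \pi/(2\alpha)\right\rfloor$ triangles per side and $\left\lfloor \pi/\alpha\right\rfloor$ in total; note this counts triangle by triangle and never needs your fan decomposition. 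If you wish to keep the fan formulation, the same theorem bounds each fan angle by half the far arc cut off by the fan's two bounding rays, but you would then still owe the disjointness of those arcs. Either way, the arc conversion is the missing idea, not a routine case analysis.
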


\begin{proof}
Without loss of generality, suppose $|pq| = 1$.
Since $p$ is a nearest neighbour of $q$, there is no vertex of $\meshTwoD$ 
in the interior of $\mathcal{C}(q,|pq|)$. 
Denote by $\ell$ the line through $pq$, and let $p' \neq p$ be the intersection
of $\ell$ with $\mathcal{C}(q,|pq|)$ (see 
Fig.~\ref{fig:triangle_covers_fixed_arc_length}).

\begin{figure}
	\centering
	  \includegraphics[scale=1]{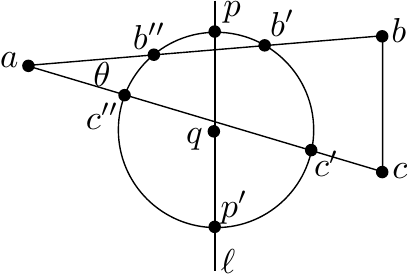}
	\caption[Triangle covers arc of bounded minimum length]{A triangle with a
	fixed minimum angle covers an arc bounded by a minimum fixed length 
	on $\mathcal{C}$.}
	\label{fig:triangle_covers_fixed_arc_length}
\end{figure}

The only triangles that matter are the ones intersecting $pq \setminus\{p\}$.
All such triangles have one vertex to the left of $\ell$, and two vertices to 
the right of $\ell$ or vice versa. 
We separate the triangles intersecting $pq\setminus\{p\}$ into two sets, 
$Left$ and $Right$, containing the triangles with exactly one vertex to the left 
and to the right of $\ell$, respectively.  
Consider an arbitrary triangle $t \in Left$ and let the vertices of 
$t$ be $a$, $b$, and $c$. 
The edge $ab$(respectively $ac$)intersects $\mathcal{C}(q,|pq|)$ at $b'$ and $b''$
(respectively at $c'$ and $c''$).
Let $\theta = \angle bac$. Since $ab$ and $ac$ are two secants which intersect
 $\mathcal{C}(q,|pq|)$, we have that $\theta = \frac{1}{2} 
 (\MyOverarc{b'c'} - \MyOverarc{c''b''})$, from which we conclude that 
 $\MyOverarc{b'c'} \ge 2 \theta$.
Hence, $\MyOverarc{b'c'} \ge 2 \theta \geq 2\alpha$ by Lemma~\ref{lem:min_angle}.
Therefore, since the arc $\MyOverarc{pp'}$ has length $\pi$,
we can conclude that the set $Left$ contains at most 
$\left\lfloor \frac{\pi}{2 \alpha}\right\rfloor$ triangles. 
The same bound holds for triangles in $Right$. 
Thus, the number of triangles intersecting $pq\setminus\{p\}$ is at 
most $\left\lfloor \frac{\pi}{\alpha}\right\rfloor$.
\end{proof}

\subsection{Jump to Approximate Nearest Neighbour}
\label{subsection approximate 2 D}

We now consider the scenario where $\hat{p} \in P$ is an approximate nearest 
neighbor of the query point $q$. 
Note that $\mathcal{C}(q,|\hat{p}q|)$ may contain vertices of $\meshTwoD$.
Therefore,
the proof of Proposition~\ref{prop:const_bound_2d}
does not apply for the walk-step along $\hat{p}q$.

We adopt the following strategy.
The walk along $\hat{p}q$ intersects triangles in 
$\mathcal{C}(q,|\hat{p}q|)\setminus\mathcal{C}(q,|pq|)$
and in $\mathcal{C}(q,|pq|)$.
By Proposition~\ref{prop:const_bound_2d},
we have a bound on the number of triangles intersected in $\mathcal{C}(q,|pq|)$.
To bound the triangles in $\mathcal{C}(q,|\hat{p}q|)\setminus\mathcal{C}(q,|pq|)$,
we do the following:
given an edge $ab$, there is a bound on the length of the edges of the triangles 
adjacent to $ab$ (see Observation~\ref{obs:triangle_adjacent}).
This bound depends on $|ab|$.
There is also a bound on the length of an edge intersecting $\mathcal{C}(q,|pq|)$
(see Lemma~\ref{lem:tri_interC_have_long_edge}).
This bound depends on $|pq|$.
These two results together
will lead to a bound on the length of the walk along $\hat{p}q$ inside $\mathcal{C}(q,|\hat{p}q|)\setminus\mathcal{C}(q,|pq|)$
(see Lemma~\ref{lem:shortest_escape}),
and this in turn will bound the number of triangles encountered in $\mathcal{C}(q,|\hat{p}q|)\setminus\mathcal{C}(q,|pq|)$.
This is all summarized in Theorem~\ref{thm:2d_approximate}.

We begin with the following observation.
\begin{observation}\label{obs:triangle_adjacent}
Let $t_i=\triangle abc$ be a well-shaped triangle with $a,b,c\in\meshTwoD$.
\begin{enumerate}
\item\label{obs:adjacent_edge}
Let $t_{i+1}$ be the well-shaped triangle adjacent to $t_i$ at edge $ab$.
The edges of $t_{i+1}$ have length at least $|ab|\sin\alpha$.

\item\label{obs:adjacent_vertex}
The edges of any triangle incident to $a$
have length at least $|ab|\sin^{\left\lfloor\frac{\pi}{\alpha} \right\rfloor}\alpha$.
\end{enumerate}
\end{observation}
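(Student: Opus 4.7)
The plan is to prove part~(1) directly from the law of sines inside $t_{i+1}$, and then to derive part~(2) by iterating part~(1) around the vertex $a$ while bounding the number of iterations using the angle lower bound from Lemma~\ref{lem:min_angle}.

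For part~(1), let $d$ be the third vertex of $t_{i+1}$, so $t_{i+1}=\triangle abd$. Since $t_{i+1}$ is well-shaped, each of its interior angles is at least $\alpha$. By the law of sines in $t_{i+1}$,
\begin{equation*}
|ad| \;=\; |ab|\,\frac{\sin(\angle abd)}{\sin(\angle adb)} \quad\text{and}\quad |bd| \;=\; |ab|\,\frac{\sin(\angle dab)}{\sin(\angle adb)}.
\end{equation*}
The numerators are at least $\sin\alpha$ (using $\alpha\le\pi/3\le\pi/2$, so $\sin$ is increasing on the relevant range) and the denominators are at most $1$. Hence $|ad|,|bd|\ge |ab|\sin\alpha$, and trivially $|ab|\ge |ab|\sin\alpha$, proving part~(1).

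For part~(2), I would first bound the size of the fan at $a$. The triangles incident to $a$ meet at $a$ with angles summing to at most $2\pi$, and each such angle is at least $\alpha$ by Lemma~\ref{lem:min_angle}. Hence the fan contains at most $\lfloor 2\pi/\alpha\rfloor$ triangles, and starting from $t_i$ one can reach any other triangle of the fan by crossing at most $\lfloor\pi/\alpha\rfloor$ edges incident to $a$ (by walking in the shorter of the two directions around $a$). Now iterate part~(1): if the current triangle has an edge incident to $a$ of length $L$, then the next triangle across that edge has all edges of length at least $L\sin\alpha$, and in particular its edge incident to $a$ on the far side has length at least $L\sin\alpha$. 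Starting with $L_0=|ab|$ and performing $k\le\lfloor\pi/\alpha\rfloor$ such crossings yields edges of length at least $|ab|\sin^{k}\alpha\ge |ab|\sin^{\lfloor\pi/\alpha\rfloor}\alpha$, since $\sin\alpha\le 1$. The case of $t_i$ itself is handled by noting that $|ab|\ge|ab|\sin^{\lfloor\pi/\alpha\rfloor}\alpha$ and that a single application of the law-of-sines argument inside $t_i$ gives the same bound for its other two edges.

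The main obstacle I expect is purely bookkeeping rather than conceptual: one must be careful that the ``shorter direction'' argument around $a$ genuinely yields a walk of length at most $\lfloor\pi/\alpha\rfloor$, including the edge cases when $a$ lies on the boundary of $\meshTwoD$ (in which case the fan is open and the bound only improves) and when the number of triangles in the fan is odd. This is handled by observing that if the fan has $N\le\lfloor 2\pi/\alpha\rfloor$ triangles, the graph distance from $t_i$ to any other triangle is at most $\lfloor N/2\rfloor\le \lfloor\pi/\alpha\rfloor$.
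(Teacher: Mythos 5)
Your proof is correct, and part~(2) takes a genuinely different route from the paper's. For part~(1) the paper simply asserts that the claim ``follows from Lemma~\ref{lem:min_angle}''; your law-of-sines computation is the natural way to fill that in, though your parenthetical reason that the numerators are at least $\sin\alpha$ (``$\sin$ is increasing'') is incomplete when the angle is obtuse — the correct justification is that every angle of $t_{i+1}$ lies in $[\alpha,\pi-2\alpha]$, on which $\sin$ is bounded below by $\min(\sin\alpha,\sin 2\alpha)=\sin\alpha$ because $\alpha\le\pi/3$. For part~(2) the paper traverses the \emph{entire} fan around $a$ in one fixed direction and argues by contradiction: each crossing changes the edge length by a factor in $[\sin\alpha,1/\sin\alpha]$, and since the traversal must return to $ab$ (net factor $1$) after at most $\lfloor 2\pi/\alpha\rfloor$ steps, more than $\lfloor\pi/\alpha\rfloor$ worst-case shrinking steps would force a violation of part~(1) on the way back. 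You instead walk to the target triangle the \emph{shorter} way around the cycle, bounding the number of crossings directly by $\lfloor N/2\rfloor\le\lfloor\pi/\alpha\rfloor$ and then iterating part~(1); this is more direct and dispenses with the contradiction bookkeeping, while the paper's version additionally (though it is not needed later) controls the cumulative shrinkage along an arbitrary one-directional traversal. Both arguments rest on the same implicit assumption that the triangles incident to $a$ form a closed cycle; your aside that an open fan ``only improves'' the bound is guaranteed only when $a$ lies on a convex boundary (so the fan has at most $\lfloor\pi/\alpha\rfloor$ triangles), since at a reflex boundary vertex a one-directional path could require up to $\lfloor 2\pi/\alpha\rfloor-1$ crossings — but the paper's own proof is no more careful on this point, so this is not a gap attributable to you.
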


\begin{proof}\
\begin{enumerate}
\item Follows from Lemma~\ref{lem:min_angle}.

\item When we traverse all the triangles incident to $a$,
e.g., in clockwise direction starting at $ab$,
each next edge we encounter can be smaller by a factor of not less than $\sin\alpha$.
Moreover, each next edge we encounter can be longer by a factor of not more 
than $\frac{1}{\sin\alpha}$.
This all follows from Observation~\ref{obs:triangle_adjacent}-\ref{obs:adjacent_edge}.
There are at most $\left\lfloor\frac{2\pi}{\alpha} \right\rfloor$ triangles incident 
to $a$ by Lemma~\ref{lem:min_angle}.
However, if more than $\left\lfloor\frac{\pi}{\alpha} \right\rfloor$ edges
get smaller by a factor of $\sin\alpha$,
then the last edge we encounter 
before going back to $ab$ will be smaller than $|ab|\sin\alpha$,
contradicting Observation~\ref{obs:triangle_adjacent}-\ref{obs:adjacent_edge}.
Therefore, the edges of any triangle incident to $a$
have length at least $|ab|\sin^{\left\lfloor\frac{\pi}{\alpha} \right\rfloor}\alpha$.
\end{enumerate}
\end{proof}

\begin{lemma}\label{lem:tri_interC_have_long_edge}
Let $t$ be a triangle in a well-shaped triangular mesh $\meshTwoD$,
and $\mathcal{C}(q,r)$ be a circle
such that none of the vertices of $t$ are in the interior of $\mathcal{C}$.
If edge $ab \in t$ intersects $\mathcal{C}$,
then $|ab| \geq r\tan\alpha$.
\end{lemma}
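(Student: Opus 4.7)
The plan is to place $q$ at the origin and align the line through $ab$ horizontally with $y = -d$, where $0 \leq d \leq r$ is the perpendicular distance from $q$ to this line. Writing $a = (-X, -d)$ and $b = (Y, -d)$, the fact that neither $a$ nor $b$ lies in the interior of $\mathcal{C}$ forces $X, Y \geq \sqrt{r^2 - d^2}$, so that the chord of $\mathcal{C}$ cut by the line is contained in the segment. This immediately gives the \emph{chord bound} $|ab| \geq 2\sqrt{r^2 - d^2}$, which is strong when the line is far from being tangent to $\mathcal{C}$.

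When $d \leq r\cos\alpha$ the chord bound alone already yields $|ab| \geq 2r\sin\alpha \geq r\tan\alpha$, since $\alpha \leq \pi/3$ (Lemma~\ref{lem:min_angle}) gives $2\cos\alpha \geq 1$. Only the near-tangent regime $r\cos\alpha < d \leq r$ requires additional work. In that regime I would bring in the third vertex $c$: denoting by $h$ the distance from $c$ to line $ab$ and by $x_c$ the signed horizontal position of the foot of the perpendicular from $c$, the minimum-angle hypothesis at $a$ and at $b$ forces
\[
  h \geq (X + x_c)\tan\alpha \qquad \text{and} \qquad h \geq (Y - x_c)\tan\alpha,
\]
while the condition $|cq| \geq r$ keeps $c$ away from $q$. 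Coupling these yields a lower bound on $|ab|$ that dominates $r\tan\alpha$.

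The extremal configuration suggested by this analysis is the isosceles triangle in which $a$ lies on $\mathcal{C}$, the edge $ab$ is tangent to $\mathcal{C}$ at $a$, the vertex $c$ lies on $\mathcal{C}$ on the same side of $ab$ as $q$, and $\angle cab = \angle acb = \alpha$. Placing $c = (x_c, -r+h)$ on $\mathcal{C}$ and imposing $\tan\alpha = h/x_c$ yields $h = 2r\sin^2\alpha$ and $x_c = r\sin(2\alpha)$; the isoceles identity $|ab| = (x_c^2 + h^2)/(2x_c)$ then simplifies to $|ab| = r\tan\alpha$. This is the calculation that determines the numerical constant in the bound.

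The hard part will be the case analysis needed to show that this isosceles tangent configuration really is extremal. The geometry splits naturally according to whether $c$ and $q$ lie on the same or opposite sides of line $ab$, and according to whether the foot of the perpendicular from $c$ lands on the segment $ab$ or outside it. In each subcase one has to combine the appropriate two of the three ingredients (chord bound, minimum-angle at $a$ and $b$, upper bound $\angle acb \leq \pi - 2\alpha$ at $c$) with the constraint $|cq| \geq r$; verifying that the resulting minimum of $|ab|$ in every subcase is indeed $r\tan\alpha$, and that no degenerate placement of $c$ allows a smaller $|ab|$, is the technically delicate step of the argument.
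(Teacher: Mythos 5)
Your setup and your identification of the extremal configuration are both correct: the deep-secant case ($d\le r\cos\alpha$ with the segment properly crossing the circle) does follow from the chord bound together with $2\cos\alpha\ge 1$, and the tangent isosceles triangle with $\angle cab=\angle acb=\alpha$ does give exactly $|ab|=r\tan\alpha$. The gap is in the near-tangent case, which is the whole content of the lemma, and the sketch you give for it points the wrong way. The two inequalities $h\ge(X+x_c)\tan\alpha$ and $h\ge(Y-x_c)\tan\alpha$ coming from the minimum angles at $a$ and $b$ sum to $2h\ge|ab|\tan\alpha$, i.e.\ they bound $|ab|$ from \emph{above} in terms of $h$; no coupling with $|cq|\ge r$ turns that into a lower bound on $|ab|$. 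A lower bound must instead come from the angle at $c$ (equivalently, from the upper bound $\pi-2\alpha$ on the angles at $a$ and $b$) together with a lower bound on the size of the triangle, and the delicate point is that in your own extremal configuration the foot of the perpendicular from $c$ lands outside the segment, beyond $b$ (since $r\sin 2\alpha>r\tan\alpha$ whenever $\alpha<\pi/4$), so the naive ``apex over the base'' estimate does not apply. The paper closes this case with a two-step synthetic argument your plan lacks: first $|ac|\ge 2r\sin\alpha$ by the inscribed-angle theorem (in the extremal position $a$ and $c$ both lie on $\mathcal{C}$ and the angle at $b$ is at least $\alpha$), and then $|ab|=|ac|\sin(\angle acb)/\sin(\angle abc)\ge(2r\sin\alpha)\sin\alpha/\sin(\pi-2\alpha)=r\tan\alpha$ by the law of sines with $\angle acb\ge\alpha$ and $\angle abc\le\pi-2\alpha$. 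That law-of-sines transfer is the missing idea; without it your plan reduces to an unexecuted multi-case optimization whose hardest subcase is exactly the one your sketch mishandles.

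A further warning about your planned split on ``whether $c$ and $q$ lie on the same or opposite sides of line $ab$'': the opposite-side case cannot be closed from the stated hypotheses. If $ab$ is tangent to $\mathcal{C}$ at an interior point of the segment and $c$ lies on the far side of $ab$ from $q$, all three vertices are automatically outside $\mathcal{C}$ however small the triangle is, so no lower bound on $|ab|$ exists there. The paper avoids this by silently taking $t$ to be the triangle adjacent to $ab$ whose third vertex lies on $q$'s side of the line --- legitimate in the intended application, where the disk contains no mesh vertex at all --- and your proof would have to do the same rather than treat that side as a genuine case to be bounded.
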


\begin{proof}
If $q\in ab$,
then $|ab|\geq 2r > r\tan\alpha$ because $\alpha\leq\frac{\pi}{3}$
by Lemma~\ref{lem:min_angle}.
For the rest of the proof,
we assume that $q\not\in ab$.

Let $t=\triangle abc$ be the triangle adjacent to $ab$ such that $c$ and $q$ are on the same side of $ab$.
By Observation~\ref{obs:triangle_adjacent}.\ref{obs:adjacent_edge},
the smaller the edges of $t$,
the smaller is $ab$.
Therefore,
since we want to minimize $|ab|$,
we assume that $c$ is on the boundary of $\mathcal{C}$.
There are two cases to consider:
(1) $ab$ intersects $\mathcal{C}$ at two different points
or (2) $ab$ is tangent to $\mathcal{C}$.
\begin{enumerate}
\item We are looking for a lower bound on $|ab|$.
Therefore,
suppose that $a$ and $b$ are on the boundary of $\mathcal{C}$.
By Lemma~\ref{lem:min_angle},
$\angle acb \geq \alpha$
(refer to Figure~\ref{fig:tri_interC_have_long_edge}(a)).

\begin{figure}
  \captionsetup[subfigure]{labelformat=parens}
    \centering
        \subfloat[Illustration of Case (1)]{
            \label{fig:2d_edge_bound}
            \includegraphics{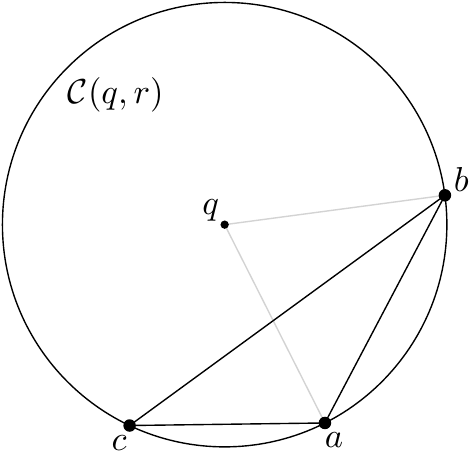}
        }\qquad
        \subfloat[Illustration of Case (2)]{
            \label{fig:tangent_edge}
            \includegraphics{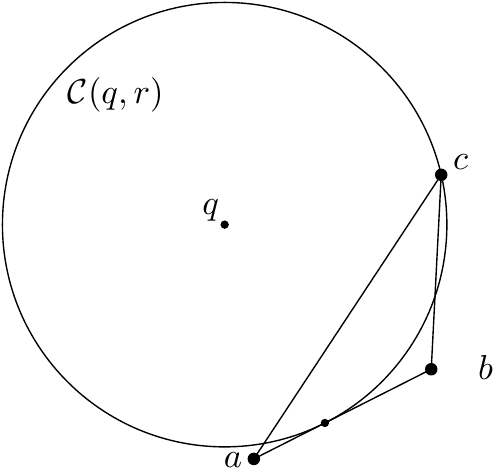}
        }%
    \caption{Illustration of the proof of Lemma~\ref{lem:tri_interC_have_long_edge}.}
    \label{fig:tri_interC_have_long_edge}
\end{figure}

Therefore,
$\angle aqb \geq 2\alpha$,
from which $|ab| \geq 2r\sin\alpha \geq r\tan\alpha$
because $\alpha\leq\frac{\pi}{3}$ by Lemma~\ref{lem:min_angle}.

\item Since $c$ is on the boundary of $\mathcal{C}$,
then $ac$ satisfies the hypothesis of Case (1)
(refer to Figure~\ref{fig:tri_interC_have_long_edge}(b)).
Therefore,
$|ac| \geq 2r\sin\alpha$.
By the law of sines,
we have $\frac{|ab|}{\sin(\angle acb)} = \frac{|ac|}{\sin(\angle abc)}$.
By Lemma~\ref{lem:min_angle},
$\angle acb \geq \alpha$ and $\angle cab \geq \alpha$,
hence $\angle abc \leq \pi-2\alpha$.
Consequently,
\begin{eqnarray*}
|ab| &=& \frac{|ac|\sin(\angle acb)}{\sin(\angle abc)}\\
&\geq& \frac{(2r\sin\alpha)\sin\alpha}{\sin(\pi-2\alpha)}\\
&=& 2r\tan\alpha \enspace.
\end{eqnarray*}
\end{enumerate}
\end{proof}

Consider the walk from $\hat{p}$ to $q$ in $\meshTwoD$;
it intersects the boundary of $\mathcal{C}(q,|pq|)$ at a point $x$.
Let $t_i$ be the last triangle\footnote{If $qx$ is contained in an edge of the mesh,
 there are two such triangles.
In this case,
the following lemma together with its proof stand for any choice of $t_i$.
In any other situation,
$t_i$ is uniquely defined.}
we traverse in the walk from $\hat{p}$ to $q$
that contains $x$.
Denote the vertices of $t_i$ by $a$, $b$, and $c$,
where $ab$ is the edge of $t_i$ that is the closest to $q$.
Let $\mathcal{G}$ be the union of all the triangles incident to $a$, $b$, and $c$
(see Fig.~\ref{fig:tricycle}).

\begin{figure}
  \captionsetup[subfigure]{labelformat=parens}
    \centering
        \subfloat[Neighbourhood $\mathcal{G}$ of $t_i$.]{
            \label{fig:tricycle}
            \includegraphics{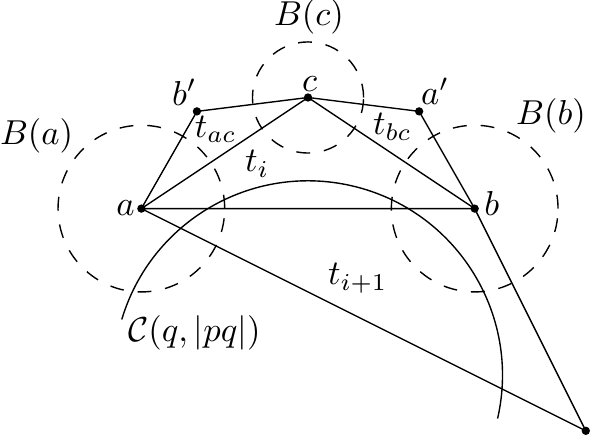}
        }\qquad
        \subfloat[Zoom in $t_i$ and $t_{ac}$.]{
            \label{fig:zoom-tricycle}
            \includegraphics{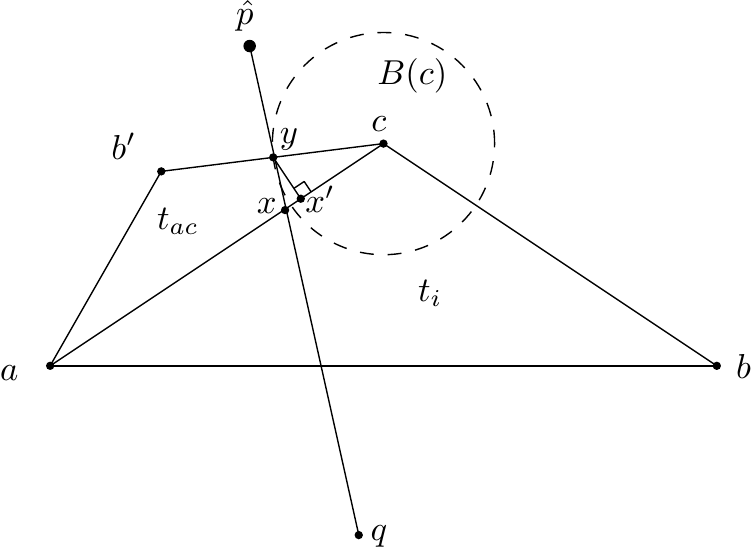}
        }%
    \caption{ Illustration of the proof of Lemma~\ref{lem:shortest_escape}.}
    \label{fig:tri_interC_have_long_edge}
\end{figure}


In Lemma~\ref{lem:shortest_escape},
we give a lower bound on the length of a walk in $\mathcal{G}$.
Thus,
if we choose $\epsilon$ with respect to this bound,
we can ensure that $\hat{p}\in\mathcal{G}$.
This way,
we have a grip on the number of tetrahedra 
we encounter in the walk from $\hat{p}$ to $q$.
\begin{lemma}\label{lem:shortest_escape}
Let $x \in t_i$ be the intersection of $\hat{p}q$ with the boundary of 
$\mathcal{C}(q,|pq|)$.
Let $y\in \mathcal{G}$ be the intersection of the line through $\hat{p}q$
with the boundary of $\mathcal{G}$
such that $x$ is between $q$ and $y$.
Then $|xy| \ge |pq|\tan\alpha\sin^{\left \lfloor \frac{\pi}{\alpha} \right \rfloor+3} \alpha$.
\end{lemma}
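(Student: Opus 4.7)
The plan is to chain three lower bounds: a bound on $|ab|$ itself, its propagation to every edge of every triangle in $\mathcal{G}$, and a bound on how much of $\mathcal{G}$ the segment $xy$ must cross.

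For the first bound, I claim $|ab|\ge |pq|\tan\alpha$. The segment from $x$ to $q$ lies entirely in the closed disk bounded by $\mathcal{C}(q,|pq|)$, and since $ab$ is by definition the edge of $t_i$ facing $q$, that segment crosses $ab$ at a point on or inside $\mathcal{C}(q,|pq|)$. Hence $ab$ intersects $\mathcal{C}(q,|pq|)$; moreover, because $p$ is a nearest neighbour of $q$, none of the vertices $a$, $b$, $c$ of $t_i$ lies in the interior of $\mathcal{C}(q,|pq|)$. Lemma~\ref{lem:tri_interC_have_long_edge} then applies to $t_i$ and gives $|ab|\ge |pq|\tan\alpha$.

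For the second bound, Lemma~\ref{lem:min_angle} together with the law of sines applied to $t_i$ yields $|ac|,|bc|\ge |ab|\sin\alpha$, so every edge of $t_i$ has length at least $|ab|\sin\alpha$. Applying Observation~\ref{obs:triangle_adjacent}.\ref{obs:adjacent_vertex} at each of $a$, $b$, $c$ then shows that every edge of every triangle of $\mathcal{G}$ has length at least $|ab|\sin^{\lfloor\pi/\alpha\rfloor+1}\alpha\ge |pq|\tan\alpha\cdot\sin^{\lfloor\pi/\alpha\rfloor+1}\alpha$.

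For the third bound, I would follow the oriented line through $\hat{p}q$ starting at $x$ and moving away from $q$: it first exits $t_i$ through one of $ac$, $bc$ into an adjacent triangle of $\mathcal{G}$, and then continues through further triangles of $\mathcal{G}$ until it reaches $\partial\mathcal{G}$ at $y$. The goal is to show that the chord contributed by the last traversed triangle $t^{\star}\subseteq\mathcal{G}$ has length at least $\sin^2\alpha$ times the minimum edge length in $\mathcal{G}$, which is exactly the missing factor in the stated bound. One factor of $\sin\alpha$ is expected to come from the minimum altitude of a well-shaped triangle: any triangle with shortest edge $L$ and angles at least $\alpha$ has minimum altitude at least $L\sin\alpha$. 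The second factor of $\sin\alpha$ must come from a lower bound on the angle that the direction $\hat{p}q$ makes with the exit edge of $t^{\star}$ at $y$.

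The principal obstacle lies in producing this second $\sin\alpha$ factor, because a chord of a single triangle can be made arbitrarily short if the line grazes close to a vertex; one cannot localise the argument to a single triangle of $\mathcal{G}$ without extra information. The rescue should come from the fact that every vertex on $\partial\mathcal{G}$ lies outside $\mathcal{C}(q,|pq|)$ while $x$ lies exactly on $\mathcal{C}(q,|pq|)$, which limits how closely the line from $q$ through $x$ can align with any exit edge of $\partial\mathcal{G}$ that it meets. A case analysis mirroring the two-case structure in the proof of Lemma~\ref{lem:tri_interC_have_long_edge}---treating separately the subcase where the exit edge at $y$ intersects $\mathcal{C}(q,|pq|)$ and the subcase where it does not---will most likely be required to convert this geometric intuition into the precise $\sin^2\alpha$ factor.
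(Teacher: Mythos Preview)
Your first two bounds are correct and coincide with the paper's setup: Lemma~\ref{lem:tri_interC_have_long_edge} applied to $t_i$ gives $|ab|\ge|pq|\tan\alpha$, and Observation~\ref{obs:triangle_adjacent} propagates this to every edge of $\mathcal{G}$.

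The third step has a genuine gap, and your proposed rescue via $\mathcal{C}(q,|pq|)$ cannot close it. The exit edge of $\partial\mathcal{G}$ at $y$ may lie entirely outside $\mathcal{C}(q,|pq|)$---indeed arbitrarily far from it---so the hypothesis of Lemma~\ref{lem:tri_interC_have_long_edge} is unavailable there, and there is no lower bound on the angle that edge makes with $\hat{p}q$. Concretely, place $q$ at the origin, let the line $\hat{p}q$ be the $x$-axis, and take an exit edge with both endpoints at distance $\gg|pq|$ from $q$ but nearly parallel to the $x$-axis: all your constraints are satisfied, yet the angle is as small as you like. The two-case split you propose has no purchase in this regime.

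The paper does not try to bound that angle. Instead it observes that your edge-length bound at each vertex $v\in\{a,b,c\}$ of $t_i$ implies that the disk $B(v)$ of radius $\sin\alpha$ times the shortest edge incident to $v$ lies inside the fan of triangles at $v$, hence inside $\mathcal{G}$ (the extra $\sin\alpha$ converts an edge-length bound into an altitude bound). Since $y\in\partial\mathcal{G}$, it lies outside every $B(v)$. Now suppose without loss of generality that the ray leaves $t_i$ through $ac$ and enters $t_{ac}=\triangle ab'c$, exiting through $b'c$; the dangerous case is near $c$. The nearest admissible exit point on $b'c$ is at distance $r(B(c))$ from $c$, and its perpendicular distance to line $ac$ is at least $r(B(c))\sin(\angle acb')\ge r(B(c))\sin\alpha$. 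Because the segment $xy$ crosses $ac$, $|xy|$ dominates this perpendicular distance, which yields the bound. The two remaining $\sin\alpha$ factors therefore come from (i) fitting the vertex-ball inside the fan and (ii) the angle of $t_{ac}$ at $c$---not from any angle between $\hat{p}q$ and an exit edge of $\partial\mathcal{G}$.
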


\begin{proof}
Denote by $t_{ac} = \triangle ab'c$
(respectively by $t_{bc} = \triangle a'bc$) the triangle adjacent to $t_i$ at $ac$
(respectively at $bc$) (see Fig.~\ref{fig:tricycle}).
Note that $t_{ac}$ and $t_{bc}$ are in $\mathcal{G}$.

By Observation~\ref{obs:triangle_adjacent}-\ref{obs:adjacent_vertex}
and Lemma~\ref{lem:tri_interC_have_long_edge},
the length of all edges incident to $a$
(respectively to $b$ and to $c$)
is at least 
$|pq|\tan\alpha\sin^{\left\lfloor\frac{\pi}{\alpha} \right\rfloor}\alpha $
(respectively at least $|pq|\tan\alpha\sin^{\left\lfloor\frac{\pi}{\alpha}\right\rfloor}\alpha $
and at least $|pq|\tan\alpha\sin^{\left\lfloor\frac{\pi}{\alpha}\right\rfloor+1}\alpha$ 
by Observation~\ref{obs:triangle_adjacent}-\ref{obs:adjacent_edge}).
Therefore, $\mathcal{G}$ contains a ball $B(a)$
(respectively $B(b)$ and $B(c)$)
with centre $a$
(respectively with center $b$ and with center $c$)
and radius\footnote{We add an extra $\sin{(\alpha)}$ factor
in order to ensure that the balls $B(a)$, $B(b)$, and $B(c)$
are contained entirely within the neighbourhood 
of $a$, $b$, and $c$,
respectively.}
$|pq|\tan\alpha\sin^{\left\lfloor\frac{\pi}{\alpha} \right\rfloor +1}\alpha$
(respectively with radius $|pq|\tan\alpha\sin^{\left\lfloor\frac{\pi}{\alpha}\right\rfloor+1}\alpha$
and with radius \\ $|pq|\tan\alpha\sin^{\left\lfloor\frac{\pi}{\alpha}\right\rfloor+2}\alpha$),
which does not contain any other vertices of $\meshTwoD$ in its interior.

We bound the length of $|xy|$ as follows:
assume without loss of generality that $\hat{p}q$ intersects $t_{ac}$ through
$ac$ and $b'c$. 
By defintion, $\hat{p} \ne c$ and $B(c)$ contains no other vertex of 
$\meshTwoD$.
Therefore, the smallest possible value for $|xy|$ is achieved when
$x \in ac$ and $y$ is at the intersection of $b'c$ with the boundary of
$B(c)$. 
Let $x'$ be the orthogonal projection of $y$ on $ac$. 
Then $|x'y| \le |xy|$ and the smallest possible length for $|xy|$ is 
$|pq|\tan\alpha\sin^{\left\lfloor\frac{\pi}{\alpha} \right\rfloor+3}\alpha$
by the well-shaped property of $t_{ac}$ and Lemma~\ref{lem:min_angle}.

\end{proof}

We can now state our main result.
\begin{theorem}\label{thm:2d_approximate}
Let $\meshTwoD$ be a well-shaped triangular mesh in $\CDTwoD$. 
Take $\epsilon \leq \tan\alpha\sin^{\left \lfloor \frac{\pi}{\alpha} \right \rfloor +3} \alpha$,
and let $\hat{p}$ be an $(1 + \epsilon)$-approximate nearest neighbour of a 
query point $q$ from among the vertices of $\meshTwoD$.
The straight line walk from $\hat{p}$ to $q$ visits at most 
$2\left\lfloor \frac{\pi}{\alpha}\right\rfloor$
triangles.
\end{theorem}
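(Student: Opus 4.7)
The plan is to split the walk from $\hat{p}$ to $q$ at the point $x$ where the segment $\hat{p}q$ first enters $\mathcal{C}(q,|pq|)$, and to bound the two halves separately. The sub-walk from $x$ to $q$ lies inside $\mathcal{C}(q,|pq|)$, whose open interior contains no vertex of $\meshTwoD$ because $p$ is a nearest neighbour of $q$; the fan-counting argument in the proof of Proposition~\ref{prop:const_bound_2d} therefore applies verbatim with $x$ playing the role of $p$, which yields at most $\lfloor \pi/\alpha\rfloor$ triangles on this half.

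The next step is to show that $\hat{p}$ lies in the neighbourhood $\mathcal{G}$ of $t_i$ introduced before Lemma~\ref{lem:shortest_escape}. Because $x\in\mathcal{C}(q,|pq|)$ and $|\hat{p}q|\leq(1+\epsilon)|pq|$, one has $|\hat{p}x| = |\hat{p}q|-|xq| \leq \epsilon\,|pq|$. The hypothesis on $\epsilon$ together with Lemma~\ref{lem:shortest_escape} then gives
\[ |\hat{p}x| \;\leq\; \epsilon\,|pq| \;\leq\; |pq|\tan\alpha\sin^{\lfloor \pi/\alpha\rfloor+3}\alpha \;\leq\; |xy|, \]
so $\hat{p}$ lies on the segment $xy$ and the entire segment $\hat{p}x$ is contained in $\mathcal{G}$. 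In particular, every triangle crossed between $\hat{p}$ and $x$ is incident to one of the three vertices $a$, $b$, $c$ of $t_i$.

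Finally, those outside triangles must be counted. I would re-use the arc argument of Proposition~\ref{prop:const_bound_2d}, applied now to the segment $\hat{p}x$: classify each crossed triangle by the side of the oriented line through $\hat{p}q$ on which its lone vertex lies, and observe that this lone vertex has angle at least $\alpha$ by Lemma~\ref{lem:min_angle}, so it carves out a disjoint sub-arc of length at least $2\alpha$ on the semicircle of $\mathcal{C}(q,|pq|)$ on its side of $\hat{p}q$. Since that semicircle has length $\pi$, at most $\lfloor \pi/(2\alpha)\rfloor$ triangles sit on each side and at most $\lfloor \pi/\alpha\rfloor$ in total on this half; summing the two halves yields the claimed $2\lfloor \pi/\alpha\rfloor$.

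The main obstacle is the disjointness-of-arcs claim in the last step, since unlike in Proposition~\ref{prop:const_bound_2d} these triangles lie outside $\mathcal{C}(q,|pq|)$ rather than straddling it, so the secant-angle identity has to be re-derived using the containment $\hat{p}x\subset\mathcal{G}$ together with the fact that the protective balls $B(a)$, $B(b)$, $B(c)$ from the proof of Lemma~\ref{lem:shortest_escape} contain no other vertex of $\meshTwoD$. If the tight $\lfloor \pi/\alpha\rfloor$ count is not obtainable in this way, the fallback is the cruder bound of at most $3\lfloor 2\pi/\alpha\rfloor$ triangles in $\mathcal{G}$ from Lemma~\ref{lem:min_angle}, which still gives a constant depending only on $\rho$.
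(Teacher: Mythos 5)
Your decomposition of the walk at the point $x$ where $\hat{p}q$ meets $\mathcal{C}(q,|pq|)$, your bound of $\lfloor \pi/\alpha\rfloor$ for the portion inside the circle via Proposition~\ref{prop:const_bound_2d}, and your derivation that $|\hat{p}x|\le\epsilon|pq|\le|xy|$ forces $\hat{p}\in\mathcal{G}$ are exactly the paper's argument. The gap is in the final step, and it is a real one: you propose to count the triangles of $\mathcal{G}$ crossed by $\hat{p}x$ by the inscribed-arc argument on $\mathcal{C}(q,|pq|)$, but that argument genuinely cannot be salvaged here. A triangle crossed by $\hat{p}x$ may lie entirely outside $\mathcal{C}(q,|pq|)$, so its edges are not secants of that circle; and if you instead try to measure the angular extent of such a triangle as seen from $q$ (central projection onto the circle), there is no lower bound of $2\alpha$ --- a triangle incident to a far-away vertex of $t_i$ can subtend an arbitrarily small angle at $q$. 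The protective balls $B(a)$, $B(b)$, $B(c)$ do not rescue this, since they control edge lengths, not angles at $q$. Your fallback of $3\lfloor 2\pi/\alpha\rfloor$ triangles is correct but yields a total exceeding the claimed $2\lfloor\pi/\alpha\rfloor$, so it does not prove the stated theorem.

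The paper closes this step by measuring angles at a \emph{mesh vertex} rather than at $q$: the portion of the walk inside $\mathcal{G}$ exits through one of the edges $ab'$, $b'c$, $ca'$, $a'b$ of the two triangles $t_{ac}$, $t_{bc}$ adjacent to $t_i$, and consequently crosses only triangles incident to a single one of $a$, $b$, $c$. A straight segment passing through the fan of triangles around a common vertex $v$ (without passing through $v$) meets triangles whose angles at $v$ are disjoint and sum to at most $\pi$; since each such angle is at least $\alpha$ by Lemma~\ref{lem:min_angle}, at most $\lfloor\pi/\alpha\rfloor$ triangles are crossed in $\mathcal{G}$, giving the total of $2\lfloor\pi/\alpha\rfloor$. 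Replacing your last step with this single-vertex fan argument completes the proof.
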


\begin{proof}
Following the notation of Lemma~\ref{lem:shortest_escape},
if $\hat{p} \in \mathcal{G}$,
then the straight line walk from $\hat{p}$ to $q$ visits
at most $2\left\lfloor \frac{\pi}{\alpha}\right\rfloor$ triangles.
There are $\left\lfloor \frac{\pi}{\alpha}\right\rfloor$ triangles
for the part of the walk inside $\mathcal{C}(q,|pq|)$
(see Proposition~\ref{prop:const_bound_2d})
and $\left\lfloor \frac{\pi}{\alpha}\right\rfloor$ triangles
for the part of the walk inside $\mathcal{G}$.
Indeed,
in the worst case,
the walk inside $\mathcal{G}$
will either cross $ab'$, $b'c$, $ca'$ or $a'b$.
Therefore, this walk will either cross the triangles incident to $a$,
or the triangles incident to $b$,
or the triangles incident to $c$.

We can ensure that $\hat{p} \in \mathcal{G}$
by building an ANN search structure with 
$\epsilon \leq \tan\alpha\sin^{\left \lfloor \frac{\pi}{\alpha} \right \rfloor +3} \alpha$ on the vertices of $\meshTwoD$.
Indeed, in this case:
\begin{eqnarray*}
|\hat{p}q| & \leq & \left(1+\tan\alpha\sin^{\left \lfloor \frac{\pi}{\alpha} \right \rfloor +3} \alpha\right)|pq|\\
&=& |pq| + |pq|\tan\alpha\sin^{\left \lfloor \frac{\pi}{\alpha} \right \rfloor +3} \alpha\\
&\leq& |pq| + |xy| \qquad\qquad\textrm{by Lemma~\ref{lem:shortest_escape},}\\
&=& |qx| + |xy|\\
&=& |qy|
\end{eqnarray*}
because $q$, $x$, and $y$ are aligned in this order.
So $\hat{p}$ must be in $\mathcal{G}$.
\end{proof}

\section{Jump-and-Walk in $\meshThreeD$}\label{sec:j-w-3D}

Searching in a well-shaped three-dimensional mesh can be performed using 
essentially the same technique as outlined for the two-dimensional case
 (refer to Section~\ref{sec:j-w-2D}).
Let $P$ be the set of vertices of a well-shaped mesh $\meshThreeD$, and 
$q$ be a query point lying in a triangle of $\meshThreeD$.
We first study the walk step given an exact nearest neighbour,
and later we address the walk step given an approximate nearest neighbour.

\subsection{Jump to Nearest Neighbour}

Let $p$ be a nearest neighbour of $q$.
We perform the walk-step starting at $p$, and walk towards $q$ on a straight line.
Theorem~\ref{thm:exact_3d} states that $pq$ intersects only a constant number 
of tetrahedra.
It is the three-dimensional version of Proposition~\ref{prop:const_bound_2d}.
\begin{theorem}\label{thm:exact_3d}
Let $\meshThreeD$ be a well-shaped tetrahedral mesh in $\threeD$. 
Given $p$, a nearest neighbour of a query point $q$ from among the vertices 
of $\meshThreeD$, the walk from $p$ to $q$ visits at most 
$\frac{\sqrt{3}\,\pi}{486}\rho^3(\rho^2+3)^3$ tetrahedra.
\end{theorem}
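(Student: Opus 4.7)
The plan is to mirror the two-dimensional strategy of Proposition~\ref{prop:const_bound_2d} in $\threeD$, using the solid-angle bound $\Omega$ from Lemma~\ref{lem:min_angle}, which satisfies $\sin(\Omega/2)=3\sqrt{3}/(8\rho^2)$, in place of the planar angle bound $\alpha$. Since $p$ is a nearest neighbour of $q$, the open ball $B=B(q,|pq|)$ contains no mesh vertex, so every tetrahedron $t$ whose interior meets the open segment $pq$ has all four vertices on or outside $\partial B$. Let $T$ denote the set of such tetrahedra; these are pairwise interior-disjoint, and the task is to bound $|T|$.

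First I would bound the circumradius $R(t)$ of each $t\in T$ in terms of $|pq|$ and $\rho$. Combining the three constraints that (i) $t$ contains at least one point of $pq\subset B$, (ii) every vertex of $t$ lies on or outside $\partial B$, and (iii) $t$ is well-shaped with $r(t)\ge R(t)/\rho$, a careful analysis of the extremal configuration (three vertices clustered near $\partial B$ on one side of $pq$ with the fourth stretched outward) should produce an explicit bound of the form $R(t)\le \tfrac{1}{2}(\rho^2+3)\,|pq|$. Together with the fact that every point of $t$ lies within $2R(t)$ of some vertex, this confines $T$ to a common ball $B^\prime$ centred at $q$ whose volume is proportional to $(\rho^2+3)^3\,|pq|^3$.

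Second I would bound $\mathrm{vol}(t)$ from below. Since the insphere of $t$ has radius $r(t)\ge R(t)/\rho$, one has $\mathrm{vol}(t)\ge \tfrac{4\pi}{3}(R(t)/\rho)^3$; combining this with a matching lower bound on $R(t)$ (which follows from the geometric requirement that $t$ stretch from outside $B$ into $pq$) gives $\mathrm{vol}(t)\ge c\,|pq|^3/\rho^3$ for an explicit constant $c$ involving $\sqrt{3}$ and $\pi$. Because the tetrahedra of $T$ are pairwise interior-disjoint and all contained in $B^\prime$, the volumetric packing bound $|T|\le \mathrm{vol}(B^\prime)/\min_{t\in T}\mathrm{vol}(t)$ should then simplify to the claimed expression $\tfrac{\sqrt{3}\,\pi}{486}\rho^3(\rho^2+3)^3$ after the constants from the previous step are substituted.

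The main obstacle is deriving the sharp polynomial $\tfrac{1}{2}(\rho^2+3)$ in the upper bound on $R(t)$: a loose application of the well-shapedness inequality yields a weaker polynomial and hence a weaker final constant, so the geometry of the extremal tetrahedron, at which the well-shapedness inequality must be tight, has to be analysed explicitly. A secondary technical point is handling the degenerate cases in which the segment $pq$ passes through a vertex, edge, or face of the mesh, where the ``interior meets $pq$'' assumption must be recovered by a limiting argument so that the count is preserved.
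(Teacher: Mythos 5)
Your overall architecture (empty ball around $q$, well-shapedness, then a volume-packing count) matches the paper's, but both of the quantitative steps that are supposed to produce the factor $\rho^3(\rho^2+3)^3$ are flawed, and the first one is simply false. Constraints (i)--(iii) do \emph{not} yield an upper bound on $R(t)$: a single regular tetrahedron of side length $L \gg |pq|$ that contains the entire ball $B(q,|pq|)$ in its interior satisfies all three constraints with $\rho = 3$, crosses $pq$, and has circumradius $\approx 0.61L$, which is unbounded. So the crossing tetrahedra are not confined to any ball $B'$ of radius $O((\rho^2+3)|pq|)$, and the packing inequality $|T| \le \mathrm{vol}(B')/\min_t \mathrm{vol}(t)$ does not follow as stated. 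The paper never needs such an upper bound: it packs into the fixed ball $\mathcal{S}(q,|pq|)$ and extracts the \emph{entire} factor $\rho^3(\rho^2+3)^3$ from the volume \emph{lower} bound on each crossing tetrahedron, not from enlarging the packing region.

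The second gap is that the uniform lower bound $\mathrm{vol}(t) \ge c\,|pq|^3/\rho^3$ does not follow from ``$t$ stretches from outside $B$ into $pq$.'' A tetrahedron meeting $pq$ at a point $p'$ with $|qp'| = \psi|pq|$ and all vertices on or outside $\partial B$ is only forced to have an edge of length $\Omega(\sqrt{1-\psi^2}\,|pq|)$ (this is the content of Lemmas~\ref{lem:minmax_of_circle} and~\ref{lem:minmax_psi}), which degenerates as $\psi \to 1$, i.e.\ for tetrahedra crossing $pq$ near $p$. The missing idea is the paper's Lemma~\ref{lem:pp_prime_covers_const_frac}: the \emph{first} tetrahedron has $p$ itself as a vertex, and its bounded aspect ratio forces it to cover an initial subsegment of length at least $\frac{6}{\rho^2+3}|pq|$; hence every other crossing tetrahedron meets $pq$ only where $\psi \le \frac{\rho^2-3}{\rho^2+3}$, which converts the edge bound into $\Omega(|pq|/(\rho^2+3))$, an inradius bound of $\Omega\bigl(|pq|/(\rho(\rho^2+3))\bigr)$, and a volume bound of $\Omega\bigl(|pq|^3/(\rho^3(\rho^2+3)^3)\bigr)$. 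Dividing $\frac{4}{3}\pi|pq|^3$ by that volume is what yields $\frac{\sqrt{3}\,\pi}{486}\rho^3(\rho^2+3)^3$. Without the ``first tetrahedron covers a constant fraction'' lemma, your minimum-volume term is unbounded below and the count does not close.
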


The proof of Theorem~\ref{thm:exact_3d} is not as simple as the one in the 
two-dimensional case.
We begin by showing that the first tetrahedron intersected by $pq$, when 
we travel from $p$ towards $q$, covers a constant fraction of $|pq|$. 
This is done step by step in Lemmas~\ref{lem:minmax_of_circle},
\ref{lem:minmax_psi},
and~\ref{lem:pp_prime_covers_const_frac}.
Next,
using this constraint on the first tetrahedron intersected by $pq$,
we find a lower bound on the volume of any tetrahedron intersected by $pq$.
This lower bound compared to the volume of $\mathcal{S}(q,|pq|)$
leads to an upper bound on the number of tetrahedra crossing $pq$.

Let $\tau$ be the first tetrahedron intersected by $pq$ when we travel from 
$p$ towards $q$.
Denote by $f$ the face opposite to $p$ in $\tau$.
If $q$ is interior to $\tau$, then clearly $\tau$ 
covers all of $pq$.
Therefore,
suppose $q$ is not interior to $\tau$, and let $pq$ intersects $f$ at a point  $p'$ (see 
Figure~\ref{fig:fface_and_circle_abc}(a)). 

\begin{figure}
  \captionsetup[subfigure]{labelformat=parens}
    \centering
        \subfloat[The segment $pq$.]{
            \label{fig:first_face}
            \includegraphics[scale=.55]{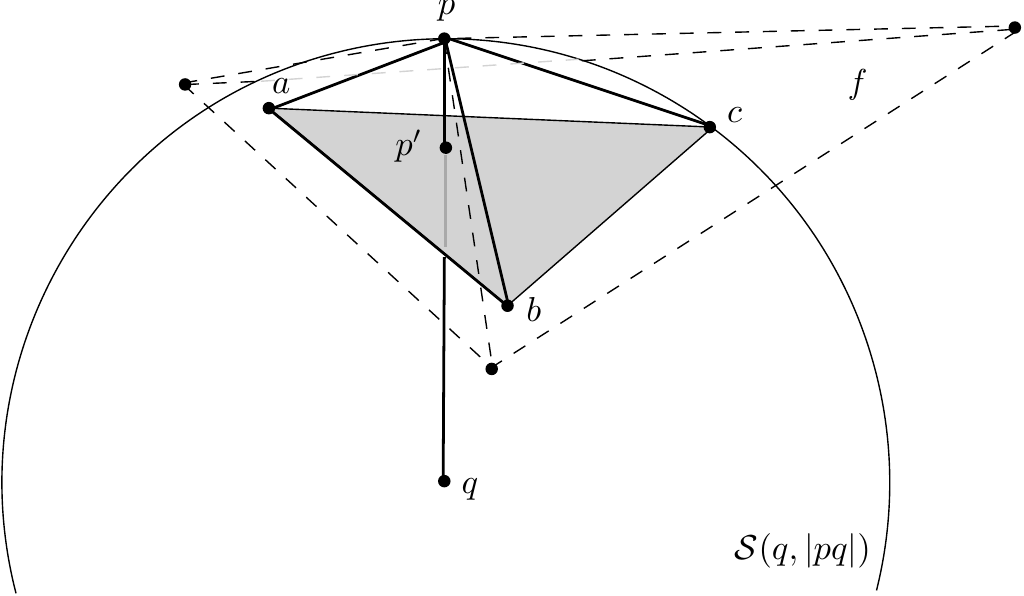}
        }\qquad
        \subfloat[Circle formed by intersection of $\mathcal{S}(q,|pq|)$ and $\triangle abc$.]{
            \label{fig:circle_abc}
            \includegraphics[scale=0.7]{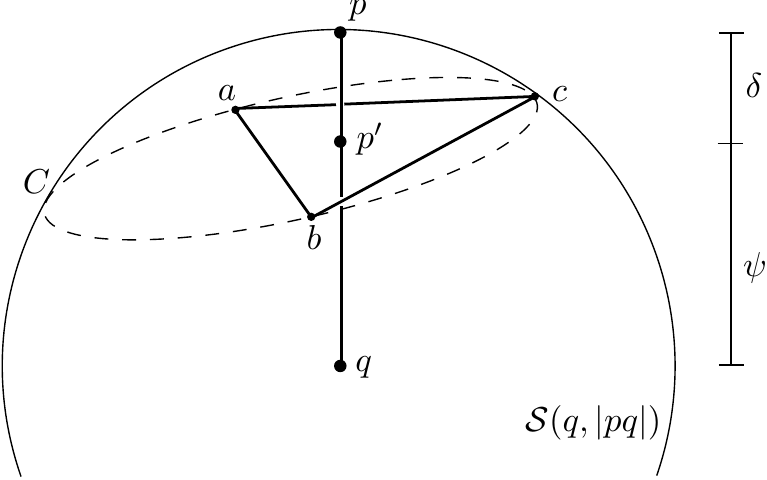}
        }%
    \caption[First tetrahedron on walk from $p$ to $q$]{Illustration of the first 
      tetrahedron encountered in walk from $p$ to $q$.}\label{fig:fface_and_circle_abc}
\end{figure}


Without loss of generality, and for the proof of Lemmas~\ref{lem:minmax_of_circle},
\ref{lem:minmax_psi},
and~\ref{lem:pp_prime_covers_const_frac}, suppose that $|pq| = 1$. 
Let $C=\mathcal{C}(o,r)$ be the circle formed by the intersection of the supporting plane of $f$ with $\mathcal{S}(q,|pq|)$.
Consider all triangles $\triangle abc$ satisfying
\begin{enumerate}
\item $\triangle abc \subseteq f$,
\item $p'\in \triangle abc$, and
\item the vertices $a$, $b$, and $c$ are on the boundary of $C$.
\end{enumerate}
(see Figure~\ref{fig:fface_and_circle_abc}(b)).
Let $\CDMinMax(\triangle abc)$ denote the shortest maximum edge length of $\triangle abc$
over all such possible triangles.
By construction and Observation~\ref{obs:ins_circ_phere}-\ref{obs:circumsphere_bound},
$\CDMinMax(\triangle abc)$ is a lower bound on the diameter of the circumsphere of $\tau$. 
The following lemma expresses $\CDMinMax(\triangle abc)$ in terms of $|op'|$ and $r$.

\begin{lemma}\label{lem:minmax_of_circle}
Let $\lambda=|op'|$.
\begin{enumerate}
\item If $\frac{1}{2}r \le \lambda < r$, then $\CDMinMax(\triangle abc) = 2 \sqrt{ r^2 - \lambda^2 }$.
\item If $0 \leq \lambda \leq \frac{1}{2}r$, then $\CDMinMax(\triangle abc)=\sqrt{3}\,r$.
\end{enumerate}
\end{lemma}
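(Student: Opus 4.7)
My plan is to prove matching upper and lower bounds on $\CDMinMax(\triangle abc)$ in each of the two cases. For the upper bound I exhibit an explicit inscribed triangle containing $p'$ that meets the claimed value. In Case~2, take $\triangle abc$ to be the equilateral triangle inscribed in $C$: its incircle is centred at $o$ with radius $r/2$, so the hypothesis $\lambda\le r/2$ places $p'$ inside this incircle and hence inside $\triangle abc$, while every edge has length $\sqrt{3}\,r$. In Case~1, let $\hat{u}=(p'-o)/\lambda$ and let $v$ be a unit vector perpendicular to $\hat{u}$; set $a=o+r\hat{u}$ and $b,c=p'\pm\sqrt{r^2-\lambda^2}\,v$. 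Then $p'$ lies on $bc$, so $p'\in\triangle abc$; a direct calculation gives $|bc|=2\sqrt{r^2-\lambda^2}$ and $|ab|=|ac|=\sqrt{2r(r-\lambda)}$, and the identity $|bc|^2-|ab|^2=2(r-\lambda)(r+2\lambda)\ge 0$ shows $|bc|$ is the longest edge whenever $\lambda\in[r/2,r)$.

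For the lower bound, let $\theta^{*}=2\arccos(\lambda/r)$ in Case~1 and $\theta^{*}=2\pi/3$ in Case~2, so that in both cases $\theta^{*}\le 2\pi/3$ and the claimed value equals $2r\sin(\theta^{*}/2)$. Given any triangle $\triangle abc$ inscribed in $C$ with $p'\in\triangle abc$, let $\beta_1,\beta_2,\beta_3>0$ denote the arcs between consecutive vertices on $C$ (so $\beta_1+\beta_2+\beta_3=2\pi$); the chord opposite $\beta_i$ has length $2r\sin(\beta_i/2)$, which is strictly less than $2r\sin(\theta^{*}/2)$ precisely when $\beta_i<\theta^{*}$ or $\beta_i>2\pi-\theta^{*}$, since $\theta^{*}/2\le\pi/3<\pi/2$ and $\sin$ is symmetric about $\pi/2$ on $[0,\pi]$. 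Suppose for contradiction that every edge of $\triangle abc$ is strictly shorter than $2r\sin(\theta^{*}/2)$. If every $\beta_i\le\pi$ then $\beta_i<\theta^{*}$ for each $i$, and summing gives $2\pi=\sum\beta_i<3\theta^{*}\le 2\pi$, a contradiction. Hence some arc, say $\beta_1$, exceeds $\pi$, and the shortness constraint forces $\beta_1>2\pi-\theta^{*}\ge 4\pi/3$, so $\beta_2+\beta_3<\theta^{*}$.

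Ruling out this last subcase is the main technical step. Here the third vertex (call it $a$) lies on the short arc cut off by chord $bc$, so the entire triangle $\triangle abc$ is contained in the minor circular segment of $C$ bounded by $bc$. The perpendicular distance from $o$ to the line through $bc$ equals $r\cos(\gamma/2)$ where $\gamma=2\pi-\beta_1<\theta^{*}$, and this distance strictly exceeds $r\cos(\theta^{*}/2)$, which equals $\lambda$ in Case~1 by the definition of $\theta^{*}$ and equals $r/2\ge\lambda$ in Case~2. Since every point of the minor segment lies at distance at least $r\cos(\gamma/2)$ from $o$, we get $|op'|>\lambda$, contradicting $|op'|=\lambda$. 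Combining the matching upper and lower bounds yields the lemma.
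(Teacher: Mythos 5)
Your proof is correct. The upper-bound half is the same as the paper's: in both cases you exhibit the same witness triangles (the equilateral triangle for $\lambda\le r/2$, and for $\lambda\ge r/2$ the triangle with one vertex at $o+r\hat u$ and the chord through $p'$ perpendicular to $op'$), and your computation $|bc|^2-|ab|^2=2(r-\lambda)(r+2\lambda)\ge 0$ correctly identifies the longest edge. The lower-bound half, however, takes a genuinely different route. The paper rotates the witness chord about $o$ to obtain an inner circle $C'$ (of radius $\lambda$, resp.\ $r/2$) with $p'$ on its boundary, notes that any strictly shorter chord of $C$ lies in the annulus between $C'$ and $C$, and then asserts that a triangle with all edges in the annulus ``does not contain $C'$ and hence does not contain $p'$''; that assertion is the crux and is left unjustified --- in particular it does not rule out the triangle containing $o$ and hence all of $C'$. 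Your argument replaces this with explicit bookkeeping: writing the chord lengths as $2r\sin(\beta_i/2)$ with $\sum\beta_i=2\pi$ and $\theta^*\le 2\pi/3$, the case where all arcs are at most $\pi$ dies by the sum $2\pi<3\theta^*\le 2\pi$ (this is exactly the case the paper's annulus argument glosses over), and the case of one reflex arc is killed by the observation that the triangle then sits in a minor segment every point of which is at distance greater than $r\cos(\theta^*/2)\ge\lambda$ from $o$. So your proof is more elementary and fills a real gap in the paper's reasoning, at the cost of being somewhat longer; the paper's version is slicker but would need something like Euler's inequality $R\ge 2r_{\mathrm{in}}$ (together with $\lambda\ge r/2$) to justify why the triangle cannot swallow $C'$ whole.
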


\begin{proof}
Without loss of generality,
the equation of $C$ is $x^2+y^2=r^2$,
and $p'=(0,\lambda)$
(refer to Fig.~\ref{fig:minmax_parts}(b)).

\begin{figure}
  \captionsetup[subfigure]{labelformat=parens}
    \centering
        \subfloat[]{
            \label{fig:minmax1}
            \includegraphics{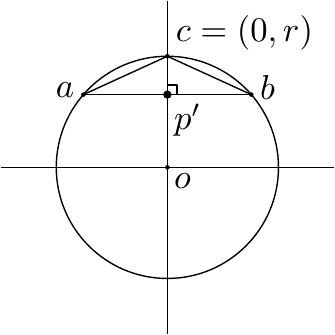}
        } \qquad
        \subfloat[]{
            \label{fig:minmax2}
            \includegraphics{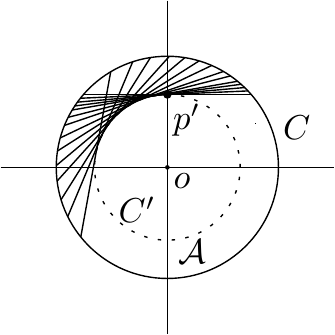}
        }%
    \caption{Illustration of the proof of Lemma~\ref{lem:minmax_of_circle}}
    \label{fig:minmax_parts}
\end{figure}

\begin{enumerate}
\item Suppose $\frac{1}{2}r \leq \lambda<r$.
Let $a$, $b$, and $c$ be three points on the boundary of $C$
such that $c=(0,r)$,
$p'\in ab$ and $ab$ is perpendicular to $oc$.
By the Pythagorean theorem, $|ab| = 2\sqrt{r^2-\lambda^2}$.
By the hypothesis,
the longest side in $\triangle abc$ is $ab$,
so $\CDMinMax(\triangle abc) \leq 2 \sqrt{r^2 - \lambda^2}$.
Let us now prove the equality.

Rotate $ab$ around $o$ as shown in Figure~\ref{fig:minmax_parts}(b).
This rotation traces a new circle $C'=\mathcal{C}(o,2\sqrt{r^2-\lambda^2})$ 
interior to $C$ such that $p'$ is on the boundary of $C'$. 
By elementary geometry, a line segment inscribed in $C$
\begin{itemize}
\item is tangent to $C'$ if and only if it is of length $2\sqrt{r^2-\lambda^2}$,

\item intersects the interior of $C'$ if and only if it is of length greater than $2\sqrt{r^2-\lambda^2}$,

\item and is entirely within the annulus $\mathcal{A}$ defined by $C$ and $C'$ if and only if it is of length less than $2\sqrt{r^2-\lambda^2}$.
\end{itemize}

Suppose there exists a triangle $t$ inscribed in $C$ and containing $p'$ such that all sides are shorter than $2\sqrt{r^2-\lambda^2}$; then the edges of $t$ are entirely inside $\mathcal{A}$. Therefore, since $\frac{1}{2}r \leq \lambda<r$, $t$ does not contain $C'$ and hence does not contain $p'$.
This is a contradiction.

\item Suppose $0 < \lambda \leq \frac{1}{2}r$. 
In this case, it is possible to construct an equilateral triangle inscribed in $C$ that includes $p'$. 
The length of all sides of this triangle are $\sqrt{3}\,r$. 
Therefore, $\CDMinMax(\triangle abc) \leq \sqrt{3}\,r$. 
Let us now prove the equality.

Define $C'$ and $\mathcal{A}$ as in the previous case,
by rotating a line segment of length $\sqrt{3}\,r$ around $o$.
Suppose there exists a triangle $t$ inscribed in $C$ and containing $p'$,
such that all sides are shorter than $\sqrt{3}\,r$;
then all edges of $t$ are entirely inside $\mathcal{A}$.
However,
since $0 < \lambda \leq \frac{1}{2}r$,
$t$ does not contain $C'$.
Therefore,
$t$ does not contain $p'$,
which is a contradiction.
\end{enumerate}
\end{proof}

The next lemma expresses $\CDMinMax(\triangle abc)$ in terms of $\psi$, the distance from $q$ to $p'$.
\begin{lemma}\label{lem:minmax_psi}
$\CDMinMax(\triangle abc) \geq \sqrt{3} \sqrt{ 1 - \psi^2 }$, where $\psi = |qp'|$.
\end{lemma}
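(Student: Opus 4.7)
\medskip

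The plan is to reduce the claim to Lemma~\ref{lem:minmax_of_circle} by expressing $r^{2}-\lambda^{2}$ in terms of $\psi$ and then checking the two regimes of $\lambda$ separately.

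First I would set up the geometry carefully. Since $|pq|=1$, the circle $C=\mathcal{C}(o,r)$ is the intersection of $\mathcal{S}(q,1)$ with the supporting plane of $f$; its center $o$ is the foot of the perpendicular from $q$ to that plane. Applying the Pythagorean theorem to the right triangle with vertices $q$, $o$, and any point of $C$ gives the identity $|oq|^{2}+r^{2}=1$. Next, $op'$ lies in the plane of $f$ whereas $oq$ is perpendicular to it, so another application of the Pythagorean theorem yields
\[
\psi^{2}=|qp'|^{2}=|oq|^{2}+|op'|^{2}=(1-r^{2})+\lambda^{2}.
\]
Consequently $r^{2}-\lambda^{2}=1-\psi^{2}$, which is the bridge between the parameter used in Lemma~\ref{lem:minmax_of_circle} and the parameter $\psi$ that appears in the statement to be proved.

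Then I would invoke Lemma~\ref{lem:minmax_of_circle} case by case. In the regime $\tfrac{1}{2}r\le\lambda<r$ the lemma gives $\CDMinMax(\triangle abc)=2\sqrt{r^{2}-\lambda^{2}}=2\sqrt{1-\psi^{2}}$, and since $2\ge\sqrt{3}$ the bound follows at once. In the regime $0\le\lambda\le\tfrac{1}{2}r$ the lemma gives $\CDMinMax(\triangle abc)=\sqrt{3}\,r$; using $r^{2}\ge r^{2}-\lambda^{2}=1-\psi^{2}$ again yields $\CDMinMax(\triangle abc)\ge\sqrt{3}\sqrt{1-\psi^{2}}$. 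The only remaining edge case is $\lambda\ge r$, but this cannot occur because $p'$ must lie inside the disk bounded by $C$ (being contained in some inscribed triangle $\triangle abc$).

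Since both calculations are immediate once the identity $1-\psi^{2}=r^{2}-\lambda^{2}$ is in hand, there is no real obstacle. The only subtlety worth stating explicitly is justifying the orthogonal decomposition $\psi^{2}=|oq|^{2}+\lambda^{2}$, i.e., verifying that $o$ is indeed the orthogonal projection of $q$ onto the plane of $f$; this is a standard property of the sphere–plane intersection and should be noted for completeness but requires no computation.
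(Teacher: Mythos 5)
Your proposal is correct and follows essentially the same route as the paper: both proofs reduce to Lemma~\ref{lem:minmax_of_circle} via the Pythagorean identity $r^{2}-\lambda^{2}=1-\psi^{2}$ (the paper writes $\lambda=\psi\cos\theta$, $r=\sqrt{1-\psi^{2}\sin^{2}\theta}$ with $\theta=\angle op'q$, which is just an angular parametrization of the same decomposition) and then handle the two regimes of $\lambda$ identically. Your explicit remark that $\lambda<r$ always holds is a small completeness bonus over the paper's version.
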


\begin{proof}
Locate $\mathcal{S}(q,|pq|)$ in the usual system of axes $XYZ$ in such a way that
$q$ is on the origin of the system of axes,
and the line through $o$ and $p'$ lies in the $YZ$ plane.
Denote by $\theta$ the smallest angle between the $Z$, axis and the line through $o$ and $p'$
(see Figure~\ref{fig:minmax}).

\begin{figure}
	\centering
 	\includegraphics[scale=0.75]{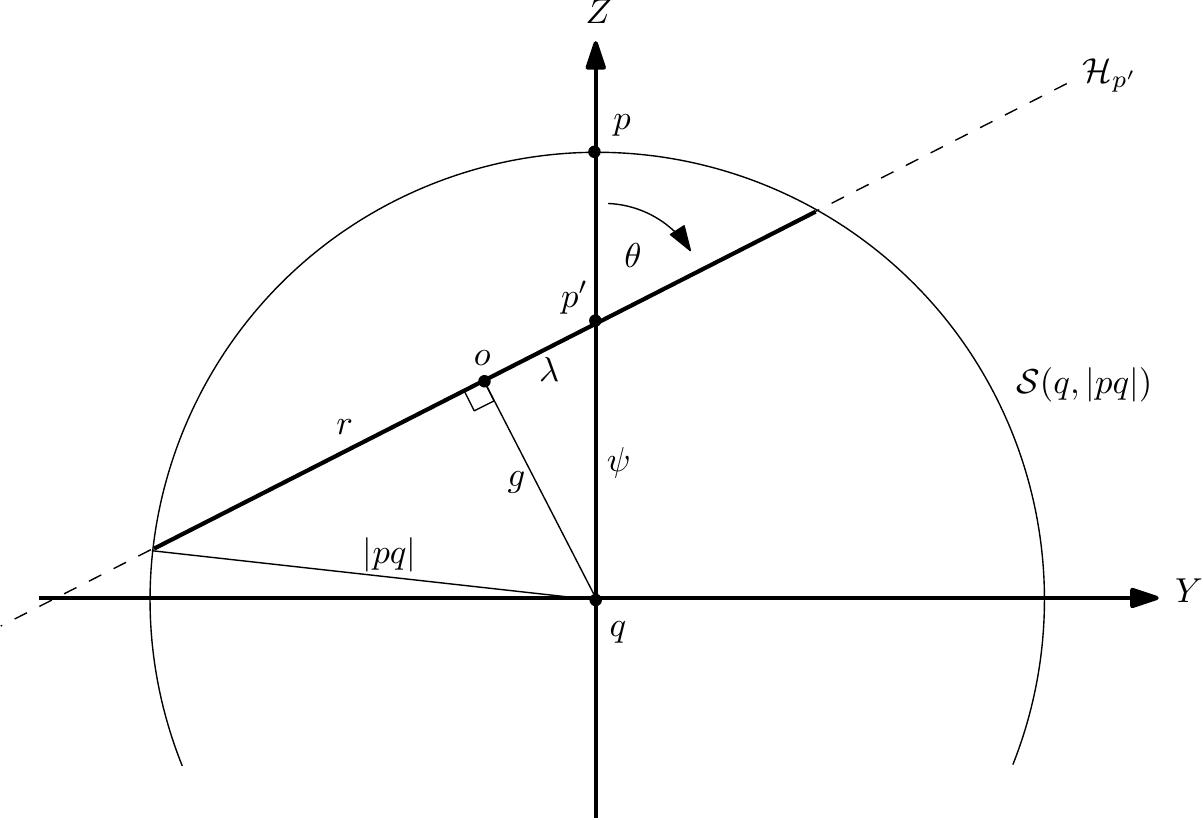}
	\caption[Cross-section of $\mathcal{S}$ and $C$]{A cross-section of 
	$\S(q,|pq|)$ and $C$ in the $YZ$ plane as viewed 
	from the direction of the positive $X$ axis.}
	\label{fig:minmax}
\end{figure}

First we determine $\lambda$ and $r$. 
Since $\angle op'q = \theta$, we have 
$\lambda = \psi \cos{\theta}$ and $r =  \sqrt{ 1 - \psi^2 \sin^2\theta}$ by 
the Pythagorean theorem.
If $\frac{1}{2}r \le \lambda < r$,
then by Lemma~\ref{lem:minmax_of_circle}
we have

\begin{eqnarray*}
\CDMinMax(\triangle abc) &=& 2 \sqrt{ r^2 - \lambda^2 } \\
 &=& 2 \sqrt{ 1 - \psi^2 \sin^2{\theta} - \psi^2 \cos^2{\theta} } \\
 &=& 2 \sqrt{ 1 - \psi^2 } \\
 &>& \sqrt{3} \sqrt{ 1 - \psi^2 } \enspace.
\end{eqnarray*}
If $0 \le \lambda \leq \frac{1}{2}r$,
then we have
\begin{eqnarray*}
\CDMinMax(\triangle abc) &=& \sqrt{3} r\\
 &=& \sqrt{3} \sqrt{ 1 - \psi^2 \sin^2\theta}\\
 &\geq& \sqrt{3} \sqrt{ 1 - \psi^2}
\end{eqnarray*}
for any $\psi$.
\end{proof}

We can now prove the following lemma.
\begin{lemma}\label{lem:pp_prime_covers_const_frac}
The segment $pp'$ covers at least a constant fraction of $|pq|$.
\end{lemma}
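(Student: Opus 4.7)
The plan is to translate the constraint that $p$ is a nearest neighbour into a quantitative lower bound on the insphere radius of the first intersected tetrahedron $\tau$, and then use the fact that a vertex of a tetrahedron is at least twice the insphere radius from the opposite face.

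First I would invoke the preceding two lemmas. Since the three vertices of the face $f$ opposite $p$ in $\tau$ lie outside (or on) $\mathcal{S}(q,1)$, the triangle they form contains $p'$ and is inscribed in a circle no larger than $C = \mathcal{C}(o,r)$. Hence its longest edge is at least $\CDMinMax(\triangle abc) \geq \sqrt{3}\sqrt{1-\psi^2}$ by Lemma~\ref{lem:minmax_psi}. By Observation~\ref{obs:ins_circ_phere}.\ref{obs:circumsphere_bound}, the longest edge of any face of $\tau$ is at most $2R(\tau)$, so
\[
2R(\tau) \;\geq\; \sqrt{3}\sqrt{1-\psi^2}.
\]
Applying the well-shaped property $R(\tau)/r(\tau) < \rho$ yields $r(\tau) > \tfrac{\sqrt{3}}{2\rho}\sqrt{1-\psi^2}$.

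Next I would observe that $p'$ lies on the face $f$ opposite $p$ in $\tau$, so $|pp'| \geq \mathtt{mdist}(p,\tau)$, and by Observation~\ref{obs:ins_circ_phere}.\ref{obs:insphere_bound} this is in turn at least $2r(\tau)$. Combining with the previous bound gives
\[
|pp'| \;\geq\; 2r(\tau) \;>\; \frac{\sqrt{3}}{\rho}\sqrt{1-\psi^2}.
\]
Since $p$, $p'$, $q$ are collinear with $p'$ between $p$ and $q$, we also have $|pp'| = 1-\psi$, so
\[
1-\psi \;\geq\; \frac{\sqrt{3}}{\rho}\sqrt{(1-\psi)(1+\psi)}.
\]

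Finally, I would finish with an elementary algebraic manipulation: dividing by $\sqrt{1-\psi}$ (valid since the case $\psi=1$ forces $p=q$ and is trivial) and squaring gives $\rho^2(1-\psi) \geq 3(1+\psi)$, hence $\psi \leq (\rho^2-3)/(\rho^2+3)$ and therefore
\[
|pp'| \;=\; 1-\psi \;\geq\; \frac{6}{\rho^2+3}\,|pq|,
\]
which is the desired constant fraction (depending only on $\rho$). The only mild obstacle is justifying $|pp'| \geq \mathtt{mdist}(p,\tau)$ — but this is immediate once one notes that the minimum in $\mathtt{mdist}$ is taken over the whole face $f_p$ and $p' \in f_p$, so no extra geometric argument is required.
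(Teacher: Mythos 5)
Your proof is correct and follows essentially the same route as the paper's: lower-bound $2R(\tau)$ by $\CDMinMax(\triangle abc)\geq\sqrt{3}\sqrt{1-\psi^2}$ via Lemma~\ref{lem:minmax_psi}, upper-bound $2r(\tau)$ by $\mathtt{mdist}(p,\tau)\leq|pp'|$ via Observation~\ref{obs:ins_circ_phere}, and feed both into the aspect-ratio bound to get $\psi\leq(\rho^2-3)/(\rho^2+3)$. The only (cosmetic) difference is the finish: you isolate $\psi$ by dividing by $\sqrt{1-\psi}$ and squaring, whereas the paper solves $\sqrt{3}\sqrt{1-\psi^2}/(1-\psi)=\rho$ and invokes monotonicity in $\psi$ — the two are algebraically equivalent.
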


\begin{proof}
Let $e$ be the longest edge of $\tau$.
\begin{eqnarray*}
\frac{R(\tau)}{r(\tau)} &\geq& \frac{\frac{|e|}{2}}{\frac{\mathtt{mdist}(p,\tau)}{2}} \qquad\textrm{by Observation~\ref{obs:ins_circ_phere}}\\
&\geq& \frac{\frac{\CDMinMax (\triangle abc)}{2}}{\frac{|pp'|}{2}} \qquad\textrm{by construction}\\
&\geq& \frac{\sqrt{3} \sqrt{ 1 - \psi^2}}{|pp'|} \qquad\textrm{by Lemma~\ref{lem:minmax_psi}}\\
&=& \frac{\sqrt{3} \sqrt{ 1 - \psi^2}}{1 - \psi}
\end{eqnarray*}

We now solve
\begin{eqnarray*}
\frac{\sqrt{3} \sqrt{ 1 - \psi^2}}{1 - \psi} &=& \rho \\
\frac{3(1 - \psi^2)}{(1 - \psi)^2} &=& \rho^2 \\
\frac{3(1 + \psi)}{1 - \psi} &=& \rho^2 \\
(\rho^2+3)\psi &=& \rho^2-3 \\
\psi &=& \frac{\rho^2-3}{\rho^2+3} \enspace.
\end{eqnarray*}
Since $\frac{\sqrt{3} \sqrt{ 1 - \psi^2}}{1 - \psi}$ is an increasing function of $\psi$
for $0 \leq \psi < 1$,
then we must have $\psi < \frac{\rho^2-3}{\rho^2+3}$,
otherwise the aspect ratio of $\tau$ is bigger than $\rho$.
Therefore,
$|pp'| = 1-\psi > \frac{6}{\rho^2+3}$.
In general,
$|pp'| > \frac{6}{\rho^2+3}|pq|$.
\end{proof}

We are now in a position to prove Theorem~\ref{thm:exact_3d}.

\begin{proof}[Proof of Theorem~\ref{thm:exact_3d}]
The aim is to construct a tetrahedron $t^*$
that satisfies the aspect ratio hypothesis,
and that is smaller in volume than any tetrahedron crossing $pq$.
The volume of $\mathcal{S}(q,|pq|)$ divided by the volume of $t^*$
is therefore an upper bound on the number of tetrahedra crossing $pq$.
Let $t^* = abcd$ be a tetrahedron satisfying the aspect ratio hypothesis
and such that $|ab|=|ac|=|ad|=|pp'|$.
By the construction of $p'$,
$t^*$ is smaller in volume than any tetrahedron crossing $pq$.
Let $e^*$ be the longest edge of $t^*$.
By Observation~\ref{obs:ins_circ_phere}-\ref{obs:circumsphere_bound},

\begin{eqnarray*}
R(t^*) &\geq& \frac{|e^*|}{2} \\
R(t^*) &\geq& \frac{|pp'|}{2}\qquad\textrm{by construction,}\\
\frac{R(t^*)}{r(t^*)} &\geq& \frac{|pp'|}{2r(t^*)} \\
\rho &>& \frac{|pp'|}{2r(t^*)}\qquad\textrm{by the hypothesis,}\\
r(t^*) &>& \frac{|pp'|}{2\rho} \\
r(t^*) &>& \frac{3}{\rho(\rho^2+3)}|pq| \qquad\textrm{by Lemma~\ref{lem:pp_prime_covers_const_frac}.}\\
\end{eqnarray*}

Therefore,
the volume of the insphere of $t^*$ is at least $\frac{36\pi}{\rho^3(\rho^2+3)^3}|pq|^3$.
By elementary geometry,
the smallest possible ratio comparing the volume of a tetrahedron and the volume 
of its insphere is $\frac{6\sqrt{3}}{\pi}$, in the case of a regular tetrahedron.
Therefore, the volume of $t^*$ is at least 
$\frac{6\sqrt{3}}{\pi}\times\frac{36\pi}{\rho^3(\rho^2+3)^3}|pq|^3 = 
\frac{216\sqrt{3}}{\rho^3(\rho^2+3)^3}|pq|^3$.
Hence there are at most
$$\frac{\frac{4}{3}\pi|pq|^3}{\frac{216\sqrt{3}}{\rho^3(\rho^2+3)^3}|pq|^3}=\frac{\sqrt{3}\,\pi}{486}\rho^3(\rho^2+3)^3$$
tetrahedra crossing $pq$.
\end{proof}

\subsection{Jump to Approximate Nearest Neighbour}\label{3d-ann}

We now examine the case where we jump to $\hat{p}$,
an approximate nearest neighbouqr of $q$ in $\CDThreeD$.
The following observation is the three-dimensional version
of Observation~\ref{obs:triangle_adjacent}.
\begin{observation}\label{obs:tetrahedron_adjacent}
Let $t_i = abcd$ be a well-shaped tetrahedron with
$a,b,c,d \in \meshThreeD$.
\begin{enumerate}
\item\label{obs:adjacent_edge3D}
Let $t_{i+1}$ be a well-shaped tetrahedron adjacent to $t_i$ at edge $ab$.
The edges of $t_{i+1}$ have length of at least $\frac{1}{\rho}|ab|$.

\item\label{obs:adjacent_vertex3D}
The edges of any tetrahedron incident to $a$
have length of at least $|ab|\left(\frac{1}{\rho}\right)^{\left\lfloor\frac{2\pi}{\Omega} \right\rfloor}$.
\end{enumerate}
\end{observation}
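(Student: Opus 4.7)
The plan is to obtain part (1) directly from the inscribed- and circumscribed-sphere inequalities in the earlier observation, combined with the well-shaped aspect-ratio bound, and then to derive part (2) by iterating part (1) along a short walk through face-adjacent tetrahedra incident to $a$.

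For part (1), I would start from the fact that $ab$ is an edge of $t_{i+1}$, so the lower bound on the circumsphere radius in terms of an edge yields $R(t_{i+1}) \geq |ab|/2$. Combining this with the well-shaped hypothesis $R(t_{i+1})/r(t_{i+1}) \leq \rho$ gives $r(t_{i+1}) \geq |ab|/(2\rho)$. Now let $e' = uv$ be any edge of $t_{i+1}$. Since $v$ lies on the face of $t_{i+1}$ opposite to $u$, the quantity $\mathtt{mdist}(u, t_{i+1})$ is at most $|uv| = |e'|$, and the upper bound $2r(t_{i+1}) \leq \mathtt{mdist}(u, t_{i+1})$ yields $|e'| \geq 2\,r(t_{i+1}) \geq |ab|/\rho$, establishing the claim for every edge of $t_{i+1}$.

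For part (2), I would mirror the two-dimensional argument inside the link of $a$. Under central projection from $a$, the tetrahedra incident to $a$ correspond to the triangles of a triangulation of a small sphere around $a$, and their face-adjacency graph is the dual graph of this spherical triangulation. Along any walk $t = t_0, t_1, \ldots, t_k$ of face-adjacent tetrahedra in which every shared face contains $a$, each shared face has two of its three edges incident to $a$; selecting one such shared edge $e_i$ and applying part (1), every edge of $t_{i+1}$ has length at least $|e_i|/\rho$. Iterating $k$ times starting from $ab$ therefore shows that every edge of $t_k$ has length at least $|ab|\,\rho^{-k}$. I would then bound the diameter of the face-adjacency graph in the link of $a$ by $\lfloor 2\pi/\Omega \rfloor$, using the solid-angle lower bound $\Omega$ from Lemma~\ref{lem:min_angle} together with a symmetry argument: there are at most $\lfloor 4\pi/\Omega \rfloor$ tetrahedra around $a$, and any target can be reached via the shorter of two complementary arcs, producing the required exponent.

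The hard part will be making the ``shorter arc'' step rigorous in three dimensions. In two dimensions the triangles around a vertex form a literal cycle, so the shorter of two directions is a trivial object; in three dimensions the link of $a$ is a triangulated $2$-sphere whose dual graph has no canonical cyclic structure, and extracting a face-adjacency path of length at most $\lfloor 2\pi/\Omega \rfloor$ between two arbitrary tetrahedra requires a geometric argument---for instance, by choosing a great circle on the link sphere that passes through the two projected triangles and then bounding the number of spherical triangles it traverses via the solid-angle lower bound. This combinatorial-geometric step, rather than the edge-length arithmetic of part (1), is the delicate piece of the proof.
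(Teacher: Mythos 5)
Your proof of part~(1) is exactly the paper's: combine $2R(t_{i+1})\ge|ab|$ (Observation~\ref{obs:ins_circ_phere}, circumsphere part) with $2r(t_{i+1})\le\mathtt{mdist}(v,t_{i+1})\le|e|$ (insphere part) and the aspect-ratio bound to get $|e|\ge|ab|/\rho$ for every edge $e$ of $t_{i+1}$.

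For part~(2) the paper offers only one sentence: ``similar to the two-dimensional case, using part~(1) and the fact that the full solid angle is $4\pi$.'' So the delicate step you flag is precisely the step the paper leaves implicit, and your great-circle construction is a legitimate way to supply it. One reassurance: you do not need each crossed spherical triangle to cover a long arc of the great circle (which indeed would fail, since a large spherical triangle can be clipped near a vertex). It is enough that the link triangles are pairwise disjoint convex spherical regions, each of area at least $\Omega$ by Lemma~\ref{lem:min_angle}, so a great circle in general position meets at most $\lfloor 4\pi/\Omega\rfloor$ of them, each in a single arc; the crossed triangles therefore form a cyclic face-adjacency sequence of length $m\le\lfloor 4\pi/\Omega\rfloor$, and taking the shorter of the two arcs between the projections of $t_i$ and the target tetrahedron gives a walk of at most $\lfloor m/2\rfloor\le\lfloor 2\pi/\Omega\rfloor$ steps. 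Each step of that walk shares a face containing $a$, hence an edge incident to $a$, so iterating part~(1) along it yields the exponent $\lfloor 2\pi/\Omega\rfloor$ exactly as you describe. Your write-up is, if anything, more complete than the paper's.
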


\begin{proof}\
\begin{enumerate}
\item Let $e$ be any edge of $t_{i+1}$,
and let $v$ be any of the two vertices of $e$.
By Observation~\ref{obs:ins_circ_phere}-\ref{obs:insphere_bound},
$2r(t_{i+1}) \leq \mathtt{mdist}(v,t_{i+1}) \leq |e|$.
By Observation~\ref{obs:ins_circ_phere}-\ref{obs:circumsphere_bound},
$2R(t_{i+1}) \geq |ab|$.
Therefore,
$$\frac{|ab|}{|e|} \leq \frac{R(t_{i+1})}{r(t_{i+1})} < \rho \enspace,$$
from which $|e| > \frac{1}{\rho}|ab|$.

\item The proof is similar to the one of Observation~\ref{obs:triangle_adjacent}-\ref{obs:adjacent_vertex}.
It uses Observation~\ref{obs:tetrahedron_adjacent}-\ref{obs:adjacent_edge3D}
and the fact that the full solid angle is $4\pi$.
\end{enumerate}
\end{proof}

Consider an arbitrary ball $\mathcal{S}$
that contains no vertex of $\meshThreeD$.
In order to obtain a three-dimensional version of Lemma~\ref{lem:tri_interC_have_long_edge},
we need to distinguish between two cases.
If an edge $e$ of the mesh intersects $\mathcal{S}$,
then there is a lower bound on $|e|$.
This is the subject of Lemma~\ref{lem:ab_is_long}.
If a face $f$ of the mesh intersects $\mathcal{S}$
without having any of its edges intersecting $\mathcal{S}$,
then there is a lower bound on the length of at least one of the edges of $f$.
This is the subject of Lemma~\ref{lem:face_abc_has_long_edge}.

\begin{lemma}
\label{lem:ab_is_long}
Let $ab$ be an edge in a well-shaped tetrahedral mesh $\meshThreeD$, and 
$\mathcal{S}(q,r)$ be a sphere intersecting $\mathcal{M}$ but containing no vertex 
of $\mathcal{M}$ in its interior. 
If $ab$ intersects $\mathcal{S}$, then $|ab|\geq\frac{2}{\rho^2+1}r$.
\end{lemma}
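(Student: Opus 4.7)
The plan is to follow the scheme of the two-dimensional analog, Lemma~\ref{lem:tri_interC_have_long_edge}: identify the extremal configuration that minimizes $|ab|$, force the vertices of an incident tetrahedron to lie on $\partial\mathcal{S}$, and then extract the bound from the well-shapedness hypothesis. First, I would dispose of the degenerate case $q\in ab$: here the segment $ab$ passes through the centre of $\mathcal{S}$ and has its endpoints on or outside the ball, so $|ab|\geq 2r$, which trivially exceeds $\frac{2r}{\rho^2+1}$. Assume henceforth $q\notin ab$.

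Next, I would select a tetrahedron $t=abcd$ of $\mathcal{M}$ incident to $ab$, with $c$ (and $d$) chosen to lie on the same side of the supporting line of $ab$ as $q$. By hypothesis, neither $c$ nor $d$ is interior to $\mathcal{S}$. Since the goal is a lower bound on $|ab|$, a monotonicity argument lets me assume without loss of generality that $c$ lies on $\partial\mathcal{S}$: pulling $c$ closer to $\mathcal{S}$ shrinks the tetrahedron under the aspect-ratio constraint and thereby reduces $|ab|$. I would treat separately the sub-case where $ab$ meets $\mathcal{S}$ transversally (two intersection points) and the sub-case where $ab$ is tangent to $\mathcal{S}$, mirroring the two cases in the 2D proof.

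In each sub-case, the well-shapedness of $t$ implies (using Observation~\ref{obs:ins_circ_phere}) that $2r(t)\leq |e|\leq 2R(t)$ for every edge $e$ of $t$ and $R(t)\leq \rho\,r(t)$, so all edge lengths of $t$ are controlled to within a factor $\rho$ of $|ab|$. Combining this with the fact that $c$ lies on $\partial\mathcal{S}$ yields a planar geometric inequality (obtainable by projecting the extremal configuration onto the plane through $q$, $a$, and $b$) involving $|ab|$, $r$, and $\rho$. Simplifying this inequality should produce a quadratic of the form $(\rho^{2}+1)|ab|\geq 2r$, which is the claimed bound.

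The main obstacle, and the step I expect to be most delicate, is justifying the extremal reduction in three dimensions. Unlike the 2D case, the position of $d$ in the plane perpendicular to the face $abc$ contributes an extra degree of freedom that could, a priori, allow $|ab|$ to be driven below the claimed bound. Handling this requires a careful argument that any perturbation of $d$ (respecting well-shapedness and keeping $d$ outside $\mathcal{S}$) cannot decrease $|ab|$; I expect this to reduce, after a suitable change of variables, to the 2D analysis applied to a projected triangle, with $\rho$ replaced by an effective planar aspect-ratio bound derived from Observation~\ref{obs:tetrahedron_adjacent}. Making this reduction rigorous while keeping the final constant exactly $\rho^{2}+1$ is the technical heart of the argument.
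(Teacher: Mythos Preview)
Your plan has the right high-level spirit (extremal reduction, then exploit well-shapedness), but the route you sketch diverges from what actually works here and contains a genuine gap.

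First, a conceptual slip: in $\mathbb{R}^3$ the phrase ``on the same side of the supporting line of $ab$ as $q$'' is not meaningful, and there is a whole constellation of tetrahedra incident to the edge $ab$, not just two. The paper's proof begins by carefully selecting which tetrahedron to analyse: after placing $q$ at the origin with $ab$ horizontal and perpendicular to the $Z$-axis, it picks $t=abcd$ so that $c$ lies below the horizontal plane $\mathcal{H}_{ab}$ through $ab$, $d$ is no lower than $c$, and the $X$-coordinates of $c$ and $d$ have opposite signs. Such a tetrahedron always exists among those around $ab$, but establishing this needs an argument.

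Second, the case structure you propose (transversal versus tangent intersection of $ab$ with $\mathcal{S}$, mirroring the 2D proof) is not the one that drives the 3D argument. The paper instead pushes $a,b,c,d$ onto $\partial\mathcal{S}$ and then splits on the vertical position of $d$: either (1) $d$ lies between $\mathcal{H}_c$ and $\mathcal{H}_{ab}$, or (2) $d$ lies above $\mathcal{H}_{ab}$. In Case~(1) the tetrahedron is trapped between two parallel horizontal planes at distance $\lambda$, so $r(t)\le\lambda/2$; meanwhile one of the edges $ac,bc$ has length at least $|a''c|=r\sin\beta$ where $\beta=\angle a''qc$ and $\lambda\le r(1-\cos\beta)$. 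Well-shapedness then gives $\rho>\cot(\beta/2)$, hence $|ac|\ge r\sin(2\operatorname{arccot}\rho)=\tfrac{2\rho}{\rho^2+1}r$, and $|ab|\ge |ac|/\rho\ge\tfrac{2}{\rho^2+1}r$. This slab argument, not a projection onto the plane $qab$, is what produces exactly the constant $\rho^2+1$.

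Third, your proposed handling of $d$ (project to a plane and use an ``effective planar aspect ratio'') is quite different from the paper's and is where your plan is most at risk: it is not clear such a projection preserves enough of the aspect-ratio constraint to recover the sharp constant. The paper instead handles Case~(2) by \emph{reorienting}: it rotates coordinates so that $ad$ plays the role previously played by $ab$, and then proves (via a rather lengthy explicit coordinate computation) that $b$ and $c$ now sit below the new horizontal plane $\mathcal{H}_{ad}$, reducing to Case~(1) applied to the edge $ad$. This yields a long edge of $t$, and Observation~\ref{obs:tetrahedron_adjacent} transfers the bound back to $|ab|$.
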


\begin{proof}
We are looking for a lower bound on $|ab|$.
Therefore,
we assume that $a$ and $b$ are on the surface of $\mathcal{S}$.
Denote by $\pi_{qab}$ the plane defined by $q$, $a$, and $b$.
Locate $S$ and $ab$ in the usual system of axes $XYZ$ in such a way that
\begin{itemize}
\item $q$ is on the origin of the system of axes,

\item $\pi_{qab}$ corresponds to the $YZ$ plane, and

\item $ab$ intersects perpendicularly the positive $Z$ axis.
\end{itemize}
Let $\mathcal{H}_{ab}$ be the  plane containing $ab$ that is parallel to the $XY$ plane,
and consider the tetrahedra that are adjacent to $ab$.
Without loss of generality,
let $t=abcd$ be a tetrahedron such that
\begin{itemize}
\item $c$ is below $\mathcal{H}_{ab}$,

\item $d$ is not lower than $c$,

\item the $X$ coordinate of $c$ is non-negative, and

\item the $X$ coordinate of $d$ is non-positive.
\end{itemize}
Such a tetrahedron always exists.
Indeed,
take a tetrahedron $t'$ adjacent to $ab$
such that $c$ is below $\mathcal{H}_{ab}$
and $d$ is not lower than $c$.
If $t'$ does not satisfy the two remaining constraints,
one of the neighbouring tetrahedra of $t'$ does.
Let $t$ be this tetrahedron.

By Observation~\ref{obs:tetrahedron_adjacent}-\ref{obs:adjacent_edge3D},
the smaller the edges of $t$ are,
the smaller $ab$ is.
Therefore,
since we want to minimize $|ab|$,
we assume that $c$ and $d$ are on the surface of $\mathcal{S}$.
By the hypotheses,
if $c$ is below the $XY$ plane,
then $|ac|\geq r$ or $|bc|\geq r$.
Thus, $|ab| \geq \frac{1}{\rho}r$ 
by Observation~\ref{obs:tetrahedron_adjacent}-\ref{obs:adjacent_edge3D},
from which $|ab| \geq \frac{1}{\rho}r > \frac{2}{\rho^2+1}r$
because $\rho \geq 3 > 1$.

For the rest of the proof,
suppose $c$ is inclusively between the $XY$ plane and $\mathcal{H}_{ab}$.
Let $\mathcal{H}_{c}$ be the plane containing $c$ that is parallel to the $XY$ plane.
Since $d$ is not lower than $c$,
there are two cases to consider:
either (1) $d$ is inclusively between $\mathcal{H}_{c}$ and $\mathcal{H}_{ab}$,
or (2) $d$ is above $\mathcal{H}_{ab}$.
(Note that in the proof of Case~\ref{case inclusively between},
we do not need the hypotheses
about the $X$ coordinates of $c$ and $d$.
However,
these hypotheses are indispensable for the proof of Case~\ref{case above}.)
\begin{enumerate}
\item\label{case inclusively between}
Denote by $a'$ the intersection point of $ab$ with the positive $Z$ axis
(refer to Fig.~\ref{fig:t_between_planes}).
\begin{figure}
	\centering
	  \includegraphics[scale=1.00]{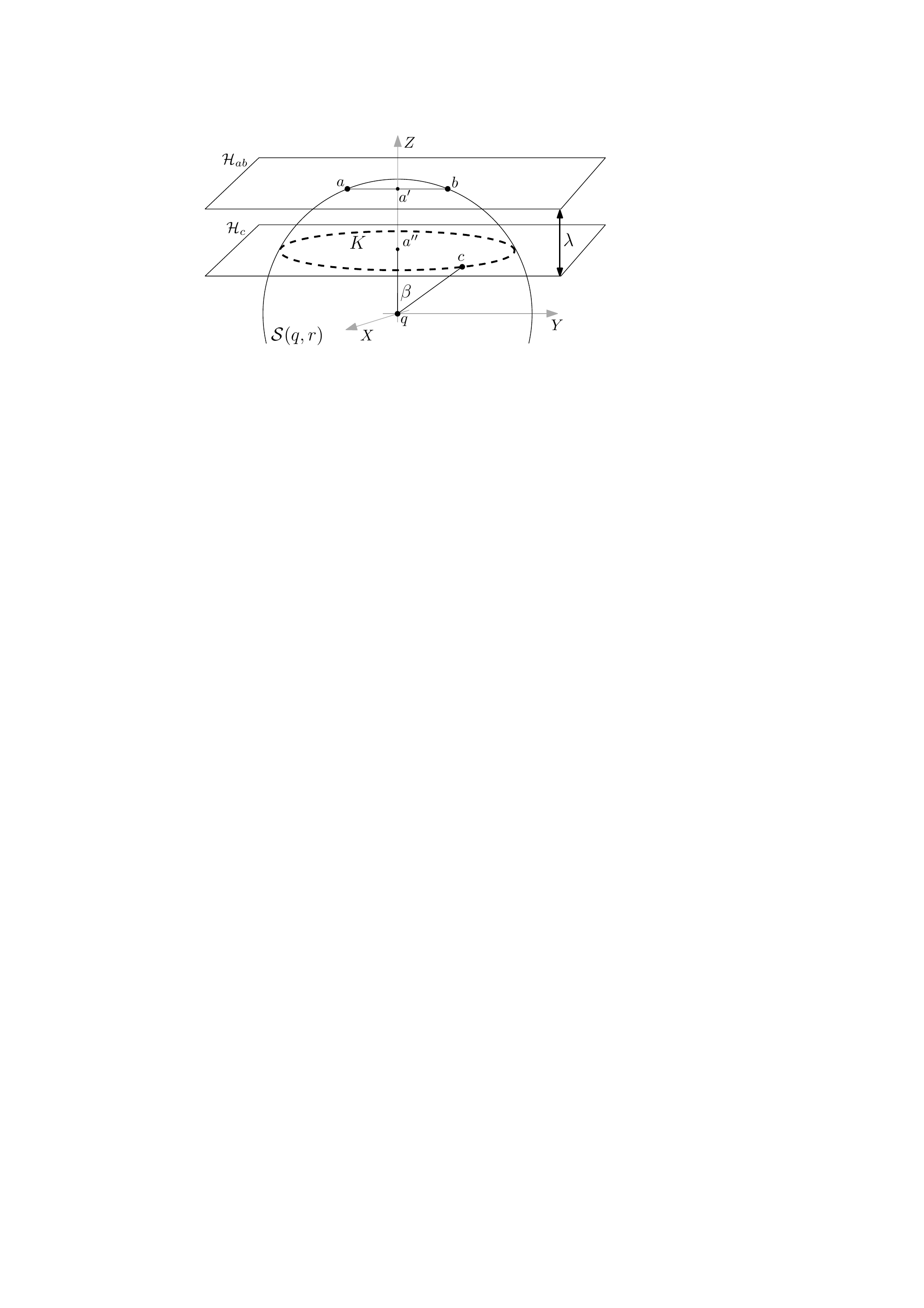}
	\caption{By the hypotheses, $t$ is bounded by $\mathcal{H}_{c}$ and $\mathcal{H}_{ab}$}
	\label{fig:t_between_planes}
\end{figure}
Let $a''$ be the intersection point of $\mathcal{H}_{c}$ with the $Z$ axis.
Denote by $K$ the circle defined by the intersection of $\mathcal{S}$
and $\mathcal{H}_{c}$.
Note that $a''$ is the center of $K$,
and that $c$ is on the boundary of $K$.
Finally,
let $\lambda$ be the distance between $\mathcal{H}_{c}$ and $\mathcal{H}_{ab}$.

Since $t$ lies between $\mathcal{H}_{c}$ and $\mathcal{H}_{ab}$,
$r(t) \leq \frac{1}{2}\lambda$.
Furthermore,
between the two edges connecting $ab$ and $c$,
there is at least one edge of length greater than $a''c$.
Without loss of generality,
let $ac$ be this edge.
Therefore,
$R(t) \geq \frac{1}{2}|ac|$
by Observation~\ref{obs:ins_circ_phere}-\ref{obs:circumsphere_bound}.
Thus,
since $t$ is well-shaped,
\begin{equation}
	\rho > \frac{R(t)}{r(t)} \ge \frac{|ac|}{\lambda} \ge \frac{|a''c|}{\lambda} \enspace.
\end{equation}

We also have
$|a''c| = r\sin{\beta}$ 
and $\lambda = |a''a'| \leq (1 - \cos{\beta})r$,
where $\beta = \angle a''qc$. 
Thus,
$\rho > \frac{|a''c|}{\lambda} \geq \frac{\sin{\beta}}{1 - \cos{\beta}} = \cot\left(\frac{1}{2}\beta\right)$,
which implies $\beta > 2\,\arccot(\rho)$.
Hence,
$|ac| \ge |a''c| > r\sin(2\,\arccot(\rho)) = \frac{2\rho}{\rho^2+1}r$,
from which
$|ab| \geq \frac{2}{\rho^2+1}r$ 
by Observation~\ref{obs:tetrahedron_adjacent}-\ref{obs:adjacent_edge3D}.

\item\label{case above}
Denote by $\pi_{qad}$ the plane defined by $q$, $a$, and $d$.
Relocate $\mathcal{S}$ and $t$ such that
\begin{itemize}
\item $q$ is on the origin of the system of axes,

\item $\pi_{qad}$ corresponds to the $YZ$ plane, and

\item $ad$ intersects perpendicularly the positive $Z$ axis.
\end{itemize}
Denote by $\mathcal{H}_{ad}$ the plane containing $ad$ that is parallel to the $XY$ plane.
We claim that $b$ is below $\mathcal{H}_{ad}$,
and that $c$ is not above $\mathcal{H}_{ad}$.
Therefore,
we can apply the proof of Case (1).
Consequently,
if $b$ or $c$ is below the $XY$ plane,
there is an edge $e$ of $t$ such that $|e|\geq r$.
Thus, $|ab| \geq \frac{1}{\rho}r > \frac{2}{\rho^2+1}r$ 
by Observation~\ref{obs:tetrahedron_adjacent}-\ref{obs:adjacent_edge3D}.
If $b$ and $c$ are both inclusively between the $XY$ plane and $\mathcal{H}_{ad}$,
there is an edge $e$ of $t$ with $|e|\geq\frac{2\rho}{\rho^2+1}r$,
from which
$|ab| \geq \frac{2}{\rho^2+1}r$ 
by Observation~\ref{obs:tetrahedron_adjacent}-\ref{obs:adjacent_edge3D}.
\end{enumerate}

In order to complete this proof, one detail remains.
We must prove our claim that 
$b$ is below $\mathcal{H}_{ad}$,
and that $c$ is not above $\mathcal{H}_{ad}$.
To prove our claim,
we stay with the orientation
of the previous case
(see Figure~\ref{fig:t_between_planes}).
We suppose without loss of generality that $r=1$.
Therefore,
the coordinates $(X,Y,Z)$ of $a$, $b$, $c$ and $d$ are
\begin{eqnarray*}
a&=&\left(0,-s,\sqrt{1-s^2}\right) \enspace,\\
b&=&\left(0,s,\sqrt{1-s^2}\right) \enspace,\\
c&=&\left(u,v,\sqrt{1-u^2-v^2}\right) \enspace,\\
d&=&\left(m,n,\sqrt{1-m^2-n^2}\right) \enspace,
\end{eqnarray*}
where
\begin{eqnarray}
\label{eqn s positive}
&&\phantom{-}0 \leq s <  1 \enspace,\\
\label{eqn u positive}
&&\phantom{-}0\leq u \leq 1 \qquad\textrm{since the $X$ coordinate of $c$ is non-negative,}\\
\label{eqn range for v}
&&-1\leq v \leq 1 \enspace,\\
\label{eqn m negative}
&&-1\leq m \leq 0 \qquad\textrm{since the $X$ coordinate of $d$ is non-positive,}\\
\label{eqn range for n}
&&-1\leq n \leq 1 \enspace.
\end{eqnarray}
Since $c$ is below $\mathcal{H}_{ab}$,
then
\begin{eqnarray}
\label{eqn c below Hab}
\sqrt{1-u^2-v^2} &<& \sqrt{1-s^2} \enspace,\\
\label{eqn s^2 < u^2+v^2}
s^2 &<& u^2+v^2 \enspace.
\end{eqnarray}
Since $d$ is above $\mathcal{H}_{ab}$,
then
\begin{eqnarray}
\label{eqn d above Hab}
\sqrt{1-s^2} &<& \sqrt{1-m^2-n^2} \enspace,\\
m^2+n^2 &<& s^2 \enspace,\\
\pm\,m &<& s \enspace,\\
\label{eqn s bigger than n}
\pm\,n &<& s \enspace.
\end{eqnarray}

The vector from $q$ to the midpoint of $ad$ is
$$\left( \frac{1}{2}m,\frac{1}{2}(n-s),\frac{1}{2}\left(\sqrt{1-m^2-n^2}+\sqrt{1-s^2}\right) \right) \enspace,$$
hence the equation of $\mathcal{H}_{ad}$ is
$$\mathcal{H}_{ad}:f(x,y)=\frac{-mx-(n-s)y-sn+\sqrt{1-m^2-n^2}\sqrt{1-s^2}+1}{\sqrt{1-m^2-n^2}+\sqrt{1-s^2}} \enspace.$$
In order to prove our claim,
we need to show that $\sqrt{1-s^2} < f(0,s)$,
and $\sqrt{1-u^2-v^2} \leq f(u,v)$.

We start with the first inequality.
\begin{eqnarray*}
0 &<& s-n \qquad\textrm{by (\ref{eqn s bigger than n})}\\
0 &\leq& 2(s-n)s \qquad\textrm{by (\ref{eqn s positive})}\\
-s^2 &<& -2sn+s^2 \\
\sqrt{1-m^2-n^2}\sqrt{1-s^2}+(1-s^2) &<& -2sn+s^2+\sqrt{1-m^2-n^2}\sqrt{1-s^2}+1 \\
\sqrt{1-s^2} &<& \frac{-2sn+s^2+\sqrt{1-m^2-n^2}\sqrt{1-s^2}+1}{\sqrt{1-m^2-n^2}+\sqrt{1-s^2}} \\
\sqrt{1-s^2} &<& f(0,s)
\end{eqnarray*}

In order to prove the second inequality,
we distinguish between two cases:
according to (\ref{eqn range for v}),
either (a) $0 \leq v \leq 1$
or (b) $-1\leq v < 0$.
\begin{enumerate}
\item\label{case v positive}
Suppose
\begin{eqnarray}
\label{eqn v positive}
0 \leq v \leq 1 \enspace.
\end{eqnarray}
We want to prove that $\sqrt{1-u^2-v^2} < f(u,v)$.
By (\ref{eqn c below Hab}),
it is sufficient to prove that $\sqrt{1-s^2} < f(u,v)$.
Since
\begin{eqnarray*}
&&\sqrt{1-s^2} < f(u,v) \enspace,\\
&\Leftrightarrow& \sqrt{1-s^2} < \frac{-mu-(n-s)v-sn+\sqrt{1-m^2-n^2}\sqrt{1-s^2}+1}{\sqrt{1-m^2-n^2}+\sqrt{1-s^2}} \enspace,\\
&\Leftrightarrow& \sqrt{1-m^2-n^2}\sqrt{1-s^2}+(1-s^2) < -mu-(n-s)v-sn+\sqrt{1-m^2-n^2}\sqrt{1-s^2}+1 \enspace,\\
&\Leftrightarrow& 0< s^2-mu-nv+sv-sn \enspace,
\end{eqnarray*}
we will prove that $s^2-mu-nv+sv-sn$ is positive.

We have
\begin{eqnarray*}
0 &<& s-n \qquad\textrm{by (\ref{eqn s bigger than n}),}\\
0 &<& (s+v)(s-n) \qquad\textrm{by (\ref{eqn s positive}) and (\ref{eqn v positive}),}\\
0 &<& s^2-nv+sv-sn \enspace,\\
0 &<& s^2-mu-nv+sv-sn \qquad\textrm{by (\ref{eqn m negative}) and (\ref{eqn u positive}).}
\end{eqnarray*}

\item Suppose
\begin{eqnarray}
\label{eqn v negative}
-1\leq v < 0 \enspace.
\end{eqnarray}
We need to distinguish between two subcases:
according to (\ref{eqn v negative}),
either (i) $-s \leq v < 0$
or (ii) $-1 \leq v < -s$.
\begin{enumerate}
\item Suppose
\begin{eqnarray}
\label{eqn subrange for v}
-s \leq v < 0 \enspace.
\end{eqnarray}
Therefore,
by (\ref{eqn s^2 < u^2+v^2}) and (\ref{eqn u positive}),
\begin{eqnarray}
\label{eqn subrange for u}
\sqrt{s^2-v^2} < u \leq 1 \enspace.
\end{eqnarray}

As we explained at the beginning of Case~\ref{case v positive},
it is sufficient to prove that $s^2-mu-nv+sv-sn$ is positive.
By (\ref{eqn m negative}) and (\ref{eqn subrange for u}),
$s^2-mu-nv+sv-sn > s^2-m\sqrt{s^2-v^2}-nv+sv-sn$.
We will prove that
$$\min_{-s \leq v \leq 0} \phi(v) \geq 0 \enspace,$$
where $\phi(v) = s^2-m\sqrt{s^2-v^2}-nv+sv-sn$,
which completes the proof of this subcase.

If $v=-s$,
then $\phi(v) = 0$.
If $v=0$,
then
\begin{eqnarray*}
\phi(v) &=& s^2-ms-sn \qquad\textrm{by (\ref{eqn s positive}),}\\
&=& -ms+s(s-n) \\
&\geq& 0 \qquad\textrm{by (\ref{eqn s positive}), (\ref{eqn m negative}) and (\ref{eqn s bigger than n}).}
\end{eqnarray*}
Moreover,
\begin{eqnarray*}
\frac{d}{dv}\phi(v) &=& 0 \enspace,\\
\frac{mv}{\sqrt{s^2-v^2}}-n+s &=& 0 \enspace,\\
v &=& \pm \frac{s(s-n)}{\sqrt{m^2+(s-n)^2}} \enspace.
\end{eqnarray*}
We reject
$$\frac{s(s-n)}{\sqrt{m^2+(s-n)^2}}$$
by (\ref{eqn subrange for v}),
(\ref{eqn s positive}), and (\ref{eqn s bigger than n}).
It remains to prove that 
$$\phi\left(-\frac{s(s-n)}{\sqrt{m^2+(s-n)^2}}\right) \geq 0 \enspace.$$
\begin{eqnarray}
\nonumber
&& \phi\left(-\frac{s(s-n)}{\sqrt{m^2+(s-n)^2}}\right) \geq 0\\
\nonumber
&\Leftrightarrow& \frac{s\left((s-n)\sqrt{m^2+(s-n)^2}+m^2-(s-n)^2\right)}{\sqrt{m^2+(s-n)^2}} \geq 0 \qquad\textrm{by (\ref{eqn s positive}) and (\ref{eqn m negative}),}\\
\nonumber
&\Leftrightarrow& (s-n)\sqrt{m^2+(s-n)^2}+m^2-(s-n)^2 \geq 0 \qquad\textrm{by (\ref{eqn s positive}),}\\
\label{inequality to prove}
&\Leftrightarrow& (s-n)\sqrt{m^2+(s-n)^2}+m^2 \geq (s-n)^2 \enspace.
\end{eqnarray}
If $-m \geq s-n$,
then $m^2 \geq (s-n)^2$
by (\ref{eqn m negative}) and (\ref{eqn s bigger than n});
hence, (\ref{inequality to prove}) is true by (\ref{eqn s bigger than n}).

Otherwise,
if $s-n > -m$,
then
\begin{eqnarray*}
s-n &>& -m \enspace,\\
\sqrt{3}(s-n) &>& -m \qquad\textrm{by (\ref{eqn s bigger than n}),}\\
3(s-n)^2 &>& m^2 \qquad\textrm{by (\ref{eqn m negative}) and (\ref{eqn s bigger than n}),}\\
3(s-n)^2m^2 &>& m^4 \enspace,\\
(s-n)^2m^2+(s-n)^4 &>& (s-n)^4-2(s-n)^2m^2+m^4 \enspace,\\
(s-n)^2\left(m^2+(s-n)^2\right) &>& \left((s-n)^2-m^2\right)^2 \enspace,\\
(s-n)\sqrt{m^2+(s-n)^2} &>& (s-n)^2-m^2 \qquad\textrm{because $s-n > -m$,}\\
\phantom{(s-n)\sqrt{m^2+(s-n)^2}} &\phantom{>}& \phantom{(s-n)^2-m^2} \qquad\textrm{and by (\ref{eqn m negative}) and (\ref{eqn s bigger than n}),}\\
(s-n)\sqrt{m^2+(s-n)^2} + m^2&>& (s-n)^2 \enspace,\\
\end{eqnarray*}
so (\ref{inequality to prove}) is true.

\item Suppose
\begin{eqnarray}
\label{eqn second subrange for v}
-1 \leq v < -s \enspace.
\end{eqnarray}
We want to prove that $\sqrt{1-u^2-v^2} < f(u,v)$.
Since
\begin{eqnarray*}
&&\sqrt{1-u^2-v^2} < f(u,v) \enspace,\\
&\Leftrightarrow& \sqrt{1-u^2-v^2} < \frac{-mu-(n-s)v-sn+\sqrt{1-m^2-n^2}\sqrt{1-s^2}+1}{\sqrt{1-m^2-n^2}+\sqrt{1-s^2}} \enspace,\\
&\Leftrightarrow& \phantom{<} \sqrt{1-m^2-n^2}\sqrt{1-u^2-v^2}+\sqrt{1-s^2}\sqrt{1-u^2-v^2}\\
&& < -mu-(n-s)v-sn+\sqrt{1-m^2-n^2}\sqrt{1-s^2}+1 \enspace,\\
&\Leftrightarrow& -mu-(n-s)v-sn+\sqrt{1-m^2-n^2}\sqrt{1-s^2}+1 \\
&& -\sqrt{1-m^2-n^2}\sqrt{1-u^2-v^2}-\sqrt{1-s^2}\sqrt{1-u^2-v^2} > 0 \enspace,
\end{eqnarray*}
we will work on this last inequality.

By (\ref{eqn m negative}) and (\ref{eqn u positive}),
$-mu-(n-s)v-sn+\sqrt{1-m^2-n^2}\sqrt{1-s^2}+1-\sqrt{1-m^2-n^2}\sqrt{1-u^2-v^2}-\sqrt{1-s^2}\sqrt{1-u^2-v^2}
\geq -(n-s)v-sn+\sqrt{1-m^2-n^2}\sqrt{1-s^2}+1-\sqrt{1-m^2-n^2}\sqrt{1-v^2}-\sqrt{1-s^2}\sqrt{1-v^2}$.
We will prove that
\begin{eqnarray}
\label{eqn min to prove}
\min_{-1 \leq v \leq -s} \phi(v) \geq 0 \enspace,
\end{eqnarray}
where $\phi(v) = -(n-s)v-sn+\sqrt{1-m^2-n^2}\sqrt{1-s^2}+1-\sqrt{1-m^2-n^2}\sqrt{1-v^2}-\sqrt{1-s^2}\sqrt{1-v^2}$,
which completes the proof of this subcase.

If $v=-1$, then 
\begin{eqnarray*}
\phi(v) &=& n-s-sn+\sqrt{1-m^2-n^2}\sqrt{1-s^2}+1 \\
&=& (1-s)(n+1)+\sqrt{1-m^2-n^2}\sqrt{1-s^2} \\
& \geq & 0 \qquad\textrm{by (\ref{eqn s positive}) and (\ref{eqn range for n}).}
\end{eqnarray*}
If $v=-s$,
then $\phi(v)=0$.
Moreover,
\begin{eqnarray*}
\frac{d}{dv}\phi(v) &=& 0 \enspace,\\
\frac{\left(\sqrt{1-m^2-n^2}+\sqrt{1-s^2}\right)v}{\sqrt{1-v^2}}-n+s &=& 0 \enspace,\\
v &=& \pm \frac{s-n}{\sqrt{(s-n)^2+\left(\sqrt{1-m^2-n^2}+\sqrt{1-s^2}\right)^2}} \enspace.
\end{eqnarray*}
We reject
$$\frac{s-n}{\sqrt{(s-n)^2+\left(\sqrt{1-m^2-n^2}+\sqrt{1-s^2}\right)^2}}$$
by (\ref{eqn second subrange for v}) and (\ref{eqn s bigger than n}).
As for
$$-\frac{s-n}{\sqrt{(s-n)^2+\left(\sqrt{1-m^2-n^2}+\sqrt{1-s^2}\right)^2}} \enspace,$$
if it is not between $-1$ and $-s$,
then it contradicts (\ref{eqn second subrange for v}).
Consequently,
(\ref{eqn min to prove}) is true
since $\phi(-1)\geq 0$,
$\phi(-s) \geq 0$,
and there is no local minimum between $-1$ and $-s$.

For the rest of the proof,
suppose that
$$-1 \leq -\frac{s-n}{\sqrt{(s-n)^2+\left(\sqrt{1-m^2-n^2}+\sqrt{1-s^2}\right)^2}} < -s \enspace.$$
Therefore,
\begin{eqnarray}
\label{inequality useful}
-\sqrt{(s-n)^2+\left(\sqrt{1-m^2-n^2}+\sqrt{1-s^2}\right)^2} &>& -1+\frac{n}{s}\qquad\qquad
\end{eqnarray}
by (\ref{eqn s positive}).
Hence,
\begin{eqnarray*}
&& \phi\left(-\frac{s-n}{\sqrt{(s-n)^2+\left(\sqrt{1-m^2-n^2}+\sqrt{1-s^2}\right)^2}}\right) \\
&=& 1-sn+\sqrt{1-m^2-n^2}\sqrt{1-s^2}-\sqrt{(s-n)^2+\left(\sqrt{1-m^2-n^2}+\sqrt{1-s^2}\right)^2} \\
&>& 1-sn+\sqrt{1-m^2-n^2}\sqrt{1-s^2}-1+\frac{n}{s} \qquad\textrm{by (\ref{inequality useful}),} \\
&=& n\left(\frac{1}{s}-s\right)+\sqrt{1-m^2-n^2}\sqrt{1-s^2} \\
&\geq& -s\left(\frac{1}{s}-s\right)+\sqrt{1-m^2-n^2}\sqrt{1-s^2} \qquad \textrm{by (\ref{eqn s bigger than n}) and (\ref{eqn s positive}),}\\
&=& -1+s^2+\sqrt{1-m^2-n^2}\sqrt{1-s^2} \\
&>& -1+s^2+\sqrt{1-s^2}\sqrt{1-s^2} \qquad\textrm{by (\ref{eqn d above Hab})}\\
&=& 0 \enspace. \qedhere
\end{eqnarray*}
\end{enumerate}
\end{enumerate}

\end{proof}

\begin{lemma}\label{lem:face_abc_has_long_edge}
Let $f = \triangle abc$ be a face in a well-shaped tetrahedral mesh $\meshThreeD$, and 
$\mathcal{S}(q,r)$ a sphere
intersecting $\meshThreeD$ but containing no vertex 
of $\meshThreeD$ in its interior. 
If $f$ intersects $\mathcal{S}$,
but no edge of $f$ intersects $\mathcal{S}$,
then $f$ has an edge of length at least $\frac{2}{\rho^2+1}r$.
\end{lemma}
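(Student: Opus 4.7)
The plan is to mirror the proof of Lemma~\ref{lem:ab_is_long} as closely as possible, transporting that argument from an edge crossing a sphere to a face containing an interior disk of the sphere. I will first establish the geometric setup, then reduce to an extremal configuration, and finally carry out the algebraic computation that yields the bound $\frac{2}{\rho^2+1}r$.

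First, I would observe that the intersection $K := \mathcal{S}\cap\pi_f$ of the sphere with the plane $\pi_f$ of $f$ is a circle of radius $r_K=\sqrt{r^2-d^2}$ (where $d$ is the distance from $q$ to $\pi_f$), centered at the projection $q'$ of $q$ onto $\pi_f$. Under the two hypotheses---no edge of $f$ meets $\mathcal{S}$ but $f$ does---the circle $K$ cannot cross the boundary of $f$; and since the vertices $a,b,c$ lie outside $\mathcal{S}$ and hence at distance at least $r_K$ from $q'$ in $\pi_f$, the circle $K$ cannot strictly contain $f$ either. Thus the closed disk bounded by $K$ lies inside $f$, and each line supporting an edge of $f$ lies at distance at least $r_K$ from $q'$.

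Next, I would place coordinates so that $\pi_f = \{z=0\}$, $q'=(0,0,0)$, and $q=(0,0,d)$ with $d\geq 0$, and consider the tetrahedron $t = abcv\in\meshThreeD$ with face $f$ and apex $v$ on the side of $\pi_f$ containing $q$; note that $|qv|\geq r$. To find a lower bound on the longest edge of $f$, I would apply a limiting/extremal argument analogous to the one used in Lemma~\ref{lem:ab_is_long}: by Observation~\ref{obs:tetrahedron_adjacent}-\ref{obs:adjacent_edge3D}, shrinking the edges of $t$ shrinks the edges of $f$, so I may assume $v$ lies on $\mathcal{S}$ (so $|qv|=r$) and---by symmetry and a case distinction mirroring Cases~(1) and~(2) of Lemma~\ref{lem:ab_is_long}---that $v$ lies on the positive $z$-axis, directly above $q$. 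In this symmetric configuration, the altitude from $v$ to $f$ is exactly $v_z = d+r$, and by Observation~\ref{obs:ins_circ_phere}-\ref{obs:insphere_bound} we have $2r(t)\leq v_z = d+r$, while by Observation~\ref{obs:ins_circ_phere}-\ref{obs:circumsphere_bound}, $2R(t)\geq|e|$ for every edge $e$ of $t$.

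The hard part will be to combine these inequalities with the geometric constraint that the disk of radius $r_K$ sits inside $f$ and to produce the precise constant $\frac{2}{\rho^2+1}$. Paralleling Case~(1) of Lemma~\ref{lem:ab_is_long}, I expect the well-shapedness bound $R(t)/r(t)<\rho$ to force an angular inequality of the form $\rho\geq\cot(\beta/2)$, where $\beta$ measures how far the extremal face extends from $q'$; solving this gives $\beta>2\arccot\rho$ and, via the identity $\sin(2\arccot\rho)=\frac{2\rho}{\rho^2+1}$, a length bound $\frac{2\rho}{\rho^2+1}r$ on an auxiliary edge (an edge of $t$ incident to $v$). Finally, transferring this bound to an edge of $f$ by the factor-$\frac{1}{\rho}$ step from Observation~\ref{obs:tetrahedron_adjacent}-\ref{obs:adjacent_edge3D} (exactly as in the last line of Case~(1) of Lemma~\ref{lem:ab_is_long}) yields the claimed bound $\frac{2}{\rho^2+1}r$. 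The principal obstacle will be verifying, through the degenerate sub-cases (the analogues of Cases~(1) and~(2) in Lemma~\ref{lem:ab_is_long}, corresponding to whether $v$ sits above $q$ or off-axis with its projection inside/outside the disk bounded by $K$), that the symmetric configuration really is the extremal one and that no non-symmetric arrangement yields a strictly shorter longest edge of $f$.
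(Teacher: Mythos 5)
There is a genuine gap, and it sits exactly where you flagged it: the reduction to the symmetric configuration with $v$ on the axis at the top of the sphere is not a valid ``without loss of generality,'' and in fact that configuration is the \emph{easiest} case rather than the extremal one. With $v=(0,0,d+r)$ the slab containing $t$ has thickness $d+r\geq r$, so the bound $2r(t)\leq d+r$ puts no useful constraint on the tetrahedron; worse, every edge from $v$ to a vertex of $f$ then has length at least $\sqrt{(r^2-d^2)+(d+r)^2}\geq\sqrt{2}\,r$, so the conclusion follows trivially there and the configuration cannot be the minimizer. The binding case — the one the paper's proof is actually about — is when the apex sits \emph{off-axis and close to the plane of $f$}, at a small height $\lambda$ above $\mathcal{H}_f$ (in your coordinates, $0<v_z\leq d$, i.e.\ between the plane of $f$ and the horizontal plane through $q$). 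Then $t$ is confined to a slab of thickness $\lambda$, giving $r(t)\leq\lambda/2$, while the apex lies on the circle $\mathcal{S}\cap\{z=v_z\}$ at horizontal distance $r\sin\beta$ from the axis (with $v_z=d-r\cos\beta$), so some edge from the apex into $f$ has length at least $r\sin\beta$ and hence $R(t)\geq\tfrac12 r\sin\beta$. Combining these with $\lambda\leq r(1-\cos\beta)$ and well-shapedness yields $\rho>\sin\beta/(1-\cos\beta)=\cot(\beta/2)$, forcing $\beta>2\arccot\rho$ and thus an edge of $t$ of length at least $r\sin(2\arccot\rho)=\frac{2\rho}{\rho^2+1}r$; the factor $\frac1\rho$ from Observation~\ref{obs:tetrahedron_adjacent}-\ref{obs:adjacent_edge3D} then transfers this to the edges of $f$. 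You correctly guessed the target inequality $\rho>\cot(\beta/2)$, but your setup cannot produce it: in the symmetric configuration $\beta=0$ and the slab bound degenerates.

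So the fix is not to verify extremality of the symmetric case (it fails) but to replace the symmetrization by the two-case analysis at general apex position, exactly as in Lemma~\ref{lem:ab_is_long}: if the apex lies beyond the horizontal plane through $q$ (your $v_z>d$), one of its edges into $f$ already has length at least $r$ and the adjacency observation finishes; otherwise run the slab argument above. Your opening observations (that the disk $\mathcal{S}\cap\pi_f$ lies inside $f$, and that the apex may be assumed to lie on $\mathcal{S}$ by the monotonicity from Observation~\ref{obs:tetrahedron_adjacent}) are correct and are the right preliminaries for that argument.
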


\begin{proof}
We use a strategy similar to the one for proof of Lemma~\ref{lem:ab_is_long}.
Locate $S$ and $f$ in the usual system of axes $XYZ$ in such a way that
\begin{itemize}
\item $q$ is on the origin of the system of axes,

\item $f$ is parallel to the $XY$ plane, and

\item $f$ intersects perpendicularly the positive $Z$ axis.
\end{itemize}
Let $\mathcal{H}_{f}$ be the  plane containing $f$.
Let $t=abcd$ be the tetrahedron adjacent to $f$ such that $d$ is below $\mathcal{H}_{f}$.
By Observation~\ref{obs:tetrahedron_adjacent}-\ref{obs:adjacent_edge3D},
the smaller the edges of $t$,
the smaller the edges of $f$.
Therefore,
as we want to minimize $ab$, $ac$, and $bc$,
we assume that $d$ is on the surface of $\mathcal{S}$.
By the hypotheses,
if $d$ is below the $XY$ plane,
then $|ad|\geq r$, $|bd|\geq r$, or $|cd|\geq r$.
Thus, $|ab| \geq \frac{1}{\rho}r$ 
by Observation~\ref{obs:tetrahedron_adjacent}-\ref{obs:adjacent_edge3D},
from which $|ab| \geq \frac{1}{\rho}r > \frac{2}{\rho^2+1}r$
because $\rho \geq 3 > 1$.

For the rest of the proof,
suppose $d$ is below $\mathcal{H}_{f}$ but not below the $XY$ plane.
Let $\mathcal{H}_{d}$ be the plane containing $d$ that is parallel to the $XY$ plane.
Denote by $a'$ the intersection point of $f$ with the positive $Z$ axis
(refer to Fig.~\ref{fig:t_bound_tangent_face}).

\begin{figure}
	\centering
	\includegraphics[scale=1.00]{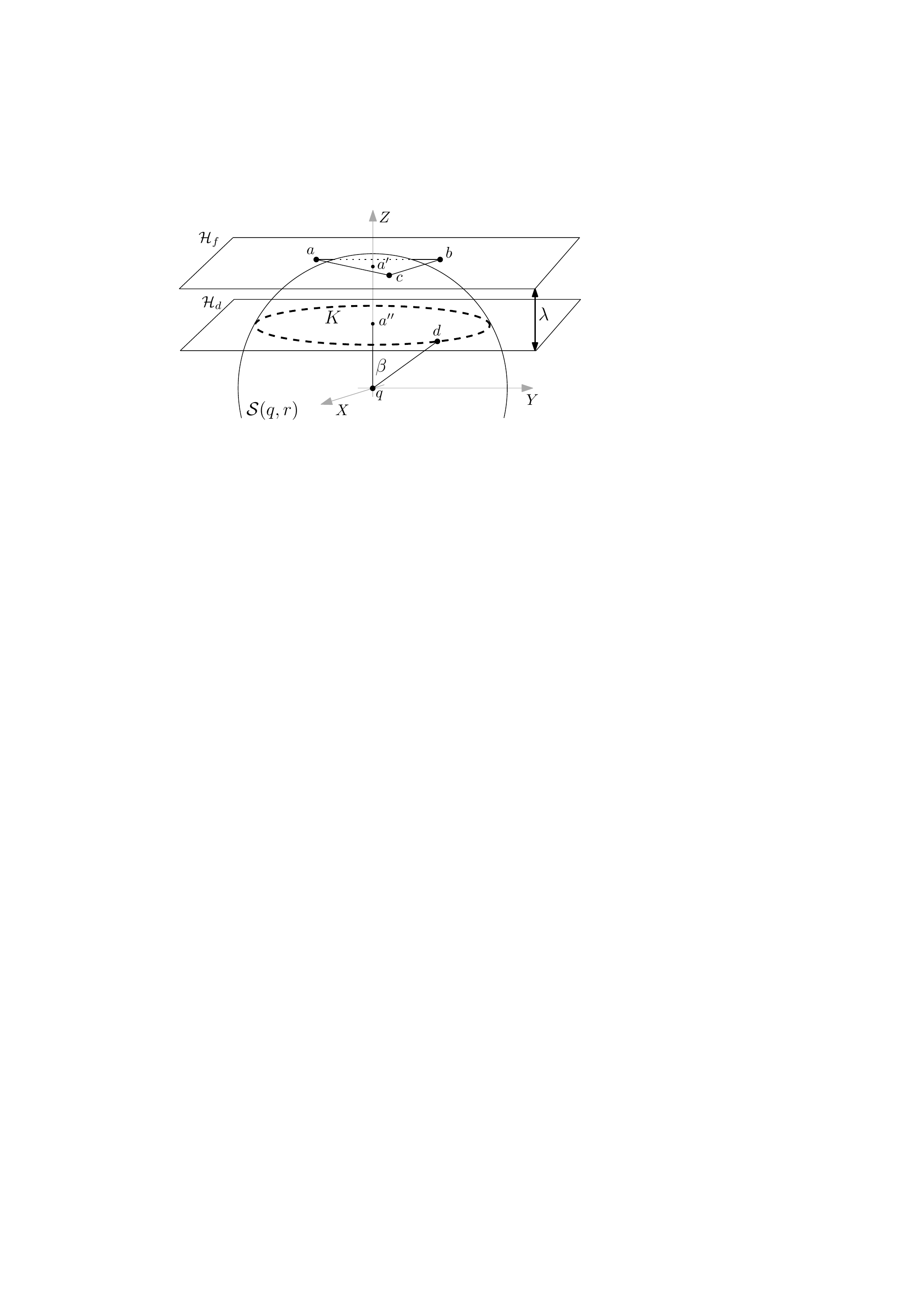}
	\caption{
	  By the hypotheses, $t$ is bounded by $\mathcal{H}_{d}$ and 
	  $\mathcal{H}_{f}$.
	}
	\label{fig:t_bound_tangent_face}
\end{figure}

Let $a''$ be the intersection point of $\mathcal{H}_{d}$ with the $Z$ axis.
Denote by $K$ the circle defined by the intersection of $\mathcal{S}$
and $\mathcal{H}_{d}$.
Note that $a''$ is the center of $K$,
and that $d$ is on the boundary of $K$.
Finally,
let $\lambda$ be the distance between $\mathcal{H}_{d}$ and $\mathcal{H}_{f}$.

Since $t$ lies between $\mathcal{H}_{d}$ and $\mathcal{H}_{f}$,
$r(t) \leq \frac{1}{2}\lambda$.
Furthermore,
from among the three edges connecting $f$ and $d$,
there is at least one edge of length greater than $a''d$.
Without loss of generality,
let $ad$ be this edge.
Therefore,
$R(t) \geq \frac{1}{2}|ad|$
by Observation~\ref{obs:ins_circ_phere}-\ref{obs:circumsphere_bound}.
Thus,
since $t$ is well-shaped,
\begin{equation}
	\rho > \frac{R(t)}{r(t)} \ge \frac{|ad|}{\lambda} \ge \frac{|a''d|}{\lambda} \enspace.
\end{equation}
We also have
$|a''d| = r\sin{\beta}$ 
and $\lambda = |a'a''| \leq (1 - \cos{\beta})r$,
where $\beta = \angle a''qd$. 
Thus,
$\rho > \frac{|a''d|}{\lambda} \geq \frac{\sin{\beta}}{1 - \cos{\beta}} = \cot\left(\frac{1}{2}\beta\right)$,
which implies $\beta > 2\,\arccot(\rho)$.
Hence,
$|ad| \ge |a''d| > r \sin(2\,\arccot(\rho)) = \frac{2\rho}{\rho^2+1}r$,
from which
$|ab| \geq \frac{2}{\rho^2+1}r$ 
by Observation~\ref{obs:tetrahedron_adjacent}-\ref{obs:adjacent_edge3D}.
\end{proof}

We combine the previous two lemmas in the following corollary
that stands as the three-dimensional version of Lemma~\ref{lem:tri_interC_have_long_edge}.

\begin{corollary}\label{cor:3d-tetra}
Let $t$ be a tetrahedron in a well-shaped tetrahedral mesh $\meshThreeD$,
and let $\mathcal{S}(q,r)$ be an empty sphere wholly contained in $\meshThreeD$
but not containing any vertex of $\meshThreeD$.
If $t$ intersects $\mathcal{S}$,
then $t$ has an edge of length at least $\frac{2}{\rho^2+1}r$.
\end{corollary}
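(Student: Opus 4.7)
The plan is to prove the corollary by a straightforward case analysis on the topological nature of the intersection $\mathcal{S} \cap t$. Since $\mathcal{S}$ contains no vertex of $\meshThreeD$ (and hence no vertex of $t$), and $\mathcal{S} \cap t \neq \emptyset$, exactly one of the following three situations holds: (a) $\mathcal{S}$ is entirely contained in the interior of $t$; (b) $\mathcal{S}$ intersects the topological boundary $\partial t$ of $t$ at some point, and some edge of $t$ intersects $\mathcal{S}$; or (c) $\mathcal{S}$ intersects $\partial t$, but no edge of $t$ meets $\mathcal{S}$.

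First, I would dispatch case (a). If $\mathcal{S} \subseteq t$, then $\mathcal{S}$ is an inscribed ball of $t$, hence $r(t) \geq r$. Let $e = vw$ be any edge of $t$. Since $w$ lies on the face $f_v$ opposite $v$, we have $|e| = |vw| \geq \mathtt{mdist}(v,t) \geq 2r(t) \geq 2r$ by Observation~\ref{obs:ins_circ_phere}-\ref{obs:insphere_bound}. Because $\rho \geq 3$, we have $2 \geq \frac{2}{\rho^2+1}$, so every edge of $t$ has length at least $\frac{2}{\rho^2+1}\,r$.

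Cases (b) and (c) are handled directly by the preceding two lemmas, which were designed for precisely this purpose and are stated under exactly the hypothesis of the corollary (a sphere intersecting the mesh but containing no mesh vertex). In case (b), some edge $ab$ of $t$ intersects $\mathcal{S}$, and Lemma~\ref{lem:ab_is_long} immediately yields $|ab| \geq \frac{2}{\rho^2+1}\,r$. In case (c), the intersection $\mathcal{S} \cap t$ must occur through some face $f$ of $t$ whose three bounding edges are all disjoint from $\mathcal{S}$, so Lemma~\ref{lem:face_abc_has_long_edge} applies and produces an edge of $f$ (hence of $t$) of length at least $\frac{2}{\rho^2+1}\,r$.

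The main conceptual point is simply to verify that the three cases above are exhaustive and that case (c) is indeed covered by Lemma~\ref{lem:face_abc_has_long_edge}: if $\mathcal{S}$ meets $t$ but is not contained in $t$, then $\mathcal{S}$ must cross $\partial t$, and $\partial t$ is the union of four closed triangular faces. If none of their bounding edges meets $\mathcal{S}$, then the crossing must occur through the relative interior of some face $f$, which is precisely the hypothesis of Lemma~\ref{lem:face_abc_has_long_edge}. The only step requiring any real care is confirming that no additional degenerate configuration (e.g., $\mathcal{S}$ tangent to $\partial t$ at a single boundary point but otherwise disjoint) can escape the analysis; since $\mathcal{S}$ intersects $t$ in at least one point and has positive radius, any such tangential contact still lies in the closure of some face, and by slightly shrinking or considering the closed intersection, falls into either case (b) or (c). Thus no obstacle of real substance arises, and the corollary follows by combining the three subcases.
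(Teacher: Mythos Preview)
Your proof is correct and follows essentially the same approach as the paper, which simply presents the corollary as the combination of Lemmas~\ref{lem:ab_is_long} and~\ref{lem:face_abc_has_long_edge} without writing out any argument. Your explicit case~(a), where $\mathcal{S}$ lies entirely inside $t$, is a completeness detail the paper leaves implicit; otherwise the two treatments coincide.
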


Consider the walk from $\hat{p}$ to $q$ in $\meshThreeD$.
It intersects the boundary of $\mathcal{S}(q,|pq|)$ at a point $x$.
Let $t_i$ be the last tetrahedron
we encounter in the walk from $\hat{p}$ to $q$
that contains $x$.
We can now apply the same approach as we used in $\meshTwoD$
(see Subsection~\ref{subsection approximate 2 D})
to show that the number of tetrahedron visited along $\hat{p}q$ is a constant.
Denote the vertices of $t_i$ by $a$, $b$, $c$, and $d$,
where $\triangle abc$ is the face of $t_i$ that is the closest to $q$.
Let $\mathcal{G}$ be the union of all the tetrahedra incident to $a$, $b$, $c$, and $d$.
The following lemma is the three-dimensional version of Lemma~\ref{lem:shortest_escape}.

\begin{lemma}\label{lem:shortest_escape_3d}
Let $x \in t_i$ be the intersection of $\hat{p}q$ with the boundary of $\mathcal{S}(q,|pq|)$.
Let $y\in\mathcal{G}$ be the intersection of the line through $\hat{p}q$,
with the boundary of $\mathcal{G}$
such that $x$ is between $q$ and $y$.
Then 
$$|xy| \ge \frac{1}{\left(\rho^2+1\right)}\left(\frac{1}{\rho}\right)^{\left\lfloor\frac{2\pi}{\Omega}\right\rfloor+3}|pq| \enspace.$$
\end{lemma}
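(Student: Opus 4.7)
The plan is to mirror the proof of Lemma~\ref{lem:shortest_escape} almost step-for-step, substituting the three-dimensional building blocks wherever the two-dimensional argument invoked a planar tool. First I would apply Corollary~\ref{cor:3d-tetra} to the tetrahedron $t_i$: since $p$ is a nearest neighbour of $q$ the ball $\mathcal{S}(q,|pq|)$ contains no vertex of $\meshThreeD$, and $t_i$ intersects this ball (it contains the point $x$ on its boundary), so some edge $e^*$ of $t_i$ has length at least $\frac{2}{\rho^2+1}|pq|$. This replaces the use of Lemma~\ref{lem:tri_interC_have_long_edge} that opens the 2D proof.

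Next I would propagate this length bound from $e^*$ to every edge incident to any of the four vertices $a,b,c,d$, using Observation~\ref{obs:tetrahedron_adjacent}. For the two endpoints of $e^*$, part~(\ref{obs:adjacent_vertex3D}) gives directly a bound of $\frac{2}{\rho^2+1}\left(\tfrac{1}{\rho}\right)^{\lfloor 2\pi/\Omega\rfloor}|pq|$. For the other two vertices of $t_i$, I would first traverse to a suitable edge via part~(\ref{obs:adjacent_edge3D}) (paying one factor of $1/\rho$) and then invoke part~(\ref{obs:adjacent_vertex3D}), yielding one additional power of $1/\rho$ in the bound—this plays the role of the extra $\sin\alpha$ factor that the 2D argument had to introduce for the vertex $c$ opposite the long edge. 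Consequently every edge of the star $\mathcal{G}$ at $a,b,c,d$ is bounded below by a common expression of the form $\frac{2}{\rho^2+1}\left(\tfrac{1}{\rho}\right)^{\lfloor 2\pi/\Omega\rfloor+1}|pq|$.

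I would then place, at each of $a,b,c,d$, an open ball $B(v)$ of radius equal to the above quantity scaled down by one more factor of $1/\rho$ (the analogue of the extra $\sin\alpha$ in the 2D proof that keeps each ball safely inside the star of its vertex). These balls sit inside $\mathcal{G}$ and contain no other vertex of $\meshThreeD$. Since $y$ lies on $\partial\mathcal{G}$ beyond $x$, the segment $xy$ must leave $\mathcal{G}$ through a face of some tetrahedron in $\mathcal{G}$ whose interior has been cleared of vertices by the balls $B(a),\ldots,B(d)$. Projecting $y$ orthogonally onto the bounding face containing $x$ and using the well-shaped property of that face's supporting tetrahedron (so that the angle the line makes with the face is bounded away from zero) costs one final factor of $1/\rho$, combining everything into the stated bound $\frac{1}{\rho^2+1}\left(\tfrac{1}{\rho}\right)^{\lfloor 2\pi/\Omega\rfloor+3}|pq|$, where the factor of $2$ from Corollary~\ref{cor:3d-tetra} is absorbed by the orthogonal projection step.

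The main obstacle, as in the analogous step of the planar proof, will be the last geometric bound: in 2D the worst case was explicit because the exit feature was a line segment, and an orthogonal projection onto that segment gave the bound cleanly. In 3D the exit feature is a triangular face of a tetrahedron, and one must argue that the angle between the line through $\hat{p}q$ and the supporting plane of that face is bounded below by a quantity controlled purely by $\rho$ (so that the orthogonal projection of $y$ onto the face stays inside $\mathcal{G}$ and loses only one controlled factor of $1/\rho$). Handling this cleanly—and verifying that the worst configuration for $|xy|$ really does come from the exit face nearest a vertex $v\in\{a,b,c,d\}$ rather than from some interior face of the star—is where the careful bookkeeping of the exponent $\lfloor 2\pi/\Omega\rfloor+3$ will need to be justified.
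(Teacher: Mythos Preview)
Your overall plan—mirror the 2D proof, use Corollary~\ref{cor:3d-tetra} in place of Lemma~\ref{lem:tri_interC_have_long_edge}, propagate edge lengths via Observation~\ref{obs:tetrahedron_adjacent}, and put balls $B(a),B(b),B(c),B(d)$ at the vertices—matches the paper for the first half of the argument. The gap is in the final step, and it is not the one you flagged.

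In 2D, the worst-case exit of $xy$ from $\mathcal{G}$ is necessarily near a \emph{vertex} of $t_i$: the only way to leave the star of $\{a,b,c\}$ is to cross an edge of an adjacent triangle, and that edge has one endpoint in $\{a,b,c\}$, so the vertex balls alone control the exit. In 3D this is no longer true. The line $\hat{p}q$ can exit $\mathcal{G}$ near the \emph{midpoint} of an edge of $t_i$, say $ad$, at a point far from every vertex ball $B(v)$. The ``thickness'' of $\mathcal{G}$ around the interior of $ad$ is governed not by the vertex balls but by the ring of tetrahedra sharing $ad$, and nothing in your argument bounds that thickness. Your proposed fix—bounding the angle between $\hat{p}q$ and the exit face—does not address this, because even with a good angle the exit face could sit arbitrarily close to $ad$ if you have not controlled the edge constellation.

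The paper fills this gap with a separate construction: for each edge of $t_i$ (e.g.\ $ad$) it bounds the insphere radius of every tetrahedron in the constellation around $ad$ by $|ad|/(2\rho)$, then builds two cones $C_1,C_2$ with apices at $d$ and $a$ tangent to a sphere of that radius, and proves $C_1\cap C_2\subset\Gamma\subset\mathcal{G}$. The final lower bound on $|xy|$ comes from a segment perpendicular to $ad$ that hits both $\partial B(d)$ and $\partial C_2$; it is the interaction of the vertex ball and the edge cone, not a vertex ball alone, that produces the exponent $\lfloor 2\pi/\Omega\rfloor+3$. Without this edge-constellation argument your proof does not close.
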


\begin{proof}
In the proof of Lemma~\ref{lem:shortest_escape},
the lower bound on $|xy|$ was computed by
calculating the shortest exit out of a well-shaped triangle $t$.
This shortest exit is perpendicular to an edge of $t$,
and constrained by the radius of the ball $B(c)$.
The ball $B(c)$ was defined so that it is guaranteed
to fit entirely inside $\mathcal{G}$.

In three dimensions,
we consider the balls $B(a)$, $B(b)$, $B(c)$, and $B(d)$ as in two dimensions,
but we also need to look at the constellation of tetrahedra adjacent to each edge of $t$.
By Observation~\ref{obs:tetrahedron_adjacent}-\ref{obs:adjacent_vertex3D}
and Corollary~\ref{cor:3d-tetra},
the edges incident to $a$, $b$, and $c$ have length of at least
$\frac{2}{\rho^2+1}\left(\frac{1}{\rho}\right)^{\left\lfloor\frac{2\pi}{\Omega}\right\rfloor}|pq|$.
We now calculate a lower bound on the radius of $B(a)$
(the reasoning is identical for $B(b)$ and $B(c)$).
Let $t'=a'b'c'd'$ be a tetrahedron
with $|a'b'|=|a'c'|=|a'd'|=\frac{2}{\rho^2+1}\left(\frac{1}{\rho}\right)^{\left\lfloor\frac{2\pi}{\Omega}\right\rfloor}|pq|$.
A lower bound for the radius of $B(a)$ can be obtained by computing the height $h$ of $t'$ relative to $a'$.
We have
$$\rho > \frac{R(t')}{r(t')} \geq \frac{\frac{1}{2}\frac{2}{\rho^2+1}\left(\frac{1}{\rho}\right)^{\left\lfloor\frac{2\pi}{\Omega}\right\rfloor}|pq|}{\frac{1}{2}h} $$
by Observation~\ref{obs:ins_circ_phere},
from which
$h > \frac{2}{\rho^2+1}\left(\frac{1}{\rho}\right)^{\left\lfloor\frac{2\pi}{\Omega}\right\rfloor+1}|pq|$.
Hence,
$B(a)$, $B(b)$, and $B(c)$ have radius 
\begin{eqnarray}
\label{eqn radius B(a)}
\frac{2}{\rho^2+1}\left(\frac{1}{\rho}\right)^{\left\lfloor\frac{2\pi}{\Omega}\right\rfloor+1}|pq| \enspace.
\end{eqnarray}
Since $ad$ is adjacent to $ab$,
then $B(d)$ has radius
\begin{eqnarray}
\label{eqn radius B(d)}
\frac{2}{\rho^2+1}\left(\frac{1}{\rho}\right)^{\left\lfloor\frac{2\pi}{\Omega}\right\rfloor+2}|pq| 
\end{eqnarray}
by Observation~\ref{obs:tetrahedron_adjacent}-\ref{obs:adjacent_edge3D}.

We now turn to the constellations of tetrahedra adjacent to the edges of $t$.
First we look at the constellation $\Gamma \subset \mathcal{G}$ of tetrahedra adjacent to the edge $ad$.
The proof is identical for $bd$ and $cd$.
As for,
$ab$, $bc$, and $ac$,
a similar argument applies since the balls $B(a)$, $B(b)$, and $B(c)$
have a larger radius than $B(d)$
(compare (\ref{eqn radius B(a)}) and (\ref{eqn radius B(d)})).

Note that $\Gamma$
occupies some volume around $ad$.
We need to describe precisely what this volume is.
For each tetrahedron $t'$ adjacent to $ad$,
denote by $S'$ the insphere of $t'$.
By Observation~\ref{obs:ins_circ_phere}-\ref{obs:circumsphere_bound},
we know that
$$\rho > \frac{R(t')}{r(t')} \geq \frac{\frac{1}{2}|ad|}{r(t')} \enspace.$$
Therefore,
$r(t') > \frac{|ad|}{2\rho}$.
Thus,
there is a constellation of spheres with radius at least $\frac{|ad|}{2\rho}$ around $ad$.

For each tetrahedron $t'$ in $\Gamma$,
do the following:
consider the set of tangents from $d$ to $S'$.
This set of tangents defines a cone $C'$ that is entirely contained in $t'$.


Reduce the radius $r(t')$ of $S'$ until it is equal to $\frac{|ad|}{2\rho}$
(refer to Figure~\ref{fig:3d_jump_walk_tetra_bound}(a)).

\begin{figure}
  \captionsetup[subfigure]{labelformat=parens}
    \centering
        \subfloat[Constellation of inspheres adjacent to $ad$. 
        The dotted lines correspond to the lower bound on the radii 
        of the inspheres.]{
	  \label{fig:constellationGen}
	  \includegraphics{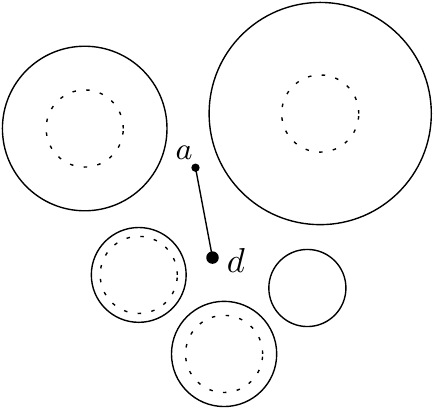}
        } \qquad
        \subfloat[Constellation of inspheres around $ad$ after stretching and translation.]{
	  \label{fig:constellationBound}
	  \includegraphics{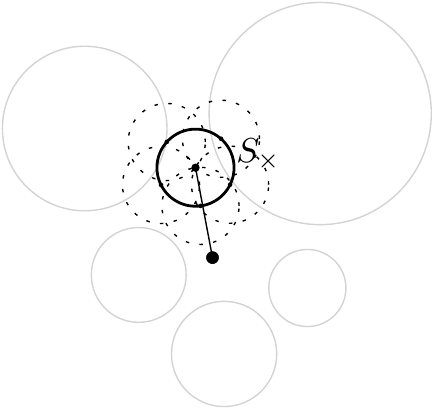}
	}%
	\\
	\subfloat[Three-dimensional view of $C_1 \cap C_2$.]{
        \label{fig:cones}
        \includegraphics{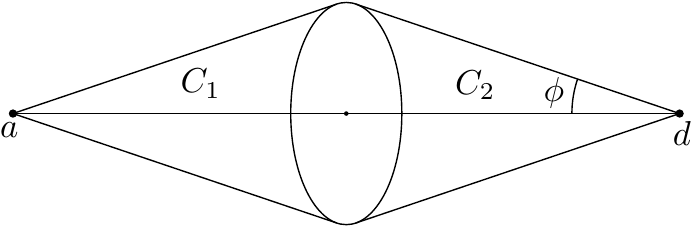}
	} \qquad
        \subfloat[Tangents with bigger elevations with respect to $ad$.
	The part of $C'$ that is furthest from $ad$ appears in bold.]{
	  \label{fig:aperture}
	  \includegraphics{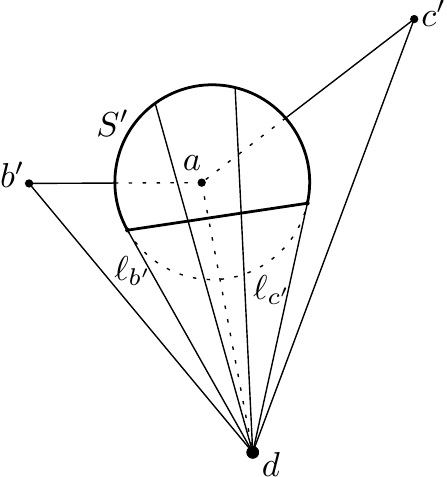}
        }%
        \\ \vspace{0.8cm}
	\subfloat[Two-dimensional view of $C_1 \cap C_2$ and 	$B(d)$.
	  A lower bound on the shortest exit out of $\mathcal{G}$ appears in bold.]{
         \label{fig:exit3D}
         \includegraphics{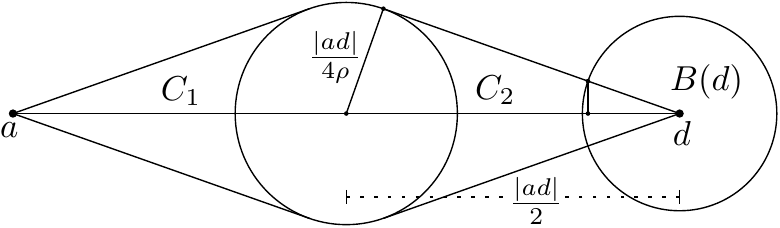}
      }
    \caption{Illustration of the proof of Lemma~\ref{lem:shortest_escape_3d}}
    \label{fig:3d_jump_walk_tetra_bound}
\end{figure}

Translate $S'$ towards $a$ until $S'$ is tangent to the line supporting $da$ at $a$
(refer to Figure~\ref{fig:3d_jump_walk_tetra_bound}(b)).
During these transformations,
maintain $C'$ accordingly.
Do these transformations on all inspheres of all tetrahedra in $\Gamma$.
Consider the sphere $S^{\times}$
with center $a$ and radius $\frac{|ad|}{2\rho}$.
Let $C_1$ be the cone with apex $d$ that is tangent to $S^{\times}$,
and define $C_2$ symmetrically
(refer to Figure~\ref{fig:3d_jump_walk_tetra_bound}(c)).
We claim that $C_1 \cap C_2 \subset \Gamma$.

We now prove our claim.
Each tetrahedron $t'$ has $a$ and $d$ as vertices.
Denote the other two vertices of $t'$ by $b'$ and $c'$.
Let $\ell_{b'}$ (respectively $\ell_{c'}$) be the tangent from $d$ to $S'$ that belongs to the face $adb'$ (respectively to the face $adc'$)
of $t'$
(refer to Figure~\ref{fig:3d_jump_walk_tetra_bound}(d)).
The tangents $\ell_{b'}$ and $\ell_{c'}$
splits $C'$ into two parts,
one closest to $ad$ and one furthest from $ad$
(refer to Figure~\ref{fig:3d_jump_walk_tetra_bound}(d)).
Focus on the part of $C'$
that is furthest from $ad$.
The tangents in this subset have different elevations with respect to $ad$.
Let $\phi$ be a lower bound on the elevations
of all tangents of all upper subsets of all cones $C'$ in $\Gamma$.
The cones $C_1$ and $C_2$
have an elevation of less than $\phi$
by construction.
This proves our claim.

A lower bound on the shortest exit out of $\mathcal{G}$
can be computed by considering the line segment perpendicular to $ad$ that intersects
both the boundary of $B(d)$ and the boundary of $C_2$
(refer to Figure~\ref{fig:3d_jump_walk_tetra_bound}(e) where this line segment appears in bold).
The length of this line segment is
$$\frac{1}{\left(\rho^2+1\right)}\left(\frac{1}{\rho}\right)^{\left\lfloor\frac{2\pi}{\Omega}\right\rfloor+3}|pq| \enspace.$$
\end{proof}

The proof of the following theorem is identical to the proof of Theorem~\ref{thm:2d_approximate};
it uses Lemma~\ref{lem:shortest_escape_3d} instead of Lemma~\ref{lem:shortest_escape}.

\begin{theorem}\label{thm:3d_approximate}
Let $\meshThreeD$ be a well-shaped tetrahedral mesh in $\CDThreeD$.
Take
$$\epsilon < \frac{1}{\left(\rho^2+1\right)}\left(\frac{1}{\rho}\right)^{\left\lfloor\frac{2\pi}{\Omega}\right\rfloor+3}$$
and let $\hat{p}$ be an $(1 + \epsilon)$-approximate nearest neighbour of a 
query point $q$ from among the vertices of $\meshThreeD$.
The straight line walk from $\hat{p}$ to $q$ visits at most 
$\frac{\sqrt{3}\,\pi}{486}\rho^3(\rho^2+3)^3+\left\lfloor \frac{2\pi}{\Omega}\right\rfloor$
tetrahedra.
\end{theorem}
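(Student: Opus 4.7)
The plan is to mimic the structure of the proof of Theorem~\ref{thm:2d_approximate}, replacing its two-dimensional ingredients by the three-dimensional analogues already established in this section. Concretely, I would split the straight segment $\hat{p}q$ into the portion lying inside $\mathcal{S}(q,|pq|)$ and the portion lying outside it, then bound the number of tetrahedra crossed by each portion separately.

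First, let $p$ be an exact nearest neighbour of $q$ and let $x$ be the intersection of $\hat{p}q$ with the boundary of $\mathcal{S}(q,|pq|)$, so that the sub-segment $xq$ lies entirely inside this sphere. Since $p$ is a true nearest neighbour of $q$, Theorem~\ref{thm:exact_3d} applies (with $q$ as the query point and $xq$ as the walk segment, after noting that any tetrahedron intersecting $xq$ also intersects a segment from some nearest-neighbour vertex to $q$), giving at most $\frac{\sqrt{3}\,\pi}{486}\rho^3(\rho^2+3)^3$ tetrahedra crossed inside $\mathcal{S}(q,|pq|)$.

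Next, I would prove that the choice of $\epsilon$ forces $\hat{p}$ to lie inside the neighbourhood $\mathcal{G}$ built around the last tetrahedron $t_i$ that contains $x$. The computation mirrors the one in Theorem~\ref{thm:2d_approximate}: letting $y$ be the intersection of the line through $\hat{p}q$ with the boundary of $\mathcal{G}$ on the far side of $x$, the definition of $(1+\epsilon)$-approximate nearest neighbour gives
\begin{equation*}
|\hat{p}q| \le (1+\epsilon)|pq| = |pq| + \epsilon\,|pq|,
\end{equation*}
and by Lemma~\ref{lem:shortest_escape_3d} combined with the hypothesis on $\epsilon$ we get $\epsilon\,|pq| \le |xy|$, so $|\hat{p}q| \le |qx|+|xy| = |qy|$. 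Since $q$, $x$, $y$ are collinear in that order, this places $\hat{p}$ inside $\mathcal{G}$, and therefore the sub-segment $\hat{p}x$ of the walk is contained entirely in $\mathcal{G}$.

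Finally, I would bound the number of tetrahedra of $\mathcal{G}$ the walk can cross. By the construction of $\mathcal{G}$ as the union of the stars of the four vertices of $t_i$, once the straight segment leaves $t_i$ on its way to $\hat{p}$, it crosses only tetrahedra incident to one of the vertices $a,b,c,d$; the number of tetrahedra incident to any single vertex is at most $\lfloor 2\pi/\Omega\rfloor$ by the solid-angle bound of Lemma~\ref{lem:min_angle}. Adding the two counts yields the claimed total of $\frac{\sqrt{3}\,\pi}{486}\rho^3(\rho^2+3)^3 + \lfloor 2\pi/\Omega\rfloor$. The main obstacle I expect is the bookkeeping in the last step: in three dimensions, a straight segment leaving $t_i$ in the direction of $\hat{p}$ can potentially brush several vertex-stars simultaneously (passing near an edge or face shared by two stars), so one must argue carefully—using the geometric separation provided by the balls $B(a),B(b),B(c),B(d)$ in the proof of Lemma~\ref{lem:shortest_escape_3d}—that all crossed tetrahedra belong to the star of at most one vertex, and hence the simple incidence bound suffices.
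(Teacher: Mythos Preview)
Your proposal is essentially the paper's own argument: the paper states only that the proof is identical to that of Theorem~\ref{thm:2d_approximate}, with Lemma~\ref{lem:shortest_escape_3d} replacing Lemma~\ref{lem:shortest_escape}, and your outline (split at the boundary of $\mathcal{S}(q,|pq|)$, invoke Theorem~\ref{thm:exact_3d} for the inner part, use the $\epsilon$-bound together with Lemma~\ref{lem:shortest_escape_3d} to force $\hat{p}\in\mathcal{G}$, then count tetrahedra in a single vertex star) is exactly that. One small correction: your parenthetical justification for applying Theorem~\ref{thm:exact_3d} to $xq$---that every tetrahedron meeting $xq$ also meets some nearest-neighbour segment---is not true in general; the right reason is that the volume argument in the proof of Theorem~\ref{thm:exact_3d} uses only that $\mathcal{S}(q,|pq|)$ contains no mesh vertices, and so applies verbatim to any radius of that sphere, including $xq$.
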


\section{Experimental Results}
\label{sec:walk-experiments}

In this section, we present experimental results on jump-and-walk for 
well-shaped triangular meshes. 
The purpose of performing these experiments is to verify that, in the $\plane$ 
setting at least, the walk step requires constant time.
Furthermore, since our theoretical bounds are pessimistic, these results are 
intended to give insight into the actual expected walk lengths in this setting.

\subsection{Implementation Details}

Our motivation for investigating jump-and-walk was its natural
applicability to the mesh structures we describe in 
Chapters~\ref{chp:succinct_graphs} and \ref{chp:mesh_trav}.
These data structures, particularly in Chapter~\ref{chp:succinct_graphs},
rely on a very simple representation of the simplices (triangle or
tetrahedron).
Thus, we wish to use as simple a representation as possible 
in our experimental work, so that the algorithms will be applicable to the data
structures that we have developed.
In our $\plane$ implementation, we used precisely such a simple 
data structure, which records for each triangle only its vertices 
and its three neighbouring triangles. 
Edges are not explicitly represented.

In developing the walk algorithm, we also sought to keep the 
algorithm itself as simple as possible. 
For example, we determine the sequence of triangles visited as we 
walk from $p$ to $q$ without performing any exact calculations of
intersections between $pq$ and triangles in the mesh.
Rather, we perform boolean tests for intersection,
and use \emph{memory} of the last visited triangle in order to ensure the
walk progresses from $p$ towards $q$.
This approach has the added advantage of providing a means of 
dealing with degeneracies, such as $pq$ passing directly through
a vertex on the mesh.

\begin{figure}
    \centering
    \includegraphics{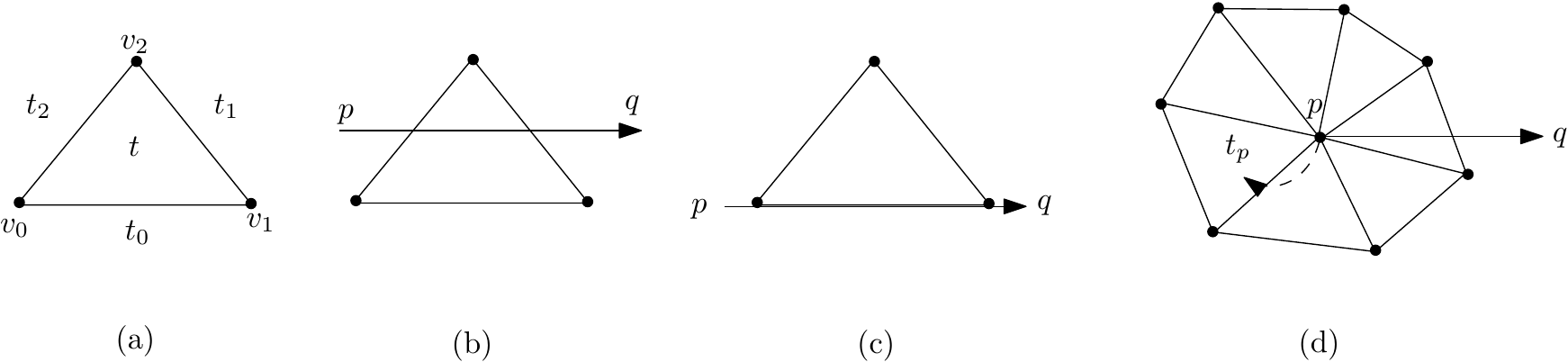}
    \caption[Walk primitives in $\plane$]{Primitive representation and possible
	degeneracies for $\plane$ walk. 
	Triangle representation is shown in $(a)$, while $(b)$ shows the most common
	segment triangle intersection expected.
	However, degeneracies such as $(c)$ and $(d)$ must also be dealt with.
	In particular, since the start of each walk is from an arbitrary triangle
	adjacent to vertex $p$ (e.g., $t_p$ in this figure), case $(d)$ will
	arise frequently in practice.
	}
    \label{fig:walk_primatives_2d}
\end{figure}

The basic steps of the walk procedure, and some of the degeneracies that must
be dealt with, can be 
explained with reference to Figure~\ref{fig:walk_primatives_2d}.
Assume we are are at some triangle $t_i$ having arrived on the walk along $pq$ from
triangle $t_{i-1}$.
We calculate the intersection of $pq$ with the three \emph{closed} edges that form
the triangle. 
We also test the vertices of the $t_i$ to determine if any are on $pq$.
If no vertex of $t_i$ is on $pq$, then we know we are the situation depicted in 
(b).
The line $pq$ intersects edges\footnote{While edges are not explicitly represented
in our structure, we can still reason about them since they are defined by
the vertices that are their endpoints.} seperating $t_i$ from two triangles, 
one of which is $t_{i-1}$.
We select the other triangle as $t_{i+1}$.
This process is repeated until the triangle containing $q$ is reached.
Degeneracies like those shown in (c) are problematic, since according to our
definition of edges, $pq$ intersects all three edges of the triangle.
Thus, even if we know from which triangle we entered, we must still choose 
between two possible triangles for our next step. 
However, we have developed simple rules that handle all such degeneracies
(for example, in instances like that shown in (c), we never choose to cross an
edge collinear to $pq$).

Of particular interest is what to do when starting the walk from point $p$.
In this case, we do not have a \emph{previous} triangle to guide our search.
Furthermore, since we start all walks at a vertex in the mesh, walks will
always start with the situation depicted in Figure~\ref{fig:walk_primatives_2d}(d).
We must choose one neighbour of $p$ from which to start the walk; let $t_p$ be
the selected triangle.
As a preprocessing step, we assign to each vertex a start triangle $t_p$ from
which any walk will start.
The next question is: from $t_p$, which way do we start to walk?
It turns out that choosing either neighbour of $t_p$ adjacent to $p$ will lead 
to a correct result, as we will eventually come around to the triangle that 
contains $q$, or from  which $pq$ exits the neighbourhood of $p$ 
(see Fig.\ref{fig:walk_primatives_2d}(d)). 
However, we can start the walk in a slightly smarter manner.
We perform a test to determine which of the edges of $t_p$ adjacent to $p$ has 
$q$ to the left of its supporting line\footnote{The vertices of triangles are ordered
counterclockwise. It may be the case that $q$ is to the left of both edges 
adjacent to $p$, in which case either can be selected.}, and start the walk 
across this edge. 
Our \emph{remember the last triangle visited} rule ensures that we keep walking in the 
same direction.
This strategy is adequate for our purpose of achieving a constant walk length, as 
the neighbourhood around $p$ is of constant size.

\subsection{Experiments}

\noindent \textbf{System: } 
All tests were executed on a Hewlett-Packard G60
Laptop with a dual-core Pentium T4300 processor, running at 2.1 GHz with
3.8 GB of RAM.
Each experiment involved a two-step process; firstly, locating the 
(approximate) nearest neighbour from among the vertices of the mesh
and, secondly, performing the walk step.
The first step was executed using a C++ program built on the 
ANN library \cite{ann_website, DBLP:journals/jacm/AryaMNSW98}.
The walk step was implemented in the D programming language, using the Digital
Mars D compiler (DMD). 

\noindent \textbf{Datasets: }
The datasets used for these experiments were based on random point 
sets.
Input points sets for both triangulations and queries were generated
using the Python random module, which uses the Mersenne Twister
pseudo-random number generator~\cite{Matsumoto:1998:MTE:272991.272995}.
Well-shaped triangular meshes were generated using the program
Triangle, by Johnathan Shewchuk \cite{triangle_site}.
Triangle generates Delaunay triangulations, but uses mesh refinement to produce 
meshes with a user-specified minimum angle.
The mesh-refinement algorithm adds additional points to the input point set.
As a consequence, it is difficult to precisely control the sizes of the
triangulations generated, thus all sizes given are approximate.

Triangulations were generated with 100, 1K, 10K, 25K, 50K, 100K, 250K, 500K,
1000K, and 2000K triangles. 
Three triangulations were generated at each size with minimum angles of 2, 5,
10, and 20 degrees, respectively.
In total, 120 triangulations were generated from random point sets.

\noindent \textbf{Results: }
Each walk was performed from an input point selected at random.
For each triangulation, a total of 25,000 walks were executed.
A nearest neighbour search was performed over the triangulation
vertices in order to locate the start point for each walk.
Tables \ref{tbl:alpha2}, \ref{tbl:alpha5}, \ref{tbl:alpha10}, and \ref{tbl:alpha20} 
give average and worst case walks for triangulations built with minimum 
angles of $2$, $5$, $10$, and $20$ degrees, respectively. 
Also shown are the walk times for walks starting with an approximate 
nearest neighbour. 
The epsilon values used in the approximate nearest neighbour searches
were set significantly larger than the minimum epsilon values required
by Theorem~\ref{thm:2d_approximate}.
Epsilon values of $\epsilon = 0.2$, $\epsilon = 0.5$, $\epsilon = 1.0$, and
$\epsilon = 2.0$ were selected for the triangulations with minimum angles
of $2$, $5$, $10$, and $20$ degrees respectively.
If $\epsilon = \tan\alpha\sin^{\left \lfloor \frac{\pi}{\alpha} 
\right \rfloor +3} \alpha$ were used, these values would have been
$\epsilon = 4.9 x 10^{-324}$, $\epsilon = 1.1 x 10^{-137}$, 
$\epsilon = 4.1 x 10^{-43}$, and $\epsilon = 1.9 x 10^{-17}$, respectively.
In the exact nearest neighbour case, minimum angles of $2$, $5$, $10$, and 
$20$ degrees corresponded to theoretical worst case walks of length
$90$, $36$, $18$, and $9$.
In the approximate nearest neighbour case, these theoretical bounds
became $180$, $72$, $36$, and $18$.

\begin{table}
\centering
\caption[Walk results for mesh with $\rho=29.2$]{
Average and worst-case walk steps over 25,000 walks 
for $\rho = 29.2$, or minimum angle $\alpha = 2^\circ$. 
Theoretical worst case 90 steps for nearest neighbour,
and 180 steps for approximate nearest neighbour.}\label{tbl:alpha2}
\begin{tabular}{ |l|l|l|l|l|l|}
\hline
\multicolumn{2}{ |l| }{  } &
\multicolumn{2}{ |l| }{NN Search} &
\multicolumn{2}{ |l| }{ANN Search}
\\ \hline 
\textbf{N} & & \textbf{Avg.} & \textbf{Long} & \textbf{Avg.}& \textbf{Long} \\ \hline
 100 & 1 &  2.65  &  7  & 2.63 &  7 \\
 100 & 2 &  2.77  &  7  & 2.78 &  7 \\
 100 & 3 &  2.68  &  6  & 2.67 &  6 \\ \hline
 1K & 1 &  2.67  &  9  &  2.66 &  9 \\
 1K & 2 &  2.74  &  8  & 2.74 &  8 \\
 1K & 3 &  2.65  &  9  & 2.64 &  9  \\ \hline
 10K & 1 &  2.67  &  9  & 2.67 &  8 \\
 10K & 2 &  2.70  &  9  & 2.69 &  9 \\
 10K & 3 &  2.69  &  8  & 2.67 &  8 \\ \hline
 25K & 1 &  2.68  &  8  & 2.67 &  9 \\
 25K & 2 &  2.68  &  8  & 2.68 &  8 \\
 25K & 3 &  2.68  &  9  & 2.67 &  8 \\ \hline
 50K & 1 &  2.67  &  8  & 2.69 &  8 \\
 50K & 2 &  2.67  &  8  &  2.67 &  8 \\
 50K & 3 &  2.67  &  10  & 2.67 &  9 \\ \hline
 100K & 1 & 2.67  &  8  & 2.66 &  9 \\
 100K & 2 &  2.67  &  8  & 2.68 &  8 \\
 100K & 3 &  2.69  &  8  & 2.67 &  8 \\ \hline
 250K & 1 &  2.66  &  8  & 2.68 &  9 \\
 250K & 2 & 2.66  &  9  & 2.66 &  8 \\
 250K & 3 &  2.66  &  9  & 2.68 &  8 \\ \hline
 500K & 1 &  2.66  &  9  & 2.69 &  8 \\
 500K & 2 &  2.69  &  8  & 2.67 &  9 \\
 500K & 3 &  2.67  &  8  & 2.67 &  9 \\ \hline
 1000K & 1 &  2.67  &  8 & 2.67 &  8 \\
 1000K & 2 &  2.67  &  8  &2.67 &  8  \\
 1000K & 3 &  2.67  &  8  & 2.67 &  8\\ \hline
 2000K & 1 &  2.66  &  8  & 2.69 &  8 \\
 2000K & 2 &  2.67  &  9  & 2.67 &  8  \\
 2000K & 3 &  2.68  &  9  & 2.67 &  9 \\  \hline
\end{tabular}
\end{table} 

\begin{table}
\centering
\caption[Walk results for mesh with $\rho=12.0$]{ 
Average and worst-case walk steps over 25,000 walks
for $\rho = 12.0$, or minimum angle $\alpha = 5^\circ$. 
Theoretical worst case 36 steps for nearest neighbour,
and 72 steps for approximate nearest neighbour.
}\label{tbl:alpha5}
\begin{tabular}{ |l|l|l|l|l|l| }
\hline
\multicolumn{2}{ |l| }{  } &
\multicolumn{2}{ |l| }{NN Search} &
\multicolumn{2}{ |l| }{ANN Search}
\\ \hline 
\textbf{N} & & \textbf{Avg.} & \textbf{Long} & \textbf{Avg.}& \textbf{Long} \\ \hline
 100 & 1 & 2.54 & 7 & 2.55 &  7 \\
 100 & 2 & 2.59 & 7 & 2.59 &  7 \\
 100 & 3 & 2.59 & 6 & 2.57 &  6 \\ \hline
 1K & 1 & 2.65 & 8 & 2.65 &  8 \\
 1K & 2 & 2.62 & 8 & 2.62 &  7 \\
 1K & 3 & 2.68 & 7 & 2.70 &  7 \\ \hline
 10K & 1 & 2.66 & 8 & 2.65 &  8  \\
 10K & 2 & 2.66 & 8 & 2.68 &  8 \\
 10K & 3 & 2.67 & 9 & 2.66 &  9 \\ \hline
 25K & 1 & 2.67& 8 & 2.68 &  9 \\
 25K & 2 & 2.67 & 9 &  2.69 &  8\\
 25K & 3 & 2.68 & 8 & 2.69 &  8 \\ \hline
 50K & 1 & 2.65 & 9 & 2.68 &  9 \\
 50K & 2 & 2.66 & 8 & 2.66 &  8 \\
 50K & 3 & 2.67 & 10 & 2.69 &  9  \\ \hline
 100K & 1 & 2.66 & 8 & 2.67 &  8  \\
 100K & 2 & 2.66  & 8 & 2.68 &  9 \\
 100K & 3 & 2.66 & 8 & 2.67 &  9 \\ \hline
 250K & 1 & 2.67 & 8 & 2.68 &  8 \\
 250K & 2 & 2.67 & 8 & 2.67 &  8  \\
 250K & 3 & 2.65 & 9 & 2.68 &  8 \\ \hline
 500K & 1 & 2.66 & 8 & 2.67 &  9 \\
 500K & 2 & 2.65 & 8 &2.68 &  8 \\
 500K & 3 & 2.64 & 8 &  2.67 &  9 \\ \hline
 1000K & 1 & 2.66& 8 & 2.68 &  9 \\
 1000K & 2 & 2.66 & 8 & 2.67 &  8 \\
 1000K & 3 & 2.68 & 8 &2.67 &  8 \\ \hline
 2000K & 1 & 2.67 & 8 & 2.69 &  8 \\
 2000K & 2 & 2.67 & 8 & 2.67 &  9 \\
 2000K & 3 & 2.66 & 8 & 2.67 &  9 \\ \hline
\end{tabular}
\end{table}

\begin{table}
\centering
\caption[Walk results for mesh with $\rho=6.3$]{
Average and worst-case walk steps over 25,000 walks 
for $\rho = 6.3$, or minimum angle $\alpha = 10^\circ$. 
Theoretical worst case 18 steps for nearest neighbour,
and 36 steps for approximate nearest neighbour.
}\label{tbl:alpha10}
\begin{tabular}{ |l|l|l|l|l|l| }
\hline
\multicolumn{2}{ |l| }{  } &
\multicolumn{2}{ |l| }{NN Search} &
\multicolumn{2}{ |l| }{ANN Search}
\\ \hline 
\textbf{N} & & \textbf{Avg.} & \textbf{Long} & \textbf{Avg.}& \textbf{Long} \\ \hline
 100 & 1 & 2.61 & 8 & 2.69 &  8 \\
 100 & 2 & 2.80 & 7 & 2.82 &  7 \\
 100 & 3 & 2.52 & 6 & 2.52 &  7 \\ \hline
 1K & 1 & 2.61 & 7 & 2.63 &  7 \\
 1K & 2 & 2.62 & 8 & 2.64 &  8 \\
 1K & 3 & 2.65 & 8 & 2.67 &  8 \\ \hline
 10K & 1 & 2.63 & 9 & 2.66 &  8 \\
 10K & 2 & 2.64 & 8 & 2.67 &  8 \\
 10K & 3 & 2.65 & 8 & 2.66 &  8 \\ \hline
 25K & 1 & 2.64 & 8 & 2.66 &  8 \\
 25K & 2 & 2.65 & 8 & 2.67 &  8 \\
 25K & 3 & 2.64 & 8 & 2.66 &  8 \\ \hline
 50K & 1 & 2.65 & 8 & 2.66 &  8 \\
 50K & 2 & 2.65 & 8 & 2.67 &  8 \\
 50K & 3 & 2.66 & 8 & 2.66 &  8 \\ \hline
 100K & 1 & 2.65 & 8 & 2.66 &  10 \\
 100K & 2 & 2.63 & 8 & 2.67 &  8 \\
 100K & 3 & 2.65 & 8 & 2.67 &  8 \\ \hline
 250K & 1 & 2.65 & 8 & 2.67 &  8 \\
 250K & 2 & 2.65 & 8 & 2.67 &  8 \\
 250K & 3 & 2.65 & 8 & 2.66 &  8 \\ \hline
 500K & 1 & 2.64 & 9 & 2.67 &  8 \\
 500K & 2 & 2.64 & 8 & 2.67 &  8 \\
 500K & 3 & 2.64 & 8 & 2.67 &  8 \\ \hline
 1000K & 1 & 2.65 & 8 & 2.67 &  8 \\
 1000K & 2 & 2.65 & 8 & 2.66 &  8 \\
 1000K & 3 & 2.65 & 8 & 2.67 &  8 \\ \hline
 2000K & 1 & 2.65 & 8 & 2.66 &  8 \\
 2000K & 2 & 2.64 & 8 & 2.66 &  8 \\
 2000K & 3 & 2.64 & 8 & 2.66 &  8 \\ \hline
\end{tabular}
\end{table}

\begin{table}
\centering
\caption[Walk results for mesh with $\rho=3.5$]{
Average and worst-case walk steps over 25,000 walks 
for $\rho = 3.5$, or minimum angle $\alpha = 20^\circ$. 
Theoretical worst case 9 steps for nearest neighbour,
and 18 steps for approximate nearest neighbour.
}\label{tbl:alpha20}
\begin{tabular}{ |l|l|l|l|l|l| }
\hline
\multicolumn{2}{ |l| }{  } &
\multicolumn{2}{ |l| }{NN Search} &
\multicolumn{2}{ |l| }{ANN Search}
\\ \hline 
\textbf{N} & & \textbf{Avg.} & \textbf{Long} & \textbf{Avg.}& \textbf{Long} \\ \hline
 100 & 1 & 2.31 & 6  & 2.31 &  6 \\
 100 & 2 & 2.25 & 7  & 2.18 &  7 \\
 100 & 3 & 2.39 & 8  & 2.34&  6\\ \hline
 1K & 1 & 2.52 & 7  & 2.53 &  7 \\
 1K & 2 & 2.53 & 7  & 2.53 &  7 \\
 1K & 3 & 2.48 & 7  & 2.49 &  7 \\ \hline
 10K & 1 & 2.55 & 7  & 2.56 &  8  \\
 10K & 2 & 2.53 & 7  & 2.55 &  8\\
 10K & 3 & 2.54 & 8  & 2.56 &  7 \\ \hline
 25K & 1 & 2.55 & 7  & 2.57 &  7 \\
 25K & 2 & 2.55 & 7  & 2.58 &  7 \\
 25K & 3 & 2.55 & 7  & 2.57 &  8 \\ \hline
 50K & 1 & 2.55 & 7  & 2.58 &  7 \\
 50K & 2 & 2.54 & 7  & 2.58 &  7 \\
 50K & 3 & 2.56 & 7  & 2.56 &  7 \\ \hline
 100K & 1 & 2.55 & 8  & 2.59 &  7 \\
 100K & 2 & 2.56 & 7  & 2.60 &  8 \\
 100K & 3 & 2.56 & 7  & 2.58 &  8 \\ \hline
 250K & 1 & 2.55 & 7  & 2.59 &  8\\
 250K & 2 & 2.56 & 7  & 2.59 &  8 \\
 250K & 3 & 2.55 & 8  & 2.57 &  7 \\ \hline
 500K & 1 & 2.57 & 8  & 2.58 &  8 \\
 500K & 2 & 2.55 & 7  & 2.59 &  7 \\
 500K & 3 & 2.56 & 8  & 2.60 &  7\\ \hline
 1000K & 1 & 2.57 & 7  & 2.59 &  8 \\
 1000K & 2 & 2.56 & 7  & 2.59 &  8  \\
 1000K & 3 & 2.55 & 7  & 2.58 &  7\\ \hline
 2000K & 1 & 2.55 & 7  & 2.57 &  7 \\
 2000K & 2 & 2.56 & 9  & 2.58 &  7 \\
 2000K & 3 &  2.56 & 7  & 2.59 &  7 \\ \hline
\end{tabular}
\end{table}

The most striking trend that the tables reveaedl is that, in this setting,
changing the dataset size, the minimum angle, and from nearest neighbour
to approximate nearest neighbour appear to have little impact on either
the average or longest walks.
In fact, the reported values were remarkably consistent.
In Table \ref{tbl:alpha2}, average walks for all triangulation sizes were,
with a few very minor outliers, almost identical for all tests. 
The longest walks for the smallest triangulations, $N=100$, were smaller than
for the others, but for all other values of $N$, longest walk lengths were 
consistent.
This trend again held with $\alpha=5$ in Table \ref{tbl:alpha5}.
In this case, however, average and longest walks were shorter for 
$N=100$, and slightly shorter for $N=1000$.
Oddly, the trend for shorter walks for the smallest values of $N$ was not
as apparent for $\alpha=10$ (Table \ref{tbl:alpha10}), but appeared again
for $\alpha=20$ in Table \ref{tbl:alpha20}.

There are a number of conclusions that can be drawn from these results.  
The first is that the results do indeed support the claim that the walk
distances are constant, as  neither the average nor the worst-case walks 
changed with increasing $N$. 
The most surprising result is that even using large values for $\epsilon$,
there was no noticeable difference between the nearest neighbour and approximate
nearest neighbour walks. 
These result suggests that the walk cost is, in the worst case, dominated by the
search around the start point (see Fig. \ref{fig:walk_primatives_2d}(d)).
This cost is more or less random, depending on the query point and the 
pointer saved for the selected vertex, and as such is not likely to be 
influenced by whether we search from an exact or from an approximate nearest
neigbhour.

A second surprising result was that minimum angle of the mesh did not
have any noticable impact on either the average or the longest walk in these
experiments.
The probable cause of this was the nature of triangulations, which were 
generated from random point sets.
Thus, even though the mesh generator added points more aggresively with some
datasets (in order to remove small-angle triangles), the random distribution 
of points produced short walks in all cases.

To verify this hypothesis, three additional triangulations were generated.
These triangulations were small,  about 150 triangles apiece, but were
generated from a set of regularly-spaced points. 
Essentially, three vertical lines of evenly-spaced points were generated;
one down the left-side, one down the centre, and one along the 
right-side of the domain (which was a unit square).
A pair of points were placed between the left and the centre lines, and between
the right and the centre lines, so that the triangulations were not completely regular.
During triangulation, no minimum angle was specified; therefore, the mesh generator
did not add additional points.
However, while generating the mesh vertex set, no two or more points were placed 
too close, thereby ensuring a degree of well-shapedness.
The triangulalation generated with minimum angle $\alpha = 2.86$ is shown in 
Figure~\ref{fig:tri_regular}.
The results of 25,000 walks in these three triangulations are presented in 
Table \ref{tbl:non-random}.
As expected, the average and longest walks were significantly longer than with 
the randomly-generated point set. 
They were also more strongly influenced by changes in the well-shapedness of 
the triangulation, ranging from $22$ with $\alpha=2.86$, to $34$ for 
$\alpha=1.29$.
Finally, walking from an approximate nearest neighbour increased average walk
lengths by between $0.4$ and $0.9$ steps. 
This is in contrast to meshes generated from random points, where selecting
the nearest neighbour or the exact nearest neighbour did not change average walk
lengths significantly.

\begin{figure}
	\centering
 	\includegraphics[scale=0.4]{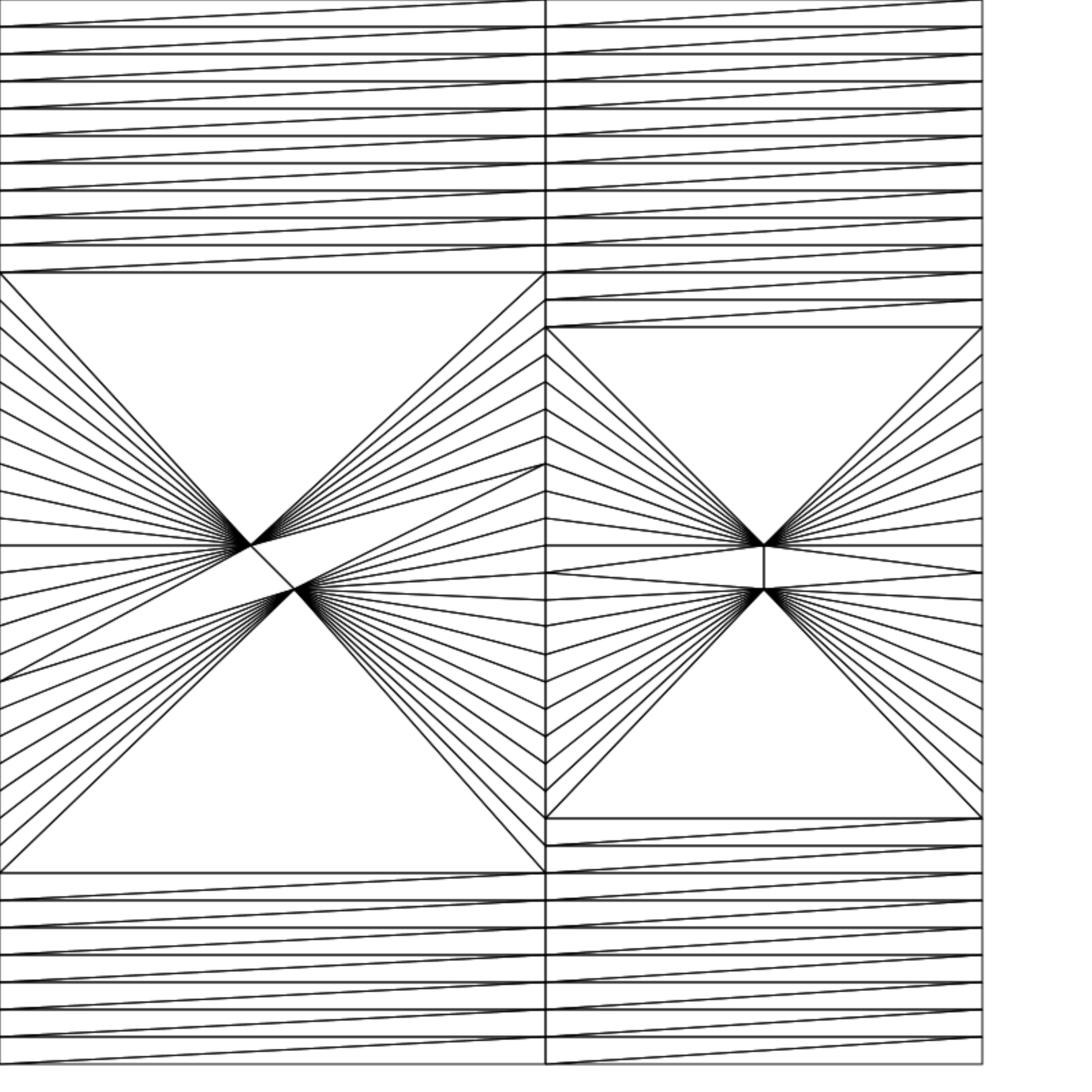}
	\caption[Regular triangulation]{Triangulation on regularly-spaced point set}
	\label{fig:tri_regular}
\end{figure}

\begin{table}
\centering
\caption[Walk results for mesh on regularly-spaced point set]{
Average and longest walks measured 25,000 walks, 
for specified minimum angles, based on triangulations generated 
from regularly-spaced point sets.
Also shown is the theoretical worst-case walk length for
each minimum angle.
For approximate nearest neighbour walks, $\epsilon=0.2$.
}\label{tbl:non-random}
\begin{tabular}{ |l|l|l|l|l|l|l| }
\hline
\multicolumn{1}{ |l| }{} &
\multicolumn{3}{ |c| }{NN Search} &
\multicolumn{3}{ |c| }{ANN Search}
\\ \hline 
\textbf{$\alpha$} & \textbf{Avg.} & \textbf{Long} & \textbf{Worst-case} & 
  \textbf{Avg.}& \textbf{Long} & \textbf{Worst-case} \\ \hline
 2.86 & 3.20 & 22 & 62 & 3.59 & 22 & 124 \\
 2.29 & 3.41 & 25 & 78 & 4.03 & 25 & 156 \\
 1.29 & 4.00 & 34 & 139 & 4.91 & 34 & 278 \\ \hline
\end{tabular}
\end{table}

\section{Summary and Open Problems}
\label{sec:summary}
 
In this chapter, we have shown that the straight line walk traverses a constant number 
of simplices when we walk from a nearest neighbour, or an approximate nearest
neighbour, among the vertices of a well-shaped mesh to a query point.
This holds in both $\plane$ and $\threeD$.
Our bounds, though constant in terms of the aspect ratio, are pessimistic,
and we believe that there is room for improvement.
The key result of our work is that the approach works with respect to an 
approximate nearest neighbor. 
Our implementation of the walk step in the $\plane$ setting demonstrates
the validity of our proposed approach.

In addition to tightening the bounds that we have found, a number of 
possible extensions of this work can be pursued.
The first of these would centre around the search structures used to perform 
the initial nearest neighbour search.
For practical purposes, one of the widely-used data structures for point
location is the kd-tree \cite{DBLP:journals/cacm/Bentley75},
\cite{DBLP:journals/toms/FriedmanBF77}.
For randomly distributed points, kd-trees perform well with expected $\BigOh{\log n}$
searching.
However, the worst-case query times for point location using kd-trees in $\plane$
can  be $\BigOh{n}$.
The vertex set of a well-shaped mesh need not be randomly distributed, but it should
also avoid the worst-case scenario.
Thus, a possible study of interest would be to determining the how the value of $\rho$ for
a well-shaped mesh infleunces worst-case search times in $\plane$ and $\threeD$ kd-trees 
built on the mesh vertex set.
 
A second improvement to this work would be to proving that the walk-step
remains constant time if the jump set is some subset of the vertices of the
mesh. 
If this is the case, then what is the smallest possible subset that
maintains this guarantee of constant walk time.
Furthermore, can the size of this subset be linked to $\rho$?

Finally, we have implemented walk in $\plane$.
The technique that we have used should be extendable to $\threeD$ without
tremendous difficulty.
Thus, implementing this jump-and-walk technique in well-shaped meshes
in $\threeD$ would also be an interesting extension of this work.

\chapter[Conclusions]{Conclusions}\label{chp:conclusions}

The chapter begins with some concluding remarks on the research
presented in this thesis, and ends by describing open problems 
and possible future extensions of this work, in 
Section~\ref{sec:open_problems}.

The common thread running through the results
presented herein is path traversal.
In Chapter~\ref{chp:leaf-root}, data structures are described
that support efficient path traversal in rooted trees in the external
memory setting. 
In addition to being I/O-efficient, our structures for
trees are also succinct, and this research is among the first
work on designing succinct structures that are efficient
in the external memory setting.
The primary challenge addressed in developing these 
structures was tuning existing non-succinct tree blockings to
achieve the desired space bounds, and using succinct structures
to map between blocks.

In Chapter~\ref{chp:succinct_graphs} we present similar data structures
for bounded-degree planar graphs, these structures are both succinct and 
efficient in the external memory setting.
Again, much of the challenge in developing these structures was 
determining suitable block representations, and defining the structures
and methods needed to navigate between blocks.
In this chapter on planar graphs we also present a number of applications
on triangulations presented using our structure. 

Chapter~\ref{chp:mesh_trav} presents a partitioning scheme
for well-shaped tetrahedral meshes in $\threeD$, which permits
efficient path traversal in the external memory setting.
The initial goal of this work was to extend our planar graph techniques
to $\threeD$, and develop succinct structures. 
However, we have not yet been able to identify, or develop, any suitable
succinct encodings for tetrahedral meshes in either the external
memory or RAM memory model.
This is a possible area of continued research.
This chapter does however demonstrate how tetrahedral meshes can be
blocked to support I/O-efficient path traversal, and we present a number of
applications on a mesh blocked in this manner.

Finally, in Chapter~\ref{chp:jump_and_walk}, results were presented
on the walk step of jump-and-walk point location in 
well-shaped meshes.
The key contribution from this work is the proof that in a 
well-shaped mesh, the walk step for point location can be proven 
to take constant time, given an approximate nearest neighbour in
$\plane$ or $\threeD$.
This research is not specifically related to either external memory
algorithms, or succinct structures, however, it is applicable to the
applications presented in both Chapter~\ref{chp:succinct_graphs} and 
Chapter~\ref{chp:mesh_trav}.
Most of the applications we presented rely on point location as an 
initial step. 
Since the structures support efficient path traversal, and the walk 
step is effectively a path traversal operation; the jump-and-walk
technique gives an effective means of performing the point location 
operation in an I/O-efficient manner.

\section{Open Problems and Future Work}\label{sec:open_problems}

\subsection{Succinct and I/O-Efficient Tree Traversal}

Chapter~\ref{chp:leaf-root} presented two new data structures that 
are both I/O-efficient 
and succinct for bottom-up and top-down traversal in trees. 
Our bottom-up result applies to trees of arbitrary degree, while our 
top-down result applies to binary trees.  
In both cases, the number of I/Os is asymptotically optimal.
 
Our results lead to several open problems.
Our top-down technique is valid for binary trees only. 
Whether this approach can be extended to trees of larger degrees is an 
open problem.  
For the bottom-up case, it would be interesting to see whether the asymptotic 
bound on I/Os can be improved from $O(K/B+1)$ to something closer to 
$O(K/\succblksize+1)$ I/Os, where $\succblksize$ is the number of nodes 
that can be represented succinctly in a single disk block.  
In both the top-down and bottom-up cases, several $\rankop$ and 
$\selop$ operations are required to navigate between blocks. 
These operations use only a constant number of I/Os, and it would be 
useful to reduce this constant factor. 
This might be achieved by reducing the number of $\rankop$ and 
$\selop$ operations used in the algorithms, or by demonstrating how 
the bit arrays could be interleaved to guarantee a low number of I/Os 
per block.

\subsection{Succinct and I/O-Efficient Bounded Degree Planar Graphs}

Chapter~\ref{chp:succinct_graphs} presented succinct structures supporting
I/O efficient path traversal in planar graphs.
In fact, our results are somewhat more general than this.
The key features of planar graphs which our data structures rely
on are; first, planar graphs are $k$-page embeddable, and second, 
planar graphs of bounded degree have an $r$-partition with
a suitably small separator. 
Any class of graphs for which these two properties hold should 
permit construction of the succinct data structures.

\subsection{Tetrahedral Meshes Supporting I/O-Efficient Path Traversal}

The original purpose of the work presented in Chapter~\ref{chp:mesh_trav} 
was to develop a data 
structure that is both succinct and efficient in the EM setting.
In particular, we hoped to extend our structure for triangulations in
$\plane$ (see Chapter~\ref{chp:succinct_graphs}) to tetrahedral 
meshes in $\threeD$.
Unfortunately, the data structures we use rely on the fact that there
is a $k$-page embedding of the dual graph with constant $k$. 
There is no known bound on the page-thickness of the dual of a
tetrahedral mesh, thus it is uncertain whether our representation will
work.
Barat~\etal~\cite{DBLP:journals/combinatorics/BaratMW06} proved 
that bounded-degree planar graphs with maximum degree greater
than nine have arbitrarily large geometric thickness.
They claim that finding the page embedding of a degree five graph 
is an open problem.
Meanwhile, Duncan~\etal~\cite{DBLP:journals/corr/cs-CG-0312056}
gave a bound of $k=2$, when maximum degree is less than four.
With $d=8$, the augmented dual graph of a tetrahedral mesh is in
the grey area between these results.
Thus, a potential approach to solving this problem would be to prove that
the dual of a tetrahedral mesh is $k$-page embeddable, although there
is no guarantee that this is possible.

\subsection{Jump-and-Walk}

Chapter~\ref{chp:jump_and_walk} demonstrates that the straight line walk 
traverses a constant number 
of simplices when we walk from a nearest neighbour, or from an approximate 
nearest neighbour, among the vertices of a well-shaped mesh to a query point.
This holds in both $\plane$ and $\threeD$.
Our bounds, though constant in terms of the aspect ratio, are pessimistic,
and we believe that there is room for improvement.
The key result of our work is that the approach works with respect to an 
approximate nearest neighbour. 
Our implementation of the walk step in the $\plane$ setting demonstrates
the validity of our proposed approach.

In addition to tightening the bounds we have found, there are a number of 
possible extensions of this work that could be pursued.
The first of these centres around the search structures used to perform 
the initial nearest neighbour search.
For practical purposes, one of the widely-used data structures for point
location is the kd-tree \cite{DBLP:journals/cacm/Bentley75},
\cite{DBLP:journals/toms/FriedmanBF77}.
For randomly distributed points, kd-trees perform well with expected $\BigOh{\log n}$
searching.
However, the worst-case query times for point location using kd-trees in $\plane$
can  be $\BigOh{n}$.
The vertex set of a well-shaped mesh need not be randomly distributed, but it should
also avoid the worst-case scenario.
Thus, a possible study of interest would be determining how the value of $\rho$ for
a well-shaped mesh influences worst-case search times in $\plane$ and $\threeD$ kd-trees 
built on the mesh vertex set.

A second improvement to this work would be to proving that the walk step
remains constant time if the jump set is some subset of the vertices of the
mesh. 
If this is the case, then what is the smallest possible subset that
maintains this guarantee of constant walk time?
Furthermore, can the size of this subset be linked to $\rho$?

Finally, we have implemented walk in $\plane$.
The techniques we have used should be extendable to $\threeD$ without tremendous 
difficulty.
Thus, implementing this jump-and-walk technique in well-shaped meshes in $\threeD$
would also be an interesting extension of this work.

\cleardoublepage

\bibliography{cdillabaugh-thesis}
\bibliographystyle{plain}
\end{document}